\declaretheorem[numberwithin=section, style=remark]{remark}
\newcommand*{\thisdraft}{This draft: 15 November, 2023}
\date{\thisdraft} 
\DeclareMathOperator*{\argmax}{arg\,max}
\DeclareMathOperator*{\argmin}{arg\,min}
\title{Estimating Functionals of the Joint Distribution of Potential Outcomes with Optimal Transport}
\author{Daniel Ober-Reynolds\thanks{\protect\linespread{1}\protect\selectfont Email: doberreynolds [at] gmail.com, website: \href{https://danieloberreynolds.com}{https://danieloberreynolds.com}. I want to thank my advisor Andres Santos for his continuous guidance and support. I also want to thank Denis Chetverikov and Jinyong Hahn for their helpful comments and suggestions.}}
\begin{document}
	\maketitle
	

	\begin{abstract}
	Many causal parameters depend on a moment of the joint distribution of potential outcomes. 
	Such parameters are especially relevant in policy evaluation settings, where noncompliance is common and accommodated through the model of \cite{imbens1994late}.
	This paper shows that the sharp identified set for these parameters is an interval with endpoints characterized by the value of optimal transport problems. 
	Sample analogue estimators are proposed based on the dual problem of optimal transport.
	These estimators are $\sqrt{n}$-consistent and converge in distribution under mild assumptions. 
	Inference procedures based on the bootstrap are straightforward and computationally convenient.
	The ideas and estimators are demonstrated in an application revisiting the National Supported Work Demonstration job training program. 
	I find suggestive evidence that workers who would see below average earnings without treatment tend to see above average benefits from treatment.
\end{abstract}

\bigskip

\begin{singlespace}
	\begin{center}
		\textbf{Keywords:} potential outcomes, treatment effects, partial identification, bounds, \\optimal transport 
	\end{center}
\end{singlespace}
	
	\pagenumbering{gobble} 
	
	\clearpage
	
	\pagenumbering{arabic} 
	
	\section{Introduction}
\label{Section: introduction}

Researchers studying the causal effects of a binary treatment see an observation's treated or untreated outcome, but never both. 
As a result, the data identify the marginal distributions of each potential outcome, but not their joint distribution.
This ``fundamental problem of causal inference'' \citep{holland1986statistics} leaves parameters depending on the joint distribution partially identified.

In this paper I study a wide class of parameters that depend on a moment of the joint distribution of potential outcomes. 
My setting is the canonical potential outcomes framework with binary treatment, a binary instrument satisfying a monotonicity restriction, and finitely supported covariates \citep{imbens1994late, abadie2003semiparametric}.
In this setting, I show the sharp identified set for such parameters is an interval with endpoints characterized by the value of optimal transport problems. 
I propose sample analogue estimators based on the dual problem of optimal transport, which facilitates both computation and asymptotic analysis.
Through the functional delta method, I show these estimators converge in distribution allowing for straightforward inference procedures based on the bootstrap.

The proposed estimators are especially attractive due to their wide applicability and computational simplicity. 
The class of parameters under study is broad, including the correlation between potential outcomes, the probability of benefitting from treatment, and many more examples discussed in section \ref{Section: setting and parameter class}. 
As argued in \cite{heckman1997making}, such parameters are of particular interest to policymakers and economists carrying out econometric policy evaluation.
Noncompliance with the assigned treatment status is common in these settings.
Most studies accomodate noncompliance with the same framework adopted in this paper, and could make use of these estimators with no additional identifying assumptions.
Computing the estimator and constructing confidence sets entails nothing more challenging than solving linear programming problems, for which there are fast and efficient algorithms readily available.

This paper contributes to a large econometrics literature studying parameters of the joint distribution of potential outcomes. 
Many papers in this literature focus on a subset of the parameters considered here, especially the cumulative distribution function (cdf) or quantiles of treatment effects \citep{manski1997monotone, heckman1997making, firpo2007efficient, fan2010sharp, fan2012confidence, firpo2019partial, callaway2021bounds, frandsen2021partial}. 
This limited focus allows greater use of known analytical expressions when deriving sharp bounds, especially the famed Makarov bounds on the cdf and Fr\'echet-Hoeffding bounds on the joint distribution.
Several recent works develop methods applicable to broad parameters classes by employing procedures that do not require analytical expressions for the identified set. 
\cite{russell2021sharp} studies continuous functionals of the joint distribution of discrete potential outcomes, through a computationally intensive (sometimes infeasible) search over all permissible distributions of model primitives.
\cite{fan2023partial} study parameters identified through moment conditions in several incomplete data settings -- including potential outcomes -- by searching over an infinite dimensional space of smooth copulas. 
This paper occupies a middle ground: by focusing on parameters that depend on a scalar moment of the joint distribution and working with optimal transport, I obtain expressions for the bounds with tractable sample analogues. 
This approach allows consideration of a wide variety of parameters while maintaining computational tractability.

This paper also contributes to a growing literature on applications of optimal transport to econometrics; see \cite{galichon2017survey} for a recent survey. Several recent working papers utilize optimal transport for issues related to casual inference, including inverse propensity weighting \citep{dunipace2021optimal}, matching on covariates \citep{gunsilius2021matching}, and obtaining counterfactual distributions \citep{torous2021optimal}. In concurrent and highly complementary work, \cite{ji2023model} consider a very similar class of parameters to the present paper and also propose inference based on the dual problem of optimal transport. Their focus, accomodating non-discrete covariates without resorting to parametric models, leads to theory based on cross fitting and high-level assumptions on first stage estimators. The goal of the present paper is to provide simple, low-level conditions and computationally convenient estimators in the common case where covariates are discrete. This leads to theory based on Hadamard directional differentiability and the functional delta method quite distinct from that of \cite{ji2023model}.

The remainder of this paper is organized as follows. Section \ref{Section: setting and parameter class} formalizes the setting and introduces the class of parameters under study. Optimal transport is introduced in section \ref{Section: optimal transport}, and used in identification in section \ref{Section: identification}. Section \ref{Section: estimators} proposes the estimators and contains the asymptotic results.
Section \ref{Section: application} contains the application, showing suggestive evidence that the the National Supported Work Demonstration job training program was especially beneficial for workers who would otherwise see below average incomes. 
Section \ref{Section: extensions} discusses straightforward extensions, and section \ref{Section: conclusion} concludes.

	\section{Setting and parameter class}
\label{Section: setting and parameter class}

\subsection{Setting}
\label{Section: setting and parameter class, subsection setting}

Consider a potential outcomes framework with binary treatment, a binary instrument, and finitely supported covariates (\cite{imbens1994late}, \cite{abadie2003semiparametric}). Let $Y$ denote the scalar, real-valued outcome of interest and $D \in \{0,1\}$ indicate treatment status. Further let $Y_1$ denote the potential outcome when treated and $Y_0$ the potential outcome when untreated. The observed outcome $Y$ is given by
\begin{equation}
	Y = D Y_1 + (1-D) Y_0. \label{Display: observed and potential outcomes}
\end{equation}
The difference in potential outcomes, $Y_1 - Y_0$, is called the treatment effect. 

The binary instrument is denoted $Z \in \{0,1\}$. Let $D_1$ denote the treatment status when $Z = 1$, and $D_0$ the treatment status when $Z = 0$. The observed treatment status $D$ is given by
\begin{equation}
	D = Z D_1 + (1-Z) D_0. \label{Display: observed and potential treatment status}
\end{equation}
It is assumed that the instrument itself does not affect the outcome.\footnote{\protect\linespread{1}\protect\selectfont
	One could hypothesize potential outcomes varying with the value of the instrument, i.e. $Y_{dz}$ for each $(d,z)$. The exposition here implicitly assumes \textit{instrument exclusion}, also known as the \textit{Stable Unit Treatment Value Assumption}: that $P(Y_{d1} = Y_{d0}) = 1$ for each $d$.
	
} 
Units with $1 = D_1 > D_0 = 0$ are known as \textit{compliers}. 

Assumption \ref{Assumption: setting} formalizes the setting.
\begin{restatable}[Setting]{assumption}{assumptionSetting}
	\label{Assumption: setting}
	\singlespacing
	
	$\{Y_i, D_i, Z_i, X_i\}_{i=1}^n$ is an i.i.d. sample with $(Y, D, Z, X) \sim P$,
	\begin{align}
		&Y \in \mathcal{Y} \subseteq \mathbb{R}, &&D \in \{0,1\}, &&Z \in \{0,1\}, &&X \in \mathcal{X} = \{x_1, \ldots, x_M\} \subseteq \mathbb{R}^{d_x} \label{Assumption: setting, support}
	\end{align}
	where $Y$, $D$, and $Z$ are related to $(Y_1, Y_0, D_1, D_0)$ through equations \eqref{Display: observed and potential outcomes} and \eqref{Display: observed and potential treatment status}, and the random vector $(Y_1, Y_0, D_1, D_0, Z, X)$ satisfies
	\begin{enumerate}[label=(\roman*), noitemsep]
		\item Instrument independence: $(Y_1, Y_0, D_1, D_0) \perp Z \mid X$, \label{Assumption: setting, instrument independence}
		\item Monotonicity: $P(D_1 \geq D_0) = 1$, \label{Assumption: setting, monotonicity}
		\item Existence of compliers: $P(D_1 > D_0, X = x) > 0$ for each $x$, and \label{Assumption: setting, existence of compliers}
		\item $P(X = x, Z = z) > 0$ for each $(x,z)$. \label{Assumption: setting, common support}
	\end{enumerate}
\end{restatable}

Assumption \ref{Assumption: setting} is essentially equivalent to assumption 2.1 in \cite{abadie2003semiparametric}, with the addition that covariates are finitely supported. 
Instrument independence is sometimes referred to as \textit{ignorability}, and satisfied in most randomized controlled trials, where $Z$ indicates being assigned to treatment. Monotonicity is typically a weak assumption in such settings.

It is worth emphasizing that this setting nests the case where treatment is exogenous. Specifically, when $D_1 = 1$ and $D_0 = 0$ (degenerately), every unit is a complier. In this case equation \eqref{Display: observed and potential treatment status} shows treatment status equals the instrument: $D = Z$. Instrument independence simplifies to $(Y_1, Y_0) \perp D \mid X$, and monotonicity is trivially satisfied.

\subsubsection{Distributions of compliers}
\label{Section: setting and parameter class, subsection setting, subsubsection distributions of compliers}

Interest focuses on the distribution of compliers. Such focus is especially policy relevant when ``the policy is the instrument'' i.e., the proposed change in policy is to assign $Z=1$ to all units. \cite{abadie2003semiparametric} shows that assumption \ref{Assumption: setting} suffices to identify the marginal distributions of $Y_1$ and $Y_0$ for the subpopulation of compliers.
\begin{restatable}[\cite{abadie2003semiparametric}]{lemma}{lemmaLATEIVMarginalDistributionIdentification}
	\label{Lemma: identification, LATE IV marginal distribution identification}
	\singlespacing
	
	Suppose assumption \ref{Assumption: setting} holds. Then the marginal distributions of $Y_d$ conditional on $D_1 > D_0$ and $X = x$, denoted $P_{d \mid x}$, are identified by	
	\begin{align}
		E_{P_{d \mid x}}[f(Y_d)] &\equiv E[f(Y_d) \mid D_1 > D_0, X = x] \notag \\
		&= \frac{E[f(Y) \mathbbm{1}\{D = d\} \mid Z = d, X = x] - E[f(Y) \mathbbm{1}\{D = d\} \mid Z = 1-d, X = x]}{P(D = d \mid Z = d, X = x) - P(D = d \mid Z = 1-d, X = x)} \label{Display: lemma, identification, LATE IV marginal distribution identification, conditional distributions}
	\end{align}
	for any integrable function $f$. Furthermore, the distribution of $X$ conditional on $D_1 > D_0$ is identified by
	\begin{align}
		s_x &\equiv P(X = x \mid D_1 > D_0) \notag \\
		&= \frac{\left[P(D = 1 \mid Z = 1, X = x) - P(D = 1 \mid Z = 0, X = x) \right]P(X = x) }{\sum_{x'} \left[P(D = 1 \mid Z = 1, X = x') - P(D = 1 \mid Z = 0, X = x') \right]P(X = x')} \label{Display: lemma, identification, LATE IV marginal distribution identification, conditional probability X = x}
	\end{align}
\end{restatable}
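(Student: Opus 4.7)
The plan is to exploit monotonicity to partition the population into three latent types -- compliers $\{D_1 > D_0\}$, always-takers $\{D_1 = D_0 = 1\}$, and never-takers $\{D_1 = D_0 = 0\}$ -- and then use instrument independence to convert observable conditional moments into moments of potential variables. The key identity driving the argument is that, by equations \eqref{Display: observed and potential outcomes} and \eqref{Display: observed and potential treatment status}, $Y \mathbbm{1}\{D = d\}$ equals $Y_d \mathbbm{1}\{D_Z = d\}$ almost surely, so conditioning on a value of $Z$ transforms an observable moment into a moment involving only potential variables.

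For the first claim I would focus on $d = 1$; the $d = 0$ case is fully symmetric. Applying the identity above with $Z = 1$ and $Z = 0$, and invoking instrument independence to drop the conditioning on $Z$, gives
\[
E[f(Y)\mathbbm{1}\{D = 1\} \mid Z = z, X = x] = E[f(Y_1) \mathbbm{1}\{D_z = 1\} \mid X = x], \quad z \in \{0,1\}.
\]
Taking the difference at $z = 1$ minus $z = 0$ and noting that monotonicity yields $\mathbbm{1}\{D_1 = 1\} - \mathbbm{1}\{D_0 = 1\} = \mathbbm{1}\{D_1 > D_0\}$, the numerator of \eqref{Display: lemma, identification, LATE IV marginal distribution identification, conditional distributions} collapses to $E[f(Y_1) \mathbbm{1}\{D_1 > D_0\} \mid X = x]$. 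Specializing to $f \equiv 1$ shows the denominator equals $P(D_1 > D_0 \mid X = x)$, which is strictly positive by assumption \ref{Assumption: setting}\ref{Assumption: setting, existence of compliers}, and dividing these two identities yields $E[f(Y_1) \mid D_1 > D_0, X = x]$ as claimed.

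The second claim is immediate from Bayes' rule: I would write
\[
P(X = x \mid D_1 > D_0) = \frac{P(D_1 > D_0 \mid X = x)\, P(X = x)}{P(D_1 > D_0)},
\]
substitute the identification of $P(D_1 > D_0 \mid X = x)$ obtained in the previous step, and expand the denominator by the law of total probability over $\mathcal{X}$. I do not expect any genuine obstacle, since this is standard LATE bookkeeping; the only place requiring care is systematically matching $\{D = d\}$ to the correct potential-treatment event under each value of $Z$, because instrument independence must be invoked before any manipulation that mixes $Z$ with $(D_1, D_0)$.
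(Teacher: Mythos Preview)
Your proposal is correct and follows essentially the same route as the paper: use instrument independence to replace $E[f(Y)\mathbbm{1}\{D=d\}\mid Z=z,X=x]$ by $E[f(Y_d)\mathbbm{1}\{D_z=d\}\mid X=x]$, invoke monotonicity to collapse $\mathbbm{1}\{D_d=d\}-\mathbbm{1}\{D_{1-d}=d\}$ to the complier indicator, and finish $s_x$ with Bayes' rule. The only minor difference is that you obtain the denominator by specializing to $f\equiv 1$, whereas the paper cites a separate ``empirical content'' lemma for $P(D=d\mid Z=d,X=x)-P(D=d\mid Z=1-d,X=x)=P(D_1>D_0\mid X=x)$; your shortcut is cleaner.
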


The joint distribution of potential outcomes is not identified. This is a result of the fundamental problem of causal inference: there is no unit where both $Y_1$ and $Y_0$ are observed, and as a result the joint distribution of $(Y_1, Y_0)$ is not identified for any subpopulation. Let $P_{1,0}$ denote the joint distribution of $(Y_1,Y_0)$ conditional on compliance, and $P_{1, 0 \mid x}$ denote the joint distribution conditional on compliance and $X = x$. These are related through the law of iterated expectations; for any function $c(y_1,y_0)$ with values in $\mathbb{R}$,
\begin{equation*}
	E_{P_{1,0}}[c(Y_1,Y_0)] = E[E[c(Y_1,Y_0) \mid D_1 > D_0, X] \mid D_1 > D_0] = \sum_x s_x E_{P_{1,0 \mid x}}[c(Y_1,Y_0)].
\end{equation*}
This relation can also be expressed as $P_{1,0} = \sum_x s_x P_{1,0 \mid x}$.

A joint distribution with marginals $P_{1 \mid x}$ and $P_{0 \mid x}$ is called a \textit{coupling} of $P_{1 \mid x}$ and $P_{0 \mid x}$. $P_{1,0 \mid x}$ is such a coupling, and is otherwise unrestricted by assumption \ref{Assumption: setting}.
Thus the identified set for $P_{1,0 \mid x}$ is the set of distributions $\pi_{1,0 \mid x}$ for $(Y_1,Y_0)$ with marginals $\pi_{1 \mid x} = P_{1 \mid x}$ and $\pi_{0 \mid x} = P_{0 \mid x}$, denoted
\begin{equation}
	\Pi(P_{1 \mid x}, P_{0 \mid x}) = \left\{\pi_{1,0 \mid x} \; : \; \pi_{1 \mid x} = P_{1 \mid x}, \; \pi_{0 \mid x} = P_{0 \mid x}\right\}. \label{Display: identified set for P10x}
\end{equation}
Moreover, the identified set for $P_{1,0}$ is $\left\{\pi_{1,0} = \sum_x s_x \pi_{1,0 \mid x} \; : \; \pi_{1,0 \mid x} \in \Pi(P_{1 \mid x}, P_{0 \mid x})\right\}$.

\subsection{Parameter class}
\label{Section: setting and parameter class, subsection parameter class}

The idea at the core of this paper is to bound a moment of the joint distribution of potential outcomes by optimization. Accordingly, the focus is on scalar parameters of the form
\begin{equation}
	\gamma = g(\theta, \eta) \label{Display: gamma = g(theta, eta)}
\end{equation}
where $g$ is a known function and $\theta = E_{P_{1,0}}[c(Y_1, Y_0)] \in \mathbb{R}$ is a scalar moment of the joint distribution of $(Y_1,Y_0)$ conditional on compliance. The function $c$ is known, and referred to as a ``cost function'' in connection with the optimal transport literature. This class of parameters is broad, as illustrated by the examples given below. In each of these examples $\eta$ is a finite collection of moments of the marginal distributions conditional on compliers: $\eta = (E_{P_1}[\eta_1(Y_1)], E_{P_0}[\eta_0(Y_0)]) \in \mathbb{R}^{K_1 + K_0}$. The formal results focus on this case, but could be generalized to allow $\eta$ to be other point identified nuisance parameters. 

The following conditions are stronger than necessary for identification of the sharp identified set of $\gamma$, but will be used when constructing and studying estimators. Assumption \ref{Assumption: cost function} places restrictions on the cost function to ensure optimal transport can be used characterize and estimate the sharp identified set for $\theta$.
\begin{restatable}[Cost function]{assumption}{assumptionCostFunction}
	\label{Assumption: cost function}
	\singlespacing
	
	Either
	\begin{enumerate}[label=(\roman*)]
		\item $c(y_1,y_0)$ is Lipschitz continuous and $\mathcal{Y}$ is compact, or  \label{Assumption: cost function, smooth costs}
		
		\item $c(y_1,y_0) = \mathbbm{1}\{y_1 - y_0 \leq \delta\}$ for a known $\delta \in \mathbb{R}$ and the cumulative distribution functions $F_{d \mid x}(y) = P(Y_d \leq y \mid D_1 > D_0, X = x)$ are continuous. \label{Assumption: cost function, CDF}
	\end{enumerate}
\end{restatable}

Assumption \ref{Assumption: cost function} covers every example listed below. Continuous cost functions $c$ are given a unified analysis, but for reasons discussed in section \ref{Section: optimal transport} discontinuous cost functions must be handled on a case-by-case basis. I focus on the leading case of interest in applications, $c(y_1,y_0) = \mathbbm{1}\{y_1 - y_0 \leq \delta\}$, corresponding to the cumulative distribution of treatment effects. The approach developed in this paper could likely be generalized to cover other discontinuous cost functions; for example, results in the appendix allow estimation of the sharp lower bound of $P((Y_1, Y_0) \in C)$ for any open, convex set $C \subseteq \mathbb{R}^2$. 

Assumption \ref{Assumption: cost function} \ref{Assumption: cost function, CDF} requires the cdfs $F_{d \mid x}$ be continuous. As discussed in section \ref{Section: identification}, this ensures the set being estimated is the sharp identified set for the parameter of interest. However, the estimation and inference results of section \ref{Section: estimators} hold \textit{regardless} of whether the cdfs are continuous or not; when the cdfs are not continuous, the estimand is a valid outer identified set.

Under assumptions \ref{Assumption: setting} and \ref{Assumption: cost function}, the sharp identified set for $\theta$ is an interval $[\theta^L, \theta^H]$. Assumption \ref{Assumption: parameter, function of moments} contains conditions on $g$ and $\eta$. 

\begin{restatable}[Function of moments]{assumption}{assumptionParameterFunctionOfMoments}
	\label{Assumption: parameter, function of moments} 
	\singlespacing
	
	The parameter is $\gamma = g(\theta, \eta) \in \mathbb{R}$, where
	\begin{align*}
		&\theta = E[c(Y_1, Y_0) \mid D_1 > D_0] \in \mathbb{R}, &&\eta = E
		\begin{bmatrix}
			\eta_1(Y_1), \eta_0(Y_0) \mid D_1 > D_0
		\end{bmatrix}
		\in \mathbb{R}^{K_1 + K_0}
	\end{align*}
	for known functions $g$, $c$, $\eta_1$ and $\eta_0$ such that 
	\begin{enumerate}[label=(\roman*)]
		\item $E[\lVert \eta_d(Y)\rVert^2] < \infty$ for $d = 1,0$,  \label{Assumption: parameter, function of moments, nuisance moments have finite variance}
		\item $g(\cdot, \eta)$ is continuous, and \label{Assumption: parameter, function of moments, g is continuous}
		\item the functions 
		\begin{align*}
			&g^L(t^L, t^H, e) = \min_{t \in [t^L, t^H]} g(t, e), &&g^H(t^L, t^H, e) = \max_{t \in [t^L, t^H]} g(t, e)
		\end{align*}
		are continuously differentiable at $(t^L, t^H, e) = (\theta^L, \theta^H, \eta)$. \label{Assumption: parameter, function of moments, sup and inf of g are differentiable}
	\end{enumerate}
\end{restatable}
Note that when $\theta$ itself is of interest, assumption \ref{Assumption: parameter, function of moments} is satisfied with $g(\theta, \eta) = \theta$. Assumption \ref{Assumption: parameter, function of moments} \ref{Assumption: parameter, function of moments, g is continuous} ensures the identified set for $\gamma$ is the interval $[\gamma^L, \gamma^H]$, and assumption \ref{Assumption: parameter, function of moments} \ref{Assumption: parameter, function of moments, sup and inf of g are differentiable} is used to apply the delta method. It is straightforward to show assumption \ref{Assumption: parameter, function of moments} \ref{Assumption: parameter, function of moments, sup and inf of g are differentiable} holds when $g$ is continuously differentiable in both arguments and $g(\cdot, \eta)$ is strictly increasing, as the latter condition implies $g^L(\theta^L, \theta^H, \eta) = g(\theta^L, \eta)$ and $g^H(\theta^L, \theta^H, \eta) = g(\theta^H, \eta)$ and the former condition implies they are continuously differentiable. This argument applies to every parameter listed below. When $g$ is differentiable but $g(\cdot, \eta)$ is not monotonic, it is often possible to use the implicit function theorem applied to first order conditions to derive sufficient conditions for the corresponding $\argmin$ and $\argmax$ to be differentiable, and thus for assumption \ref{Assumption: parameter, function of moments} \ref{Assumption: parameter, function of moments, sup and inf of g are differentiable} to hold. 

\subsubsection{Examples}
\label{Section: setting and parameter class, subsection parameter class, subsubsection examples}

The following examples are intended both to fix ideas and illustrate the broad scope of the parameter class described above.

\begin{restatable}[Summary statistics]{example}{exampleSummaryStatistics}
	\singlespacing
	Many summary statistics can be rewritten in the form $\gamma = g(\theta,\eta)$. For example, suppose interest is in the variance of treatment effects for compliers: $\gamma = \text{Var}(Y_1 - Y_0 \mid D_1 > D_0)$. This parameter can be rewritten as
	\begin{equation*}
		\gamma = \text{Var}(Y_1 - Y_0 \mid D_1 > D_0) = E_{P_{1,0}}[(Y_1 - Y_0)^2] - (E_{P_1}[Y_1] - E_{P_0}[Y_0])^2,
	\end{equation*}
	This parameter fits the form $\gamma = g(\theta, \eta)$ required of display \eqref{Display: gamma = g(theta, eta)}, with $\theta = E_{P_{1,0}}[(Y_1 - Y_0)^2]$, $\eta = (\eta^{(1)}, \eta^{(2)}) = (E_{P_1}[Y_1], E_{P_0}[Y_0])$, and $g(\theta, \eta) = \theta - (\eta^{(1)} - \eta^{(2)})^2$. The cost function $c(y_1,y_0) = (y_1 - y_0)^2$ satisfies assumption \ref{Assumption: cost function} \ref{Assumption: cost function, smooth costs} when $\mathcal{Y}$, the support of the outcome $Y$, is bounded.
	
	Similarly, suppose the researcher is interested in the correlation between $Y_1$ and $Y_0$ for compliers. Set $\gamma = \text{Corr}(Y_1, Y_0 \mid D_1 > D_0)$, which can be rewritten as
	\begin{equation*}
		\gamma = \text{Corr}(Y_1, Y_0 \mid D_1 > D_0) = \frac{E_{P_{1,0}}[Y_1 Y_0] - E_{P_1}[Y_1]E_{P_0}[Y_0]}{\sqrt{E_{P_1}[Y_1^2] - (E_{P_1}[Y_1])^2}\sqrt{E_{P_0}[Y_0^2] - (E_{P_0}[Y_0])^2}}
	\end{equation*}
	This parameter also fits the form $\gamma = g(\theta, \eta)$ in display \eqref{Display: gamma = g(theta, eta)}, with $\theta = E_{P_{1,0}}[Y_1 Y_0]$, $\eta = (\eta^{(1)}, \eta^{(2)}, \eta^{(3)}, \eta^{(4)}) = (E_{P_1}[Y_1], E_{P_1}[Y_1^2], E_{P_0}[Y_0], E_{P_0}[Y_0^2])$, and $g(\theta,\eta) = \frac{\theta - \eta^{(1)} \times \eta^{(3)}}{\sqrt{\eta^{(2)} - (\eta^{(1)})^2}\sqrt{\eta^{(4)} - (\eta^{(3)})^2}}$. The cost function $c(y_1,y_0) = y_1 y_0$ satisfies assumption \ref{Assumption: cost function} \ref{Assumption: cost function, smooth costs} when $\mathcal{Y}$ is bounded.
\end{restatable}

\begin{restatable}[Expected percent change]{example}{exampleExpectedPercentChange}
	\singlespacing
	The expected percent change in the outcome can be written as $100 \times E\left[\frac{Y_1 - Y_0}{Y_0} \mid D_1 > D_0\right] \%$. This is a unit-invariant causal parameter that is a natural summary measure when $Y_0$ exhibits considerably variation. For example, a treatment effect of $Y_1 - Y_0 = 5$ is typically of greater economic significance when the untreated outcome is small, say $ Y_0 = 10$, than when $Y_0 = 100$.
	
	The expected percent change is proportional to 
	\begin{equation*}
		\gamma = E\left[\frac{Y_1 - Y_0}{Y_0} \mid D_1 > D_0\right] = E_{P_{1,0}}\left[\frac{Y_1 - Y_0}{Y_0}\right],
	\end{equation*}
	which fits the form of display \eqref{Display: gamma = g(theta, eta)}, with $\gamma = \theta =  E_{P_{1,0}}\left[\frac{Y_1 - Y_0}{Y_0}\right]$. The cost function $c(y_1,y_0) = \frac{y_1 - y_0}{y_0}$ satisfies assumption \ref{Assumption: cost function} \ref{Assumption: cost function, smooth costs} when $\mathcal{Y}$ is bounded and bounded away from zero. 
\end{restatable}

\begin{restatable}[Equitable policies]{example}{exampleCovarianceTreatmentEffectUntreatedOutcome}
	\label{Example: equitable policies}
	\singlespacing
	
	Policy makers are often interested in whether a policy is equitable -- that is, whether the benefits are concentrated among those who would have undesirable outcomes without treatment. 
	
	One parameter that speaks to these concerns is the covariance between treatment effects and untreated outcomes among compliers: $\gamma = \text{Cov}(Y_1 - Y_0, Y_0\mid D_1 > D_0)$. Notice that $\gamma < 0$ implies those with below average $Y_0$ tend to see above average treatment effects. This parameter can be rewritten as
	\begin{equation*}
		\gamma = \text{Cov}(Y_1 - Y_0, Y_0\mid D_1 > D_0) = E_{P_{1,0}}[(Y_1 - Y_0) Y_0] - (E_{P_1}[Y_1] - E_{P_0}[Y_0])E_{P_0}[Y_0]
	\end{equation*}
	and fits the form $g(\theta, \eta)$ with $\theta = E_{P_{1,0}}[(Y_1 - Y_0)Y_0]$, $\eta = (E_{P_1}[Y_1], E_{P_0}[Y_0])$, and $g(\theta,\eta) = \theta - (\eta^{(1)} - \eta^{(2)})\eta^{(2)}$. The cost function $c(y_1,y_0) = (y_1 - y_0)y_0$ satisfies assumpion \ref{Assumption: cost function} \ref{Assumption: cost function, smooth costs} when $\mathcal{Y}$ is bounded. 
	
	Many related parameters share a sign with $\text{Cov}(Y_1 - Y_0, Y_0 \mid D_1 > D_0) $ and are also suitable for such an analysis. One such example is the OLS slope when regressing $Y_1 - Y_0$ on $Y_0$ and a constant: $\gamma = \frac{\text{Cov}(Y_1 - Y_0, Y_0 \mid D_1 > D_0)}{\text{Var}(Y_0 \mid D_1 > D_0)}$. This parameter can be rewritten as
	\begin{equation*}
		\gamma = \frac{\text{Cov}(Y_1 - Y_0, Y_0 \mid D_1 > D_0)}{\text{Var}(Y_0 \mid D_1 > D_0)} = \frac{E_{P_{1,0}}[(Y_1 - Y_0) Y_0] - (E_{P_1}[Y_1] - E_{P_0}[Y_0])E_{P_0}[Y_0]}{E_{P_0}[Y_0^2] - (E_{P_0}[Y_0])^2}
	\end{equation*}
	where $\theta = E_{P_{1,0}}[(Y_1 - Y_0)Y_0]$, $\eta = (E_{P_1}[Y_1], E_{P_0}[Y_0], E_{P_0}[Y_0^2])$, and $g(\theta,\eta) = \frac{\theta - (\eta^{(1)} - \eta^{(2)})\eta^{(2)}}{\eta^{(3)} - (\eta^{(2)})^2}$.
\end{restatable}

\begin{restatable}[Proportion that benefit]{example}{exampleShareBenefiting}
	\singlespacing
	The share of compliers benefiting from treatment, written 
	\begin{equation*}
		\gamma = P(Y_1 > Y_0 \mid D_1 > D_0),
	\end{equation*}
	is naturally of interest in applications where theory gives little indication whether the treatment will have a positive or negative effect. For example, \cite{allcott2020welfare} study the effect of deactivating facebook on subjective well-being. The authors find significant positive average effects of deactivation, but find substantial heterogeneity in follow-up interviews. 
	
	This parameter fits the form of display \eqref{Display: gamma = g(theta, eta)}, with $\gamma = \theta = E_{P_{1,0}}[\mathbbm{1}\{Y_1 - Y_0 \leq 0\}]$. The cost function $c(y_1,y_0) = \mathbbm{1}\{y_1 - y_0 \leq 0\}$ satisfies assumption \ref{Assumption: cost function} \ref{Assumption: cost function, CDF} if the cdfs $F_{d \mid x}(y)$ are continuous. 
	
	The share benefiting from treatment is also of particular interest when the intervention comes at a financial cost and the outcome of interest is a pecuniary return. Common examples include job training programs intended to increase a worker's income (e.g. the National Supported Work Demonstration studied in \cite{couch1992long}) or management practices intended to raise a firm's accounting profit (e.g. the employee referral program studied in \cite{friebel2023employee}). To illustrate, suppose the researcher observes $\{R_i, C_i, D_i, Z_i\}_{i=1}^n$, where $R$ is observed revenue and $C$ is the observed cost. These are related to treatment status $D \in \{0,1\}$, potential revenues $(R_1, R_0)$, and potential costs $(C_1, C_0)$ by
	\begin{align*}
		&R = D R_1 + (1-D)R_0, &&C = D C_1 + (1-D) C_0
	\end{align*}
	The observed profit, $Y = R - C$, is related to treatment status by 
	\begin{align*}
		Y = D \underbrace{(R_1 - C_1)}_{\coloneqq Y_1} + (1-D)\underbrace{(R_0 - C_0)}_{\coloneqq Y_0}
	\end{align*}
	The probability the change in revenue exceeds the change in cost is
	\begin{align*}
		P(R_1 - R_0 > C_1 - C_0 \mid D_1 > D_0) = P(Y_1 > Y_0 \mid D_1 > D_0)
	\end{align*}
\end{restatable}

\begin{restatable}[Quantiles]{example}{exampleQuantilesTreatmentEffects}
\singlespacing
\label{Example: pseudo quantiles}

Suppose the parameter of interest is any $q_\tau$ solving 
\begin{equation}
	P(Y_1 - Y_0 \leq q_\tau) = \tau \label{Display: pseudo quantile definition}
\end{equation}
This parameter has a similar interpretation to the $\tau$-th quantile.\footnote{The $\tau$-th quantile is usually defined as the unique  value $\tilde{q}_\tau = \inf\{y \; ; \; P(Y_1 - Y_0 \leq y) \geq \tau\}$. When the $\tau$ level set of the cumulative distribution function $P(Y_1 - Y_0 \leq \cdot)$ is nonempty, the $\tau$-th quantile has the interpretation that $100 \times \tau \%$ of the population has treatment effect less than or equal to $\tilde{q}_\tau$. Every $q_\tau$ solving \eqref{Display: pseudo quantile definition} has the same interpretation.} 
$q_\tau$ cannot be viewed as $\gamma = g(\theta,\eta)$. However, by viewing $\theta(\delta) = P(Y_1 - Y_0 \leq \delta \mid D_1 > D_0) = E_{P_{1,0}}[\mathbbm{1}\{Y_1 - Y_0 \leq \delta\}]$ as a function of $\delta$, the results below can be adapted to construct a confidence set for the identified set of this parameter as described in section \ref{Section: extensions, subsection quantiles}.
\end{restatable}

	\section{Optimal Transport}
\label{Section: optimal transport}

This section defines and discusses optimal transport, which is used to characterize the identified set and construct estimators. 

Given any marginal distributions $P_1$ and $P_0$ and a ``cost function'' $c(y_1, y_0)$, the Monge-Kantorovich formulation of \textbf{optimal transport} is the problem of choosing a coupling $\pi \in \Pi(P_1,P_0)$ to minimize $E_\pi[c(Y_1,Y_0)]$:
\begin{equation}
	OT_c(P_1, P_0) = \inf_{\pi \in \Pi(P_1, P_0)} E_\pi[c(Y_1,Y_0)]. \label{Defn: optimal transport primal problem, main text}
\end{equation}
This minimization problem in \eqref{Defn: optimal transport primal problem, main text} is referred to as the \textbf{primal problem}, and will be used to characterize the identified set of $\theta$. 

The dual problem of optimal transport will be used to construct and analyze estimators. Let $\Phi_c$ denote the set of functions $\varphi(y_1)$ and $\psi(y_0)$ whose pointwise sum is less than $c(y_1,y_0)$:
\begin{equation}
	\Phi_c = \left\{(\varphi,\psi) \; ; \; \varphi(y_1) + \psi(y_0) \leq c(y_1, y_0)\right\}. \label{Defn: Phi_c, main text}
\end{equation}
The \textbf{dual problem} chooses a pair of functions in $\Phi_c$ to maximize the sum of the corresponding expectations:
\begin{equation}
	\sup_{(\varphi,\psi) \in \Phi_c} E_{P_1}[\varphi(Y_1)] + E_{P_0}[\psi(Y_0)]. \label{Display: optimal transport dual problem, main text}
\end{equation}
When the cost function is lower semicontinuous and bounded from below, the primal problem is attained and \textbf{strong duality} holds:
\begin{equation}
	OT_c(P_1, P_0) = \min_{\pi \in \Pi(P_1, P_0)} E_\pi[c(Y_1, Y_0)] = \sup_{(\varphi,\psi) \in \Phi_c} E_{P_1}[\varphi(Y_1)] + E_{P_0}[\psi(Y_0)]. \label{Display: strong duality, main text}
\end{equation}
The dual problem will be used to construct and analyze estimators. Indeed, the identification of $P_{d \mid x}$ in lemma \ref{Lemma: identification, LATE IV marginal distribution identification} suggests straightforward sample analogues estimating $E_{P_{d \mid x}}[f(Y_d)]$ for a given $f$, which makes it possible to form a sample analogue of the dual problem.  

Although it is clear how to form a sample analogue of the dual problem, it is not immediately clear how to analyze the resulting estimator. Fortunately, the dual problem can often be simplified by restricting the maximization problem to a smaller set of functions. Estimators based on this restricted dual problem can then be studied with empirical process techniques. 

The dual feasible set is restricted with the concept of $c$-concavity. Notice the dual problem's objective is monotonic, in the sense that $\varphi(y_1) \leq \tilde{\varphi}(y_1)$ for all $y_1$ implies
\begin{align*}
	E_{P_1}[\varphi(Y_1)] + E_{P_0}[\psi(Y_0)] \leq E_{P_1}[\tilde{\varphi}(Y_1)] + E_{P_0}[\psi(Y_0)].
\end{align*}
Increasing $\psi$ pointwise will also increase the dual objective. Speaking loosely, any function pair $(\varphi,\psi) \in \Phi_c$ for which the constraint $\varphi(y_1) + \psi(y_0) \leq c(y_1,y_0)$ is ``slack'' cannot be a solution to the dual problem and can therefore be ignored. This motivates the definition of the \textbf{$c$-transforms} of a function $\varphi$:
\begin{align*}
	&\varphi^c(y_0) = \inf_{y_1} \{c(y_1, y_0) - \varphi(y_1)\}, &&\varphi^{cc}(y_1) = \inf_{y_0} \{c(y_1,y_0) - \varphi^c(y_0)\}.
\end{align*}
For any pair of functions $(\varphi, \psi) \in \Phi_c$, these definitions imply $\psi(y_0) \leq \varphi^c(y_0)$, $\varphi(y_1) \leq \varphi^{cc}(y_1)$, and $\varphi^{cc}(y_1) + \varphi^c(y_0) \leq c(y_1,y_0)$. Further $c$-transformations are irrelevant because $(\varphi^{cc})^c = \varphi^c$, so a function $\varphi$ is called \textbf{$c$-concave} if $\varphi^{cc} = \varphi$. If the $c$-transforms are integrable, the dual problem can be restricted to $c$-concave conjugate pairs, $(\varphi^{cc}, \varphi^c)$. Furthermore, $c$-concave functions often ``inherit'' properties of the cost function $c$; for example, if $c$ is Lipschitz continuous then $\varphi^c$ and $\varphi^{cc}$ are Lipschitz continuous as well. These properties can be used to define sets of functions $\mathcal{F}_c$ and $\mathcal{F}_c^c$ (depending on the cost function $c$ but not on the distributions $P_1$, $P_0$) such that
\begin{equation}
	\sup_{(\varphi,\psi) \in \Phi_c} E_{P_1}[\varphi(Y_1)] + E_{P_0}[\psi(Y_0)] =  \sup_{(\varphi,\psi) \in \Phi_c \cap (\mathcal{F}_c \times \mathcal{F}_c^c)} E_{P_1}[\varphi(Y_1)] + E_{P_0}[\psi(Y_0)]. \label{Display: strong duality, with smaller feasible set}
\end{equation}

Two cases suffice for the parameters considered in this paper. When the cost function $c(y_1,y_0)$ is Lipschitz continuous and $\mathcal{Y}$ is compact, define
\begin{align}
	\mathcal{F}_c &= \left\{\varphi : \mathcal{Y} \rightarrow \mathbb{R} \; ; \;  -\lVert c \rVert_\infty \leq \varphi(y_1) \leq \lVert c \rVert_\infty, \; \lvert \varphi(y_1) - \varphi(y_1') \rvert \leq L \lvert y_1 - y_1'\rvert \right\} \label{Defn: F_c for smooth costs} \\
	\mathcal{F}_c^c &= \left\{\psi : \mathcal{Y} \rightarrow \mathbb{R} \; ; \; -2\lVert c \rVert_\infty \leq \psi(y_0) \leq 0, \; \lvert \psi(y_0) - \psi(y_0') \rvert \leq L \lvert y_0 - y_0'\rvert \right\} \label{Defn: F_c^c for smooth costs}
\end{align}
where $\lVert c \rVert_\infty = \sup_{(y_1,y_0)} \lvert c(y_1, y_0) \rvert$ and $L$ is the Lipschitz constant of $c$. When $c(y_1,y_0) = \mathbbm{1}\{(y_1,y_0) \in C\}$ for an open, convex set $C$, let
\begin{align}
	\mathcal{F}_c &= \left\{\varphi : \mathcal{Y} \rightarrow \mathbb{R} \; ; \; \varphi(y_1) = \mathbbm{1}\{y_1 \in I\} \text{ for some interval } I\right\} \label{Defn: F_c for indicator costs of convex C} \\
	\mathcal{F}_c^c &= \left\{\psi : \mathcal{Y} \rightarrow \mathbb{R} \; ; \; \psi(y_0) = -\mathbbm{1}\{y_0 \in I^c\} \text{ for some  interval } I\right\} \label{Defn: F_c^c for indicator costs of convex C}
\end{align}

Equation \eqref{Display: strong duality, with smaller feasible set} shows the optimal transport functional $OT_c(P_1, P_0)$ depends only on the values of $E_{P_1}[\varphi(Y_1)]$ and $ E_{P_0}[\psi(Y_0)]$ for $(\varphi,\psi) \in \mathcal{F}_c \times \mathcal{F}_c^c$. For any set $A$, let $\ell^\infty(A)$ denote the space of real-valued bounded functions defined on $A$, equipped with the supremum norm: $\ell^\infty(A) = \left\{f : A \rightarrow \mathbb{R} \; ; \; \lVert f \rVert_\infty = \sup_{a \in A} \lvert f(a) \rvert < \infty \right\}$.
Optimal transport can be viewed as the map $OT_c : \ell^\infty(\mathcal{F}_c) \times \ell^\infty(\mathcal{F}_c^c) \rightarrow \mathbb{R}$ given by
\begin{equation}
	OT_c(P_1, P_0) = \sup_{(\varphi,\psi) \in \Phi_c \cap (\mathcal{F}_c \times \mathcal{F}_c^c)} E_{P_1}[\varphi(Y_1)] + E_{P_0}[\psi(Y_0)]. \label{Defn: optimal transport dual formulation, main text optimal transport}
\end{equation}
This problem will be referred to as the \textbf{restricted dual problem}. Estimators formed with this map can be studied with empirical process techniques.

In summary, $OT_c(P_1, P_0)$ will be viewed as the functional in \eqref{Defn: optimal transport primal problem, main text} when considering identification, and as the functional given in \eqref{Defn: optimal transport dual formulation, main text optimal transport} when considering estimation. By ensuring $c$ is either Lipschitz continuous or the indicator of an open convex set, strong duality and $c$-concavity ensures these functionals agree on the space of probability distributions.

	\section{Identification}
\label{Section: identification}

Recall the parameter of interest is $\gamma = g(\theta, \eta)$, where $\eta$ is a point identified parameter, $\theta = E_{P_{1,0}}[c(Y_1,Y_0)] \in \mathbb{R}$, and $g$ and $c$ are known functions. 

Begin by rewriting $\theta = E_{P_{1, 0}}[c(Y_1,Y_0)] = E[c(Y_1,Y_0) \mid D_1 > D_0]$ with the law of iterated expectations: 
\begin{equation*}
	\theta = E[E[c(Y_1,Y_0) \mid D_1 > D_0, X] \mid D_1 > D_0] = E[\theta_X \mid D_1 > D_0] = \sum_x s_x \theta_x
\end{equation*}
where $s_x = P(X = x \mid D_1 > D_0)$ and $\theta_x = E[c(Y_1,Y_0) \mid D_1 > D_0, X = x] = E_{P_{1,0 \mid x}}[c(Y_1,Y_0)]$. As noted in section \ref{Section: setting and parameter class, subsection setting, subsubsection distributions of compliers}, the identified set for $P_{1,0 \mid x}$ is the set of couplings of $P_{1\mid x}$ and $P_{0 \mid x}$, denoted $\Pi(P_{1 \mid x},P_{0 \mid x})$. Thus the identified set for $\theta_x$ is $\Theta_{I,x} = \left\{t \in \mathbb{R} \; : \; t = E_\pi[c(Y_1,Y_0)] \text{ for some } \pi \in \Pi(P_{1 \mid x},P_{0 \mid x})\right\}$. $\Pi(P_{1 \mid x},P_{0 \mid x})$ is convex, implying that $\Theta_{I,x}$ is an interval. Let $\theta_x^L$ and $\theta_x^H$ denote its lower and upper endpoint respectively. 

To ensure the restricted dual problem can be used for estimation, $\theta_x^L$ and $\theta_x^H$ are characterized through an optimal transport problem with a suitable cost function $c$. When assumption \ref{Assumption: cost function} \ref{Assumption: cost function, smooth costs} holds ($c(y_1,y_0)$ is Lipschitz continuous and $\mathcal{Y}$ is compact), define
\begin{align}
	&c_L(y_1, y_0) = c(y_1,y_0), &&c_H(y_1, y_0) = -c(y_1,y_0) \notag \\
	&\theta^L(P_{1 \mid x}, P_{0 \mid x}) = OT_{c_L}(P_{1 \mid x}, P_{0 \mid x}), &&\theta^H(P_{1 \mid x}, P_{0 \mid x}) = -OT_{c_H}(P_{1 \mid x}, P_{0 \mid x}). \label{Display: thetaL, thetaH when c is continuous}
\end{align}
Note that $\theta_x^L = \theta^L(P_{1 \mid x}, P_{0 \mid x})$ and $\theta_x^H = \theta^H(P_{1 \mid x}, P_{0 \mid x})$.

The cumulative distribution function of $Y_1 - Y_0$ corresponds to the cost function $c(y_1, y_0) = \mathbbm{1}\{y_1 - y_0 \leq \delta\}$, which is not lower semicontinuous. This challenge is circumvented by a small change in the cost function. When assumption \ref{Assumption: cost function} \ref{Assumption: cost function, CDF} holds (the cost function is $c(y_1,y_0) = \mathbbm{1}\{y_1 - y_0 \leq \delta\}$) define 
\begin{align}
	&c_L(y_1,y_0) = \mathbbm{1}\{y_1 - y_0 < \delta\}, &&c_H = \mathbbm{1}\{y_1 - y_0 > \delta\} \notag \\
	&\theta^L(P_{1 \mid x}, P_{0 \mid x}) = OT_{c_L}(P_{1 \mid x}, P_{0 \mid x}), &&\theta^H(P_{1 \mid x}, P_{0 \mid x}) = 1 - OT_{c_H}(P_{1 \mid x}, P_{0 \mid x}) \label{Display: thetaL, thetaH when c is for CDF}
\end{align}
It follows from definitions that $\theta_x^H = \theta^H(P_{1 \mid x}, P_{0 \mid x})$. Moreover, $c_L(y_1,y_0) \leq c(y_1,y_0)$ implies $\theta^L(P_{1 \mid x}, P_{0 \mid x})$ is a valid lower bound for $\theta_x$. It is sharp if $P_{1 \mid x}$, $P_{0\mid x}$ have continuous cumulative distribution functions, in which case $\theta_x^L =\theta^L(P_{1 \mid x}, P_{0 \mid x})$. It is worth emphasizing again that the estimation and inference results of section \ref{Section: estimators} hold \textit{regardless} of whether the cdfs are continuous or not; when the cdfs are not continuous, the estimand is a valid outer identified set.

Under assumptions \ref{Assumption: setting} and \ref{Assumption: cost function}, the identified set for $\theta = E_{P_{1,0}}[c(Y_1,Y_0)] = E[c(Y_1,Y_0) \mid D_1 > D_0]$ is the compact interval $[\theta^L, \theta^H]$ with endpoints
\begin{align*}
	&\theta^L = E[\theta_X^L \mid D_1 > D_0] = \sum_x s_x \theta_x^L, &&\theta^H = E[\theta_X^H \mid D_1 > D_0] = \sum_x s_x \theta_x^H
\end{align*}
Under assumptions \ref{Assumption: setting}, \ref{Assumption: cost function}, and \ref{Assumption: parameter, function of moments}, the identified set for $\gamma$ is $[\gamma^L, \gamma^H]$, with endpoints
\begin{align}
	&\gamma^L = g^L(\theta^L, \theta^H, \eta) = \inf_{t \in [\theta^L, \theta^H]} g(t, \eta), &&\gamma^H = g^H(\theta^L, \theta^H, \eta) = \sup_{t \in [\theta^L, \theta^H]} g(t, \eta) \label{Display: main idea, gamma bounds}
\end{align}
The following theorem summarizes the discussion above. Let $\theta^L(\cdot, \cdot)$ and $\theta^H(\cdot, \cdot)$ be given by \eqref{Display: thetaL, thetaH when c is continuous} or \eqref{Display: thetaL, thetaH when c is for CDF} depending on the cost function, and set
\begin{align}
	&\theta_x^L = \theta^L(P_{1 \mid x}, P_{0 \mid x}), &&\theta_x^H = \theta^H(P_{1 \mid x}, P_{0 \mid x}), \label{Defn: theta_x^L, theta_x^H formal definition} \\
	&\theta^L = \sum_x s_x \theta_x^L, &&\theta^H = \sum_x s_x \theta_x^H, \label{Defn: theta^L, theta^H formal definition} \\
	&\gamma^L = g^L(\theta^L, \theta^H, \eta), &&\gamma^H = g^H(\theta^L, \theta^H, \eta) \label{Defn: gamma^L, gamma^H formal definition}
\end{align}

\begin{restatable}[Identification of functions of moments]{theorem}{theoremIdentificationFunctionOfMoments}
	\label{Theorem: identification, function of moments}
	\singlespacing
	
	Suppose assumptions \ref{Assumption: setting}, \ref{Assumption: cost function}, and \ref{Assumption: parameter, function of moments} are satisfied. Then the sharp identified set for $\gamma$ is $[\gamma^L, \gamma^H]$.
	
\end{restatable}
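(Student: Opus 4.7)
The plan is to argue in two stages: first identify the sharp identified set for $\theta$, then push this through the continuous function $g(\cdot, \eta)$ to obtain the sharp identified set for $\gamma$.

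For the first stage, I would start from the decomposition $\theta = \sum_x s_x \theta_x$ with $\theta_x = E_{P_{1,0 \mid x}}[c(Y_1, Y_0)]$, which follows from the law of iterated expectations as already shown in the text. By Lemma \ref{Lemma: identification, LATE IV marginal distribution identification}, each $s_x$ and each marginal $P_{d \mid x}$ is point identified, while Assumption \ref{Assumption: setting} places no further restriction on $P_{1,0 \mid x}$ beyond having these marginals. Consequently the identified set for $P_{1,0 \mid x}$ is exactly $\Pi(P_{1 \mid x}, P_{0 \mid x})$, and the identified set $\Theta_{I,x}$ for $\theta_x$ is the image of this coupling set under the linear map $\pi \mapsto E_\pi[c(Y_1, Y_0)]$. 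Since $\Pi(P_{1 \mid x}, P_{0 \mid x})$ is convex and the map is linear, $\Theta_{I,x}$ is an interval.

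Next I would identify the endpoints of $\Theta_{I,x}$ with $\theta^L(P_{1 \mid x}, P_{0 \mid x})$ and $\theta^H(P_{1 \mid x}, P_{0 \mid x})$. Under Assumption \ref{Assumption: cost function}\ref{Assumption: cost function, smooth costs} this is immediate: $c$ is continuous and bounded on the compact $\mathcal{Y}^2$, so it is lower semicontinuous and bounded from below, the primal infimum is attained, and $\theta_x^L = OT_c(P_{1 \mid x}, P_{0 \mid x})$, $\theta_x^H = -OT_{-c}(P_{1 \mid x}, P_{0 \mid x})$ match the definitions in \eqref{Display: thetaL, thetaH when c is continuous}. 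The technically delicate case is Assumption \ref{Assumption: cost function}\ref{Assumption: cost function, CDF}, where $c(y_1, y_0) = \mathbbm{1}\{y_1 - y_0 \leq \delta\}$ is not lower semicontinuous. For the upper bound I use the identity $c = 1 - c_H$ with $c_H = \mathbbm{1}\{y_1 - y_0 > \delta\}$, so $\sup_\pi E_\pi[c] = 1 - \inf_\pi E_\pi[c_H] = 1 - OT_{c_H}(P_{1 \mid x}, P_{0 \mid x})$ and $c_H$ is lower semicontinuous. For the lower bound, $c_L = \mathbbm{1}\{y_1 - y_0 < \delta\} \leq c$ gives $OT_{c_L}(P_{1 \mid x}, P_{0 \mid x}) \leq \theta_x^L$; the reverse inequality is where the continuity of $F_{1 \mid x}$ and $F_{0 \mid x}$ is used. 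The main obstacle is this step: I would take any coupling $\pi^\star$ attaining $OT_{c_L}$ (which exists since $c_L$ is lsc and bounded) and argue that $P_{\pi^\star}(Y_1 - Y_0 = \delta) = 0$, using that for any coupling $\pi$ with continuous marginals the mass on the diagonal line $\{y_1 - y_0 = \delta\}$ can be made zero; more carefully, one can approximate $\pi^\star$ by couplings for which this diagonal has no mass, which is possible because continuous marginals mean both projections are non-atomic and standard constructions (e.g. perturbing $\pi^\star$ by a vanishing amount of the independent coupling) preserve the marginals while killing the diagonal mass. Then $E_{\pi^\star}[c] = E_{\pi^\star}[c_L] = OT_{c_L}$, establishing sharpness.

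Given that each $\Theta_{I,x} = [\theta_x^L, \theta_x^H]$ is a compact interval, the identified set for $\theta = \sum_x s_x \theta_x$ is the Minkowski-weighted sum $\{\sum_x s_x t_x : t_x \in [\theta_x^L, \theta_x^H]\}$, which equals $[\sum_x s_x \theta_x^L, \sum_x s_x \theta_x^H] = [\theta^L, \theta^H]$ as defined in \eqref{Defn: theta^L, theta^H formal definition}. Finally, for the second stage, since $\eta$ is point identified and $g(\cdot, \eta)$ is continuous on the compact interval $[\theta^L, \theta^H]$ by Assumption \ref{Assumption: parameter, function of moments}\ref{Assumption: parameter, function of moments, g is continuous}, the image $\{g(t, \eta) : t \in [\theta^L, \theta^H]\}$ is a compact connected subset of $\mathbb{R}$, hence a closed interval, whose endpoints are by definition $g^L(\theta^L, \theta^H, \eta) = \gamma^L$ and $g^H(\theta^L, \theta^H, \eta) = \gamma^H$. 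The identified set for $\gamma$ is therefore $[\gamma^L, \gamma^H]$, and sharpness is inherited from the sharpness established for $\theta$.
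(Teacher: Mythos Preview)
Your overall architecture matches the paper exactly: first obtain the sharp identified set $[\theta^L,\theta^H]$ for $\theta$ (the content of Lemma~\ref{Lemma: identification, moments}), then push through the continuous $g(\cdot,\eta)$ via the extreme and intermediate value theorems. Stage~2 is correct, and Stage~1 under Assumption~\ref{Assumption: cost function}\ref{Assumption: cost function, smooth costs} is correct. The upper bound under Assumption~\ref{Assumption: cost function}\ref{Assumption: cost function, CDF} via $c = 1 - c_H$ is also correct.

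The gap is in your sharpness argument for the lower bound in case~\ref{Assumption: cost function, CDF}. You propose to show $P_{\pi^\star}(Y_1 - Y_0 = \delta) = 0$ by ``perturbing $\pi^\star$ by a vanishing amount of the independent coupling,'' but this does not kill the diagonal mass: the mixture $(1-\epsilon)\pi^\star + \epsilon\,(P_{1\mid x}\otimes P_{0\mid x})$ has diagonal mass $(1-\epsilon)\,\pi^\star(\{y_1-y_0=\delta\})$, which stays positive if $\pi^\star$ has positive diagonal mass. More fundamentally, continuous marginals do \emph{not} preclude a coupling from concentrating on $\{y_1-y_0=\delta\}$ (e.g.\ the deterministic coupling $Y_0 = Y_1 - \delta$), so the claim that ``for any coupling with continuous marginals the mass on the diagonal line can be made zero'' is not a property of arbitrary couplings and cannot be obtained by a generic perturbation. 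What is true is that the \emph{optimal} coupling $\pi_x^L$ for $c_L$ has this property, but establishing it requires the structure of the optimum. The paper's argument uses strong duality and complementary slackness: the dual $\sup_y\{F_{1\mid x}(y) - F_{0\mid x}(y-\delta)\}$ is attained at some $y^*$, and equality in the pointwise inequality $\mathbbm{1}\{y_1 \leq y^*\} - \mathbbm{1}\{y_0 \leq y^*-\delta\} \leq \mathbbm{1}\{y_1 - y_0 < \delta\}$ must hold $\pi_x^L$-a.s., which forces $\pi_x^L$ to concentrate on a set intersecting the diagonal only at the single point $(y^*, y^*-\delta)$. Continuity of $F_{1\mid x}$ then gives $\pi_x^L(\{(y^*,y^*-\delta)\}) \leq P_{1\mid x}(\{y^*\}) = 0$. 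Your proposal is missing this complementary-slackness step; without it the lower-bound sharpness is not established.
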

All results are proven in the appendix.

It is worth pausing to consider the role of covariates. When covariates are available, ignoring them leads to wider bounds that are not sharp. Specifically, the marginal distributions $P_1$ and $P_0$ could be used to form a lower bound on $\theta$ with $\theta^L(P_1, P_0) = \inf_{\pi \in \Pi(P_1,P_0)} E_\pi[c_L(Y_1,Y_0)]$. This bound minimizes over the whole set $\Pi(P_1, P_0) = \left\{\pi_{1,0} \; ; \; \pi_1 = P_1, \pi_0 = P_0\right\}$, but the identified set for $P_{1,0}$ is the subset of $\Pi(P_1, P_0)$ given by $\left\{\pi_{1,0} = \sum_x s_x \pi_{1, 0 \mid x} \; ; \; \pi_{1,0 \mid x} \in \Pi(P_{1 \mid x}, P_{0 \mid x})\right\}$. The bound defined through equations \eqref{Defn: theta_x^L, theta_x^H formal definition} and \eqref{Defn: theta^L, theta^H formal definition} is found while enforcing the additional constraints that $\pi_{1,0 \mid x} \in \Pi(P_{1 \mid x}, P_{0 \mid x})$ for each $x$. These additional constraints imply $\theta^L(P_1, P_0) \leq \theta^L$, 
and similarly $\theta^H \leq \theta^H(P_1, P_0)$. 

Extreme cases illustrate when covariates are informative. If $X$ is independent of $(Y_1,Y_0)$ conditional on $D_1 > D_0$, then $P_{d \mid x} = P_d$ for each $x$, $\Pi(P_{1 \mid x}, P_{0 \mid x}) = \Pi(P_1, P_0)$, and the inequalities above hold as equalities. On the other hand, if $P_{d \mid x}$ is degenerate for either $d = 1$ or $d = 0$, then there is only one possible coupling of $P_{1 \mid x}$ and $P_{0 \mid x}$. Since $\Pi(P_{1 \mid x}, P_{0 \mid x})$ is a singleton, $\theta_x^L = \theta_x^H$ and $\theta_x = E[c(Y_1,Y_0) \mid D_1 > D_0, X = x]$ is point identified. If this occurs for all $x \in \mathcal{X}$, $\theta$ and $\gamma$ are point identified.

\begin{remark}[Makarov bounds]
	\singlespacing
	\label{Remark: Makarov bounds}
	
	The proof of theorem \ref{Theorem: identification, function of moments} given in the appendix uses properties of optimal transport to argue that under assumptions \ref{Assumption: setting} and \ref{Assumption: cost function} \ref{Assumption: cost function, CDF}, $[\theta^L, \theta^H]$ is the sharp identified set for $P(Y_1 - Y_0 \leq \delta \mid D_1 > D_0)$. Nonetheless, it is interesting to note that the proof shows
	\begin{align*}
		\theta_x^L &= OT_{c_L}(P_{1 \mid x}, P_{0 \mid x}) = \sup_y \{F_{1 \mid x}(y) - F_{0 \mid x}(y - \delta)\} \\
		\theta_x^H &= 1 - OT_{c_H}(P_{1 \mid x}, P_{0 \mid x}) = 1 - \sup_y\{F_{0 \mid x}(y - \delta) - F_{1 \mid x}(y)\} = 1 + \inf_y\{F_{1\mid x}(y) - F_{0 \mid x}(y - \delta)\}
	\end{align*}
	which are the Makarov bounds on $P(Y_1 - Y_0 \leq \delta \mid D_1 > D_0, X = x)$ studied in \cite{fan2010sharp}.
\end{remark}

\begin{remark}[Pointwise vs. uniformly sharp CDF bounds]
	\label{Remark: Pointwise vs. uniformly sharp CDF bounds}
	\singlespacing
	
	Under assumptions \ref{Assumption: setting} and \ref{Assumption: cost function} \ref{Assumption: cost function, CDF}, $[\theta^L, \theta^H]$ is the sharp identified set for $P(Y_1 - Y_0 \leq \delta \mid D_1 > D_0)$ at the \textit{point} $\delta$. Viewing these bounds as functions of $\delta$, $\theta^L(\delta)$ and $\theta^H(\delta)$ are not \textit{uniformly} sharp bounds for the cumulative distribution function $P(Y_1 - Y_0 \leq \delta \mid D_1 > D_0)$, in the sense that not every CDF $F(\cdot)$ satisfying $\theta^L(\delta) \leq F(\delta) \leq \theta^H(\delta)$ for all $\delta$ could be the CDF of $Y_1 - Y_0$. See \cite{firpo2019partial} for a detailed discussion of this point.
\end{remark}

	\section{Estimators}
\label{Section: estimators}

Sample analogues of the expressions identifying $P_{1 \mid x}$, $P_{0 \mid x}$, and $s_x$ in lemma \ref{Lemma: identification, LATE IV marginal distribution identification} provide convenient plug-in estimators of $\gamma^L$ and $\gamma^H$.

The following notation simplifies expressions for the sample analogues. Let $P$ denote the distribution of an observation $(Y, D, Z, X)$, and $f$ be a real-valued function. Use $P(f)$ to mean $E_P[f(Y, D, Z, X)]$. Similarly, let $P_{d \mid x}(f) = E_{P_{d \mid x}}[f(Y_d)] = E[f(Y_d) \mid D_1 > D_0, X = x]$. Let $\mathbb{P}_n$ denote the empirical distribution formed from the sample $\{Y_i, D_i, Z_i, X_i\}_{i=1}^n$, and $\mathbb{P}_n(f) = \frac{1}{n}\sum_{i=1}^n f(Y_i, D_i, Z_i, X_i)$. The following indicator function notation also simplifies expressions:
\begin{gather*}
	\mathbbm{1}_{d,x,z}(D,X,Z) = \mathbbm{1}\{D = d, X = x, Z = z\}, \\
	\mathbbm{1}_{x,z}(X,Z) = \mathbbm{1}\{X = x, Z = z\}, \quad \quad \quad \mathbbm{1}_x(X) = \mathbbm{1}\{X = x\}
\end{gather*}
For example, $P(D = d, X = x, Z = z)$ shortens to $P(\mathbbm{1}_{d,x,z})$, and $\frac{1}{n}\sum_{i=1}^n \mathbbm{1}\{D_i = 1, X_i = x, Z_i = 0\}$ to $\mathbb{P}_n(\mathbbm{1}_{1,x,0})$.

The probabilities $p_{d,x,z} = P(\mathbbm{1}_{d,x,z})$, $p_{x,z} = P(\mathbbm{1}_{x,z})$, and $p_x = P(\mathbbm{1}_x)$ are estimated with empirical analogues: 
\begin{align*}
	&\hat{p}_{d,x,z} = \mathbb{P}_n(\mathbbm{1}_{d,x,z}), &&\hat{p}_{x,z} = \mathbb{P}_n(\mathbbm{1}_{x,z}), &&\hat{p}_x = \mathbb{P}_n(\mathbbm{1}_x)
\end{align*}
In this notation, $s_x = P(X = x \mid D_1 > D_0)$ and its empirical analogue $\hat{s}_x$ are
\begin{align}
	&s_x = \frac{(p_{1,x,1}/p_{x,1} - p_{1,x,0}/p_{x,0})p_x}{\sum_{x'} (p_{1,x',1}/p_{x',1} - p_{1,x',0}/p_{x',0})p_x'}, &&\hat{s}_x = \frac{(\hat{p}_{1,x,1}/\hat{p}_{x,1} - \hat{p}_{1,x,0}/\hat{p}_{x,0})\hat{p}_x}{\sum_{x'} (\hat{p}_{1,x',1}/\hat{p}_{x',1} - \hat{p}_{1,x',0}/\hat{p}_{x',0})\hat{p}_{x'}} \label{Display: T_1, map to conditional distributions s_x, main text}
\end{align}
The maps $P_{d \mid x}$ and their empirical analogues are
\begin{align}
	P_{d \mid x}(f) &= \frac{P(\mathbbm{1}_{d,x,d} \times f)/p_{x,d} - P(\mathbbm{1}_{d,x,1-d} \times f)/p_{x,1-d}}{p_{d,x,d}/p_{x,d} - p_{d,x,1-d}/p_{x,1-d}} \notag \\
	\hat{P}_{d \mid x}(f) &= \frac{\mathbb{P}_n(\mathbbm{1}_{d,x,d} \times f)/\hat{p}_{x,d} - \mathbb{P}_n(\mathbbm{1}_{d,x,1-d} \times f)/\hat{p}_{x,1-d}}{\hat{p}_{d,x,d}/\hat{p}_{x,d} - \hat{p}_{d,x,1-d}/\hat{p}_{x,1-d}} \label{Display: T_1, map to conditional distributions, main text}
\end{align}
Under assumption \ref{Assumption: parameter, function of moments}, $\eta = (\eta_1, \eta_0) = (E_{P_1}[\eta_1(Y_1)], E_{P_0}[\eta_0(Y_0)])$. Each vector $\eta_d \in \mathbb{R}^{K_d}$ has coordinates $\eta_d^{(k)} = \sum_x s_x P_{d \mid x}(\eta_d^{(k)})$. Empirical analogues $\hat{\eta} = (\hat{\eta}_1, \hat{\eta}_0)$ are formed by $\hat{\eta}_d^{(k)} = \sum_x \hat{s}_x \hat{P}_{d \mid x}(\eta_d^{(k)})$. 

Computing $\hat{P}_{d \mid x}(f)$ for a known $f$ is straightforward:
\begin{align*}
	\hat{P}_{d \mid x}(f) &= \frac{\frac{1}{\hat{p}_{x,d}}\frac{1}{n}\sum_{i=1}^n \mathbbm{1}_{d,x,d}(D_i, X_i, Z_i) f(Y_i) - \frac{1}{\hat{p}_{x,1-d}}\frac{1}{n}\sum_{i=1}^n \mathbbm{1}_{d,x,1-d}(D_i,X_i,Z_i)f(Y_i)}{\hat{p}_{d,x,d}/\hat{p}_{x,d} - \hat{p}_{d,x,1-d}/\hat{p}_{x,1-d}} \\
	&= \sum_{i=1}^n \omega_{d,x,i} \times f_i
\end{align*}
where $f_i = f(Y_i)$ and the weights $\omega_{d,x,i}$ can be computed directly from data:
\begin{equation}
	\omega_{d,x,i} = \frac{1}{n} \times \frac{\mathbbm{1}_{d,x,d}(D_i, X_i, Z_i)/\hat{p}_{x,d} - \mathbbm{1}_{d,x,1-d}(D_i, X_i, Z_i)/\hat{p}_{x,1-d}}{\hat{p}_{d,x,d}/\hat{p}_{x,d} - \hat{p}_{d,x,1-d}/\hat{p}_{x,1-d}} \label{Display: weights to compute Pdx(f)}
\end{equation}

Sample analogue estimators of $\gamma^L$ and $\gamma^H$ are based on equations \eqref{Display: thetaL, thetaH when c is continuous}, \eqref{Display: thetaL, thetaH when c is for CDF}, \eqref{Defn: theta_x^L, theta_x^H formal definition}, \eqref{Defn: theta^L, theta^H formal definition}, and \eqref{Defn: gamma^L, gamma^H formal definition}. These expressions involve the optimal transport functional $OT_c(P_{1 \mid x}, P_{0 \mid x})$. The sample analogue of the simplified dual problem discussed in section \ref{Section: optimal transport} is written
\begin{equation}
	OT_c(\hat{P}_{1 \mid x}, \hat{P}_{0 \mid x}) = \sup_{(\varphi,\psi) \in \Phi_c \cap (\mathcal{F}_c \times \mathcal{F}_c^c)} \hat{P}_{1 \mid x}(\varphi) + \hat{P}_{0 \mid x}(\psi) \label{Defn: optimal transport dual formulation, main text estimators}
\end{equation}
Here $\mathcal{F}_c$, $\mathcal{F}_c^c$, and the functions $\theta^L(\cdot)$, $\theta^H(\cdot)$ are defined according to the cost function:
\begin{enumerate}[label=(\roman*)]
	\item When assumption \ref{Assumption: cost function} \ref{Assumption: cost function, smooth costs} holds (the cost function $c(y_1,y_0)$ is Lipschitz continuous and $\mathcal{Y}$ is compact), $\mathcal{F}_c$ and $\mathcal{F}_c^c$ are given by: 
	\begin{align*}
		\mathcal{F}_c &= \left\{\varphi : \mathcal{Y} \rightarrow \mathbb{R} \; ; \;  -\lVert c \rVert_\infty \leq \varphi(y_1) \leq \lVert c \rVert_\infty, \; \lvert \varphi(y_1) - \varphi(y_1') \rvert \leq L \lvert y_1 - y_1'\rvert \right\} \\
		\mathcal{F}_c^c &= \left\{\psi : \mathcal{Y} \rightarrow \mathbb{R} \; ; \; -2\lVert c \rVert_\infty \leq \psi(y_0) \leq 0, \; \lvert \psi(y_0) - \psi(y_0') \rvert \leq L \lvert y_0 - y_0'\rvert \right\}
	\end{align*}
	and $\theta^L(\hat{P}_{1 \mid x}, \hat{P}_{0 \mid x})$, $\theta^H(\hat{P}_{1 \mid x}, \hat{P}_{0 \mid x})$ are analogues of equation \eqref{Display: thetaL, thetaH when c is continuous}:
	\begin{align*}
		&c_L(y_1, y_0) = c(y_1,y_0), &&c_H(y_1, y_0) = -c(y_1,y_0) \\
		&\theta^L(\hat{P}_{1 \mid x}, \hat{P}_{0 \mid x}) = OT_{c_L}(\hat{P}_{1 \mid x}, \hat{P}_{0 \mid x}), &&\theta^H(\hat{P}_{1 \mid x}, \hat{P}_{0 \mid x}) = -OT_{c_H}(\hat{P}_{1 \mid x}, \hat{P}_{0 \mid x}). 
	\end{align*}
	
	\item When assumption \ref{Assumption: cost function} \ref{Assumption: cost function, CDF} holds (the cost function is $c(y_1, y_0) = \mathbbm{1}\{y_1 - y_0 \leq \delta\}$), $\mathcal{F}_c$ and $\mathcal{F}_c^c$  are given by: 
	\begin{align*}
		\mathcal{F}_c &= \left\{\varphi : \mathcal{Y} \rightarrow \mathbb{R} \; ; \; \varphi(y_1) = \mathbbm{1}\{y_1 \in I\} \text{ for some interval } I\right\} \\
		\mathcal{F}_c^c &= \left\{\psi : \mathcal{Y} \rightarrow \mathbb{R} \; ; \; \psi(y_0) = -\mathbbm{1}\{y_0 \in I^c\} \text{ for some  interval } I\right\}
	\end{align*}
	and $\theta^L(\hat{P}_{1 \mid x}, \hat{P}_{0 \mid x})$, $\theta^H(\hat{P}_{1 \mid x}, \hat{P}_{0 \mid x})$ are analogues of equation \eqref{Display: thetaL, thetaH when c is for CDF}:
	\begin{align*}
		&c_L(y_1,y_0) = \mathbbm{1}\{y_1 - y_0 < \delta\}, &&c_H = \mathbbm{1}\{y_1 - y_0 > \delta\} \\
		&\theta^L(\hat{P}_{1 \mid x}, \hat{P}_{0 \mid x}) = OT_{c_L}(\hat{P}_{1 \mid x}, \hat{P}_{0 \mid x}), &&\theta^H(\hat{P}_{1 \mid x}, \hat{P}_{0 \mid x}) = 1 - OT_{c_H}(\hat{P}_{1 \mid x}, \hat{P}_{0 \mid x}) 
	\end{align*}
\end{enumerate}
The sample analogue estimators are given by
\begin{align}
	&\hat{\theta}_x^L = \theta^L(\hat{P}_{1 \mid x}, \hat{P}_{0 \mid x}), &&\hat{\theta}_x^H = \theta^H(\hat{P}_{1 \mid x}, \hat{P}_{0 \mid x}), \label{Defn: theta_x^L, theta_x^H estimators} \\
	&\hat{\theta}^L = \sum_x \hat{s}_x \hat{\theta}_x^L, &&\hat{\theta}^H = \sum_x \hat{s}_x \hat{\theta}_x^H, \label{Defn: theta^L, theta^H estimators} \\
	&\hat{\gamma}^L = g^L(\hat{\theta}^L, \hat{\theta}^H, \hat{\eta}), &&\hat{\gamma}^H = g^H(\hat{\theta}^L, \hat{\theta}^H, \hat{\eta}) \label{Defn: gamma^L, gamma^H estimators}
\end{align}

The optimization problems in $\theta^L(\hat{P}_{1 \mid x}, \hat{P}_{0 \mid x})$ and $\theta^H(\hat{P}_{1 \mid x}, \hat{P}_{0 \mid x})$ are especially straightforward when treatment is exogenous. Recall the claim of equation \eqref{Display: strong duality, with smaller feasible set}: the supremum of $P_{1 \mid x}(\varphi) + P_{0 \mid x}(\psi)$ over the larger set $\Phi_c$ is the same value when restricted to $\Phi_c \cap (\mathcal{F}_c \times \mathcal{F}_c^c)$. The argument behind this claim uses monotonicity of the maps $P_{d \mid x}$. When treatment is exogenous, $\hat{P}_{d \mid x}$ corresponds to a probability distribution and is therefore also monotonic. Thus the claim holds replacing $P_{d \mid x}$ with $\hat{P}_{d \mid x}$, implying the function classes $\mathcal{F}_c$ and $\mathcal{F}_c^c$ can be ignored in computation:
\begin{align}
	OT_c(\hat{P}_{1 \mid x}, \hat{P}_{0 \mid x}) &= \sup_{(\varphi,\psi) \in \Phi_c \cap (\mathcal{F}_c \times \mathcal{F}_c^c)} \hat{P}_{1 \mid x}(\varphi) + \hat{P}_{0 \mid x}(\psi) = \sup_{(\varphi,\psi) \in \Phi_c} \hat{P}_{1 \mid x}(\varphi) + \hat{P}_{0 \mid x}(\psi) \notag \\
	&= \sup_{\{\varphi_i, \psi_j\}_{i,j}} \sum_{i=1}^n \omega_{1, x,i} \varphi_i + \sum_{j=1}^n \omega_{0, x, j} \psi_j \label{Display: computation with exogenous treatment} \\
	&\hspace{1 cm} \text{s.t. } \varphi_i + \psi_j \leq c(Y_i, Y_j) \text{ for all } 1 \leq i, j \leq n \notag
\end{align}
the final problem in this display is a linear programming problem with $2n$ choice variables and $n^2$ constraints, and can be further simplified by removing choice variables (and the corresponding constraints) whose weights $\omega_{d,x,i}$ equal zero. Many weights do equal zero, as only observations with $X_i = x$ correspond to nonzero weights.

When there is noncompliance in the sample, $\hat{P}_{d \mid x}$ does not correspond to a probability distribution. This is easily seen by noting that for observations $i$ where $Z_i$ differs from $D_i$, the weight $\omega_{d,x,i}$ defined in \eqref{Display: weights to compute Pdx(f)} is negative. Nonetheless, it remains computationally tractable to search over $\Phi_c \cap (\mathcal{F}_c \times \mathcal{F}_c^c)$. For example, when the cost function is continuous $OT_c(\hat{P}_{1 \mid x}, \hat{P}_{0 \mid x})$ remains a linear programming problem, with additional linear constraints enforcing $\lvert \varphi_i + \psi_j \rvert \leq L\lvert Y_i - Y_j\rvert$, $-\lVert c \rVert_\infty \leq \varphi_i \leq \lVert c \rVert_\infty$, and $-2\lVert c \rVert_\infty \leq \psi_j \leq 0$.

\subsection{Asymptotic analysis}

The estimators proposed above are especially attractive because they are a (Hadamard directionally) differentiable map of the empirical distribution. Specifically, there exists a collection of functions $\mathcal{F}$ and a map $T : \ell^\infty(\mathcal{F}) \rightarrow \mathbb{R}^2$ described by equations \eqref{Display: T_1, map to conditional distributions s_x, main text}, \eqref{Display: T_1, map to conditional distributions, main text}, \eqref{Defn: theta_x^L, theta_x^H estimators}, \eqref{Defn: theta^L, theta^H estimators}, and \eqref{Defn: gamma^L, gamma^H estimators} such that 
\begin{align*}
	&(\hat{\gamma}^L, \hat{\gamma}^H) = T(\mathbb{P}_n), &&(\gamma^L, \gamma^H) = T(P)
\end{align*}
The set $\mathcal{F}$ consists of the functions in $\mathcal{F}_c$, $\mathcal{F}_c^c$, and the coordinate functions defining $\eta$, multiplied by various indicator functions. It is formally defined in appendix \ref{Appendix: weak convergence}. Under assumption \ref{Assumption: setting}, \ref{Assumption: cost function}, and \ref{Assumption: parameter, function of moments}, $\mathcal{F}$ is a Donsker set and $T(\cdot)$ is continuous at $P$, which implies the esimators are consistent:
\begin{equation}
	(\hat{\gamma}^L, \hat{\gamma}^H) = T(\mathbb{P}_n) \overset{p}{\rightarrow} T(P) = (\gamma^L, \gamma^H) \label{Display: consistency, main text}
\end{equation}

\subsubsection{Weak convergence}
\label{Section: estimators, subsection weak convergence} 

The map $T(\cdot)$ is not only continuous under assumptions \ref{Assumption: setting}, \ref{Assumption: cost function}, and \ref{Assumption: parameter, function of moments}, but Hadamard directionally differentiable. An application of the functional delta method gives the conclusion $\sqrt{n}((\hat{\gamma}^L, \hat{\gamma}^H) - (\gamma^L, \gamma^H))$ converges in distribution, a result stated formally in theorem \ref{Theorem: weak convergence, weak convergence of estimators} below. 

In order to build hypothesis tests or construct confidence intervals based on the asymptotic distribution of $\sqrt{n}((\hat{\gamma}^L, \hat{\gamma}^H) - (\gamma^L, \gamma^H))$, one must be able to estimate the asymptotic distribution. This is possible under assumptions \ref{Assumption: setting}, \ref{Assumption: cost function}, and \ref{Assumption: parameter, function of moments}, but involves a more complex procedure described in section \ref{Section: estimators, subsection inference, subsubsection consistent alternative}. Under an additional assumption, a straightforward bootstrap will do.

For each instance of the restricted dual problem used in defining $T(\cdot)$, the set of maximizers 
\begin{equation}
	\Psi_c(P_{1 \mid x}, P_{0 \mid x}) = \argmax_{(\varphi,\psi) \in \Phi_c \cap (\mathcal{F}_c \times \mathcal{F}_c^c)} P_{1 \mid x}(\varphi) + P_{0 \mid x}(\psi)
\end{equation}
is nonempty. If the solutions are suitably unique for each instance, the map $T(\cdot)$ is fully Hadamard differentiable at $P$ and a straightforward bootstrap will consistently estimate the asymptotic distribution. 

Assumption \ref{Assumption: full differentiability} states this high-level uniqueness condition, while the following lemma \ref{Lemma: weak convergence, simple sufficient conditions for full differentiability} gives low-level sufficient conditions for it to hold. Let $\mathcal{Y}_{d,x}$ be the support of $Y$ conditional on $D = d$ and $X = x$, and $\mathbbm{1}_{\mathcal{Y}_{d,x}}(y) = \mathbbm{1}\{y \in \mathcal{Y}_{d,x}\}$ be the indicator function for this set.
\begin{restatable}[]{assumption}{assumptionFullDifferentiability}
	\label{Assumption: full differentiability}
	\singlespacing
	
	For each $x \in \mathcal{X}$, each $c \in \{c_L, c_H\}$, and any $(\varphi_1,\psi_1), (\varphi_2,\psi_2) \in \Psi_c(P_{1 \mid x},P_{0 \mid x})$, there exists $s \in \mathbb{R}$ such that 
	\begin{align*}
		&\mathbbm{1}_{\mathcal{Y}_{1,x}} \times \varphi_1 = \mathbbm{1}_{\mathcal{Y}_{1,x}} \times (\varphi_2 + s), \; P\text{-a.s.} &&\text{ and } &&\mathbbm{1}_{\mathcal{Y}_{0,x}} \times \psi_1 = \mathbbm{1}_{\mathcal{Y}_{0,x}} \times (\psi_2 - s), \; P\text{-a.s.}
	\end{align*}
	
\end{restatable}

\begin{restatable}[]{lemma}{lemmaSimpleConditionsForFullDifferentiability}
	\label{Lemma: weak convergence, simple sufficient conditions for full differentiability}
	\singlespacing
	
	Suppose that
	\begin{enumerate}[label=(\roman*)]
		\item assumption \ref{Assumption: cost function} \ref{Assumption: cost function, smooth costs} holds, with cost function $c(y_1,y_0)$ that is continuously differentiable, and \label{Assumption: full differentiability lemma, cost is continuously differentiable}
		\item for each $(d,x)$, the support of $P_{d \mid x}$ is $\mathcal{Y}_{d, x}$, which is a bounded interval. \label{Assumption: full differentiability lemma, support of compliers}
	\end{enumerate}
	Then assumption \ref{Assumption: full differentiability} holds.
\end{restatable}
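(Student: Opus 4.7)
My plan is to show any two optimizers of the restricted dual problem must differ by an additive constant on the respective supports, by combining complementary slackness with the envelope identity afforded by the $C^1$ cost. Fix $x \in \mathcal{X}$ and $c \in \{c_L, c_H\}$, and take $(\varphi_1, \psi_1), (\varphi_2, \psi_2) \in \Psi_c(P_{1 \mid x}, P_{0 \mid x})$. Since $c \in C^1$ on the compact set $\mathcal{Y}^2$, it is Lipschitz and bounded, so strong duality holds and delivers optimal primal couplings $\pi_i^\ast \in \Pi(P_{1 \mid x}, P_{0 \mid x})$ attaining $OT_c(P_{1 \mid x}, P_{0 \mid x})$. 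By linearity of the primal objective and convexity of $\Pi$, $\bar\pi := \tfrac{1}{2}(\pi_1^\ast + \pi_2^\ast)$ is also primal optimal, and complementary slackness applied against this common $\bar\pi$ gives, for $\bar\pi$-a.e. $(y_1, y_0)$,
\[
\varphi_i(y_1) + \psi_i(y_0) = c(y_1, y_0), \quad i = 1, 2.
\]
Subtracting yields $(\varphi_1 - \varphi_2)(y_1) = (\psi_2 - \psi_1)(y_0)$ on $\text{supp}(\bar\pi)$.

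Next I would exploit $C^1$-regularity to extract first-order conditions. Each $\varphi_i \in \mathcal{F}_c$ is $L$-Lipschitz, hence Lebesgue-a.e.\ differentiable on $\mathcal{Y}_{1,x}$. For any $(y_1, y_0) \in \text{supp}(\bar\pi)$ with $y_1$ in the interior of $\mathcal{Y}_{1,x}$, the nonnegative function $y_1' \mapsto c(y_1', y_0) - \psi_i(y_0) - \varphi_i(y_1')$ attains its minimum value of zero at $y_1' = y_1$; at points where $\varphi_i$ is differentiable, the envelope identity $\varphi_i'(y_1) = \partial_{y_1} c(y_1, y_0)$ follows. Because $\bar\pi$ has first marginal $P_{1 \mid x}$, for $P_{1 \mid x}$-a.e.\ $y_1$ there is some contact $y_0$ common to both pairs, and applying the envelope identity for both $i = 1, 2$ at that shared $y_0$ delivers $\varphi_1'(y_1) = \varphi_2'(y_1)$. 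Since $\varphi_1 - \varphi_2$ is Lipschitz (hence absolutely continuous) on the bounded interval $\mathcal{Y}_{1,x}$, the fundamental theorem of calculus then yields $\varphi_1 - \varphi_2 \equiv s$ on $\mathcal{Y}_{1,x}$ for some constant $s$. Feeding this back into the subtracted complementary slackness identity gives $\psi_2 - \psi_1 = s$ for $\bar\pi$-a.e.\ $y_0$; continuity of $\psi_1, \psi_2$ and density of $\text{supp}(P_{0 \mid x}) = \mathcal{Y}_{0,x}$ extend this to $\psi_1 = \psi_2 - s$ on all of $\mathcal{Y}_{0,x}$.

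The main obstacle will be upgrading the $P_{1 \mid x}$-a.e.\ equality of $\varphi_1'$ and $\varphi_2'$ to Lebesgue-a.e.\ equality on $\mathcal{Y}_{1,x}$, which is what the integration step genuinely requires since $P_{1 \mid x}$ need not be absolutely continuous with respect to Lebesgue measure. I expect to close this gap using $c$-concavity. After replacing each $\varphi_i$ with its $c$-conjugate representative $\psi_i^c$ (which agrees with $\varphi_i$ at $P_{1 \mid x}$-a.e.\ $y_1$ by optimality and preserves membership in $\mathcal{F}_c$ up to a normalization), the $c$-superdifferential $\partial^c \varphi_i(y_1) := \{y_0 : \varphi_i(y_1) + \psi_i(y_0) = c(y_1, y_0)\}$ is a well-defined, nonempty set at every interior $y_1 \in \mathcal{Y}_{1,x}$, not merely $P_{1 \mid x}$-a.s. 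A $c$-cyclical monotonicity argument applied to $\text{supp}(\bar\pi)$, together with $c \in C^1$ and the interval structure of $\mathcal{Y}_{1,x}$, should then yield that $\partial^c \varphi_1(y_1) \cap \partial^c \varphi_2(y_1) \neq \emptyset$ at Lebesgue-a.e.\ interior $y_1$. The envelope identity then gives derivative agreement Lebesgue-a.e., and the integration argument goes through as described.
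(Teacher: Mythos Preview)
Your overall strategy---complementary slackness plus the envelope identity $\varphi_i'(y_1) = \partial_{y_1}c(y_1,y_0)$ at contact points, followed by the fundamental theorem of calculus on the Lipschitz difference $\varphi_1-\varphi_2$---is exactly the paper's approach (via its Lemma on uniqueness of Kantorovich potentials). Two remarks on execution. First, the averaging of primal optimizers is unnecessary: any optimal $\pi$ satisfies complementary slackness with \emph{every} optimal dual pair, so a single primal optimum already provides the common contact points you need. Second, and more importantly, the obstacle you flag---upgrading $P_{1\mid x}$-a.e.\ derivative agreement to Lebesgue-a.e.---is real, but your proposed detour through $c$-superdifferentials and cyclical monotonicity is both vague (why should $\partial^c\varphi_1(y_1)\cap\partial^c\varphi_2(y_1)\neq\emptyset$ Lebesgue-a.e.?) and unnecessary.

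The paper closes this gap with a clean topological observation you are missing. Because $\varphi_i,\psi_i,c$ are all continuous, the complementary-slackness equality $\varphi_i(y_1)+\psi_i(y_0)=c(y_1,y_0)$ holds not merely $\pi$-a.e.\ but at every point of $\mathrm{supp}(\pi)$. Next, the projection $\mathrm{Pr}_1(\mathrm{supp}(\pi))$ is always dense in $\mathrm{supp}(P_{1\mid x})$, and it is closed because $\mathcal{Y}_{0,x}$ is compact; hence $\mathrm{Pr}_1(\mathrm{supp}(\pi))=\mathrm{supp}(P_{1\mid x})=[y_{1,x}^\ell,y_{1,x}^u]$. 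So for \emph{every} interior $\bar y_1$ there is a $\bar y_0$ with $(\bar y_1,\bar y_0)\in\mathrm{supp}(\pi)$, and both envelope identities evaluate at this same $\bar y_0$ to give $\varphi_1'(\bar y_1)=\partial_{y_1}c(\bar y_1,\bar y_0)=\varphi_2'(\bar y_1)$ at every point where both are differentiable---which is Lebesgue-a.e.\ on the interval. This is what the hypothesis that the support is a bounded interval buys you, and it replaces your entire third paragraph.
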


When treatment is exogenous, condition \ref{Assumption: full differentiability lemma, support of compliers} of lemma \ref{Lemma: weak convergence, simple sufficient conditions for full differentiability} simplifies to the assumption that the distribution of $Y_d \mid X = x$ has bounded support $[y_{d,x}^\ell, y_{d,x}^u]$. In general, this condition requires the support of $Y_d$ for the subpopulation of compliers with covariate value $x$ is a bounded interval that contains the support of the relevant subpopulation of non-compliers. Specifically, the support of $Y_1$ for compliers is a bounded interval containing the support of $Y_1$ for always-takers, and the support of $Y_0$ for compliers is a bounded interval containing the support of $Y_0$ for never-takers.

Assumption \ref{Assumption: full differentiability} can hold even when the conditions of lemma \ref{Lemma: weak convergence, simple sufficient conditions for full differentiability} do not. For example, when interest is in the cumulative distribution function and assumption \ref{Assumption: cost function} \ref{Assumption: cost function, CDF} is satisfied, the dual problem is essentially optimizing over the difference of CDFs (see remark \ref{Remark: Makarov bounds}). Although the cost functions are not continuously differentiable, it is still plausible for this optimization problem to have a unique solution in well-behaved cases. For further discussion of uniqueness of the dual solutions of optimal transport, see \cite{staudt2022uniqueness}. 

The following theorem gives the main weak convergence result.
\begin{restatable}[]{theorem}{theoremWeakConvergenceOfEstimators}
	\label{Theorem: weak convergence, weak convergence of estimators}
	\singlespacing
	
	Suppose assumptions \ref{Assumption: setting}, \ref{Assumption: cost function}, and \ref{Assumption: parameter, function of moments} hold, and let $\mathbb{G}$ be the weak limit of $\sqrt{n}(\mathbb{P}_n - P)$ in $\ell^\infty(\mathcal{F})$. Then $T$ is Hadamard directionally differentiable at $P$ tangentially to the support of $\mathbb{G}$, and
	\begin{align*}
		\sqrt{n}((\hat{\gamma}^L, \hat{\gamma}^H) - (\gamma^L, \gamma^H)) = \sqrt{n}(T(\mathbb{P}_n) - T(P)) \overset{L}{\rightarrow} T_P'(\mathbb{G})
	\end{align*}
	If assumption \ref{Assumption: full differentiability} also holds, then $T_P'$ is linear on the support of $\mathbb{G}$ and $T_P'(\mathbb{G})$ is bivariate normal.
\end{restatable}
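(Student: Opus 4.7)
The plan is to decompose $T$ into a chain of maps, establish Hadamard directional differentiability of each link, and then apply the functional delta method for Hadamard directionally differentiable maps (Shapiro, 1991; D\"umbgen, 1993; Fang and Santos, 2019). First I would verify that $\mathcal{F}$ is $P$-Donsker. Since $\mathcal{F}$ is built from the finitely many indicators $\mathbbm{1}_{d,x,z}$, $\mathbbm{1}_{x,z}$, $\mathbbm{1}_x$, the finitely many coordinate functions of $\eta_1$ and $\eta_0$ multiplied by such indicators, and the classes $\mathcal{F}_c$ and $\mathcal{F}_c^c$, Donskerness follows from standard preservation results: under Assumption \ref{Assumption: cost function}\ref{Assumption: cost function, smooth costs} the class $\mathcal{F}_c$ is a uniformly bounded class of uniformly Lipschitz functions on a compact set (finite bracketing entropy), and under Assumption \ref{Assumption: cost function}\ref{Assumption: cost function, CDF} it is a class of indicators of intervals (VC class). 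Thus $\sqrt{n}(\mathbb{P}_n - P) \rightsquigarrow \mathbb{G}$ in $\ell^\infty(\mathcal{F})$.

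Next I would split $T = T_4 \circ T_3 \circ T_2 \circ T_1$. Here $T_1$ collects the moments of $P$ against functions in $\mathcal{F}$, together with the probabilities $p_{x,z}$, $p_{d,x,z}$, $p_x$. The map $T_2$ forms the linear functionals $P_{d \mid x}(\cdot)$ through equation \eqref{Display: T_1, map to conditional distributions, main text} and the weights $s_x$ through \eqref{Display: T_1, map to conditional distributions s_x, main text}, along with the vector $\eta$; this map is a ratio of linear functionals whose denominators are bounded away from zero by Assumption \ref{Assumption: setting}\ref{Assumption: setting, common support} and \ref{Assumption: setting, existence of compliers}, so it is fully Hadamard differentiable at $P$ with the usual quotient-rule derivative. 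The map $T_3$ applies the optimal transport functional coordinate-by-coordinate to obtain $(\theta_x^L, \theta_x^H)_x$, then aggregates into $(\theta^L, \theta^H)$ via the bilinear form $\sum_x s_x \theta_x^\bullet$; bilinearity combined with boundedness gives Hadamard differentiability of the aggregation step. Finally $T_4 = (g^L, g^H)$, which is continuously differentiable at $(\theta^L, \theta^H, \eta)$ by Assumption \ref{Assumption: parameter, function of moments}\ref{Assumption: parameter, function of moments, sup and inf of g are differentiable} and thus Hadamard differentiable.

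The main obstacle is the differentiability of the optimal transport step. Viewed as a map $\ell^\infty(\mathcal{F}_c) \times \ell^\infty(\mathcal{F}_c^c) \to \mathbb{R}$, the functional in \eqref{Defn: optimal transport dual formulation, main text optimal transport} is the supremum of the linear map $(\varphi,\psi) \mapsto P_{1 \mid x}(\varphi) + P_{0 \mid x}(\psi)$ over the fixed constraint set $\Phi_c \cap (\mathcal{F}_c \times \mathcal{F}_c^c)$, hence convex and Lipschitz in its arguments. Under Assumption \ref{Assumption: cost function} the set $\mathcal{F}_c \times \mathcal{F}_c^c$ is compact in the sup-norm topology on $\mathcal{Y}$ (Arzel\`a--Ascoli in case \ref{Assumption: cost function, smooth costs}; finite-dimensional parameterization by interval endpoints in case \ref{Assumption: cost function, CDF}), so $\Psi_c(P_{1 \mid x}, P_{0 \mid x})$ is nonempty and compact. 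I would then invoke the Hadamard directional differentiability of sup functionals (C\'arcamo, Cuevas and Rodr\'iguez, 2020; see also Fang and Santos, 2019), yielding
\begin{equation*}
OT_c'(h_1, h_0) = \sup_{(\varphi,\psi) \in \Psi_c(P_{1 \mid x}, P_{0 \mid x})} h_1(\varphi) + h_0(\psi),
\end{equation*}
tangentially to the directions in the Donsker class. The chain rule for Hadamard directionally differentiable maps then gives overall differentiability of $T$ at $P$, and the functional delta method delivers $\sqrt{n}(T(\mathbb{P}_n) - T(P)) \overset{L}{\to} T_P'(\mathbb{G})$.

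For the final claim, note that $T_1, T_2, T_4$ are all fully Hadamard differentiable, so the failure of linearity in $T_P'$ can only come from the sup in $OT_c'$ over the set $\Psi_c$. Assumption \ref{Assumption: full differentiability} states that any two maximizers differ by an additive constant $s$ on the supports $\mathcal{Y}_{d,x}$; since $P_{1 \mid x}(1) + P_{0 \mid x}(-1) = 0$, the map $h_1(\varphi) + h_0(\psi)$ evaluated at $\mathbb{G}$ (which is supported on mean-zero linear functionals that only ``see'' the $P$-essential support) takes the same value across all maximizers. Hence the supremum reduces to evaluation at a single element, making $OT_c'$ linear on the support of $\mathbb{G}$ and thus $T_P'$ linear there. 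Because $\mathbb{G}$ is a tight Gaussian process and $T_P'$ maps into $\mathbb{R}^2$, $T_P'(\mathbb{G})$ is bivariate normal. The delicate step will be the compactness / tangential-support argument required to justify the sup-differentiability formula together with verifying that Assumption \ref{Assumption: full differentiability} translates into constancy of $h_1(\varphi) + h_0(\psi)$ over $\Psi_c$ in the relevant tangent cone.
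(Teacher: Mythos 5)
Your overall plan—decompose $T$ into a chain of maps, verify Donskerness of $\mathcal{F}$, use the Hadamard directional differentiability of restricted suprema for the optimal transport step, and close with the functional delta method and the Fang–Santos linearity criterion—matches the paper's strategy. The paper's own decomposition is $T = T_4 \circ T_3 \circ T_2 \circ T_1$, with $T_1$ mapping $P$ to $(\{P_{1\mid x}, P_{0\mid x}, \eta_{1,x}, \eta_{0,x}, s_x\}_x)$, $T_2$ applying the optimal transport functionals coordinate-wise, $T_3$ aggregating over $x$, and $T_4$ applying $(g^L,g^H)$; you group the steps slightly differently but this is cosmetic.

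The genuine gap is in your compactness argument for the dual feasible set. You claim $\mathcal{F}_c \times \mathcal{F}_c^c$ is compact ``in the sup-norm topology on $\mathcal{Y}$'', invoking Arzel\`a--Ascoli for the Lipschitz case and a finite-dimensional parameterization for the indicator case. This is wrong for case \ref{Assumption: cost function, CDF}: the class of interval indicators is \emph{not} compact under the uniform norm, since two distinct nontrivial indicators are always sup-distance one apart (for example, $\lVert \mathbbm{1}_{[0,1/n]} - \mathbbm{1}_{\{0\}} \rVert_\infty = 1$ for all $n$, so the parameterization by endpoints is not $\lVert\cdot\rVert_\infty$-continuous). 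The paper instead equips $\mathcal{F}_c$, $\mathcal{F}_c^c$, and $\Phi_c \cap (\mathcal{F}_c \times \mathcal{F}_c^c)$ with the $L_{2,P}$ semimetric, which is the natural intrinsic semimetric of the Donsker framework: total boundedness is then a by-product of the Donsker property, and completeness is verified separately (by hand, with a careful argument that the $L_{2,P}$-limit of interval indicators is again an interval indicator up to $P$-nullsets using the continuity assumption on the cdfs). Using $L_{2,P}$ is also what makes the tangency work: the paths of the tight Gaussian limit $\mathbb{G}$ are almost surely $L_{2,P}$-continuous but need not be $\lVert\cdot\rVert_\infty$-continuous, so the sup-differentiability lemma must be invoked with respect to the $L_{2,P}$ topology; this point is implicit in your phrase ``tangentially to the directions in the Donsker class'' but you never make it precise, and the precision is exactly where the sup-norm idea breaks down. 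Relatedly, your argument for the final linearity claim under Assumption \ref{Assumption: full differentiability} needs more care: it is not enough that $P_{1\mid x}(1) + P_{0\mid x}(-1) = 0$; one must show that the pushforward $T_{1,P}'(\mathbb{G})$ lands almost surely in a subspace of functionals that are both $L_{2,P}$-continuous \emph{and} linear, annihilate constants, and ignore values outside the supports $\mathcal{Y}_{d,x}$—only then does $H_1(\varphi) + H_0(\psi)$ become constant over $\Psi_c(P_{1\mid x}, P_{0\mid x})$. The paper devotes a separate lemma to pinning down this support condition.
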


\subsection{Inference}
\label{Section: estimators, subsection inference}

To make use of the weak convergence result of theorem \ref{Theorem: weak convergence, weak convergence of estimators} for inference, this section develops methods of estimating the law of $T_P'(\mathbb{G})$ by utilizing the bootstrap. The ``exchangeable bootstrap'' procedures discussed in \cite{vaart1997weak} are computationally convenient for reasons discussed below. These procedures define a new map $\mathbb{P}_n^* \in \ell^\infty(\mathcal{F})$ pointwise with 
\begin{equation}
	\mathbb{P}_n^*(f) = \frac{1}{n}\sum_{i=1}^n W_i f(Y_i, D_i, Z_i, X_i) \label{Defn: exchangeable bootstrap}
\end{equation}
for nonnegative random variables $\{W_i\}_{i=1}^n$ independent of the data $\{Y_i, D_i, Z_i, X_i\}_{i=1}^n$, and satisfying technical conditions omitted here. I focus on two notable examples, the nonparametric bootstrap of \cite{efron1979bootstrap} and the ``Bayesian'' bootstrap of \cite{rubin1981bayesian}. Either bootstrap can be used to estimate the asymptotic distribution. The Bayesian bootstrap may be preferable in small samples for reasons discussed below.
\begin{restatable}[Nonparametric bootstrap]{definition}{definitionNonparametricBootstrap}
	\label{Definition: exchangeable bootstrap, nonparametric bootstrap}
	\singlespacing
	
	Let $(W_1, \ldots, W_n) \sim \text{Multinomial}(n, (1/n, \ldots, 1/n))$ be independent of the data $\{Y_i, D_i, Z_i, X_i\}_{i=1}^n$. Define $\mathbb{P}_n^* \in \ell^\infty(\mathcal{F})$ pointwise with \eqref{Defn: exchangeable bootstrap}.
\end{restatable}

\begin{restatable}[Bayesian bootstrap]{definition}{definitionBayesianBootstrap}
	\label{Definition: exchangeable bootstrap, bayesian bootstrap}
	\singlespacing
	
	Let $\{\xi_i\}_{i=1}^n$ be i.i.d. exponentially distributed random variables with mean $1$, independent of the data $\{Y_i, D_i, Z_i, X_i\}_{i=1}^n$. Set $W_i = \xi_i/(n^{-1}\sum_{i=1}^n \xi_i)$, and define $\mathbb{P}_n^* \in \ell^\infty(\mathcal{F})$ pointwise with \eqref{Defn: exchangeable bootstrap}.
\end{restatable}

The map $\mathbb{P}_n^*$ in \eqref{Defn: exchangeable bootstrap} can be used to compute $(\hat{\gamma}^{L*}, \hat{\gamma}^{H*}) = T(\mathbb{P}_n^*)$ in much the same way that $T(\mathbb{P}_n)$ is computed. Specifically, bootstrap analogues of $\hat{p}_{d,x,z}$, $\hat{p}_{x,z}$, and $\hat{p}_x$ are given by
\begin{align*}
	&\hat{p}_{d,x,z}^* = \frac{1}{n}\sum_{i=1}^n W_i \mathbbm{1}_{d,x,z}(D_i, X_i, Z_i), &&\hat{p}_{x,z}^* = \frac{1}{n}\sum_{i=1}^n W_i \mathbbm{1}_{x,z}(X_i, Z_i), &&\hat{p}_x^* = \frac{1}{n}\sum_{i=1}^n W_i \mathbbm{1}_x(X_i),
\end{align*}
and the bootstrap analogue of $\hat{s}_x$ is
\begin{equation*}
	\hat{s}_x^* = \frac{(\hat{p}_{1,x,1}^*/\hat{p}_{x,1}^* - \hat{p}_{1,x,0}^*/\hat{p}_{x,0}^*)\hat{p}_x^*}{\sum_{x'} (\hat{p}_{1,x',1}^*/\hat{p}_{x',1}^* - \hat{p}_{1,x',0}^*/\hat{p}_{x',0}^*)\hat{p}_{x'}^*}
\end{equation*}
The maps $\hat{P}_{d \mid x}$ have bootstrap analogues 
\begin{equation*}
	\hat{P}_{d \mid x}^*(f) = \frac{\mathbb{P}_n^*(\mathbbm{1}_{d,x,d} \times f)/\hat{p}_{x,d}^* - \mathbb{P}_n^*(\mathbbm{1}_{d,x,1-d} \times f)/\hat{p}_{x,1-d}^*}{\hat{p}_{d,x,d}^*/\hat{p}_{x,d}^* - \hat{p}_{d,x,1-d}^*/\hat{p}_{x,1-d}^*} = \sum_{i=1}^n \omega_{d,x,i}^* f_i
\end{equation*}
where $f_i = f(Y_i)$ and $\omega_{d,x,i}^*$ are bootstrap versions of the weights in \eqref{Display: weights to compute Pdx(f)}:
\begin{equation}
	\omega_{d,x,i}^* = \frac{W_i}{n} \times \frac{\mathbbm{1}_{d,x,d}(D_i, X_i, Z_i)/\hat{p}_{x,d}^* - \mathbbm{1}_{d,x,1-d}(D_i, X_i, Z_i)/\hat{p}_{x,1-d}^*}{\hat{p}_{d,x,d}^*/\hat{p}_{x,d}^* - \hat{p}_{d,x,1-d}^*/\hat{p}_{x,1-d}^*} \label{Display: weights to compute Pdx*(f)}
\end{equation}
Finally, $(\hat{\gamma}^{L*}, \hat{\gamma}^{H*})$ can be computed with
\begin{align}
	&\hat{\theta}_x^{L*} = \theta^L(\hat{P}_{1 \mid x}^*, \hat{P}_{0 \mid x}^*), &&\hat{\theta}_x^{H*} = \theta^H(\hat{P}_{1 \mid x}^*, \hat{P}_{0 \mid x}^*), \label{Defn: theta_x^L, theta_x^H boostrap} \\
	&\hat{\theta}^{L*} = \sum_x \hat{s}_x^* \hat{\theta}_x^{L*}, &&\hat{\theta}^{H*} = \sum_x \hat{s}_x^* \hat{\theta}_x^{H*}, \label{Defn: theta^L, theta^H boostrap} \\
	&\hat{\gamma}^{L*} = g^L(\hat{\theta}^{L*}, \hat{\theta}^{H*}, \hat{\eta}^*), &&\hat{\gamma}^{H*} = g^H(\hat{\theta}^{L*}, \hat{\theta}^{H*}, \hat{\eta}^*) \label{Defn: gamma^L, gamma^H boostrap}
\end{align}

\subsubsection{Simple bootstrap with full differentiability}
\label{Section: estimators, subsection inference, subsubsection simple bootstrap}

Under assumption \ref{Assumption: full differentiability}, estimating the distribution of $T_P'(\mathbb{G})$ is straightforward.

\begin{restatable}[]{theorem}{theoremBootstrapWorksWithFullDifferentiability}
	\label{Theorem: inference, bootstrap, bootstrap works with full differentiability}
	\singlespacing
	
	Suppose assumptions \ref{Assumption: setting}, \ref{Assumption: cost function}, \ref{Assumption: parameter, function of moments}, and \ref{Assumption: full differentiability} hold, and let $\mathbb{P}_n^*$ be given by definition \ref{Definition: exchangeable bootstrap, nonparametric bootstrap} or \ref{Definition: exchangeable bootstrap, bayesian bootstrap}. Then conditional on $\{Y_i, D_i, Z_i, X_i\}_{i=1}^n$, 
	\begin{equation*}
		\sqrt{n}(T(\mathbb{P}_n^*) - T(\mathbb{P}_n)) \overset{L}{\rightarrow} T_P'(\mathbb{G})
	\end{equation*}
	in outer probability. 
\end{restatable}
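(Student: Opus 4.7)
The plan is to verify the two standard ingredients of the functional delta method for the bootstrap: (i) that $\sqrt{n}(\mathbb{P}_n^* - \mathbb{P}_n)$ converges conditionally on the data to the same Brownian bridge $\mathbb{G}$ in $\ell^\infty(\mathcal{F})$ that arises as the unconditional limit of $\sqrt{n}(\mathbb{P}_n - P)$, and (ii) that $T$ is fully Hadamard differentiable at $P$ tangentially to the support of $\mathbb{G}$. With these in hand, the bootstrap delta method (e.g.\ Theorem~3.9.11 of \cite{vaart1997weak}) delivers the claim.

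For ingredient (i), both weighting schemes in Definitions~\ref{Definition: exchangeable bootstrap, nonparametric bootstrap} and~\ref{Definition: exchangeable bootstrap, bayesian bootstrap} satisfy the conditions of the exchangeable bootstrap framework of \cite{vaart1997weak}: the multinomial weights are the canonical example, and the normalized i.i.d.\ exponentials used in the Bayesian bootstrap are exchangeable, mean one, and have a uniformly integrable squared range after normalization. Assumptions~\ref{Assumption: setting},~\ref{Assumption: cost function}, and~\ref{Assumption: parameter, function of moments} imply that $\mathcal{F}$ is $P$-Donsker with a square-integrable envelope, so the conditional multiplier central limit theorem (Theorem~3.6.13 of \cite{vaart1997weak}) yields
\[
\sqrt{n}(\mathbb{P}_n^* - \mathbb{P}_n) \overset{L}{\to} \mathbb{G} \text{ in } \ell^\infty(\mathcal{F})
\]
conditionally on the data in outer probability.

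Ingredient (ii) is where Assumption~\ref{Assumption: full differentiability} enters. Theorem~\ref{Theorem: weak convergence, weak convergence of estimators} already establishes Hadamard directional differentiability of $T$ at $P$ tangentially to the support of $\mathbb{G}$, together with the fact that under Assumption~\ref{Assumption: full differentiability} the directional derivative $T_P'$ is \emph{linear} on this support. Positive homogeneity and continuity of $T_P'$ are automatic from the directional derivative, so adding linearity promotes directional to full Hadamard differentiability tangentially to the support of $\mathbb{G}$, which is exactly what the bootstrap delta method requires. Combining (i) and (ii) then gives
\[
\sqrt{n}\bigl(T(\mathbb{P}_n^*) - T(\mathbb{P}_n)\bigr) \overset{L}{\to} T_P'(\mathbb{G})
\]
conditionally on the data in outer probability.

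The hard step is confirming that $T_P'$ really is linear on the relevant tangent space, which is what powers the whole argument. By the chain rule, any non-linearity of $T_P'$ can only arise from the restricted dual maps $OT_c(P_{1\mid x},P_{0\mid x})$, whose directional derivatives are sublinear envelopes of the form $(h_1,h_0) \mapsto \sup_{(\varphi,\psi) \in \Psi_c(P_{1\mid x},P_{0\mid x})} h_1(\varphi) + h_0(\psi)$. This envelope collapses to a single linear functional precisely when optimal dual pairs coincide on the supports $\mathcal{Y}_{d,x}$ up to the additive shift that is annihilated by the mean-zero increments $h_d$ lying in the support of $\mathbb{G}$; this is exactly the content of Assumption~\ref{Assumption: full differentiability}. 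The subsequent weighted sums in \eqref{Defn: theta^L, theta^H formal definition} and composition with the smooth map $(g^L, g^H)$ in \eqref{Defn: gamma^L, gamma^H formal definition} preserve linearity, completing the argument.
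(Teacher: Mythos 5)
Your proposal is correct and takes essentially the same approach as the paper: verify exchangeable bootstrap conditions for conditional weak convergence of $\sqrt{n}(\mathbb{P}_n^* - \mathbb{P}_n)$, establish full Hadamard differentiability of $T$ at $P$ from the linearity of $T_P'$ supplied by Assumption~\ref{Assumption: full differentiability}, and conclude via the bootstrap delta method. The only cosmetic difference is that you cite van der Vaart and Wellner's delta method directly whereas the paper routes the argument through \cite{fang2019inference} Theorem~3.1 (and its if-and-only-if characterization, of which only the same direction is used), and the paper additionally records explicitly that the support of the Gaussian $\mathbb{G}$ is a vector subspace so that linearity of $T_P'$ on it is meaningful — a point you use implicitly.
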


It is worth emphasizing the computationally convenience of the bootstrap $\mathbb{P}_n^*$ given in \eqref{Defn: exchangeable bootstrap} when treatment is exogenous. The weights given in display \eqref{Display: weights to compute Pdx*(f)} simplify to
\begin{equation}
	\omega_{d,x,i}^* = \frac{W_i}{n} \times \frac{\mathbbm{1}\{D_i = d, X_i = x\}}{\hat{p}_{x,d}^*} \label{Display: Pdx for bootstrap written as weighted sum}
\end{equation}
As these weights are nonnegative and sum to one, $\hat{P}_{d \mid x}^*$ is a probability distribution. Accordingly, $\theta^L(\hat{P}_{1 \mid x}^*, \hat{P}_{0 \mid x}^*)$ and $\theta^H(\hat{P}_{1 \mid x}^*, \hat{P}_{0 \mid x}^*)$ can be computed ignoring the function classes $\mathcal{F}_c$ and $\mathcal{F}_c^c$ for the same reasons discussed around display \eqref{Display: computation with exogenous treatment}:
\begin{align*}
	OT_c(\hat{P}_{1 \mid x}^*, \hat{P}_{0 \mid x}^*) &= \sup_{(\varphi,\psi) \in \Phi_c \cap (\mathcal{F}_c \times \mathcal{F}_c^c)} \hat{P}_{1 \mid x}^*(\varphi) + \hat{P}_{0 \mid x}^*(\psi) = \sup_{(\varphi,\psi) \in \Phi_c} \hat{P}_{1 \mid x}^*(\varphi) + \hat{P}_{0 \mid x}^*(\psi) \notag \\
	&= \sup_{\{\varphi_i, \psi_j\}_{i,j}} \sum_{i=1}^n \omega_{1, x,i}^* \varphi_i + \sum_{j=1}^n \omega_{0, x, j}^* \psi_j \\ 
	&\hspace{1 cm} \text{s.t. } \varphi_i + \psi_j \leq c(Y_i, Y_j) \text{ for all } 1 \leq i, j \leq n \notag
\end{align*}
 
A researcher utilizing the nonparametric bootstrap runs the risk of a boostrap draw including no observations with $\mathbbm{1}\{D_i = d, X_i = x\}$. As $\hat{p}_{x,d}^* = \frac{1}{n}\sum_{i=1}^n W_i \mathbbm{1}\{D_i = d, X_i = x\}$, this would result in the formula in \eqref{Display: Pdx for bootstrap written as weighted sum} attempting to divide by zero. This problem cannot arise when using the Bayesian bootstrap suggested in \ref{Definition: exchangeable bootstrap, bayesian bootstrap}; in this procedure $W_i > 0$ for each $i$, and thus $\hat{p}_{x,d}^* = \frac{1}{n}\sum_{i=1}^n W_i \mathbbm{1}\{D_i = d, X_i = x\} > 0$ as long as $\hat{p}_{d,x} > 0$.

\subsubsection{Alternative for directional differentiability}
\label{Section: estimators, subsection inference, subsubsection consistent alternative}

The solutions to optimal transport may not be unique as assumption \ref{Assumption: full differentiability} requires. As emphasized in the statement of theorem \ref{Theorem: weak convergence, weak convergence of estimators}, assumption \ref{Assumption: full differentiability} is not needed to obtain the asymptotic distribution of the estimators. However, without assumption \ref{Assumption: full differentiability} the procedure suggested by lemma \ref{Theorem: inference, bootstrap, bootstrap works with full differentiability} may not consistently estimate that limiting distribution. When in doubt, researchers can make use of an alternative procedure based on the results of \cite{fang2019inference} and described below. 

Additional notation is needed to describe this alternative. Let $\eta_{d, x}^{(k)} = P_{d \mid x}(\eta_d^{(k)})$, and let $T_1(\cdot)$ denote the  ``first stage'' function computing $P_{1 \mid x}$, $P_{0 \mid x}$, $\eta_{1,x}$, $\eta_{0, x}$, and $s_x$ for each $x$:
\begin{align*}
	T_1(P) = \left(\left\{P_{1 \mid x}, P_{0 \mid x}, \eta_{1,x}, \eta_{0, x}, s_x\right\}_{x \in \mathcal{X}}\right)
\end{align*} 
Here $\{a_x\}_{x \in \mathcal{X}} = (a_{x_1}, \ldots, a_{x_M})$. Let $\{\kappa_n\}_{n=1}^\infty$ be a sequence in $\mathbb{R}$ satisfying $\kappa_n \uparrow \infty$ and $\kappa_n / \sqrt{n} \rightarrow 0$. Define the set of empirical approximate maximizers:
\begin{equation*}
	\widehat{\Psi}_{c, x} = \left\{(\varphi,\psi) \in \Phi_c \cap (\mathcal{F}_c \times \mathcal{F}_c^c) \; ; \; OT_c(\hat{P}_{1 \mid x}, \hat{P}_{0 \mid x}) \leq \hat{P}_{1 \mid x}(\varphi) + \hat{P}_{0 \mid x}(\psi) + \frac{\kappa_n}{\sqrt{n}} \right\}
\end{equation*}
and the maps
\begin{equation*}
	\widehat{OT}_{c, x}'(H_1, H_0) = \sup_{(\varphi,\psi) \in \widehat{\Psi}_{c,x}} H_1(\varphi) + H_0(\psi),
\end{equation*}
and
\begin{align*}
	&\widehat{T}_{2,T_1(P)}'\left(\{H_{1,x}, H_{0,x}, h_{\eta_1, x}, h_{\eta_0, x}, h_{s, x}\}_{x \in \mathcal{X}}\right) \\
	&\hspace{1 cm} = \left(\left\{\widehat{OT}_{c_L, x}'(H_{1,x}, H_{0,x}), -\widehat{OT}_{c_H, x}'(H_{1,x}, H_{0,x}), h_{\eta_1, x}, h_{\eta_0, x}, h_{s, x}\right\}_{x \in \mathcal{X}}\right)
\end{align*}

The alternative procedure uses the conditional law of 
\begin{equation*}
	\hat{D}_4 \hat{D}_3 \widehat{T}_{2, T_1(P)}'\left(\sqrt{n}(T_1(\mathbb{P}_n^*) - T_1(\mathbb{P}_n))\right)
\end{equation*}
given the data, where $\hat{D}_4$ and $\hat{D}_3$ are matrices given by
\begin{align*}
	&\hat{D}_3 = \underset{(2+d_\eta) \times M(3 + d_\eta)}{
		\begin{bmatrix}
			\hat{D}_{3, x_1} & \hat{D}_{s, x_2} & \ldots & \hat{D}_{s, x_M}
	\end{bmatrix}},
	&&\hat{D}_{3,x} = \underset{(2+d_\eta) \times (3 + d_\eta)}{
		\begin{bmatrix}
			\hat{s}_x & 0 & 0 & 0 & \hat{\theta}_x^L \\
			0 & \hat{s}_x & 0 & 0 & \hat{\theta}_x^H \\
			0 & 0 & \hat{s}_x I_{K_1} & 0 & \hat{\eta}_{1,x} \\
			0 & 0 & 0  & \hat{s}_x I_{K_0}& \hat{\eta}_{0,x} \\
	\end{bmatrix}}, \\
	&D_4 = 
	\underset{2 \times (2+d_\eta)}{
		\begin{bmatrix}
			\nabla g^L(\hat{\theta}^L, \hat{\theta}^H, \hat{\eta})^\intercal \\
			\nabla g^H(\hat{\theta}^L, \hat{\theta}^H, \hat{\eta})^\intercal \\
	\end{bmatrix}}, 
\end{align*}

\begin{restatable}[]{theorem}{theoremFangAndSantosAlternativeWorks}
	\label{Theorem: inference, bootstrap, Fang and Santos alternative works}
	\singlespacing
	
	Suppose assumptions \ref{Assumption: setting}, \ref{Assumption: cost function}, and \ref{Assumption: parameter, function of moments} hold, let $\mathbb{P}_n^*$ be given by definition \ref{Definition: exchangeable bootstrap, nonparametric bootstrap} or \ref{Definition: exchangeable bootstrap, bayesian bootstrap}, and $\{\kappa_n\}_{n=1}^\infty \subseteq \mathbb{R}$ satisfy $\kappa_n \rightarrow \infty$ and $\kappa_n / \sqrt{n} \rightarrow 0$. Then conditional on $\{Y_i, D_i, Z_i, X_i\}_{i=1}^n$, 
	\begin{equation*}
		\hat{D}_4 \hat{D}_3 \widehat{T}_{2,T_1(P)}(\sqrt{n}(T_1(\mathbb{P}_n^*) - T_1(\mathbb{P}_n))) \overset{L}{\rightarrow} T_P'(\mathbb{G})
	\end{equation*}
	in outer probability.
\end{restatable}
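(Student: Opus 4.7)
The plan is to invoke the framework of \cite{fang2019inference}, whose Theorem 3.2 shows that for a Hadamard directionally differentiable $\phi$, if $\hat{\phi}'_n$ is a consistent estimator of the derivative $\phi'_{\theta_0}$ and $\sqrt{n}(\hat{\theta}^*_n - \hat{\theta}_n)$ is a valid bootstrap for $\sqrt{n}(\hat{\theta}_n - \theta_0)$, then the conditional law of $\hat{\phi}'_n(\sqrt{n}(\hat{\theta}^*_n - \hat{\theta}_n))$ consistently estimates the law of $\phi'_{\theta_0}(\mathbb{G})$. I decompose $T = T_3 \circ T_2 \circ T_1$, where $T_1$ maps $P$ to the tuple $(P_{1 \mid x}, P_{0 \mid x}, \eta_{1,x}, \eta_{0,x}, s_x)_{x \in \mathcal{X}}$ using \eqref{Display: T_1, map to conditional distributions s_x, main text} and \eqref{Display: T_1, map to conditional distributions, main text}, $T_2$ replaces $P_{1 \mid x}, P_{0 \mid x}$ with $(\theta^L_x, \theta^H_x)$ via the optimal transport functionals, and $T_3$ aggregates across $x$ with $s_x$-weights and applies $(g^L, g^H)$. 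By the chain rule for Hadamard directional differentiability, $T'_P = D_4 D_3 T'_{2, T_1(P)} \circ T'_{1, P}$, where $T'_{1,P}$ is linear, $D_3$ is the chain-rule matrix in the statement of the theorem, and $D_4$ is the Jacobian of $(g^L, g^H)$; the linear $T'_{1,P}$ is absorbed into the bootstrap increment $\sqrt{n}(T_1(\mathbb{P}_n^*) - T_1(\mathbb{P}_n))$ by the bootstrap functional delta method applied to the smooth $T_1$.

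The nontrivial coordinates of $T'_{2, T_1(P)}$ are $(H_{1,x}, H_{0,x}) \mapsto OT'_{c_L, x}(H_{1,x}, H_{0,x})$ and $-OT'_{c_H, x}(H_{1,x}, H_{0,x})$, where $OT'_{c,x}(H_1, H_0) = \sup_{(\varphi, \psi) \in \Psi_c(P_{1 \mid x}, P_{0 \mid x})} H_1(\varphi) + H_0(\psi)$; these derivatives will already have been identified in the proof of Theorem \ref{Theorem: weak convergence, weak convergence of estimators}. The key step is to show that $\widehat{OT}'_{c,x}$ is a Fang--Santos-consistent estimator of $OT'_{c,x}$, that is, $\widehat{OT}'_{c,x}(H_n) \to OT'_{c,x}(H)$ in outer probability whenever $H_n \to H$ in $\ell^\infty(\mathcal{F}_c) \times \ell^\infty(\mathcal{F}_c^c)$. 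I would prove this by establishing Hausdorff convergence of the empirical near-argmax set $\widehat{\Psi}_{c,x}$ to $\Psi_c(P_{1 \mid x}, P_{0 \mid x})$ in the supremum norm on $\mathcal{F}_c \times \mathcal{F}_c^c$, and then invoking uniform continuity of $(H_1, H_0, \varphi, \psi) \mapsto H_1(\varphi) + H_0(\psi)$ on uniformly bounded sets.

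The main obstacle is this Hausdorff set convergence, where the tuning sequence $\kappa_n$ plays a familiar two-sided role. For the lower inclusion (every element of $\Psi_c(P_{1 \mid x}, P_{0 \mid x})$ lies in $\widehat{\Psi}_{c,x}$ with probability tending to one), the slack $\kappa_n / \sqrt{n}$ must dominate the $O_P(1/\sqrt{n})$ fluctuation of $\sup_{\varphi \in \mathcal{F}_c} |\hat{P}_{1 \mid x}(\varphi) - P_{1 \mid x}(\varphi)|$ and the analogue for $\psi$; this uniform control is available because $\mathcal{F}$ is a Donsker class (established in the appendix), and $\kappa_n \to \infty$ supplies the required slack. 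For the upper inclusion (no pair far from $\Psi_c(P_{1 \mid x}, P_{0 \mid x})$ remains in $\widehat{\Psi}_{c,x}$), $\kappa_n/\sqrt{n} \to 0$ combined with upper hemicontinuity of the argmax correspondence over the totally bounded set $\mathcal{F}_c \times \mathcal{F}_c^c$ suffices, total boundedness in sup norm following from Arzel\`a--Ascoli under Assumption \ref{Assumption: cost function} \ref{Assumption: cost function, smooth costs} and from a finite-dimensional interval-endpoint parametrization under Assumption \ref{Assumption: cost function} \ref{Assumption: cost function, CDF}.

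Once Hausdorff set convergence is in hand, the remaining ingredients are standard: $\hat{D}_3 \to D_3$ and $\hat{D}_4 \to D_4$ in outer probability by the continuous mapping theorem, using consistency of $(\hat{s}_x, \hat{\theta}_x^L, \hat{\theta}_x^H, \hat{\eta})$ and Assumption \ref{Assumption: parameter, function of moments} \ref{Assumption: parameter, function of moments, sup and inf of g are differentiable}; and conditional weak convergence of $\sqrt{n}(T_1(\mathbb{P}_n^*) - T_1(\mathbb{P}_n))$ to the same limit as $\sqrt{n}(T_1(\mathbb{P}_n) - T_1(P))$ follows from exchangeable bootstrap consistency for the Donsker class $\mathcal{F}$ combined with the bootstrap functional delta method applied to the smooth $T_1$. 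Assembling these ingredients through Theorem 3.2 of \cite{fang2019inference} delivers the claimed conditional convergence in outer probability.
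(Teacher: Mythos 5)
Your high-level architecture matches the paper's: decompose $T$ with $T_1$ as the smooth, fully differentiable first stage, bootstrap $T_1(\mathbb{P}_n^*)$, and invoke Fang--Santos Theorem 3.2 with $T_{-1} = T_4 \circ T_3 \circ T_2$ as the directionally differentiable outer map; establish that the plug-in derivative estimator satisfies the Fang--Santos consistency requirement via a two-sided set-inclusion argument controlled by $\kappa_n$; and handle $\hat{D}_3$, $\hat{D}_4$ by the continuous mapping theorem. The two-sided argument you sketch (lower inclusion from $\kappa_n \to \infty$ dominating the $O_p(1/\sqrt{n})$ Donsker fluctuation of $\lVert \hat{P}_{d\mid x} - P_{d \mid x}\rVert_{\mathcal{F}_{d,x}}$; upper inclusion from $\kappa_n/\sqrt{n} \to 0$ plus compactness and continuity of the objective on $\Phi_c \cap (\mathcal{F}_c \times \mathcal{F}_c^c)$) is exactly the paper's supporting lemma establishing consistency of $\widehat{OT}'_{c,x}$.

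There is a genuine gap, however, in your compactness claim for the indicator-cost case. You assert total boundedness of $\mathcal{F}_c \times \mathcal{F}_c^c$ in the \emph{supremum norm}, citing Arzel\`a--Ascoli under Assumption \ref{Assumption: cost function} \ref{Assumption: cost function, smooth costs} and ``a finite-dimensional interval-endpoint parametrization'' under Assumption \ref{Assumption: cost function} \ref{Assumption: cost function, CDF}. Arzel\`a--Ascoli works for the Lipschitz case, but the finite-dimensional parametrization gives nothing in sup norm for indicators: $\mathbbm{1}\{y\in I\}$ and $\mathbbm{1}\{y\in I'\}$ are at sup-distance $1$ whenever $I \neq I'$, so the class has no finite $\varepsilon$-net for $\varepsilon < 1$ and the map from endpoints to indicator functions is discontinuous in sup norm. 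The paper's argument instead equips $\mathcal{F}_c \times \mathcal{F}_c^c$ with the $L_{2,P}$ semimetric, under which the indicator class \emph{is} totally bounded (Donsker plus $\sup_f \lvert P(f)\rvert < \infty$) and complete (established in separate lemmas for each cost case), hence compact. This is also the topology in which $P_{1 \mid x} + P_{0 \mid x}$ and the direction $H_1 + H_0$ are continuous -- the limit $\mathbb{G}$ lives in $\mathcal{C}(\mathcal{F}, L_{2,P})$ -- which is precisely what you need for the uniform-continuity step at the close of your Hausdorff argument. Replace the sup-norm total boundedness with $L_{2,P}$-compactness (and verify closedness, which is nontrivial for the indicator class and handled separately in the paper) and the rest of your outline goes through.
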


\subsubsection{Confidence sets}
\label{Section: estimators, subsection confidence sets}

Theorems \ref{Theorem: inference, bootstrap, bootstrap works with full differentiability} and \ref{Theorem: inference, bootstrap, Fang and Santos alternative works} make it straightforward to conduct inference. For example, a simple confidence set for the identified set $[\gamma^L, \gamma^H]$ is given by
\begin{equation*}
	\left[\hat{\gamma}^L - \hat{c}_{1-\alpha}/\sqrt{n}, \hat{\gamma}^H + \hat{c}_{1-\alpha}/\sqrt{n}\right]
\end{equation*}
where $\hat{c}_{1-\alpha}$ is a consistent estimator of the $1-\alpha$ quantile of $\max\{T_P'(\mathbb{G})^{(1)}, -T_P'(\mathbb{G})^{(2)}\}$. When assumptions \ref{Assumption: setting} through \ref{Assumption: full differentiability} hold, let $(\hat{\gamma}^{L*}, \hat{\gamma}^{H*}) = T(\mathbb{P}_n^*)$. When assumptions \ref{Assumption: setting} through \ref{Assumption: parameter, function of moments} hold but assumption \ref{Assumption: full differentiability} is doubtful, let $(\hat{\gamma}^{L*}, \hat{\gamma}^{H*}) = (\hat{\gamma}^L, \hat{\gamma}^H) + \frac{1}{\sqrt{n}}\hat{D}_4 \hat{D}_3 \widehat{T}_{2,T_1(P)}(\sqrt{n}(T_1(\mathbb{P}_n^*) - T_1(\mathbb{P}_n)))$. In either case, compute
\begin{equation*}
	\hat{c}_{1-\alpha} = \inf\left\{c \; ; \; P\left(\max\left\{\sqrt{n}(\hat{\gamma}^{L*} - \hat{\gamma}^L), -\sqrt{n}(\hat{\gamma}^{H*} - \hat{\gamma}^H)\right\} \leq c \mid \{Y_i, D_i, Z_i, X_i\}_{i=1}^n\right) \geq 1-\alpha\right\}
\end{equation*}
through simulation:
\begin{enumerate}
	\item Compute $(\hat{\gamma}^L, \hat{\gamma}^H) = T(\mathbb{P}_n)$ and, if necessary, $\hat{D}_4$, and $\hat{D}_3$. 
	\item Generate $N$ boostrap samples, $\{W_{i,b}\}_{i=1}^n$ for each $b = 1, \ldots, N$ according to definition \ref{Definition: exchangeable bootstrap, nonparametric bootstrap} or \ref{Definition: exchangeable bootstrap, bayesian bootstrap}. For each bootstrap sample $b$, compute $(\hat{\gamma}_b^{L*}, \hat{\gamma}_b^{H*})$ as described above.
	\item Let $\hat{c}_{1-\alpha}$ be the $1-\alpha$ quantile of $\{\max\{\sqrt{n}(\hat{\gamma}_b^{L*} - \hat{\gamma}^L), -\sqrt{n}(\hat{\gamma}_b^{H*} - \hat{\gamma}^H)\}_{b=1}^N$.
\end{enumerate}
Under the further assumption that the cumulative distribution function of $\max\{T_P'(\mathbb{G})^{(1)}, -T_P'(\mathbb{G})^{(2)}\}$ is continuous and strictly increasing at its $1-\alpha$ quantile, 
\begin{equation*}
	\lim_{n \rightarrow \infty} P\left([\gamma^L, \gamma^H] \subseteq \left[\hat{\gamma}^L - \hat{c}_{1-\alpha}/\sqrt{n}, \hat{\gamma}^H + \hat{c}_{1-\alpha}/\sqrt{n}\right] \right) = 1-\alpha
\end{equation*}
Confidence sets for the parameter could be constructed following \cite{imbens2004confidence}.
	
	\section{Application: job training experiment}
\label{Section: application}

In this section I demonstrate the estimators in revisiting the famous National Supported Work Demonstration program (\cite{lalonde1986evaluating}). This program was implemented in the 1970s with the aim of helping socially and economically disadvantaged workers obtain job skills. Those randomly selected into the program were guaranteed a job lasting six to eighteen months, and frequently met with a counselor to discuss performance. 

I make use of the ``LaLonde'' sample studied in \cite{diamond2013genetic}. This sample consists of male participants and includes 297 treated and 425 control observations. The outcome of interest is real earnings in 1978. Observed covariates include age, years of education, real earnings in months 13 to 24 prior to randomization, and indicators for whether a participant is a high school dropout, black, hispanic, or married. Averages and standard deviations of these covariates by treatment status are reported in table \ref{Table: balance table for NSW}:

\begin{table}[H]
	\begin{center}
		\renewcommand{\arraystretch}{1.2}
		\caption{Balance table} 
		\label{Table: balance table for NSW}
		\begin{tabular}{l||cccccccc}
			\toprule
			&       base inc. &   age &  yrs. educ &  HS dropout &  black &  hispanic &  married & $N$ \\ \hline \hline 
			\multirow{2}{*}{control} & 3672.49 & 24.45 & 10.19 &      0.81 &   0.80 &  0.11 &     0.16 & \multirow{2}{*}{425} \\
			& (6521.53) &  (6.59) & (1.62) &      (0.39) &   (0.40) &  (0.32) &     (0.36) & \\ \hline
			\multirow{2}{*}{treated} & 3571.00 & 24.63 & 10.38 &      0.73 &   0.80 &  0.09 &     0.17 & \multirow{2}{*}{297} \\
			& (5773.13) &  (6.69) &  (1.82) &      (0.44) &   (0.40) &  (0.29) &     (0.37) & \\
			\bottomrule
		\end{tabular}
	\end{center}
	\vspace{-0.3 cm} 
	\hspace{0.5 cm} {\scriptsize \textit{Note:} Standard deviations in parentheses.}
\end{table}

There is no reported noncompliance, so I interpret the setting as one of exogenous treatment. The parameter of interest is the OLS slope coefficient of regressing treatment effects on a constant and $Y_0$:
\begin{align*}
	\gamma = \frac{\text{Cov}(Y_1 - Y_0, Y_0)}{\text{Var}(Y_0)} = \frac{E_{P_{1,0}}[(Y_1 - Y_0) Y_0] - (E_{P_1}[Y_1] - E_{P_0}[Y_0])E_{P_0}[Y_0]}{E_{P_0}[Y_0^2] - (E_{P_0}[Y_0])^2}
\end{align*}
as described in example \ref{Example: equitable policies}, the sign of this parameter describes who receives larger benefits from treatment: $\gamma < 0$ implies those with below average untreated outcomes tend to see above average treatment effects.

Discretized versions of baseline income and age are found to be informative covariates. Baseline income is binned as: $[0,0]$ or $(0, \infty)$, while age is binned as $(16,20]$, $(20, 26]$, or $(26, \infty)$. $X$ is the cartesian product of bins. The resulting $(d,x)$ bins have a minimum of 31 observations per bin, and an average of 60.2 observations per bin.

The point estimates are $(\hat{\gamma}^L, \hat{\gamma}^H) = (-1.73, -0.004)$. The negative upper bound point estimates suggests that the treatment was especially beneficial for participants who would otherwise have incomes below average (for the eligible population). Covariates are found to be informative, especially for the upper bound. Ignoring covariates, the lower bound point estimate is $-1.78$ and the upper bound point estimate is $0.189$. The $95\%$ confidence set for the identified based on 500 bootstrap draws is $[-1.94, 0.20]$, suggesting $\gamma$ may still be zero or slightly positive once accounted for sample uncertainty.

	\section{Extensions}
\label{Section: extensions}

This section briefly describes simple extensions.

\subsection{Conditioning on $X \in A$}

In many applications parameters conditional on a covariate taking a particular value are of interest. For example, the share of compliers of a particular demographic benefiting from treatment is $P(Y_1 > Y_0 \mid D_1 > D_0, \text{demographic})$. 

Such parameters can be written in the form
\begin{align*}
	\gamma_A = g(\theta_A, \eta_A)
\end{align*}
where for a known set $A \subseteq \mathcal{X}$,
\begin{align*}
	&\theta_A \equiv E[c(Y_1,Y_0) \mid D_1 > D_0, X \in A], &&\eta_A \equiv E[\eta_1(Y_1), \eta_0(Y_0) \mid D_1 > D_0, X \in A]
\end{align*} 
The identified set for $\gamma_A$ is straightforward to characterize and estimate. First note that 
\begin{align*}
	\theta_A = E[\theta_X \mid D_1 > D_0, X \in A] = \frac{1}{s_A} \sum_{x \in A} s_x \theta_x
\end{align*}
where $s_A = \sum_{x \in A} s_x$. The proof of theorem \ref{Theorem: identification, function of moments} shows that the sharp identified set for $(\theta_{x_1}, \ldots, \theta_{x_M})$ is in fact $[\theta_{x_1}^L, \theta_{x_1}^H] \times \ldots \times [\theta_{x_M}^L, \theta_{x_M}^H]$. It follows that the sharp identified set for $\theta_A$ is $[\theta_A^L, \theta_A^H]$, where
\begin{align*}
	&\theta_A^L = \frac{1}{s_A} \sum_{x \in A} s_x \theta_x^L, &&\theta_A^H = \frac{1}{s_A} \sum_{x \in A} s_x \theta_x^H
\end{align*}
and the sharp identified set for $\gamma_A$ is $[\gamma_A^L, \gamma_A^H]$ where
\begin{align*}
	&\gamma_A^L = \min_{t \in [\theta_A^L, \theta_A^H]} g(t, \eta_A), &&\gamma_A^H = \max_{t \in [\theta_A^L, \theta_A^H]} g(t, \eta_A),
\end{align*}

Let $\hat{s}_x$, $\hat{\theta}_x^L$, and $\hat{\theta}_x^H$ be as defined in section \ref{Section: estimators}. Let $\hat{s}_A = \sum_{x \in A} \hat{s}_x$ and 
\begin{align*}
	&\hat{\theta}_A^L = \frac{1}{\hat{s}_A} \sum_{x \in A} \hat{s}_x \hat{\theta}_x^L, &&\hat{\theta}^H(A) = \frac{1}{\hat{s}_A} \sum_{x \in A} \hat{s}_x \hat{\theta}_x^H \\
	&\hat{\gamma}_A^L = \min_{t \in [\hat{\theta}_A^L, \hat{\theta}_A^H]} g(t, \hat{\eta}_A), &&\hat{\gamma}_A^H = \max_{t \in [\hat{\theta}_A^L, \hat{\theta}_A^H]} g(t, \hat{\eta}_A),
\end{align*}
Under assumptions \ref{Assumption: setting}, \ref{Assumption: cost function}, and \ref{Assumption: parameter, function of moments}, $\sqrt{n}((\hat{\gamma}_A^L, \hat{\gamma}_A^H) - (\gamma_A^L, \gamma_A^H)$ will converge weakly. With assumption \ref{Assumption: full differentiability} the straightforward bootstrap will consistently estimate its asymptotic distribution.

\subsection{Quantiles}
\label{Section: extensions, subsection quantiles}

Example \ref{Example: pseudo quantiles} considers the parameter $q_\tau$ solving 
\begin{equation*}
	P(Y_1 - Y_0 \leq q_\tau \mid D_1 > D_0) = \tau 
\end{equation*}
As noted in that example, the sharp identification results for $P(Y_1 - Y_0 \leq \delta \mid D_1 > D_0)$ can be adapted to characterize the sharp identified set for $q_\tau$. First view the bounds on the cumulative distribution function as functions of $\delta$:
\begin{align*}
	&c_{L,\delta}(y_1,y_0) = \mathbbm{1}\{y_1 - y_0 < \delta\}, &&c_{H,\delta} (y_1,y_0) = \mathbbm{1}\{y_1 - y_0 > \delta\}, \\
	&\theta_x^L(\delta) = OT_{c_{L,\delta}}(P_{1 \mid x}, P_{0 \mid x}), &&\theta_x^H(\delta) = 1 - OT_{c_{H, \delta}}(P_{1 \mid x}, P_{0 \mid x}) \\
	&\theta^L(\delta) = \sum_x s_x \theta_x^L(\delta) &&\theta^H(\delta) = \sum_x s_x \theta_x^H(\delta) 
\end{align*}
Let $Q_{I, \tau}$ denote the sharp identified set for $q_\tau$.

\begin{restatable}[Identification of $q_\tau$]{lemma}{lemmaIdentificationPseudoQuantile}
	\label{Lemma: identification, pseudo quantile}
	\singlespacing
	
	Suppose assumptions \ref{Assumption: setting} and \ref{Assumption: cost function} \ref{Assumption: cost function, CDF} hold. Then $q \in Q_{I, \tau}$ if and only if $\theta^L(q) \leq \tau \leq \theta^H(q)$.
\end{restatable}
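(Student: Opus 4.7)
The plan is to reduce the lemma to a pointwise application of Theorem \ref{Theorem: identification, function of moments}. For any fixed $q \in \mathbb{R}$, I would consider the cost function $c_q(y_1,y_0) = \mathbbm{1}\{y_1-y_0 \leq q\}$, which satisfies Assumption \ref{Assumption: cost function} \ref{Assumption: cost function, CDF} by hypothesis, together with the trivial choice $g(\theta,\eta) = \theta$ so that Assumption \ref{Assumption: parameter, function of moments} is satisfied. Theorem \ref{Theorem: identification, function of moments} then asserts that the sharp identified set for $\theta(q) = E[\mathbbm{1}\{Y_1 - Y_0 \leq q\} \mid D_1 > D_0] = P(Y_1 - Y_0 \leq q \mid D_1 > D_0)$ is exactly the interval $[\theta^L(q), \theta^H(q)]$.

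Next I would unwind the definition of $Q_{I,\tau}$: by construction, $q \in Q_{I,\tau}$ if and only if there exists a joint distribution $\pi_{1,0}$ compatible with Assumption \ref{Assumption: setting} and the observed $P$ (equivalently, of the form $\pi_{1,0} = \sum_x s_x \pi_{1,0 \mid x}$ with $\pi_{1,0 \mid x} \in \Pi(P_{1 \mid x}, P_{0 \mid x})$) such that $E_{\pi_{1,0}}[\mathbbm{1}\{Y_1-Y_0 \leq q\}] = \tau$. That is, $q \in Q_{I,\tau}$ iff $\tau$ lies in the identified set for $\theta(q)$.

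The forward implication is then immediate: if $q \in Q_{I,\tau}$, then $\tau$ is an achievable value of $\theta(q)$, so $\tau \in [\theta^L(q), \theta^H(q)]$. For the reverse implication, suppose $\theta^L(q) \leq \tau \leq \theta^H(q)$. I would observe that the set of achievable values of $\theta(q)$ is the image of the convex set of feasible couplings under the linear functional $\pi \mapsto E_\pi[\mathbbm{1}\{Y_1-Y_0 \leq q\}]$, hence convex, hence (given sharpness of the endpoints) equal to the entire interval $[\theta^L(q), \theta^H(q)]$. Consequently some feasible coupling attains the value $\tau$, which yields $q \in Q_{I,\tau}$.

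The step I expect to require the most care is sharpness of the endpoints $\theta^L(q)$ and $\theta^H(q)$, rather than their mere validity as an outer set: it is precisely Assumption \ref{Assumption: cost function} \ref{Assumption: cost function, CDF} (continuity of the conditional cdfs $F_{d \mid x}$) that, via Theorem \ref{Theorem: identification, function of moments} and the discussion surrounding display \eqref{Display: thetaL, thetaH when c is for CDF}, upgrades the outer bound to a sharp one. Once sharpness is in hand, the convexity observation above is routine and the stated characterization of $Q_{I,\tau}$ follows directly.
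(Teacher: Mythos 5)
Your proposal is correct and follows essentially the same route as the paper: both reduce the claim to the sharp identification of $\theta(q) = P(Y_1 - Y_0 \leq q \mid D_1 > D_0)$ at the fixed point $q$, and then unwind the definition of $Q_{I,\tau}$ to say $q \in Q_{I,\tau}$ iff $\tau$ is an achievable value of $\theta(q)$, i.e.\ iff $\tau \in [\theta^L(q), \theta^H(q)]$. The only cosmetic difference is that you enter through Theorem~\ref{Theorem: identification, function of moments} with the trivial choice $g(\theta,\eta)=\theta$, whereas the paper cites Lemma~\ref{Lemma: identification, moments} directly (which avoids having to check Assumption~\ref{Assumption: parameter, function of moments}), and your explicit convexity remark is already subsumed in the statement that the sharp identified set is exactly the closed interval.
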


Lemma \ref{Lemma: identification, pseudo quantile} implies that inverting a test of $H_0 : \theta^L(q) \leq \tau \leq \theta^H(q)$ against the alternative $H_1 : \tau < \theta^L(q) \text{ or } \theta^H(q) < \tau$ will lead to valid confidence sets for $q_\tau$.

\begin{remark}
	\label{Remark: pseudo quantile interpretation and definition}
	\singlespacing
	
	Consider instead defining $q_\tau$ to be the closed subset of $\mathbb{R}$ given by
	\begin{equation*}
		q_\tau = [\inf\{y \; ; \; P(Y_1 - Y_0 \leq y) \geq \tau\}, \inf\{y \; ; \; P(Y_1 - Y_0 \leq y) > \tau\}] 
	\end{equation*}
	Note that this $q_\tau$ is the singleton $\inf\{y \; ; \; P(Y_1 - Y_0 \leq y) \geq \tau\}$, unless $P(Y_1 - Y_0 \leq \cdot)$ is flat when equal to $\tau$, in which case it equals the $\tau$-level set $\{y \; ; \; P(Y_1 - Y_0 \leq y) = \tau\}$. (Compare \cite{ehm2016quantiles}, who define the $\tau$-th quantile equivalently as $q_\tau = [\sup\{y \; ; \; P(Y_1 - Y_0 \leq y) < \tau\}, \sup\{y \; ; \; P(Y_1 - Y_0 \leq y) \leq \tau\}]$.) Let $Q_{I, \tau}$ denote the identified set of $q_\tau$ as defined in this remark. Lemma \ref{Lemma: identification, pseudo quantile, alternative definition} in appendix \ref{Appendix: identification} shows that under assumptions \ref{Assumption: setting} and \ref{Assumption: cost function} \ref{Assumption: cost function, CDF}, $q \in Q_{I, \tau}$ if and only if $\theta^L(q) \leq \tau \leq \theta^H(q)$.
\end{remark}

\subsection{Multiple treatment arms with exogenous treatment}

The identification results and estimators proposed above are easily extended to a setting with multiple treatment arms and exogenous treatment. Let the mutually exclusive treatment arms indexed by $d \in \{0, 1, \ldots, J\}$, with $d = 0$ indicating control. Let $Y_d$ be the potential outcome with treatment $d$, $D_d$ equal one if the unit has treatment $d$ and zero otherwise. The observed outcome is
\begin{align*}
	Y = \sum_{d=0}^J D_d Y_d
\end{align*}
Let $D = (D_0, D_1, \ldots, D_J)$ and assume 
\begin{align*}
	(Y_0, Y_1, \ldots, Y_J) \perp D \mid X
\end{align*}
Note that the marginal distributions of $Y_d \mid X = x$, denoted $P_{d \mid x}$, are identified with the relation
\begin{align*}
	E_{P_{d \mid x}}[f(Y_d)] = E[f(Y_d) \mid X = x] = \frac{E[f(Y) D_d \mid X = x]}{P(D_d = 1 \mid X = x)}
\end{align*}

Let $\gamma_d = g(\theta_d, \eta_d)$ where $\theta_d = E[c(Y_d, Y_0)]$. Consider estimating the sharp identified set for $(\gamma_1, \ldots \gamma_J)$. For example, an RCT with two treatment arms may have similar average treatment effects. The treatment arms may be further distinguished by comparing $P(Y_1 - Y_0 > 0)$ with $P(Y_2 - Y_0 > 0)$, or $\text{Cov}(Y_1 - Y_0, Y_0)$ with $\text{Cov}(Y_2 - Y_0, Y_0)$.

Let $\theta_{d,x} = E[c(Y_1,Y_0) \mid X = x]$. The sharp identified set for $(\theta_{1,x}, \ldots, \theta_{J,x})$ is given by
\begin{align*}
	[\theta_{1,x}^L, \theta_{1,x}^H] \times \ldots \times [\theta_{J,x}^L, \theta_{J,x}^H]
\end{align*}
where $\theta_{d,x}^L = \theta^L(P_{d \mid x}, P_{0 \mid x})$ and $\theta_{d,x}^H = \theta^H(P_{d \mid x}, P_{0 \mid x})$ as in section \ref{Section: identification}.\footnote{This follows from existing results and the \textit{gluing lemma}, found in \cite{villani2009optimal} (pp. 11-12).} 
The sharp identified set for $\theta_d$ is $[\theta_d^L, \theta_d^H]$ where $\theta_d^L = \sum_x s_x \theta_{d,x}^L$ and $\theta_d^H = \sum_x s_x \theta_{d,x}^H$, and the sharp identified set for $(\gamma_1, \ldots \gamma_J)$ is 
\begin{align*}
	[\gamma_1^L, \gamma_1^H] \times \ldots \times [\gamma_J^L, \gamma_J^H]
\end{align*}
Sample analogues $(\hat{\gamma}_1^L, \hat{\gamma}_1^H, \ldots, \hat{\gamma}_J^L, \hat{\gamma}_J^H)$ can be formed just as in section \ref{Section: estimators}. Under natural adjustments to assumptions \ref{Assumption: cost function}, \ref{Assumption: parameter, function of moments}, and \ref{Assumption: full differentiability}, the same arguments work to show
\begin{align*}
	\sqrt{n}((\hat{\gamma}_1^L, \hat{\gamma}_1^H, \ldots, \hat{\gamma}_J^L, \hat{\gamma}_J^H) - (\gamma_1^L, \gamma_1^H, \ldots, \gamma_J^L, \gamma_J^H))
\end{align*}
is asymptotically Gaussian and the bootstrap consistently estimates its asymptotic distribution.

	\section{Conclusion}
\label{Section: conclusion}

This paper studies a large class of causal parameters that depend on a moment of the joint distribution of potential outcomes. The sharp identified set of such parameters is characterized with optimal transport. Estimators based on this identification are $\sqrt{n}$-consistent and converge in distribution under mild assumptions, and inference procedures based on the bootstrap are straightforward and computationally convenient. 
	
	\newpage
	
	\bibliography{OTJointPO_bibliography}
	
	\appendix
	
	\newpage
	
	\section{Appendix: identification}
\label{Appendix: identification}

Following \cite{kitagawa2015test}, let $T$ denote the ``type'' of a unit:
\begin{equation}
	T = 
	\begin{cases}
		a, \text{ always-taker,} &\text{ if } (D_1, D_0) = (1,1) \\
		c, \text{ complier,} &\text{ if } (D_1, D_0) = (1,0) \\
		n, \text{ never-taker,} &\text{ if } (D_1, D_0) = (0,0) \\
		df, \text{ defier,} &\text{ if } (D_1, D_0) = (0,1) 
	\end{cases}
	\label{Defn: T, type of a unit}
\end{equation}
Note that the primitives $(Y_1, Y_0, D_1, D_0, Z, X)$ are equivalent to $(Y_1, Y_0, T, Z, X)$.

\begin{restatable}[Identification of moments]{lemma}{lemmaSharpIdentificationOfMoments}
	\label{Lemma: identification, moments}
	\singlespacing
	
	Suppose assumptions \ref{Assumption: setting} and \ref{Assumption: cost function} hold. Then the sharp identified set for $\theta$ is $[\theta^L, \theta^H]$.
\end{restatable}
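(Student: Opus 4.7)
The plan is to decompose the parameter by covariate strata, characterize the identified set stratum-by-stratum as an interval whose endpoints are values of optimal transport problems, and then recombine. Starting from the law of iterated expectations, write $\theta = \sum_x s_x \theta_x$ with $\theta_x = E_{P_{1,0\mid x}}[c(Y_1,Y_0)]$. Lemma \ref{Lemma: identification, LATE IV marginal distribution identification} identifies $s_x$, $P_{1 \mid x}$, and $P_{0 \mid x}$ under Assumption \ref{Assumption: setting}, while the fundamental problem of causal inference leaves the conditional coupling $P_{1,0 \mid x}$ unrestricted beyond membership in $\Pi(P_{1 \mid x}, P_{0 \mid x})$. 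Because the conditional couplings across distinct values of $x$ can be chosen independently (they correspond to disjoint subpopulations and none is constrained by the others), the sharp identified set of the vector $(\theta_{x_1}, \dots, \theta_{x_M})$ is the Cartesian product of the stratum-level sharp identified sets $\Theta_{I,x}$. Applying the linear map $(t_{x_1},\dots,t_{x_M}) \mapsto \sum_x s_x t_{x}$ to a product of intervals yields an interval whose endpoints are $\sum_x s_x \theta^L_x$ and $\sum_x s_x \theta^H_x$, so it suffices to pin down $\Theta_{I,x} = [\theta_x^L, \theta_x^H]$ with the OT expressions in \eqref{Display: thetaL, thetaH when c is continuous} and \eqref{Display: thetaL, thetaH when c is for CDF}.

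For the stratum-level characterization, convexity of $\Pi(P_{1 \mid x}, P_{0 \mid x})$ plus linearity of $\pi \mapsto E_\pi[c]$ immediately gives that $\Theta_{I,x}$ is an interval. Under Assumption \ref{Assumption: cost function}\ref{Assumption: cost function, smooth costs}, compactness of $\mathcal{Y}$ makes $\Pi(P_{1 \mid x}, P_{0 \mid x})$ tight and hence compact in the weak topology, and continuity of $c$ together with the bound $\lVert c \rVert_\infty < \infty$ makes $\pi \mapsto E_\pi[c]$ weakly continuous. Thus the minimum and maximum are attained, and by definition equal $OT_{c_L}(P_{1 \mid x}, P_{0 \mid x})$ and $-OT_{c_H}(P_{1 \mid x}, P_{0 \mid x})$ respectively, matching \eqref{Display: thetaL, thetaH when c is continuous}.

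The main obstacle is Assumption \ref{Assumption: cost function}\ref{Assumption: cost function, CDF}, where $c(y_1,y_0)=\mathbbm{1}\{y_1-y_0\le\delta\}$ is neither continuous nor matches $c_L$ or $c_H$. For the upper bound the identity $c = 1 - \mathbbm{1}\{y_1-y_0>\delta\} = 1 - c_H$ gives $\theta_x^H = 1 - \min_\pi E_\pi[c_H]$; since $c_H$ is the indicator of an open set it is lower semicontinuous and bounded, so $OT_{c_H}$ is attained and $\theta_x^H = 1 - OT_{c_H}(P_{1 \mid x}, P_{0 \mid x})$ unconditionally. For the lower bound, $c_L \le c$ gives the easy inequality $OT_{c_L}(P_{1 \mid x}, P_{0 \mid x}) \le \theta_x^L$, and the reverse inequality is where continuity of the CDFs enters. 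The plan here is to invoke the Makarov-bound representation flagged in Remark \ref{Remark: Makarov bounds}, namely that $OT_{c_L}(P_{1 \mid x}, P_{0 \mid x}) = \sup_y \{F_{1\mid x}(y) - F_{0\mid x}(y-\delta)\}^+$; when $F_{1 \mid x}$ and $F_{0 \mid x}$ are continuous this value coincides with the analogous supremum obtained from $c$ rather than $c_L$, and a standard construction (e.g.\ a monotone rearrangement / inverse-CDF coupling on the part of the mass contributing to the supremum) produces a coupling achieving the bound while assigning zero mass to the null set $\{y_1-y_0=\delta\}$. This shows $\theta_x^L = OT_{c_L}(P_{1 \mid x}, P_{0 \mid x})$, and without continuity the expression still provides a valid outer bound, consistent with the remark following Assumption \ref{Assumption: cost function}.

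Combining the stratum-level results with the decomposition above yields the sharp identified set $[\theta^L, \theta^H]$ for $\theta$. The formal writeup will consist of (i) a short lemma verifying the product-of-intervals claim for $(\theta_{x_1},\dots,\theta_{x_M})$, (ii) the two-case analysis of $\Theta_{I,x}$ just outlined, and (iii) a one-line argument pushing the product structure through the linear aggregation in $s_x$.
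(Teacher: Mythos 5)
Your overall structure matches the paper's proof exactly: decompose $\theta = \sum_x s_x \theta_x$, argue the sharp identified set for $(\theta_{x_1},\dots,\theta_{x_M})$ is the Cartesian product $\prod_x [\theta_x^L, \theta_x^H]$ (because the conditional couplings are unrestricted and can be chosen independently across strata), and recombine through the point-identified $s_x$. The smooth-cost case is also handled identically in substance — convexity of $\Pi(P_{1\mid x}, P_{0\mid x})$ gives an interval, attainment gives closedness — though you use a primal weak-compactness argument where the paper cites its attainment lemma (lemma \ref{Lemma: optimal transport is attained}); these are interchangeable. The CDF upper bound via $c = 1 - c_H$ is the same.

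The one place where you are materially vaguer than the paper is the CDF lower bound. You correctly identify that the issue is producing a coupling that achieves $OT_{c_L}(P_{1\mid x}, P_{0 \mid x})$ while putting zero $\pi$-mass on $\{y_1 - y_0 = \delta\}$, but you dispatch it with ``a standard construction (e.g.\ a monotone rearrangement / inverse-CDF coupling ...)''. Be careful: the co-monotone (inverse-CDF) coupling minimizes $E_\pi[\phi(Y_1-Y_0)]$ only for \emph{convex} $\phi$; $\mathbbm{1}\{y_1 - y_0 \le \delta\}$ is not convex, so that coupling does not generally achieve the Makarov lower bound, and a bespoke construction achieving it is not entirely routine to write down. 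The paper sidesteps this entirely: it takes an optimal coupling $\pi_x^L$ for the lower-semicontinuous cost $c_L$ (guaranteed to exist), invokes dual attainment at some $y^*$, and uses complementary slackness to pin down the support of $\pi_x^L$ so that the only candidate atom on $\{y_1 - y_0 = \delta\}$ is $(y^*, y^*-\delta)$, which carries no mass by continuity of $F_{1\mid x}$. That dual-based argument is cleaner and avoids the need to produce an explicit coupling. Your plan is fixable — you could replace ``monotone rearrangement'' with the dual attainment route — but as written, step (ii) of your plan rests on a construction you haven't exhibited and whose claimed standard form is not the right one.
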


\begin{proof}
	\singlespacing
	
	Let $T$ be as defined in \eqref{Defn: T, type of a unit}, and note that the primitives of the model $(Y_1, Y_0, D_1, D_0, Z, X)$ are equivalent to $(Y_1, Y_0, T, Z, X)$. Moreover, the event $D_1 > D_0$ is the event $T = c$; thus $P_{d \mid x}$ is the distribution of $Y_d \mid T = c, X = x$. \\
	
	In steps:
	\begin{enumerate}
		\item The identified set for $(P_{1, 0 \mid x_1}, \ldots, P_{1,0 \mid x_M})$, the conditional distributions of $(Y_1, Y_0) \mid T = c, X = x$ for each $x \in \mathcal{X} = \{x_1, \ldots, x_M\}$, is $\Pi(P_{1 \mid x_1}, P_{0 \mid x_1}) \times \ldots \times \Pi(P_{1 \mid x_M}, P_{0 \mid x_M})$. \\
		
		That $(P_{1, 0 \mid x_1}, \ldots, P_{1,0 \mid x_M}) \in \Pi(P_{1 \mid x_1}, P_{0 \mid x_1}) \times \ldots \times \Pi(P_{1 \mid x_M}, P_{0 \mid x_M})$ is immediate. To see that any element of $\Pi(P_{1 \mid x_1}, P_{0 \mid x_1}) \times \ldots \times \Pi(P_{1 \mid x_M}, P_{0 \mid x_M})$ is possible given the assumptions and distribution of the observables $(Y, D, Z, X)$, fix a distribution of the observables generated by a distribution of the primitives consistent with the assumptions. Note that the distribution of observables is summarized by $P(D = d, Z = z, X = x)$ for each $(d,z,x)$ and the conditional distributions
		\begin{equation*}
			Y \mid D = d, Z = z, X = x
		\end{equation*}
		Use this observation and the claims of lemma \ref{Lemma: identification, LATE IV model empirical content} to see that any two distributions of the primitives $(Y_1, Y_0, T, Z, X)$ (consistent with the assumptions), sharing the same distribution of $(T, Z, X)$, and the same marginal, conditional distributions for
		\begin{align*}
			&Y_1 \mid T = a, X = x &&Y_0 \mid T=n, X = x \\
			&Y_1 \mid T = c, X = x, &&Y_0 \mid T=c, X = x
		\end{align*}
		will produce this distribution of observables. Thus, replacing $(P_{1, 0 \mid x_1}, \ldots, P_{1,0 \mid x_M})$ from the distribution of primitives with any
		\begin{equation*}
			(\pi_{x_1}, \ldots, \pi_{x_M}) \in \Pi(P_{1 \mid x_1}, P_{0 \mid x_1}) \times \ldots \times \Pi(P_{1 \mid x_M}, P_{0 \mid x_M})
		\end{equation*}
		will generate the same observed distribution of $(Y, D, Z, X)$, without violating assumption \ref{Assumption: setting} or \ref{Assumption: cost function}. The claim follows.
		
		\item The identified set for $(\theta_{x_1}, \ldots, \theta_{x_M}) \in \mathbb{R}^M$ is $[\theta_{x_1}^L, \theta_{x_1}^H] \times \ldots \times [\theta_{x_M}^L, \theta_{x_M}^H]$.
		
		Recall that $\theta_x = E[c(Y_1, Y_0) \mid X = x]$, and let $\Theta_{I,x}$ denote its identified set. Note that the previous step implies
		\begin{equation*}
			\Theta_{I,x} = \left\{t \in \mathbb{R} \; ; \; t = E_{\pi_x}[c(Y_1,Y_0)] \text{ for some } \pi_x \in \Pi(P_{1 \mid x}, P_{0 \mid x})\right\}
		\end{equation*}
		$\Pi(P_{1 \mid x}, P_{0 \mid x})$ is convex. Notice that for any $\lambda \in (0,1)$ and $\pi_x^1, \pi_x^0 \in \Pi(P_{1 \mid x}, P_{0 \mid x})$, $E_{\lambda \pi_x^1 + (1-\lambda)\pi_x^0}[c(Y_1, Y_0)] = \lambda E_{\pi_x^1}[c(Y_1,Y_0)] + (1-\lambda) E_{\pi_x^0}[c(Y_1,Y_0)]$.	Together these imply $\Theta_{I,x}$ is convex. 
		
		It suffices to show that for any $x$, $\Theta_{I,x} = [\theta_x^L, \theta_x^H]$ There are two cases:
		\begin{enumerate}[label=(\roman*)]
			\item If assumption  \ref{Assumption: cost function} \ref{Assumption: cost function, smooth costs} holds, then for each $x$,
			\begin{align*}
				&\theta_x^L = OT_c(P_{1 \mid x}, P_{0 \mid x}) = \inf_{\pi_x \in \Pi(P_{1 \mid x}, P_{0 \mid x})} E_{\pi_x}[c(Y_1, Y_0)] \\
				&\theta_x^H = -OT_{-c}(P_{1 \mid x}, P_{0 \mid x}) = \sup_{\pi_x \in \Pi(P_{1 \mid x}, P_{0 \mid x})} E_{\pi_x}[c(Y_1, Y_0)] 
			\end{align*}
			Since $c$ is continuous, lemma \ref{Lemma: optimal transport is attained} implies the optimal transport problems are attained, say by $\pi_x^L$ and $\pi_x^H$ respectively. It follows that $\theta_x^L, \theta_x^H \in \Theta_{I,x}$, and it is clear from their definitions that they bound $\Theta_{I,x}$. Since $\Theta_{I,x}$ is convex, it follows that $\Theta_{I,x} = [\theta_x^L, \theta_x^H]$. \\
			
			\item If Assumption \ref{Assumption: cost function} \ref{Assumption: cost function, CDF} holds, then 
			\begin{align*}
				&c_L(y_1,y_0) = \mathbbm{1}\{y_1 - y_0 < \delta\}, &&c_H(y_1,y_0) = \mathbbm{1}\{y_1 - y_0 > \delta\}, \\
				&\theta_x^L = OT_{c_L}(P_{1 \mid x}, P_{0 \mid x}), &&\theta_x^H = 1 - OT_{c_H}(P_{1 \mid x}, P_{0 \mid x})
			\end{align*}
			
			Let $\pi_x^L, \pi_x^H \in \Pi(P_{1 \mid x}, P_{0 \mid x})$ be such that $\theta_x^L = E_{\pi_x^L}[\mathbbm{1}\{Y_1 - Y_0 < \delta\}] = P_{\pi_x^L}(Y_1 - Y_0 < \delta)$ and $\theta_x^H = 1 - E_{\pi_x^H}[\mathbbm{1}\{Y_1 - Y_0 > \delta\}] = P_{\pi_x^H}(Y_1 - Y_0 \leq \delta)$. Notice that $\theta_x^H \in \Theta_{I,x}$. Furthermore, $\mathbbm{1}\{y_1 - y_0 < \delta\} \leq \mathbbm{1}\{y_1 - y_0 \leq \delta\}$ implies 
			\begin{align*}
				\theta_x^L = \inf_{\pi_x \in \Pi(P_{1 \mid x}, P_{0 \mid x})} E_{\pi_x}[\mathbbm{1}\{Y_1 - Y_0 < \delta\}] \leq \inf_{\pi_x \in \Pi(P_{1 \mid x}, P_{0 \mid x})} E_{\pi_x}[\mathbbm{1}\{Y_1 - Y_0 \leq \delta\}]
			\end{align*}
			and thus $\theta_x^L$ is a lower bound for $\Theta_{I,x}$. Since $\Theta_{I,x}$ is convex, it suffices to show that $\theta_x^L \in \Theta_{I,x}$. \\
			
			Corollary \ref{Lemma: duality special case CDF, corollary} implies that $\theta_x^L = P_{\pi_x^L}(Y_1 - Y_0 < \delta) = \sup_y \left\{F_{1 \mid x}(y) - F_{0 \mid x}(y - \delta)\right\}$. Moreover, \cite{villani2009optimal} theorem 5.10 part (iii) implies the dual problem $\sup_y \left\{F_{1 \mid x}(y) - F_{0 \mid x}(y - \delta)\right\}$ is attained as well, say by $y^*$. Thus
			\begin{equation}
				\int \mathbbm{1}\{y_1 - y_0 \leq \delta\} d\pi_x^L(y_1,y_0) = \int \mathbbm{1}\{y_1 \leq y^*\}dP_{1 \mid x}(y_1) - \int \mathbbm{1}\{y_0 \leq y^* - \delta\} dP_{0 \mid x}(y_0) \label{Display: theorem proof, identification, moments, lower bound on CDF is sharp, strong duality}
			\end{equation}
			Next, notice that
			\begin{equation}
				\mathbbm{1}\{y_1 \leq y^*\} - \mathbbm{1}\{y_0 \leq y^* - \delta\} \leq \mathbbm{1}\{y_1 - y_0 < \delta\} \label{Display: theorem proof, identification, moments, lower bound on CDF is sharp, inequality}
			\end{equation}
			which holds for all $(y_1, y_0)$, must hold with equality $\pi_x^L$-almost surely. Indeed, let $N$ be the set where the inequality in \eqref{Display: theorem proof, identification, moments, lower bound on CDF is sharp, inequality} is strict and suppose $N$ is $\pi_x^L$-non-negligible. Since $\pi_x^L \in \Pi(P_{1 \mid x}, P_{0 \mid x})$, 
			\begin{align*}
				&\int \mathbbm{1}\{y_1 \leq y^*\}dP_{1 \mid x}(y_1) - \int \mathbbm{1}\{y_0 \leq y^* - \delta\} dP_{0 \mid x}(y_0) =  \int \mathbbm{1}\{y_1 \leq y^*\} - \mathbbm{1}\{y_0 \leq y^* - \delta\} d\pi_x^L(y_1, y_0) \\
				&= \int_N \mathbbm{1}\{y_1 \leq y^*\} - \mathbbm{1}\{y_0 \leq y^* - \delta\} d\pi_x^L(y_1, y_0) + \int_{N^c} \mathbbm{1}\{y_1 \leq y^*\} - \mathbbm{1}\{y_0 \leq y^* - \delta\} d\pi_x^L(y_1, y_0) \\
				&< \int_N \mathbbm{1}\{y_1 - y_0 < \delta\} d\pi_x^L(y_1, y_0) + \int_{N^c}  \mathbbm{1}\{y_1 - y_0 < \delta\} d\pi_x^L(y_1, y_0) \\
				&= \int \mathbbm{1}\{y_1 - y_0 \leq \delta\} d\pi_x^L(y_1,y_0)
			\end{align*}
			contradicts \eqref{Display: theorem proof, identification, moments, lower bound on CDF is sharp, strong duality}. This implies that $\pi_x^L$ concentrates on
			\begin{gather*}
				\underbrace{\left\{(y_1, y_0) \; ; \; y_1 \leq y^*, y_0 > y^* - \delta, y_1 - y_0 < \delta\right\}}_{\text{both sides of \eqref{Display: theorem proof, identification, moments, lower bound on CDF is sharp, inequality} equal } 1} \cup \underbrace{\left\{(y_1, y_0) \; ; \; y_1 > y^*, y_0 > y^* - \delta, y_1 - y_0 \geq \delta\right\}}_{\text{both sides of \eqref{Display: theorem proof, identification, moments, lower bound on CDF is sharp, inequality} equal } 0} \\
				\cup \underbrace{\left\{(y_1, y_0) \; ; \; y_1 \leq y^*, y_0 \leq y^* - \delta, y_1 - y_0 \geq \delta\right\}}_{\text{both sides of \eqref{Display: theorem proof, identification, moments, lower bound on CDF is sharp, inequality} equal } 0}
			\end{gather*}
			Notice the only point in the set $\{(y_1, y_0) \; ; \; y_1 - y_0 =\delta\}$ where $\pi_x^L$ could put positive mass is the point $(y_1, y_0) = (y^*, y^* - \delta)$. But since $P_{1 \mid x}$ has a continuous CDF,
			\begin{align*}
				0 \leq \pi_x^L(\{(y^*, y^* - \delta)\}) \leq \pi_x^L(\{y^*\} \times \mathcal{Y}_0) = P_{1 \mid x}(\{y^*\}) = 0
			\end{align*}
			Thus $P_{\pi_x^L}(Y_1 - Y_0 = \delta) = 0$, and so $P_{\pi_x^L}(Y_1 - Y_0 \leq \delta) = P_{\pi_x^L}(Y_1 - Y_0 < \delta) = \theta^L(x)$. Thus $\theta_x^L \in \Theta_{I,x}$, and hence $\Theta_{I,x} = [\theta^L(x), \theta^H(x)]$.

		\end{enumerate}
		
		Therefore the identified set for $\theta_x$ is $[\theta_x^L, \theta_x^H]$. It follows from this and step one above that the identified set $(\theta_{x_1}, \ldots, \theta_{x_M})$ is $[\theta_{x_1}^L, \theta_{x_1}^H] \times \ldots \times [\theta_{x_M}^L, \theta_{x_M}^H]$. 
		
		\item Recall that $\theta = E[c(Y_1,Y_0)] = E[E[c(Y_1,Y_0) \mid X]] = \sum_x s_x \theta_x$. Since $s_x = P(X = x \mid T = c)$ is point identified for each $x$, it follows from step two above that the identified set for $\theta$ is $[\theta^L, \theta^H]$ where
		\begin{align*}
			&\theta^L = \sum_x s_x \theta_x^L, &&\theta^H = \sum_x s_x \theta_x^H
		\end{align*}
	\end{enumerate}
	This concludes the proof.
\end{proof}

\theoremIdentificationFunctionOfMoments*
\begin{proof}
	\singlespacing
	
	Lemma \ref{Lemma: identification, moments} shows that under assumptions \ref{Assumption: setting} and \ref{Assumption: cost function}, the sharp identified set for $\theta$ is $[\theta^L, \theta^H]$. Let $\Gamma_I$ be the identified set for $\gamma$, and note that 
	\begin{equation*}
		\Gamma_I = \{\gamma \in \mathbb{R} \; ; \; \gamma = g(t, \eta) \text{ for some } t \in [\theta^L, \theta^H]\}
	\end{equation*}
	
	Assumption \ref{Assumption: cost function} implies $c$ is bounded; under assumption \ref{Assumption: cost function} \ref{Assumption: cost function, smooth costs} the continuous $c : \mathcal{Y} \times \mathcal{Y}\rightarrow \mathbb{R}$ takes a maximum and minimum on the compact set $\mathcal{Y} \times \mathcal{Y}$, while under assumption \ref{Assumption: cost function} \ref{Assumption: cost function, CDF} the cost function only takes values $0$ or $1$. It follows that $\theta^L$ and $\theta^H$ are finite and thus $[\theta^L, \theta^H]$ is compact.
	
	Assumption \ref{Assumption: parameter, function of moments} \ref{Assumption: parameter, function of moments, g is continuous} is that $g(\cdot,\eta)$ is continuous, and thus the extreme value theorem implies $\gamma^L = \inf_{t \in [\theta^L, \theta^H]} g(t, \eta)$ and $\gamma^H = \sup_{t \in [\theta^L, \theta^H]} g(t, \eta)$ are both elements of $\Gamma_I$. The intermediate value theorem then implies $\Gamma_I = [\gamma^L, \gamma^H]$.
\end{proof}

\lemmaIdentificationPseudoQuantile*
\begin{proof}
	\singlespacing
	
	By definition, $q \in \Gamma_{I,\tau}$ if and only if there exists a distribution of the primitives, $\pi$, consistent with the observed distribution, such that $P_\pi(Y_1 - Y_0 \leq q) = \tau$. Lemma \ref{Lemma: identification, moments} shows that $\theta^L(q) \leq \tau \leq \theta^H(q)$ if and only if there exists a distribution of the primitives, $\pi$, such that $\theta^L(q) \leq \tau \leq \theta^H(q)$. This concludees the proof.
\end{proof}

\begin{restatable}[Identification: $\tau$-th quantile]{lemma}{lemmaQuantileIdentification}
	\label{Lemma: identification, pseudo quantile, alternative definition}
	\singlespacing
	
	Let $q_\tau$ be defined as
	\begin{equation*}
		q_\tau = [\inf\{y \; ; \; P(Y_1 - Y_0 \leq y) \geq \tau\}, \inf\{y \; ; \; P(Y_1 - Y_0 \leq y) > \tau\}]
	\end{equation*}
	
	Suppose assumption \ref{Assumption: setting} and \ref{Assumption: cost function} \ref{Assumption: cost function, CDF} hold, and let $Q_{I,\tau}$ denote the identified set of $q_\tau$ defined above. Then $q \in Q_{I, \tau}$ if and only if $\theta^L(q) \leq \tau \leq \theta^H(q)$. 
\end{restatable}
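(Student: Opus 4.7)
The plan is to mirror the proof given for lemma \ref{Lemma: identification, pseudo quantile} and reduce the question of whether $q$ lies in the identified set $Q_{I,\tau}$ to sharp identification of a single CDF value, which is already handled by lemma \ref{Lemma: identification, moments}. By the definition of the identified set for a set-valued parameter, $q \in Q_{I,\tau}$ if and only if there exists a distribution $\pi$ of the primitives consistent with the observables such that $q \in q_\tau(\pi)$, where $q_\tau(\pi)$ is the alternative quantile under $\pi$.

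First I would establish that under assumption \ref{Assumption: cost function} \ref{Assumption: cost function, CDF}, for every consistent $\pi$ the induced CDF $F_\pi(y) = P_\pi(Y_1 - Y_0 \leq y \mid D_1 > D_0)$ is continuous. Since $F_{1 \mid x}$ is continuous, $Y_1 \mid D_1 > D_0, X = x$ has no atoms, and a Fubini argument gives $P_\pi(Y_1 - Y_0 = y) = \int P_{1 \mid x}(\{y + y_0\})\, dP_{0 \mid x}(y_0) = 0$ for every $y$ and every coupling $\pi$. Aggregating over $x$ using the law of iterated expectations preserves continuity of the unconditional CDF.

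Next I would show that when $F_\pi$ is continuous the interval $q_\tau(\pi)$ coincides with the level set $\{y : F_\pi(y) = \tau\}$. Writing $a = \inf\{y : F_\pi(y) \geq \tau\}$ and $b = \inf\{y : F_\pi(y) > \tau\}$, right-continuity together with continuity forces $F_\pi(a) = \tau$ and $F_\pi(b) = \tau$, and by monotonicity every point of $[a,b]$ satisfies $F_\pi(y) = \tau$. Conversely, any $y$ with $F_\pi(y) = \tau$ lies in $[a,b]$. Hence $q \in q_\tau(\pi)$ if and only if $F_\pi(q) = \tau$.

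Combining the two steps, $q \in Q_{I,\tau}$ if and only if there exists a consistent $\pi$ with $P_\pi(Y_1 - Y_0 \leq q \mid D_1 > D_0) = \tau$. Lemma \ref{Lemma: identification, moments}, applied with cost function $c(y_1,y_0) = \mathbbm{1}\{y_1 - y_0 \leq q\}$, states that the sharp identified set of this scalar moment is the interval $[\theta^L(q), \theta^H(q)]$, so the existence of such a $\pi$ is equivalent to $\theta^L(q) \leq \tau \leq \theta^H(q)$. The main (minor) obstacle is the first step: verifying that continuity of the marginals $F_{d \mid x}$ survives both the arbitrary coupling and the mixture over $X$ to yield continuity of $F_\pi$; once that is in hand the rest is bookkeeping around the definitions.
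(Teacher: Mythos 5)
There is a genuine gap, and it sits exactly where you flagged the ``main (minor) obstacle'': Step 1 is false, and the flaw is not minor. The Fubini argument $P_\pi(Y_1 - Y_0 = y) = \int P_{1 \mid x}(\{y + y_0\})\, dP_{0 \mid x}(y_0)$ only holds when $\pi$ is the \emph{product} (independence) coupling; a general coupling $\pi \in \Pi(P_{1\mid x}, P_{0\mid x})$ has conditional distribution $P_\pi(Y_1 \in \cdot \mid Y_0 = y_0)$ that need not equal the marginal $P_{1 \mid x}$. A concrete counterexample: take $P_{1\mid x} = P_{0\mid x} = U[0,1]$ and the comonotone coupling $Y_1 = Y_0$ a.s. Both marginals are continuous, yet $P_\pi(Y_1 - Y_0 = 0) = 1$. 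So continuity of $F_{d \mid x}$ does \emph{not} imply continuity of $F_\pi$ for an arbitrary consistent $\pi$, and Step 2 collapses with it: $q \in [a_\pi, b_\pi]$ only forces $F_\pi(q) \geq \tau$ (via right-continuity), not $F_\pi(q) = \tau$, which is precisely the case a flat-then-jumping CDF can violate.

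The paper sidesteps this by never asserting continuity of any individual $F_\pi$. The direction $\theta^L(q) \leq \tau \leq \theta^H(q) \implies q \in Q_{I,\tau}$ goes through as you intend: lemma~\ref{Lemma: identification, moments} produces a $\pi$ with $F_\pi(q)=\tau$, and monotonicity plus right-continuity of $F_\pi$ (always available) give $q \in [a_\pi, b_\pi]$. For the converse, the paper argues by contrapositive: the case $\theta^H(q) < \tau$ is immediate. For the case $\tau < \theta^L(q)$, the key is to show \emph{the lower envelope} $\theta^L(\cdot)$ is continuous in $\delta$ --- a fact proved directly from the Makarov-bound representation $\theta_x^L(\delta) = \sup_y\{F_{1\mid x}(y) - F_{0\mid x}(y-\delta)\}$ and uniform continuity of $F_{0\mid x}$, with no reference to any particular coupling --- and then observe that if some consistent $\pi$ placed $q$ in $[a_\pi, b_\pi]$ despite $\tau < \theta^L(q)$, then $F_\pi$ would jump at $q$ by at least $\theta^L(q) - \tau$, and since $\theta^L(y) \leq F_\pi(y) \leq \tau$ for $y < q$ while $\theta^L(q) > \tau$, the envelope $\theta^L(\cdot)$ would inherit that jump, a contradiction. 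If you replace your Step 1 with a proof that $\theta^L(\cdot)$ is continuous and rerun the reverse implication via that contrapositive, the argument closes.
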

\begin{proof}
	\singlespacing
	
	Suppose $\theta^L(q) \leq \tau \leq \theta^H(q)$. Lemma \ref{Lemma: identification, moments} implies there exists a distribution $\pi$ of the primitives consistent with assumption \ref{Assumption: cost function} \ref{Assumption: cost function, CDF} such that $P_\pi(Y_1 - Y_0 \leq q) = \tau$. Thus $q \in [\inf\{y \; ; \; P_\pi(Y_1 - Y_0 \leq y) \geq \tau\}, \inf\{y \; ; \; P_\pi(Y_1 - Y_0 \leq y) > \tau\}]$ and hence $q \in Q_{I, \tau}$.
	
	Before showing the other direction, we next show that assumption \ref{Assumption: cost function} \ref{Assumption: cost function, CDF} implies $\theta^L(\delta)$ is continuous. Specifically, apply corollary \ref{Lemma: duality special case CDF, corollary} to find $\theta_x^L(\delta) = \sup_y \{F_{1 \mid x}(y) - F_{0 \mid x}(y - \delta)\}$. So for any $\delta, \delta'$,  
	\begin{align*}
		\theta_x^L(\delta) - \theta_x^L(\delta') &= \sup_y \{F_{1 \mid x}(y) - F_{0 \mid x}(y - \delta)\} - \sup_y \{F_{1 \mid x}(y) - F_{0 \mid x}(y - \delta')\}  \\
		&\leq \sup_y \left\{F_{0 \mid x}(y - \delta') - F_{0 \mid x}(y - \delta)\right\} \\
		&\leq \sup_y \left\lvert F_{0 \mid x}(y - \delta') - F_{0 \mid x}(y - \delta)\right\rvert
	\end{align*}
	and thus $\lvert \theta_x^L(\delta) - \theta_x^L(\delta') \rvert \leq \sup_y \left\lvert F_{0 \mid x}(y - \delta') - F_{0 \mid x}(y - \delta)\right\rvert$. Recall that any continuous CDF is in fact uniformly continuous, and so $F_{0 \mid x}$ is in fact uniformly continuous. Let $\varepsilon > 0$, choose $\eta > 0$ such that for any $y, y' \in \mathbb{R}$ with $\lvert y - y' \rvert < \eta$, one has $\lvert F_{0 \mid x}(y) - F_{0 \mid x}(y') \rvert < \varepsilon/2$, and notice that 
	\begin{align*}
		\lvert \delta - \delta' \rvert < \eta \implies \sup_y \left\lvert F_{0 \mid x}(y - \delta') - F_{0 \mid x}(y - \delta) \right\rvert \leq \varepsilon/2 < \varepsilon
	\end{align*}
	This shows $\theta_x^L(\delta)$ is continuous, and so $\theta^L(\delta) =\sum_x s_x \theta_x^L$ is continuous.
	
	Return to showing the other direction, through the contrapositive. Suppose it is not the case that $\theta^L(q) \leq \tau \leq \theta^H(q)$. There are two possibilities:
	\begin{enumerate}
		\item Suppose $\theta^H(q) < \tau$. Then there is no distribution $\pi$ of the primitives such that $P_\pi(Y_1 - Y_0 \leq q) \geq \tau$, hence there is no distribution where $q \in [\inf\{y \; ; \; P(Y_1 - Y_0 \leq y) \geq \tau\}, \inf\{y \; ; \; P(Y_1 - Y_0 \leq y) > \tau\}]$ and thus $q \not \in Q_{I, \tau}$.

		\item Suppose $\tau < \theta^L(q)$. If one further supposes that $q \in Q_{I, \tau}$, then $\theta^L(\cdot)$ would have a jump discontinuity at $q$, contradicting the continuity shown above. 
		
		Specifically, if $\tau < \theta^L(q)$ and $q \in Q_{I, \tau}$, then there exists a distribution $\pi$ of the primitives such that $P_\pi(Y_1 - Y_0 \leq q) > \tau$ and $q \in [\inf\{y \; ; \; P_\pi(Y_1 - Y_0 \leq y) \geq \tau\}, \inf\{y \; ; \; P_\pi(Y_1 - Y_0 \leq y) > \tau\}]$, implying that  $P_\pi(Y_1 - Y_0 \leq \cdot)$ jumps at $q$ from below $\tau$ to above $\theta^L(q)$:
		\begin{equation*}
			\lim_{\epsilon \rightarrow 0} P_\pi(Y_1 - Y_0 \leq q - \epsilon) < \tau < \theta^L(q) \leq P_\pi(Y_1 - Y_0 \leq q)
		\end{equation*}
		This jump discontinuity at $q$ is at least of size $\varepsilon = \theta^L(q) - \tau > 0$. But then $\theta^L(\cdot)$ would have a jump discontinuity of at least size $\varepsilon$ at $q$ as well, a contradiction of the continuity of $\theta^L(\cdot)$ shown above.
		
		Thus if $\tau < \theta^L(q)$, then $q \not \in Q_{I,\tau}$.
	\end{enumerate}
	In either case, $q \not \in Q_{I, \tau}$. This completes the proof.
\end{proof}

\subsection{Additional identification lemmas}
\label{Appendix: identification, subsection lemmas}

The lemmas below contain results well known in the literature. They are included here with proofs for completeness. 

\begin{restatable}[]{lemma}{lemmaDegenerateImpliesSingletonCoupling}
	\label{Lemma: identification, degenerate distribution has only one coupling}
	\singlespacing
	
	Let $P_1$ be any distribution and $P_0$ be degenerate at $\tilde{y}_0 \in \mathbb{R}$. Then the only possible coupling of $P_1$ and $P_0$ is characterized by the cumulative distribution function
	\begin{align*}
		P(Y_1 \leq y_1, Y_0 \leq y_0) = \begin{cases}
			P(Y_1 \leq y_1) &\text{ if } y_0 \geq \tilde{y}_0 \\
			0 &\text{ if } y_0 < \tilde{y}_0
		\end{cases}
	\end{align*}
\end{restatable}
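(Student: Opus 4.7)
The plan is to start from an arbitrary coupling $\pi \in \Pi(P_1, P_0)$ and show that its joint CDF is forced to equal the claimed expression. The only tool needed is the marginal constraint $\pi_0 = P_0$, which because $P_0$ is degenerate at $\tilde{y}_0$ gives the key identity $P_\pi(Y_0 = \tilde{y}_0) = 1$, equivalently $P_\pi(Y_0 \neq \tilde{y}_0) = 0$. With this in hand, the rest is bookkeeping and no machinery beyond monotonicity of probability measures is required.

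First I would split into the two cases appearing in the statement. For $y_0 < \tilde{y}_0$, observe that $\{Y_1 \leq y_1, Y_0 \leq y_0\} \subseteq \{Y_0 \leq y_0\} \subseteq \{Y_0 < \tilde{y}_0\} \subseteq \{Y_0 \neq \tilde{y}_0\}$, so monotonicity yields $P_\pi(Y_1 \leq y_1, Y_0 \leq y_0) = 0$, which is the first branch of the claim. For $y_0 \geq \tilde{y}_0$, decompose
\begin{equation*}
    \{Y_1 \leq y_1\} = \{Y_1 \leq y_1, Y_0 \leq y_0\} \cup \{Y_1 \leq y_1, Y_0 > y_0\}
\end{equation*}
as a disjoint union, and note that because $y_0 \geq \tilde{y}_0$, the event $\{Y_0 > y_0\} \subseteq \{Y_0 \neq \tilde{y}_0\}$, so the second piece has $\pi$-probability zero. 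Then the marginal constraint $\pi_1 = P_1$ gives $P_\pi(Y_1 \leq y_1, Y_0 \leq y_0) = P_\pi(Y_1 \leq y_1) = P(Y_1 \leq y_1)$, which matches the second branch.

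I expect no real obstacle here; the argument is essentially a one-line consequence of the marginal conditions defining a coupling. The only thing to be careful about is the boundary case $y_0 = \tilde{y}_0$, which belongs to the second branch, and handling it correctly requires the inclusion $\{Y_0 > y_0\} \subseteq \{Y_0 \neq \tilde{y}_0\}$ to use strict inequality. Since the joint CDF uniquely determines a probability measure on $\mathbb{R}^2$, showing every coupling has this CDF establishes that the coupling itself is unique, completing the proof.
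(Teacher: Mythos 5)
Your proof is correct and follows essentially the same elementary route as the paper: the $y_0 < \tilde{y}_0$ case is handled identically by monotonicity, and the $y_0 \geq \tilde{y}_0$ case differs only cosmetically (you use a disjoint decomposition of $\{Y_1 \leq y_1\}$ and kill the second piece, while the paper uses inclusion--exclusion after observing $P(\{Y_1 \leq y_1\} \cup \{Y_0 \leq y_0\}) = 1$).
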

\begin{proof}
	\singlespacing
	
	First suppose $y_0 < \tilde{y}_0$. Then $0 \leq P(Y_1 \leq y_1, Y_0 \leq y_0) \leq P(Y_0 \leq y_0) = 0$.
	
	Next suppose $y_0 \geq \tilde{y}_0$. Then $1 \geq P(\{Y_1 \leq y_1\} \cup \{Y_0 \leq y_0\}) \geq P(Y_0 \leq y_0) = 1$ implies that 
	\begin{align*}
		P(Y_1 \leq y_1, Y_0 \leq y_0) &= P(Y_1 \leq y_1) + \underbrace{P(Y_0 \leq y_0)}_{=1} - \underbrace{P(\{Y_1 \leq y_1\} \cup \{Y_0 \leq y_0\})}_{=1} \\
		& = P(Y_1 \leq y_1)
	\end{align*}
	which completes the proof.
\end{proof}

Lemma \ref{Lemma: identification, LATE IV model empirical content} below summarizes the empirical content of the model described in assumption \ref{Assumption: setting}. In particular, it implies that any two distributions of the primitives consistent with assumption \ref{Assumption: setting} that share the same marginal distribution of $(T, Z, X)$ and marginal, conditional distributions of 
\begin{align*}
	&Y_1 \mid T = a, X = x && Y_0 \mid T=n, X = x \\
	&Y_1 \mid T = c, X = x, &&Y_0 \mid T = c, X = x
\end{align*}
will produce the same distribution of observables.

\begin{restatable}[]{lemma}{lemmaLATEModelEmpiricalContent}
	\label{Lemma: identification, LATE IV model empirical content}
	\singlespacing
	
	Suppose assumpion \ref{Assumption: setting} holds. Then 
	\begin{align*}
		P(D = 1 \mid Z = 0, X = x) &= P(T = a \mid X = x) \\
		P(D = 0 \mid Z = 1, X = x) &= P(T = n \mid X = x) \\
		P(D = 1 \mid Z = 1, X = x) &= P(T \in \{a,c\} \mid X = x) \\
		P(D = 0 \mid Z = 0, X = x) &= P(T \in \{c,n\} \mid X = x) 
	\end{align*}
	and for any integrable function $f$, 
	\begin{align*}
		E[f(Y) \mid D = 1, Z = 1, X = x] &= E[f(Y_1) \mid T \in \{a,c\}, X = x] \\
		E[f(Y) \mid D = 0, Z = 0, X = x] &= E[f(Y_0) \mid T \in \{c,n\}, X = x] 
	\end{align*}
	Furthermore,
	\begin{align*}
		\text{ if } P(D = 1 \mid Z = 0, X = x) > 0, \text{ then } E[f(Y) \mid D = 1, Z = 0, X = x] = E[f(Y_1) \mid T = a, X = x] \\
		\text{ if } P(D = 0 \mid Z = 1, X = x) > 0, \text{ then } E[f(Y) \mid D = 0, Z = 1, X = x] = E[f(Y_0) \mid T = n, X = x]
	\end{align*}
\end{restatable}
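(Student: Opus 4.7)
The plan is to derive each identity from three ingredients: the definition of the types, the monotonicity condition, and instrument independence. Under Assumption \ref{Assumption: setting} \ref{Assumption: setting, monotonicity}, $P(D_1 \geq D_0) = 1$, so with probability one no unit is a defier and the type $T$ takes values in $\{a, c, n\}$. Moreover, the defining equation \eqref{Display: observed and potential treatment status} gives $D = D_1$ on $\{Z = 1\}$ and $D = D_0$ on $\{Z = 0\}$. Combining these, I obtain the almost-sure set equalities
\begin{align*}
	&\{D = 1, Z = 1\} = \{T \in \{a,c\}, Z = 1\}, &&\{D = 0, Z = 1\} = \{T = n, Z = 1\}, \\
	&\{D = 1, Z = 0\} = \{T = a, Z = 0\}, &&\{D = 0, Z = 0\} = \{T \in \{c,n\}, Z = 0\}.
\end{align*}

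The four probability identities then follow immediately: condition on $X = x$, apply the set equalities above, and then use instrument independence (Assumption \ref{Assumption: setting} \ref{Assumption: setting, instrument independence}), which since $T$ is a function of $(D_1, D_0)$ implies $P(T \in A \mid Z = z, X = x) = P(T \in A \mid X = x)$ for any set $A \subseteq \{a,c,n\}$.

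For the conditional expectation claims, I would use $Y = D Y_1 + (1-D) Y_0$ to note that on $\{D = 1\}$ one has $Y = Y_1$ and on $\{D = 0\}$ one has $Y = Y_0$. Combined with the set equalities above and instrument independence, this yields, for instance,
\begin{equation*}
	E[f(Y) \mid D=1, Z=1, X=x] = E[f(Y_1) \mid T \in \{a,c\}, Z=1, X=x] = E[f(Y_1) \mid T \in \{a,c\}, X=x],
\end{equation*}
and the other three expectation identities are handled analogously. The positivity conditions $P(D=1 \mid Z=0, X=x) > 0$ and $P(D=0 \mid Z=1, X=x) > 0$ in the last two claims guarantee the corresponding conditional events $\{T = a, X=x\}$ and $\{T = n, X=x\}$ are non-negligible, so that $E[f(Y_1) \mid T = a, X=x]$ and $E[f(Y_0) \mid T=n, X=x]$ are well defined.

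There is no genuine obstacle; the argument is essentially a bookkeeping exercise. The one substantive step is the use of monotonicity to eliminate defiers — without it, the set equality $\{D = 1, Z = 0\} = \{T = a, Z = 0\}$ would have to include $\{T = df, Z = 0\}$, and similarly for $\{D = 0, Z = 1\}$, breaking each of the identities listed.
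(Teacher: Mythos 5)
Your proof is correct and follows essentially the same route as the paper's: eliminate defiers via monotonicity, translate between $D$ and $D_z$ on $\{Z=z\}$ using $D = ZD_1 + (1-Z)D_0$, then apply instrument independence. The paper's one point of extra care is worth noting: for the first two conditional-expectation claims it first forms the products $P(D=d\mid Z=z, X=x)\,E[f(Y)\mid D=d,Z=z,X=x]$, which are defined even when the conditioning probability vanishes, equates them to the type-indexed versions, and only then divides by the probability after invoking existence of compliers (Assumption \ref{Assumption: setting} \ref{Assumption: setting, existence of compliers}) to justify that $P(D=1\mid Z=1,X=x)$ and $P(D=0\mid Z=0,X=x)$ are strictly positive. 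Your argument conditions on $\{D=1,Z=1,X=x\}$ and $\{D=0,Z=0,X=x\}$ directly without pausing to verify these events are non-negligible; the verification is straightforward from the existence of compliers together with Assumption \ref{Assumption: setting} \ref{Assumption: setting, common support}, but it should be stated explicitly just as you did for the last two claims.
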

\begin{proof}
	\singlespacing
	
	Assumption \ref{Assumption: setting} \ref{Assumption: setting, monotonicity} implies $\mathbbm{1}\{D_1 = 0, D_0 = 1\} = 0$. The definition of $T$ in \eqref{Defn: T, type of a unit} then implies
	\begin{align*}
		\mathbbm{1}\{D_0 = 1\} &= \mathbbm{1}\{D_1 = 1, D_0 = 1\} + \cancel{\mathbbm{1}\{D_1 = 0, D_0 = 1\}} = \mathbbm{1}\{T = a\} \\
		\mathbbm{1}\{D_1 = 0\} &= \mathbbm{1}\{D_1 = 0, D_0 = 0\} + \cancel{\mathbbm{1}\{D_1 = 0, D_0 = 1\}} = \mathbbm{1}\{T = n\}\\
		\mathbbm{1}\{D_1 = 1\} &= \mathbbm{1}\{D_1 = 1, D_0 = 1\} + \mathbbm{1}\{D_1 = 1, D_0 = 0\} = \mathbbm{1}\{T \in \{a, c\}\} \\
		\mathbbm{1}\{D_0 = 0\} &= \mathbbm{1}\{D_1 = 1, D_0 = 0\} + \mathbbm{1}\{D_1 = 0, D_0 = 0\} = \mathbbm{1}\{T \in \{c, n\} \}
	\end{align*}
	These observations, equation \eqref{Display: observed and potential treatment status}, and assumption \ref{Assumption: setting} \ref{Assumption: setting, instrument independence} imply 
	\begin{align*}
		P(D = 1 \mid Z = 0, X = x) &= P(D_0 = 1 \mid X = x) = P(T = a \mid X = x), \\
		P(D = 0 \mid Z = 1, X = x) &= P(D_1 = 0 \mid X = x) = P(T = n \mid X = x), \\
		P(D = 1 \mid Z = 1, X = x) &= P(D_1 = 1 \mid X = x) = P(T \in \{a,c\} \mid X = x), \text{ and } \\
		P(D = 0 \mid Z = 0, X = x) &= P(D_0 = 0 \mid X = x) = P(T \in \{c,n\} \mid X = x)
	\end{align*}
	Note the first two equalities can be summarized as $P(D = d \mid Z = z, X = x) = P(D_z = d \mid X = x)$.
	
	Next, let $f : \mathbb{R} \rightarrow \mathbb{R}$ be integrable. Assumption \ref{Assumption: setting} \ref{Assumption: setting, instrument independence} and equations \eqref{Display: observed and potential outcomes} and \eqref{Display: observed and potential treatment status} imply that for any $(d,z,x)$,
	\begin{align*}
		&P(D = d \mid Z = z, X = x) E[f(Y) \mid D = d, Z = z, X = x] \\
		&\hspace{1 cm} = P(D_z = d \mid X = x)E[f(Y_d) \mid D_z = d, X = x] 
	\end{align*}
	and since $P(D = d \mid Z = z, X = x) = P(D_z = d \mid X = x)$, this implies
	\begin{equation}
		0 = P(D = d \mid Z = z, X = x)\Big(E[f(Y) \mid D = d, Z = z, X = x] - E[f(Y_d) \mid D_z = d, X = x] \Big) \label{Display: lemma proof, identification, LATE IV model empirical content equating distributions}
	\end{equation}
	Assumption \ref{Assumption: setting} \ref{Assumption: setting, existence of compliers} implies 
	\begin{align*}
		P(D = 1 \mid Z = 1, X = x) &= P(T \in \{a,c\} \mid X = x) \geq P(T = c \mid X = x) > 0 \\
		P(D = 0 \mid Z = 0, X = x) &= P(T \in \{c,n\} \mid X = x) \geq P(T = c \mid X = x) > 0
	\end{align*}
	Use strict positivity of $P(D = 1 \mid Z = 1, X = x)$ and $P(D = 0 \mid Z = 0, X = x)$ to see that 
	\begin{align*}
		E[f(Y) \mid D = 1, Z = 1, X = x] = E[f(Y_1) \mid D_1 = 1, X = x] = E[f(Y_1) \mid T \in \{a,c\}, X = x] \\
		E[f(Y) \mid D = 0, Z = 0, X = x] = E[f(Y_0) \mid D_0 = 0, X = x] = E[f(Y_0) \mid T \in \{c,n\}, X = x]
	\end{align*}
	Similarly, \eqref{Display: lemma proof, identification, LATE IV model empirical content equating distributions} implies 
	\begin{align*}
		\text{ if } P(D = 1 \mid Z = 0, X = x) > 0, \text{ then } E[f(Y) \mid D = 1, Z = 0, X = x] = E[f(Y_1) \mid T = a, X = x] \\
		\text{ if } P(D = 0 \mid Z = 1, X = x) > 0, \text{ then } E[f(Y) \mid D = 0, Z = 1, X = x] = E[f(Y_0) \mid T = n, X = x]
	\end{align*}
	this concludes the proof.
\end{proof}

\lemmaLATEIVMarginalDistributionIdentification*
\begin{proof}
	\singlespacing
	
	First notice that using $T$ as defined in \eqref{Defn: T, type of a unit},
	\begin{equation}
		E[f(Y_d) \mid D_1 > D_0, X = x] = E[f(Y_d) \mid T = c, X = x] = \frac{E[f(Y_d) \mathbbm{1}\{T = c\} \mid X = x]}{P(T = c \mid X = x)} \label{Display: lemma proof, identification, LATE IV marginal distribution identification, rewrite with T}
	\end{equation}
	Now notice that 
	\begin{align*}
		D_1 - D_0 = (1 - D_0) - (1 - D_1) = \mathbbm{1}\{D_d = d\} - \mathbbm{1}\{D_{1-d} = d\}
	\end{align*}
	for either $d \in \{1,0\}$. Monotonicity (assumption \ref{Assumption: setting} \ref{Assumption: setting, monotonicity}) implies that this is an indicator for $T = c$:
	\begin{align*}
		D_1 - D_0 = \mathbbm{1}\{D_1 = 1, D_0 = 0\} = \mathbbm{1}\{T = c\}
	\end{align*}
	So, 
	\begin{align}
		&E[f(Y) \mathbbm{1}\{D = d\} \mid Z = d, X = x] - E[f(Y) \mathbbm{1}\{D = d\} \mid Z = 1-d, X = x] \notag \\
		&\hspace{1 cm} = E[f(Y_d) \mathbbm{1}\{D_d = d\} \mid X = x] - E[f(Y_d) \mathbbm{1}\{D_{1-d} = d\} \mid X = x] \notag \\
		&\hspace{1 cm} = E[f(Y_d) (\mathbbm{1}\{D_d = d\} - \mathbbm{1}\{D_{1-d} = d\}) \mid X = x] \notag \\
		&\hspace{1 cm} = E[f(Y_d) \mathbbm{1}\{T= c\} \mid X = x] \label{Display: lemma proof, identification, LATE IV marginal distribution identification, numerator}
	\end{align}
	
	Lemma \ref{Lemma: identification, LATE IV model empirical content} shows that
	\begin{align*}
		&P(D = 1 \mid Z = 1, X = x) - P(D = 1 \mid Z = 0, X = x) \\
		&\hspace{1 cm} = P(T \in \{a,c\} \mid X = x) - P(T = a \mid X = x) = P(T = c \mid X = x)
	\end{align*}
	and similarly, 
	\begin{align*}
		&P(D = 0 \mid Z = 0, X = x) - P(D = 0 \mid Z = 1, X = x) \\
		&\hspace{1 cm} = P(T \in \{c, n\} \mid X = x) - P(T = n \mid X = x) = P(T = c \mid X = x)
	\end{align*}
	Thus for either $d \in \{1,0\}$,
	\begin{equation}
		P(D = d \mid Z = d, X = x) - P(D = d \mid Z = 1 - d, X = x) = P(T = c \mid X = x). \label{Display: lemma proof, identification, LATE IV marginal distribution identification, denomenator}
	\end{equation}
	
	It follows from \eqref{Display: lemma proof, identification, LATE IV marginal distribution identification, rewrite with T}, \eqref{Display: lemma proof, identification, LATE IV marginal distribution identification, numerator}, and \eqref{Display: lemma proof, identification, LATE IV marginal distribution identification, denomenator} that
	\begin{align*}
		E_{P_{d \mid x}}[f(Y_d)] &= E[f(Y_d) \mid D_1 > D_0, X = x] \\
		&= \frac{E[f(Y) \mathbbm{1}\{D = d\} \mid X = x, Z = d] - E[f(Y) \mathbbm{1}\{D = d\} \mid X = x, Z = 1-d]}{P(D = d \mid X = x, Z = d) - P(D = d \mid X = x, Z = 1-d)},
	\end{align*}
	and from \eqref{Display: lemma proof, identification, LATE IV marginal distribution identification, denomenator} that
	\begin{align*}
		s_x &= P(X = x \mid D_1 > D_0) = P(X = x \mid T = c) = \frac{P(T = c \mid X = x)P(X = x)}{\sum_{x'} P(T = c \mid X = x')P(X = x')} \\
		&= \frac{\left[P(D = 1 \mid X = x, Z = 1) - P(D = 1 \mid X = x, Z = 0) \right]P(X = x) }{\sum_{x'} \left[P(D = 1 \mid X = x', Z = 1) - P(D = 1 \mid X = x', Z = 0) \right]P(X = x')}.
	\end{align*}
	This concludes the proof.
\end{proof}
	
	\newpage
	
	\section{Appendix: properties of optimal transport}
\label{Appendix: properties of optimal transport}

Suppose that strong duality holds:
\begin{equation}
	\inf_{\pi \in \Pi(P_1,P_0)} \int c(y_1, y_0) d\pi(y_1,y_0) = \sup_{(\varphi, \psi) \in \Phi_c \cap (\mathcal{F}_c \times \mathcal{F}_c^c)} \int \varphi(y_1) dP_1(y_1) + \int \psi(y_0) dP_0(y_0) \label{Display: strong duality, differentiability of optimal transport}
\end{equation}
for sets of universally bounded functions $\mathcal{F}_c \subseteq L^1(P_1)$ and $\mathcal{F}_c^c \subseteq L^1(P_0)$. See lemmas \ref{Lemma: c-concave functions, smooth costs, strong duality} and \ref{Lemma: c-concave functions, indicator of convex set costs, strong duality} for examples.\footnote{
	\label{Footnote: how to find F_c and F_c^c}
	$\mathcal{F}_c$ and $\mathcal{F}_c^c$ are typically found with the following steps:
	\begin{enumerate}[label=(\roman*)]
		\item Start with a known strong duality result; for some $\Phi_{cs} \subseteq \Phi_c$, 
		\begin{equation*}
			\inf_{\pi \in \Pi(P_1,P_0)} \int c(y_1, y_0) d\pi(y_1,y_0) = \sup_{(\varphi, \psi) \in \Phi_{cs}} \int \varphi(y_1) dP_1(y_1) + \int \psi(y_0) dP_0(y_0)
		\end{equation*}
		\item Compute $\mathcal{F}_c(\Phi_{cs})$ and $\mathcal{F}_c^c(\Phi_{cs})$ defined by \eqref{Defn: F_c^c for arbitrary set of functions}. 
		\item Notice that $\mathcal{F}_c(\Phi_{cs}) \subseteq \mathcal{F}_c$ and $\mathcal{F}_c^c(\Phi_{cs}) \subseteq \mathcal{F}_c^c$ for known and easy to study sets $\mathcal{F}_c$, $\mathcal{F}_c^c$ 
	\end{enumerate}
	Lemma \ref{Lemma: c-concave functions, universal bound for c-concave functions} and remark \ref{Remark: discussion of universal bounds on c-concave functions} are useful to ensure $\mathcal{F}_c$ and $\mathcal{F}_c^c$ are universally bounded. 
}
 Then for suitable sets $\mathcal{F}_1$ and $\mathcal{F}_0$ with $\mathcal{F}_c \subseteq \mathcal{F}_1$ and $\mathcal{F}_c^c \subseteq \mathcal{F}_0$, the map $OT_c(P_1, P_0) = \inf_{\pi \in \Pi(P_1,P_0)} \int c(y_1, y_0) d\pi(y_1,y_0)$ can be viewed as 
\begin{align}
	&OT_c : \ell^\infty(\mathcal{F}_1) \times \ell^\infty(\mathcal{F}_0) \rightarrow \mathbb{R}, &&OT_c(P_1, P_0) = \sup_{(\varphi, \psi) \in \Phi_c \cap (\mathcal{F}_c \times \mathcal{F}_c^c)} P_1(\varphi) + P_0(\psi) \label{Display: optimal transport for continuity, differentiability}
\end{align}
where $P_d(f) = \int f(y_d) dP_d(y_d) = E_{P_d}[f(Y_d)]$. 

The functional in \eqref{Display: optimal transport for continuity, differentiability} is defined over the familiar Banach space $\ell^\infty(\mathcal{F}_1) \times \ell^\infty(\mathcal{F}_0)$. This makes it straightforward to show that optimal transport, as a functional from this space to $\mathbb{R}$, has certain desirable properties. 

\subsection{Continuity}
\label{Appendix: properties of optimal transport, continuity}

\begin{restatable}[Optimal transport is uniformly continuous]{lemma}{lemmaContinuityOfOptimalTransport}
	\label{Lemma: optimal transport is continuous}
	\singlespacing
	
	Suppose that for some universally bounded $\mathcal{F}_c \subseteq L^1(P_1)$ and $\mathcal{F}_c^c \subseteq L^1(P_0)$, \eqref{Display: strong duality, differentiability of optimal transport} holds. Then the optimal transport functional, given by \eqref{Display: optimal transport for continuity, differentiability}, is uniformly continuous.
\end{restatable}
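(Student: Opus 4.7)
The plan is to show Lipschitz continuity directly, which implies uniform continuity. The map $OT_c$ is the supremum of a family of functionals that are each affine (in fact linear) in the pair $(P_1, P_0)$, and the supremum of uniformly Lipschitz affine functionals is itself Lipschitz with the same constant. So the argument reduces to bookkeeping with the standard elementary inequality $\lvert \sup_a f(a) - \sup_a g(a) \rvert \leq \sup_a \lvert f(a) - g(a) \rvert$.

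Concretely, I would fix any two elements $(P_1, P_0), (Q_1, Q_0) \in \ell^\infty(\mathcal{F}_1) \times \ell^\infty(\mathcal{F}_0)$ and write
\begin{align*}
	\lvert OT_c(P_1, P_0) - OT_c(Q_1, Q_0) \rvert &\leq \sup_{(\varphi, \psi) \in \Phi_c \cap (\mathcal{F}_c \times \mathcal{F}_c^c)} \lvert (P_1(\varphi) + P_0(\psi)) - (Q_1(\varphi) + Q_0(\psi)) \rvert \\
	&\leq \sup_{\varphi \in \mathcal{F}_c} \lvert (P_1 - Q_1)(\varphi) \rvert + \sup_{\psi \in \mathcal{F}_c^c} \lvert (P_0 - Q_0)(\psi) \rvert.
\end{align*}
Because $\mathcal{F}_c \subseteq \mathcal{F}_1$ and $\mathcal{F}_c^c \subseteq \mathcal{F}_0$, each of the two terms on the right is bounded by the corresponding norm on $\ell^\infty(\mathcal{F}_1)$ or $\ell^\infty(\mathcal{F}_0)$, yielding
\begin{equation*}
	\lvert OT_c(P_1, P_0) - OT_c(Q_1, Q_0) \rvert \leq \lVert P_1 - Q_1 \rVert_{\ell^\infty(\mathcal{F}_1)} + \lVert P_0 - Q_0 \rVert_{\ell^\infty(\mathcal{F}_0)},
\end{equation*}
which is Lipschitz continuity (with constant $1$) in the product norm, and hence uniform continuity.

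There is essentially no hard step here. The only prerequisite is that the supremum defining $OT_c$ is finite, so that the differences above are well-defined; this is guaranteed by the hypothesis that $\mathcal{F}_c$ and $\mathcal{F}_c^c$ are universally bounded, which forces $\sup_{\varphi \in \mathcal{F}_c} \lvert P_1(\varphi) \rvert$ and $\sup_{\psi \in \mathcal{F}_c^c} \lvert P_0(\psi) \rvert$ to be finite for any $(P_1, P_0) \in \ell^\infty(\mathcal{F}_1) \times \ell^\infty(\mathcal{F}_0)$. The duality hypothesis \eqref{Display: strong duality, differentiability of optimal transport} is used only to justify viewing $OT_c$ as the supremum on the right-hand side of \eqref{Display: optimal transport for continuity, differentiability}; it does not enter the Lipschitz estimate itself.
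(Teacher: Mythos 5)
Your proof is correct and takes essentially the same route as the paper's: the paper factors $OT_c = \Xi_c \circ \mathcal{S}$ and establishes uniform continuity of each factor in separate lemmas, whereas you inline those same two Lipschitz estimates — $\lvert \sup f - \sup g \rvert \leq \sup \lvert f - g \rvert$ for the restricted supremum and the triangle inequality for the sum map — to get Lipschitz continuity (constant $1$) directly. One small misattribution: the finiteness of the supremum defining $OT_c(P_1,P_0)$ already follows from $P_1 \in \ell^\infty(\mathcal{F}_1)$ and $P_0 \in \ell^\infty(\mathcal{F}_0)$ together with $\mathcal{F}_c \subseteq \mathcal{F}_1$, $\mathcal{F}_c^c \subseteq \mathcal{F}_0$; the universal boundedness hypothesis on $\mathcal{F}_c$, $\mathcal{F}_c^c$ is not what does that work.
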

\begin{proof}
	\singlespacing
	
	Define 
	\begin{align*}
		&\mathcal{S} : \ell^\infty(\mathcal{F}_1) \times \ell^\infty(\mathcal{F}_0) \rightarrow \ell^\infty(\mathcal{F}_1 \times \mathcal{F}_0), &&\mathcal{S}(H_1, H_0)(\varphi,\psi) = H_1(\varphi) + H_0(\psi) \\
		&\Xi_c : \ell^\infty(\mathcal{F}_1 \times \mathcal{F}_0) \rightarrow \mathbb{R}, &&\Xi_c[G] = \sup_{(\varphi,\psi) \in \Phi_c \cap (\mathcal{F}_c \times \mathcal{F}_c^c)} G(\varphi,\psi)
	\end{align*}
	and notice that $OT_c(H_1,H_0) = \Xi_c(\mathcal{S}(H_1, H_0))$. Since $s : \mathbb{R}^2 \rightarrow \mathbb{R}$ given by $s(h_1, h_2) = h_1 + h_2$ is uniformly continuous, we have that $\mathcal{S}$ is uniformly continuous (see lemma \ref{Lemma: bounded function spaces, continuity with f: R^K -> R^M}). Lemma \ref{Lemma: bounded function spaces, uniform continuity of restricted sup} shows that $\Xi_c$ is uniformly continuous. The composition of uniformly continuous functions is uniformly continuous, implying $OT_c$ is uniformly continuous. This completes the proof.
\end{proof}

\subsection{Directional Differentiability}
\label{Appendix: properties of optimal transport, subsection differentiability}

The optimal transport functional given by \eqref{Display: optimal transport for continuity, differentiability} is Hadamard directionally differentiable.\footnote{
	Recall the definition, found in \cite{fang2019inference}: let $\mathbb{D}$, $\mathbb{E}$ be Banach spaces (complete, normed, vector spaces), and $\phi : \mathbb{D}_\phi \subseteq \mathbb{D} \rightarrow \mathbb{E}$. $\phi$ is \textbf{Hadamard directionally differentiable} at $x_0 \in \mathbb{D}_\phi$ tangentially to $\mathbb{D}_T \subseteq \mathbb{D}$ if there exists a continuous map $\phi_{x_0}' : \mathbb{D}_T \rightarrow\mathbb{E}$ such that
	\begin{equation*}
		\lim_{n \rightarrow \infty} \left\lVert \frac{\phi(x_0 + t_n h_n) - \phi(x_0)}{t_n} - \phi_{x_0}'(h) \right\rVert_{\mathbb{E}} = 0
	\end{equation*}
	for all sequences $\{h_n\}_{n=1}^\infty \subseteq \mathbb{D}$ and $\{t_n\}_{n=1}^\infty\subseteq \mathbb{R}_+$ such that $h_n \rightarrow h \in \mathbb{D}_T$ and $t_n \downarrow 0$ as $n \rightarrow \infty$, and $x_0 + t_n h_n \in \mathbb{D}_\phi$ for all $n$.
	}
The formal result, stated below, requires that $\mathcal{F}_c$ and $\mathcal{F}_c^c$ each be equipped with a semimetric. The semimetrics chosen must be such that $P_1 \in \ell^\infty(\mathcal{F}_c)$ and $P_0 \in \ell^\infty(\mathcal{F}_c^c)$ are continuous and the product space $\mathcal{F}_c \times \mathcal{F}_c^c$ and its subset $\Phi_c \cap (\mathcal{F}_c \times \mathcal{F}_c^c)$ are compact. 

The setting suggests a very convenient semimetric. Let $P$ be the distribution of an observation, i.e. $(Y, D, Z, X) \sim P$. Note that under assumption \ref{Assumption: setting}, the distributions $P_{d \mid x}$ are dominated by $P$ with bounded densities $\frac{dP_{d \mid x}}{dP}$. Specifically, recall that 
\begin{align*}
	E_{P_{d \mid x}}[f(Y_d)] &= E[f(Y_d) \mid D_1 > D_0, X = x] \\
	&= \frac{E[f(Y) \mathbbm{1}\{D = d\} \mid Z = d, X = x] - E[f(Y) \mathbbm{1}\{D = d\} \mid Z = 1-d, X = x]}{P(D = d \mid Z = d, X = x) - P(D = d \mid Z = 1-d, X = x)}
\end{align*}
Let $\mathbbm{1}_{d,x,z}(D,X,Z) = \mathbbm{1}\{D = d, X = x, Z = z\}$, $p_{d,x,z} = P(D = d, X = x, Z = z)$, and $p_{x,z} = P(X = x, Z = z)$. Observe that
\begin{align*}
	&E[f(Y_d) \mid D_1 > D_0, X = x] = E\left[f(Y) \frac{\mathbbm{1}_{d,x,d}(D,X,Z)/p_{x,d} - \mathbbm{1}_{d,x,1-d}(D,X,Z)/p_{x,1-d}}{p_{d,x,d}/p_{x,d} - p_{d,x,1-d}/p_{x,1-d}}\right] \\
	&\hspace{1 cm} = E\left[f(Y) E\left[\frac{\mathbbm{1}_{d,x,d}(D,X,Z)/p_{x,d} - \mathbbm{1}_{d,x,1-d}(D,X,Z)/p_{x,1-d}}{p_{d,x,d}/p_{x,d} - p_{d,x,1-d}/p_{x,1-d}} \mid Y\right]\right]
\end{align*}
reveals the densities to be $\frac{dP_{d \mid x}}{dP}(Y) = E\left[\frac{\mathbbm{1}_{d,x,d}(D,X,Z)/p_{x,d} - \mathbbm{1}_{d,x,1-d}(D,X,Z)/p_{x,1-d}}{p_{d,x,d}/p_{x,d} - p_{d,x,1-d}/p_{x,1-d}} \mid Y\right]$. 

We now drop the subscript $x$ for the remainder of this appendix. Because $P$ dominates both $P_1$ and $P_0$ with bounded densities, the $L_{2,P}$ semimetric works very well:
\begin{equation}
	L_{2,P}(f_1,f_2) = \sqrt{P((f_1 - f_2)^2)} = \sqrt{E_P[(f_1(Y) - f_2(Y))^2]} \label{Defn: L2 semimetric, P}
\end{equation}
Equip the product space $\mathcal{F}_1 \times \mathcal{F}_0$ with the product semimetric:
\begin{equation}
	L_2((f_1,g_1), (f_2,g_2)) = \sqrt{L_{2,P}(f_1, f_2)^2 + L_{2,P}(g_1, g_2)^2} \label{Defn: L2 semimetric, product space}
\end{equation}
To apply the $L_{2,P}$ semimetric, each $f \in \mathcal{F}_1$ and $f \in \mathcal{F}_0$ are defined on whole domain $\mathcal{Y}$.

\begin{restatable}[Hadamard directional differentiability of optimal transport]{lemma}{lemmaHadamardDirectionalDifferentiabilityOfOptimalTransport}
	\label{Lemma: Hadamard differentiability, optimal transport}
	\singlespacing
	
	Let $c : \mathcal{Y} \times \mathcal{Y} \rightarrow \mathbb{R}$ be lower semicontinuous, $\mathcal{F}_1, \mathcal{F}_0$ be sets of measurable functions mapping $\mathcal{Y}$ to $\mathbb{R}$, and $\mathcal{F}_c \subseteq \mathcal{F}_1$ and $\mathcal{F}_c^c \subseteq \mathcal{F}_0$ be universally bounded subsets. Suppose that
	\begin{enumerate}
		\item Strong duality holds:
		\begin{equation*}
			\inf_{\pi \in \Pi(P_1,P_0)} \int c(y_1, y_0) d\pi(y_1,y_0) = \sup_{(\varphi, \psi) \in \Phi_c \cap (\mathcal{F}_c \times \mathcal{F}_c^c)} \int \varphi(y_1) dP_1(y_1) + \int \psi(y_0) dP_0(y_0),
		\end{equation*}
		\item $P$ dominates $P_1$ and $P_0$ with bounded densities, 
		\item $\mathcal{F}_d$ is $P$-Donsker and $\sup_{f \in \mathcal{F}_d} \lvert P(f) \rvert < \infty$ for each $d = 1, 0$, and
		\item $(\mathcal{F}_1 \times \mathcal{F}_0, L_2)$ and the subset
		\begin{equation*}
			\Phi_c \cap (\mathcal{F}_c \times \mathcal{F}_c^c) = \left\{(\varphi,\psi) \in \mathcal{F}_c \times \mathcal{F}_c^c \; ; \; \varphi(y_1) + \psi(y_0) \leq c(y_1, y_0) \right\}
		\end{equation*}
		are complete. 
	\end{enumerate}
	Then $OT_c : \ell^\infty(\mathcal{F}_1) \times \ell^\infty(\mathcal{F}_0) \rightarrow \mathbb{R}$ defined by
	\begin{equation*}
		OT_c(P_1, P_0) = \sup_{(\varphi, \psi) \in \Phi_c \cap (\mathcal{F}_c \times \mathcal{F}_c^c)} P_1(\varphi) + P_0(\psi) 
	\end{equation*}
	is Hadamard directionally differentiable at $(P_1, P_0)$ tangentially to 
	\begin{equation}
		\mathbb{D}_{Tan} = \mathcal{C}(\mathcal{F}_1, L_{2, P}) \times \mathcal{C}(\mathcal{F}_0, L_{2, P}). \label{Defn: tangent set for optimal transport derivative}
	\end{equation}
	The set of maximizers $\Psi_c(P_1, P_0) = \argmax_{(\varphi,\psi) \in \Phi_c \cap (\mathcal{F}_c \times \mathcal{F}_c^c)} P_1(\varphi) + P_0(\psi)$ is nonempty, and the derivative $OT_{c, (P_1,P_0)}' : \mathbb{D}_{Tan} \rightarrow \mathbb{R}$ is given by
	\begin{equation*}
		OT_{c,(P_1,P_0)}'(H_1, H_0) = \sup_{(\varphi,\psi) \in \Psi_c(P_1, P_0)} H_1(\varphi) + H_0(\psi)
	\end{equation*}
\end{restatable}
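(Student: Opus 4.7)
The plan is to treat $OT_c$ as a supremum functional over a compact constraint set and apply a Danskin-type envelope argument. Set $K = \Phi_c \cap (\mathcal{F}_c \times \mathcal{F}_c^c)$ and $\ell((\varphi,\psi); Q_1, Q_0) = Q_1(\varphi) + Q_0(\psi)$, so that $OT_c(Q_1, Q_0) = \sup_{(\varphi,\psi) \in K} \ell((\varphi,\psi); Q_1, Q_0)$ is linear in $(Q_1, Q_0)$. The dominance assumption gives $L_{2,P_d}(f,g) \leq C \, L_{2,P}(f,g)$ for some constant $C$, so the objective $(\varphi,\psi) \mapsto P_1(\varphi) + P_0(\psi)$ is $L_2$-continuous on $K$. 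Since the $\mathcal{F}_d$ are $P$-Donsker they are totally bounded under $L_{2,P}$; together with completeness of $\Phi_c \cap (\mathcal{F}_c \times \mathcal{F}_c^c)$ (hypothesis 4), this makes $K$ compact in $L_2$. Weierstrass then guarantees $\Psi_c(P_1, P_0)$ is nonempty, and it is closed as the preimage of a point under a continuous map, hence compact.

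First I would establish the upper bound. Fix sequences $t_n \downarrow 0$ and $(H_{1,n}, H_{0,n}) \to (H_1, H_0) \in \mathbb{D}_{Tan}$ in the product sup norm. Let $(\varphi_n, \psi_n) \in K$ attain $OT_c(P_1 + t_n H_{1,n}, P_0 + t_n H_{0,n})$, or $o(t_n)$-nearly attain it if the perturbed sup is not attained. Using feasibility of $(\varphi_n, \psi_n)$ in the unperturbed problem,
\begin{equation*}
\frac{OT_c(P_1 + t_n H_{1,n}, P_0 + t_n H_{0,n}) - OT_c(P_1, P_0)}{t_n} \leq H_{1,n}(\varphi_n) + H_{0,n}(\psi_n).
\end{equation*}
By $L_2$-compactness of $K$, pass to a subsequence along which $(\varphi_n, \psi_n) \to (\varphi^*, \psi^*) \in K$. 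Continuity of $OT_c$ (Lemma \ref{Lemma: optimal transport is continuous}) combined with $L_2$-continuity of $\ell(\cdot; P_1, P_0)$ forces $P_1(\varphi^*) + P_0(\psi^*) = OT_c(P_1, P_0)$, so $(\varphi^*, \psi^*) \in \Psi_c(P_1, P_0)$. Decomposing
\begin{equation*}
H_{1,n}(\varphi_n) - H_1(\varphi^*) = [H_{1,n}(\varphi_n) - H_1(\varphi_n)] + [H_1(\varphi_n) - H_1(\varphi^*)],
\end{equation*}
the first term vanishes by sup-norm convergence of $H_{1,n}$ to $H_1$ and the second by $L_{2,P}$-continuity of $H_1$; the $H_0$ term is handled identically. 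Hence the $\limsup$ along the subsequence is at most $H_1(\varphi^*) + H_0(\psi^*) \leq \sup_{\Psi_c(P_1,P_0)}[H_1 + H_0]$, and the standard subsequence argument extends this bound to the full sequence.

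For the matching lower bound, compactness of $\Psi_c(P_1, P_0)$ together with $L_{2,P}$-continuity of $H_1, H_0$ ensures the candidate derivative's supremum is attained at some $(\varphi^\star, \psi^\star) \in \Psi_c(P_1, P_0)$. Feasibility of this point in the perturbed problem yields
\begin{equation*}
OT_c(P_1 + t_n H_{1,n}, P_0 + t_n H_{0,n}) \geq OT_c(P_1, P_0) + t_n [H_{1,n}(\varphi^\star) + H_{0,n}(\psi^\star)],
\end{equation*}
and $H_{d,n}(\varphi^\star) \to H_d(\varphi^\star)$ by sup-norm convergence. Continuity of the candidate derivative on $\mathbb{D}_{Tan}$ is immediate since it is a supremum of $1$-Lipschitz linear maps in $(H_1, H_0)$ under the product sup norm.

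The main obstacle is the limit identification in the upper bound. Showing that limit points of approximate perturbed maximizers lie in $\Psi_c(P_1, P_0)$ rests on $L_2$-continuity of $\ell(\cdot; P_1, P_0)$, which is why the dominance hypothesis is needed. The convergence $H_{1,n}(\varphi_n) \to H_1(\varphi^*)$ is the other delicate point: it requires the interplay between sup-norm convergence of $H_{d,n} \to H_d$ in $\ell^\infty(\mathcal{F}_d)$ and $L_{2,P}$-convergence of $(\varphi_n, \psi_n)$, mediated crucially by the tangent condition $H_d \in \mathcal{C}(\mathcal{F}_d, L_{2,P})$. Once this passage is secured, the envelope calculation is routine.
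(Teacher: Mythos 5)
Your proposal is correct, and it takes a genuinely different route from the paper. The paper factors $OT_c = \Xi_c \circ \mathcal{S}$ where $\mathcal{S}(H_1,H_0)(\varphi,\psi) = H_1(\varphi) + H_0(\psi)$ is linear and continuous (hence its own derivative), then invokes Lemma \ref{Lemma: Hadamard differentiability, supremum of bounded function} (the Fang--Santos supremum lemma, re-proved in the appendix for semimetric spaces using Berge's maximum theorem to get upper hemicontinuity of the argmax correspondence plus a $\delta$-net argument for the modulus-of-continuity bound), and closes with the chain rule after tracking that $\mathcal{S}$ maps $\mathcal{C}(\mathcal{F}_1, L_{2,P}) \times \mathcal{C}(\mathcal{F}_0, L_{2,P})$ into $\mathcal{C}(\mathcal{F}_1 \times \mathcal{F}_0, L_2)$. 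You instead run the Danskin envelope calculation directly: for the upper bound you pick $o(t_n)$-near-maximizers of the perturbed problem, extract an $L_2$-convergent subsequence from compactness of $K$, and use $L_2$-continuity of $P_1 + P_0$ to push the limit into $\Psi_c(P_1,P_0)$; for the lower bound you exploit feasibility of a derivative-maximizer in the perturbed problem. Both arguments rest on exactly the same ingredients --- compactness of the constraint set (Donsker $\Rightarrow$ totally bounded, plus hypothesis 4 for completeness), $L_2$-continuity of the dual objective at $(P_1,P_0)$ (from dominance), and $L_{2,P}$-continuity of the tangent direction $(H_1,H_0)$ --- but your direct subsequential argument bypasses the detour through the Berge maximum theorem and the argmax correspondence, at the cost of re-deriving the envelope estimate inline rather than quoting a reusable supremum lemma. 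Two minor points worth tightening: your upper-bound display drops the $o(1)$ slack from the approximate maximization (harmless for the limsup but worth writing), and when you invoke ``continuity of $OT_c$'' to identify the limit point, it is cleaner to note directly that $t_n H_{d,n} \to 0$ in sup norm so that $(P_1 + t_n H_{1,n})(\varphi_n) \to P_1(\varphi^*)$ by $L_2$-continuity of $P_1$ plus boundedness of $H_{1,n}$, without appealing to the separate continuity lemma.
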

\begin{proof}
	\singlespacing
	
	For legibility, the proof is broken down into four steps:
	\begin{enumerate}
		\item Define 
		\begin{align*}
			&\mathcal{S} : \ell^\infty(\mathcal{F}_1) \times \ell^\infty(\mathcal{F}_0) \rightarrow \ell^\infty(\mathcal{F}_1 \times \mathcal{F}_0), &&\mathcal{S}(H_1, H_0)(\varphi,\psi) = H_1(\varphi) + H_0(\psi) \\
			&\Xi_c : \ell^\infty(\mathcal{F}_1 \times \mathcal{F}_0) \rightarrow \mathbb{R}, &&\Xi_c[G] = \sup_{(\varphi,\psi) \in \Phi_c \cap (\mathcal{F}_c \times \mathcal{F}_c^c)} G(\varphi,\psi)
		\end{align*}
		and notice that $OT_c(H_1,H_0) = \Xi_c(\mathcal{S}(H_1, H_0))$. This suggests application of the chain rule.
		
		\item $\mathcal{S}$ is linear and continuous at every point of $\ell^\infty(\mathcal{F}_1) \times \ell^\infty(\mathcal{F}_0)$, which implies it is (fully) Hadamard differentiable at any $(H_1,H_0) \in \ell^\infty(\mathcal{F}_1) \times \ell^\infty(\mathcal{F}_0)$ tangentially to $\ell^\infty(\mathcal{F}_1) \times \ell^\infty(\mathcal{F}_0)$, and is its own derivative. Indeed, for any $(H_{1n}, H_{0n}) \rightarrow (H_1, H_0) \in \ell^\infty(\mathcal{F}_1) \times \ell^\infty(\mathcal{F}_0)$ and any $t_n \downarrow 0$,
		\begin{align*}
			&\lim_{n \rightarrow \infty} \left\lVert \frac{\mathcal{S}((H_1, H_0) + t_n (H_{1n}, H_{0n})) - \mathcal{S}(H_1, H_0)}{t_n} - \mathcal{S}(H_1, H_0) \right\rVert_{\mathcal{F}_c \times \mathcal{F}_c^c} \\
			&\hspace{3 cm} = \lim_{n \rightarrow \infty} \left\lVert \mathcal{S}(H_{1n}, H_{0n}) - \mathcal{S}(H_1, H_0) \right\rVert_{\mathcal{F}_c \times \mathcal{F}_c^c} = 0
		\end{align*}
		
		\item Consider $\Xi_c$. Verify the conditions of lemma \ref{Lemma: Hadamard differentiability, supremum of bounded function}:
		\begin{enumerate}
			\item $(\mathcal{F}_1 \times \mathcal{F}_0, L_2)$ and the subset $\Phi_c \cap (\mathcal{F}_c \times \mathcal{F}_c^c)$ are compact. 
			
			First recall that a subset of semimetric space is compact if and only if it is totally bounded and complete.\footnote{See \cite{vaart1997weak}, footnote on p. 17.} Completeness of both sets is assumed, so it suffices to show they are totally bounded. Since $\Phi_c \cap (\mathcal{F}_c \times \mathcal{F}_c^c)$ is a subset of $\mathcal{F}_1 \times \mathcal{F}_0$, it suffices to show the latter set is totally bounded. 
			
			Using the assumption that $\mathcal{F}_d$ is $P$-Donsker and $\sup_{f \in \mathcal{F}_d} \lvert P(f) \rvert < \infty$, we have that $\sup_{\varphi \in \mathcal{F}_c} \lvert P(\varphi) \rvert < \infty$ and $(\mathcal{F}_d, L_{2, P})$ is totally bounded (see \cite{vaart1997weak} problem 2.1.2.). It follows that the product space $(\mathcal{F}_1 \times \mathcal{F}_0, L_2)$ is totally bounded.\footnote{
				For $\varepsilon > 0$, let $(f_1,\ldots, f_K)$ be the centers of $L_{2,P}$-balls of radius $\varepsilon / \sqrt{2}$ that cover $\mathcal{F}_1$, and $(g_1, \ldots, g_M)$ be the center of $L_{2,P}$-balls of radius $\varepsilon / \sqrt{2}$ that cover $\mathcal{F}_0$. Then for any $(f, g) \in \mathcal{F}_1 \times \mathcal{F}_0$, there exists $f_k$ and $g_m$ such that $L_{2,P}(f,f_k) < \varepsilon / \sqrt{2}$ and $L_{2,P}(g, g_m) < \varepsilon/  \sqrt{2}$, and so
				\begin{equation*}
					L_2((f,g), (f_k, g_m) = \sqrt{L_{2,P}(f, f_k)^2 + L_{2,P}(g,g_m)^2} < \sqrt{(\varepsilon/\sqrt{2})^2 + (\varepsilon/\sqrt{2})^2} = \varepsilon
				\end{equation*}
				and thus the $KM$ balls in $(\mathcal{F}_1 \times \mathcal{F}_0)$ of radius $\varepsilon$ centered at $(f_k, g_m)$ for some $k,m$ cover $\mathcal{F}_1 \times \mathcal{F}_0$.
			}
			
			\item $\mathcal{S}(P_1, P_0) \in \mathcal{C}(\mathcal{F}_1 \times \mathcal{F}_0, L_2)$. 
			
			Notice that 
			\begin{equation*}
				\lvert P_1(f_1) - P_1(f_2) \rvert \leq P_1(\lvert f_1 - f_2 \rvert) \leq \sqrt{P_1((f_1 - f_2)^2)} = L_{2, P_1}(f_1, f_2)
			\end{equation*}
			where the second inequality is an applications of Jensen's inequality. This implies $P_1 \in \mathcal{C}(\mathcal{F}_1, L_{2, P_1})$. Moreover, since $P_1 \ll P$ and $\frac{dP_1}{dP} \leq K_1 < \infty$ for some $K_1 \in \mathbb{R}$,
			\begin{align*}
				L_{2,P_1}(f_1, f_2) = \left(\int (f_1 - f_2)^2 \frac{dP_1}{dP} dP \right)^{1/2} \leq K_1^{1/2} \left(\int (f_1 - f_2)^2 dP\right)^{1/2} = K_1^{1/2} L_{2,P}(f_1, f_2) 
			\end{align*}
			shows that $\mathcal{C}(\mathcal{F}_1, L_{2,P_1}) \subseteq \mathcal{C}(\mathcal{F}_1, L_{2,P})$ and so $P_1 \in \mathcal{C}(\mathcal{F}_1, L_{2,P})$. A similar argument shows $P_0 \in \mathcal{C}(\mathcal{F}_0, L_{2, P})$. 
			
			Use the inequalities above to see that 
			\begin{align*}
				&\lvert \mathcal{S}(P_1, P_0)(f_1,g_1) - \mathcal{S}(P_1, P_0)(f_2,g_2) \rvert = \lvert P_1(f_1) - P_1(f_2) + P_0(g_1) - P_0(g_2) \rvert \notag \\
				&\hspace{1 cm} \leq L_{2, P_1}(f_1, f_2) + L_{2, P_0}(g_1, g_2) \leq K_1^{1/2}L_{2,P}(f_1, f_2) + K_0^{1/2} L_{2,P}(\psi_1, \psi_2) \\
				&\hspace{1 cm} \leq 2\max\{K_1^{1/2}, K_0^{1/2}\}\max\{L_{2, P}(f_1, f_2), L_{2, P}(g_1, g_2)\} \notag \\
				&\hspace{1 cm} = 2\max\{K_1^{1/2}, K_0^{1/2}\} \sqrt{\max\{L_{2, P}(f_1, f_2)^2, L_{2, P}(g_1, g_2)^2\}} \\
				&\hspace{1 cm} \leq 2\max\{K_1^{1/2}, K_0^{1/2}\}\sqrt{L_{2, P}(f_1, f_2)^2 + L_{2, P}(g_1, g_2)^2} \notag \\
				&\hspace{1 cm} = 2\max\{K_1^{1/2}, K_0^{1/2}\}L_2((f_1, g_1), (f_2, g_2)) 
			\end{align*}
			hence $L_2((f_1, g_1), (f_2, g_2)) < \varepsilon /(2\max\{K_1^{1/2}, K_0^{1/2}\})$ implies 
			\begin{equation*}
				\lvert \mathcal{S}(P_1, P_0)(f_1, g_1) - \mathcal{S}(P_1, P_0)(f_2, g_2) \rvert < \varepsilon
			\end{equation*}
			and therefore $\mathcal{S}(P_1, P_0) \in \mathcal{C}(\mathcal{F}_1 \times \mathcal{F}_0, L_2)$.
		\end{enumerate}
		Lemma \ref{Lemma: Hadamard differentiability, supremum of bounded function} shows that $\Xi_c$ is Hadamard directionally differentiable at $\mathcal{S}(P_1, P_0)$ tangentially to $\mathcal{C}(\mathcal{F}_1 \times \mathcal{F}_0, L_2)$, with derivative
		\begin{align*}
			&\Xi_{c, \mathcal{S}(P_1, P_0)}' : \mathcal{C}(\mathcal{F}_1 \times \mathcal{F}_0, L_2) \rightarrow \mathbb{R}, &&\Xi_{c, \mathcal{S}(P_1, P_0)}'(H) = \sup_{(\varphi, \psi) \in \Psi_c(P_1,P_0)} H(\varphi,\psi)
		\end{align*}
		where $\Psi_c(P_1,P_0) = \argmax_{(\varphi,\psi) \in \Phi_c \cap (\mathcal{F}_c \times \mathcal{F}_c^c)} P_1(\varphi) + P_0(\psi)$ is nonempty, because $P_1 + P_0 = \mathcal{S}(P_1,P_0)$ is continuous and $\Phi_c \cap (\mathcal{F}_c \times \mathcal{F}_c^c)$ is compact. 
		
		\item Now consider the tangent spaces to ensure the composition of the derivatives is well defined. Observe that if $(H_1, H_0) \in \mathcal{C}(\mathcal{F}_1, L_{2,P}) \times \mathcal{C}(\mathcal{F}_0, L_{2, P})$ then $\mathcal{S}(H_1, H_0) = H_1 + H_0 \in \mathcal{C}(\mathcal{F}_1 \times \mathcal{F}_0, L_2)$.\footnote{
			Fix $(f,g) \in \mathcal{F}_1 \times \mathcal{F}_0$ and let $\delta_1 > 0$ and $\delta_0 > 0$ be such that $L_{2,P_1}(f, \tilde{f}) < \delta_1$ implies $H_1(f, \tilde{f}) < \varepsilon/2$ and $L_{2,P_0}(g,\tilde{g}) < \delta_0$ implies $H_0(g,\tilde{g}) < \varepsilon/2$. The inequality 
			\begin{align*}
				L_{2, P}(f, \tilde{f}) + L_{2, P}(g,\tilde{g}) &\leq 2\max\{L_{2, P}(f, \tilde{f}), L_{2, P}(g,\tilde{g})\} \\
				&= 2\sqrt{\max\{L_{2, P}(f, \tilde{f})^2, L_{2, P}(g,\tilde{g})^2\}} = 2L_2((f, g), (\tilde{f}, \tilde{g}))
			\end{align*}
			implies that if $L_2((f, g), (\tilde{f}, \tilde{g})) < \min\{\delta_1,\delta_2\}/2$ then $\lvert \mathcal{S}(H_1, H_0)(f,g) - \mathcal{S}(H_1, H_0)(\tilde{f}, \tilde{g}) \rvert \leq \lvert H_1(f) - H_1(\tilde{f}) \rvert + \lvert H_0(g) - H_0(\tilde{g}) \rvert < \varepsilon$.
			} 
			It follows from the chain rule (lemma \ref{Lemma: Hadamard differentiability, chain rule}) that $OT_c$ is Hadamard directionally differentiable at $(P_1, P_0)$ tangentially to $\mathcal{C}(\mathcal{F}_1, L_{2,P}) \times \mathcal{C}(\mathcal{F}_0, L_{2, P})$ with derivative $OT_c : \mathcal{C}(\mathcal{F}_1, L_{2,P}) \times \mathcal{C}(\mathcal{F}_0, L_{2, P}) \rightarrow \mathbb{R}$ given by
		\begin{equation*}
			OT_{c, (P_1,P_0)}'(H_1, H_0) = \Xi_{c, \mathcal{S}(P_1,P_0)}'(\mathcal{S}_{(P_1,P_0)}'(H_1,H_0)) = \sup_{(\varphi,\psi) \in \Psi_c(P_1,P_0)} H_1(\varphi) + H_0(\psi)
		\end{equation*}
	\end{enumerate}	
\end{proof}

\subsection{Full differentiability}
\label{Appendix: properties of optimal transport, subsection differentiability, subsubsection full differentiability}

The property distinguishing directional from full differentiability on a subspace is linearity of the derivative (\cite{fang2019inference}, proposition 2.1). In the case of optimal transport, the derivative found in lemma \ref{Lemma: Hadamard differentiability, optimal transport} is linear on a large subspace of the tangent space when the solution to the dual problem is suitably unique. When it holds, this is sufficient for simpler bootstrap procedures to work for inference. 

The dual solutions
\begin{equation*}
	(\varphi,\psi) \in \Psi_c(P_1, P_0) = \argmax_{(\varphi,\psi) \in \Phi_c \cap (\mathcal{F}_c \times \mathcal{F}_c^c)} P_1(\varphi) + P_0(\psi)
\end{equation*}
are referred to as \textbf{Kantorovich potentials}. Notice that for any $s \in \mathbb{R}$, 
\begin{equation*}
	P_1(\varphi + s) + P_0(\psi - s) = P_1(\varphi) + P_0(\psi)
\end{equation*}
shows the most one can hope for is uniqueness up to a constant; if $(\varphi,\psi) \in \Psi_c(P_1,P_0)$, then $(\varphi + s, \psi - s) \in \Psi_c(P_1, P_0)$ as well.\footnote{
	See \cite{staudt2022uniqueness} for extended discussion on uniqueness of Kantorovich potentials.
	}
It is well known in the optimal transport literature that when the distributions $P_1$, $P_0$ have full support on a convex, compact subset of $\mathbb{R}$ and $c$ is differentiable, the Kantorovich potential is indeed unique in this way on the supports of $P_1$ and $P_0$.

\begin{restatable}[]{lemma}{lemmaKantorovichPotentialSufficientConditionsForUniqueness}
	\label{Lemma: Kantorovich potential, sufficient conditions for uniqueness}
	\singlespacing
	
	Suppose that 
	\begin{enumerate}
		\item $c(y_1, y_0)$ is continuously differentiable. 
		\item $P_d$ has compact support $\mathcal{Y}_d = [y_d^\ell, y_d^u] \subseteq \mathbb{R}$, and
	\end{enumerate}
	Let $\mathcal{F}_c$ and $\mathcal{F}_c^c$ be defined by \eqref{Defn: F_c for smooth costs} and \eqref{Defn: F_c^c for smooth costs} respectively, and 
	\begin{equation*}
		\Psi_c(P_1, P_0) = \argmax_{(\varphi,\psi) \in \Phi_c \cap (\mathcal{F}_c \times \mathcal{F}_c^c)} P_1(\varphi) + P_0(\psi)
	\end{equation*}
	Then for any $(\varphi_1,\psi_1), (\varphi_2, \psi_2) \in \Psi_c(P_1, P_0)$, there exists $s \in \mathbb{R}$ such that for all $(y_1, y_0) \in \mathcal{Y}_1 \times \mathcal{Y}_0$
	\begin{align*}
		&\varphi_1(y_1) - \varphi_2(y_1) = s, &&\psi_1(y_0) - \psi_2(y_0) = -s
	\end{align*}
\end{restatable}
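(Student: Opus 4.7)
My approach combines complementary slackness at an optimal primal coupling with first-order conditions at points where the Kantorovich potentials are differentiable, then upgrades the resulting almost-everywhere statement to full constancy via absolute continuity on an interval.

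First I would fix a primal minimizer $\pi^\star \in \Pi(P_1, P_0)$, which exists because $c$ is continuous (hence lower semicontinuous and bounded) on the compact set $\mathcal{Y} \times \mathcal{Y}$. Let $\Gamma = \text{supp}(\pi^\star) \subseteq \mathcal{Y}_1 \times \mathcal{Y}_0$. By strong duality, any $(\varphi, \psi) \in \Psi_c(P_1, P_0)$ satisfies $\int (\varphi(y_1) + \psi(y_0)) \, d\pi^\star = \int c \, d\pi^\star$; combined with the pointwise inequality $\varphi(y_1) + \psi(y_0) \leq c(y_1, y_0)$, this forces $\varphi(y_1) + \psi(y_0) = c(y_1, y_0)$ $\pi^\star$-almost surely, and continuity extends the identity to all of $\Gamma$. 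Next, the projection $\Gamma_1$ of $\Gamma$ onto the first factor equals $\mathcal{Y}_1$: $\Gamma_1$ is closed as the continuous image of a compact set, and for any $y_1 \in \mathcal{Y}_1$ and open neighborhood $U \ni y_1$, the support assumption $\text{supp}(P_1) = \mathcal{Y}_1$ yields $P_1(U) > 0$, whence $\pi^\star(U \times \mathcal{Y}_0) > 0$ and $\Gamma \cap (U \times \mathcal{Y}_0) \neq \emptyset$; so $y_1 \in \Gamma_1$. The symmetric statement holds for the second projection.

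The central step is to show that $f(y_1) := \varphi_1(y_1) - \varphi_2(y_1)$ is constant on $\mathcal{Y}_1$. Each $\varphi_i \in \mathcal{F}_c$ is Lipschitz on $\mathcal{Y}$, so $f$ is Lipschitz (hence absolutely continuous) on $[y_1^\ell, y_1^u]$ and differentiable at Lebesgue-a.e. point. Fix any interior $y_1$ at which both $\varphi_1$ and $\varphi_2$ are differentiable, and pick $y_0 \in \mathcal{Y}_0$ with $(y_1, y_0) \in \Gamma$ (possible by the previous paragraph). For each $i \in \{1,2\}$, the map $t \mapsto c(t, y_0) - \varphi_i(t) - \psi_i(y_0)$ is nonnegative on $\mathcal{Y}$ and vanishes at $t = y_1$, so $y_1$ is an interior minimizer; continuous differentiability of $c$ and differentiability of $\varphi_i$ at $y_1$ yield the first-order condition $\varphi_i'(y_1) = \partial_1 c(y_1, y_0)$. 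Hence $f'(y_1) = 0$ Lebesgue-a.e., and absolute continuity on the interval forces $f \equiv s$ for some $s \in \mathbb{R}$. Applying the identical argument to $g(y_0) := \psi_1(y_0) - \psi_2(y_0)$ on $\mathcal{Y}_0$ yields $g \equiv t$ for some $t \in \mathbb{R}$; evaluating the identity $f(y_1) + g(y_0) = 0$ at any $(y_1, y_0) \in \Gamma$ forces $t = -s$.

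The main obstacle is the upgrade from $f' = 0$ Lebesgue-a.e. to $f$ being constant on the entire interval $[y_1^\ell, y_1^u]$. This upgrade requires that Lebesgue-a.e. interior $y_1$ admit some partner $y_0$ with $(y_1, y_0) \in \Gamma$, and the hypotheses that $\mathcal{Y}_1$ is a single connected interval and coincides with $\text{supp}(P_1)$ are precisely what supply this. Without an interval support, one would only get constancy on each connected component of $\text{supp}(P_1)$; without $\text{supp}(P_1) = \mathcal{Y}_1$, the projection $\Gamma_1$ could omit a Lebesgue-positive subset, and the fundamental theorem of calculus argument would fail.
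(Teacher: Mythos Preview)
Your proof is correct and follows essentially the same route as the paper: existence of a primal optimizer, complementary slackness extended by continuity to the support $\Gamma$, the projection argument $\Gamma_1 = \mathcal{Y}_1$, first-order conditions $\varphi_i'(y_1) = \partial_1 c(y_1,y_0)$ at points of differentiability, and the absolute-continuity upgrade from $f'=0$ a.e.\ to $f$ constant on the interval. The only cosmetic difference is that you deduce $t=-s$ from $f(y_1)+g(y_0)=0$ on $\Gamma$, whereas the paper uses equality of the dual objective values $P_1(\varphi_1)+P_0(\psi_1)=P_1(\varphi_2)+P_0(\psi_2)$; both are immediate.
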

\begin{proof}
	\singlespacing
	
	The proof is quite similar to that of \cite{santambrogio2015optimal} proposition 7.18. 
	
	Let $(\varphi_1, \psi_1), (\varphi_2,\psi_2) \in \Psi_c(P_1,P_0)$. For $k = 1,2$, $\varphi_k$ and $\psi_k$ (being elements of $\mathcal{F}_c$ and $\mathcal{F}_c^c$ respectively) are $L$-Lipschitz and hence absolutely continuous. This implies all four functions are differentiable Lebesgue-almost everywhere, and that for any $(y_1,y_0) \in \mathcal{Y}_1 \times \mathcal{Y}_0$,
	\begin{align*}
		&\varphi_k(y_1) = \varphi_k(y_1^\ell) + \int_{y_1^\ell}^{y_1} \varphi_k'(y) dy &&\psi_k(y_0) = \psi_k(y_0^\ell) + \int_{y_0^\ell}^{y_0} \psi_k'(y) dy
	\end{align*}
	Notice that the subset of $\mathcal{Y}_1$ where both $\varphi_1$ and $\varphi_2$ are differentiable also has full Lebesgue measure. It suffices to show that $\varphi_1'(y_1) = \varphi_2'(y_1)$ on this set (and $\psi_1'(y_0) = \psi_2'(y_0)$ on the subset of $\mathcal{Y}_0$ where both $\psi_1$ and $\psi_2$ are differentiable, which also has full Lebesgue measure), from which it will follow that for any $(y_1,y_0) \in \mathcal{Y}_1 \times \mathcal{Y}_0$,
	\begin{align*}
		\varphi_1(y_1) - \varphi_2(y_1) = \varphi_1(y_1^\ell) - \varphi_2(y_1^\ell) + \int_{y_1^\ell}^{y_1} (\varphi_1'(y) - \varphi_2'(y)) dy = \underbrace{\varphi_1(y_1^\ell) - \varphi_2(y_1^\ell)}_{\coloneqq s_\varphi} \\
		\psi_1(y_0) - \psi_2(y_0) = \psi_1(y^\ell) - \psi_2(y^\ell) + \int_{y_0^\ell}^{y_0} (\psi_1'(y) - \psi_2'(y)) dy = \underbrace{\psi_1(y_0^\ell) - \psi_2(y_0^\ell)}_{\coloneqq s_\psi}
	\end{align*}
	Finally, observe that $P_1(\varphi_2) + P_0(\varphi_2) = P_1(\varphi_1) + P_0(\psi_1) = P_1(\varphi_2 + s_\varphi) + P_0(\psi_2 + s_\psi) = P_1(\varphi_2) + P_0(\psi_2) + s_\varphi + s_\psi$ implies $s_\varphi = -s_\psi$. \\

	The remainder of the proof shows that for any $\bar{y}_1$ in the set where both $\varphi_1$ and $\varphi_2$ are differentiable, $\varphi_1'(\bar{y}_1) = \varphi_2'(\bar{y}_1)$. The same arguments work to show the corresponding claim regarding $\psi_1$ and $\psi_2$. \\
	
	There exists $\pi \in \Pi(P_1,P_0)$ that solves the primal problem (see lemma \ref{Lemma: optimal transport is attained}). For any such $\pi$,
	\begin{enumerate}
		\item $\text{Supp}(P_1) = \left\{y_1 \in \mathcal{Y}_1 \; ; \; \exists y_0 \in \mathcal{Y}_0 \text{ s.t. } (y_1, y_0) \in \text{Supp}(\pi)\right\}$ \label{Claim: Kantorovich potential, sufficient conditions for uniqueness, projection of support equals support}
		
		This follows because $\text{Pr}_1(\text{Supp}(\pi)) \coloneqq \left\{y_1 \in \mathcal{Y}_1 \; ; \; \exists y_0 \in \mathcal{Y}_0 \text{ s.t. } (y_1, y_0) \in \text{Supp}(\pi)\right\}$ is dense in $\text{Supp}(P_1)$, and $\text{Pr}_1(\text{Supp}(\pi))$ is closed because $\mathcal{Y}_0$ is compact.\footnote{
			Specifically, for any $A \subseteq \mathcal{Y}_1 \times \mathcal{Y}_0 \subseteq \mathbb{R}^2$, let $\text{Pr}_1(A) = \left\{y_1 \in \mathcal{Y}_1 \; ; \; \exists y_0 \in \mathcal{Y}_0 \text{ s.t. } (y_1,y_0) \in A\right\}$ be the cartesian projection of the set $A$ onto the first coordinate. Let $P_1 \in \mathcal{P}(\mathcal{Y}_1)$, $P_0 \in \mathcal{P}(\mathcal{Y}_0)$, and $\pi \in \Pi(P_1,P_0)$. As noted in \cite{staudt2022uniqueness} (Remark 1), $\text{Pr}_1(\text{Supp}(\pi)) \subseteq \text{Supp}(P_1)$ with the possibility that inclusion is strict.
			
			However, $\text{Pr}_1(\text{Supp}(\pi))$ is always dense in $\text{Supp}(P_1)$: let $y_1 \in \text{Supp}(P_1)$ and $\delta > 0$ be arbitrary, and suppose for contradiction that $B_\delta(y_1) \cap \text{Pr}_1(\text{Supp}(\pi)) = \varnothing$. Then $\big(B_\delta(y_1) \times \mathcal{Y}_0\big) \cap \text{Supp}(\pi) = \varnothing$ follows from the definition of $\text{Pr}_1(\text{Supp}(\pi))$, and thus
			\begin{align*}
				0 &= \pi\left(\big(B_\delta(y_1) \times \mathcal{Y}_0\big) \cap \text{Supp}(\pi)\right) = \pi\left(\big(B_\delta(y_1) \times \mathcal{Y}_0\big)\right) + \pi\left(\text{Supp}(\pi)\right) - \pi\left(\big(B_\delta(y_1) \times \mathcal{Y}_0\big) \cup \text{Supp}(\pi)\right) \\
				&= \pi\left(\big(B_\delta(y_1) \times \mathcal{Y}_0\big)\right) = P_1(B_\delta(y_1)) > 0
			\end{align*}
			a contradiction showing $B_\delta(y_1) \cap \text{Pr}_1(\text{Supp}(\pi)) \neq \varnothing$. Thus $\text{Pr}_1(\text{Supp}(\pi))$ is dense in $\text{Supp}(P_1)$.
			
			Moreover, if $\mathcal{Y}_0$ is compact then the map $\text{Pr}_1$ is closed: suppose $A \subseteq \mathcal{Y}_1 \times \mathcal{Y}_0 \subseteq \mathbb{R}^2$ is closed, and $\{y_{1n}\}_{n=1}^\infty \subseteq \text{Pr}_1(A)$ converges to $y_1$. Then there exists $\{y_{0n}\}_{n=1}^\infty \subseteq \mathcal{Y}_0$ such that $(y_{1n}, y_{0n}) \in A$ for each $n$. Since $\mathcal{Y}_0$ is compact, there exists a subsequence $\{y_{0n_k}\}_{k=1}^\infty$ and $y_0$ such that $\lim_{k\rightarrow \infty} y_{0n_k} = y_0$. Then notice that $\lim_{k \rightarrow \infty} (y_{1n_k}, y_{0n_k}) = (y_1,y_0)$. Since $A$ is closed, $(y_1,y_0) \in A$. 
			
			$\text{Supp}(\pi)$ is closed by definition, hence $\text{Pr}_1(\text{Supp}(\pi))$ is closed and dense in $\text{Supp}(P_1)$, from which it follows that $\text{Supp}(\pi) = \text{Supp}(P_1)$.
		}
		
		\item For all $(y_1, y_0) \in \text{Supp}(\pi)$, $\varphi_k(y_1) + \psi_k(y_0) = c(y_1, y_0)$.
		
		It is easy to see that the equality holds $\pi$-almost surely. To see it holds specifically on the support, notice that optimality of $\pi$ and $(\varphi_k, \psi_k)$ implies that 
		\begin{equation*}
			\int c(y_1, y_0) d\pi(y_1, y_0) = \int \varphi_k(y_1) dP(y_1) + \int \psi_k(y_0) dP_0(y_0) 
		\end{equation*}
		and recall that $\varphi_k(y_1) + \psi_k(y_0) \leq c(y_1, y_0)$ holds for all $(y_1, y_0) \in \mathcal{Y} \times \mathcal{Y}$. If the inequality were strict for some $(y_1', y_0') \in \text{Supp}(\pi)$, then continuity of $\varphi_k$, $\psi_k$, and $c$ would imply the inequality is sharp on a ball centered at $(y_1, y_0)$ of some positive radius, denoted $B$, leading to the contradiction
		\begin{align*}
			\int c(y_1, y_0) d\pi(y_1, y_0) &= \int_B c(y_1, y_0) d\pi(y_1,y_0) + \int_{B^c} c(y_1, y_0) d\pi(y_1,y_0) \\
			&> \int_B \varphi_k(y_1) + \psi_k(y_0) d\pi(y_1,y_0) + \int_{B^c} \varphi_k(y_1) + \psi_k(y_0) d\pi(y_1,y_0) \\
			&= \int \varphi_k(y_1) + \psi_k(y_0) d\pi(y_1,y_0) = \int \varphi_k(y_1) dP_1(y_1) + \int \psi_k(y_0) dP_0(y_0)
		\end{align*}
		
		\item For any $\bar{y}_1 \in \text{Supp}(P_1)$, the above implies there there exists $\bar{y}_0 \in \mathcal{Y}_0$ such that $(\bar{y}_1, \bar{y}_0) \in \text{Supp}(\pi)$, and hence $\varphi_k(\bar{y}_1) + \psi_k(\bar{y}_0) = c(\bar{y}_1, \bar{y}_0)$. For any such $\bar{y}_0$, 
		\begin{equation}
			y_1 \mapsto \varphi_k(y_1) - c(y_1, \bar{y}_0) \text{ is maximized at } \bar{y}_1 \label{Display: lemma proof, Kantorovich potential, sufficient conditions for uniqueness, map must be maximized}
		\end{equation}
		
		Indeed, if there were $y_1' \in \mathcal{Y}_1$ such that $\varphi_k(y_1') - c(y_1', \bar{y}_0) > \varphi_k(\bar{y}_1) - c(\bar{y}_1, \bar{y}_0)$, then by adding $\psi_k(\bar{y}_0)$ to both sides we find
		\begin{align*}
			\varphi_k(y_1') + \psi_k(\bar{y}_0) - c(y_1', \bar{y}_0) > \varphi_k(\bar{y}_1) + \psi_k(\bar{y}_0) - c(\bar{y}_1, \bar{y}_0) = 0
		\end{align*}
		This implies $\varphi_k(y_1') + \psi_k(\bar{y}_0) > c(y_1', \bar{y}_0)$, which contradicts $\varphi_k(y_1') + \psi_k(\bar{y}_0) \leq c(y_1', \bar{y}_0)$ for all $(y_1, y_0) \in \mathcal{Y}_1 \times \mathcal{Y}_0$.
		
		\item Now observe that if $\bar{y}_1 \in (y_1^\ell, y_1^u)$ is a point at which $\varphi_k$ is differentiable, then \eqref{Display: lemma proof, Kantorovich potential, sufficient conditions for uniqueness, map must be maximized} implies $\varphi_k'(\bar{y}_1) = \frac{\partial c}{\partial y_1}(\bar{y}_1, \bar{y}_0)$.\footnote{Notice that the ``choice'' of $\pi$ or $\bar{y}_0$ doesn't matter, because $\varphi_k'(\bar{y}_1)$ can take only one value.} Thus if $\bar{y}_1 \in (y_1^\ell, y_1^u)$ is a point at which both $\varphi_1$ and $\varphi_2$ are differentiable, then 
		\begin{equation*}
			\varphi_1(\bar{y}_1) = \frac{\partial c}{\partial y_1}(\bar{y}_1, \bar{y}_0) = \varphi_2(\bar{y}_1)
		\end{equation*}
	\end{enumerate}
	This completes the proof.
\end{proof}

To specify the subset of the tangent space on which $OT_{c,(P_1,P_0)}'$ is linear, let $\mathcal{Y}_d \subseteq \mathcal{Y}$ and $\mathbbm{1}_{\mathcal{Y}_d}(y) = \mathbbm{1}\{y \in \mathcal{Y}_d\}$. Let $\mathcal{G}$ denote a set of real-valued functions $g : \mathcal{Y} \rightarrow \mathbb{R}$ with the following property: if $g \in \mathcal{G}$, then $\mathbbm{1}_{\mathcal{Y}_d} \times g \in \mathcal{G}$.\footnote{If we have a set $\tilde{\mathcal{G}}$ that does not satisfy this property, the set $\mathcal{G} = \tilde{\mathcal{G}} \cup \left\{\mathbbm{1}_{\mathcal{Y}_d} \times g \; ; \; g \in \tilde{\mathcal{G}}\right\}$ will satisfy it.} Let $\ell_{\mathcal{Y}_d}^\infty(\mathcal{G})$ be the set of bounded, linear functions $H : \mathcal{G} \rightarrow \mathbb{R}$ that evaluate constant functions to zero and ``ignore'' the value of functions outside of $\mathcal{Y}_d$. Specifically, define
\begin{align}
	\ell_{\mathcal{Y}_d}^\infty(\mathcal{G})  &= \Big\{H \in \ell^\infty(\mathcal{G}) \; ; \; \text{ for all } a,b \in \mathbb{R} \text{ and } f, g \in \mathcal{G}, \notag \\
	&\hspace{3.5 cm} (i) \; H(f) = H(\mathbbm{1}_{\mathcal{Y}_d} \times f), \; \; (ii) \; \text{ if } a \in \mathcal{G} \text{ then } H(a) = 0, \text{ and } \notag \\
	&\hspace{3.5 cm} (iii) \; \text{ if } a f + b g \in \mathcal{G} \text{ then } H(a f + b g) = aH(f) + bH(g) \Big\}  \label{Defn: ell_Yd^infty(G)}
\end{align}
Here we slightly abuse notation; $a \in \mathcal{G}$ refers to the function mapping each point in $\mathcal{Y}$ to the constant $a \in \mathbb{R}$. Equip $\ell_{\mathcal{Y}_d}^\infty(\mathcal{G})$ with the supremum norm, $\lVert H \rVert_\mathcal{G} = \lVert H \rVert_\infty = \sup_{g \in \mathcal{G}} \lvert H(g) \rvert$. As shown in appendix \ref{Appendix: weak convergence}, first stage estimators of $(P_1, P_0)$ based on the empirical distribution have weak limits concentrated on $\ell_{\mathcal{Y}_1}^\infty(\mathcal{F}_c) \times \ell_{\mathcal{Y}_0}^\infty(\mathcal{F}_c^c)$ where $\mathcal{Y}_d$ is the support of $P_d$. 

\begin{restatable}[]{lemma}{BoundedFunctionSpacesSubspaceOfLinearFunctionsAssigningZeroToConstantsIsClosed}
	\label{Lemma: bounded function spaces, subspace of linear functions assigning zero to constants and ignoring values outside Y_d is closed}
	\singlespacing
	
	$\ell_{\mathcal{Y}_d}^\infty(\mathcal{G})$ defined by \eqref{Defn: ell_Yd^infty(G)} is closed.
	
\end{restatable}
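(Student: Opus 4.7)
The plan is to take a supremum-norm convergent sequence in $\ell_{\mathcal{Y}_d}^\infty(\mathcal{G})$ and verify that the three defining properties (i), (ii), (iii) in \eqref{Defn: ell_Yd^infty(G)} pass to the limit. Since each property is a pointwise linear identity, and uniform convergence on $\mathcal{G}$ implies pointwise convergence at every element of $\mathcal{G}$, this should go through essentially by continuity of the real operations involved.

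Concretely, let $\{H_n\}_{n=1}^\infty \subseteq \ell_{\mathcal{Y}_d}^\infty(\mathcal{G})$ with $\lVert H_n - H \rVert_\mathcal{G} \to 0$ for some $H \in \ell^\infty(\mathcal{G})$; the goal is to show $H \in \ell_{\mathcal{Y}_d}^\infty(\mathcal{G})$. First I would note that boundedness of $H$ is automatic from $H \in \ell^\infty(\mathcal{G})$, and that uniform convergence implies $H_n(g) \to H(g)$ for every $g \in \mathcal{G}$. For property (i), fix $f \in \mathcal{G}$; by assumption on $\mathcal{G}$, also $\mathbbm{1}_{\mathcal{Y}_d} \times f \in \mathcal{G}$, so $H_n(f) = H_n(\mathbbm{1}_{\mathcal{Y}_d} \times f)$ for every $n$, and passing to the limit on both sides gives $H(f) = H(\mathbbm{1}_{\mathcal{Y}_d} \times f)$. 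For property (ii), if the constant function $a$ lies in $\mathcal{G}$, then $H_n(a) = 0$ for every $n$, and the pointwise limit gives $H(a) = 0$. For property (iii), if $af + bg \in \mathcal{G}$ with $f, g \in \mathcal{G}$ and $a, b \in \mathbb{R}$, then $H_n(af + bg) = a H_n(f) + b H_n(g)$ for every $n$, and taking $n \to \infty$ on each side yields $H(af + bg) = a H(f) + b H(g)$.

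There is no real obstacle here: the closedness argument reduces to the observation that each defining constraint in \eqref{Defn: ell_Yd^infty(G)} is a pointwise identity preserved under pointwise convergence, and supremum-norm convergence on $\mathcal{G}$ is strictly stronger than pointwise convergence on $\mathcal{G}$. The only subtle point worth flagging is that the stated condition on $\mathcal{G}$ guarantees $\mathbbm{1}_{\mathcal{Y}_d} \times f \in \mathcal{G}$ whenever $f \in \mathcal{G}$, so the expressions $H_n(\mathbbm{1}_{\mathcal{Y}_d} \times f)$ and $H(\mathbbm{1}_{\mathcal{Y}_d} \times f)$ in property (i) are well defined; analogously, property (iii) is only invoked when the combination $af + bg$ happens to lie in $\mathcal{G}$, matching the hypothesis in the definition.
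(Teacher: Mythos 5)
Your proposal is correct and follows essentially the same strategy as the paper: take a supremum-norm convergent sequence in $\ell_{\mathcal{Y}_d}^\infty(\mathcal{G})$, note that uniform convergence implies pointwise convergence on $\mathcal{G}$, and verify that each of the three defining identities passes to the limit. The paper spells out the triangle-inequality estimates for (i) and (iii) a bit more explicitly, but the substance is identical.
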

\begin{proof}
	\singlespacing
	
	Let $\{H_n\}_{n=1}^\infty \subseteq \ell_{\mathcal{Y}_d}^\infty(\mathcal{G})$ be Cauchy, and let $H$ be its limit in the Banach space $\ell^\infty(\mathcal{G})$. It suffices to show $H \in \ell_{\mathcal{Y}_d}^\infty(\mathcal{G})$. 
	
	Toward this end, first notice that $\lVert H_n - H \rVert_\mathcal{G} \rightarrow 0$ implies that for any $f \in \mathcal{G}$, $\lvert H_n(f) - H(f) \rvert \rightarrow 0$. Next observe that if the constant function $a \in \mathcal{G}$, then $0 = \lim_{n \rightarrow \infty} \lvert H_n(a) - H(a) \rvert = \lim_{n \rightarrow \infty} \lvert H(a) \rvert = \lvert H(a) \rvert$. For any function $f \in \mathcal{G}$, since $H_n(f) = H_n(\mathbbm{1}_{\mathcal{Y}_d} \times f)$,
	\begin{align*}
		0 \leq \lvert H(f) - H(\mathbbm{1}_{\mathcal{Y}_d} \times f) \rvert \leq \lvert H(f) - H_n(f) \rvert + \lvert H(\mathbbm{1}_{\mathcal{Y}_d} \times f) - H_n(\mathbbm{1}_{\mathcal{Y}_d} \times f) \rvert \rightarrow 0 
	\end{align*}
	and thus $H(\mathbbm{1}_{\mathcal{Y}_d} \times f) = H(f)$. Finally, suppose $a, b \in \mathbb{R}$ and $f, g \in \mathcal{G}$ are such that $a f + b g \in \mathcal{G}$. Similar to the argument above, since $H_n(a f + bg) = aH_n(f) + b H_n(g)$,
	\begin{align*}
		0 &\leq \lvert H(a f + bg) - aH(f) - bH(g) \rvert \\
		&\leq \lvert H(af + bg) - H_n(af + bg) \rvert + \lvert a H_n(f) + bH_n(f) - a H(f) - b H_n(g) \rvert \\
		&\leq \lvert H(af + bg) - H_n(af + bg) \rvert + \lvert a\rvert \lvert H_n(f) - H(f) \rvert + \lvert b\rvert \lvert H_n(g) - H_n(g) \rvert \rightarrow 0
	\end{align*}
	and thus $H(a f + bg) = aH(f) + bH(g)$. 
	
	This shows $H \in \ell_{\mathcal{Y}_d}^\infty(\mathcal{G})$, and completes the proof.
\end{proof}

\begin{restatable}[Full differentiability of optimal transport]{lemma}{lemmaHadamardDifferentiabilityOptimalTransportFullDifferentiability}
	
	\label{Lemma: Hadamard differentiability, optimal transport, full differentiability}
	\singlespacing
	
	Let $c : \mathcal{Y} \times \mathcal{Y} \rightarrow \mathbb{R}$ be lower semicontinuous, $\mathcal{F}_1, \mathcal{F}_0$ be sets of measurable functions mapping $\mathcal{Y}$ to $\mathbb{R}$, and $\mathcal{F}_c \subseteq \mathcal{F}_1$ and $\mathcal{F}_c^c \subseteq \mathcal{F}_0$ be universally bounded subsets. Suppose that
	\begin{enumerate}
		\item Strong duality holds:
		\begin{equation*}
			\inf_{\pi \in \Pi(P_1,P_0)} \int c(y_1, y_0) d\pi(y_1,y_0) = \sup_{(\varphi, \psi) \in \Phi_c \cap (\mathcal{F}_c \times \mathcal{F}_c^c)} \int \varphi(y_1) dP_1(y_1) + \int \psi(y_0) dP_0(y_0),
		\end{equation*}
		\item $P$ dominates $P_1$ and $P_0$ with bounded densities, 
		\item $\mathcal{F}_d$ is $P$-Donsker and $\sup_{f \in \mathcal{F}_d} \lvert P(f) \rvert < \infty$ for each $d = 1, 0$, and
		\item $(\mathcal{F}_1 \times \mathcal{F}_0, L_2)$ and the subset
		\begin{equation*}
			\Phi_c \cap (\mathcal{F}_c \times \mathcal{F}_c^c) = \left\{(\varphi,\psi) \in \mathcal{F}_c \times \mathcal{F}_c^c \; ; \; \varphi(y_1) + \psi(y_0) \leq c(y_1, y_0) \right\}
		\end{equation*}
		are complete. 
	\end{enumerate}
	Let $\mathcal{Y}_1, \mathcal{Y}_0 \subseteq \mathcal{Y}$ and $\Psi_c(P_1,P_0) = \argmax_{(\varphi,\psi) \in \Phi_c \cap (\mathcal{F}_c \times \mathcal{F}_c^c)} P_1(\varphi) + P_0(\psi)$, and further assume 
	\begin{enumerate}
		\setcounter{enumi}{3}
		\item For any $(\varphi_1,\psi_1), (\varphi_2,\psi_2) \in \Psi_c(P_1,P_0)$, there exists $s \in \mathbb{R}$ such that 
		\begin{align*}
			&\mathbbm{1}_{\mathcal{Y}_1} \times \varphi_1 = \mathbbm{1}_{\mathcal{Y}_1} \times (\varphi_2 + s), \; P\text{-a.s.} &&\text{ and } &&\mathbbm{1}_{\mathcal{Y}_0} \times \psi_1 = \mathbbm{1}_{\mathcal{Y}_0} \times (\psi_2 - s), \; P\text{-a.s.}
		\end{align*}
	\end{enumerate}
	
	Then $OT_c : \ell^\infty(\mathcal{F}_1) \times \ell^\infty(\mathcal{F}_0) \rightarrow \mathbb{R}$ defined by
	\begin{equation*}
		OT_c(P_1, P_0) = \sup_{(\varphi, \psi) \in \Phi_c \cap (\mathcal{F}_c \times \mathcal{F}_c^c)} P_1(\varphi) + P_0(\psi) 
	\end{equation*}
	is fully Hadamard differentiable at $(P_1, P_0)$ tangentially to
	\begin{equation}
		\mathbb{D}_{Tan,Full} = \Big(\ell_{\mathcal{Y}_1}^\infty(\mathcal{F}_c) \times \ell_{\mathcal{Y}_0}^\infty(\mathcal{F}_c^c)\Big) \cap \Big(\mathcal{C}(\mathcal{F}_1, L_{2,P}) \times \mathcal{C}(\mathcal{F}_0, L_{2,P})\Big)\label{Defn: tangent set for optimal transport derivative, when fully differentiable}
	\end{equation}
	with derivative $OT_{c, (P_1,P_0)}' : \mathbb{D}_{Tan,Full} \rightarrow \mathbb{R}$ given by
	\begin{equation*}
		OT_{c,(P_1,P_0)}'(H_1, H_0) = \sup_{(\varphi,\psi) \in \Psi_c(P_1, P_0)} H_1(\varphi) + H_0(\psi)
	\end{equation*}
\end{restatable}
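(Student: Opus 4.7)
My plan is to apply Proposition~2.1 of \cite{fang2019inference}, which upgrades Hadamard directional differentiability to full Hadamard differentiability whenever the derivative is linear on the tangent subspace. Lemma~\ref{Lemma: Hadamard differentiability, optimal transport} already delivers directional differentiability at $(P_1,P_0)$ tangentially to $\mathcal{C}(\mathcal{F}_1, L_{2,P}) \times \mathcal{C}(\mathcal{F}_0, L_{2,P}) \supseteq \mathbb{D}_{Tan,Full}$, with derivative
\begin{equation*}
OT_{c,(P_1,P_0)}'(H_1,H_0) = \sup_{(\varphi,\psi) \in \Psi_c(P_1,P_0)} H_1(\varphi) + H_0(\psi).
\end{equation*}
Lemma~\ref{Lemma: bounded function spaces, subspace of linear functions assigning zero to constants and ignoring values outside Y_d is closed}, together with linearity of $\mathcal{C}(\mathcal{F}_d, L_{2,P})$, makes $\mathbb{D}_{Tan,Full}$ a closed linear subspace, so the whole task reduces to showing that this supremum collapses to a single linear expression on $\mathbb{D}_{Tan,Full}$, namely $H_1(\varphi^*) + H_0(\psi^*)$ for any fixed maximizer.

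The key observation is that assumption~4 pins down the maximizers up to a single constant that is annihilated by every element of $\mathbb{D}_{Tan,Full}$. Fix $(H_1,H_0) \in \mathbb{D}_{Tan,Full}$ and any two maximizers $(\varphi_1,\psi_1), (\varphi_2,\psi_2) \in \Psi_c(P_1,P_0)$, and let $s \in \mathbb{R}$ be as in assumption~4, so that $\mathbbm{1}_{\mathcal{Y}_1}\varphi_1 = \mathbbm{1}_{\mathcal{Y}_1}\varphi_2 + s\mathbbm{1}_{\mathcal{Y}_1}$ and $\mathbbm{1}_{\mathcal{Y}_0}\psi_1 = \mathbbm{1}_{\mathcal{Y}_0}\psi_2 - s\mathbbm{1}_{\mathcal{Y}_0}$ hold $P$-a.s. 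Property~(i) of $\ell_{\mathcal{Y}_1}^\infty(\mathcal{F}_c)$ lets me replace $H_1(\varphi_k)$ by $H_1(\mathbbm{1}_{\mathcal{Y}_1}\varphi_k)$; the $L_{2,P}$-continuity of $H_1$ then transports the $P$-a.s.\ identity through $H_1$; linearity (property~(iii)) peels off the $s\mathbbm{1}_{\mathcal{Y}_1}$ term; and property~(i) applied in the form $H_1(\mathbbm{1}_{\mathcal{Y}_1}) = H_1(1)$ combined with property~(ii) kills the remaining constant. The result is
\begin{equation*}
H_1(\varphi_1) = H_1(\mathbbm{1}_{\mathcal{Y}_1}\varphi_2) + s\, H_1(\mathbbm{1}_{\mathcal{Y}_1}) = H_1(\varphi_2),
\end{equation*}
and the parallel calculation for $H_0$ absorbs the $-s$, so $H_1(\varphi_1) + H_0(\psi_1) = H_1(\varphi_2) + H_0(\psi_2)$. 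The supremum collapses as claimed and the resulting map is manifestly linear in $(H_1,H_0)$, completing the reduction to \cite{fang2019inference}, Proposition~2.1.

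The step I expect to be delicate is the transport of the $P$-a.s.\ identity through $H_1$. Elements of $\ell_{\mathcal{Y}_1}^\infty(\mathcal{F}_c)$ are declared only on $\mathcal{F}_c$, yet the decomposition $\mathbbm{1}_{\mathcal{Y}_1}\varphi_1 - \mathbbm{1}_{\mathcal{Y}_1}\varphi_2 = s\mathbbm{1}_{\mathcal{Y}_1}$ forces me to evaluate $H_1$ on the isolated indicator $\mathbbm{1}_{\mathcal{Y}_1}$, which need not belong to $\mathcal{F}_c$, and on a difference that need not lie in $\mathcal{F}_c$ either. I plan to handle this by extending $H_1$, viewed as a $L_{2,P}$-continuous linear functional on the span of $\mathcal{F}_1$, uniquely to the $L_{2,P}$-closure of the linear span of $\mathcal{F}_1 \cup \{\mathbbm{1}_{\mathcal{Y}_1}\}$ inside $L^2(P)$, and showing that the three defining properties of $\ell_{\mathcal{Y}_1}^\infty$ persist under this extension by a Cauchy-sequence argument mirroring the proof of lemma~\ref{Lemma: bounded function spaces, subspace of linear functions assigning zero to constants and ignoring values outside Y_d is closed}. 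Once this machinery is in place the $P$-a.s.\ identity passes through $H_1$ cleanly, and the rest of the proof is routine.
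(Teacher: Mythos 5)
Your approach matches the paper's exactly: use lemma \ref{Lemma: Hadamard differentiability, optimal transport} to obtain directional differentiability and the derivative, then use assumption 4 together with the defining properties of $\ell_{\mathcal{Y}_d}^\infty$ and $L_{2,P}$-continuity to show $H_1(\varphi)+H_0(\psi)$ is constant over $\Psi_c(P_1,P_0)$, so the supremum collapses to a linear functional, and finally invoke \cite{fang2019inference} proposition 2.1. The chain of equalities you write is precisely the one in the paper's proof, written in a slightly different order (the paper keeps the constant attached to the indicator, i.e.\ it proceeds $H_1(\mathbbm{1}_{\mathcal{Y}_1}\varphi_1) = H_1(\mathbbm{1}_{\mathcal{Y}_1}(\varphi_2+s)) = H_1(\varphi_2+s) = H_1(\varphi_2)+H_1(s) = H_1(\varphi_2)$, using property (i) twice, $L_{2,P}$-continuity once, then (iii) and (ii); you instead isolate $\mathbbm{1}_{\mathcal{Y}_1}$ itself and kill it via $H_1(\mathbbm{1}_{\mathcal{Y}_1}) = H_1(1) = 0$).

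The concern you flag at the end is real in principle but your proposed remedy is more machinery than is needed, and it is also not clear that it works as stated: elements of $\ell_{\mathcal{Y}_1}^\infty(\mathcal{G})$ are linear only \emph{conditionally} (property (iii) is an implication that only bites when $af+bg\in\mathcal{G}$), so extending $H_1$ to the $L_{2,P}$-closure of the linear span requires an unconditional linearity that you have not established; well-definedness of the extension (independence of representation $\sum a_i f_i$) is exactly what conditional linearity fails to give you for free. The paper instead handles this by construction of the ambient class: the note preceding equation \eqref{Defn: ell_Yd^infty(G)} stipulates that $\mathcal{G}$ is closed under $g\mapsto\mathbbm{1}_{\mathcal{Y}_d}\times g$, and in the application (lemma \ref{Lemma: Hadamard differentiability, theta^L(.) and theta^H(.) are directionally or fully Hadamard differentiable} and lemma \ref{Lemma: weak convergence, support of T_1P(G)}) the tangent space is stated with $\ell_{\mathcal{Y}_{d,x}}^\infty(\mathcal{F}_{d,x})$ rather than $\ell_{\mathcal{Y}_{d,x}}^\infty(\mathcal{F}_c)$, and $\mathcal{F}_{d,x}$ is built to contain all the functions that the chain of equalities touches. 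In other words, the intended resolution is to enlarge the index set of the tangent functionals at the point of definition, not to extend each $H_1$ after the fact. If you keep the chain as you wrote it, the simplest clean fix is to follow the paper and prove the lemma with $\ell_{\mathcal{Y}_1}^\infty(\mathcal{F}_1)\times\ell_{\mathcal{Y}_0}^\infty(\mathcal{F}_0)$ (for classes $\mathcal{F}_d$ satisfying the stipulated closure property) rather than trying to extend $H_1$ beyond its given domain.
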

\begin{proof}
	\singlespacing
	
	The first four assumptions allow application of lemma \ref{Lemma: Hadamard differentiability, optimal transport} to find that $OT_c : \ell^\infty(\mathcal{F}_1) \times \ell^\infty(\mathcal{F}_0) \rightarrow \mathbb{R}$ given by
	\begin{equation*}
		OT_c(P_1, P_0) = \sup_{(\varphi, \psi) \in \Phi_c \cap (\mathcal{F}_c \times \mathcal{F}_c^c)} P_1(\varphi) + P_0(\psi) 
	\end{equation*}
	is Hadamard directionally differentiable at $(P_1, P_0)$ tangentially to $\mathbb{D}_{Tan} = \mathcal{C}(\mathcal{F}_1, L_{2,P}) \times \mathcal{C}(\mathcal{F}_0, L_{2,P})$. The set of maximizers $\Psi_c(P_1, P_0) = \argmax_{(\varphi,\psi) \in \Phi_c \cap (\mathcal{F}_c \times \mathcal{F}_c^c)} P_1(\varphi) + P_0(\psi)$ is nonempty, and the derivative $OT_{c, (P_1,P_0)}' : \mathbb{D}_{Tan} \rightarrow \mathbb{R}$ is given by
	\begin{equation*}
		OT_{c,(P_1,P_0)}'(H_1, H_0) = \sup_{(\varphi,\psi) \in \Psi_c(P_1, P_0)} H_1(\varphi) + H_0(\psi)
	\end{equation*}
	
	Next observe that for any $(H_1,H_0) \in \mathbb{D}_{Tan,Full}$, $H_1 + H_0$ is flat on $\Psi_c(P_1,P_0)$. Specifically, for any $(\varphi_1,\psi_1), (\varphi_2,\psi_2) \in \Psi_c(P_1,P_0)$, let $s$ be such that 
	\begin{align*}
		\mathbbm{1}_{\mathcal{Y}_1} \times \varphi_1 = \mathbbm{1}_{\mathcal{Y}_1} \times (\varphi_2 + s), \; P\text{-a.s.} &&\text{ and } &&\mathbbm{1}_{\mathcal{Y}_0} \times \psi_1 = \mathbbm{1}_{\mathcal{Y}_0} \times (\psi_2 - s), \; P\text{-a.s.}
	\end{align*}
	Then
	\begin{align*}
		H_1(\varphi_1) + H_0(\psi_1) &= H_1(\mathbbm{1}_{\mathcal{Y}_1} \times \varphi_1) + H_0(\mathbbm{1}_{\mathcal{Y}_0} \times \psi_1) \\
		&= H_1(\mathbbm{1}_{\mathcal{Y}_1} \times (\varphi_2 + s)) + H_0(\mathbbm{1}_{\mathcal{Y}_0}\times (\psi_2 - s)) \\
		&=H_1(\varphi_2 + s) + H_0(\psi_2 - s) \\
		&= H_1(\varphi_2) + H_1(s) + H_0(\psi_2) - H_0(s) \\
		&= H_1(\varphi_2) + H_0(\psi_2)
	\end{align*}
	where the first, third, fourth, and fifth equalities hold because $(H_1, H_0) \in \ell_{\mathcal{Y}_1}^\infty(\mathcal{F}_c) \times \ell_{\mathcal{Y}_0}^\infty(\mathcal{F}_c^c)$, and the second because $(H_1, H_0) \in \mathcal{C}(\mathcal{F}_1, L_{2,P}) \times \mathcal{C}(\mathcal{F}_0, L_{2,P})$. 
	
	Now use this ``flatness'' to observe the derivative is linear. Let $(H_1,H_0), (G_1,G_0) \in \mathbb{D}_{Tan,Full}$, $a,b \in \mathbb{R}$, and $(\tilde{\varphi},\tilde{\psi}) \in \Psi_c(P_1,P_0)$, and notice that 
	\begin{align*}
		&OT_{c,(P_1,P_0)}'(a(H_1, H_0) + b(G_1,G_0)) = \sup_{(\varphi, \psi) \in \Psi(P_1,P_0)} (aH_1 + bG_1)(\varphi) +  (aH_0 + bG_0)(\psi) \\
		&\hspace{1 cm} = a H_1(\tilde{\varphi}) + b G_1(\tilde{\varphi}) + a H_0(\tilde{\psi}) + bG_0(\tilde{\psi}) = a(H_1(\tilde{\varphi}) + H_0(\tilde{\psi})) + b(G_1(\tilde{\varphi}) + G_0(\tilde{\psi}) \\
		&\hspace{1 cm} = a\times \sup_{(\varphi, \psi) \in \Psi(P_1,P_0)}\left\{ H_1(\varphi) + H_0(\psi)\right\} + b \times \sup_{(\varphi, \psi) \in \Psi(P_1,P_0)}\left\{G_1(\varphi) + G_0(\psi)\right\} \\
		&\hspace{1 cm} = a OT_{c, (P_1,P_0)}'(H_1, H_0) + b OT_{c, (P_1,P_0)}'(G_1, G_0)
	\end{align*}
	
	Since $OT_{c,(P_1,P_0)}'$ is linear on the subspace $\mathbb{D}_{Tan,Full}$, \cite{fang2019inference} proposition 2.1 implies $OT_c$ is fully Hadamard differentiable at $(P_1,P_0)$ tangentially to $\mathbb{D}_{Tan,Full}$. 
\end{proof}

	\newpage
	
	\section{Appendix: weak convergence}
\label{Appendix: weak convergence}

Recall that 
\begin{align*}
	&\theta_x^L = \theta^L(P_{1 \mid x}, P_{0 \mid x}), &&\theta_x^H = \theta^H(P_{1 \mid x}, P_{0 \mid x}) \\
	&\theta^L = \sum_x s_x \theta_x^L, &&\theta^L = \sum_x s_x \theta_x^H \\
	&\gamma^L = \inf_{t \in [\theta^L, \theta^H]} g(t, \eta) &&\gamma^H = \sup_{t \in [\theta^L, \theta^H]} g(t, \eta)
\end{align*}
where $\eta = (\eta_1, \eta_0)$, with $\eta_d \in \mathbb{R}^{K_d}$ having coordinates
\begin{align*}
	\eta_d^{(k)} &= \sum_x P(X = x \mid D_1 > D_0) E[\eta_d^{(k)}(Y_d)\mid D_1 > D_0, X = x] = \sum_x s_x \eta_{d,x}^{(k)}
\end{align*}
Here $\eta_{d,x}^{(k)} = P_{d \mid x}(\eta_d^{(k)})$, which are collected as $\eta_{d,x} = (\eta_{d,x}^{(1)}, \ldots, \eta_{d,x}^{(K_d)})$.

Define the following sets of functions:
\begin{align}
	\tilde{\mathcal{F}}_{1} &= \left\{f : \mathcal{Y}\rightarrow \mathbb{R} \; ; \; f = \varphi \text{ for some } \varphi \in \mathcal{F}_c, \text{ or } f = \eta_1^{(k)} \text{ for some } k = 1, \ldots, K_1\right\} \label{Defn: F_dx, the set on which P_{d|x} is defined} \\
	\tilde{\mathcal{F}}_{0} &= \left\{f : \mathcal{Y} \rightarrow \mathbb{R} \; ; \; f = \psi \text{ for some } \psi \in \mathcal{F}_c^c, \text{ or } f = \eta_0^{(k)} \text{ for some } k = 1, \ldots, K_0\right\} \notag \\
	\mathcal{F}_{d,x} &= \left\{f : \mathcal{Y}\rightarrow \mathbb{R} \; ; \; f = g \text{ or } \mathbbm{1}_{\mathcal{Y}_{d, x}} \times g  \text{ for some } g \in \tilde{\mathcal{F}}_d\right\} \notag 
\end{align}
where $\mathcal{Y}_{d,x}$ is the support of $Y \mid D = d, X = x$, and $\mathbbm{1}_{\mathcal{Y}_{d,x}}(y) = \mathbbm{1}\{y \in \mathcal{Y}_{d, x}\}$. The additional functions of the form $f(y) = \mathbbm{1}_{\mathcal{Y}_{d,x}}(y) g(y)$ are used to characterize the support of the weak limit of $\sqrt{n}(\hat{P}_{d\mid x} - P_{d \mid x})$ in $\ell^\infty(\mathcal{F}_{d,x})$. The maps $P_{d \mid x}$ can be written as
\begin{align}
	&P_{d \mid x} : \mathcal{F}_{d, x} \rightarrow \mathbb{R}, &&P_{d \mid x}(f) = \frac{P(\mathbbm{1}_{d,x,d} \times f)/P(\mathbbm{1}_{x,d}) - P(\mathbbm{1}_{d,x,1-d} \times f)/P(\mathbbm{1}_{x,1-d})}{P(\mathbbm{1}_{d,x,d})/P(\mathbbm{1}_{x,d}) - P(\mathbbm{1}_{d,x,1-d})/P(\mathbbm{1}_{x,1-d})} \label{Display: T_1, map to conditional distributions, weak convergence appendix} 
\end{align}
and finally, define the set
\begin{equation}
	\mathcal{F} = \bigcup_{d,x,z} \left\{\mathbbm{1}_{d,x,z} \times f \; ; \; f \in \mathcal{F}_{d,x}\right\} \cup \{\mathbbm{1}_{d,x,z}, \mathbbm{1}_{x,z}, \mathbbm{1}_x\}. \label{Defn: F large donsker set}
\end{equation}

This appendix defines and studies the map $T : \mathbb{D}_C \subseteq \ell^\infty(\mathcal{F}) \rightarrow \mathbb{R}^2$ given by $(\gamma^L, \gamma^H) = T(P)$. The coming results show that $\mathcal{F}$ is $P$-Donsker, and the map $T$ is Hadamard directionally differentiable at $P$. Together these imply, through the functional delta method, the weak convergence of $\sqrt{n}(T(\mathbb{P}_n) - T(P))$ (\cite{fang2019inference}).

Several operations in the definition of the map $T$ are repeated for each $x \in \mathcal{X} = \{x_1, \ldots, x_M\}$, leading to large expressions. These are shortened with the notation $\{a_x\}_{x\in\mathcal{X}}$, which refers to $(a_{x_1}, \ldots, a_{x_M})$. For example, 
\begin{equation*}
	 \left(\left\{P_{1 \mid x}, P_{0 \mid x}, \eta_{1,x}, \eta_{0,x}, s_{x}\right\}_{x \in \mathcal{X}}\right) = (P_{1 \mid x_1}, P_{0 \mid x_1}, \eta_{1,x_1}, \eta_{0,x_1}, s_{x_1}, \ldots, P_{1 \mid x_M}, P_{0 \mid x_M}, \eta_{1,x_M}, \eta_{0,x_M}, s_{x_M})
\end{equation*}
is an element of $\prod_{m=1}^M \ell^\infty(\mathcal{F}_{1,x_m}) \times \ell^\infty(\mathcal{F}_{0,x_m}) \times \mathbb{R}^{K_1}\times \mathbb{R}^{K_0} \times \mathbb{R}$.

The function $T$ is viewed as the composition of four functions: $T(P) = T_4(T_3(T_2(T_1(P))))$. 
\begin{enumerate}
	\item $T_1$ is the map to the conditional distributions and $\eta_{d,x}$: $T_1(P) = (\{P_{1 \mid x}, P_{0 \mid x}, \eta_{1,x}, \eta_{0,x}, s_x\}_{x \in \mathcal{X}})$, 
	\item $T_2$ involves optimal transport: $T_2(\{(P_{1 \mid x}, P_{0 \mid x}, \eta_{1,x}, \eta_{0,x}, s_x)\}_{x \in \mathcal{X}}) = (\{\theta_x^L, \theta_x^H, \eta_{1,x}, \eta_{0,x}, s_x\}_{x\in \mathcal{X}})$,
	\item $T_3$ takes expectations over covariates: $T_3(\{(\theta_x^L, \theta_x^H, \eta_{1,x}, \eta_{0,x}, s_x)\}_{x \in \mathcal{X}}) \mapsto (\theta^L, \theta^H, \eta)$, 
	\item $T_4$ optimizes over $t \in [\theta^L, \theta^H]$: $T^4(\theta^L, \theta^H, \eta) = (\gamma^L, \gamma^H)$.
\end{enumerate}

\subsection{Verifying Donsker conditions}

Before studying this map, this subsection shows the relevant sets are Donsker. The function classes $\mathcal{F}_c$ and $\mathcal{F}_c^c$ given by \eqref{Defn: F_c for smooth costs} and \eqref{Defn: F_c^c for smooth costs}, or by \eqref{Defn: F_c for indicator costs of convex C} and \eqref{Defn: F_c^c for indicator costs of convex C}, are well known Donsker classes as noted below. The results of \cite{vaart1997weak} chapter 2.10 allow these to be extended to show $\mathcal{F}_{1,x}$ and $\mathcal{F}_{0,x}$ are Donsker. It follows quickly that $\mathcal{F}$ is Donsker.

\begin{restatable}[]{lemma}{lemmaCConcaveFunctionsLipschitzCostCompactSupportsDonsker}
	\singlespacing
	\label{Lemma: weak convergence, Donsker results, c-concave functions for smooth costs}
	
	Suppose that $\mathcal{Y} \subset \mathbb{R}$ is compact and $c : \mathcal{Y} \times \mathcal{Y} \rightarrow \mathbb{R}$ is $L$-Lipschitz. Let $\mathcal{F}_c$, $\mathcal{F}_c^c$ be given by \eqref{Defn: F_c for smooth costs} and \eqref{Defn: F_c^c for smooth costs} respectively. Then $\mathcal{F}_c$ and $\mathcal{F}_c^c$ are universally Donsker.
\end{restatable}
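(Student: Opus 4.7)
The plan is to invoke standard metric-entropy bounds for uniformly bounded, uniformly Lipschitz function classes on a compact subset of $\mathbb{R}$. Since $\mathcal{Y}$ is compact, it is contained in some bounded interval $[a,b]$. Both $\mathcal{F}_c$ and $\mathcal{F}_c^c$ are subsets of the class
\begin{equation*}
    \mathcal{L}_{M, L} = \left\{f : \mathcal{Y} \to \mathbb{R} \; ; \; \lvert f \rvert \leq M, \; \lvert f(y) - f(y') \rvert \leq L \lvert y - y' \rvert \right\}
\end{equation*}
for $M = 2 \lVert c \rVert_\infty$ and the Lipschitz constant $L$ of the cost. A constant envelope $F \equiv M$ is measurable and square-integrable under any $P$, so it suffices to control the uniform metric entropy of $\mathcal{L}_{M, L}$ and invoke the uniform-entropy Donsker theorem (e.g.\ \cite{vaart1997weak}, Theorem 2.5.2 together with the results in Section 2.6).

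First I would bound the $\varepsilon$-covering number of $\mathcal{L}_{M, L}$ in the supremum norm, by the standard discretization argument: partition $[a,b]$ into $\lceil (b-a)L / \varepsilon \rceil$ subintervals of length at most $\varepsilon / L$, so that any $f \in \mathcal{L}_{M, L}$ oscillates by at most $\varepsilon$ on each subinterval. Approximating $f$ by a piecewise-constant function whose values at the endpoints lie on an $\varepsilon$-grid of $[-M, M]$ gives
\begin{equation*}
    \log N(\varepsilon, \mathcal{L}_{M, L}, \lVert \cdot \rVert_\infty) \leq K / \varepsilon
\end{equation*}
for a constant $K$ depending only on $a, b, L, M$. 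Since $\sup_Q \log N(\varepsilon \lVert F \rVert_{Q, 2}, \mathcal{L}_{M, L}, L_2(Q)) \leq \log N(\varepsilon M, \mathcal{L}_{M, L}, \lVert \cdot \rVert_\infty)$, the class satisfies the uniform entropy bound $\log \sup_Q N(\varepsilon \lVert F \rVert_{Q,2}, \mathcal{L}_{M, L}, L_2(Q)) \lesssim 1/\varepsilon$.

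Next, I would verify the entropy integral is finite:
\begin{equation*}
    \int_0^1 \sqrt{\sup_Q \log N(\varepsilon \lVert F \rVert_{Q,2}, \mathcal{L}_{M, L}, L_2(Q))} \, d\varepsilon \lesssim \int_0^1 \varepsilon^{-1/2} \, d\varepsilon < \infty,
\end{equation*}
so the standard uniform-entropy sufficient condition for the Donsker property applies to every probability measure. Measurability of suprema is ensured because $\mathcal{L}_{M, L}$ is separable in $\lVert \cdot \rVert_\infty$ (its image under restriction to a countable dense subset of $\mathcal{Y}$ determines each function by continuity), so pointwise suprema over $\mathcal{L}_{M, L}$ agree with suprema over a countable subfamily. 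This yields that $\mathcal{L}_{M, L}$, and hence its subsets $\mathcal{F}_c$ and $\mathcal{F}_c^c$, are universally Donsker.

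The only mildly delicate step is ensuring the measurability/separability issue is handled cleanly so that the uniform-entropy Donsker theorem applies; otherwise the argument is a direct citation of a classical result for Lipschitz classes on a compact interval, with the dimension $d = 1$ ensuring the square-root of the entropy bound is integrable.
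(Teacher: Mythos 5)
Your argument is correct but takes a different route from the paper. The paper's proof is a one-line appeal to the \emph{bracketing} entropy bound in \cite{vaart1997weak}, Corollary 2.7.4: for the class $C^1_M$ of uniformly bounded $L$-Lipschitz functions on a bounded convex subset of $\mathbb{R}$, one has $\log N_{[\,]}(\varepsilon, C^1_M, L_2(Q)) \lesssim 1/\varepsilon$ uniformly in the probability measure $Q$, and the bracketing integral $\int_0^1 \varepsilon^{-1/2}\,d\varepsilon$ is finite, so the class is universally Donsker by the bracketing Donsker theorem. You instead bound the \emph{uniform covering} numbers with respect to $\lVert\cdot\rVert_\infty$ via the classical Kolmogorov--Tikhomirov discretization, dominate the $L_2(Q)$ entropy by the sup-norm entropy (correct, since the envelope is the constant $M$ and $L_2(Q)$ balls contain sup-norm balls of the same radius), and apply the uniform-entropy Donsker criterion. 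Both routes produce the same entropy rate $1/\varepsilon$, so the $\sqrt{\phantom{\varepsilon}}$-integral converges in dimension one in either case. The one genuine difference in cost is that the uniform-entropy theorem carries a measurability/pointwise-separability hypothesis that bracketing does not, which is exactly the point you flag and resolve via separability of $\mathcal{L}_{M,L}$ in $\lVert\cdot\rVert_\infty$ (restriction to a countable dense subset of $\mathcal{Y}$ determines each function by continuity). That resolution is sound, so the proof is complete; the paper's bracketing route is just shorter because it avoids that last step.
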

\begin{proof}
	\singlespacing
	Note that any distribution defined on the compact $\mathcal{Y}$ has a finite $2 + \delta$ moment. The result follows from the bracketing number bound given by \cite{vaart1997weak} corollary 2.7.4.
\end{proof}

\begin{restatable}[]{lemma}{lemmaCConcaveFunctionsForIndicatorCostsAreDonsker}
	\label{Lemma: weak convergence, Donsker results, c-concave functions for indicator of convex set costs}
	\singlespacing
	
	$\mathcal{F}_c$ and $\mathcal{F}_c^c$ given by \eqref{Defn: F_c for indicator costs of convex C} and \eqref{Defn: F_c^c for indicator costs of convex C} are universally Donsker.
\end{restatable}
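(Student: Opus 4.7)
The plan is to reduce both claims to the well-known fact that indicators of intervals form a VC class, and then invoke the standard result that uniformly bounded VC classes of functions are universally Donsker.

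First I would handle $\mathcal{F}_c$. The collection of intervals $\{I \subseteq \mathbb{R}\}$ (closed, open, or half-open, finite or infinite) is a VC class of sets with VC dimension at most $2$: any three-point set $\{a<b<c\}$ cannot have the subset $\{a,c\}$ picked out by an interval. Consequently the collection of indicator functions $\{\mathbbm{1}_I : I \text{ interval}\}$ is a VC-subgraph class with a constant envelope $F \equiv 1$. By \cite{vaart1997weak} Theorem 2.6.7 (combined with Theorem 2.5.2), any uniformly bounded VC-subgraph class is universally Donsker. This gives the claim for $\mathcal{F}_c$.

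For $\mathcal{F}_c^c$, the key observation is that $-\mathbbm{1}\{y_0 \in I^c\} = \mathbbm{1}\{y_0 \in I\} - 1$, so $\mathcal{F}_c^c = \{\varphi - 1 : \varphi \in \mathcal{F}_c\}$, i.e., $\mathcal{F}_c^c$ is obtained from $\mathcal{F}_c$ by a translation by the constant function $-1$. The Donsker property is preserved under such translations: if $\mathbb{G}_n = \sqrt{n}(\mathbb{P}_n - P)$ converges weakly in $\ell^\infty(\mathcal{F}_c)$, then because $\mathbb{G}_n(\varphi - 1) = \mathbb{G}_n(\varphi) - \mathbb{G}_n(1) = \mathbb{G}_n(\varphi)$ (constants are annihilated by the centering), the induced process on $\mathcal{F}_c^c$ is just a relabeling of the same Gaussian process and inherits weak convergence, tight limits, and measurability of the supremum.

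The only genuine obstacle is handling boundary cases and measurability: one must allow $I$ to range over all intervals (open, closed, half-open, possibly degenerate), and verify that the resulting class is pointwise measurable so that outer-expectation subtleties in the Donsker definition pose no problem. This is standard since the indicator of any interval is a pointwise limit of indicators of rational-endpoint open intervals, so it suffices to work with a countable subclass when evaluating suprema, and the VC argument applies to this countable core. No further work is required.
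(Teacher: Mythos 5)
Your proof of the $\mathcal{F}_c$ claim follows the paper's route exactly: indicators of intervals form a VC class, and uniformly bounded, suitably measurable VC classes are universally Donsker. (The discrepancy between your ``VC dimension at most $2$'' and the paper's ``VC-dimension at most $3$'' is purely a convention mismatch -- VC index versus VC dimension -- and not a substantive difference.)

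For $\mathcal{F}_c^c$, however, you take a genuinely different and cleaner route. The paper separately establishes that complements of intervals form a VC class (via a four-point shattering argument: no complement of an interval can pick out $\{y_1,y_3\}$ from four ordered points) and then applies the VC $\Rightarrow$ Donsker result a second time. You instead observe the algebraic identity $-\mathbbm{1}\{y_0 \in I^c\} = \mathbbm{1}\{y_0 \in I\} - 1$, so that $\mathcal{F}_c^c$ is the class $\mathcal{F}_c$ shifted by the constant $-1$, and then use the fact that centering annihilates constants, $\mathbb{G}_n(\varphi - 1) = \mathbb{G}_n(\varphi)$, to transfer the Donsker property directly. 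This is correct and buys you economy: no second VC bound is needed, and the argument makes the structural relationship between $\mathcal{F}_c$ and $\mathcal{F}_c^c$ explicit. The paper's approach has the modest advantage of being self-contained per class and of giving explicit VC bounds for each (which can matter if one later wants quantitative entropy bounds), but for the purpose of establishing the universal Donsker property, your translation argument is sufficient and arguably preferable. Your closing remark on measurability matches the paper's parenthetical ``(suitably measurable)'' hedge; both treatments leave this as standard.
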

\begin{proof}
	\singlespacing
	
	The intervals (convex subsets of $\mathbb{R}$) form a well-known VC class with VC-dimension at most 3. Consider an arbitrary set of three real numbers $\{y_1, y_2, y_3\}$ with $y_1 < y_2 < y_3$, and notice that no interval can pick out the set $\{y_1, y_3\}$; that is, there does not exist an interval $I$ with $\{y_1, y_3\} = \{y_1, y_2, y_3\} \cap I$. Since the intervals cannot shatter finite sets of size $3$, the VC-dimension of the intervals is at most $3$. 
	
	Similarly, the complements of intervals form a VC class of VC-dimension at most 4. Consider $\{y_1, y_2, y_3, y_4\}$ with $y_1 < y_2 < y_3 < y_4$ and notice that no complement of an interval can pick out $\{y_1, y_3\}$. Since the complements of intervals cannot shatter finite sets of size $4$, the VC-dimension of the complements of intervals is at most $4$. 
	
	The claim follows, because any (suitably measurable) VC class is Donsker for any probability measure (\cite{vaart1997weak} section 2.6.1).
\end{proof}

\begin{restatable}[]{lemma}{lemmaDonskerTimesIndicatorIsDonsker}
	\label{Lemma: weak convergence, Donsker results, Donsker times indicator is Donsker}
	\singlespacing
	
	Let $\mathcal{G}$ be $P$-Donsker and $\mathbbm{1}_A$ be the indicator function for the set $A$. Then the set $\left\{\mathbbm{1}_A \times g \; ; \; g \in \mathcal{G}\right\}$ is $P$-Donsker. 
\end{restatable}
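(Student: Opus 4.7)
The plan is to verify that $\mathbbm{1}_A \cdot \mathcal{G}$ inherits the two defining ingredients of a $P$-Donsker class from $\mathcal{G}$: total boundedness of the index set in the $L_2(P)$ semimetric and asymptotic equicontinuity of the empirical process $\mathbb{G}_n = \sqrt{n}(\mathbb{P}_n - P)$. The observation driving both steps is that $\mathbbm{1}_A$ is nonnegative with $\|\mathbbm{1}_A\|_\infty \leq 1$, so pointwise multiplication by $\mathbbm{1}_A$ is both nonexpansive and order-preserving in $L_2(P)$: for any $g, g' \in \mathcal{G}$,
\[
\|\mathbbm{1}_A g - \mathbbm{1}_A g'\|_{L_2(P)}^2 = P\bigl(\mathbbm{1}_A (g-g')^2\bigr) \leq \|g - g'\|_{L_2(P)}^2.
\]

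Total boundedness is then immediate, since any finite $\varepsilon$-net $\{g_1, \ldots, g_N\}$ for $\mathcal{G}$ in $L_2(P)$ yields a finite $\varepsilon$-net $\{\mathbbm{1}_A g_1, \ldots, \mathbbm{1}_A g_N\}$ for $\mathbbm{1}_A \cdot \mathcal{G}$ in the same semimetric. For equicontinuity I would transfer brackets: given any bracket $[\ell, u]$ satisfying $\ell \leq g \leq u$ and $\|u - \ell\|_{L_2(P)} < \varepsilon$, nonnegativity of $\mathbbm{1}_A$ gives $\mathbbm{1}_A \ell \leq \mathbbm{1}_A g \leq \mathbbm{1}_A u$, while the contraction bound in the display above yields $\|\mathbbm{1}_A u - \mathbbm{1}_A \ell\|_{L_2(P)} < \varepsilon$. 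Consequently
\[
N_{[\,]}\bigl(\varepsilon, \mathbbm{1}_A \cdot \mathcal{G}, L_2(P)\bigr) \leq N_{[\,]}\bigl(\varepsilon, \mathcal{G}, L_2(P)\bigr) \quad \text{for every } \varepsilon > 0,
\]
and any bracketing entropy integral bound enjoyed by $\mathcal{G}$ carries over to $\mathbbm{1}_A \cdot \mathcal{G}$, giving the conclusion via the standard bracketing central limit theorem whenever $\mathcal{G}$ is Donsker by bracketing.

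The anticipated obstacle is the fully general case, in which $\mathcal{G}$ may be $P$-Donsker via some route other than bracketing (for instance via uniform entropy or a VC argument). There I would appeal to a preservation theorem such as Example~2.10.8 in \cite{vaart1997weak}: the singleton class $\{\mathbbm{1}_A\}$ is trivially Donsker with bounded envelope $1$; $\mathcal{G}$ satisfies $\sup_{g \in \mathcal{G}} P g^2 < \infty$ because it is totally bounded in $L_2(P)$ about any fixed element of $L_2(P)$; and the product of a Donsker class with bounded envelope and a Donsker class with square-integrable envelope is again Donsker. Measurability of $\mathbbm{1}_A \cdot \mathcal{G}$ is inherited from that of $\mathcal{G}$ since multiplication by a fixed measurable function preserves joint measurability.
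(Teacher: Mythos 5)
Your core observation --- that $g \mapsto \mathbbm{1}_A g$ is a pointwise contraction, because $|\mathbbm{1}_A(w) g(w) - \mathbbm{1}_A(w) g'(w)|^2 = \mathbbm{1}_A(w)|g(w)-g'(w)|^2 \leq |g(w)-g'(w)|^2$ --- is exactly the right one, and the paper's proof rests on it. But the argument built around it has a gap, as you yourself half-notice.

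The bracketing branch only covers the case that $\mathcal{G}$ is Donsker \emph{via} bracketing, which the hypothesis does not grant. And the fallback to Example~2.10.8 of \cite{vaart1997weak} does not deliver what you claim: that example treats products of two \emph{uniformly bounded} Donsker classes. The statement you invoke --- bounded-envelope Donsker times square-integrable-envelope Donsker is Donsker --- is not Example~2.10.8, and I do not believe it holds in that generality: for the Lipschitz decomposition $|f_1 g_1 - f_2 g_2| \leq M|g_1 - g_2| + G|f_1 - f_2|$, the term $G|f_1 - f_2|$ ruins the uniform Lipschitz constant when $G$ is unbounded. You also slide from ``$\sup_{g} Pg^2 < \infty$'' (which total boundedness in $L_2(P)$ does give) to ``square-integrable envelope'' (a pointwise domination requirement), which are not the same thing.

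What saves you here is not a general product-preservation result but the fact that the multiplying class $\{\mathbbm{1}_A\}$ is a singleton, so the problematic $G|f_1-f_2|$ term is identically zero and the pointwise contraction you already wrote down is the entire content of the Lipschitz hypothesis in Theorem~2.10.6 of \cite{vaart1997weak}. That theorem applies directly: take $\phi(g,h) = hg$ on $\mathcal{G} \times \{\mathbbm{1}_A\}$, check the pointwise inequality $|\phi\circ(g_1,\mathbbm{1}_A)(w) - \phi\circ(g_2,\mathbbm{1}_A)(w)|^2 \leq |g_1(w)-g_2(w)|^2$ (your display), and note $(\mathbbm{1}_A g)^2 \leq g^2$ gives square integrability. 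This one-step argument covers all Donsker $\mathcal{G}$ without splitting into cases, and is what the paper does. So your instinct was right; the fix is simply to route the contraction through Theorem~2.10.6 rather than through a bracketing argument plus a misremembered product theorem.
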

\begin{proof}
	\singlespacing
	
	The proof is an application of \cite{vaart1997weak} theorem 2.10.6. Specifically, let $\phi : \mathcal{G} \times \{\mathbbm{1}_A\} \rightarrow \mathbb{R}$ be the map $\phi(g, \mathbbm{1}_a) = \mathbbm{1}_A \times g$. Notice that for any $f,g  \in \mathcal{G}_1 \times \{\mathbbm{1}_A\}$, 
	\begin{align*}
		\lvert \phi \circ f(w) - \phi \circ g(w) \rvert^2 &= \left\lvert \mathbbm{1}_A(w) \times f_1(w) - \mathbbm{1}_A(w) \times g_1(w) \right\rvert^2 \\
		&= \mathbbm{1}_A(w) \times \lvert f_1(w) - g_2(w) \rvert^2 \\
		&\leq \lvert f_1(w) - g_1(w) \rvert^2 = \sum_{\ell=1}^k (f_\ell(w) - g_\ell(w))^2
	\end{align*}
	and thus \cite{vaart1997weak} condition (2.10.5) holds. Moreover, notice that for any $g \in \mathcal{G}$, $(\mathbbm{1}_A \times g)^2 \leq g^2$ and $P$-square integrability of $g \in \mathcal{G}$ implies $\mathbbm{1}_A \times g$ is $P$-square integrable. Thus \cite{vaart1997weak} theorem 2.10.6 implies $\left\{\mathbbm{1}_A \times g \; ; \; g \in \mathcal{G}\right\}$ is $P$-Donsker. 
\end{proof}

\begin{restatable}[$\mathcal{F}_{d,x}$ are $P$-Donsker]{lemma}{lemmaDonskerResultForFd}
	\label{Lemma: weak convergence, Donsker results, Donsker result for F_dx}
	\singlespacing
	
	Suppose assumptions \ref{Assumption: setting}, \ref{Assumption: cost function}, and \ref{Assumption: parameter, function of moments} hold. Let $\mathcal{F}_c$ and $\mathcal{F}_c^c$ be given by \eqref{Defn: F_c for smooth costs} and \eqref{Defn: F_c^c for smooth costs}, or by \eqref{Defn: F_c for indicator costs of convex C} and \eqref{Defn: F_c^c for indicator costs of convex C}. Let $\mathcal{F}_{d,x}$ be as defined in \eqref{Defn: F_dx, the set on which P_{d|x} is defined}. Then $\mathcal{F}_{d,x}$ is $P$-Donsker and $\sup_{f \in \mathcal{F}_{d,x}} \lvert P(f) \rvert < \infty$. 
	
\end{restatable}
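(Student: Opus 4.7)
The plan is to assemble $\mathcal{F}_{d,x}$ from pieces already shown to be $P$-Donsker and invoke the standard preservation results that (a) the union of two Donsker classes is Donsker and (b) multiplying by a fixed indicator preserves the Donsker property (already established as Lemma \ref{Lemma: weak convergence, Donsker results, Donsker times indicator is Donsker}). Since $\mathcal{F}_{d,x}$ is built explicitly out of $\mathcal{F}_c$ (or $\mathcal{F}_c^c$), the finite collection $\{\eta_d^{(k)}\}_{k=1}^{K_d}$, and indicator-truncated versions of both, the argument reduces to five short steps.

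First, under Assumption \ref{Assumption: cost function}, one of parts (i) or (ii) holds. Under (i), Lemma \ref{Lemma: weak convergence, Donsker results, c-concave functions for smooth costs} shows $\mathcal{F}_c$ and $\mathcal{F}_c^c$ are universally Donsker; under (ii), Lemma \ref{Lemma: weak convergence, Donsker results, c-concave functions for indicator of convex set costs} gives the same conclusion. In both cases these classes are uniformly bounded (by $\lVert c \rVert_\infty$, by $2\lVert c \rVert_\infty$, or by $1$). Second, Assumption \ref{Assumption: parameter, function of moments}\ref{Assumption: parameter, function of moments, nuisance moments have finite variance} gives $E[\lVert \eta_d(Y)\rVert^2] < \infty$, so each coordinate $\eta_d^{(k)}$ lies in $L^2(P)$; a finite collection of $L^2(P)$ functions is trivially $P$-Donsker by the multivariate CLT (it is a finite-dimensional Donsker class).

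Third, $\tilde{\mathcal{F}}_d$ is the union of $\mathcal{F}_c$ (or $\mathcal{F}_c^c$) with the finite collection $\{\eta_d^{(k)}\}_{k=1}^{K_d}$, and the union of finitely many $P$-Donsker classes is $P$-Donsker (see \cite{vaart1997weak}, Theorem 2.10.7). Fourth, applying Lemma \ref{Lemma: weak convergence, Donsker results, Donsker times indicator is Donsker} with $A = \mathcal{Y}_{d,x}$ shows $\{\mathbbm{1}_{\mathcal{Y}_{d,x}} \times g : g \in \tilde{\mathcal{F}}_d\}$ is $P$-Donsker. Fifth, $\mathcal{F}_{d,x}$ equals the union of $\tilde{\mathcal{F}}_d$ and this indicator-multiplied class, hence is $P$-Donsker by the same union argument.

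For the second claim, uniform boundedness of $P(f)$ over $f \in \mathcal{F}_{d,x}$ is straightforward. For $\varphi \in \mathcal{F}_c$ or $\psi \in \mathcal{F}_c^c$ one has $\lvert \varphi \rvert, \lvert \psi \rvert$ bounded by an absolute constant $M$ (either $2\lVert c \rVert_\infty$ or $1$), so $\lvert P(\varphi) \rvert \leq M$. For the finitely many $\eta_d^{(k)}$, Jensen's inequality gives $\lvert P(\eta_d^{(k)})\rvert \leq \sqrt{E[\lVert \eta_d(Y)\rVert^2]} < \infty$. Multiplication by $\mathbbm{1}_{\mathcal{Y}_{d,x}}$ only shrinks the absolute value, since $\lvert P(\mathbbm{1}_{\mathcal{Y}_{d,x}} \times f)\rvert \leq P(\lvert f \rvert)$. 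Taking the maximum across these finitely many cases yields a finite upper bound.

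No single step is delicate; the main thing to double-check is that the $L^2(P)$ condition for the $\eta_d^{(k)}$ is satisfied as needed (which it is, verbatim from \ref{Assumption: parameter, function of moments, nuisance moments have finite variance}) and that the version of Lemma \ref{Lemma: weak convergence, Donsker results, Donsker times indicator is Donsker} used here applies with a single fixed indicator $\mathbbm{1}_{\mathcal{Y}_{d,x}}$ (it does, as that lemma is stated for an arbitrary set $A$).
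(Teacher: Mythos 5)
Your proposal is correct and follows the same route as the paper's own proof: decompose $\mathcal{F}_{d,x}$ into $\tilde{\mathcal{F}}_d \cup \{\mathbbm{1}_{\mathcal{Y}_{d,x}} \times g : g \in \tilde{\mathcal{F}}_d\}$, show $\tilde{\mathcal{F}}_d$ is Donsker by taking the union of $\mathcal{F}_c$ (or $\mathcal{F}_c^c$, via lemma \ref{Lemma: weak convergence, Donsker results, c-concave functions for smooth costs} or \ref{Lemma: weak convergence, Donsker results, c-concave functions for indicator of convex set costs}) with the finite square-integrable collection $\{\eta_d^{(k)}\}$, apply lemma \ref{Lemma: weak convergence, Donsker results, Donsker times indicator is Donsker} to handle the indicator-truncated copies, invoke the union-preservation result from \cite{vaart1997weak} at each step, and separately bound $\sup_{f \in \mathcal{F}_{d,x}}\lvert P(f) \rvert$ by the uniform bound on $\mathcal{F}_c$/$\mathcal{F}_c^c$ together with Jensen for the $\eta_d^{(k)}$. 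The only cosmetic difference is that the paper cites \cite{vaart1997weak} example 2.10.7 (rather than theorem) and interleaves the verification of $\lVert P \rVert_{\tilde{\mathcal{F}}_d} < \infty$ with the union step — a condition you supply at the end rather than inline, but supply nonetheless.
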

\begin{proof}
	\singlespacing
	
	\begin{enumerate}
		\item We first show $\tilde{\mathcal{F}}_d$ is $P$-Donsker and $\sup_{g \in \tilde{F}_d} \lvert P(f) \rvert < \infty$. The argument shows the argument for $\tilde{\mathcal{F}}_1$, as the same argument works when applied to $\tilde{\mathcal{F}}_0$.
		
		Begin by noticing that 
		\begin{align*}
			\tilde{\mathcal{F}}_1 &= \left\{f : \mathcal{Y}\rightarrow \mathbb{R} \; ; \; f = \varphi \text{ for some } \varphi \in \mathcal{F}_c, \text{ or } f = \eta_1^{(k)} \text{ for some } k = 1, \ldots, K_1\right\} \\
			&= \mathcal{F}_c \cup \left\{\eta_1^{(1)}, \ldots, \eta_1^{(K_1)}\right\} 
		\end{align*}
		Since $\left\{\eta_1^{(1)}, \ldots, \eta_1^{(K_1)}\right\}$ is a finite number of functions which, by assumption \ref{Assumption: parameter, function of moments} \ref{Assumption: parameter, function of moments, nuisance moments have finite variance}, have finite second $P$-moment: $P((\eta_1^{(k)})^2) < \infty$. Thus $\left\{\eta_1^{(1)}, \ldots, \eta_1^{(K_1)}\right\}$ is Donsker. $\mathcal{F}_c$ is Donsker by lemma \ref{Lemma: weak convergence, Donsker results, c-concave functions for smooth costs} or \ref{Lemma: weak convergence, Donsker results, c-concave functions for indicator of convex set costs}, and so $\tilde{\mathcal{F}}_1 = \mathcal{F}_c \cup \left\{\eta_1^{(1)}, \ldots, \eta_1^{(K_1)}\right\}$ is the union of two $P$-Donsker sets. Since 
		\begin{equation*}
			\lVert P \rVert_{\tilde{\mathcal{F}}_1} = \max\{\sup_{\varphi \in \mathcal{F}_c} \lvert P(\varphi) \rvert, \lvert P(\eta_1^{(1)}) \rvert, \ldots, \lvert P(\eta_1^{(K_1)}) \rvert\} < \infty
		\end{equation*}
		\cite{vaart1997weak} example 2.10.7 shows $\tilde{\mathcal{F}}_1$ is $P$-Donsker. Note we have also shown that $\sup_{g \in \tilde{F}_1} \lvert P(f) \rvert < \infty$. 
		
		\item Now notice that 
		\begin{align*}
			\mathcal{F}_{d,x} &= \left\{f : \mathcal{Y}\rightarrow \mathbb{R} \; ; \; f = g \text{ or } \mathbbm{1}_{\mathcal{Y}_{d, x}} \times g  \text{ for some } g \in \tilde{\mathcal{F}}_d\right\} \\
			&= \tilde{\mathcal{F}}_d \cup \left\{\mathbbm{1}_{\mathcal{Y}_{d, x}} \times g  \; ; \; g \in \tilde{\mathcal{F}}_d \right\}
		\end{align*}
		Lemma \ref{Lemma: weak convergence, Donsker results, Donsker times indicator is Donsker} shows $\left\{\mathbbm{1}_{\mathcal{Y}_{d, x}} \times g  \; ; \; g \in \tilde{\mathcal{F}}_d \right\}$ is $P$-Donsker. Moreover, since $\mathcal{F}_c$ is universally bounded, 
		\begin{align*}
			\lVert P \rVert_{\left\{\mathbbm{1}_{\mathcal{Y}_{d, x}} \times g  \; ; \; g \in \tilde{\mathcal{F}}_d \right\}} = \max\left\{\sup_{\varphi \in \mathcal{F}_c} \lvert P(\mathbbm{1}_{\mathcal{Y}_{d,x}} \times \varphi) \rvert, \lvert P(\mathbbm{1}_{\mathcal{Y}_{d,x}} \times \eta_1^{(1)}) \rvert, \ldots, \lvert P(\mathbbm{1}_{\mathcal{Y}_{d,x}} \times \eta_1^{(K_1)}) \rvert\right\} < \infty
		\end{align*}
		It follows that
		\begin{align*}
			\lVert P \rVert_{\mathcal{F}_{d,x}} = \sup_{f \in \mathcal{F}_{d,x}} \lvert P(f) \rvert = \max\left\{\sup_{f \in \tilde{\mathcal{F}}_d} \lvert P(f) \rvert, \sup_{f \in \{\mathbbm{1}_{\mathcal{Y}_{d,x}} \times g \; ; \; g \in \tilde{\mathcal{F}}_d\}} \lvert P(f) \rvert\right\} < \infty
		\end{align*}
		Thus \cite{vaart1997weak} example 2.10.7 implies $\mathcal{F}_1$ is $P$-Donsker.
	\end{enumerate}
\end{proof}

\begin{restatable}[$\mathcal{F}$ is $P$-Donsker]{lemma}{lemmaLargeDonskerSetIsDonsker}
	\label{Lemma: weak convergence, Donsker results, large Donsker set F is Donsker}
	\singlespacing
	
	Suppose assumptions \ref{Assumption: setting}, \ref{Assumption: cost function} and \ref{Assumption: parameter, function of moments} hold. Then $\mathcal{F}$ is $P$-Donsker, implying
	\begin{align*}
		&\sqrt{n}(\mathbb{P}_n - P) \overset{L}{\rightarrow} \mathbb{G} &&\text{ in } \ell^\infty(\mathcal{F}),
	\end{align*}
	where $\mathbb{G}$ is a tight, mean-zero Gaussian process with $P(\mathbb{G} \in \mathcal{C}(\mathcal{F}, L_{2,P}) = 1$. 
\end{restatable}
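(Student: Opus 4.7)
The plan is to build $\mathcal{F}$ up from pieces already shown to be Donsker, relying on the preservation of the Donsker property under (i) multiplication by indicator functions of measurable sets, and (ii) finite unions of Donsker classes with uniformly bounded $P$-norms. The ingredients have already been assembled in the preceding lemmas: Lemma \ref{Lemma: weak convergence, Donsker results, Donsker result for F_dx} gives $\mathcal{F}_{d,x}$ is $P$-Donsker with $\sup_{f \in \mathcal{F}_{d,x}}|P(f)| < \infty$, and Lemma \ref{Lemma: weak convergence, Donsker results, Donsker times indicator is Donsker} gives the indicator-multiplication property.

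Concretely, I would proceed in three short steps. First, for each triple $(d,x,z)$ apply Lemma \ref{Lemma: weak convergence, Donsker results, Donsker times indicator is Donsker} with $A = \{D=d, X=x, Z=z\}$ to conclude that $\mathcal{F}_{d,x,z} := \{\mathbbm{1}_{d,x,z} \times f \; ; \; f \in \mathcal{F}_{d,x}\}$ is $P$-Donsker. Since the universal bound on $\mathcal{F}_c$ and the finite second moment of $\eta_d^{(k)}$ (Assumption \ref{Assumption: parameter, function of moments} \ref{Assumption: parameter, function of moments, nuisance moments have finite variance}) are preserved after multiplication by an indicator, we also have $\sup_{g \in \mathcal{F}_{d,x,z}} |P(g)| \leq \sup_{f \in \mathcal{F}_{d,x}} |P(f)| < \infty$. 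Second, the residual set $\{\mathbbm{1}_{d,x,z}, \mathbbm{1}_{x,z}, \mathbbm{1}_x\}$ consists of finitely many uniformly bounded functions and is trivially $P$-Donsker with bounded $P$-norm.

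Third, combine these finitely many Donsker classes via Example 2.10.7 in \cite{vaart1997weak}: a finite union of $P$-Donsker classes, each with finite $\lVert P \rVert$, is itself $P$-Donsker. Iterating this over the finite index set $(d,x,z) \in \{0,1\} \times \mathcal{X} \times \{0,1\}$ (which is finite because $\mathcal{X}$ is finite by Assumption \ref{Assumption: setting}) and absorbing the residual set of indicators yields that $\mathcal{F}$ is $P$-Donsker. By definition this gives $\sqrt{n}(\mathbb{P}_n - P) \overset{L}{\rightarrow} \mathbb{G}$ in $\ell^\infty(\mathcal{F})$ with $\mathbb{G}$ a tight mean-zero Gaussian process; tightness together with the standard fact that the sample paths of a tight Gaussian process indexed by a Donsker class are almost surely uniformly $L_{2,P}$-continuous (see \cite{vaart1997weak} Section 2.1, especially Addendum 1.5.8 and the discussion after Theorem 2.1.7) yields $P(\mathbb{G} \in \mathcal{C}(\mathcal{F}, L_{2,P})) = 1$.

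There is no real obstacle here: the argument is a bookkeeping exercise that simply chains together the previous lemmas, with the only mild point being to notice that the finiteness of $\mathcal{X}$, $\{0,1\}$, and $\{0,1\}$ makes the union over $(d,x,z)$ finite so that Example 2.10.7 applies. The substantive content — showing $\mathcal{F}_c$, $\mathcal{F}_c^c$, and hence $\mathcal{F}_{d,x}$ are $P$-Donsker — has already been done upstream in the VC/bracketing arguments of Lemmas \ref{Lemma: weak convergence, Donsker results, c-concave functions for smooth costs} and \ref{Lemma: weak convergence, Donsker results, c-concave functions for indicator of convex set costs}.
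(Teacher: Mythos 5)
Your proposal matches the paper's own proof essentially step for step: apply Lemma \ref{Lemma: weak convergence, Donsker results, Donsker times indicator is Donsker} to each $\{\mathbbm{1}_{d,x,z} \times f \; ; \; f \in \mathcal{F}_{d,x}\}$, note that the finitely many indicators $\{\mathbbm{1}_{d,x,z}, \mathbbm{1}_{x,z}, \mathbbm{1}_x\}$ are trivially Donsker, then combine the finitely many pieces via \cite{vaart1997weak} Example 2.10.7 using the uniform $P$-norm bounds, and invoke the standard $L_{2,P}$-continuity of tight Gaussian processes. The only cosmetic difference is in the exact citation for that last fact (you point to Addendum 1.5.8 and the discussion after Theorem 2.1.7, the paper to section 2.1.2 and problem 2.1.2), which is the same result.
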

\begin{proof}
	\singlespacing
	
	Lemma \ref{Lemma: weak convergence, Donsker results, Donsker times indicator is Donsker} shows $\left\{\mathbbm{1}_{d,x,z} \times f \; ; \; f \in \mathcal{F}_{d,x}\right\}$ is $P$-Donsker. Moreover, $\mathcal{F}_{d,x}$ is the union of a subset of universally bounded functions (in either $\mathcal{F}_c$ or $\mathcal{F}_c^c$) and a finite subset of square integrable functions. It follows that 
	\begin{align*}
		\lVert P \rVert_{\left\{\mathbbm{1}_{d,x,z} \times g \; ; \; g \in \mathcal{F}_{d,x}\right\}} = \sup_{f \in \left\{\mathbbm{1}_{d,x,z} \times g \; ; \; g \in \mathcal{F}_{d,x}\right\}} \lvert P(f) \rvert < \infty
	\end{align*}
	
	Next notice that 
	\begin{equation*}
		\mathcal{F} = \bigcup_{d,x,z} \left\{\mathbbm{1}_{d,x,z} \times f \; ; \; f \in \mathcal{F}_{d,x}\right\} \cup \{\mathbbm{1}_{d,x,z}, \mathbbm{1}_{x,z}, \mathbbm{1}_x\}
	\end{equation*}
	is the union of a finite number of $P$-Donsker sets, with 
	\begin{align*}
		\lVert P \rVert_\mathcal{F} = \max_{d,x,z}\left\{\max\left\{\sup_{f \in \left\{\mathbbm{1}_{d,x,z} \times g \; ; \; g \in \mathcal{F}_{d,x}\right\}} \lvert P(f) \rvert, \lvert P(\mathbbm{1}_{d,x,z}) \rvert, \lvert P(\mathbbm{1}_{x,z}) \rvert, \lvert P(\mathbbm{1}_x) \rvert, \right\}\right\} < \infty
	\end{align*}
	It follows from \cite{vaart1997weak} example 2.10.7 that $\mathcal{F}$ is $P$-Donsker, which implies $\sqrt{n}(\mathbb{P}_n - P) \overset{L}{\rightarrow} \mathbb{G}$ in $\ell^\infty(\mathcal{F})$, where $\mathbb{G}$ is a tight, mean-zero Gaussian process. Moreover, \cite{vaart1997weak} section 2.1.2 and problem 2.1.2 imply that $P(\mathbb{G} \in \mathcal{C}(\mathcal{F}, L_{2,P}) = 1$.
\end{proof}

\subsection{Conditional Distributions, $T_1(P) = (\{P_{1 \mid x}, P_{0 \mid x}, \eta_{1,x}, \eta_{0,x}, s_x\}_{x \in \mathcal{X}})$}

\label{Appendix: weak convergence, subsection conditional distributions}

Lemma \ref{Lemma: identification, LATE IV marginal distribution identification} shows that the distributions of $Y_d \mid D_1 > D_0, X = x$, denoted $P_{d \mid x}$, are identified by
\begin{align*}
	P_{d \mid x}(f) &= E_{P_{d \mid x}}[f(Y_d)] = E[f(Y_d) \mid D_1 > D_0, X = x] \\
	&= \frac{E[f(Y) \mathbbm{1}\{D = d\} \mid Z = d, X = x] - E[f(Y) \mathbbm{1}\{D = d\} \mid Z = 1-d, X = x]}{P(D = d \mid Z = d, X = x) - P(D = d \mid Z = 1-d, X = x)}
\end{align*}
and the distribution of $X$ conditional on $D_1 > D_0$ is identified by
\begin{align*}
	s_x &= P(X = x \mid D_1 > D_0) \\
	&= \frac{\left[P(D = 1 \mid Z = 1, X = x) - P(D = 1 \mid Z = 0, X = x) \right]P(X = x) }{\sum_{x'} \left[P(D = 1 \mid Z = 1, X = x') - P(D = 1 \mid Z = 0, X = x') \right]P(X = x')}
\end{align*}

Recall the notation shortening indicators
\begin{align*}
	&\mathbbm{1}_{d,x,z}(D,X,Z) = \mathbbm{1}\{D = d, X = x, Z = z\}, &&\mathbbm{1}_{x,z}(X,Z) = \mathbbm{1}\{X = x, Z = z\}, &&\mathbbm{1}_x(X) = \mathbbm{1}\{X = x\} 
\end{align*}
and notice that $P_{d \mid x} : \ell^\infty(\mathcal{F}_d) \rightarrow \mathbb{R}$ and $s_x \in \mathbb{R}$, given by
\begin{align*}
	P_{d \mid x}(f) &= \frac{P(\mathbbm{1}_{d,x,d} \times f)/P(\mathbbm{1}_{x,d}) - P(\mathbbm{1}_{d,x,1-d} \times f)/P(\mathbbm{1}_{x,0})}{P(\mathbbm{1}_{d,x,d})/P(\mathbbm{1}_{x,d})- P(\mathbbm{1}_{d,x,1-d})/P(\mathbbm{1}_{x,1-d})}, \\
	s_x &= \frac{[P(\mathbbm{1}_{1,x,1})/P(\mathbbm{1}_{x,1}) - P(\mathbbm{1}_{1,x,0})/P(\mathbbm{1}_{x,0})]P(\mathbbm{1}_x)}{\sum_{x'} [P(\mathbbm{1}_{1,x',1})/P(\mathbbm{1}_{x',1}) - P(\mathbbm{1}_{1,x',0})/P(\mathbbm{1}_{x',0})]P(\mathbbm{1}_{x'})},
\end{align*}
are functions of $P \in \ell^\infty(\mathcal{F})$. Moreover, $\eta_{d,x}^{(k)} = E[\eta_d^{(k)}(Y_d) \mid D_1 > D_0, X = x] = P_{d \mid x}(\eta_d^{(k)})$ and $\eta_{d,x} = (\eta_{d,x}^{(1)}, \ldots, \eta_{d,x}^{(K_1)})$ is simply an evaluation of $P_{d\mid x}$ at the points $\eta_d^{(k)} \in \mathcal{F}_{d,x}$. 

This map is given by
\begin{align*}
	&T_1 : \mathbb{D}_C \subseteq \ell^\infty(\mathcal{F}) \rightarrow \prod_{m=1}^M \ell^\infty(\mathcal{F}_{1,x_m}) \times \ell^\infty(\mathcal{F}_{0,x_m}) \times \mathbb{R} \times \mathbb{R}^{(K_1)}\times \mathbb{R}^{(K_0)} \\
	&T_1(P) = \left(\left\{P_{1 \mid x}, P_{0 \mid x}, \eta_{1,x}, \eta_{0,x}, s_{x}\right\}_{x \in \mathcal{X}}\right)\\
	&\hspace{1 cm} = (P_{1 \mid x_1}, P_{0 \mid x_1}, \eta_{1,x_1}, \eta_{0,x_1}, s_{x_1}, \ldots, P_{1 \mid x_M}, P_{0 \mid x_M}, \eta_{1,x_M}, \eta_{0,x_M}, s_{x_M})
\end{align*}
where the domain, $\mathbb{D}_C  \subseteq \ell^\infty(\mathcal{F})$, ensures the map never divide by zero:
\begin{align}
	&\mathbb{D}_C = \big\{G \in \ell^\infty(\mathcal{F}) \; ; \; \text{ for all } (d,x,z), \; G(\mathbbm{1}_x) > 0, \; G(\mathbbm{1}_{x,z}) > 0, \text{ and } \notag \\
	&\hspace{4 cm} G(\mathbbm{1}_{d,x,d})/G(\mathbbm{1}_{x,d}) - G(\mathbbm{1}_{d,x,1-d})/G(\mathbbm{1}_{x,1-d}) > 0 \big\} \label{Defn: D_C, domain of the overall map}
\end{align}
Note that assumption \ref{Assumption: setting} implies $P \in \mathbb{D}_C$, a claim shown in the proof of lemma \ref{Lemma: Hadamard differentiability, T1 is fully Hadamard differentiable} below.

Lemma \ref{Lemma: Hadamard differentiability, stacking functions} shows that Hadamard differentiable functions with the same domain can be ``stacked''. Moreover, the coordinates corresponding to the $\eta$ terms are evaluations of the $P_{d \mid x}$ at specific coordinates; since evaluation is linear and continuous, the map defining these terms is fully Hadamard differentiable if the other maps are fully Hadamard differentiable. Thus it suffices to ensure the maps $C_{d, x} : \mathbb{D}_C \rightarrow \mathbb{R}$ and $C_{s,x} : \mathbb{D}_C \rightarrow \mathbb{R}$ given by $C_{d, x}(P) = P_{d \mid x}$ and $C_{s,x}(P) = s_x$ are fully Hadamard differentiable at $P$ tangentially to $\ell^\infty(\mathcal{F})$.

\begin{restatable}[Maps to conditional distributions are fully Hadamard differentiable]{lemma}{lemmaHadamardDifferentiabilityOfConditionalDistributions}
	\label{Lemma: Hadamard differentiability, conditional distributions}
	\singlespacing
	
	Let $\mathcal{F}$ be defined by \eqref{Defn: F large donsker set}, and $\mathbb{D}_C$ be defined by \eqref{Defn: D_C, domain of the overall map}. Define the functions $C_{1,x}$, $C_{0,x}$, and $C_{s,x}$ with
	\begin{align*}
		&C_{d,x} : \mathbb{D}_C \rightarrow \ell^\infty(\mathcal{F}_{d,x}), &&C_{d,x}(G)(f) = \frac{G(\mathbbm{1}_{d,x,d} \times f)/G(\mathbbm{1}_{x,d}) - G(\mathbbm{1}_{d,x,1-d}\times f)/G(\mathbbm{1}_{x,1-d})}{G(\mathbbm{1}_{d,x,d})/G(\mathbbm{1}_{x,d})- G(\mathbbm{1}_{d,x,1-d})/G(\mathbbm{1}_{x,1-d})}, \\
		&C_{s,x} : \mathbb{D}_C \rightarrow \mathbb{R}, &&C_{s,x}(G) = \frac{[G(\mathbbm{1}_{1,x,1})/G(\mathbbm{1}_{x,1}) - G(\mathbbm{1}_{1,x,0})/G(\mathbbm{1}_{x,0})]G(\mathbbm{1}_x)}{\sum_{x'} [G(\mathbbm{1}_{1,x',1})/G(\mathbbm{1}_{x',1}) - G(\mathbbm{1}_{1,x',0})/G(\mathbbm{1}_{x',0})]G(\mathbbm{1}_{x'})}
	\end{align*}
	All three functions are fully Hadamard differentiable at any $G \in \mathbb{D}_C$ tangentially to $\ell^\infty(\mathcal{F})$, with derivatives $C_{d, x, G}' : \ell^\infty(\mathcal{F}) \rightarrow \ell^\infty(\mathcal{F}_{d,x})$ and $C_{s, x, G}' : \ell^\infty(\mathcal{F}) \rightarrow \mathbb{R}$ described in the proof.
	
\end{restatable}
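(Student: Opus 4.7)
The plan is to write each of $C_{1,x}$, $C_{0,x}$, and $C_{s,x}$ as a composition of elementary operations: (i) linear continuous evaluation maps on $\ell^\infty(\mathcal{F})$; (ii) scalar multiplication, addition, and subtraction; (iii) division by a scalar bounded away from zero. Each such operation is Fréchet differentiable (hence fully Hadamard differentiable) where defined, and a chain-rule argument then delivers full Hadamard differentiability of $C_{1,x}$, $C_{0,x}$, and $C_{s,x}$ at any $G \in \mathbb{D}_C$ tangentially to the whole space $\ell^\infty(\mathcal{F})$. A preliminary step verifies that $P \in \mathbb{D}_C$ under Assumption \ref{Assumption: setting}: positivity of $P(\mathbbm{1}_x)$ and $P(\mathbbm{1}_{x,z})$ is Assumption \ref{Assumption: setting}\ref{Assumption: setting, common support}, while the denominator $P(\mathbbm{1}_{d,x,d})/P(\mathbbm{1}_{x,d}) - P(\mathbbm{1}_{d,x,1-d})/P(\mathbbm{1}_{x,1-d}) = P(T=c \mid X = x)$ is positive by Assumption \ref{Assumption: setting}\ref{Assumption: setting, existence of compliers} (via Lemma \ref{Lemma: identification, LATE IV marginal distribution identification}).

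The key building blocks I would introduce are the maps $A_{d,x,z}:\ell^\infty(\mathcal{F}) \to \ell^\infty(\mathcal{F}_{d,x})$ defined by $A_{d,x,z}(G)(f) = G(\mathbbm{1}_{d,x,z}\times f)$, and the scalar evaluation maps $B_{d,x,z}(G) = G(\mathbbm{1}_{d,x,z})$, $B_{x,z}(G) = G(\mathbbm{1}_{x,z})$, $B_x(G) = G(\mathbbm{1}_x)$. Each $\mathbbm{1}_{d,x,z}\times f$ lies in $\mathcal{F}$ by construction \eqref{Defn: F large donsker set}, so $A_{d,x,z}$ is well-defined; linearity is immediate and boundedness follows from $\|A_{d,x,z}(G)\|_{\mathcal{F}_{d,x}} \leq \|G\|_{\mathcal{F}}$, giving full Hadamard differentiability with the map itself as its derivative. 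The scalar maps $B_{d,x,z}$, $B_{x,z}$, $B_x$ are analogously linear and continuous. I would then observe that
\begin{equation*}
    C_{d,x}(G) = \frac{A_{d,x,d}(G)/B_{x,d}(G) - A_{d,x,1-d}(G)/B_{x,1-d}(G)}{B_{d,x,d}(G)/B_{x,d}(G) - B_{d,x,1-d}(G)/B_{x,1-d}(G)},
\end{equation*}
where the denominator is a scalar strictly positive on $\mathbb{D}_C$, and similarly for $C_{s,x}$. Since the scalar quotient map $(v,s)\mapsto v/s$ from $\ell^\infty(\mathcal{F}_{d,x})\times (0,\infty) \to \ell^\infty(\mathcal{F}_{d,x})$ is Fréchet differentiable at any point with $s > 0$ (a uniform Taylor expansion in $s$ suffices, using that $\|v\|_\infty < \infty$), repeated application of the chain rule and the product/quotient rules delivers full Hadamard differentiability.

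To compute the derivative explicitly I would simply apply the quotient rule, writing, for $(\varphi, \psi) \in \mathcal{F}_{1,x}\times \mathcal{F}_{0,x}$ and $H \in \ell^\infty(\mathcal{F})$,
\begin{equation*}
    C'_{d,x,G}(H)(f) = \frac{N'_G(H)(f)\, D_G - N_G(f)\, D'_G(H)}{D_G^2},
\end{equation*}
where $N_G(f)$ and $D_G$ denote the numerator (an element of $\ell^\infty(\mathcal{F}_{d,x})$) and denominator (a scalar) of $C_{d,x}(G)(f)$, and $N'_G$, $D'_G$ are obtained by the product/quotient rules applied to the evaluations $A_{d,x,z}$, $B_{d,x,z}$, $B_{x,z}$; an analogous formula applies to $C'_{s,x,G}$. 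Continuity and linearity of the resulting derivative in $H$ are inherited from continuity and linearity of the building blocks together with the uniform scalar division.

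The main conceptual obstacle is the infinite-dimensional target: we must handle division of an $\ell^\infty(\mathcal{F}_{d,x})$-valued numerator by a scalar uniformly in $f \in \mathcal{F}_{d,x}$. This is manageable because the divisor is a positive real number bounded away from zero in a neighborhood of $G = P$ (by continuity of $B_{x,z}$, $B_{d,x,z}$ and the fact that $P \in \mathbb{D}_C$), so the expansion $(v+tH_v)/(s+tH_s) = v/s + t(H_v s - v H_s)/s^2 + o(t)$ is valid uniformly in $f$ whenever $\|H_v\|_\infty$ stays bounded. The rest is tedious but routine bookkeeping to assemble the explicit derivative formula.
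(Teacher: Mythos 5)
Your proof is correct and takes essentially the same route as the paper's: both decompose $C_{d,x}$ and $C_{s,x}$ into linear evaluation maps followed by a finite-dimensional smooth operation, and conclude by the chain rule. The paper packages the nonlinear step slightly differently: it introduces a single $C^1$ scalar function $q : \mathbb{D}_q \subseteq \mathbb{R}^6 \to \mathbb{R}$, $q(n_1,p_{11},p_1,n_0,p_{10},p_0) = \frac{n_1/p_1 - n_0/p_0}{p_{11}/p_1 - p_{10}/p_0}$, applies a linear ``rearrangement'' $R_{d,x}$ (Lemma \ref{Lemma: Hadamard differentiability, rearranging Donsker sets}), and invokes a general composition-with-$C^1$-function lemma (Corollary \ref{Lemma: Hadamard differentiability, maps between bounded function spaces, corollary}) once, which directly yields the explicit derivative $C_{d,x,G}'(H)(f) = [\nabla q(R_{d,x}(G)(f))]^\intercal R_{d,x}(H)(f)$. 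You instead unwind the arithmetic into separate product/quotient-rule applications in $\ell^\infty(\mathcal{F}_{d,x})$. The paper's packaging buys a cleaner closed-form derivative that it reuses later (see the big display in the proof), but your ``divide by a scalar bounded away from zero, uniformly in $f$'' observation is exactly the content of that corollary, so the mathematics is the same. Two small notes: the preliminary verification that $P \in \mathbb{D}_C$ is not needed here (the lemma hypothesizes $G \in \mathbb{D}_C$; the $P \in \mathbb{D}_C$ check appears in Lemma \ref{Lemma: weak convergence, support of T_1P(G)}), and the nonvanishing of the denominator is guaranteed by the definition of $\mathbb{D}_C$ at every $G$ in the domain, not only near $P$ --- you clearly understand this, but the phrasing ``in a neighborhood of $G=P$'' is a little loose.
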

\begin{proof}
	\singlespacing
	
	In steps:
	
	\begin{enumerate}
		\item We first show differentiability of $C_{1,x}$. The argument applies the chain rule. An inner function ``rearranges'' elements of $\mathbb{D}_C \subseteq \ell^\infty(\mathcal{F})$, which can be viewed as a fully Hadamard differentiable mapping (see lemma \ref{Lemma: Hadamard differentiability, rearranging Donsker sets}). An outer function maps that rearrangment to $\ell^\infty(\mathcal{F}_1)$, and is shown fully Hadamard differentiable at $G \in \mathbb{D}_C$ by applying corollary \ref{Lemma: Hadamard differentiability, maps between bounded function spaces, corollary}.  
		
		In detailed steps:
		\begin{enumerate}
			\item Define $\mathbb{D}_q = \left\{(n_1, p_{11}, p_1, n_0, p_{10}, p_0) \in \mathbb{R}^6 \; ; \; p_1 > 0, \; p_0 > 0, \; p_{11}/p_1 - p_{10}/p_0 > 0\right\}$ and 
			\begin{align*}
				&q : \mathbb{D}_q \rightarrow \mathbb{R}, &&q(n_1, p_{11}, p_1, n_0, p_{10}, p_0) = \frac{n_1/p_1 - n_0/p_0}{p_{11}/p_1 - p_{10}/p_0}
			\end{align*}
			Recall the following notation from corollary \ref{Lemma: Hadamard differentiability, maps between bounded function spaces, corollary}: 
			\begin{align*}
				\ell^\infty(\mathcal{F}_1, \mathbb{D}_q) &= \left\{r : \mathcal{F}_1 \rightarrow \mathbb{R}^6 \; ; \; r(\varphi) \in \mathbb{D}_q, \; \sup_{\varphi \in \mathcal{F}_1} \lVert r(f) \rVert < \infty\right\} \subseteq \ell^\infty(\mathcal{F}_1)^6 \\
				\ell_q^\infty(\mathcal{F}_1, \mathbb{D}_q) &= \left\{r \in \ell^\infty(\mathcal{F}_1, \mathbb{D}_q) \; ; \; \sup_{f \in \mathcal{F}_1} \lvert q(r(f)) \rvert < \infty\right\}
			\end{align*}
			For elements $r \in \ell^\infty(\mathcal{F}_1, \mathbb{D}_q)$, the composition $q(r(\varphi))$ is well defined for any $\varphi \in \mathcal{F}_1$. For elements $r \in \ell_q^\infty(\mathcal{F}_1, \mathbb{D}_q)$, composition defines a bounded map; that is, $\varphi \mapsto q(r(\varphi))$ defines an element of $\ell^\infty(\mathcal{F}_1)$. Finally, define
			\begin{align*}
				&Q : \ell_q^\infty(\mathcal{F}_1, \mathbb{D}_q) \rightarrow \ell^\infty(\mathcal{F}_1), &&Q(r)(\varphi) = q(r(\varphi))
			\end{align*}
			
			\item For the rearrangement, define $\tilde{\mathcal{F}}_{1,x,1} = \left\{\mathbbm{1}_{1,x,1} \times f \; ; \; f \in \mathcal{F}_1\right\}$, $\tilde{\mathcal{F}}_{1,x,0} = \left\{\mathbbm{1}_{1,x,0} \times f \; ; \; f \in \mathcal{F}_1\right\}$, and
			\begin{align*}
				&\tilde{R}_{1,x} : \mathbb{D}_C \rightarrow \ell^\infty(\tilde{\mathcal{F}}_{1,x,1}) \times \ell^\infty(\{\mathbbm{1}_{1,x,1}\}) \times \ell^\infty(\{\mathbbm{1}_{x,1}\}) \times \ell^\infty(\tilde{\mathcal{F}}_{1,x,0}) \times \ell^\infty(\{\mathbbm{1}_{1,x,0}\}) \times \ell^\infty(\{\mathbbm{1}_{x,0}\}) \\
				&\tilde{R}_{1,x}(G)(\mathbbm{1}_{1,x,1} \times f, \mathbbm{1}_{1,x,1}, \mathbbm{1}_{x,1}, \mathbbm{1}_{1,x,0} \times f, \mathbbm{1}_{1,x,0}, \mathbbm{1}_{x,0}) \\
				&\hspace{1 cm} = (G(\mathbbm{1}_{1,x,1} \times f), G(\mathbbm{1}_{1,x,1}), G(\mathbbm{1}_{x,1}), G(\mathbbm{1}_{1,x,0} \times f), G(\mathbbm{1}_{1,x,0}), G(\mathbbm{1}_{x,0}))
			\end{align*}
			Lemma \ref{Lemma: Hadamard differentiability, rearranging Donsker sets} shows that $\tilde{R}_{1,x}$ is fully Hadamard differentiable tangentially to $\ell^\infty(\mathcal{F})$ and is its own derivative; i.e. $\tilde{R}_{1,x,g}' = \tilde{R}_{1,x}$. Now view $\tilde{R}_{1,x}$ as a map from $\mathbb{D}_C \subseteq \ell^\infty(\mathcal{F})$ to $\ell_q^\infty(\mathcal{F}_1, \mathbb{D}_q)$, i.e. define $R_{1,x} : \mathbb{D}_C \rightarrow \ell_q^\infty(\mathcal{F}_1, \mathbb{D}_q)$ pointwise with
			\begin{align*}
				R_{1,x}(G)(f) &= \tilde{R}_{1,x}(G)(\mathbbm{1}_{1,x,1} \times f, \mathbbm{1}_{1,x,1}, \mathbbm{1}_{x,1}, \mathbbm{1}_{1,x,0} \times g, \mathbbm{1}_{1,x,0}, \mathbbm{1}_{x,0}) \\
				&= (G(\mathbbm{1}_{1,x,1} \times f), G(\mathbbm{1}_{1,x,1}), G(\mathbbm{1}_{x,1}), G(\mathbbm{1}_{1,x,0} \times f), G(\mathbbm{1}_{1,x,0}), G(\mathbbm{1}_{x,0}))
			\end{align*}
			Note that $G \in \mathbb{D}_C$ implies 
			\begin{align*}
				\sup_{f \in \mathcal{F}_1} \lvert q(R_{1,x}(G)(f)) \rvert = \sup_{f \in \mathcal{F}_1} \left\lvert \frac{G(\mathbbm{1}_{1,x,1} \times f)/G(\mathbbm{1}_{x,1}) - G(\mathbbm{1}_{1,x,0}\times f)/G(\mathbbm{1}_{x,0})}{G(\mathbbm{1}_{1,x,1})/G(\mathbbm{1}_{x,1})- G(\mathbbm{1}_{1,x,0})/G(\mathbbm{1}_{x,0})} \right\rvert < \infty
			\end{align*}
			and thus $R_{1,x}(G) \in \ell_q^\infty(\mathcal{F}_1, \mathbb{D}_q)$.
			
			\item To apply corollary \ref{Lemma: Hadamard differentiability, maps between bounded function spaces, corollary}, observe that $q(n_1, p_{11}, p_1, n_0, p_{10}, p_0) = \frac{n_1/p_1 - n_0/p_0}{p_{11}/p_1 - p_{10}/p_0}$ is continuously differentiable on $\mathbb{D}_q$ with gradient $\nabla q : \mathbb{D}_q \rightarrow \mathbb{R}^6$ given by 
			\begin{align*}
				&\nabla q(n_1, p_{11}, p_1, n_0, p_{10}, p_0) = \begin{pmatrix}
					\frac{\partial q}{\partial n_1}, & \frac{\partial q}{\partial p_{11}}, & \frac{\partial q}{\partial p_1}, & \frac{\partial q}{\partial n_0}, & \frac{\partial q}{\partial p_{10}}, & \frac{\partial q}{\partial p_0}
				\end{pmatrix}^\intercal, \\
				&\frac{\partial q}{\partial n_1} = \frac{1/p_1}{p_{11}/p_1 - p_{10}/p_0} \\ 
				&\frac{\partial q}{\partial p_{11}} = -\frac{n_1/p_1 - n_0/p_0}{(p_{11}/p_1 - p_{10}/p_0)^2} \frac{1}{p_1} = \left[\frac{1/p_1}{p_{11}/p_1 - p_{10}/p_0}\right](-q) \\
				&\frac{\partial q}{\partial p_1} = \frac{(p_{11}/p_1 - p_{10}/p_0)(-n_1/p_1^2) - (n_1/p_1 - n_0/p_0)(-p_{11}/p_1^2)}{(p_{11}/p_1 - p_{10}/p_0)^2} \\
				&\hspace{1 cm} = \frac{- n_1/p_1^2}{p_{11}/p_1 - p_{10}/p_0} + \frac{q(p_{11}/p_1^2)}{p_{11}/p_1 - p_{10}/p_0} = \left[\frac{1/p_1}{p_{11}/p_1 - p_{10}/p_0}\right] \frac{q p_{11} - n_1}{p_1} \\
				&\frac{\partial q}{\partial n_0} = \frac{-1/p_0}{p_{11}/p_1 - p_{10}/p_0} \\
				&\frac{\partial q}{\partial p_{10}} = -\frac{n_1/p_1 - n_0/p_0}{(p_{11}/p_1 - p_{10}/p_0)^2} \left(-\frac{1}{p_0}\right) = \left[\frac{-1/p_0}{p_{11}/p_1 - p_{10}/p_0} \right](-q) \\
				& \frac{\partial q}{\partial p_0} = \frac{(p_{11}/p_1 - p_{10}/p_0)(n_0/p_0^2) - (n_1/p_1 - n_0/p_0)(p_{10}/p_0^2)}{(p_{11}/p_1 - p_{10}/p_0)^2} \\
				&\hspace{1 cm} = \frac{n_0/p_0^2}{p_{11}/p_1 - p_{10}/p_0} - \frac{q(p_{10}/p_0^2)}{p_{11}/p_1 - p_{10}/p_0} = \left[\frac{-1/p_0}{p_{11}/p_1 - p_{10}/p_0} \right]\frac{q p_{10} - n_0}{p_0}
			\end{align*}
			Furthermore, there exists $\delta > 0$ such that
			\begin{align*}
				R_{1,x}(G)(\mathcal{F}_1) = \left\{r \in \mathbb{R}^6 \; ; \; \inf_{f \in \mathcal{F}_1} \lVert r - R_{1,x}(G)(\varphi)\rVert \leq \delta \right\} \subseteq \mathbb{D}_q
			\end{align*}
			and so lemma \ref{Lemma: Hadamard differentiability, maps between bounded function spaces, corollary} implies $Q$ is fully Hadamard differentiable at $R_{1,x}(G)$ tangentially to $\ell^\infty(\mathcal{F}_1)^6$ with derivative $Q_{R_{1,x}(G)}' : \ell^\infty(\mathcal{F}_1)^6 \rightarrow \ell^\infty(\mathcal{F}_1)$ given pointwise by
			\begin{align*}
				&Q_{R_{1,x}(G)}'(J)(f) = \left[\nabla q(R_{1,x}(G)(\varphi))\right]^\intercal J(f)
			\end{align*}
			
			\item Finally, observe that $C_{1,x}(G) = Q(R_{1,x}(G))$ and apply the chain rule (lemma \ref{Lemma: Hadamard differentiability, chain rule}) to find that $C_{1,x}$ is fully Hadamard differentiable at $G$ tangentially to $\ell^\infty(\mathcal{F})$ with derivative
			\begin{align*}
				&C_{1, x, G}' : \ell^\infty(\mathcal{F}) \rightarrow \ell^\infty(\mathcal{F}_{1,x}), &&C_{1,x, G}'(H) = Q_{R_{1,x}(G)}'(R_{1,x}(H))
			\end{align*}
			Writing out an evaluation clarifies the notation of the derivative:
			\begin{align}
				&C_{1,x,G}'(H)(f) = Q_{R_{1,x}(G)}'(R_{1,x}(H))(f) = [\nabla q(R_{1,x}(G)(f))]^\intercal R_{1,x}(H)(f) \label{Display: lemma proof, Hadamard differentiability, conditional distributions, derivative of C_1x} \\
				&= \left[\frac{1/G(\mathbbm{1}_{x,1})}{G(\mathbbm{1}_{1,x,1})/G(\mathbbm{1}_{x,1}) - G(\mathbbm{1}_{1,x,0})/G(\mathbbm{1}_{x,0})}\right] H(\mathbbm{1}_{1,x,1} \times f)  \notag \\
				&\hspace{1 cm} + \left[\frac{1/G(\mathbbm{1}_{x,1})}{G(\mathbbm{1}_{1,x,1})/G(\mathbbm{1}_{x,1}) - G(\mathbbm{1}_{1,x,0})/G(\mathbbm{1}_{x,0})}\right] (-C_{1,x}(G)(f)) H(\mathbbm{1}_{1,x,1})  \notag \\
				&\hspace{1 cm} + \left[\frac{1/G(\mathbbm{1}_{x,1})}{G(\mathbbm{1}_{1,x,1})/G(\mathbbm{1}_{x,1}) - G(\mathbbm{1}_{1,x,0})/G(\mathbbm{1}_{x,0})}\right] \frac{C_{1,x}(G)(f) \times G(\mathbbm{1}_{1,x,1}) - G(\mathbbm{1}_{1,x,1} \times f)}{G(\mathbbm{1}_{x,1})} H(\mathbbm{1}_{x,1})  \notag \\
				&\hspace{1 cm} + \left[\frac{-1/G(\mathbbm{1}_{x,0})}{G(\mathbbm{1}_{1,x,1})/G(\mathbbm{1}_{x,1}) - G(\mathbbm{1}_{1,x,0})/G(\mathbbm{1}_{x,0})}\right] H(\mathbbm{1}_{1,x,0} \times f)  \notag \\
				&\hspace{1 cm} + \left[\frac{-1/G(\mathbbm{1}_{x,0})}{G(\mathbbm{1}_{1,x,1})/G(\mathbbm{1}_{x,1}) - G(\mathbbm{1}_{1,x,0})/G(\mathbbm{1}_{x,0})}\right] (-C_{1,x}(G)(f)) H(\mathbbm{1}_{1,x,0})  \notag \\
				&\hspace{1 cm} + \left[\frac{-1/G(\mathbbm{1}_{x,0})}{G(\mathbbm{1}_{1,x,1})/G(\mathbbm{1}_{x,1}) - G(\mathbbm{1}_{1,x,0})/G(\mathbbm{1}_{x,0})}\right] \frac{C_{1,x}(G)(f) \times G(\mathbbm{1}_{1,x,0}) - G(\mathbbm{1}_{1,x, 0} \times f)}{G(\mathbbm{1}_{x,0})} H(\mathbbm{1}_{x,0}) \notag 
			\end{align}
		\end{enumerate}

		\item The same arguments imply the claim regarding $C_{0,x}$. 
		
		Specifically, notice that $C_{0,x}$ is the same outer transformation applied to a different rearrangement: let
		\begin{align*}
			R_{1,x}(G)(\varphi) &= (G(\mathbbm{1}_{1,x,1} \times \varphi), G(\mathbbm{1}_{1,x,1}), G(\mathbbm{1}_{x,1}), G(\mathbbm{1}_{1,x,0} \times \varphi), G(\mathbbm{1}_{1,x,0}), G(\mathbbm{1}_{x,0})) \\
			R_{0,x}(G)(\varphi) &= (G(\mathbbm{1}_{0,x,0} \times \psi), G(\mathbbm{1}_{0,x,0}), G(\mathbbm{1}_{x,0}), G(\mathbbm{1}_{0,x,1}\times \psi), G(\mathbbm{1}_{0,x,1}), G(\mathbbm{1}_{x,1})) 
		\end{align*}
		observe that 
		\begin{align*}
			C_{1,x}(G)(f) &= \frac{G(\mathbbm{1}_{1,x,1} \times f)/G(\mathbbm{1}_{x,1}) - G(\mathbbm{1}_{1,x,0}\times f)/G(\mathbbm{1}_{x,0})}{G(\mathbbm{1}_{1,x,1})/G(\mathbbm{1}_{x,1})- G(\mathbbm{1}_{1,x,0})/G(\mathbbm{1}_{x,0})} = q(R_{1,x}(G)(f)) \\
			C_{0,x}(G)(f) &= \frac{G(\mathbbm{1}_{0,x,0} \times f)/G(\mathbbm{1}_{x,0}) - G(\mathbbm{1}_{0,x,1}\times f)/G(\mathbbm{1}_{x,1})}{G(\mathbbm{1}_{0,x,0})/G(\mathbbm{1}_{x,0})- G(\mathbbm{1}_{0,x,1})/G(\mathbbm{1}_{x,1})} = q(R_{0,x}(G)(f))
		\end{align*}
		Thus, the same argument shows $C_{0,x} : \mathbb{D}_C \rightarrow \ell^\infty(\mathcal{F}_{0,x})$ is fully Hadamard differentiable at any $G \in \mathbb{D}_C$ tangentially to $\ell^\infty(\mathcal{F})$, and  $C_{0, x, G}'(H)(f)$ can be found with the appropriate substitutions in \eqref{Display: lemma proof, Hadamard differentiability, conditional distributions, derivative of C_1x} above.
		
		\item Finally consider $C_{s,x}$. Notice that 
		\begin{align*}
			&\mathbb{D}_{q_{s,x}} = \Big\{\{p_{1,x,1}, p_{x,1}, p_{1,x,0},p_{x,0}, p_x\}_{x \in \mathcal{X}} \in \mathbb{R}^{5M} \; ; \; \\
			&\hspace{3 cm} p_{x,1} > 0, p_{x,0} > 0, p_{1,x,1}/p_{x,1} - p_{1,x,0}/p_{x,0} > 0, p_x > 0 \text{ for all } x \in \mathcal{X}\Big\} \\
			&q_{s,x} : \mathbb{D}_{q_{s,x}} \rightarrow \mathbb{R}, \\
			&q_{s,x}(\{p_{1,x_m,1}, p_{x_m,1}, p_{1,x_m,0},p_{x_m,0}\}_{m=1}^M) = \frac{(p_{1,x,1}/p_{x,1} - p_{1,x,0}/p_{x,0})p_x}{\sum_{m=1}^M (p_{1,x_m,1}/p_{x_m,1} - p_{1,x_m,0}/p_{x_m,0})p_{x_m}}
		\end{align*}
		is continuously differentiable at any point in $\mathbb{D}_{q_{s,x}}$ with gradient 
		\begin{equation*}
			\nabla q(\{p_{1,x_m,1}, p_{x_m,1}, p_{1,x_m,0},p_{x_m,0}, p_{x_m}\}_{m=1}^M) \in \mathbb{R}^{5M}
		\end{equation*}
		Furthermore, notice that for any $G \in \mathbb{D}_C$, $C_{s,x}(G) = q_{s,x}(R_{s,x}(G))$, where 
		\begin{align*}
			&R_{s,x} : \ell^\infty(\mathcal{F}) \rightarrow \mathbb{R}^{5M}, &&R_{s,x}(G) = (\{G(\mathbbm{1}_{1,x_m,1}), G(\mathbbm{1}_{x_m,1}), G(\mathbbm{1}_{1,x_m,0}), G(\mathbbm{1}_{x_m,0}), G(\mathbbm{1}_{x_m})\}_{m=1}^M)
		\end{align*}
		
		It follows that $C_{s,x} : \mathbb{D}_C \rightarrow \mathbb{R}$ is fully Hadamard differentiable at any $G \in \mathbb{D}_C$ tangentially to $\ell^\infty(\mathcal{F})$. The derivative is
		\begin{align*}
			C_{s,x,G}'(H) &= \sum_{m=1}^M \frac{\partial q_{s,x}}{\partial p_{1,x_m,1}}(R_{s,x}(G)) \times H(\mathbbm{1}_{1,x_m,1}) + \frac{\partial q_{s,x}}{\partial p_{x_m,1}}(R_{s,x}(G)) \times H(\mathbbm{1}_{x_m,1}) \\
			&\hspace{1.5 cm} + \frac{\partial q_{s,x}}{\partial p_{1,x_m,0}}(R_{s,x}(G)) \times H(\mathbbm{1}_{1,x_m,0}) + \frac{\partial q_{s,x}}{\partial p_{x_m,0}}(R_{s,x}(G)) \times H(\mathbbm{1}_{x_m,0}) \\
			&\hspace{1.5 cm} + \frac{\partial q_{s,x}}{\partial p_{x_m}}(R_{s,x}(G)) \times H(\mathbbm{1}_{x_m})
		\end{align*}

	\end{enumerate}
	This completes the proof. 
\end{proof}

\begin{restatable}[$T_1$ is fully Hadamard differentiable]{lemma}{lemmaTOneIsFullyHadamardDifferentiable}
	\label{Lemma: Hadamard differentiability, T1 is fully Hadamard differentiable}
	\singlespacing
	
	Let $\mathcal{F}$ be defined by \eqref{Defn: F large donsker set} and $\mathbb{D}_C$ by \eqref{Defn: D_C, domain of the overall map}. Let $C_{d,x}$ and $C_{s,x}$ be as defined in lemma \ref{Lemma: Hadamard differentiability, conditional distributions}, and
	\begin{align*}
		&\tilde{\eta}_{d,x} : \mathbb{D}_C \rightarrow \mathbb{R}^{K_d}, &&\tilde{\eta}_{d,x}(G) = \left(C_{d,x}(G)(\eta_{d,x}^{(1)}), \ldots, C_{d,x}(G)(\eta_{d,x}^{(K_d)})\right)
	\end{align*}
	Further define
	\begin{align*}
		&T_1 : \mathbb{D}_C \rightarrow \prod_{m=1}^M \ell^\infty(\mathcal{F}_{1,x_m}) \times \ell^\infty(\mathcal{F}_{0,x_m}) \times \mathbb{R}^{K_1}\times \mathbb{R}^{K_0} \times \mathbb{R} \\
		&T_1(G) = \left(\left\{C_{1,x}(G), C_{0,x}(G), \tilde{\eta}_{1,x}(G), \tilde{\eta}_{0,x}(G), C_{s,x}(G)\right\}_{x \in \mathcal{X}}\right)
	\end{align*}
	$T_1$ is fully Hadamard differentiable at any $G \in \mathbb{D}_C$ tangentially to $\ell^\infty(\mathcal{F})$.
\end{restatable}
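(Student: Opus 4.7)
The plan is to reduce the claim to the results already established about $C_{1,x}$, $C_{0,x}$, and $C_{s,x}$ in Lemma \ref{Lemma: Hadamard differentiability, conditional distributions}, handle the $\tilde{\eta}_{d,x}$ coordinates by composition with continuous linear evaluation maps, and then glue everything together via the stacking lemma referenced before the statement. All five families of coordinate maps share the common domain $\mathbb{D}_C \subseteq \ell^\infty(\mathcal{F})$, so the stacking lemma is applicable provided each coordinate map is fully Hadamard differentiable at $G$ tangentially to $\ell^\infty(\mathcal{F})$.

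The first two families, $\{C_{1,x}\}_{x\in\mathcal{X}}$ and $\{C_{0,x}\}_{x\in\mathcal{X}}$, and the last family, $\{C_{s,x}\}_{x\in\mathcal{X}}$, are handled immediately by Lemma \ref{Lemma: Hadamard differentiability, conditional distributions}, which gives full Hadamard differentiability at any $G \in \mathbb{D}_C$ tangentially to $\ell^\infty(\mathcal{F})$ with explicit derivatives $C_{d,x,G}'$ and $C_{s,x,G}'$.

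For the $\tilde{\eta}_{d,x}$ coordinates, the step I would carry out is to write each scalar coordinate $G \mapsto C_{d,x}(G)(\eta_d^{(k)})$ as the composition $\mathrm{ev}_{\eta_d^{(k)}} \circ C_{d,x}$, where the evaluation functional
\begin{equation*}
\mathrm{ev}_{\eta_d^{(k)}} : \ell^\infty(\mathcal{F}_{d,x}) \to \mathbb{R}, \qquad \mathrm{ev}_{\eta_d^{(k)}}(L) = L(\eta_d^{(k)})
\end{equation*}
is continuous and linear since $\eta_d^{(k)} \in \mathcal{F}_{d,x}$ by construction of $\mathcal{F}_{d,x}$ in \eqref{Defn: F_dx, the set on which P_{d|x} is defined}. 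Continuous linear maps between Banach spaces are fully Hadamard differentiable and coincide with their own derivative, so the chain rule (the same chain rule used in the proof of Lemma \ref{Lemma: Hadamard differentiability, conditional distributions}) delivers full Hadamard differentiability at $G$ tangentially to $\ell^\infty(\mathcal{F})$ for each scalar coordinate, with derivative $H \mapsto C_{d,x,G}'(H)(\eta_d^{(k)})$. Stacking the $K_d$ such scalar coordinate maps yields that $\tilde{\eta}_{d,x} : \mathbb{D}_C \to \mathbb{R}^{K_d}$ is fully Hadamard differentiable at $G$ tangentially to $\ell^\infty(\mathcal{F})$ with derivative whose $k$th coordinate is $C_{d,x,G}'(H)(\eta_d^{(k)})$.

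Having established full Hadamard differentiability of all five coordinate families at $G \in \mathbb{D}_C$ tangentially to the full ambient space $\ell^\infty(\mathcal{F})$, a single invocation of the stacking lemma (Lemma \ref{Lemma: Hadamard differentiability, stacking functions}) finishes the proof: $T_1$ is fully Hadamard differentiable at $G$ tangentially to $\ell^\infty(\mathcal{F})$, with derivative given coordinatewise by
\begin{equation*}
T_{1,G}'(H) = \left(\left\{ C_{1,x,G}'(H),\; C_{0,x,G}'(H),\; \tilde{\eta}_{1,x,G}'(H),\; \tilde{\eta}_{0,x,G}'(H),\; C_{s,x,G}'(H)\right\}_{x \in \mathcal{X}}\right).
\end{equation*}
I do not anticipate a genuine obstacle: every ingredient is already present in the excerpt, and the only mildly nontrivial step is recognizing the $\tilde{\eta}_{d,x}$ coordinates as continuous linear functionals of $C_{d,x}(G)$ so that the chain rule applies cleanly. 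The only bookkeeping item worth flagging is that the codomain of $T_1$ is a finite product of Banach spaces equipped with the product norm, so the stacking lemma produces the derivative in exactly this product, consistently with the asymptotic analysis that follows in the appendix.
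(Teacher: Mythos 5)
Your proposal is correct and follows essentially the same route as the paper's proof: invoke Lemma \ref{Lemma: Hadamard differentiability, conditional distributions} for the $C_{d,x}$ and $C_{s,x}$ coordinates, write $\tilde{\eta}_{d,x}$ as the composition of $C_{d,x}$ with the continuous linear evaluation functional $\mathrm{ev}_{\eta_d^{(k)}}$ so the chain rule gives full differentiability, and then stack all coordinates via Lemma \ref{Lemma: Hadamard differentiability, stacking functions}. The only trivial bookkeeping difference is that the stacking lemma as stated handles two coordinates at a time, so formally you iterate it (or observe the finite-product extension is immediate), but this does not affect the correctness of your argument.
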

\begin{proof}
	\singlespacing
	
	Lemma \ref{Lemma: Hadamard differentiability, conditional distributions} shows that $C_{d,x}$ and $C_{s,x}$ are fully Hadamard differentiable at any $G \in \mathbb{D}_C$ tangentially to $\ell^\infty(\mathcal{F})$.
	
	Define the evaluation maps
	\begin{align*}
		&ev_{\eta_d^{(k)}} : \ell^\infty(\mathcal{F}_{d,x}) \rightarrow \mathbb{R}, &&ev_{\eta_d^{(k)}}(H) = H(\eta_d^{(k)})
	\end{align*}
	Note that each $ev_{\eta_d^{(k)}}$ is continuous and linear, and is therefore fully Hadamard differentiable at any $H \in \ell^\infty(\mathcal{F}_{d,x})$ tangentially to $\ell^\infty(\mathcal{F}_{d,x})$ (and is its own derivative). Moreover, 
	\begin{equation*}
		\tilde{\eta}_{d,x}(G) = (ev_{\eta_d^{(1)}}(C_{d,x}(G)), \ldots, ev_{\eta_d^{(K_1)}}(C_{d,x}(G)))
	\end{equation*} 
	is the composition of an inner function that is fully Hadamard differentiable at any $G \in \mathbb{D}_C$, and an other function that is fully differentiable at any $H \in \ell^\infty(\mathcal{F}_{d,x})$. Therefore $\tilde{\eta}_{d,x}$ is fully Hadamard differentiable at any $G \in \mathbb{D}_C$ tangentially to $\ell^\infty(\mathcal{F})$.
	
	Next apply lemma \ref{Lemma: Hadamard differentiability, stacking functions} to find that 
	\begin{align*}
		&T_1 : \mathbb{D}_C \rightarrow \prod_{m=1}^M \ell^\infty(\mathcal{F}_{1,x_m}) \times \ell^\infty(\mathcal{F}_{0,x_m}) \times \mathbb{R}^{K_1}\times \mathbb{R}^{K_0} \times \mathbb{R} \\
		&T_1(G) = \left(\left\{C_{1,x}(G), C_{0,x}(G), \tilde{\eta}_{1,x}(G), \tilde{\eta}_{0,x}(G), C_{s,x}(G)\right\}_{x \in \mathcal{X}}\right)
	\end{align*}
	is fully Hadamard differentiable at any $G \in \mathbb{D}_C$ tangentially to $\ell^\infty(\mathcal{F})$. 
\end{proof}

\subsubsection{Support of the weak limit of $\sqrt{n}(T_1(\mathbb{P}_n) - T_1(P))$}

The next few lemmas study the support of the asymptotic distribution of $\sqrt{n}(T_1(\mathbb{P}_n) - T_1(P))$; in particular, it concentrates on the tangent set of the next map studied in appendix \ref{Appendix: weak convergence, subsection optimal transport}.

\begin{restatable}[Continuity of $C_{d,x,G}'(H)(\cdot)$]{lemma}{lemmaWeakConvergenceLTwoPContinuityOfConditionalExpectationDerivative}
	\label{Lemma: weak convergence, conditional distributions continuity}
	\singlespacing
	
	Let $C_{d,x}$ be as defined in lemma \ref{Lemma: Hadamard differentiability, conditional distributions}. If $G, H \in \mathcal{C}(\mathcal{F}, L_{2,P})$, then $C_{d,x,G}'(H) \in \mathcal{C}(\mathcal{F}_{d,x}, L_{2,P})$.
	
\end{restatable}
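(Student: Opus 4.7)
The plan is to verify directly from the explicit formula for the derivative given in the proof of lemma \ref{Lemma: Hadamard differentiability, conditional distributions} that $C_{d,x,G}'(H)(f)$, viewed as a function of $f \in \mathcal{F}_{d,x}$, is a bounded, $L_{2,P}$-continuous real-valued function. Reading off that displayed formula, $C_{1,x,G}'(H)(f)$ is a finite sum whose only $f$-dependence enters through the terms $H(\mathbbm{1}_{1,x,z} \times f)$, $G(\mathbbm{1}_{1,x,z} \times f)$ (for $z \in \{0,1\}$), and $C_{1,x}(G)(f)$. All of the remaining coefficients $\alpha_1, \alpha_0, G(\mathbbm{1}_{1,x,z}), G(\mathbbm{1}_{x,z}), H(\mathbbm{1}_{1,x,z}), H(\mathbbm{1}_{x,z})$ are constants in $f$, finite because $G \in \mathbb{D}_C$ and $G, H \in \ell^\infty(\mathcal{F})$, and the denominator defining $\alpha_1, \alpha_0$ is nonzero by definition of $\mathbb{D}_C$.

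The key building block is the observation that the embedding $\iota_{d,x,z} : \mathcal{F}_{d,x} \to \mathcal{F}$ given by $\iota_{d,x,z}(f) = \mathbbm{1}_{d,x,z} \times f$ is a contraction for $L_{2,P}$:
\begin{equation*}
L_{2,P}(\iota_{d,x,z}(f_1), \iota_{d,x,z}(f_2))^2 = E_P\left[\mathbbm{1}_{d,x,z}(f_1-f_2)^2\right] \leq E_P\left[(f_1 - f_2)^2\right] = L_{2,P}(f_1, f_2)^2,
\end{equation*}
using $\mathbbm{1}_{d,x,z}^2 = \mathbbm{1}_{d,x,z} \leq 1$ (and noting $\mathbbm{1}_{d,x,z}\times f \in \mathcal{F}$ by the definition of $\mathcal{F}$). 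Composing with $G, H \in \mathcal{C}(\mathcal{F}, L_{2,P})$ yields that the maps $f \mapsto G(\mathbbm{1}_{d,x,z} \times f)$ and $f \mapsto H(\mathbbm{1}_{d,x,z} \times f)$ are bounded and $L_{2,P}$-continuous on $\mathcal{F}_{d,x}$.

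From this, $C_{1,x}(G)(\cdot)$ is a linear combination with constant coefficients of $f \mapsto G(\mathbbm{1}_{1,x,1} \times f)$ and $f \mapsto G(\mathbbm{1}_{1,x,0} \times f)$, divided by a nonzero constant, and is therefore bounded and $L_{2,P}$-continuous on $\mathcal{F}_{d,x}$. Each of the six summands in the derivative formula is then either constant in $f$ or a product of finitely many bounded $L_{2,P}$-continuous functions of $f$ with constant factors. Since sums and products of bounded $L_{2,P}$-continuous functions are themselves bounded and $L_{2,P}$-continuous, this gives $C_{1,x,G}'(H) \in \mathcal{C}(\mathcal{F}_{d,x}, L_{2,P})$. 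The argument for $C_{0,x,G}'(H)$ is identical after swapping the roles of the indices $(1,x,z)$ with $(0,x,z)$ in the rearrangement $R_{0,x}$.

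I do not anticipate any genuine obstacle; the statement is essentially bookkeeping once the $L_{2,P}$-contraction property of $\iota_{d,x,z}$ is in hand. The only point requiring a moment's care is recording that the scalars appearing in $\alpha_1, \alpha_0$ and in the denominator of $C_{1,x}(G)(\cdot)$ are finite and nonzero, which is exactly the content of $G \in \mathbb{D}_C$.
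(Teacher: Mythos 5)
Your proposal is correct and takes essentially the same approach as the paper's proof: both hinge on the contraction property $L_{2,P}(\mathbbm{1}_{d,x,z}\times f,\ \mathbbm{1}_{d,x,z}\times g)\le L_{2,P}(f,g)$, the fact that $G,H\in\mathcal{C}(\mathcal{F},L_{2,P})$, and bookkeeping over the six-term explicit derivative formula (the paper carries out the $\varepsilon$-$\delta$ bounds $A_1,\dots,A_4$ explicitly, whereas you invoke closure of bounded $L_{2,P}$-continuous functions under sums and products, which packages the same estimate). The only point worth stating explicitly, which you note in passing, is that $\mathbbm{1}_{d,x,z}\times f\in\mathcal{F}$ for $f\in\mathcal{F}_{d,x}$ by construction of $\mathcal{F}$, so the continuity of $G$ and $H$ can indeed be applied at the embedded arguments.
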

\begin{proof}
	\singlespacing
	
	Consider $C_{1,x,G}'(H)$ first. Fix $f \in \mathcal{F}_{1,x}$ and let $\varepsilon > 0$. Let 
	\begin{align*}
		&\text{Coef}_1(G) = \left[\frac{1/G(\mathbbm{1}_{x,1})}{G(\mathbbm{1}_{1,x,1})/G(\mathbbm{1}_{x,1}) - G(\mathbbm{1}_{1,x,0})/G(\mathbbm{1}_{x,0})}\right] \\
		&\text{Coef}_2(G) = \left[\frac{-1/G(\mathbbm{1}_{x,0})}{G(\mathbbm{1}_{1,x,1})/G(\mathbbm{1}_{x,1}) - G(\mathbbm{1}_{1,x,0})/G(\mathbbm{1}_{x,0})}\right]
	\end{align*}
	and use display \eqref{Display: lemma proof, Hadamard differentiability, conditional distributions, derivative of C_1x} to see that 
	\begin{align*}
		&\lvert C_{1,x,G}'(H)(f) - C_{1,x,G}'(H)(g) \rvert \\
		&= \Bigg\lvert \text{Coef}_1(G) \times \left[H(\mathbbm{1}_{1,x,1} \times f) - H(\mathbbm{1}_{1,x,1} \times g)\right] \\
		&\hspace{1 cm} + \text{Coef}_1(G) \times \left(-\left[C_{1,x}(G)(f) -C_{1,x}(G)(g)\right]\right) H(\mathbbm{1}_{1,x,1}) \\
		&\hspace{1 cm} + \text{Coef}_1(G) \times \frac{\left[C_{1,x}(G)(f) - C_{1,x}(G)(g)\right] \times G(\mathbbm{1}_{1,x,1}) - \left[G(\mathbbm{1}_{1,x,1} \times f) - G(\mathbbm{1}_{1,x,1} \times g)\right]}{G(\mathbbm{1}_{x,1})} H(\mathbbm{1}_{x,1}) \\
		&\hspace{1 cm} + \text{Coef}_2(G) \times \left[H(\mathbbm{1}_{1,x,0} \times f) - H(\mathbbm{1}_{1,x,0} \times g)\right] \\
		&\hspace{1 cm} + \text{Coef}_2(G) \times \left(-\left[(C_{1,x}(G)(f)) -C_{1,x}(G)(g)\right]\right) H(\mathbbm{1}_{1,x,0}) \\
		&\hspace{1 cm} + \text{Coef}_2(G) \times \frac{\left[C_{1,x}(G)(f) - C_{1,x}(G)(g)\right]\times G(\mathbbm{1}_{1,x,0}) - \left[G(\mathbbm{1}_{1,x,0} \times f) - G(\mathbbm{1}_{1,x,0} \times g)\right]}{G(\mathbbm{1}_{x,0})} H(\mathbbm{1}_{x,0}) \bigg\rvert 
	\end{align*}
	Recall that $C_{1,x}(G)(f) = \frac{G(\mathbbm{1}_{1,x,1} \times f)/G(\mathbbm{1}_{x,1}) - G(\mathbbm{1}_{1,x,0} \times f)/G(\mathbbm{1}_{x,0})}{G(\mathbbm{1}_{1,x,1})/G(\mathbbm{1}_{x,1}) - G(\mathbbm{1}_{1,x,0})/G(\mathbbm{1}_{x,0})}$, and thus 
	\begin{equation*}
		C_{1,x}(G)(f) -C_{1,x}(G)(g) = \frac{[G(\mathbbm{1}_{1,x,1} \times f) - G(\mathbbm{1}_{1,x,1} \times g)] / G(\mathbbm{1}_{x,1}) - [G(\mathbbm{1}_{1,x,0} \times f) - G(\mathbbm{1}_{1,x,0} \times g)]/G(\mathbbm{1}_{x,0})}{G(\mathbbm{1}_{1,x,1})/G(\mathbbm{1}_{x,1}) - G(\mathbbm{1}_{1,x,0})/G(\mathbbm{1}_{x,0})}
	\end{equation*}
	use this to see that 
	\begin{align}
		&\lvert C_{1,x,G}'(H)(f) - C_{1,x,G}'(H)(g) \rvert \notag\\
		&\hspace{1 cm} \leq A_1 \times \lvert H(\mathbbm{1}_{1,x,1} \times f) - H(\mathbbm{1}_{1,x,1} \times g) \rvert + A_2 \times \lvert G(\mathbbm{1}_{1,x,1} \times f) - G(\mathbbm{1}_{1,x,1} \times g) \rvert \notag \\
		&\hspace{2 cm} + A_3 \times \lvert H(\mathbbm{1}_{1,x,0} \times f) - H(\mathbbm{1}_{1,x,0} \times g) \rvert + A_4 \times \lvert G(\mathbbm{1}_{1,x,0} \times f) - G(\mathbbm{1}_{1,x,0} \times g) \rvert \label{Display: lemma proof, weak convergence, conditional distributions, C_1xG(H)(.) is continuous inequality 1}
	\end{align}
	for finite constants $A_1$, $A_2$, $A_3$, and $A_4$ that depend on $G$ and $H$, but not on $f$ or $g$. Now use $G,H \in \mathcal{C}(\mathcal{F}, L_{2,P})$ to choose $\delta_{z,H} > 0$ and $\delta_{z,G} > 0 $ such that 
	\begin{align}
		&L_{2,P}(\mathbbm{1}_{1,x,1} \times f, \mathbbm{1}_{1,x,1} \times g) < \delta_{1,H} &&\implies &&\lvert H(\mathbbm{1}_{1,x,1} \times f) - H(\mathbbm{1}_{1,x,1} \times g) \rvert < \varepsilon/(4A_1) \notag \\
		&L_{2,P}(\mathbbm{1}_{1,x,1} \times f, \mathbbm{1}_{1,x,1} \times g) < \delta_{1,G} &&\implies &&\lvert G(\mathbbm{1}_{1,x,1} \times f) - G(\mathbbm{1}_{1,x,1} \times g) \rvert < \varepsilon/(4A_2) \notag \\
		&L_{2,P}(\mathbbm{1}_{1,x,0} \times f, \mathbbm{1}_{1,x,0} \times g) < \delta_{0,H} &&\implies &&\lvert H(\mathbbm{1}_{1,x,0} \times f) - H(\mathbbm{1}_{1,x,0} \times g)\rvert < \varepsilon/(4A_3) \notag \\
		&L_{2,P}(\mathbbm{1}_{1,x,0} \times f, \mathbbm{1}_{1,x,0} \times g) < \delta_{0,G} &&\implies &&\lvert G(\mathbbm{1}_{1,x,0} \times f) - G(\mathbbm{1}_{1,x,0} \times g) \rvert < \varepsilon/(4A_4) \label{Display: lemma proof, weak convergence, conditional distributions, C_1xG(H)(.) is continuous inequality 2}
	\end{align}
	Finally, notice that 
	\begin{align}
		L_{2,P}(\mathbbm{1}_{1,x,z} \times f, \mathbbm{1}_{1,x,z} \times g) &= \sqrt{P((\mathbbm{1}_{1,x,z} \times f - \mathbbm{1}_{1,x,z} \times g)^2)} = \sqrt{P(\mathbbm{1}_{1,x,z} \times (f - g)^2)} \notag \\
		&\leq \sqrt{P((f - g)^2)} = L_{2,P}(f, g) \label{Display: lemma proof, weak convergence, conditional distributions, C_1xG(H)(.) is continuous inequality 3}
	\end{align}
	It follows from \eqref{Display: lemma proof, weak convergence, conditional distributions, C_1xG(H)(.) is continuous inequality 1}, \eqref{Display: lemma proof, weak convergence, conditional distributions, C_1xG(H)(.) is continuous inequality 2}, and \eqref{Display: lemma proof, weak convergence, conditional distributions, C_1xG(H)(.) is continuous inequality 3} that 
	\begin{align*}
		&L_{2,P}(f, g) < \min\{\delta_{1,H}, \delta_{1,G}, \delta_{0,H}, \delta_{0,G}\} &&\implies &&\lvert C_{1,x,G}'(H)(f) - C_{1,x,G}'(H)(g) \rvert < \varepsilon
	\end{align*}
	i.e., $C_{1,x,G}'(H)(\cdot)$ is continuous at $f$. Since $f \in \mathcal{F}_{1,x}$ and $G, H \in \mathcal{C}(\mathcal{F}, L_{2,P})$ were arbitrary, this shows that $G,H \in \mathcal{C}(\mathcal{F}, L_{2,P})$ implies $C_{1,x,G}'(H) \in \mathcal{C}(\mathcal{F}_{1,x}, L_{2,P})$. \\
	
	The same argument shows that $G,H \in \mathcal{C}(\mathcal{F}, L_{2,P})$ implies $C_{0,x,G}'(H) \in \mathcal{C}(\mathcal{F}_{0,x}, L_{2,P})$. This completes the proof.
\end{proof}

\begin{restatable}[Support of $T_{1,P}'(\mathbb{G})$]{lemma}{lemmaSupportOfFirstStage}
	\label{Lemma: weak convergence, support of T_1P(G)}
	\singlespacing
	
	Let $\mathcal{F}$ be defined by \eqref{Defn: F large donsker set} and $T_1$ be as defined in lemma \ref{Lemma: Hadamard differentiability, T1 is fully Hadamard differentiable}. 
	\begin{enumerate}
		\item If assumption \ref{Assumption: setting} holds, $P \in \mathbb{D}_C$ and hence $T_1$ is fully Hadamard differentiable at $P$ tangentially to $\ell^\infty(\mathcal{F})$. 
		\item If assumptions \ref{Assumption: setting}, \ref{Assumption: cost function}, and \ref{Assumption: parameter, function of moments} hold,
		\begin{equation*}
			\sqrt{n}(T_1(\mathbb{P}_n) - T_1(P)) \overset{L}{\rightarrow} T_{1,P}'(\mathbb{G})
		\end{equation*}
		where $\mathbb{G}$ is the Gaussian limit of $\sqrt{n}(\mathbb{P}_n - P)$ in $\ell^\infty(\mathcal{F})$ discussed in lemma \ref{Lemma: weak convergence, Donsker results, large Donsker set F is Donsker}. 
		
		\item If assumptions \ref{Assumption: setting}, \ref{Assumption: cost function}, and \ref{Assumption: parameter, function of moments}, then $P(T_{1,P}'(\mathbb{G}) \in \mathbb{D}_{Tan,Full}) = 1$ where 
		\begin{equation}
			\mathbb{D}_{Tan, Full} = \prod_{m=1}^M \Big(\ell_{\mathcal{Y}_{1, x_m}}^\infty(\mathcal{F}_{1, x_m}) \times \ell_{\mathcal{Y}_{0, x_m}}^\infty(\mathcal{F}_{0, x_m})\Big) \cap \Big(\mathcal{C}(\mathcal{F}_{1, x_m}, L_{2,P}) \times \mathcal{C}(\mathcal{F}_{0, x_m}, L_{2,P})\Big) \times \mathbb{R}^{K_1} \times \mathbb{R}^{K_0} \times \mathbb{R} \label{Defn: support of T_1(G)}
		\end{equation}
	\end{enumerate}
	
\end{restatable}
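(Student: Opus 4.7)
My plan proceeds in three steps matching the three parts of the statement.

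For part 1, I would first verify $P \in \mathbb{D}_C$, which is direct: assumption \ref{Assumption: setting} \ref{Assumption: setting, common support} supplies $P(\mathbbm{1}_{x,z}) > 0$ and hence $P(\mathbbm{1}_x) = \sum_z P(\mathbbm{1}_{x,z}) > 0$, while lemma \ref{Lemma: identification, LATE IV model empirical content} identifies
\begin{equation*}
    P(\mathbbm{1}_{d,x,d})/P(\mathbbm{1}_{x,d}) - P(\mathbbm{1}_{d,x,1-d})/P(\mathbbm{1}_{x,1-d}) = P(D_1 > D_0 \mid X = x),
\end{equation*}
which is strictly positive by assumption \ref{Assumption: setting} \ref{Assumption: setting, existence of compliers}. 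Lemma \ref{Lemma: Hadamard differentiability, T1 is fully Hadamard differentiable} then delivers full Hadamard differentiability of $T_1$ at $P$ tangentially to $\ell^\infty(\mathcal{F})$. Part 2 is immediate from the functional delta method, combining this differentiability with the convergence $\sqrt{n}(\mathbb{P}_n - P) \overset{L}{\rightarrow} \mathbb{G}$ in $\ell^\infty(\mathcal{F})$ supplied by lemma \ref{Lemma: weak convergence, Donsker results, large Donsker set F is Donsker}.

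For part 3, the plan is to show that on the probability-one event $A_0 = \{\mathbb{G} \in \mathcal{C}(\mathcal{F}, L_{2,P})\}$, every coordinate of $T_{1,P}'(\mathbb{G})$ already has the right properties. Continuity of each $C_{d,x,P}'(\mathbb{G})$ in $L_{2,P}$ will follow immediately from lemma \ref{Lemma: weak convergence, conditional distributions continuity}, since $P$ itself lies in $\mathcal{C}(\mathcal{F}, L_{2,P})$ by Jensen's inequality. Condition (i) in the definition of $\ell_{\mathcal{Y}_{d,x}}^\infty(\mathcal{F}_{d,x})$ holds on $A_0$ as well: by definition of support, $\{D = d, X = x\} \subseteq \{Y \in \mathcal{Y}_{d,x}\}$ $P$-almost surely, so $\mathbbm{1}_{d,x,z} \times f$ and $\mathbbm{1}_{d,x,z} \times \mathbbm{1}_{\mathcal{Y}_{d,x}} \times f$ have zero $L_{2,P}$-distance and hence equal $\mathbb{G}$-evaluations; a short argument using instrument independence moreover shows $P_{d \mid x}$ is concentrated on $\mathcal{Y}_{d,x}$, so $C_{d,x}(P)(f) = C_{d,x}(P)(\mathbbm{1}_{\mathcal{Y}_{d,x}} \times f)$. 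Every term in the derivative formula \eqref{Display: lemma proof, Hadamard differentiability, conditional distributions, derivative of C_1x} is thus unchanged when $f$ is replaced by $\mathbbm{1}_{\mathcal{Y}_{d,x}} \times f$, giving (i).

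The hard part will be conditions (ii) and (iii), which both reduce to a linearity property of $\mathbb{G}$. I would introduce the subspace
\begin{equation*}
    V = \{H \in \ell^\infty(\mathcal{F}) \; ; \; H(af + bg) = a H(f) + b H(g) \text{ whenever } a,b \in \mathbb{R}, \; f,g \in \mathcal{F}, \; af + bg \in \mathcal{F}\},
\end{equation*}
observing that each defining equation is continuous in $H$, so $V$ is closed. Because $\mathbb{P}_n$ and $P$ act additively on measurable functions, $\sqrt{n}(\mathbb{P}_n - P) \in V$ deterministically for every $n$, so the portmanteau theorem gives $P(\mathbb{G} \in V) \geq \limsup_n P(\sqrt{n}(\mathbb{P}_n - P) \in V) = 1$. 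On $A_0 \cap \{\mathbb{G} \in V\}$, substitution into \eqref{Display: lemma proof, Hadamard differentiability, conditional distributions, derivative of C_1x} handles both remaining conditions: for any constant $a \in \mathcal{F}_{d,x}$, the $P$-difference in the third term vanishes and the first two $\mathbb{G}$-terms cancel through $\mathbb{G}(\mathbbm{1}_{d,x,z} \times a) = a \mathbb{G}(\mathbbm{1}_{d,x,z})$, yielding $C_{d,x,P}'(\mathbb{G})(a) = 0$; and whenever $af + bg \in \mathcal{F}_{d,x}$, the derivative is linear in $f$ by linearity of $\mathbb{G}$ on the products $\mathbbm{1}_{d,x,z} \times (af + bg)$ and of $C_{d,x}(P)$. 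The real subtlety is that $V$ encodes uncountably many linearity constraints, so one cannot union-bound a.s.\ linearity over individual tuples; the closed-subspace-plus-portmanteau device is the key, converting the exact linearity of the pre-limit processes into a single almost-sure statement for $\mathbb{G}$. Intersecting over the finite collection $x \in \mathcal{X}$ then gives $P(T_{1,P}'(\mathbb{G}) \in \mathbb{D}_{Tan,Full}) = 1$.
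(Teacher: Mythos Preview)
Your proposal is correct; parts 1 and 2 match the paper exactly. For part 3, you take a genuinely different route from the paper.

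The paper applies Portmanteau at the \emph{output} level: it shows directly that $\sqrt{n}(C_{d,x}(\mathbb{P}_n) - C_{d,x}(P)) \in \ell_{\mathcal{Y}_{d,x}}^\infty(\mathcal{F}_{d,x})$ with probability one for every $n$, invoking the closedness of this set (lemma \ref{Lemma: bounded function spaces, subspace of linear functions assigning zero to constants and ignoring values outside Y_d is closed}) and Portmanteau to pass to the limit $C_{d,x,P}'(\mathbb{G})$. This requires verifying that both $C_{d,x}(P)$ and $C_{d,x}(\mathbb{P}_n)$ individually are linear, return constants, and ignore values outside $\mathcal{Y}_{d,x}$; the last of these entails a detailed argument that the sample $\{Y_i, D_i, Z_i, X_i\}_{i=1}^n$ lies in the support set $\bigcup_{d,z,x} \mathcal{Y}_{d,x} \times \{d\} \times \{z\} \times \{x\}$ almost surely. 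You instead apply Portmanteau at the \emph{input} level, to the closed linear subspace $V \subseteq \ell^\infty(\mathcal{F})$, obtaining $P(\mathbb{G} \in V) = 1$, and then push the properties through the explicit derivative formula \eqref{Display: lemma proof, Hadamard differentiability, conditional distributions, derivative of C_1x} term by term. For condition (i) you bypass Portmanteau entirely, using the $L_{2,P}$-continuity of $\mathbb{G}$ and the $L_{2,P}$-equivalence of $\mathbbm{1}_{d,x,z} \times f$ and $\mathbbm{1}_{d,x,z} \times \mathbbm{1}_{\mathcal{Y}_{d,x}} \times f$; this is cleaner than the paper's sample-in-support argument. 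The trade-off is that your approach requires explicitly tracking how each term of the derivative formula behaves under linearity and the support restriction, whereas the paper's approach treats $\ell_{\mathcal{Y}_{d,x}}^\infty(\mathcal{F}_{d,x})$ as a black box once closedness is established. One minor point: your claim that $P_{d \mid x}$ is concentrated on $\mathcal{Y}_{d,x}$ follows more directly from the definition of $\mathcal{Y}_{d,x}$ as the support of $Y \mid D = d, X = x$ (so $\mathbbm{1}_{d,x,z}(1 - \mathbbm{1}_{\mathcal{Y}_{d,x}}) = 0$ $P$-a.s.) than from instrument independence.
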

\begin{proof}
	\singlespacing
	
	In steps:
	\begin{enumerate}
		\item $P \in \mathbb{D}_C$ and differentiability of $T_1$ at $P$.
		
		Assumption \ref{Assumption: setting} implies $P \in \mathbb{D}_C$, given by \eqref{Defn: D_C, domain of the overall map}. To see this, recall that assumption \ref{Assumption: setting} \ref{Assumption: setting, common support} is that $P(\mathbbm{1}_{x,z}) = P(X = x, Z = z) > 0$ (implying $P(\mathbbm{1}_x) = P(X = x) = P(X = x, Z = 1) + P(X = x, Z = 0) > 0$). Furthermore,
		\begin{align*}
			&P(\mathbbm{1}_{d,x,d})/P(\mathbbm{1}_{x,d}) - P(\mathbbm{1}_{d,x,1-d})/P(\mathbbm{1}_{x,1-d}) \\
			&\hspace{1 cm} = P(D = d \mid X = x, Z = d) - P(D = d \mid X = x, Z = 1-d) \\
			&\hspace{1 cm} = P(D_1 > D_0 \mid X = x) > 0
		\end{align*}
		The second equality is shown in the proof of lemma \ref{Lemma: identification, LATE IV marginal distribution identification}, and the inequality is assumption \ref{Assumption: setting} \ref{Assumption: setting, existence of compliers}. Lemma \ref{Lemma: Hadamard differentiability, T1 is fully Hadamard differentiable} thus shows that $T_1$ is fully Hadamard differentiable at $P$ tangentially to $\ell^\infty(\mathcal{F})$.
		
		\item Functional delta method. 
		
		Under assumptions \ref{Assumption: setting}, \ref{Assumption: cost function}, and \ref{Assumption: full differentiability}, lemma \ref{Lemma: weak convergence, Donsker results, large Donsker set F is Donsker} shows that $\sqrt{n}(\mathbb{P}_n - P) \overset{L}{\rightarrow} \mathbb{G}$ in $\ell^\infty(\mathcal{F})$. The functional delta method (\cite{van2000asymptotic} theorem 20.8) then implies 
		\begin{align*}
			&\sqrt{n}(T_1(\mathbb{P}_n) - T_1(P)) \overset{L}{\rightarrow} T_{1,P}'(\mathbb{G}), &&\text{ in } \prod_{m=1}^M \ell^\infty(\mathcal{F}_{1,x_m}) \times \ell^\infty(\mathcal{F}_{0,x_m}) \times \mathbb{R}^{K_1}\times \mathbb{R}^{K_0} \times \mathbb{R} 
		\end{align*}
		
		\item Support of $T_{1,P}'(\mathbb{G})$.
		
		Notice that $T_P'(\mathbb{G}) = \left(\left\{C_{1,x,P}'(\mathbb{G}), C_{0,x,P}'(\mathbb{G}), \tilde{\eta}_{1,x,P}'(\mathbb{G}), \tilde{\eta}_{0,x,P}'(\mathbb{G}), C_{s,x,P}'(\mathbb{G})\right\}_{x \in \mathcal{X}}\right)$, where $\tilde{\eta}_{d,x}$ are defined in lemma \ref{Lemma: Hadamard differentiability, T1 is fully Hadamard differentiable}. Let 
		\begin{equation*}
			\mathbb{S}_x = \Big(\ell_{\mathcal{Y}_{1, x_m}}^\infty(\mathcal{F}_{1, x_m}) \times \ell_{\mathcal{Y}_{0, x_m}}^\infty(\mathcal{F}_{0, x_m})\Big) \cap \Big(\mathcal{C}(\mathcal{F}_{1, x_m}, L_{2,P}) \times \mathcal{C}(\mathcal{F}_{0, x_m}, L_{2,P})\Big) \times \mathbb{R}^{K_1} \times \mathbb{R}^{K_0} \times \mathbb{R}
		\end{equation*}
		and note that it suffices to show $P\left(C_{1,x,P}'(\mathbb{G}), C_{0,x,P}'(\mathbb{G}), \tilde{\eta}_{1,x,P}'(\mathbb{G}), \tilde{\eta}_{0,x,P}'(\mathbb{G}), C_{s,x,P}'(\mathbb{G}) \in \mathbb{S}_x\right) = 1$ for each $x$. Moreover, 
		\begin{align*}
			P\left(\left(\tilde{\eta}_{1,x,P}'(\mathbb{G}), \tilde{\eta}_{0,x,P}'(\mathbb{G}), C_{s,x,P}'(\mathbb{G}) \right)\in \mathbb{R}^{K_1}\times \mathbb{R}^{K_0} \times \mathbb{R}\right) = 1
		\end{align*}
		is immediate. To complete the proof we must show $P(C_{d,x,P}'(\mathbb{G}) \in \ell_{\mathcal{Y}_{d,x}}^\infty(\mathcal{F}_{d,x})) = P(C_{d,x,P}'(\mathbb{G}) \in \mathcal{C}(\mathcal{F}_{d,x}, L_{2,P})) = 1$. 
		
		\begin{enumerate}
			\item To see that $P(C_{d,x,P}'(\mathbb{G}) \in \mathcal{C}(\mathcal{F}_{d,x}, L_{2,P})) = 1$, first note that for any functions $f_1, f_2 \in \mathcal{F}$, 
			\begin{align*}
				\lvert P(f_1) - P(f_2) \rvert \leq P(\lvert f_1 - f_2 \rvert) = P(\sqrt{(f_1 - f_2)}) \leq \sqrt{P((f_1 - f_2)^2)} = L_{2,P}(f_1, f_2)
			\end{align*}
			where the second inequality is an application of Jensen's inequality. Thus $P \in \mathcal{C}(\mathcal{F}, L_{2,P})$.  
			
			Next apply lemma \ref{Lemma: weak convergence, conditional distributions continuity} to see that if $G \in \mathcal{C}(\mathcal{F}, L_{2,P})$ then $C_{d,x,P}'(G) \in \mathcal{C}(\mathcal{F}_{d,x}, L_{2,P})$. It follows that 
			\begin{equation*}
				1 = P\left(\mathbb{G} \in \mathcal{C}(\mathcal{F}, L_{2,P})\right) \leq P\left(C_{d,x,P}'(\mathbb{G}) \in \mathcal{C}(\mathcal{F}_{d,x}, L_{2,P})\right)
			\end{equation*}
			
			\item To see that $P(C_{d,x,P}'(\mathbb{G}) \in \ell_{\mathcal{Y}_{d,x}}^\infty(\mathcal{F}_{d,x})) = 1$, we show that $P(\sqrt{n}(C_{d,x}(\mathbb{P}_n) - C_{d,x}(P)) \in \ell_{\mathcal{Y}_{d,x}}^\infty(\mathcal{F}_{d,x})) = 1$. 
			
			First recall the definition given in \eqref{Defn: ell_Yd^infty(G)}: 
			\begin{align*}
				\ell_{\mathcal{Y}_{d,x}}^\infty(\mathcal{F}_{d,x})  &= \Big\{H \in \ell^\infty(\mathcal{F}_{d,x}) \; ; \; \text{ for all } a,b \in \mathbb{R} \text{ and } f, g \in \mathcal{F}_{d,x}, \notag \\
				&\hspace{3.5 cm} H(f) = H(\mathbbm{1}_{\mathcal{Y}_{d,x}} \times f), \text{ if } a \in \mathcal{F}_{d,x} \text{ then } H(a) = 0, \text{ and } \notag \\
				&\hspace{3.5 cm} \text{ if } a f + b g \in \mathcal{F}_{d,x} \text{ then } H(a f + b g) = aH(f) + bH(g) \Big\}  
			\end{align*}
			
			\begin{enumerate}
				\item $\sqrt{n}(C_{d,x}(\mathbb{P}_n) - C_{d,x}(P))$ is linear and evaluates constants to zero. \\
				
				This follows because $C_{d,x}(\mathbb{P}_n)$ and $C_{d,x}(P)$ are linear and ``return constants''. To see this, recall that $C_{d,x}(P) \in \ell^\infty(\mathcal{F}_{d,x})$ is given pointwise by
				\begin{align*}
					C_{d,x}(P)(f) = \frac{P(\mathbbm{1}_{d,x,d} \times f)/P(\mathbbm{1}_{x,d}) - P(\mathbbm{1}_{d,x,1-d} \times f)/P(\mathbbm{1}_{x,1-d})}{P(\mathbbm{1}_{d,x,d})/P(\mathbbm{1}_{x,d}) - P(\mathbbm{1}_{d,x,1-d})/P(\mathbbm{1}_{x,1-d})}
				\end{align*}
				Use this to see that for any $a, b \in \mathbb{R}$ and $f, g \in \mathcal{F}_{d,x}$. if $a f + b g \in \mathcal{F}_{d,x}$, then linearity of $P$ implies $C_{d,x}(P)(a f + b g) = a C_{d,x}(P)(f) + b C_{d,x}(P)(g)$ and $C_{d,x}(\mathbb{P}_n)(a f + b g) = a C_{d,x}(\mathbb{P}_n)(f) + b C_{d,x}(\mathbb{P}_n)(g)$. Similarly, if $a \in \mathcal{F}_{d,x}$ is the constant function always returning $a$, then $C_{d,x}(P)(a) = a$. The same observations apply to $C_{d,x}(\mathbb{P}) \in \ell^\infty(\mathcal{F}_{d,x})$.
				
				Therefore
				\begin{align*}
					&\sqrt{n}(C_{d,x}(\mathbb{P}_n) - C_{d,x}(P))(a f + b g) \\
					&\hspace{1 cm} = \sqrt{n}(C_{d,x}(\mathbb{P}_n)(a f + b g) =  - C_{d,x}(P)(a f + b g)) \\
					&\hspace{1 cm} = \sqrt{n}(aC_{d,x}(\mathbb{P}_n)(f) + bC_{d,x}(\mathbb{P}_n)(g) - a C_{d,x}(P)(f) - b C_{d,x}(P)(g)) \\
					&\hspace{1 cm} = a \times \sqrt{n}(C_{d,x}(\mathbb{P}_n) - C_{d,x}(P))(f) + b \times \sqrt{n}(C_{d,x}(\mathbb{P}_n) - C_{d,x}(P))(g)
				\end{align*}
				and furthermore, if $a \in \mathcal{F}_{d,x}$, then
				\begin{equation*}
					\sqrt{n}(C_{d,x}(\mathbb{P}_n) - C_{d,x}(P))(a) = \sqrt{n}(a - a) = 0
				\end{equation*}

				\item $C_{d,x}(P)$ ``ignores values outside $\mathcal{Y}_{d,x}$''; i.e. $C_{d,x}(P)(f) = C_{d,x}(P)(\mathbbm{1}_{\mathcal{Y}_{d,x}} \times f)$. 
				
				To see this, 
				\begin{align}
					&C_{d,x}(P)(f) \label{Display: lemma proof, weak convergence, conditional distributions, C_{d,x}(P) ignores values outside support} \\
					&=  \frac{E[f(Y) \mathbbm{1}\{D = d\} \mid X = x, Z = d] - E[f(Y) \mathbbm{1}\{D = d\} \mid X = x, Z = 1-d]}{P(\mathbbm{1}_{d,x,d})/P(\mathbbm{1}_{x,d}) - P(\mathbbm{1}_{d,x,1-d})/P(\mathbbm{1}_{x,1-d})} \notag \\
					&= \frac{P(D = d \mid X = x, Z = d)E[f(Y) \mid D = d, X = x, Z = d]}{P(\mathbbm{1}_{d,x,d})/P(\mathbbm{1}_{x,d}) - P(\mathbbm{1}_{d,x,1-d})/P(\mathbbm{1}_{x,1-d})} \notag \\
					&\hspace{1 cm} - \frac{P(D = d \mid X = x, Z = 1-d)E[f(Y) \mid D = d, X = x, Z = 1-d]}{P(\mathbbm{1}_{d,x,d})/P(\mathbbm{1}_{x,d}) - P(\mathbbm{1}_{d,x,1-d})/P(\mathbbm{1}_{x,1-d})} \notag
				\end{align}
				Since $\mathcal{Y}_{d,x}$ is the support of $Y \mid D = d, X = x$, 
				\begin{align*}
					&E[f(Y) \mid D = d, X = x, Z = z] \\
					&\hspace{1 cm} = E[f(Y) \mathbbm{1}\{Z = z\}\mid D = d, X = x]/P(Z = z \mid D = d, X = x) \\
					&\hspace{1 cm} = E[\mathbbm{1}\{Y \in \mathcal{Y}_{d,x}\} f(Y) \mathbbm{1}\{Z = z\}\mid D = d, X = x]/P(Z = z \mid D = d, X = x) \\
					&\hspace{1 cm} = E[\mathbbm{1}\{Y \in \mathcal{Y}_{d,x}\} f(Y) \mid D = d, X = x, Z = z]
				\end{align*}
				Along with \eqref{Display: lemma proof, weak convergence, conditional distributions, C_{d,x}(P) ignores values outside support}, this implies $C_{d,m}(P)(f) = C_{d,m}(P)(\mathbbm{1}_{\mathcal{Y}_{d,x}} \times f)$. 
				
				\item Now notice that with probability one the sample is a subset of the support, and when this is so, $C_{d,x}(\mathbb{P}_n)$ ignores values outside of $\mathcal{Y}_{d,x}$.
				
				Specifically, observe that
				\begin{align}
					&C_{d,x}(\mathbb{P}_n)(f) \label{Display: lemma proof, weak convergence, conditional distributions, C_{d,x}(P_n) ignores values outside support} \\
					&= \frac{\left[\frac{1}{n}\sum_{i=1}^n \mathbbm{1}\{D_i = d, X_i=  x\}\mathbbm{1}\{Z_i = d\}f(Y_i)\right]/\left[\frac{1}{n}\sum_{i=1}^n \mathbbm{1}\{X_i = x, Z_i = d\}\right]}{\mathbb{P}_n(\mathbbm{1}_{d,x,d})/\mathbb{P}_n(\mathbbm{1}_{x,d}) - \mathbb{P}_n(\mathbbm{1}_{d,x,1-d})/\mathbb{P}_n(\mathbbm{1}_{x,1-d})} \notag \\
					&\hspace{1 cm} - \frac{\left[\frac{1}{n}\sum_{i=1}^n \mathbbm{1}\{D_i = d, X_i=  x\}\mathbbm{1}\{Z_i = 1-d\}f(Y_i)\right]/\left[\frac{1}{n}\sum_{i=1}^n \mathbbm{1}\{X_i = x, Z_i = 1-d\}\right]}{\mathbb{P}_n(\mathbbm{1}_{d,x,d})/\mathbb{P}_n(\mathbbm{1}_{x,d}) - \mathbb{P}_n(\mathbbm{1}_{d,x,1-d})/\mathbb{P}_n(\mathbbm{1}_{x,1-d})} \notag 
				\end{align}
				
				Note that because $\mathcal{Y}_{d,x}$ is the support of $Y \mid D = d, X = x$, we have that with probability one, $\{Y_i, D_i, Z_i, X_i\}_{i=1}^n \subseteq \mathcal{S} \coloneqq \bigcup_{d,z,x} \mathcal{Y}_{d,x} \times \{d\} \times \{z\} \times \{x\}$. Indeed, since $ \mathcal{Y}_{d,x} \times \{d\} \times \{z\} \times \{x\} \subseteq \mathbb{R}^4$ are disjoint for each distinct $(d,z,x)$,
				\begin{align*}
					&P((Y_i, D_i, Z_i, X_i) \in \mathcal{S}) = P\left((Y_i, D_i, Z_i, X_i) \in \bigcup_{d,z,x} \mathcal{Y}_{d,x} \times \{d\} \times \{z\} \times \{x\}\right) \\
					&\hspace{1 cm} = \sum_{d,z,x} P(Y_i \in \mathcal{Y}_{d,x}, D_i = d, X_i = x, Z_i = z) \\
					&\hspace{1 cm} = \sum_{d,z,x} P(D_i = d, X_i = x, Z_i = z) \\
					&\hspace{3 cm} \times \underbrace{P(Y_i \in \mathcal{Y}_{d,x}, Z_i = z \mid D_i = d, X_i = x)}_{=P(Z_i = z \mid D_i = d, X_i = x)}/P(Z_i = z \mid D_i = d, X_i = x) \\
					&\hspace{1 cm} = \sum_{d,z,x} P(D_i = d, X_i = x, Z_i = z) = 1
				\end{align*}
				Since $\{Y_i, D_i, Z_i, X_i\}_{i=1}^n$ is i.i.d.,
				\begin{equation*}
					P\left(\{Y_i, D_i, Z_i, X_i\}_{i=1}^n \subseteq \mathcal{S}\right) = P\left(\bigcap_{i=1}^n \left\{(Y_i, D_i, Z_i, X_i) \in \mathcal{S}\right\}\right) = \prod_{i=1}^n P\left((Y_i, D_i, Z_i, X_i) \in \mathcal{S}\right) = 1
				\end{equation*}

				When $\{Y_i, D_i, Z_i, X_i\}_{i=1}^n \subseteq \mathcal{S}$ holds, $\mathbbm{1}\{D_i = d, X_i = x\} \leq \mathbbm{1}\{Y_i \in \mathcal{Y}_{d,x}\} = \mathbbm{1}_{\mathcal{Y}_{d,x}}(Y_i)$ and thus $\mathbbm{1}_{\mathcal{Y}_{d,x}}(Y_i) \times \mathbbm{1}\{D_i = d, X_i = x\} = \mathbbm{1}\{D_i = d, X_i = x\}$. This and \eqref{Display: lemma proof, weak convergence, conditional distributions, C_{d,x}(P_n) ignores values outside support} implies that when $\{Y_i, D_i, Z_i, X_i\}_{i=1}^n \subseteq \mathcal{S}$ holds,
				\begin{align*}
					C_{d,x}(\mathbb{P}_n)(f) = C_{d,x}(\mathbb{P}_n)(\mathbbm{1}_{\mathcal{Y}_{d,x}} \times f) 
				\end{align*}
				
				\item Use the facts established above to see that 
				\begin{align*}
					&P(\sqrt{n}(C_{d,x}(\mathbb{P}_n) - C_{d,x}(P)) \in \ell_{\mathcal{Y}_{d, x}}^\infty(\mathcal{F}_{1,x})) \\
					&\hspace{0.5 cm} = P(\sqrt{n}(C_{d,x}(\mathbb{P}_n) - C_{d,x}(P)) \in \ell_{\mathcal{Y}_{d, x}}^\infty(\mathcal{F}_{d,x}) \mid \{Y_i, D_i, Z_i, X_i\}_{i=1}^n \subseteq \mathcal{S}) \\
					&\hspace{0.5 cm} = 1
				\end{align*}
				Lemma \ref{Lemma: bounded function spaces, subspace of linear functions assigning zero to constants and ignoring values outside Y_d is closed} is that $\ell_{\mathcal{Y}_{d, x}}^\infty(\mathcal{F}_{1,x})$ is closed, so Portmanteau (\cite{vaart1997weak} theorem 1.3.4) implies
				\begin{equation*}
					1 = \limsup_{n \rightarrow \infty} P(\sqrt{n}(C_{d,x}(\mathbb{P}_n) - C_{d,x}(P)) \in \ell_{\mathcal{Y}_{d, x}}^\infty(\mathcal{F}_{1,x})) \leq P(C_{d,x,P}'(\mathbb{G}) \in \ell_{\mathcal{Y}_{d, x}}^\infty(\mathcal{F}_{1,x}))
				\end{equation*}
			\end{enumerate}
		\end{enumerate}

		In summary, we have
		\begin{align*}
			1 &= P\left(\left(\tilde{\eta}_{1,x,P}'(\mathbb{G}), \tilde{\eta}_{0,x,P}'(\mathbb{G}), C_{s,x,P}'(\mathbb{G})\right)\in \mathbb{R}^{K_1}\times \mathbb{R}^{K_0} \times \mathbb{R}\right) \\
			&= P(C_{d,x,P}'(\mathbb{G}) \in \ell_{\mathcal{Y}_{d, x}}^\infty(\mathcal{F}_{d,x})) \\
			&= P\left(C_{d,x,P}'(\mathbb{G}) \in \mathcal{C}(\mathcal{F}_{d,x}, L_{2,P})\right)
		\end{align*}
		From which it follows that 
		\begin{align*}
			1 = P\left(C_{1,x,P}'(\mathbb{G}), C_{0,x,P}'(\mathbb{G}), \tilde{\eta}_{1,x,P}'(\mathbb{G}), \tilde{\eta}_{0,x,P}'(\mathbb{G}), C_{s,x,P}'(\mathbb{G}) \in \mathbb{S}_x\right) 
		\end{align*}
		for each $x$, and therefore
		\begin{align*}
			&P(T_{1,P}'(\mathbb{G}) \in \mathbb{D}_{Tan,Full}) \\
			&\hspace{1 cm} = P\left(\bigcap_{x \in \mathcal{X}} \left\{C_{1,x,P}'(\mathbb{G}), C_{0,x,P}'(\mathbb{G}), \tilde{\eta}_{1,x,P}'(\mathbb{G}), \tilde{\eta}_{0,x,P}'(\mathbb{G}), C_{s,x,P}'(\mathbb{G}) \in \mathbb{S}_x\right\}\right) = 1
		\end{align*}
		
	\end{enumerate}
	This completes the proof.
\end{proof}

\subsection{Optimal transport, $T_2(\{P_{1 \mid x}, P_{0 \mid x}, \eta_{1,x}, \eta_{0,x}, s_x\}_{x \in \mathcal{X}}) = (\{\theta_x^L, \theta_x^H, \eta_{1,x}, \eta_{0,x}, s_x\}_{x\in \mathcal{X}})$}

\label{Appendix: weak convergence, subsection optimal transport}

The second map applies the directional differentiability of optimal transport shown in appendix \ref{Appendix: properties of optimal transport, subsection differentiability}. There are three assumptions in lemma \ref{Lemma: Hadamard differentiability, optimal transport} to verify: strong duality, Donsker conditions, and completeness. Strong duality is shown by lemmas \ref{Lemma: c-concave functions, smooth costs, strong duality} and \ref{Lemma: c-concave functions, indicator of convex set costs, strong duality}, and the Donsker conditions were shown by lemma \ref{Lemma: weak convergence, Donsker results, Donsker result for F_dx}. It remains to verify the completeness assumptions.

\subsubsection{Verifying completeness}

\begin{restatable}[Completeness of dual problem feasible set in $L_2$ for smooth cost functions]{lemma}{lemmaCompletenessOfCConcaveFunctionsWithSmoothCosts}
	\label{Lemma: weak convergence, completeness, smooth costs c-concave functions}
	\singlespacing
	
	Suppose $\mathcal{Y} \subset \mathbb{R}$ is compact and $c : \mathcal{Y} \times \mathcal{Y} \rightarrow \mathbb{R}$ is $L$-Lipschitz. Let $\mathcal{F}_c$, $\mathcal{F}_c^c$ be given by \eqref{Defn: F_c for smooth costs} and \eqref{Defn: F_c^c for smooth costs} respectively:
	\begin{align*}
		\mathcal{F}_c &= \left\{\varphi : \mathcal{Y} \rightarrow \mathbb{R} \; ; \;  -\lVert c \rVert_\infty \leq \varphi(y_1) \leq \lVert c \rVert_\infty, \; \lvert \varphi(y) - \varphi(y') \rvert \leq L \lvert y - y'\rvert \right\},  \\
		\mathcal{F}_c^c &= \left\{\psi : \mathcal{Y} \rightarrow \mathbb{R} \; ; \; -2\lVert c \rVert_\infty \leq \psi(y) \leq 0, \; \lvert \psi(y) - \psi(y') \rvert \leq L \lvert y - y'\rvert \right\},
	\end{align*}
	Further let $\Phi_c$ be defined by \eqref{Defn: Phi_c, appendix}, and $\mathcal{F}_d$ defined by \eqref{Defn: F_dx, the set on which P_{d|x} is defined}. Let $L_{2,P}$ be given by \eqref{Defn: L2 semimetric, P}, and $L_2$ be given by \eqref{Defn: L2 semimetric, product space}. Then $(\mathcal{F}_{1,x} \times \mathcal{F}_{0,x}, L_2)$ and its subset $\Phi_c \cap (\mathcal{F}_c \times \mathcal{F}_c^c)$ are complete.
\end{restatable}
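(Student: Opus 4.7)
The overarching strategy is to exploit the Arzel\`a--Ascoli theorem. Members of $\mathcal{F}_c$ and $\mathcal{F}_c^c$ are uniformly bounded (by $\lVert c \rVert_\infty$ and $2\lVert c \rVert_\infty$ respectively) and uniformly $L$-Lipschitz on the compact set $\mathcal{Y}$, so both families are relatively compact in the supremum norm. Uniform limits preserve both the uniform bound and the Lipschitz constant, so $\mathcal{F}_c$ and $\mathcal{F}_c^c$ are in fact sup-norm closed, hence sup-norm compact. Since $\mathcal{Y}$ is bounded and the functions are uniformly bounded, uniform convergence dominates $L_{2,P}$ convergence via $L_{2,P}(\varphi, \varphi') \leq \lVert \varphi - \varphi' \rVert_\infty$. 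Hence the identity map from $(\mathcal{F}_c, \lVert \cdot \rVert_\infty)$ into $(\mathcal{F}_c, L_{2,P})$ is continuous, so $\mathcal{F}_c$ is compact---and therefore complete---in $L_{2,P}$. The identical argument handles $\mathcal{F}_c^c$.

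For $(\mathcal{F}_{1,x} \times \mathcal{F}_{0,x}, L_2)$, the plan is to decompose $\mathcal{F}_{1,x}$ into four pieces: $\mathcal{F}_c$, its masked image $\{\mathbbm{1}_{\mathcal{Y}_{1,x}} \varphi \; ; \; \varphi \in \mathcal{F}_c\}$, the finite set $\{\eta_1^{(k)}\}_{k=1}^{K_1}$, and its masked counterpart $\{\mathbbm{1}_{\mathcal{Y}_{1,x}} \eta_1^{(k)}\}_{k=1}^{K_1}$. Multiplication by $\mathbbm{1}_{\mathcal{Y}_{1,x}}$ is an $L_{2,P}$-contraction, so the masked image is the continuous image of the $L_{2,P}$-compact $\mathcal{F}_c$ and hence itself $L_{2,P}$-compact; the two finite pieces are trivially complete. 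Given a Cauchy sequence in $\mathcal{F}_{1,x}$, the pigeonhole principle delivers a subsequence contained in one of the four pieces, which then converges within $\mathcal{F}_{1,x}$; the Cauchy property forces the entire sequence to share this limit. The same argument covers $\mathcal{F}_{0,x}$, and completeness of the product in $L_2$ is immediate from coordinatewise completeness.

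The subset $\Phi_c \cap (\mathcal{F}_c \times \mathcal{F}_c^c)$ is the main obstacle, because the defining constraint $\varphi(y_1) + \psi(y_0) \leq c(y_1, y_0)$ is pointwise, whereas $L_{2,P}$ only sees functions up to $P$-null sets; so $L_2$ convergence alone is too weak to pass the inequality to the limit. The remedy is again to promote $L_2$-Cauchyness to uniform convergence along a subsequence. Given a Cauchy sequence $\{(\varphi_n, \psi_n)\} \subseteq \Phi_c \cap (\mathcal{F}_c \times \mathcal{F}_c^c)$, Arzel\`a--Ascoli yields a subsequence $(\varphi_{n_k}, \psi_{n_k})$ converging uniformly to some $(\varphi^\ast, \psi^\ast)$. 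Uniform limits of $L$-Lipschitz functions obeying the prescribed uniform bounds inherit these properties, so $(\varphi^\ast, \psi^\ast) \in \mathcal{F}_c \times \mathcal{F}_c^c$, and the pointwise inequality passes to the uniform limit, placing $(\varphi^\ast, \psi^\ast)$ in $\Phi_c$. Finally, uniform convergence implies $L_2$ convergence, so $(\varphi_{n_k}, \psi_{n_k}) \to (\varphi^\ast, \psi^\ast)$ in $L_2$; combined with the original Cauchy property this forces $(\varphi_n, \psi_n) \to (\varphi^\ast, \psi^\ast)$ in $L_2$, establishing completeness.
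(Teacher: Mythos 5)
Your proof is correct, but it takes a genuinely different route from the paper's. Both arguments reduce the question to showing that $\mathcal{F}_c$, $\mathcal{F}_c^c$, and $\Phi_c \cap (\mathcal{F}_c \times \mathcal{F}_c^c)$ are complete in $L_{2,P}$ and then handle the union $\mathcal{F}_{d,x}$ by finite decomposition, but the core mechanisms differ. The paper starts from the $L_{2,P}$-Cauchy sequence directly: it invokes completeness of $L^2(P)$ to produce a limit $\tilde{\varphi}$, extracts a subsequence converging $P$-almost everywhere to $\tilde{\varphi}$, observes that $\tilde{\varphi}$ satisfies the bounds and Lipschitz constraints only on the a.s.\ convergence set $N_1^c$, and then repairs this by an explicit McShane-type Lipschitz extension $\bar{\varphi}(y_1) = \sup_{y_1' \in N_1^c}\{\tilde{\varphi}(y_1') - L\lvert y_1 - y_1'\rvert\}$ followed by truncation; the same extension formula is then used to verify that the defining constraint of $\Phi_c$ holds everywhere, not just a.s. You instead observe that $\mathcal{F}_c$ is sup-norm compact (Arzel\`a--Ascoli together with sup-norm closedness of the uniform bound and Lipschitz constraint), that the identity is sup-norm-to-$L_{2,P}$ continuous, and therefore $(\mathcal{F}_c, L_{2,P})$ is a continuous image of a compact space, hence compact, hence complete. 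This shortcuts the extension construction entirely: for $\Phi_c \cap (\mathcal{F}_c \times \mathcal{F}_c^c)$, your Arzel\`a--Ascoli subsequence converges uniformly (everywhere), so the pointwise inequality defining $\Phi_c$ passes to the limit with no repair needed. The paper's route is more self-contained (it exhibits the limit function explicitly and never appeals to a compactness theorem), while yours is leaner and correctly identifies that the real work is concentrated in upgrading $L_{2,P}$-Cauchyness to a mode of convergence strong enough to preserve pointwise constraints. One small point worth making explicit in your subsequence step: you should extract the subsequence so that $\varphi_{n_k}$ and $\psi_{n_k}$ converge uniformly simultaneously (first pass for $\varphi$, then a further pass for $\psi$), which is implicit in your phrasing but worth stating.
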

\begin{proof}
	\singlespacing
	
	In steps:
	\begin{enumerate}
		\item $(\mathcal{F}_c, L_{2,P})$ and $(\mathcal{F}_c^c, L_{2,P})$ are complete.

		The proof that $(\mathcal{F}_c, L_{2,P})$ is complete is broken into steps:
		\begin{enumerate}
			\item Let $\{\varphi_n\}_{n=1}^\infty \subseteq \mathcal{F}_c$ be $L_{2,P}$-Cauchy. The $L_p$ semimetrics are complete for any probability distribution (\cite{pollard2002user} section 2.7 and chapter 2 problem [19]), thus there exists $\tilde{\varphi}$ such that $L_{2,P}(\varphi_n, \tilde{\varphi}) \rightarrow 0$. Convergence in $L_{2,P}$ implies convergence almost surely along a subsequence (\cite{pollard2002user} section 2.8). Thus there exists a subsequence $\{\varphi_{n_k}\}_{k=1}^\infty$ such that $\lim_{k \rightarrow \infty} \varphi_{n_k}(y) = \tilde{\varphi}(y)$ for $P$-almost every $y$. Let $N_1 \subseteq \mathcal{Y}$ be the $P$-negligible set where this fails.
			
			\item Observe that on $N_1^c = \mathcal{Y} \setminus N_1$, $\tilde{\varphi}$ obeys the bounds and Lipschitz continuity of $\mathcal{F}_c$. Specifically, 
			\begin{equation*}
				-\lVert c \rVert_\infty \leq \lim_{k \rightarrow \infty} -\lVert c \rVert_\infty \leq \underbrace{\lim_{k \rightarrow \infty} \varphi_{n_k}(y)}_{\tilde{\varphi}(y)} \leq \lim_{k \rightarrow \infty} \lVert c \rVert_\infty \leq \lVert c \rVert_\infty
			\end{equation*}
			Furthermore, for any $y, y' \in N_1^c$,
			\begin{align*}
				\lvert \tilde{\varphi}(y) - \tilde{\varphi}(y') \rvert &= \lvert \lim_{k \rightarrow \infty} \varphi_{n_k}(y) - \lim_{k \rightarrow \infty} \varphi_{n_k}(y') \rvert = \lim_{k \rightarrow \infty} \lvert \varphi_{n_k}(y) - \varphi_{n_k}(y') \rvert \\
				&\leq \lim_{k \rightarrow \infty} L \lvert y - y' \rvert = L \lvert y - y' \rvert
			\end{align*}
			
			\item Now define functions $\bar{\varphi}, \varphi : \mathcal{Y}\rightarrow \mathbb{R}$ with
			\begin{align*}
				&\bar{\varphi}(y_1) = \sup_{y_1' \in N_1^c} \{\tilde{\varphi}(y_1') - L \lvert y_1 - y_1' \rvert\}, &&\varphi(y_1) = \max\{\bar{\varphi}(y_1), -\lVert c \rVert_\infty\} 
			\end{align*}
			Then $L_{2,P}(\varphi_n, \varphi) \rightarrow 0$ and $\varphi \in \mathcal{F}_c$, which shows $(\mathcal{F}_c, L_{2,P})$ is complete.
			\begin{enumerate}
				\item $L_{2,P}(\varphi_n, \varphi) \rightarrow 0$ follows from $\varphi(y) = \tilde{\varphi}(y)$ for all $y \in N_1^c$. To see this, let $y \in N_1^c$. Since $\tilde{\varphi}$ is $L$-Lipschitz on $N_1^c$, it follows that for any $y' \in N_1^c$,
				\begin{equation*}
					\tilde{\varphi}(y') - L \lvert y - y'\rvert \leq \tilde{\varphi}(y)
				\end{equation*}
				and thus $\bar{\varphi}(y) = \tilde{\varphi}(y)$. This implies $\bar{\varphi}(y) = \tilde{\varphi}(y) \geq -\lVert c \rVert_\infty$, and thus $\varphi(y) = \bar{\varphi}(y) = \tilde{\varphi}(y)$. Thus $\varphi(y) = \tilde{\varphi}(y)$ for $P$-almost all $y$, implying $L_{2,P}(\tilde{\varphi}, \varphi) = 0$ and thus $L_{2,P}(\varphi_n, \varphi) \rightarrow 0$. 
				
				\item To see that $\varphi \in \mathcal{F}_c$, first notice that $\bar{\varphi}(y) = \sup_{y' \in N_1^c} \{\tilde{\varphi}(y') - L \lvert y - y'\rvert\} \leq \sup_{y' \in N_1^c} \tilde{\varphi}(y) \leq \lVert c \rVert_\infty$, and hence $\bar{\varphi}$ obeys the upper bound for $\mathcal{F}_c$. It then follows easily that $\varphi(y) = \max\{\bar{\varphi}(y), -\lVert c \rVert_\infty\}$ obeys both the upper and lower bound. Next notice that $\bar{\varphi}$ is $L$-Lipschitz on all of $\mathcal{Y}$:
				\begin{align*}
					\bar{\varphi}(y) - \bar{\varphi}(y') = &= \sup_{\tilde{y} \in N_1^c} \{\tilde{\varphi}(\tilde{y}) - L \lvert y - \tilde{y}\rvert\} - \sup_{\tilde{y}' \in N_1^c} \{\tilde{\varphi}(\tilde{y}') - L \lvert y' - \tilde{y}'\rvert\} \\
					&\leq \sup_{\tilde{y} \in N_1^c} \{\tilde{\varphi}(\tilde{y}) - L \lvert y - \tilde{y}\rvert - \left(\tilde{\varphi}(\tilde{y}) - L\lvert y' - \tilde{y}\rvert\right)\} \\
					&= \sup_{\tilde{y} \in N_1^c} L\left(\lvert y' - \tilde{y} \rvert - \lvert y - \tilde{y}\rvert\right) \leq L\lvert y - y' \rvert
				\end{align*}
				where the last inequality follows from the reverse triangle inequality. It follows that $\varphi(y_1) = \max\{\bar{\varphi}(y_1), -\lVert c \rVert_\infty\}$ is also $L$-Lipschitz, and thus $\varphi \in \mathcal{F}_c$. 
			\end{enumerate}
		\end{enumerate}
		
		\item Very similar steps show that $(\mathcal{F}_c^c, L_{2,P})$ is complete; the only substantial changes are replacing the lower bounds with $-2\lVert c \rVert$ and the upper bounds with $0$.
		
		\item Note that since $(\mathcal{F}_c \times \mathcal{F}_c^c, L_2)$ is the product space of $(\mathcal{F}_c, L_{2,P})$ and $(\mathcal{F}_c^c, L_{2,P})$, it follows that $(\mathcal{F}_c \times \mathcal{F}_c^c, L_2)$ is complete. 
		
		\item $\Phi_c \cap (\mathcal{F}_c \times \mathcal{F}_c^c)$ is complete. 
		
		To see that $\Phi_c \cap (\mathcal{F}_c \times \mathcal{F}_c^c)$ is complete, let $\{(\varphi_n, \psi_n)\}_{n=1}^\infty \subseteq \Phi_c \cap (\mathcal{F}_c \times \mathcal{F}_c^c)$ be $L_2$-Cauchy, and follow the same steps shown above to define $(\varphi, \psi) \in \mathcal{F}_c \times \mathcal{F}_c^c$ such that $L_2((\varphi_n, \psi_n), (\varphi, \psi)) \rightarrow 0$. It remains to show that $\varphi(y_1) + \psi(y_0) \leq c(y_1,y_0)$ for all $(y_1, y_0) \in \mathcal{Y} \times \mathcal{Y} \subseteq \mathbb{R}^2$. 
		
		Since $c$ is $L$-Lipschitz,
		\begin{align*}
			c(y_1, y_0) - c(y_1', y_0) \geq -L \lVert (y_1, y_0) - (y_1', y_0') \rVert \geq -L \lvert y_1 - y_1' \rvert - L\lvert y_0 - y_0' \rvert
		\end{align*}
		which implies $c(y_1', y_0') -L \lvert y_1 - y_1' \rvert - L\lvert y_0 - y_0' \rvert \leq c(y_1,y_0)$. Thus
		\begin{align*}
			\bar{\varphi}(y_1) + \bar{\varphi}(y_0) &= \sup_{y_1' \in N_1^c} \{\tilde{\varphi}(y_1') - L\lvert y_1 - y_1' \rvert\} + \sup_{y_0' \in N_0^c} \{\tilde{\psi}(y_0') - L\lvert y_0 - y_0' \rvert\} \\
			&= \sup_{(y_1', y_0') \in N_1^c \times N_0^c} \left\{\tilde{\varphi}(y_1') + \tilde{\psi}(y_0') - L\lvert y_1 - y_1' \rvert - L\lvert y_0 - y_0' \rvert\right\} \\
			&\leq \sup_{(y_1', y_0') \in N_1^c \times N_0^c} \left\{c(y_1', y_0') - L\lvert y_1 - y_1' \rvert - L\lvert y_0 - y_0' \rvert\right\} \\
			&\leq \sup_{(y_1', y_0') \in N_1^c \times N_0^c} \left\{ c(y_1, y_0) \right\} = c(y_1, y_0)
		\end{align*}
		
		Finally,
		\begin{align*}
			\varphi(y_1) + \psi(y_0) &= \max\{\bar{\varphi}(y_1), -\lVert c \rVert_\infty\} + \max\{\bar{\psi}(y_0), -2\lVert c \rVert\} \\
			&= \max\{\bar{\varphi}(y_1) + \bar{\varphi}(y_0), \bar{\varphi}(y_1) - 2\lVert c \rVert_\infty, - \lVert c \rVert_\infty + \bar{\psi}(y_0), - \lVert c \rVert_\infty - 2\lVert c \rVert \} \\
			&\leq \max\{c(y_1,y_0), -\lVert c \rVert_\infty, -\lVert c \rVert_\infty, -3\lVert c \rVert_\infty\} \\
			&\leq c(y_1,y_0)
		\end{align*}
		where the first inequality follows from $\bar{\varphi}(y_1) \leq \lVert c \rVert_\infty$ and $\bar{\psi}(y_0) \leq 0$. 
		
		\item $(\mathcal{F}_{1,x} \times \mathcal{F}_{0,x}, L_2)$ is complete.
		
		As this is the product space of $(\mathcal{F}_{1,x}, L_{2,P})$ and $(\mathcal{F}_{0,x}, L_{2,P})$, it suffices to show these individual spaces are complete. 
		
		Now recall that $\mathcal{F}_{d,x}$ is defined by \eqref{Defn: F_dx, the set on which P_{d|x} is defined}:
		\begin{align*}
			\tilde{\mathcal{F}}_{1} &= \left\{f : \mathcal{Y}\rightarrow \mathbb{R} \; ; \; f = \varphi \text{ for some } \varphi \in \mathcal{F}_c, \text{ or } f = \eta_1^{(k)} \text{ for some } k = 1, \ldots, K_1\right\} \\
			\tilde{\mathcal{F}}_{0} &= \left\{f : \mathcal{Y} \rightarrow \mathbb{R} \; ; \; f = \psi \text{ for some } \psi \in \mathcal{F}_c^c, \text{ or } f = \eta_0^{(k)} \text{ for some } k = 1, \ldots, K_0\right\} \\
			\mathcal{F}_{d,x} &= \left\{f : \mathcal{Y}\rightarrow \mathbb{R} \; ; \; f = g \text{ or } \mathbbm{1}_{\mathcal{Y}_{d, x}} \times g  \text{ for some } g \in \tilde{\mathcal{F}}_d\right\} 
		\end{align*}
		
		Recall that the union of a finite number of complete sets is complete. Since $(\mathcal{F}_c, L_{2,P})$ and $\mathcal{F}_c^c, L_{2,P})$ are complete and any finite set is complete, $\tilde{\mathcal{F}}_d$ is complete. Next recognize that $\mathcal{F}_{d,x} = \tilde{\mathcal{F}}_d \cup \left\{\mathbbm{1}_{\mathcal{Y}_{d, x}} \times g  \; ; \; g \in \tilde{\mathcal{F}}_d \right\}$ is the union of a finite number of sets, and thus it suffices to show $\left\{\mathbbm{1}_{\mathcal{Y}_{d, x}} \times g  \; ; \; g \in \tilde{\mathcal{F}}_d \right\}$ is complete. 
		
		Let $\{\mathbbm{1}_{\mathcal{Y}_{d,x}} \times g_n\}_{n=1}^\infty \subseteq \left\{\mathbbm{1}_{\mathcal{Y}_{d, x}} \times g  \; ; \; g \in \tilde{\mathcal{F}}_d \right\}$ be $L_{2,P}$-Cauchy. Lemma \ref{Lemma: weak convergence, Donsker results, Donsker result for F_dx} shows that $\mathcal{F}_{d,x}$ is Donsker and $\sup_{f \in \mathcal{F}_{d,x}} \lvert P(f) \rvert < \infty$, which implies $(\mathcal{F}_{d,x}, L_{2, P})$ is totally bounded (see \cite{vaart1997weak} problem 2.1.2.). Since $\tilde{\mathcal{F}}_d$ is a complete subset of a totally bounded set, it is compact. Thus $\{g_n\}_{n=1}^\infty \subseteq \tilde{\mathcal{F}}_d$ is a sequence in a compact semimetric space, and therefore has a convergent subsequence $\{g_{n_k}\}_{k=1}^\infty$. Let $g \in \tilde{\mathcal{F}}_d$ be its limit, and notice that 
		\begin{align*}
			0 \leq L_{2,P}(\mathbbm{1}_{\mathcal{Y}_{d, x}} \times g_{n_k}, \mathbbm{1}_{\mathcal{Y}_{d, x}} \times g) &= \sqrt{P((\mathbbm{1}_{\mathcal{Y}_{d, x}} \times g_{n_k} - \mathbbm{1}_{\mathcal{Y}_{d, x}} \times g)^2)} \\
			&\leq \sqrt{P((g_{n_k} - g)^2)} \\
			&= L_{2,P}(g_{n_k}, g) \rightarrow 0
		\end{align*}
		and thus $\mathbbm{1}_{\mathcal{Y}_{d, x}} \times \varphi_{n_k} \rightarrow \mathbbm{1}_{\mathcal{Y}_{d, x}} g$. It follows that $\mathbbm{1}_{\mathcal{Y}_{d, x}} \times \varphi_n \rightarrow \mathbbm{1}_{\mathcal{Y}_{d, x}} g$, and thus $\left\{\mathbbm{1}_{\mathcal{Y}_{d, x}} \times g  \; ; \; g \in \tilde{\mathcal{F}}_d \right\}$ is complete.
	\end{enumerate}
	This completes the proof.
\end{proof}

\begin{restatable}[Completeness of dual problem feasible set in $L_2$ for indicator cost functions]{lemma}{lemmaCompletenessOfCConcaveFunctionsWithIndicatorCosts}
	\label{Lemma: weak convergence, completeness, indicator of convex set costs c-concave functions}
	\singlespacing
	
	Let $\mathcal{Y} \subseteq \mathbb{R}$, $C \subseteq \mathcal{Y} \times \mathcal{Y}$ be nonempty, open, and convex, and let $c : \mathcal{Y} \times \mathcal{Y} \rightarrow \mathbb{R}$ be given by $c(y_1,y_0) = \mathbbm{1}_C(y_1,y_0) = \mathbbm{1}\{(y_1, y_0) \in C\}$. Let $\mathcal{F}_c$, $\mathcal{F}_c^c$ be given by \eqref{Defn: F_c for indicator costs of convex C} and \eqref{Defn: F_c^c for indicator costs of convex C} respectively:
	\begin{align*}
		\mathcal{F}_c &= \left\{\varphi : \mathcal{Y} \rightarrow \mathbb{R} \; ; \; \varphi(y_1) = \mathbbm{1}_I(y_1) \text{ for some interval } I\right\}, \\
		\mathcal{F}_c^c &= \left\{\psi : \mathcal{Y} \rightarrow \mathbb{R} \; ; \; \psi(y_0) = -\mathbbm{1}_{I^c}(y_0) \text{ for some  interval } I\right\},
	\end{align*}
	Further let $\Phi_c$ be defined by \eqref{Defn: Phi_c, appendix}, and $\mathcal{F}_{d,x}$ defined by \eqref{Defn: F_dx, the set on which P_{d|x} is defined}. Let $L_{2,P}$ be given by \eqref{Defn: L2 semimetric, P}, and $L_2$ be given by \eqref{Defn: L2 semimetric, product space}. Then $(\mathcal{F}_{1,x} \times \mathcal{F}_{0,x}, L_2)$ and its subset $\Phi_c \cap (\mathcal{F}_c \times \mathcal{F}_c^c)$ are complete.
\end{restatable}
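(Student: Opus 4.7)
The plan is to follow the same four-part template used in lemma \ref{Lemma: weak convergence, completeness, smooth costs c-concave functions}: show $(\mathcal{F}_c, L_{2,P})$ and $(\mathcal{F}_c^c, L_{2,P})$ are complete, deduce completeness of the product $(\mathcal{F}_c \times \mathcal{F}_c^c, L_2)$, then check that both $\Phi_c \cap (\mathcal{F}_c \times \mathcal{F}_c^c)$ and $(\mathcal{F}_{1,x} \times \mathcal{F}_{0,x}, L_2)$ are closed subsets of complete spaces. The last step is essentially identical to the smooth-cost proof, since $\mathcal{F}_{d,x}$ is built by adjoining the finite set $\{\eta_d^{(k)}\}$ and multiplying by $\mathbbm{1}_{\mathcal{Y}_{d,x}}$, and the argument given there works verbatim once $\mathcal{F}_c$ and $\mathcal{F}_c^c$ are known to be complete and totally bounded (the latter follows from lemma \ref{Lemma: weak convergence, Donsker results, c-concave functions for indicator of convex set costs}). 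So the real work is completeness of $\mathcal{F}_c$ in $L_{2,P}$.

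For $(\mathcal{F}_c, L_{2,P})$, I would parametrize each interval by its endpoints in the extended real line $[-\infty,\infty]$ together with two inclusion bits $(\epsilon_a,\epsilon_b)\in\{0,1\}^2$ specifying whether each endpoint lies in the interval. Let $\{\mathbbm{1}_{I_n}\}$ be $L_{2,P}$-Cauchy with $I_n$ having endpoints $(a_n,b_n)$ and inclusion bits $(\epsilon_{a,n},\epsilon_{b,n})$. By compactness of $[-\infty,\infty]^2\times\{0,1\}^2$, pass to a subsequence $n_k$ along which $a_{n_k}\to a$, $b_{n_k}\to b$, and the inclusion bits are eventually constant $(\epsilon_a,\epsilon_b)$. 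Let $I$ be the interval with endpoints $(a,b)$ and inclusion bits $(\epsilon_a,\epsilon_b)$, and set $\varphi=\mathbbm{1}_I$. For every $y\notin\{a,b\}$ one checks directly that $\mathbbm{1}_{I_{n_k}}(y)\to\mathbbm{1}_I(y)$, so pointwise convergence holds outside the at-most-two-point set $\{a,b\}$. Since $\lvert\mathbbm{1}_{I_{n_k}}-\mathbbm{1}_I\rvert\le 1$, dominated convergence gives $L_{2,P}$-convergence provided $P(\{a,b\})=0$; if not, I further thin the subsequence so that $\mathbbm{1}_{I_{n_k}}(a)$ and $\mathbbm{1}_{I_{n_k}}(b)$ are each eventually constant, and redefine $\epsilon_a,\epsilon_b$ to match, which still yields $\varphi\in\mathcal{F}_c$ and DCT convergence. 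Cauchyness then propagates the subsequential limit to the whole sequence. The same argument applied to complements of intervals handles $(\mathcal{F}_c^c, L_{2,P})$.

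With both factor spaces complete, $(\mathcal{F}_c \times \mathcal{F}_c^c, L_2)$ is complete. To show $\Phi_c \cap (\mathcal{F}_c \times \mathcal{F}_c^c)$ is closed inside it, let $(\mathbbm{1}_{I_n},-\mathbbm{1}_{J_n^c})$ be an $L_2$-Cauchy sequence in the intersection with limit $(\mathbbm{1}_I,-\mathbbm{1}_{J^c})$ produced by the construction above. Along the extracted subsequence the pointwise-a.s. convergence gives $\mathbbm{1}_I(y_1)-\mathbbm{1}_{J^c}(y_0)\le\mathbbm{1}_C(y_1,y_0)$ for $(P_1\otimes P_0)$-a.e. $(y_1,y_0)$; since both sides are indicator-valued and only at most two points of each coordinate are in question, I can adjust the inclusion bits of $I$ and $J$ on the $P$-null endpoint set (which does not change the $L_{2,P}$ equivalence class) so that the pointwise inequality holds everywhere in $\mathcal{Y}\times\mathcal{Y}$. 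This gives the limit in $\Phi_c$.

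The main obstacle is the endpoint issue in the presence of atoms: naively the pointwise limit of indicators of intervals need not be an indicator of an interval, and the open/closed decoration of the endpoints is not continuous in the endpoints. Handling this requires the additional subsequence extraction making the endpoint values of $\mathbbm{1}_{I_{n_k}}$ constant, and then noting that the choice of inclusion bits on endpoints is free within the class $\mathcal{F}_c$, so we can always choose a representative in the class that agrees with the limit everywhere (not just $P$-a.s.). Once this is handled cleanly, verifying that the constraint $\varphi(y_1)+\psi(y_0)\le c(y_1,y_0)$ survives the limit is straightforward, because $c$ takes only values $0$ and $1$ and the inequality at problematic points reduces to finitely many cases.
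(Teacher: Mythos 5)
Your approach is correct, and it is genuinely different from the paper's on the two nontrivial steps: completeness of $(\mathcal{F}_c, L_{2,P})$ and closedness of $\Phi_c \cap (\mathcal{F}_c \times \mathcal{F}_c^c)$. The paper's proof goes through abstract $L^2$-completeness: it extracts an $L_{2,P}$-limit $\tilde{\varphi}$, passes to a subsequence converging $P$-a.s., deduces that $\tilde{\varphi}$ is $\{0,1\}$-valued on the good set, and then shows the good-set restriction is ``interval-shaped'' via a three-point argument ($y_1<y_2<y_3$ with $y_1,y_3$ in the set forces $y_2$ in the set). For the constraint, the paper then has a genuine exceptional $P$-null set and repairs it using convexity of $C$: it picks four corners of the form $(\ell_1,\ell_0),(\ell_1,u_0),(u_1,\ell_0),(u_1,u_0)$ in the good set and invokes convexity to cover the exceptional rectangle. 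Your route parametrizes each interval by $(a_n,b_n,\epsilon_{a,n},\epsilon_{b,n}) \in [-\infty,\infty]^2 \times \{0,1\}^2$, uses compactness of the parameter space plus a further thinning to fix the endpoint values, and thereby obtains \emph{everywhere} pointwise convergence $\mathbbm{1}_{I_{n_k}} \to \mathbbm{1}_I$ on all of $\mathcal{Y}$ (not just a.e.). This buys you two things. First, the limit is an interval indicator by construction, so the three-point argument is unnecessary. Second, the dual constraint $\mathbbm{1}_{I_{n_k}}(y_1) - \mathbbm{1}_{J_{n_k}^c}(y_0) \le \mathbbm{1}_C(y_1,y_0)$, which holds for every $n_k$ and every $(y_1,y_0)$, passes to the limit pointwise with no exceptional set to repair -- so convexity of $C$ plays no role in this step, whereas the paper leans on it. That said, your prose in the $\Phi_c$-closedness paragraph slightly undercuts your own construction: you speak of ``pointwise-a.s.'' convergence and then ``adjust the inclusion bits on the $P$-null endpoint set,'' but after the further thinning you already have convergence at $a$ and $b$ as well, so the adjustment step is redundant once the thinning is done, and it would be cleaner to say the inequality is preserved at every point because convergence holds at every point. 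One thing to tighten up: you should say explicitly that the parametrization covers the empty interval (e.g.\ $a>b$ or $a=b$ with both bits zero) and that the inclusion bit is ignored when an endpoint is $\pm\infty$, and that when handling the pair $(\mathbbm{1}_{I_n},-\mathbbm{1}_{J_n^c})$ you extract the two thinned subsequences in a nested fashion so that both components converge everywhere along a common index set. With those small clarifications the proof is complete and, if anything, a bit more robust than the paper's because it never invokes convexity of $C$ in the completeness argument.
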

\begin{proof}
	\singlespacing
	
	The proof is similar in structure to that of lemma \ref{Lemma: weak convergence, completeness, smooth costs c-concave functions}. 
	
	\begin{enumerate}
		\item $(\mathcal{F}_c, L_{2,P})$ is complete. 
		
		Let $\{\varphi_n\}_{n=1}^\infty \subseteq \mathcal{F}_c$ be $L_{2, P}$-Cauchy. Note that $\varphi_n(y) = \mathbbm{1}_{I_n}(y)$ for some interval $I_n$. Just as in the proof of lemma \ref{Lemma: weak convergence, completeness, smooth costs c-concave functions}, there exists $\tilde{\varphi}$ such that $L_{2, P}(\varphi_n, \tilde{\varphi}) \rightarrow 0$, and a subsequence $\{\varphi_{n_k}\}_{k=1}^\infty$ such that $\lim_{k \rightarrow \infty} \varphi_{n_k}(y) = \tilde{\varphi}(y)$ for $P$-almost every $y$. Let $N \subset \mathcal{Y}$ be the $P$-negligible set where this convergence fails. 
		
		Let $y \in N^c$, and notice that $\varphi_{n_k}(y) = \mathbbm{1}_{I_{n_k}}(y) \in \{0,1\}$ for all $k$ and $\{\varphi_{n_k}(y)\}_{k=1}^\infty$ converging in $\mathbb{R}$ implies that $\varphi_{n_k}(y)$ is eventually constant as $k$ grows. This implies $\tilde{\varphi}(y) \in \{0,1\}$, and hence for some set $A \subset \mathcal{Y}$,
		\begin{align*}
			&\tilde{\varphi}(y) = \mathbbm{1}_A(y) &&\text{ for all } y \in N^c
		\end{align*}
		
		We will show that for some interval $I$, $A \cap N^c = I\cap N^c$. Let $y_1, y_2, y_3 \in N^c$ satisfy $y_1 < y_2 < y_3$ and $y_1, y_3 \in A$, but be otherwise arbitrary. It suffices to show that $y_2 \in A$; we can then define $I$ to be the interval with endpoints $\inf A$ and $\sup A$ (including the lower endpoint if $\inf A = \min A > -\infty$, and including the upper endpoint if $\sup A = \max A < \infty$), and define the function $\varphi : \mathcal{Y}_1 \rightarrow \mathbb{R}$ with $\varphi(y_1) = \mathbbm{1}_I(y_1)$.\footnote{Explicitly, $I$ is defined as follows: 
			\begin{enumerate*}
				\item $I = (\ell, u)$ if neither $\ell = \inf A$ nor $u = \sup A$ is attained in $\mathbb{R}$
				\item $I = [\ell, u)$ if $\ell = \inf A = \min A$, but $u = \sup A$ is not attained in $\mathbb{R}$
				\item $I = (\ell, u]$ if $\ell = \inf A$ is not attained in $\mathbb{R}$, but $u = \sup A = \max A$
				\item $I = [\ell, u]$ if both $\ell = \inf A = \min A$ and $u = \sup A = \max A$. 
			\end{enumerate*}
		} 
		
		Notice that $\lim_{k \rightarrow \infty} \mathbbm{1}_{I_{n_k}}(y_3) = \mathbbm{1}_A(y_3) = 1$ and $\lim_{k \rightarrow \infty} \mathbbm{1}_{I_{n_k}}(y_3) = \mathbbm{1}_A(y_3) = 1$ implies that $\mathbbm{1}_{I_{n_k}}(y_1)$ and $\mathbbm{1}_{I_{n_k}}(y_3)$ are eventually constant and equal to $1$, i.e. there exists $K_1, K_3 \in \mathbb{N}$ such that 
		\begin{align*}
			&y_1 \in I_{n_k} \text{ for all } k \geq K_1, \text{ and } &&y_3 \in I_{n_k} \text{ for all } k \geq K_3 
		\end{align*}
		Since $I_{n_k}$ is an interval, this implies 
		\begin{equation*}
			y_2 \in I_{n_k} \text{ for all } k \geq \max\{K_1, K_3\}
		\end{equation*}
		i.e. $\mathbbm{1}_{I_{n_k}}(y_2) = 1$ for all such $k$, and therefore $\mathbbm{1}_A(y_2) = \lim_{k \rightarrow \infty} \mathbbm{1}_{A_n}(y_2) = 1$. Thus $y_2 \in A$.
		
		It follows that $\tilde{\varphi}(y) = \varphi(y) = \mathbbm{1}_I(y)$ for all $y \in N^c$. Thus $L_{2, P}(\tilde{\varphi}, \varphi) = 0$, and $L_{2, P}(\varphi_n, \varphi) \rightarrow 0$. Since $\varphi \in \mathcal{F}_c$, this completes the proof that $(\mathcal{F}_c, L_{2, P})$ is complete. 
		
		\item $(\mathcal{F}_c^c, L_{2,P})$ is complete. 
		
		The argument is similar. Let $\{\psi_n\}_{n=1}^\infty \subseteq \mathcal{F}_c$ be $L_{2, P}$-Cauchy. Note that $\psi_n(y) = \mathbbm{1}_{I_n^c}(y)$ for some interval $I_n$. There exists $\tilde{\psi}$ such that $L_{2, P}(\psi_n, \tilde{\psi}) \rightarrow 0$, and a subsequence $\{\psi_{n_k}\}_{k=1}^\infty$ such that $\lim_{k \rightarrow \infty} \psi_{n_k}(y) = \tilde{\psi}(y)$ for $P$-almost every $y$. Let $N \subset \mathcal{Y}$ be the $P$-negligible set where this convergence fails. 
		
		Since $\psi_{n_k}(y) = \mathbbm{1}_{I_{n_k}^c}(y) \in \{0,1\}$ for all $k$ and $y$, and $\lim_{k \rightarrow \infty} \psi_{n_k}(y) = \tilde{\psi}(y)$ for all $y \in N^c$, we have $\tilde{\psi}(y) \in \{0,1\}$ for all such $y$ and thus for some set $A \subseteq \mathcal{Y}$,
		\begin{align*}
			&\tilde{\psi}(y) = \mathbbm{1}_{A^c}(y) &&\text{ for all } y \in N^c
		\end{align*}
		
		Once again, it suffices to show $A \cap N^c = I \cap N^c$ for some interval $I$. Consider $y_1, y_2, y_3 \in N^c$, $y_1 < y_2 < y_3$, with $y_1, y_3 \in A$. $\lim_{k \rightarrow \infty} \psi_{n_k}(y_1) = \tilde{\psi}(y_1) = 0$ and $\lim_{k \rightarrow \infty} \psi_{n_k}(y_3) = \tilde{\psi}(y_3) = 0$ implies that $\psi_{n_k}(y_1) = \mathbbm{1}_{I_{n_k}^c}(y_1)$ and $\psi_{n_k}(y_3) = \mathbbm{1}_{I_{n_k}^c}(y_3)$ are eventually constant and equal to $0$, i.e. for some $K_1, K_3 \in \mathbb{N}$,
		\begin{align*}
			&y_1 \in I_{n_k} \text{ for all } k \geq K_1, &&y_3 \in I_{n_k} \text{ for all } k \geq K_3
		\end{align*}
		since $I_{n_k}$ is an interval for every $k$, this implies
		\begin{equation*}
			y_2 \in I_{n_k} \text{ for all } k \geq \max\{K_1, K_3\}
		\end{equation*}
		thus $\tilde{\psi}(y_2) = \lim_{k \rightarrow \infty} \psi_{n_k}(y_2) = 0$. It follows that $A \cap N^c = I \cap N^c$, where $I$ is the interval defined by endpoints $\inf A$ and $\sup A$, which are included if attained and finite. Define $\psi(y) = \mathbbm{1}_{I^c}(y)$ and notice $\psi \in \mathcal{F}_c^c$. We have $\psi(y) = \tilde{\psi}(y)$ for all $y \in N^c$ and hence $L_{2, P}(\tilde{\psi}, \psi) = 0$. Thus $L_{2, P}(\psi_n, \psi) \rightarrow 0$, showing $(\mathcal{F}_c^c, L_{2, P})$ is complete.
		
		\item Note that $(\mathcal{F}_c \times \mathcal{F}_c^c, L_2)$ is the product space of the complete spaces $(\mathcal{F}_c, L_{2,P})$ and $(\mathcal{F}_c^c, L_{2,P})$, and so is complete. 
		
		\item We next show $\Phi_c \cap (\mathcal{F}_c \times \mathcal{F}_c^c) = \left\{(\varphi, \psi) \in \mathcal{F}_c \times \mathcal{F}_c^c \; ; \; \varphi(y_1) + \psi(y_0) \leq c(y_1,y_0)\right\}$ is complete. 
		
		Let $\{(\varphi_n, \psi_n)\}_{n=1}^\infty \subseteq \Phi_c \cap (\mathcal{F}_c \times \mathcal{F}_c^c)$ be $L_2$-Cauchy, and let $(\tilde{\varphi}, \tilde{\psi})$ be a limit in $\mathcal{F}_c \times \mathcal{F}_c^c$. Since $L_{2, P}(\varphi_n, \tilde{\varphi}) \rightarrow 0$ there exists a subsequence $\{(\varphi_{n_k}, \psi_{n_k})\}_{k=1}^\infty$ such that $\lim_{k \rightarrow \infty} \varphi_{n_k}(y_1) = \tilde{\varphi}(y_1)$ for $P$-almost all $y_1$. Let $N_1$ be the negligible set where this fails. Furthermore, $L_{2, P}(\psi_{n_k}, \tilde{\psi}) \rightarrow 0$ as $k \rightarrow \infty$ and so there is a further subsequence $\{(\varphi_{n_{k_j}}, \psi_{n_{k_j}})\}_{j=1}^\infty$ such that $\lim_{j \rightarrow \infty} \psi_{n_{k_j}}(y_0) = \tilde{\psi}(y_0)$ for $P$-almost all $y_0$. Let $N_0$ be the negligible set where this fails. It is then clear that if $(y_1, y_0) \in N_1^c \times N_0^c$, then 
		\begin{equation}
			\tilde{\varphi}(y_1) + \tilde{\psi}(y_0) = \lim_{j \rightarrow \infty} \{\varphi_{n_{k_j}}(y_1) + \psi_{n_{k_j}}(y_0)\} \leq \lim_{j \rightarrow \infty} c(y_1, y_0) = \mathbbm{1}_C(y_1, y_0) \label{Display: lemma proof, completeness of c-concave functions for indicator costs, cost function inequality}
		\end{equation}
		
		Note that $\tilde{\varphi} = \mathbbm{1}_{I_{\tilde{\varphi}}}$, and $\tilde{\psi} = -\mathbbm{1}_{I_{\tilde{\psi}}^c}$ for some intervals $I_{\tilde{\varphi}}$ and $I_{\tilde{\psi}}$. Let
		\begin{align*}
			&\ell_1 = \inf I_{\tilde{\varphi}} \cap N_1^c, &&u_1 = \sup I_{\tilde{\varphi}} \cap N_1^c, &&\ell_0 = \inf I_{\tilde{\psi}} \cap N_0^c, &&u_0 = \sup I_{\tilde{\psi}} \cap N_0^c
		\end{align*} 
		and define $\varphi = \mathbbm{1}_{I_{\varphi}}$ where $I_{\varphi}$ is the interval with endpoints $\ell_1$, $u_1$ (included if the inf/sup are finite and attained), and $\psi = -\mathbbm{1}_{I_{\psi}^c}$ where $I_{\psi}^c$ is the interval with endpoints $\ell_0$, $u_0$ (included if the inf/sup are finite and attained). Notice that $I_\varphi = I_{\tilde{\varphi}}$, $P$-almost surely and $I_\psi = I_{\tilde{\psi}}$, $P$-almost surely. 
		
		Notice that for $(y_1, y_0) \in (N_1^c \times N_0^c)^c$ to satisfy $\varphi(y_1) + \psi(y_0) = \mathbbm{1}_{I_\varphi}(y_1) - \mathbbm{1}_{I_\psi^c}(y_0) >\mathbbm{1}_C(y_1, y_0)$, it would have to be the case that $(y_1, y_0) \in (I_{\tilde{\varphi}} \times  I_{\tilde{\psi}}) \cap (N_1^c \times N_0^c)^c \setminus C$. Let $(y_1, y_0) \in (I_{\varphi} \times I_{\psi}) \cap (N_1^c \times N_0^c)^c$, and note that there exists $y_1^\ell, y_1^u \in I_{\varphi} \cap N_1^c$ with $y_1^\ell \leq y_1 \leq y_1^u$ and $y_0^\ell, y_0^u \in I_{\psi} \cap N_0^c$ with $y_0^\ell \leq y_0 \leq y_0^u$. Notice that $[y_1^\ell, y_1^u] \times [y_0^\ell, y_0^u] \subseteq C$, because $C$ is convex and \eqref{Display: lemma proof, completeness of c-concave functions for indicator costs, cost function inequality} holds for the ``corners'': $(\ell_1, \ell_0), (\ell_1, u_0), (u_1, \ell_0), (u_1, u_0) \in (I_{\varphi} \times I_{\psi}) \cap (N_1^c \times N_0^c)$. Thus $(I_{\tilde{\varphi}} \times  I_{\tilde{\psi}}) \cap (N_1^c \times N_0^c)^c \setminus C = \varnothing$, showing that $\varphi(y_1) + \psi(y_0) \leq c(y_1, y_0)$ holds for all $(y_1, y_0) \in \mathcal{Y}_1 \times \mathcal{Y}_0$. This shows $\Phi_c \cap (\mathcal{F}_c \times \mathcal{F}_c^c)$ is complete.
		
		\item The argument thet $\left(\mathcal{F}_{1,x} \times \mathcal{F}_{0,x}, L_2\right)$ is complete is identical to the argument given in step 5 of the proof of lemma \ref{Lemma: weak convergence, completeness, smooth costs c-concave functions}. 
	\end{enumerate}
	This completes the proof.
\end{proof}

\subsubsection{Differentiability of $T_2$}

We first apply lemma \ref{Lemma: Hadamard differentiability, optimal transport} to show show that $\theta^L(\cdot)$ and $\theta^H(\cdot)$, given by either \eqref{Display: thetaL, thetaH when c is continuous} or \eqref{Display: thetaL, thetaH when c is for CDF} depending on the function $c$, are Hadamard differentiable.

\begin{restatable}[]{lemma}{lemmaBoundsAreHadamardDifferentiable}
	\label{Lemma: Hadamard differentiability, theta^L(.) and theta^H(.) are directionally or fully Hadamard differentiable}
	\singlespacing
	
	Suppose assumptions \ref{Assumption: setting}, \ref{Assumption: cost function}, and \ref{Assumption: parameter, function of moments} hold. Then $\theta^L$ and $\theta^H$ given by \eqref{Display: thetaL, thetaH when c is continuous} or \eqref{Display: thetaL, thetaH when c is for CDF} are Hadamard directionally differentiable at $(P_{1 \mid x}, P_{0 \mid x})$ tangentially to $\mathcal{C}(\mathcal{F}_{1,x}, L_{2,P}) \times \mathcal{C}(\mathcal{F}_{0,x}, L_{2,P})$. The argmax sets
	\begin{align*}
		\Psi_{c_L}(P_{1 \mid x}, P_{0 \mid x}) &= \argmax_{(\varphi,\psi) \in \Phi_{c_L} \cap (\mathcal{F}_c \times \mathcal{F}_c^c)} P_{1 \mid x}(\varphi) + P_{0 \mid x}(\psi) \\
		\Psi_{c_H}(P_{1 \mid x}, P_{0 \mid x}) &= \argmax_{(\varphi,\psi) \in \Phi_{c_H} \cap (\mathcal{F}_c \times \mathcal{F}_c^c)} P_{1 \mid x}(\varphi) + P_{0 \mid x}(\psi)
	\end{align*}
	are nonempty, and the derivatives $\theta_{(P_{1 \mid x}, P_{0 \mid x})}^{L\prime}, \theta_{(P_{1 \mid x}, P_{0 \mid x})}^{H\prime}: \mathcal{C}(\mathcal{F}_{1,x}, L_{2,P}) \times \mathcal{C}(\mathcal{F}_{0,x}, L_{2,P}) \rightarrow \mathbb{R}$ are given by 
	\begin{align}
		\theta_{(P_{1 \mid x}, P_{0 \mid x})}^{L\prime}(H_1, H_0) &= \sup_{(\varphi,\psi) \in \Psi_{c_L}(P_{1 \mid x}, P_{0 \mid x})} H_1(\varphi) + H_0(\psi) \label{Display: lemma, Hadamard differentiability, T2 is directionally or fully Hadamard differentiable, derivative of BL} \\
		\theta_{(P_{1 \mid x}, P_{0 \mid x})}^{H\prime}(H_1, H_0) &= -\left[\sup_{(\varphi,\psi) \in \Psi_{c_H}(P_{1 \mid x}, P_{0 \mid x})} H_1(\varphi) + H_0(\psi) \right] \label{Display: lemma, Hadamard differentiability, T2 is directionally or fully Hadamard differentiable, derivative of BH}
	\end{align}
	
	If assumption \ref{Assumption: full differentiability} also holds, then $\theta^L$ and $\theta^H$ are fully Hadamard differentiable at $(P_{1\mid x}, P_{0 \mid x})$ tangentially to
	\begin{equation*}
		\mathbb{D}_{Tan, Full, x} = \Big(\ell_{\mathcal{Y}_{1, x}}^\infty(\mathcal{F}_{1,x}) \times \ell_{\mathcal{Y}_{0, x}}^\infty(\mathcal{F}_{0,x})\Big) \cap \Big(\mathcal{C}(\mathcal{F}_{1,x}, L_{2,P}) \times \mathcal{C}(\mathcal{F}_{0,x}, L_{2,P})\Big)
	\end{equation*}
	with the derivatives $\theta_{(P_{1 \mid x}, P_{0 \mid x})}^{L\prime}, \theta_{(P_{1 \mid x}, P_{0 \mid x})}^{H\prime} : \mathbb{D}_{Tan, Full, x} \rightarrow \mathbb{R}$ also given by \eqref{Display: lemma, Hadamard differentiability, T2 is directionally or fully Hadamard differentiable, derivative of BL} and \eqref{Display: lemma, Hadamard differentiability, T2 is directionally or fully Hadamard differentiable, derivative of BH}.
	
\end{restatable}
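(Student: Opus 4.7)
The plan is to recognize both $\theta^L(\cdot,\cdot)$ and $\theta^H(\cdot,\cdot)$ as affine transformations of optimal transport functionals and then invoke the general differentiability results of Appendix \ref{Appendix: properties of optimal transport} lemma by lemma. Specifically, under Assumption \ref{Assumption: cost function} \ref{Assumption: cost function, smooth costs} we have $\theta^L = OT_{c_L}$ with $c_L = c$ and $\theta^H = -OT_{c_H}$ with $c_H = -c$, while under Assumption \ref{Assumption: cost function} \ref{Assumption: cost function, CDF} we have $\theta^L = OT_{c_L}$ with $c_L(y_1,y_0) = \mathbbm{1}\{y_1 - y_0 < \delta\}$ and $\theta^H = 1 - OT_{c_H}$ with $c_H(y_1,y_0) = \mathbbm{1}\{y_1 - y_0 > \delta\}$. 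Since affine maps are fully Hadamard differentiable with derivative equal to the linear part, it is enough to obtain Hadamard (directional) differentiability of $OT_{c_L}$ and $OT_{c_H}$ and then multiply/add constants.

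Next I would verify the four hypotheses of Lemma \ref{Lemma: Hadamard differentiability, optimal transport} for each choice of cost function in turn. Strong duality against the restricted feasible set $\Phi_c \cap (\mathcal{F}_c \times \mathcal{F}_c^c)$ is delivered by Lemma \ref{Lemma: c-concave functions, smooth costs, strong duality} in the Lipschitz case and by Lemma \ref{Lemma: c-concave functions, indicator of convex set costs, strong duality} in the CDF case (noting that the sets $\{y_1 - y_0 < \delta\}$ and $\{y_1 - y_0 > \delta\}$ are open and convex, so the analysis of costs of the form $\mathbbm{1}_C$ applies). Domination of $P_{1 \mid x}$ and $P_{0 \mid x}$ by $P$ with bounded densities was spelled out in the opening discussion of Appendix \ref{Appendix: properties of optimal transport, subsection differentiability} using the explicit expression for $dP_{d\mid x}/dP$ from Lemma \ref{Lemma: identification, LATE IV marginal distribution identification}. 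The Donsker plus bounded-mean condition on $\mathcal{F}_{1,x}$ and $\mathcal{F}_{0,x}$ is exactly Lemma \ref{Lemma: weak convergence, Donsker results, Donsker result for F_dx}. Completeness of $(\mathcal{F}_{1,x} \times \mathcal{F}_{0,x}, L_2)$ and its subset $\Phi_c \cap (\mathcal{F}_c \times \mathcal{F}_c^c)$ is Lemma \ref{Lemma: weak convergence, completeness, smooth costs c-concave functions} in the Lipschitz case and Lemma \ref{Lemma: weak convergence, completeness, indicator of convex set costs c-concave functions} in the CDF case.

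With the hypotheses verified, Lemma \ref{Lemma: Hadamard differentiability, optimal transport} gives Hadamard directional differentiability of each of $OT_{c_L}$ and $OT_{c_H}$ at $(P_{1 \mid x}, P_{0 \mid x})$ tangentially to $\mathcal{C}(\mathcal{F}_{1,x}, L_{2,P}) \times \mathcal{C}(\mathcal{F}_{0,x}, L_{2,P})$, with nonempty maximizer sets $\Psi_{c_L}(P_{1 \mid x}, P_{0 \mid x})$ and $\Psi_{c_H}(P_{1 \mid x}, P_{0 \mid x})$ and the derivative expressions
\begin{align*}
	OT_{c_L, (P_{1\mid x}, P_{0\mid x})}'(H_1, H_0) &= \sup_{(\varphi,\psi) \in \Psi_{c_L}(P_{1 \mid x}, P_{0 \mid x})} H_1(\varphi) + H_0(\psi), \\
	OT_{c_H, (P_{1\mid x}, P_{0\mid x})}'(H_1, H_0) &= \sup_{(\varphi,\psi) \in \Psi_{c_H}(P_{1 \mid x}, P_{0 \mid x})} H_1(\varphi) + H_0(\psi).
\end{align*}
Applying the chain rule (Lemma \ref{Lemma: Hadamard differentiability, chain rule}) to the affine identifications of $\theta^L$ and $\theta^H$ above yields the formulas \eqref{Display: lemma, Hadamard differentiability, T2 is directionally or fully Hadamard differentiable, derivative of BL} and \eqref{Display: lemma, Hadamard differentiability, T2 is directionally or fully Hadamard differentiable, derivative of BH}, since the additive constant $1$ has zero derivative and the factor $-1$ passes through the supremum as the negation on the outside.

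For the final claim, Assumption \ref{Assumption: full differentiability} is exactly the uniqueness hypothesis of Lemma \ref{Lemma: Hadamard differentiability, optimal transport, full differentiability} for both $c = c_L$ and $c = c_H$, so the same hypothesis-verification argument as above, combined with that lemma instead of Lemma \ref{Lemma: Hadamard differentiability, optimal transport}, promotes directional differentiability to full Hadamard differentiability tangentially to $\mathbb{D}_{Tan,Full,x}$ with the identical derivative expressions. The only step that requires any care is to check that in the CDF case the uniqueness hypothesis applies to the cost functions $c_L$ and $c_H$ rather than to the original $c$; but since Assumption \ref{Assumption: full differentiability} is stated for each $c \in \{c_L, c_H\}$, this matches directly. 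The most delicate ingredient feeding into the whole argument is the completeness of $\Phi_c \cap (\mathcal{F}_c \times \mathcal{F}_c^c)$ in the CDF case handled by Lemma \ref{Lemma: weak convergence, completeness, indicator of convex set costs c-concave functions}, which is the only place convexity of the set $\{y_1 - y_0 < \delta\}$ is used in an essential way; everything else is routine application of the appendix machinery.
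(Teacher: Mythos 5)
Your proposal is correct and follows essentially the same route as the paper's proof: it verifies the four hypotheses of Lemma \ref{Lemma: Hadamard differentiability, optimal transport} (strong duality via Lemmas \ref{Lemma: c-concave functions, smooth costs, strong duality} and \ref{Lemma: c-concave functions, indicator of convex set costs, strong duality}, domination with bounded densities, Donsker conditions via Lemma \ref{Lemma: weak convergence, Donsker results, Donsker result for F_dx}, completeness via Lemmas \ref{Lemma: weak convergence, completeness, smooth costs c-concave functions} and \ref{Lemma: weak convergence, completeness, indicator of convex set costs c-concave functions}), passes the affine transformations through the chain rule, and upgrades to full differentiability via Lemma \ref{Lemma: Hadamard differentiability, optimal transport, full differentiability} under Assumption \ref{Assumption: full differentiability}. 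The decomposition, supporting lemmas, and final derivative formulas all match the paper's argument.
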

\begin{proof}
	\singlespacing
	
	We apply lemma \ref{Lemma: Hadamard differentiability, optimal transport}. It is clear from inspection that the cost functions $c_L$ and $c_H$ are lower semicontinuous, the sets $\mathcal{F}_{d,x}$ defined by \eqref{Defn: F_dx, the set on which P_{d|x} is defined} consists of measurable functions mapping $\mathcal{Y}$ to $\mathbb{R}$, and that the subsets $\mathcal{F}_c$ and $\mathcal{F}_c^c$ given by \eqref{Defn: F_c for smooth costs} and \eqref{Defn: F_c^c for smooth costs}, or by \eqref{Defn: F_c for indicator costs of convex C} and \eqref{Defn: F_c^c for indicator costs of convex C}, are universally bounded. Moreover,
	\begin{enumerate}
		\item Strong duality holds.
		\begin{enumerate}[label=(\roman*)]
			\item If assumption \ref{Assumption: cost function} \ref{Assumption: cost function, smooth costs} holds, then lemma \ref{Lemma: c-concave functions, smooth costs, strong duality} shows that strong duality holds.
			\item If assumption \ref{Assumption: cost function} \ref{Assumption: cost function, CDF} holds, then lemma \ref{Lemma: c-concave functions, indicator of convex set costs, strong duality} shows that strong duality holds.
		\end{enumerate}
		\item Assumption \ref{Assumption: setting} implies $P$ dominates $P_{d \mid x}$ with bounded densities $\frac{dP_{d\mid x}}{dP}$. Indeed,
		\begin{align*}
			E_{P_{d \mid x}}[f(Y_d)] &= \frac{E_P[f(Y) \mathbbm{1}\{D = d\} \mid X = x, Z = d] - E_P[f(Y) \mathbbm{1}\{D = d\} \mid X = x, Z = 1-d]}{P(D = d \mid X = x, Z = d) - P(D = d \mid X = x, Z = 1-d)}\\
			&= E_P\left[f(Y) \frac{\mathbbm{1}_{d,x,d}(D,X,Z)/p_{x,d} - \mathbbm{1}_{d,x,1-d}(D,X,Z)/p_{x,1-d}}{p_{d,x,d}/p_{x,d} - p_{d,x,1-d}/p_{x,1-d}}\right] \\
			&= E_P\left[f(Y) E\left[\frac{\mathbbm{1}_{d,x,d}(D,X,Z)/p_{x,d} - \mathbbm{1}_{d,x,1-d}(D,X,Z)/p_{x,1-d}}{p_{d,x,d}/p_{x,d} - p_{d,x,1-d}/p_{x,1-d}} \mid Y\right]\right]
		\end{align*}
		Notice that $\frac{dP_{d \mid x}}{dP}(Y) = E_P\left[\frac{\mathbbm{1}_{d,x,d}(D,X,Z)/p_{x,d} - \mathbbm{1}_{d,x,1-d}(D,X,Z)/p_{x,1-d}}{p_{d,x,d}/p_{x,d} - p_{d,x,1-d}/p_{x,1-d}} \mid Y\right]$ must be nonnegative $P$-almost surely; if the set $A = \left\{y \; ; \; \frac{dP_{d \mid x}}{dP}(y) < 0\right\}$ was $P$-non-negligible, the displays above would imply the contradiction $P(Y_d \in A \mid D_1 > D_0, X = x) < 0$. Moreover, it is bounded by $K_{d,x} = \frac{1/p_{x,d}}{p_{d,x,d}/p_{x,d} - p_{d,x,1-d}/p_{x,1-d}}$
		
		\item Lemma \ref{Lemma: weak convergence, Donsker results, Donsker result for F_dx} shows that under assumptions \ref{Assumption: setting}, \ref{Assumption: cost function}, and \ref{Assumption: parameter, function of moments}, $\mathcal{F}_{d,x}$ is $P$-Donsker and $\sup_{f \in \mathcal{F}_{d,x}} \lvert P(f)\rvert < \infty$ for $d = 1,0$, and
		
		\item The set $(\mathcal{F}_1 \times \mathcal{F}_0, L_2)$ and its subset $\Phi_c \cap (\mathcal{F}_c \times \mathcal{F}_c^c)$ are complete.
		\begin{enumerate}[label=(\roman*)]
			\item If assumption \ref{Assumption: cost function} \ref{Assumption: cost function, smooth costs} holds, then lemma \ref{Lemma: weak convergence, completeness, smooth costs c-concave functions} shows these sets are complete. 
			\item If assumption \ref{Assumption: cost function} \ref{Assumption: cost function, CDF} holds, then lemma \ref{Lemma: weak convergence, completeness, indicator of convex set costs c-concave functions} shows these sets are complete.
		\end{enumerate}
	\end{enumerate}
	It follows from the chain rule that $\theta^L$ and $\theta^H$ are Hadamard directionally differentiable with the claimed directional derivatives. 
	
	Now suppose assumptions \ref{Assumption: setting}, \ref{Assumption: cost function}, \ref{Assumption: parameter, function of moments}, and \ref{Assumption: full differentiability} hold. Lemma \ref{Lemma: Hadamard differentiability, optimal transport, full differentiability} implies $\theta^L$ and $\theta^H$ are fully Hadamard differentiable at $(P_{1 \mid x}, P_{0 \mid x})$ tangentially to
	\begin{equation*}
		\mathbb{D}_{T, Full, x} = \Big(\ell_{\mathcal{Y}_{1, x}}^\infty(\mathcal{F}_{1,x}) \times \ell_{\mathcal{Y}_{0, x}}^\infty(\mathcal{F}_{0,x}\Big) \cap \Big(\mathcal{C}(\mathcal{F}_{1,x}, L_{2,P}) \times \mathcal{C}(\mathcal{F}_{0,x}, L_{2,P})\Big)
	\end{equation*}
	with derivatives given by the same expressions.
\end{proof}

We can now show the differentiability properties of $T_2$. 

\begin{restatable}[$T_2$ is Hadamard differentiable]{lemma}{lemmaTTwoIsHadamardDifferentiable}
	\label{Lemma: Hadamard differentiability, T2 is directionally or fully Hadamard differentiable}
	\singlespacing
	Let $\mathbb{D}_{Tan}$ and $\mathbb{D}_{Tan,Full}$ be given by 
	\begin{align*}
		\mathbb{D}_{Tan} &= \prod_{m=1}^M \mathcal{C}(\mathcal{F}_{1, x_m}, L_{2,P}) \times \mathcal{C}(\mathcal{F}_{0, x_m}, L_{2,P}) \times \mathbb{R}^{K_1} \times \mathbb{R}^{K_0}  \times \mathbb{R} \\
		\mathbb{D}_{Tan, Full} &= \prod_{m=1}^M \Big(\ell_{\mathcal{Y}_{1, x_m}}^\infty(\mathcal{F}_{1, x_m}) \times \ell_{\mathcal{Y}_{0, x_m}}^\infty(\mathcal{F}_{0, x_m})\Big) \cap \Big(\mathcal{C}(\mathcal{F}_{1, x_m}, L_{2,P}) \times \mathcal{C}(\mathcal{F}_{0, x_m}, L_{2,P})\Big) \times \mathbb{R}^{K_1} \times \mathbb{R}^{K_0} \times \mathbb{R}
	\end{align*}
	and define
	\begin{align*}
		&T_2 : \prod_{m=1}^M \ell^\infty(\mathcal{F}_{1,x}) \times \ell^\infty(\mathcal{F}_{0,x})\times \mathbb{R}^{K_1} \times \mathbb{R}^{K_0} \times \mathbb{R} \rightarrow \prod_{m=1}^M \mathbb{R} \times \mathbb{R} \times \mathbb{R}^{K_1} \times \mathbb{R}^{K_0} \times \mathbb{R}, \\
		&T_2(\{P_{1 \mid x}, P_{0 \mid x}, \eta_{1,x}, \eta_{0,x}, s_x\}_{x\in \mathcal{X}})  = \left(\{\theta^L(P_{1 \mid x}, P_{0 \mid x}), \theta^H(P_{1 \mid x}, P_{0 \mid x}), \eta_{1,x}, \eta_{0,x}, s_x\}_{x \in \mathcal{X}}\right)
	\end{align*}
	
	Under assumptions \ref{Assumption: setting}, \ref{Assumption: cost function}, and \ref{Assumption: parameter, function of moments}, $T_2$ is Hadamard directionally differentiable at \\ $T_1(P) = (\{P_{1 \mid x}, P_{0 \mid x}, s_{x}, \eta_{1,x}, \eta_{0,x}\}_{x\in \mathcal{X}})$ tangentially to $\mathbb{D}_{Tan}$, with derivative
	\begin{align*}
		&T_{2,T_1(P)}' : \mathbb{D}_{Tan} \rightarrow \prod_{m=1}^M \mathbb{R} \times \mathbb{R} \times \mathbb{R}^{K_1} \times \mathbb{R}^{K_0} \times \mathbb{R} \\
		&T_{2,T_1(P)}'\left(\{H_{1,x}, H_{0,x}, h_{\eta_1, x}, h_{\eta_0, x}, h_{s, x}\}_{x \in \mathcal{X}}\right) \\
		&\hspace{1 cm} = \left(\left\{\theta_{(P_{1 \mid x}, P_{0 \mid x})}^{L\prime}(H_{1,x}, H_{0,x}), \theta_{(P_{1 \mid x}, P_{0 \mid x})}^{H\prime}(H_{1,x}, H_{0,x}), h_{\eta_1, x}, h_{\eta_0, x}, h_{s, x}\right\}_{x \in \mathcal{X}}\right)
	\end{align*}
	
	If assumption \ref{Assumption: full differentiability} also holds, then $T_2$ is fully Hadamard differentiable at $T_1(P)$ tangentially to $\mathbb{D}_{Tan,Full}$, with derivative $T_{2,T_1(P)} : \mathbb{D}_{Tan, Full} \rightarrow \prod_{m=1}^M \mathbb{R} \times \mathbb{R} \times \mathbb{R}^{K_1} \times \mathbb{R}^{K_0} \times \mathbb{R}$ given by the same expression.
\end{restatable}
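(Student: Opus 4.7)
The plan is to observe that $T_2$ acts blockwise across $x \in \mathcal{X}$: the $x$-th output block $(\theta^L(P_{1\mid x}, P_{0\mid x}), \theta^H(P_{1\mid x}, P_{0\mid x}), \eta_{1,x}, \eta_{0,x}, s_x)$ depends only on the $x$-th input block $(P_{1\mid x}, P_{0\mid x}, \eta_{1,x}, \eta_{0,x}, s_x)$. Consequently, $T_2$ is the Cartesian product (over $x \in \mathcal{X}$) of simpler maps, and I can apply the stacking result of Lemma \ref{Lemma: Hadamard differentiability, stacking functions} after establishing Hadamard (directional) differentiability of each block.

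First I would analyze a single $x$-block. The coordinates $(\eta_{1,x}, \eta_{0,x}, s_x) \in \mathbb{R}^{K_1} \times \mathbb{R}^{K_0} \times \mathbb{R}$ are passed through unchanged; these identity projections are continuous linear maps and hence fully Hadamard differentiable at every point tangentially to the whole space, equal to themselves as derivatives. For the two non-trivial coordinates, Lemma \ref{Lemma: Hadamard differentiability, theta^L(.) and theta^H(.) are directionally or fully Hadamard differentiable} directly gives that $\theta^L(\cdot,\cdot)$ and $\theta^H(\cdot,\cdot)$ are Hadamard directionally differentiable at $(P_{1 \mid x}, P_{0 \mid x})$ tangentially to $\mathcal{C}(\mathcal{F}_{1,x}, L_{2,P}) \times \mathcal{C}(\mathcal{F}_{0,x}, L_{2,P})$ with the derivatives written out in the statement of the present lemma, and fully Hadamard differentiable tangentially to $\mathbb{D}_{Tan, Full, x}$ when Assumption \ref{Assumption: full differentiability} also holds. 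Stacking the five coordinate maps of the $x$-block via Lemma \ref{Lemma: Hadamard differentiability, stacking functions} then yields Hadamard (directional) differentiability of the $x$-block at $(P_{1 \mid x}, P_{0 \mid x}, \eta_{1,x}, \eta_{0,x}, s_x)$, tangentially to $\mathcal{C}(\mathcal{F}_{1,x}, L_{2,P}) \times \mathcal{C}(\mathcal{F}_{0,x}, L_{2,P}) \times \mathbb{R}^{K_1} \times \mathbb{R}^{K_0} \times \mathbb{R}$ in the directional case, and tangentially to $\mathbb{D}_{Tan, Full, x} \times \mathbb{R}^{K_1} \times \mathbb{R}^{K_0} \times \mathbb{R}$ in the fully differentiable case. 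The derivative of each $x$-block has precisely the form displayed in the lemma.

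Finally, I would stack across $x \in \mathcal{X}$. Because each $x$-block map depends only on its own input coordinates, I can trivially extend each to the full product domain by composing with coordinate projections (which are continuous linear, hence fully Hadamard differentiable and equal to their own derivatives). Applying Lemma \ref{Lemma: Hadamard differentiability, stacking functions} once more gives Hadamard directional differentiability of $T_2$ at $T_1(P)$ tangentially to $\mathbb{D}_{Tan}$, with derivative equal to the componentwise stacking of the block derivatives, matching the expression in the statement. Under Assumption \ref{Assumption: full differentiability}, the same argument gives full Hadamard differentiability tangentially to $\mathbb{D}_{Tan, Full}$, using that the full-differentiability tangent set factorizes as $\prod_{m} \mathbb{D}_{Tan, Full, x_m} \times \mathbb{R}^{K_1} \times \mathbb{R}^{K_0} \times \mathbb{R}$ by construction.

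The only mildly delicate point is the bookkeeping of tangent spaces: one needs that the product tangent set on the domain side decomposes cleanly so that each block's tangent space is exactly a factor, and this is built into the definitions of $\mathbb{D}_{Tan}$ and $\mathbb{D}_{Tan, Full}$, so the step is routine. The substantive content is already contained in Lemma \ref{Lemma: Hadamard differentiability, theta^L(.) and theta^H(.) are directionally or fully Hadamard differentiable}; the present lemma is essentially a packaging step and I do not expect any serious obstacle beyond verifying that the stacking lemma applies with the correct tangent sets.
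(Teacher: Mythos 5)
Your proposal is correct and follows the same structure as the paper's proof: both invoke Lemma \ref{Lemma: Hadamard differentiability, theta^L(.) and theta^H(.) are directionally or fully Hadamard differentiable} for the $\theta^L$ and $\theta^H$ coordinates, note that the remaining coordinates are identity maps and hence fully Hadamard differentiable, and then apply the stacking result of Lemma \ref{Lemma: Hadamard differentiability, stacking functions}. Your write-up is somewhat more explicit about the blockwise-across-$x$ structure and the composition with coordinate projections, but the argument is the same.
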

\begin{proof}
	\singlespacing
	
	Lemma \ref{Lemma: Hadamard differentiability, theta^L(.) and theta^H(.) are directionally or fully Hadamard differentiable} shows that under assumptions \ref{Assumption: setting}, \ref{Assumption: cost function}, and \ref{Assumption: parameter, function of moments}, $\theta^L(\cdot)$ and $\theta^H(\cdot)$ are Hadamard directionally differentiable at $(P_{1 \mid x}, P_{0 \mid x})$ tangentially to $\mathcal{C}(\mathcal{F}_{1,x}, L_{2,P}) \times \mathcal{C}(\mathcal{F}_{0,x}, L_{2,P})$ for each $x \in \mathcal{X}$. If assumption \ref{Assumption: full differentiability} also holds, lemma \ref{Lemma: Hadamard differentiability, theta^L(.) and theta^H(.) are directionally or fully Hadamard differentiable} shows these derivatives are linear on the subspace $\mathbb{D}_{Tan, Full}$, and hence $\theta^L(\cdot)$ and $\theta^H(\cdot)$ are fully Hadamard differentiable tangentially to $\mathbb{D}_{Tan, Full}$. The other coordinates are the identity mapping, which is fully Hadamard differentiable. Apply lemma \ref{Lemma: Hadamard differentiability, stacking functions} to obtain the result.
\end{proof}

\subsection{Expectations, $T_3(\{\theta_x^L, \theta_x^H, \eta_{1,x}, \eta_{0,x}, s_x\}_{x\in\mathcal{X}}) = (\theta^L, \theta^H, \eta)$}

\begin{restatable}[]{lemma}{lemmaTThreeIsFullyHadamardDifferentiable}
	\label{Lemma: Hadamard differentiability, T3 is fully Hadamard differentiable}
	\singlespacing

	Define
	\begin{align*}
		&T_3 : \prod_{m=1}^M \mathbb{R} \times \mathbb{R} \times \mathbb{R}^{K_1} \times \mathbb{R}^{K_0} \times \mathbb{R} \rightarrow \mathbb{R} \times \mathbb{R} \times \mathbb{R}^{K_1} \times \mathbb{R}^{K_0} \\
		&T_3(\{\theta_x^L, \theta_x^H, \eta_{1,x}, \eta_{0,x}, s_x\}_{x\in\mathcal{X}}) = \left(\sum_{x \in \mathcal{X}} s_x \theta_x^L, \sum_{x \in \mathcal{X}} s_x \theta_x^H, \sum_{x \in \mathcal{X}} s_x \eta_{1,x}, \sum_{x \in \mathcal{X}} s_x \eta_{0,x}\right)
	\end{align*}
	
	$T_3$ is fully (Hadamard) differentiable at any $V = (\{\theta_x^L, \theta_x^H, \eta_{1,x}, \eta_{0,x}, s_x\}_{x\in\mathcal{X}}) \in \prod_{m=1}^M \mathbb{R} \times \mathbb{R} \times \mathbb{R} \times \mathbb{R}^{K_1} \times \mathbb{R}^{K_0}$ tangentially to $\prod_{m=1}^M \mathbb{R} \times \mathbb{R} \times \mathbb{R}^{K_1} \times \mathbb{R}^{K_0} \times \mathbb{R}$ with derivative 
	\begin{align*}
		&T_{3,V}' : \prod_{m=1}^M \mathbb{R} \times \mathbb{R} \times \mathbb{R}^{K_1} \times \mathbb{R}^{K_0} \times \mathbb{R} \rightarrow \mathbb{R} \times \mathbb{R} \times \mathbb{R}^{K_1} \times \mathbb{R}^{K_0} \\
		&T_{3,V}'(\{h_x^L, h_x^H, h_{\eta_1, x}, h_{\eta_0, x}, h_{s,x}\}_{x \in \mathcal{X}}) \\
		&\hspace{1 cm} = \left(\sum_{x \in \mathcal{X}} s_x h_x^L + h_{s,x} \theta^L(x), \sum_{x \in \mathcal{X}} s_x h_x^H + h_{s,x} \theta^H(x), \sum_{x \in \mathcal{X}} s_x h_{\eta_1,x} + h_{s,x} \eta_{1,x}, \sum_{x \in \mathcal{X}} s_x h_{\eta_0,x} + h_{s,x} \eta_{0,x}\right)
	\end{align*}
\end{restatable}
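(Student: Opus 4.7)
The plan is to observe that $T_3$ is a map between finite-dimensional Euclidean spaces whose coordinates are polynomials of degree two in the input coordinates. For maps between finite-dimensional normed spaces, full Hadamard differentiability is equivalent to ordinary (Fr\'echet) differentiability, and the Hadamard derivative coincides with the classical total derivative. So the task reduces to verifying a routine product-rule computation.

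More concretely, I would verify the definition of Hadamard differentiability directly. Let $V = (\{\theta_x^L, \theta_x^H, \eta_{1,x}, \eta_{0,x}, s_x\}_{x \in \mathcal{X}})$, fix a sequence $t_n \downarrow 0$, and let
\[
h_n = (\{h_{x,n}^L, h_{x,n}^H, h_{\eta_1,x,n}, h_{\eta_0,x,n}, h_{s,x,n}\}_{x \in \mathcal{X}}) \to h = (\{h_x^L, h_x^H, h_{\eta_1,x}, h_{\eta_0,x}, h_{s,x}\}_{x \in \mathcal{X}}).
\]
Look at the first coordinate of $T_3(V + t_n h_n)$:
\[
\sum_{x} (s_x + t_n h_{s,x,n})(\theta_x^L + t_n h_{x,n}^L) = \sum_{x} s_x \theta_x^L + t_n \sum_{x} (s_x h_{x,n}^L + h_{s,x,n}\theta_x^L) + t_n^2 \sum_{x} h_{s,x,n} h_{x,n}^L.
\]
Dividing by $t_n$ and subtracting the candidate derivative, the difference is $t_n \sum_x h_{s,x,n}h_{x,n}^L + \sum_x s_x (h_{x,n}^L - h_x^L) + \sum_x (h_{s,x,n} - h_{s,x})\theta_x^L$, which vanishes as $n \to \infty$ because $t_n \downarrow 0$, the factors $h_{s,x,n}$, $h_{x,n}^L$ are bounded (being convergent), and each coordinate of $h_n$ converges to the corresponding coordinate of $h$. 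The same calculation applied coordinate-by-coordinate handles the second scalar output and the two vector outputs $\sum_x s_x \eta_{d,x}$ (with the obvious componentwise interpretation).

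Since the argument works for arbitrary sequences $t_n \downarrow 0$ and $h_n \to h$ with $h$ ranging over the \emph{entire} space $\prod_{m=1}^M \mathbb{R} \times \mathbb{R} \times \mathbb{R}^{K_1} \times \mathbb{R}^{K_0} \times \mathbb{R}$, and the candidate derivative $T_{3,V}'$ is manifestly linear and continuous in $h$, full Hadamard differentiability at $V$ tangentially to the whole space follows, with the stated formula for $T_{3,V}'$. There is essentially no obstacle here: the only thing to check is that the quadratic remainder $t_n \sum_x h_{s,x,n} h_{x,n}^L$ (and its analogues for the other coordinates) is $o(1)$, which is immediate from $t_n \to 0$ together with boundedness of convergent sequences in $\mathbb{R}$.
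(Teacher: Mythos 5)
Your proof is correct. The approach is more elementary than the paper's: you verify the definition of Hadamard directional differentiability directly, expanding each bilinear output coordinate, isolating the quadratic remainder $t_n^2 \sum_x h_{s,x,n} h_{x,n}^L$, and showing it is $o(t_n)$ because convergent sequences in $\mathbb{R}$ are bounded. The paper instead observes that each coordinate of $T_3$ is an inner product $\langle s, \cdot \rangle$ evaluated at its arguments, cites the standard fact that the bilinear map $IP(r_1, r_2) = \langle r_1, r_2 \rangle$ is fully Hadamard differentiable with derivative $IP_{(r_1,r_2)}'(h_1,h_2) = \langle r_1, h_2 \rangle + \langle h_1, r_2 \rangle$, and then invokes the stacking lemma (lemma \ref{Lemma: Hadamard differentiability, stacking functions}) to assemble the four coordinates into a single map. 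Your route is self-contained and makes the product-rule cancellation transparent; the paper's route is more modular and avoids re-deriving differentiability of a bilinear form, reusing machinery already established for the other $T_j$ maps. Both deliver the same formula for $T_{3,V}'$, and both correctly note that the derivative is linear and continuous in $h$, so full (rather than merely directional) Hadamard differentiability follows.
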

\begin{proof}
	\singlespacing
	
	The inner product 
	\begin{align*}
		&IP : \mathbb{R}^M \times \mathbb{R}^M \rightarrow \mathbb{R}, &&IP(r_1, r_2) = \langle r_1, r_2 \rangle = \sum_{m=1}^M r_1^{(m)} r_2^{(m)}
	\end{align*}
	is fully Hadamard differentiable at any $(r_1,r_2) \in \mathbb{R}^M \times \mathbb{R}^M$ tangentially to $\mathbb{R}^M \times \mathbb{R}^M$ with derivative
	\begin{align*}
		&IP_{(r_1,r_2)}' : \mathbb{R}^M \times \mathbb{R}^M \rightarrow \mathbb{R}, \\
		&IP_{(r_1,r_2)}'(h_1,h_2) = \langle r_1, h_2 \rangle + \langle h_1, r_2 \rangle = \sum_{m=1}^M r_1^{(m)} h_2^{(m)} + h_1^{(m)} r_2^{(m)}
	\end{align*}
	Apply lemma \ref{Lemma: Hadamard differentiability, stacking functions} to obtain the result.
\end{proof}

\subsection{Optimization over $t \in [\theta^L,\theta^H]$: $T_4(\theta^L,\theta^H,\eta) = (\gamma^L,\gamma^H)$}

\begin{restatable}[]{lemma}{lemmaTFourIsFullyHadamardDifferentiable}
	\label{Lemma: Hadamard differentiability, T4 is fully Hadamard differentiable}
	\singlespacing
	
	Let $g^L , g^H : \mathbb{R} \times \mathbb{R} \times \mathbb{R}^{K_1} \times \mathbb{R}^{K_0} \rightarrow \mathbb{R}$ be as defined in assumption \ref{Assumption: parameter, function of moments}:
	\begin{align*}
		&g^L(\theta^L, \theta^H, \eta_1, \eta_0) = \inf_{t \in [\theta^L, \theta^H]} g(t, \eta_1, \eta_0), &&g^H(\theta^L, \theta^H, \eta_1, \eta_0) = \sup_{t \in [\theta^L, \theta^H]} g(t, \eta_1, \eta_0)
	\end{align*}
	Define 
	\begin{align*}
		&T_4 : \mathbb{R} \times \mathbb{R} \times \mathbb{R}^{K_1} \times \mathbb{R}^{K_0} \rightarrow \mathbb{R} \times \mathbb{R} \\
		&T_4(\theta^L, \theta^H, \eta_1, \eta_0) = \left(g^L(\theta^L, \theta^H, \eta_1, \eta_0), g^H(\theta^L, \theta^H, \eta_1, \eta_0) \right)
	\end{align*}
	 
	Under assumption \ref{Assumption: parameter, function of moments}, $g^L$ and $g^H$ are continuously differentiable at $(\theta^L, \theta^H, \eta_1,\eta_0) = T_3(T_2(T_1(P)))$ with gradients
	\begin{align*}
		&\nabla g^L = \nabla g^L(\theta^L,\theta^H, \eta_1,\eta_0) \in \mathbb{R}^{2+K_1+K_0}, &&\nabla g^H = \nabla g^H(\theta^L, \theta^H, \eta_1, \eta_0) \in \mathbb{R}^{2+K_1+K_0}
	\end{align*}
	Therefore $T_4$ is fully Hadamard differentiable at $(\theta^L, \theta^H, \eta_1,\eta_0)$ tangentially to $\mathbb{R} \times \mathbb{R} \times \mathbb{R}^{K_1} \times \mathbb{R}^{K_0}$, with derivative
	\begin{align*}
		&T_{4, T_3(T_2(T_1(P)))}' : \mathbb{R} \times \mathbb{R} \times \mathbb{R}^{K_1} \times \mathbb{R}^{K_0} \rightarrow \mathbb{R} \times \mathbb{R} \\
		&T_{4, T_3(T_2(T_1(P)))}'(h^L, h^H, h_{\eta_1}, h_{\eta_0}) \\
		&\hspace{1 cm} = \left(\left\langle \nabla g^L, (h^L, h^H, h_{\eta_1}, h_{\eta_0})\right\rangle, \left\langle \nabla g^H, (h^L, h^H, h_{\eta_1}, h_{\eta_0})\right\rangle \right)
	\end{align*}
	
\end{restatable}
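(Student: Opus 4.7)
The plan is to combine assumption \ref{Assumption: parameter, function of moments} \ref{Assumption: parameter, function of moments, sup and inf of g are differentiable} with two standard facts: continuous differentiability of a real-valued map on a finite-dimensional Euclidean space implies full Hadamard differentiability at that point, and stacking finitely many fully Hadamard differentiable maps preserves full differentiability. First I would note that the continuous differentiability of $g^L$ and $g^H$ at $(\theta^L, \theta^H, \eta_1, \eta_0)$ is exactly the content of assumption \ref{Assumption: parameter, function of moments} \ref{Assumption: parameter, function of moments, sup and inf of g are differentiable}, once one checks that this point equals $T_3(T_2(T_1(P)))$, which is immediate from the definitions of $T_1$, $T_2$, and $T_3$ in the preceding lemmas. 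Hence the gradients $\nabla g^L, \nabla g^H \in \mathbb{R}^{2 + K_1 + K_0}$ exist.

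The second step invokes the standard mean value argument: for any $f : \mathbb{R}^n \rightarrow \mathbb{R}$ that is continuously differentiable at $x_0$, and any sequences $t_n \downarrow 0$ and $h_n \rightarrow h$ in $\mathbb{R}^n$, the mean value theorem produces $\tau_n \in [0,1]$ with
\[
\frac{f(x_0 + t_n h_n) - f(x_0)}{t_n} = \langle \nabla f(x_0 + \tau_n t_n h_n), h_n \rangle,
\]
and continuity of $\nabla f$ at $x_0$ together with $h_n \rightarrow h$ forces the right-hand side to converge to $\langle \nabla f(x_0), h \rangle$. This is the defining property of full Hadamard differentiability at $x_0$ tangentially to $\mathbb{R}^n$, with linear derivative $h \mapsto \langle \nabla f(x_0), h \rangle$. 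Apply this to each of $g^L$ and $g^H$ separately.

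Finally, stacking the two scalar derivatives via lemma \ref{Lemma: Hadamard differentiability, stacking functions} produces $T_{4, T_3(T_2(T_1(P)))}'$ in the claimed form, mapping $(h^L, h^H, h_{\eta_1}, h_{\eta_0}) \in \mathbb{R}^{2+K_1+K_0}$ to $(\langle \nabla g^L, (h^L, h^H, h_{\eta_1}, h_{\eta_0})\rangle, \langle \nabla g^H, (h^L, h^H, h_{\eta_1}, h_{\eta_0})\rangle)$.

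The main obstacle here is essentially nonexistent: this lemma is a packaging step that records how assumption \ref{Assumption: parameter, function of moments} \ref{Assumption: parameter, function of moments, sup and inf of g are differentiable} feeds into the final link of the chain rule used in the weak convergence result. The only point that requires any care is the notational bookkeeping confirming that the point of differentiation coincides with the image $T_3(T_2(T_1(P)))$, which follows by tracing through the constructions. The heavy lifting for the motivating assumption --- verifying it for specific choices of $g$ --- is discussed in the main text via the implicit function theorem argument and is not part of this lemma.
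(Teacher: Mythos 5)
Your proposal is correct and takes essentially the same approach as the paper; the paper's own proof is simply a one-line remark that continuous differentiability of $g^L$ and $g^H$ is assumption \ref{Assumption: parameter, function of moments} \ref{Assumption: parameter, function of moments, sup and inf of g are differentiable} and then says ``the result follows,'' so you have merely spelled out the standard mean-value argument and the appeal to lemma \ref{Lemma: Hadamard differentiability, stacking functions} that the paper leaves implicit.
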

\begin{proof}
	\singlespacing
	
	Assumption \ref{Assumption: parameter, function of moments} \ref{Assumption: parameter, function of moments, sup and inf of g are differentiable} is that $g^L$ and $g^H$ are continuously differentiable. The result follows.
\end{proof}

\begin{remark}
	\label{Remark: zero derivatives when applying the delta method for functions of moments}
	\singlespacing
	
	This remark discusses the derivatives of $g^L$ and $g^H$. In particular, note that even if $\argmin_{t \in [\theta^L, \theta^H]} g(t,\eta)$ is within $(\theta^L, \theta^H)$, the derivative of $g^L$ and $g^H$ are unlikely to be zero because the derivatives with respect to $\eta$ will not be zero.
	
	Consider $g^H(\theta^L, \theta^H, \eta) = \sup_{t \in [\theta^L, \theta^H]} g(t, \eta)$. The maximization problem has Lagrangian
	\begin{align*}
		\mathcal{L}(t, \lambda, \theta^L, \theta^H, \eta) = g(t, \eta) + \lambda^L (t - \theta^L) + \lambda^H(\theta^H - t)
	\end{align*}
	where $\lambda = (\lambda^L, \lambda^H)$ are Lagrange multipliers. Let $g_\theta(t, \eta) = \frac{\partial g}{\partial \theta}(t,\eta)$. Suppose there is unique solution $(\theta^*, \lambda^*)$. The necessary KKT conditions imply that
	\begin{align*}
		&g_\theta(\theta^*, \eta) + \lambda^{L*} - \lambda^{H*} = 0 \\
		&\theta^* - \theta^{L*} \geq 0 \text{ w.e. if } \lambda^{L*} > 0 \\
		&\theta^{H^*} - \theta^* \geq 0 \text{ w.e. if } \lambda^{H*} > 0 \\
		&\lambda^{L*}, \lambda^{H*} \geq 0
	\end{align*}
	Notice that at most one of either $\theta^* = \theta^L$ or $\theta^* = \theta^H$ is true. If $\theta^* = \theta^L$, then $\lambda^L > 0$ and $\lambda^H = 0$, and the first KKT implies $-g_\theta(\theta^L, \eta) = \lambda^L$. Similarly, if $\theta^* = \theta^H$ is true then $\lambda^L = 0$ and $g_\theta(\theta^H, \eta) = \lambda^H$. 
	
	Now use assumption \ref{Assumption: parameter, function of moments} \ref{Assumption: parameter, function of moments, sup and inf of g are differentiable} to apply the envelope theorem, finding that  
	\begin{align*}
		\nabla g^L(\theta^L, \theta^H, \eta)^\intercal &= 
		\begin{pmatrix}
			\frac{\partial \mathcal{L}}{\partial \theta^L}(\theta^*, \lambda^*, \theta^L, \theta^H, \eta) & \frac{\partial \mathcal{L}}{\partial \theta^H}(\theta^*, \lambda^*, \theta^L, \theta^H, \eta) &\frac{\partial \mathcal{L}}{\partial \eta}(\theta^*, \lambda^*, \theta^L, \theta^H, \eta) 
		\end{pmatrix} \\
		&= 
		\begin{pmatrix}
			-\lambda^{L*} & \lambda^{H*} & g_\eta(\theta^*, \eta)
		\end{pmatrix}
	\end{align*}
	The linearization of $g^L$ at $(\theta^L, \theta^H, \eta)$ is the function $g_{(\theta^L, \theta^H, \eta)}^{L'} : \mathbb{R}\times \mathbb{R} \times \mathbb{R}^{d_1 + d_2} \rightarrow \mathbb{R}$ given by 
	\begin{align*}
		g_{(\theta^L, \theta^H, \eta)}^{L'}(h_L, h_H, h_\eta) &= \nabla g^L(\theta^L, \theta^H, \eta)^\intercal h = 
		\begin{cases}
			g_\theta(\theta^*, \eta) h_L + g_\eta(\theta^*, \eta)^{\intercal} h_\eta &\text{ if } \theta^* = \theta^L \\
			g_\theta(\theta^*, \eta) h_H + g_\eta(\theta^*, \eta)^{\intercal} h_\eta &\text{ if } \theta^* = \theta^ H\\
			g_\eta(\theta^*, \eta)^{\intercal} h_\eta &\text{ if } \theta^* \in (\theta^L, \theta^H)
		\end{cases} \\
		&= 
		\begin{pmatrix}
			g_\theta(\theta^*, \eta) \mathbbm{1}\{\theta^* = \theta^L\} & g_\theta(\theta^*, \eta) \mathbbm{1}\{\theta^* = \theta^H\} & g_\eta(\theta^*,\eta)^\intercal
		\end{pmatrix}
		\begin{pmatrix}
			h_L \\
			h_H \\
			h_\eta
		\end{pmatrix}
	\end{align*}
	where $g_\eta(t,\eta) = \frac{\partial g}{\partial \eta}(t,\eta)$. In particular, notice that the first order condition $g_\theta(\theta^*, \eta) = 0$, which holds true when $\theta^* \in (0,1)$, does \textit{not} imply this linearization is the zero map, as long as $g_\eta(\theta^*, \eta)$ is not zero. 
\end{remark}

\subsection{The map $T(P) = (\gamma^L,\gamma^H)$, consistency, and weak convergence}

\begin{restatable}[]{lemma}{lemmaConsistency}
	\label{Lemma: consistency}
	\singlespacing
	
	Let $T_1$, $T_2$, $T_3$, and $T_4$ be as defined in lemmas \ref{Lemma: Hadamard differentiability, T1 is fully Hadamard differentiable}, \ref{Lemma: Hadamard differentiability, T2 is directionally or fully Hadamard differentiable}, \ref{Lemma: Hadamard differentiability, T3 is fully Hadamard differentiable}, and \ref{Lemma: Hadamard differentiability, T4 is fully Hadamard differentiable} respectively. Let 
	\begin{align*}
		\left(\left\{\hat{P}_{1 \mid x}, \hat{P}_{0 \mid x}, \hat{\eta}_{1,x}, \hat{\eta}_{0,x}, \hat{s}_x\right\}_{x \in \mathcal{X}}\right) &= T_1(\mathbb{P}_n) \\
		\left(\{\hat{\theta}_x^L, \hat{\theta}_x^H, \hat{\eta}_{1,x}, \hat{\eta}_{0,x}, \hat{s}_x\}_{x \in \mathcal{X}}\right) &= T_2(T_1(\mathbb{P}_n)) \\
		(\hat{\theta}^L, \hat{\theta}^H, \hat{\eta}) &= T_3(T_2(T_1(\mathbb{P}_n))), \\
		(\hat{\gamma}^L, \hat{\gamma}^H) &= T_4(T_3(T_2(T_1(\mathbb{P}_n))))
	\end{align*}
	be the empirical analogue estimators. If assumptions \ref{Assumption: setting}, \ref{Assumption: cost function}, and \ref{Assumption: parameter, function of moments} hold, then each of these estimators are consistent.
	
\end{restatable}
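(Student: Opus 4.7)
The plan is to apply the continuous mapping theorem to the composed map $T = T_4 \circ T_3 \circ T_2 \circ T_1$ evaluated at $\mathbb{P}_n$, so that consistency of each intermediate estimator follows by running the same argument on the relevant prefix composition. The input is Lemma \ref{Lemma: weak convergence, Donsker results, large Donsker set F is Donsker}, which shows that $\mathcal{F}$ is $P$-Donsker. Tightness of the weak limit $\mathbb{G}$ in $\ell^\infty(\mathcal{F})$ then implies $\sqrt{n}\lVert \mathbb{P}_n - P \rVert_\mathcal{F} = O_P(1)$, and hence $\lVert \mathbb{P}_n - P \rVert_\mathcal{F} \overset{p}{\to} 0$.

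Next I would verify that each $T_k$ is continuous at the relevant argument. $T_1$ is fully Hadamard differentiable at $P$ by Lemma \ref{Lemma: Hadamard differentiability, T1 is fully Hadamard differentiable}, and full Hadamard differentiability implies continuity (take $t_n = \lVert \mathbb{P}_n - P\rVert^{1/2}$ and $h_n = (\mathbb{P}_n - P)/t_n \to 0$ in the defining limit, using that $0$ lies in the tangent space $\ell^\infty(\mathcal{F})$ and the derivative is linear). Equivalently, $T_1$ is a rational combination of evaluation maps whose denominators are strictly positive at $P \in \mathbb{D}_C$ under assumption \ref{Assumption: setting}. $T_2$ is continuous because its nontrivial coordinates $\theta^L(\cdot,\cdot)$ and $\theta^H(\cdot,\cdot)$ are uniformly continuous by Lemma \ref{Lemma: optimal transport is continuous} applied with cost functions $c_L$ and $c_H$, while the remaining coordinates are the identity. $T_3$ is a polynomial on a finite-dimensional Euclidean space, and $T_4$ is continuous because $g^L$ and $g^H$ are continuously differentiable at $(\theta^L, \theta^H, \eta) = T_3(T_2(T_1(P)))$ by assumption \ref{Assumption: parameter, function of moments} \ref{Assumption: parameter, function of moments, sup and inf of g are differentiable}.

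Because the defining inequalities of $\mathbb{D}_C$ are strict at $P$ and $\lVert \mathbb{P}_n - P \rVert_\mathcal{F} \overset{p}{\to} 0$, $\mathbb{P}_n \in \mathbb{D}_C$ with probability tending to one, so $T(\mathbb{P}_n)$ is well defined on events of probability approaching one. The continuous mapping theorem (\cite{vaart1997weak} Theorem 1.3.6) then yields $T(\mathbb{P}_n) \overset{p}{\to} T(P)$, i.e. $(\hat{\gamma}^L, \hat{\gamma}^H) \overset{p}{\to} (\gamma^L, \gamma^H)$. Running the identical argument on the prefix compositions $T_1$, $T_2 \circ T_1$, and $T_3 \circ T_2 \circ T_1$ delivers the corresponding consistency statements for $\hat{P}_{d \mid x}$, $\hat{\eta}_{d, x}$, $\hat{s}_x$; for $\hat{\theta}_x^L$, $\hat{\theta}_x^H$; and for $\hat{\theta}^L$, $\hat{\theta}^H$, $\hat{\eta}$.

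The only nonroutine step is the continuity of $T_2$, since the optimal transport functional lives on an infinite-dimensional function space and is itself defined as a supremum. This obstacle is already dispatched by Lemma \ref{Lemma: optimal transport is continuous}, whose proof realizes $OT_c$ as the composition of a uniformly continuous linear sum map and a $1$-Lipschitz restricted-supremum functional, delivering uniform continuity in the sup norm. Given that input, the rest of the proof is standard continuous-mapping bookkeeping.
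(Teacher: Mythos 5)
Your proof is correct and follows essentially the same route as the paper: establish continuity of each $T_k$ at the relevant point, note $\mathbb{P}_n \overset{p}{\to} P$ in $\ell^\infty(\mathcal{F})$ from the Donsker property, and conclude via the continuous mapping theorem applied to the prefix compositions. The one small divergence is that the paper obtains continuity of all four maps uniformly from the fact that Hadamard directional differentiability tangentially to a set containing zero implies continuity (the argument in the footnote to Lemma \ref{Lemma: consistency}), whereas you invoke this only for $T_1$ and instead argue continuity of $T_2$ directly from the uniform continuity of $OT_c$ established in Lemma \ref{Lemma: optimal transport is continuous} — a slightly more elementary route, since it bypasses the directional differentiability machinery for that step. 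Your explicit remark that $\mathbb{P}_n \in \mathbb{D}_C$ with probability tending to one is also a careful point that the paper leaves implicit.
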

\begin{proof}
	\singlespacing
	
	Lemmas \ref{Lemma: Hadamard differentiability, T1 is fully Hadamard differentiable}, \ref{Lemma: Hadamard differentiability, T2 is directionally or fully Hadamard differentiable}, \ref{Lemma: Hadamard differentiability, T3 is fully Hadamard differentiable}, and \ref{Lemma: Hadamard differentiability, T4 is fully Hadamard differentiable} show that $T_1$, $T_2$, $T_3$, and $T_4$ are Hadamard (directionally) differentiable at $P$, $T_1(P)$, $T_2(T_1(P))$, and $T_3(T_2(T_1(P)))$ respectively, tangentially to sets that include zero. It follows that these functions are continuous at $P$, $T_1(P)$, $T_2(T_1(P))$, and $T_3(T_2(T_1(P)))$ respectively.\footnote{
		For normed spaces $\mathbb{D}$, $\mathbb{E}$, $\phi : \mathbb{D}_\phi \subseteq \mathbb{D} \rightarrow \mathbb{E}$ is continuous at $\theta \in \mathbb{D}_\phi$ if and only if for every sequence $\{\theta_n\}_{n=1}^\infty \subseteq \mathbb{D}_\phi \setminus\{\theta\}$ with $\theta_n \rightarrow \theta$, $\lVert \phi(\theta_n) - \phi(\theta) \rVert_\mathbb{E} \rightarrow 0$. For such a sequence $\{\theta_n\}_{n=1}^\infty$, let $t_n = \lVert \theta_n - \theta \rVert_\mathbb{D}^{1/2}$ and notice that $t_n \downarrow 0$, $h_n \coloneqq \frac{\theta_n - \theta}{t_n} \rightarrow 0 \in \mathbb{D}_0$, and $\theta + t_n h_n = \theta_n \in \mathbb{D}_\phi$ for all $n$. The definition of Hadamard directional differentiability then implies $\lVert \phi(\theta + t_n h_n) - \phi(\theta) - t_n \phi_\theta'(h) \rVert_\mathbb{E}\rightarrow 0$, while the reverse traingle inequality implies
		\begin{align*}
			&\lVert \phi(\theta + t_n h_n) - \phi(\theta) - t_n \phi_\theta'(h) \rVert_\mathbb{E} \geq \left\lvert \lVert \phi(\theta + t_n h_n) - \phi(\theta)\rVert_\mathbb{E} - t_n \lVert \phi_\theta'(h)\rVert_\mathbb{E}\right\rvert \geq \lVert \phi(\theta + t_n h_n) - \phi(\theta)\rVert_\mathbb{E} - t_n \lVert \phi_\theta'(h)\rVert_\mathbb{E} \\
			\implies &0\leq \lVert \phi(\theta + t_n h_n) - \phi(\theta)\rVert_\mathbb{E} \leq \lVert \phi(\theta + t_n h_n) - \phi(\theta) - t_n \phi_\theta'(h) \rVert_\mathbb{E} + t_n \lVert \phi_\theta'(h)\rVert_\mathbb{E} \rightarrow 0
		\end{align*}
		showing continuity at $\theta$.
	}
	Lemma \ref{Lemma: weak convergence, Donsker results, large Donsker set F is Donsker} implies that $\mathbb{P}_n \overset{p}{\rightarrow} P$ in $\ell^\infty(\mathcal{F})$, so it follows from the continuous mapping theorem that 
	\begin{align*}
		T_1(\mathbb{P}_n) &\overset{p}{\rightarrow} T_1(P) \\
		T_2(T_1(\mathbb{P}_n)) &\overset{p}{\rightarrow} T_2(T_1(P)) \\
		T_3(T_2(T_1(\mathbb{P}_n))) &\overset{p}{\rightarrow} T_3(T_2(T_1(P))) \\
		T_4(T_3(T_2(T_1(\mathbb{P}_n)))) &\overset{p}{\rightarrow} T_4(T_3(T_2(T_1(P))))
	\end{align*}
	In other words, the estimates are all consistent in their respective spaces.
\end{proof}

\begin{restatable}[$T$ is Hadamard directionally differentiable]{lemma}{lemmaTIsHadamardDifferentiable}
	\label{Lemma: Hadamard differentiability, T is Hadamard directionally differentiable}
	\singlespacing
	
	Let $\mathbb{D}_C$ be defined by \eqref{Defn: D_C, domain of the overall map}, and 
	\begin{align*}
		&T : \mathbb{D}_C \rightarrow \mathbb{R}^2, &&T(G) = T_4(T_3(T_2(T_1(G))))
	\end{align*}
	If assumptions \ref{Assumption: setting}, \ref{Assumption: cost function}, \ref{Assumption: parameter, function of moments} holds, then $T$ is Hadamard directionally differentiable at $P$ tangentially to $\mathcal{C}(\mathcal{F}, L_{2,P})$ with derivative given by
	\begin{align*}
		&T_P' : \mathcal{C}(\mathcal{F}, L_{2,P}) \rightarrow \mathbb{R}^2, &&T_P'(G) = T_{4,T_3(T_2(T_1(P)))}'(T_{3, T_2(T_1(P))}'(T_{2,T_1(P)}'(T_{1,P}'(G))))
	\end{align*}
	If assumption \ref{Assumption: full differentiability} also holds, then $T$ is fully Hadamard differentiable at $P$ tangentially to the support of $\mathbb{G}$ as defined in lemma \ref{Lemma: weak convergence, Donsker results, large Donsker set F is Donsker}. 
\end{restatable}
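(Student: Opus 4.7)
The plan is to apply the chain rule for Hadamard (directional) differentiability three times, using the four differentiability results already established for $T_1$, $T_2$, $T_3$, $T_4$. All real substance has been packed into those preparatory lemmas; what remains is essentially a bookkeeping exercise in matching tangent spaces so that the chain rule applies.

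First, I would invoke Lemma on $T_1$ being fully Hadamard differentiable, which requires only that $P \in \mathbb{D}_C$; this is immediate from Assumption \ref{Assumption: setting} and is recorded in the first step of Lemma \ref{Lemma: weak convergence, support of T_1P(G)}. For the directional part of the claim, I would then verify the key tangent-space inclusion $T_{1,P}'(\mathcal{C}(\mathcal{F}, L_{2,P})) \subseteq \mathbb{D}_{Tan}$. The $\eta$ and $s$ coordinates of $T_{1,P}'$ land in the finite-dimensional factors automatically; the content is that the $C_{d,x,P}'$ coordinates map $\mathcal{C}(\mathcal{F}, L_{2,P})$ into $\mathcal{C}(\mathcal{F}_{d,x}, L_{2,P})$, which is precisely Lemma \ref{Lemma: weak convergence, conditional distributions continuity}. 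With this inclusion in hand, Lemma \ref{Lemma: Hadamard differentiability, T2 is directionally or fully Hadamard differentiable} gives Hadamard directional differentiability of $T_2$ at $T_1(P)$ tangentially to $\mathbb{D}_{Tan}$, and since $T_2'$ lands in a finite-dimensional space, the remaining compositions with the fully Hadamard differentiable maps $T_3$ and $T_4$ (Lemmas \ref{Lemma: Hadamard differentiability, T3 is fully Hadamard differentiable} and \ref{Lemma: Hadamard differentiability, T4 is fully Hadamard differentiable}) impose no further tangent-space restrictions. Three applications of the chain rule (Lemma \ref{Lemma: Hadamard differentiability, chain rule}) then yield Hadamard directional differentiability of $T$ at $P$ tangentially to $\mathcal{C}(\mathcal{F}, L_{2,P})$ with the stated composite derivative.

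For the full-differentiability statement under Assumption \ref{Assumption: full differentiability}, I would repeat the same chain-rule pattern but along the stronger tangent set. The support of $\mathbb{G}$ is contained in $\mathcal{C}(\mathcal{F}, L_{2,P})$ by Lemma \ref{Lemma: weak convergence, Donsker results, large Donsker set F is Donsker}, and the decisive inclusion is that $T_{1,P}'$ maps this support into $\mathbb{D}_{Tan, Full}$; this is precisely part (3) of Lemma \ref{Lemma: weak convergence, support of T_1P(G)}, which was proved by combining the continuity argument above with the observation that $\sqrt{n}(C_{d,x}(\mathbb{P}_n) - C_{d,x}(P))$ is almost surely linear, annihilates constants, and ignores values outside $\mathcal{Y}_{d,x}$, together with closedness of $\ell^\infty_{\mathcal{Y}_{d,x}}(\mathcal{F}_{d,x})$ and the Portmanteau theorem. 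Under Assumption \ref{Assumption: full differentiability}, Lemma \ref{Lemma: Hadamard differentiability, T2 is directionally or fully Hadamard differentiable} upgrades $T_2$ to \emph{full} Hadamard differentiability at $T_1(P)$ tangentially to $\mathbb{D}_{Tan, Full}$; then $T_3$ and $T_4$ are already fully differentiable, and the chain rule preserves full differentiability, giving the conclusion.

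The only part of the argument with any real content is the tangent-space inclusion at the interface between $T_1$ and $T_2$, and this has been isolated into Lemmas \ref{Lemma: weak convergence, conditional distributions continuity} and \ref{Lemma: weak convergence, support of T_1P(G)}. Everything else is a mechanical composition, so the proof itself should consist of little more than a sentence invoking each of the four differentiability lemmas, a sentence invoking the two auxiliary lemmas to justify the tangent-space inclusions, and three citations of the chain rule.
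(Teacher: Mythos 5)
Your proposal is correct and follows essentially the same route as the paper: fully Hadamard differentiable $T_1$, then the key interface inclusion $T_{1,P}'(\mathcal{C}(\mathcal{F}, L_{2,P})) \subseteq \mathbb{D}_{Tan}$ via the conditional-distributions-continuity lemma, then the chain rule applied through $T_2$, $T_3$, $T_4$. The one cosmetic difference is that for the full-differentiability conclusion the paper explicitly observes that the composite derivative is linear on the support of $\mathbb{G}$ and then invokes Fang--Santos proposition 2.1, whereas you summarize this as ``the chain rule preserves full differentiability''---equivalent in substance since a composition of linear maps is linear.
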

\begin{proof}
	\singlespacing

	Lemma \ref{Lemma: Hadamard differentiability, T1 is fully Hadamard differentiable} shows that $T_1$ is fully Hadamard differentiable at any point in $\mathbb{D}_C$ tangentially to $\ell^\infty(\mathcal{F})$. Lemma \ref{Lemma: Hadamard differentiability, T2 is directionally or fully Hadamard differentiable} shows that under assumptions \ref{Assumption: setting}, \ref{Assumption: cost function}, and \ref{Assumption: parameter, function of moments}, $T_2$ is Hadamard directionally differentiable at $T_1(P)$ tangentially to 
	\begin{equation*}
		\mathbb{D}_{Tan} = \prod_{m=1}^M \mathcal{C}(\mathcal{F}_{1, x_m}, L_{2,P}) \times \mathcal{C}(\mathcal{F}_{0, x_m}, L_{2,P}) \times \mathbb{R}^{K_1} \times \mathbb{R}^{K_0}  \times \mathbb{R}
	\end{equation*}
	Lemma \ref{Lemma: weak convergence, conditional distributions continuity} implies that if $H \in \mathcal{C}(\mathcal{F}, L_{2,P})$, then $T_{1,P}'(H) \in \mathbb{D}_{Tan}$. It follows from the chain rule (lemma \ref{Lemma: Hadamard differentiability, chain rule}) that $T_2 \circ T_1$ is Hadamard directionally differentiable at $P$ tangentially to $\mathcal{C}(\mathcal{F}, L_{2,P})$. Lemma \ref{Lemma: Hadamard differentiability, T3 is fully Hadamard differentiable} shows $T_3$ is fully differentiable at any point in its domain tangentially to the entire relevant space, and lemma \ref{Lemma: Hadamard differentiability, T4 is fully Hadamard differentiable} shows $T_4$ is fully differentiable at $T_3(T_2(T_1(P)))$  tangentially to the entire relevant space. The chain rule thus implies the first claim: under assumptions \ref{Assumption: setting}, \ref{Assumption: cost function}, and \ref{Assumption: parameter, function of moments}, $T = T_4 \circ T_3 \circ T_2 \circ T_1$ is Hadamard directionally differentiable at $P$ tangentially to $\mathcal{C}(\mathcal{F}, L_{2,P})$ with the claimed derivative.
	
	If assumption \ref{Assumption: full differentiability} also holds, lemma \ref{Lemma: Hadamard differentiability, T2 is directionally or fully Hadamard differentiable} implies that $T_2$ is fully differentiable at $T_1(P)$ tangentially to $\mathbb{D}_{Tan, Full}$. Lemma \ref{Lemma: weak convergence, support of T_1P(G)} shows the support of $T_{1, P}'(\mathbb{G})$ is contained within $\mathbb{D}_{Tan, Full}$. It follows that $T_P'(\cdot) = T_{4,T_3(T_2(T_1(P)))}'(T_{3, T_2(T_1(P))}'(T_{2,T_1(P)}'(T_{1,P}'(\cdot))))$ is linear on the support of $\mathbb{G}$, and hence \cite{fang2019inference} proposition 2.1 implies $T$ is fully Hadamard differentiable at $P$ tangentially to the support of $\mathbb{G}$.
\end{proof}

\lemmaSimpleConditionsForFullDifferentiability*
\begin{proof}
	\singlespacing
	
	Note that both $c_L(y_1,y_0) = c(y_1,y_0)$ and $c_H(y_1,y_0) = -c(y_1, y_0)$ are continuously differentiable. Moreover, since the support of $P_{d \mid x}$ is $\mathcal{Y}_{d,x}$ which is a bounded interval, the support can be written as $[y_{d,x}^\ell, y_{d,x}^u]$. So for any $x \in \mathcal{X}$ and either $c \in \{c_L, c_H\}$, lemma \ref{Lemma: Kantorovich potential, sufficient conditions for uniqueness} shows that for any $(\varphi_1,\psi_1), (\varphi_2, \psi_2) \in \Psi_c(P_{1 \mid x}, P_{0 \mid x})$, there exists $s \in \mathbb{R}$ such that for all $(y_1, y_0) \in \mathcal{Y}_{1,x} \times \mathcal{Y}_{0,x}$
	\begin{align*}
		&\varphi_1(y_1) - \varphi_2(y_1) = s, &&\psi_1(y_0) - \psi_2(y_0) = -s
	\end{align*}
	and thus 
	\begin{align*}
		&\mathbbm{1}_{\mathcal{Y}_{1,x}} \times \varphi_1 = \mathbbm{1}_{\mathcal{Y}_{1,x}} \times (\varphi_2 + s), \; P\text{-a.s.} &&\text{ and } &&\mathbbm{1}_{\mathcal{Y}_{0,x}} \times \psi_1 = \mathbbm{1}_{\mathcal{Y}_{0,x}} \times (\psi_2 - s), \; P\text{-a.s.}.
	\end{align*}
	Therefore assumption \ref{Assumption: full differentiability} holds.
\end{proof}

\theoremWeakConvergenceOfEstimators*
\begin{proof}
	\singlespacing
	
	The result is an application of the functional delta method (see \cite{fang2019inference} theorem 2.1) and lemma \ref{Lemma: Hadamard differentiability, T is Hadamard directionally differentiable}. 
	
	Indeed, $\ell^\infty(\mathcal{F})$ and $\mathbb{R}^2$ are Banach spaces, and under assumptions \ref{Assumption: setting}, \ref{Assumption: cost function}, and \ref{Assumption: parameter, function of moments} lemma \ref{Lemma: Hadamard differentiability, T is Hadamard directionally differentiable} shows $T$ is Hadamard directionally differentiable at $P$ tangentially to $\mathcal{C}(\mathcal{F}, L_{2,P})$. Lemma \ref{Lemma: weak convergence, Donsker results, large Donsker set F is Donsker} shows that $\sqrt{n}(\mathbb{P}_n - P) \overset{L}{\rightarrow} \mathbb{G}$ in $\ell^\infty(\mathcal{F})$, where $\mathbb{G}$ is tight and supported in $\mathcal{C}(\mathcal{F}, L_{2,P})$. \cite{fang2019inference} theorem 2.1 gives the result that $\sqrt{n}(T(\mathbb{P}_n) - T(P)) \overset{L}{\rightarrow} T_P'(\mathbb{G})$.

	If assumption \ref{Assumption: full differentiability} holds as well as assumptions \ref{Assumption: setting}, \ref{Assumption: cost function}, and \ref{Assumption: parameter, function of moments}, then lemma \ref{Lemma: Hadamard differentiability, T is Hadamard directionally differentiable} shows that $T$ is fully differentiable on the support of $\mathbb{G}$. Since $\mathbb{G}$ is Gaussian and $T_P'$ is continuous and linear on the support of $\mathbb{G}$, $T_P'(\mathbb{G}) \in \mathbb{R}^2$ is Gaussian.
\end{proof}

	\newpage
	
	\section{Appendix: inference}

\subsection{Bootstrap}

\begin{restatable}[]{lemma}{lemmaExchangeableBootstrapSatisfiesFangSantosAssumption3}
	\label{Lemma: inference, bootstrap, exchangeable bootstrap satisfies Fang and Santos assumption 3}
	\singlespacing
	
	Suppose assumptions \ref{Assumption: setting}, \ref{Assumption: cost function}, and \ref{Assumption: parameter, function of moments} are satisfied. Let $\mathbb{P}_n^*$ be given by definition \ref{Definition: exchangeable bootstrap, nonparametric bootstrap} or \ref{Definition: exchangeable bootstrap, bayesian bootstrap}. Then \cite{fang2019inference} assumption 3 is satisfied:
	\begin{enumerate}[label=(\roman*)]
		
		\item $\mathbb{P}_n^*$ is a function of $\{Y_i, D_i, Z_i, X_i, W_i\}_{i=1}^n$, with $\{W_i\}_{i=1}^n$ independent of $\{Y_i, D_i, Z_i, X_i\}_{i=1}^n$.
		
		\item $\mathbb{P}_n^*$ satisfies $\sup_{f \in \text{BL}_1} \left\lvert E\left[ f(\sqrt{n}(\mathbb{P}_n^* - \mathbb{P}_n))\mid \{Y_i, D_i, Z_i, X_i\}_{i=1}^n \right] - E[f(\mathbb{G})] \right\rvert = o_p(1)$.
		
		\item $\sqrt{n}(\mathbb{P}_n^* - \mathbb{P}_n)$ is asymptotically measurable (jointly in $\{Y_i, D_i, Z_i, X_i, W_i\}_{i=1}^n$).
		
		\item $f(\sqrt{n}(\mathbb{P}_n^* - \mathbb{P}_n)$ is a measurable function of $\{W_i\}_{i=1}^n$ outer almost surely in $\{\{Y_i, D_i, Z_i, X_i\}_{i=1}^n$ for any continuous and bounded real-valued $f$.
	\end{enumerate}
	
\end{restatable}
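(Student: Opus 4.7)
The plan is to verify the four conditions of \cite{fang2019inference} assumption 3 in turn, with the bootstrap consistency condition (ii) being the only substantive step. Conditions (i), (iii), and (iv) are essentially bookkeeping and follow from the explicit construction of $\mathbb{P}_n^*$ in definitions \ref{Definition: exchangeable bootstrap, nonparametric bootstrap} and \ref{Definition: exchangeable bootstrap, bayesian bootstrap}.

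First I would dispose of (i) by observing that both the multinomial weights in definition \ref{Definition: exchangeable bootstrap, nonparametric bootstrap} and the normalized exponential weights in definition \ref{Definition: exchangeable bootstrap, bayesian bootstrap} are, by construction, drawn independently of the sample $\{Y_i, D_i, Z_i, X_i\}_{i=1}^n$, and $\mathbb{P}_n^*$ in \eqref{Defn: exchangeable bootstrap} is a measurable function of $(\{Y_i, D_i, Z_i, X_i\}_{i=1}^n, \{W_i\}_{i=1}^n)$. For (iv), pointwise in $\omega$ (outer-almost surely in the data), $\sqrt{n}(\mathbb{P}_n^* - \mathbb{P}_n)(f) = n^{-1/2} \sum_{i=1}^n (W_i - 1) f(Y_i, D_i, Z_i, X_i)$ depends on $\{W_i\}_{i=1}^n$ only through a finite-dimensional linear combination and is therefore measurable in the weights; composing with a continuous bounded $f$ preserves this. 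Condition (iii) is the standard asymptotic measurability statement that holds whenever $\mathcal{F}$ is $P$-Donsker and the weights are exchangeable with finite second moments, both of which are in place here (the former by lemma \ref{Lemma: weak convergence, Donsker results, large Donsker set F is Donsker}).

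The substantive step is (ii), which is the standard conditional central limit theorem for the exchangeable bootstrap indexed by a Donsker class. My plan is to invoke \cite{vaart1997weak} theorem 3.6.13 (or equivalently their theorem 3.6.2 for the multinomial bootstrap and theorem 3.6.8 for the Bayesian/weighted bootstrap) applied to the class $\mathcal{F}$ in \eqref{Defn: F large donsker set}. The hypotheses to check are: (a) $\mathcal{F}$ is $P$-Donsker, which is exactly lemma \ref{Lemma: weak convergence, Donsker results, large Donsker set F is Donsker}; (b) $\mathcal{F}$ has a square-integrable envelope, which follows because $\mathcal{F}_c \cup \mathcal{F}_c^c$ is uniformly bounded by a constant depending only on $c$ (see \eqref{Defn: F_c for smooth costs}--\eqref{Defn: F_c^c for indicator costs of convex C}) and the remaining elements of $\mathcal{F}$ are indicators or the finitely many nuisance coordinate functions $\eta_d^{(k)}$ which have finite second moment under assumption \ref{Assumption: parameter, function of moments} \ref{Assumption: parameter, function of moments, nuisance moments have finite variance}, so the envelope $F = \max_{k, d} |\eta_d^{(k)}| + \text{const}$ lies in $L^2(P)$; and (c) the weights satisfy the required exchangeability and moment conditions. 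For definition \ref{Definition: exchangeable bootstrap, nonparametric bootstrap}, $(W_1, \ldots, W_n) \sim \text{Multinomial}(n, (1/n, \ldots, 1/n))$ satisfies the conditions of van der Vaart and Wellner's theorem 3.6.2 directly. For definition \ref{Definition: exchangeable bootstrap, bayesian bootstrap}, the weights $W_i = \xi_i/(n^{-1} \sum_j \xi_j)$ with i.i.d. $\text{Exp}(1)$ variables are a standard example of the weighted bootstrap and satisfy the exchangeable-bootstrap conditions (exchangeability, $n^{-1}\sum (W_i - \bar W)^2 \to 1$ in probability, $\max_i |W_i|/\sqrt{n} \to 0$ in probability, and the appropriate Lindeberg-type condition) as verified in van der Vaart and Wellner section 3.6.2.

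The main obstacle, to the extent there is one, is ensuring that the envelope $F$ for $\mathcal{F}$ is genuinely square integrable after the indicator truncations in \eqref{Defn: F large donsker set} are included; this reduces to noting that multiplying each $f \in \mathcal{F}_{d,x}$ by the bounded indicator $\mathbbm{1}_{d,x,z}$ cannot increase $|f|$, and that the coordinate functions $\eta_d^{(k)}$ are $P$-square integrable. Once the envelope condition is in hand, the bootstrap CLT in either cited form yields (ii) for both bootstrap schemes, completing the verification.
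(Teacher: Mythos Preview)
Your proposal is correct and follows essentially the same route as the paper: verify (i) and (iv) directly from the construction, and obtain (ii) from the exchangeable bootstrap CLT for Donsker classes in \cite{vaart1997weak} (theorem 3.6.13 together with examples 3.6.9 and 3.6.10 for the multinomial and Bayesian weights). The one place where the paper is slightly sharper is condition (iii): rather than asserting asymptotic measurability as a generic consequence of the Donsker property, the paper notes that theorem 3.6.13 also delivers $E[f(\sqrt{n}(\mathbb{P}_n^* - \mathbb{P}_n))^*] - E[f(\sqrt{n}(\mathbb{P}_n^* - \mathbb{P}_n))_*] = o_p(1)$ for every $f \in \text{BL}_1$, and then combines this with (iv) via \cite{fang2019inference} lemma S.3.9 to conclude (iii). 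Your handling of (iii) is a bit hand-wavy by comparison, but the conclusion is the same and the overall argument is otherwise identical.
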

\begin{proof}
	\singlespacing
	
	Note that assumption 3(i) is satisfied by construction. \cite{vaart1997weak} example 3.6.9, 3.6.10, and theorem 3.6.13 implies assumpion 3(ii) holds:
	\begin{equation*}
		\sup_{f \in \text{BL}_1} \left\lvert E\left[ f(\sqrt{n}(\mathbb{P}_n^* - \mathbb{P}_n))\mid \{Y_i, D_i, Z_i, X_i\}_{i=1}^n \right] - E[f(\mathbb{G})] \right\rvert \overset{P^*}{\rightarrow} 0
	\end{equation*}
	and further that 
	\begin{equation*}
		E\left[f(\sqrt{n}(\mathbb{P}_n^* - \mathbb{P}_n))^*\right] - E\left[f(\sqrt{n}(\mathbb{P}_n^* - \mathbb{P}_n))_*\right] = o_p(1)
	\end{equation*}
	for any $f \in \text{BL}_1$, where $f(\sqrt{n}(\mathbb{P}_n^* - \mathbb{P}_n))^*$ and $f(\sqrt{n}(\mathbb{P}_n^* - \mathbb{P}_n))_*$ denote the minimal measurable majorant and maximal measurable minorant of $f(\sqrt{n}(\mathbb{P}_n^* - \mathbb{P}_n))$, respectively. Note that for any continuous and bounded $f$, $f(\sqrt{n}(\mathbb{P}_n^* - \mathbb{P}_n))$ is continuous in $\{W_i\}_{i=1}^n$, and is hence measurable satisfying \cite{fang2019inference} assumption 3(iv). \cite{fang2019inference} lemma S.3.9 then implies assumption 3(iii) is satisfied as well.
\end{proof}

\theoremBootstrapWorksWithFullDifferentiability*
\begin{proof}
	\singlespacing
	
	By application of \cite{fang2019inference} theorem 3.1. There are three numbered assumptions:
	\begin{enumerate}
		\item \cite{fang2019inference} assumption 1 is satisfied; $\ell^\infty(\mathcal{F})$ and $\mathbb{R}^2$ are indeed Banach spaces, and lemma \ref{Lemma: Hadamard differentiability, T is Hadamard directionally differentiable} shows that under this paper's assumptions \ref{Assumption: setting}, \ref{Assumption: cost function}, and \ref{Assumption: parameter, function of moments}, the map $T$ is Hadamard directionally differentiable at $P$ tangentially to $\mathcal{C}(\mathcal{F}, L_{2,P})$. 
		\item \cite{fang2019inference} assumption 2 is satisfied; lemma \ref{Lemma: weak convergence, Donsker results, large Donsker set F is Donsker} shows that $\sqrt{n}(\mathbb{P}_n - P) \overset{L}{\rightarrow} \mathbb{G}$ in $\ell^\infty(\mathcal{F})$, where $\mathbb{G}$ is tight and supported in $\mathcal{C}(\mathcal{F}, L_{2,P})$. 
		\item Lemma \ref{Lemma: inference, bootstrap, exchangeable bootstrap satisfies Fang and Santos assumption 3} shows that \cite{fang2019inference} assumption 3 is satisfied. 
	\end{enumerate}
	
	Finally, note that $\mathbb{G}$ is Gaussian and mean zero; it follows that its support is a vector subspace of $\ell^\infty(\mathcal{F})$. Thus \cite{fang2019inference} theorem 3.1 implies $T$ is (fully) Hadamard differentiable tangentially to the support of $\mathbb{G}$ if and only if 
	\begin{equation*}
		\sup_{f \in \text{BL}_1} \left\lvert E\left[f\left(\sqrt{n}(T(\mathbb{P}_n^*) - T(\mathbb{P}_n))\right) \mid \{Y_i, D_i, Z_i, X_i\}_{i=1}^n\right] - E\left[f(T_P'(\mathbb{G}))\right]\right\rvert = o_p(1)
	\end{equation*}
	Since lemma \ref{Lemma: Hadamard differentiability, T is Hadamard directionally differentiable} shows that under assumptions \ref{Assumption: setting}, \ref{Assumption: cost function}, \ref{Assumption: parameter, function of moments}, and \ref{Assumption: full differentiability}, $T$ is fully Hadamard differentiable tangentially to the support of $\mathbb{G}$, this completes the proof.
\end{proof}

\subsection{Alternative procedure}

\begin{restatable}[]{lemma}{lemmaFangAndOptimalTransportEstimatorWorks}
	\label{Lemma: inference, bootstrap, Fang and Santos alternative works for OT}
	\singlespacing
	
	Let assumptions \ref{Assumption: setting}, \ref{Assumption: cost function}, and \ref{Assumption: parameter, function of moments} hold, and $\{\kappa_n\}_{n=1}^\infty \subseteq \mathbb{R}$ satisfy $\kappa_n \rightarrow \infty$ and $\kappa_n / \sqrt{n} \rightarrow 0$. For $c \in \{c_L, c_H\}$, let 
	\begin{align*}
		\Psi_c(P_{1 \mid x}, P_{0 \mid x}) &= \argmax_{(\varphi, \psi) \in \Phi_c \cap (\mathcal{F}_c \times \mathcal{F}_c^c)} P_{1 \mid x}(\varphi) + P_{0 \mid x}(\psi) \\
		\widehat{\Psi}_{c, x} &= \left\{(\varphi,\psi) \in \Phi_c \cap (\mathcal{F}_c \times \mathcal{F}_c^c) \; ; \; OT_c(\hat{P}_{1 \mid x}, \hat{P}_{0 \mid x}) \leq \hat{P}_{1 \mid x}(\varphi) + \hat{P}_{0 \mid x}(\psi) + \frac{\kappa_n}{\sqrt{n}} \right\}
	\end{align*}
	and $OT_{c, (P_{1 \mid x}, P_{0 \mid x})}', \widehat{OT}_{c,x}': \mathcal{C}(\mathcal{F}_{1,x}, L_{2,P}) \times \mathcal{C}(\mathcal{F}_{0,x}, L_{2,P}) \rightarrow \mathbb{R}$, be given by
	\begin{align*}
		OT_{c, (P_{1, \mid x}, P_{0 \mid x})}'(H_1, H_0) &= \sup_{(\varphi,\psi) \in \Psi_c(P_{1 \mid x}, P_{0 \mid x})} H_1(\varphi) + H_0(\psi) \\
		\widehat{OT}_{c, x}'(H_1, H_0) &= \sup_{(\varphi,\psi) \in \widehat{\Psi}_{c,x}} H_1(\varphi) + H_0(\psi)
	\end{align*}
	Then for any $(H_1,H_0) \in \mathcal{C}(\mathcal{F}_{1,x}, L_{2,P}) \times \mathcal{C}(\mathcal{F}_{0,x}, L_{2,P})$, 
	\begin{equation*}
		\left\lvert \widehat{OT}_{c, x}'(H_1, H_0) - OT_{c, (P_{1, \mid x}, P_{0 \mid x})}'(H_1, H_0) \right\rvert \overset{p}{\rightarrow} 0
	\end{equation*}
\end{restatable}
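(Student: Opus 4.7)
The proof rests on a sandwich argument: with probability approaching one, the empirical approximate argmax set $\widehat{\Psi}_{c,x}$ contains $\Psi_c(P_{1 \mid x}, P_{0 \mid x})$ and is contained in an $L_2$-small neighborhood of $\Psi_c(P_{1 \mid x}, P_{0 \mid x})$. This will be combined with the $L_{2,P}$-continuity of $(\varphi,\psi) \mapsto H_1(\varphi) + H_0(\psi)$ and the $L_2$-compactness of the dual feasible set to pass the sandwich through the suprema.

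The first preparatory step is to collect $\sqrt{n}$-rates. Lemma \ref{Lemma: weak convergence, Donsker results, Donsker result for F_dx} together with the Hadamard differentiability of $T_1$ (lemma \ref{Lemma: Hadamard differentiability, T1 is fully Hadamard differentiable}) and the functional delta method give $\sup_{\varphi \in \mathcal{F}_c}|(\hat{P}_{1 \mid x} - P_{1 \mid x})(\varphi)| = O_p(n^{-1/2})$ and similarly for $\hat{P}_{0 \mid x}$, while continuity of $OT_c$ (lemma \ref{Lemma: optimal transport is continuous}) yields $|OT_c(\hat{P}_{1 \mid x},\hat{P}_{0 \mid x}) - OT_c(P_{1 \mid x},P_{0 \mid x})| = O_p(n^{-1/2})$.

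For the lower bound direction, I would show $\Psi_c(P_{1 \mid x},P_{0 \mid x}) \subseteq \widehat{\Psi}_{c,x}$ with probability approaching one. For any $(\varphi,\psi) \in \Psi_c(P_{1 \mid x},P_{0 \mid x})$, the rates above give
\[
OT_c(\hat{P}_{1 \mid x},\hat{P}_{0 \mid x}) - \hat{P}_{1 \mid x}(\varphi) - \hat{P}_{0 \mid x}(\psi) = OT_c(P_{1 \mid x},P_{0 \mid x}) - P_{1 \mid x}(\varphi) - P_{0 \mid x}(\psi) + O_p(n^{-1/2}) = O_p(n^{-1/2}),
\]
uniformly in such $(\varphi,\psi)$; since $\kappa_n \to \infty$, this is bounded above by $\kappa_n/\sqrt{n}$ w.p.a.\ one. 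Taking the supremum of $H_1(\varphi) + H_0(\psi)$ over the smaller set $\Psi_c(P_{1 \mid x},P_{0 \mid x})$ then yields $\widehat{OT}_{c,x}'(H_1,H_0) \geq OT_{c,(P_{1 \mid x},P_{0 \mid x})}'(H_1,H_0)$ w.p.a.\ one.

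For the upper bound direction, define the (non-random) approximate argmax set
\[
\widetilde{\Psi}_\delta = \bigl\{(\varphi,\psi) \in \Phi_c \cap (\mathcal{F}_c \times \mathcal{F}_c^c) \; ; \; P_{1 \mid x}(\varphi) + P_{0 \mid x}(\psi) \geq OT_c(P_{1 \mid x},P_{0 \mid x}) - \delta\bigr\}.
\]
The same uniform rates and $\kappa_n/\sqrt{n} \to 0$ imply that for any $\delta > 0$, $\widehat{\Psi}_{c,x} \subseteq \widetilde{\Psi}_\delta$ w.p.a.\ one, hence $\widehat{OT}_{c,x}'(H_1,H_0) \leq \sup_{(\varphi,\psi) \in \widetilde{\Psi}_\delta} H_1(\varphi) + H_0(\psi)$ w.p.a.\ one. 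It then remains to show $\sup_{\widetilde{\Psi}_\delta} H_1(\varphi) + H_0(\psi) \downarrow OT_{c,(P_{1 \mid x},P_{0 \mid x})}'(H_1,H_0)$ as $\delta \downarrow 0$. For any sequence $\delta_n \downarrow 0$, pick near-maximizers $(\varphi_n,\psi_n) \in \widetilde{\Psi}_{\delta_n}$; the $L_2$-compactness of $\Phi_c \cap (\mathcal{F}_c \times \mathcal{F}_c^c)$ (established inside the proof of lemma \ref{Lemma: Hadamard differentiability, optimal transport}) extracts a convergent subsequence with limit $(\varphi^*,\psi^*)$. Because $P$ dominates $P_{d \mid x}$ with bounded densities, $(\varphi,\psi) \mapsto P_{1 \mid x}(\varphi) + P_{0 \mid x}(\psi)$ is $L_2$-continuous, so the defining inequality passes to the limit and $(\varphi^*,\psi^*) \in \Psi_c(P_{1 \mid x},P_{0 \mid x})$; continuity of $(H_1,H_0)$ on $\mathcal{F}_{1,x} \times \mathcal{F}_{0,x}$ then yields $H_1(\varphi_n) + H_0(\psi_n) \to H_1(\varphi^*) + H_0(\psi^*) \leq OT_{c,(P_{1 \mid x},P_{0 \mid x})}'(H_1,H_0)$.

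The main obstacle is this last compactness-plus-continuity argument that lets $\widetilde{\Psi}_\delta$ ``shrink'' onto $\Psi_c(P_{1 \mid x},P_{0 \mid x})$ in a way that preserves suprema: it crucially relies on the choice of the $L_{2,P}$ semimetric under which the dual feasible set is compact and both the objective and the constraint function are continuous, which is precisely the structure established in appendix \ref{Appendix: properties of optimal transport, subsection differentiability}. Combining the two sandwich inequalities with $\delta$ chosen along a sequence $\delta_n \downarrow 0$ sufficiently slowly delivers the claim.
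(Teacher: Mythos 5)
Your proof is correct and follows the same sandwich strategy as the paper's: show $\Psi_c(P_{1\mid x},P_{0\mid x}) \subseteq \widehat{\Psi}_{c,x}$ with probability approaching one, show $\widehat{\Psi}_{c,x}$ eventually lies inside a shrinking approximation of the true argmax set, and pass both inclusions through the supremum of $H_1 + H_0$ using the $L_2$-compactness of $\Phi_c \cap (\mathcal{F}_c \times \mathcal{F}_c^c)$. The one structural departure is in how the intermediate set is parametrized: you use the value-slack set $\widetilde{\Psi}_\delta$ of dual pairs whose population objective is within $\delta$ of the optimum, whereas the paper uses the $L_2$-metric enlargement $\Psi_c(P_{1\mid x},P_{0\mid x})^\delta$ and establishes a positive gap $\eta > 0$ between the optimal value and the value restricted outside it. Your parametrization is arguably more natural because $\widehat{\Psi}_{c,x}$ is itself defined by a value slack, and your compactness-subsequence argument in the upper-bound step does the work of the paper's appeal to the modulus of continuity of $H_1 + H_0$ on small $L_2$-balls; the two packages carry the same compactness-plus-continuity content. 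Two details you should make explicit: (i) $\delta \mapsto \sup_{\widetilde{\Psi}_\delta} H_1 + H_0$ is monotone nonincreasing as $\delta \downarrow 0$, so the limit exists and it suffices to identify it along the extracted subsequence; and (ii) the feasible set $\Phi_c \cap (\mathcal{F}_c \times \mathcal{F}_c^c)$ must be $L_2$-closed so that the subsequential limit $(\varphi^*,\psi^*)$ stays feasible, which the completeness lemmas \ref{Lemma: weak convergence, completeness, smooth costs c-concave functions} and \ref{Lemma: weak convergence, completeness, indicator of convex set costs c-concave functions} deliver. Finally, your justification of the lower-bound inclusion — $\lVert \hat{P}_{d\mid x} - P_{d\mid x}\rVert_{\mathcal{F}_{d,x}} = O_p(n^{-1/2})$ together with $\kappa_n \to \infty$ — is the correct one; the paper's proof at that point invokes ``$\sqrt{n}/\kappa_n \to 0$,'' which is inconsistent with the stated assumption $\kappa_n/\sqrt{n} \to 0$, and the rate-based reasoning you gave is what the argument actually requires.
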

\begin{proof}
	\singlespacing
	
	The proof is similar that of \cite{fang2019inference} lemma S.4.8. As the subscript $x$ plays no role, we drop it from the notation.
	
	In steps:
	\begin{enumerate}
		\item We first esteablish an inequality used several times below. Note that for any $(\tilde{\varphi}, \tilde{\psi}), (\varphi,\psi) \in \Phi_c \cap (\mathcal{F}_c \times \mathcal{F}_c^c)$,
		\begin{align*}
			\lVert \hat{P}_1 - P_1 \rVert_{\mathcal{F}_1} + \lVert \hat{P}_0 - P_0 \rVert_{\mathcal{F}_0} &\geq \hat{P}_1(\varphi) - P_1(\varphi) + \hat{P}_0(\psi) - P_0(\psi) \\
			\lVert \hat{P}_1 - P_1 \rVert_{\mathcal{F}_1} + \lVert \hat{P}_0 - P_0 \rVert_{\mathcal{F}_0} &\geq P_1(\tilde{\varphi}) - \hat{P}_1(\tilde{\varphi}) + P_0(\tilde{\psi}) - \hat{P}_0(\tilde{\psi}) 
		\end{align*}
		Add these to obtain
		\begin{align}
			&2\left(\lVert \hat{P}_1 - P_1 \rVert_{\mathcal{F}_1} + \lVert \hat{P}_0 - P_0 \rVert_{\mathcal{F}_0}\right) \notag \\
			&\hspace{1 cm} \geq \hat{P}_1(\varphi) - P_1(\varphi) + \hat{P}_0(\psi) - P_0(\psi) + P_1(\tilde{\varphi}) - \hat{P}_1(\tilde{\varphi}) + P_0(\tilde{\psi}) - \hat{P}_0(\tilde{\psi}), \label{Display: lemma proof, inference, bootstrap, Fang and Santos alternative works for OT, inequality 1}
		\end{align}
		
		\item We next show 
		\begin{equation}
			\lim_{n \rightarrow \infty} P\left(\Psi(P_1, P_0) \subseteq \widehat{\Psi}_c \right) = 1 \label{Display: lemma proof, inference, bootstrap, Fang and Santos alternative works for OT, true argmax set is eventually subset of estimated argmax set}
		\end{equation}
		
		Let $(\tilde{\varphi}, \tilde{\psi}) \in \Psi(P_1, P_0)$, and rearrange \eqref{Display: lemma proof, inference, bootstrap, Fang and Santos alternative works for OT, inequality 1} to find 
		\begin{align*}
			&2\left(\lVert \hat{P}_1 - P_1 \rVert_{\mathcal{F}_1} + \lVert \hat{P}_0 - P_0 \rVert_{\mathcal{F}_0} \right) \\
			&\hspace{1 cm} \geq \hat{P}_1(\varphi) + \hat{P}_0(\psi) - \hat{P}_1(\tilde{\varphi}) - \hat{P}(\tilde{\psi}) + \underbrace{P_1(\tilde{\varphi}) + P_0(\tilde{\psi}) - P_1(\varphi) - P_0(\psi)}_{\geq 0} \\
			&\hspace{1 cm} \geq \hat{P}_1(\varphi) + \hat{P}_0(\psi) - \hat{P}_1(\tilde{\varphi}) - \hat{P}(\tilde{\psi})
		\end{align*}
		and therefore
		\begin{align*}
			\sup_{(\varphi, \psi) \in \Phi_c \cap (\mathcal{F}_c \times \mathcal{F}_c^c)} \hat{P}_1(\varphi) + \hat{P}_0(\psi) \leq \hat{P}_1(\tilde{\varphi}) + \hat{P}(\tilde{\psi}) + 2\left(\lVert \hat{P}_1 - P_1 \rVert_{\mathcal{F}_1} + \lVert \hat{P}_0 - P_0 \rVert_{\mathcal{F}_0} \right) 
		\end{align*}
		holds for any $(\tilde{\varphi}, \tilde{\psi}) \in \Psi_c(P_1, P_0)$. It follows that $2\left(\lVert \hat{P}_1 - P_1 \rVert_{\mathcal{F}_1} + \lVert \hat{P}_0 - P_0 \rVert_{\mathcal{F}_0} \right) < \frac{\kappa_n}{\sqrt{n}}$ implies $(\tilde{\varphi}, \tilde{\psi}) \in \widehat{\Psi}_c$, and hence
		\begin{equation*}
			P\left(2\frac{\sqrt{n}}{\kappa_n}\left(\lVert \hat{P}_1 - P_1 \rVert_{\mathcal{F}_1} + \lVert \hat{P}_0 - P_0 \rVert_{\mathcal{F}_0} \right) < 1\right) \leq P\left(\Psi(P_1, P_0) \subseteq \widehat{\Psi}_c\right)
		\end{equation*}
		Lemma \ref{Lemma: consistency} implies $\lVert \hat{P}_1 - P_1 \rVert_{\mathcal{F}_1} + \lVert \hat{P}_0 - P_0 \rVert_{\mathcal{F}_0} \overset{p}{\rightarrow} 0$. Since $\frac{\sqrt{n}}{\kappa_n} \rightarrow 0$, this implies that $2\frac{\sqrt{n}}{\kappa_n}\left(\lVert \hat{P}_1 - P_1 \rVert_{\mathcal{F}_1} + \lVert \hat{P}_0 - P_0 \rVert_{\mathcal{F}_0} \right) = o_p(1)$ and therefore
		\begin{align*}
			\lim_{n \rightarrow \infty} P\left(\Psi(P_1, P_0) \subseteq \widehat{\Psi}_c \right) \geq \lim_{n \rightarrow \infty} P\left(2\frac{\sqrt{n}}{\kappa_n}\left(\lVert \hat{P}_1 - P_1 \rVert_{\mathcal{F}_1} + \lVert \hat{P}_0 - P_0 \rVert_{\mathcal{F}_0} \right) < 1\right) = 1
		\end{align*}
		as was to be shown.
		
		\item We next show that for any $\delta > 0$,
		\begin{equation}
			\lim_{n \rightarrow \infty} P\left(\widehat{\Psi}_c \subseteq \left(\Psi(P_1, P_0)\right)^\delta\right) = 1 \label{Display: lemma proof, inference, bootstrap, Fang and Santos alternative works for OT, estimated argmax set is eventually subset of expanded true argmax set}
		\end{equation}
		where $\left(\Psi(P_1, P_0)\right)^\delta$ is an open $\delta$-enlargement of $\Psi(P_1, P_0)$ under $L_2$; i.e.
		\begin{align*}
			\left(\Psi(P_1, P_0)\right)^\delta = \left\{(f,g) \; ; \; \inf_{(\varphi,\psi) \in \Psi(P_1,P_0)} L_2((\varphi,\psi), (f,g)) < \delta\right\}
		\end{align*}
		Toward this end, note that
		\begin{align*}
			\eta \equiv \left[\sup_{(\varphi,\psi) \in \Phi_c \cap (\mathcal{F}_c \times \mathcal{F}_c^c)} \left\{P_1(\varphi) + P_0(\psi)\right\} - \sup_{(\varphi,\psi) \in \Phi_c \cap (\mathcal{F}_c \times \mathcal{F}_c^c) \setminus \left(\Psi(P_1, P_0)\right)^\delta}\left\{P_1(\varphi) + P_0(\psi)\right\}\right] > 0
		\end{align*}
		$\eta > 0$ follows from compactness of $\Phi_c \cap (\mathcal{F}_c \times \mathcal{F}_c^c)$ and continuity of $P_1 + P_0$ with respect to $L_2$ (see the proof of lemma \ref{Lemma: Hadamard differentiability, optimal transport}).
		
		Rearrange \eqref{Display: lemma proof, inference, bootstrap, Fang and Santos alternative works for OT, inequality 1} to find 
		\begin{align*}
			&P_1(\tilde{\varphi}) + P_0(\tilde{\psi}) - P_1(\varphi) - P_0(\psi) \\
			&\hspace{1 cm} \leq 2\left(\lVert \hat{P}_1 - P_1 \rVert_{\mathcal{F}_1} + \lVert \hat{P}_0 - P_0 \rVert_{\mathcal{F}_0}\right) + \hat{P}_1(\tilde{\varphi}) + \hat{P}_0(\tilde{\psi}) - \hat{P}_1(\varphi) - \hat{P}_0(\psi)
		\end{align*}
		Take suprema over $(\tilde{\varphi}, \tilde{\psi}) \in \Phi_c \cap (\mathcal{F}_c \times \mathcal{F}_c^c)$ to find
		\begin{align}
			&\sup_{(\tilde{\varphi}, \tilde{\psi}) \in \Phi_c \cap (\mathcal{F}_c \times \mathcal{F}_c^c)} P_1(\tilde{\varphi}) + P_0(\tilde{\psi}) - P_1(\varphi) - P_0(\psi) \notag \\
			&\hspace{1 cm} \leq 2\left(\lVert \hat{P}_1 - P_1 \rVert_{\mathcal{F}_1} + \lVert \hat{P}_0 - P_0 \rVert_{\mathcal{F}_0}\right) + \sup_{(\tilde{\varphi}, \tilde{\psi}) \in \Phi_c \cap (\mathcal{F}_c \times \mathcal{F}_c^c)}\hat{P}_1(\tilde{\varphi}) + \hat{P}_0(\tilde{\psi}) - \hat{P}_1(\varphi) - \hat{P}_0(\psi)  \label{Display: lemma proof, inference, bootstrap, Fang and Santos alternative works for OT, inequality 2} 
		\end{align}
		
		Suppose there exists $(\varphi,\psi) \in \Phi_c \cap (\mathcal{F}_c \times \mathcal{F}_c^c) \setminus \left(\Psi(P_1, P_0)\right)^\delta$ such that $ \sup_{(\tilde{\varphi}, \tilde{\psi}) \in \Phi_c \cap (\mathcal{F}_c \times \mathcal{F}_c^c)}\hat{P}_1(\tilde{\varphi}) + \hat{P}_0(\tilde{\psi}) \leq \hat{P}_1(\varphi) + \hat{P}_0(\psi) + \frac{\kappa}{\sqrt{n}}$. For any such $(\varphi, \psi)$, \eqref{Display: lemma proof, inference, bootstrap, Fang and Santos alternative works for OT, inequality 2} implies
		\begin{align*}
			&\sup_{(\tilde{\varphi}, \tilde{\psi}) \in \Phi_c \cap (\mathcal{F}_c \times \mathcal{F}_c^c)} P_1(\tilde{\varphi}) + P_0(\tilde{\psi}) - P_1(\varphi) - P_0(\psi) \leq 2\left(\lVert \hat{P}_1 - P_1 \rVert_{\mathcal{F}_1} + \lVert \hat{P}_0 - P_0 \rVert_{\mathcal{F}_0}\right) + \frac{\kappa_n}{\sqrt{n}}
		\end{align*}
		from which it follows that
		\begin{align*}
			&2\left(\lVert \hat{P}_1 - P_1 \rVert_{\mathcal{F}_1} + \lVert \hat{P}_0 - P_0 \rVert_{\mathcal{F}_0}\right) + \frac{\kappa_n}{\sqrt{n}} \\
			&\hspace{2 cm}\geq \sup_{(\tilde{\varphi}, \tilde{\psi}) \in \Phi_c \cap (\mathcal{F}_c \times \mathcal{F}_c^c)} P_1(\tilde{\varphi}) + P_0(\tilde{\psi}) - \sup_{(\varphi,\psi) \in \Phi_c \cap (\mathcal{F}_c \times \mathcal{F}_c^c) \setminus \left(\Psi(P_1, P_0)\right)^\delta} \left\{P_1(\varphi) + P_0(\psi)\right\} \\
			&\hspace{2 cm} = \eta
		\end{align*}
		
		To summarize: if there exists $(\varphi,\psi) \in \Phi_c \cap (\mathcal{F}_c \times \mathcal{F}_c^c) \setminus \left(\Psi(P_1, P_0)\right)^\delta$ such that $\sup_{(\tilde{\varphi}, \tilde{\psi}) \in \Phi_c \cap (\mathcal{F}_c \times \mathcal{F}_c^c)}\hat{P}_1(\tilde{\varphi}) + \hat{P}_0(\tilde{\psi}) \leq \hat{P}_1(\varphi) + \hat{P}_0(\psi) + \frac{\kappa}{\sqrt{n}}$, then $2\left(\lVert \hat{P}_1 - P_1 \rVert_{\mathcal{F}_1} + \lVert \hat{P}_0 - P_0 \rVert_{\mathcal{F}_0}\right) + \frac{\kappa_n}{\sqrt{n}} \geq \eta$, from which it follows that 
		\begin{align*}
			&P\left(\widehat{\Psi}_c \not \subseteq \left(\Psi(P_1, P_0)\right)^\delta\right) \\
			&\hspace{1 cm} = P\Bigg(\sup_{(\tilde{\varphi}, \tilde{\psi}) \in \Phi_c \cap (\mathcal{F}_c \times \mathcal{F}_c^c)}\hat{P}_1(\tilde{\varphi}) + \hat{P}_0(\tilde{\psi}) \leq \hat{P}_1(\varphi) + \hat{P}_0(\psi) + \frac{\kappa}{\sqrt{n}} \\
			&\hspace{5 cm} \text{ for some } (\varphi,\psi) \in \Phi_c \cap (\mathcal{F}_c \times \mathcal{F}_c^c) \setminus \left(\Psi(P_1, P_0)\right)^\delta\Bigg) \\
			&\hspace{1 cm} \leq P\left(2\left(\lVert \hat{P}_1 - P_1 \rVert_{\mathcal{F}_1} + \lVert \hat{P}_0 - P_0 \rVert_{\mathcal{F}_0}\right) + \frac{\kappa_n}{\sqrt{n}} \geq \eta\right) \rightarrow 0
		\end{align*}
		where the final limit claim follows from $\eta > 0$, $\kappa_n/\sqrt{n} \rightarrow 0$, and $\lVert \hat{P}_1 - P_1 \rVert_{\mathcal{F}_1} + \lVert \hat{P}_0 - P_0 \rVert_{\mathcal{F}_0} = o_p(1)$.
		
		\item \eqref{Display: lemma proof, inference, bootstrap, Fang and Santos alternative works for OT, true argmax set is eventually subset of estimated argmax set} and \eqref{Display: lemma proof, inference, bootstrap, Fang and Santos alternative works for OT, estimated argmax set is eventually subset of expanded true argmax set} imply that for any $\delta > 0$, $P\left(\Psi_c(P_1, P_0) \subseteq \widehat{\Psi}_c \subseteq \Psi_c(P_1, P_0)^{\delta}\right) \rightarrow 1$. It follows that there exists a sequence $\{\delta_n\}_{n=1}^\infty \subseteq \mathbb{R}_+$ with $\delta_n \downarrow 0$ such that $P\left(\Psi(P_1, P_0) \subseteq \widehat{\Psi}_c \subseteq \Psi(P_1, P_0)^{\delta_n}\right) \rightarrow 1$. 
		Notice that when $\Psi(P_1, P_0) \subseteq \widehat{\Psi}_c \subseteq \Psi(P_1, P_0)^{\delta_n}$ holds,
		\begin{align*}
			&\left\lvert \widehat{OT}_{c, x}'(H_1, H_0) - OT_{c, (P_1, P_0)}'(H_1, H_0) \right\rvert \\
			&\leq \sup_{(\varphi, \psi) \in \Psi_c(P_1, P_0)^{\delta_n} \cap \Phi_c \cap (\mathcal{F}_c \times \mathcal{F}_c^c)} \left\{H_1(\varphi) + H_0(\psi)\right\} - \sup_{(\varphi, \psi) \in \Psi_c(P_1, P_0)}\left\{H_1(\varphi) + H_0(\psi)\right\} \\
			&\leq \sup_{(\varphi_1, \psi_1), (\varphi_2, \psi_2) \in \Phi_c \cap (\mathcal{F}_c \times \mathcal{F}_c^c); \; L_2((\varphi_1, \psi_1), (\varphi_2, \psi_2)) < \delta_n} \left\{H_1(\varphi_1) + H_0(\psi_1) - H_1(\varphi_2) - H_0(\psi_0)\right\} \\
			&= o_p(1)
		\end{align*}
		where the $o_p(1)$ claim follows from $H_1 + H_0$ being continuous and $\Phi_c \cap (\mathcal{F}_c \times \mathcal{F}_c^c)$ being compact, implying $H_1 + H_0$ is in fact uniformly continuous. 
	\end{enumerate}
	This concludes the proof.
\end{proof}

\theoremFangAndSantosAlternativeWorks*
\begin{proof}
	\singlespacing
	
	The overall strategy is to apply \cite{fang2019inference} theorem 3.2, viewing $T_1(\mathbb{P}_n)$ as the estimator for $T_1(P)$, $T_1(\mathbb{P}_n^*)$ as the bootstrap, and $T_{-1} = T_4 \circ T_3 \circ T_2$ as the directionally differentiable function. There are four assumption to verify.
	
	\begin{enumerate}
		\item To see that \cite{fang2019inference} assumption 1 holds, 
		\begin{enumerate}[label=(\roman*)]
			\item the map
			\begin{align*}
				T_4 \circ T_3 \circ T_2 : \prod_{m=1}^M \ell^\infty(\mathcal{F}_{1,x}) \times \ell^\infty(\mathcal{F}_{0,x})\times \mathbb{R}^{K_1} \times \mathbb{R}^{K_0} \times \mathbb{R} \rightarrow \mathbb{R}^2
			\end{align*}
			is a map between Banach spaces
			\item by lemmas \ref{Lemma: Hadamard differentiability, T2 is directionally or fully Hadamard differentiable}, \ref{Lemma: Hadamard differentiability, T3 is fully Hadamard differentiable}, \ref{Lemma: Hadamard differentiability, T4 is fully Hadamard differentiable} and the chain rule (lemma \ref{Lemma: Hadamard differentiability, chain rule}), $T_{-1} = T_4 \circ T_3 \circ T_2$ is Hadamard directionally differentiable at $T_1(P)$ tangentially to 
			\begin{equation*}
				\mathbb{D}_{Tan} = \prod_{m=1}^M \mathcal{C}(\mathcal{F}_{1, x_m}, L_{2,P}) \times \mathcal{C}(\mathcal{F}_{0, x_m}, L_{2,P}) \times \mathbb{R}^{K_1} \times \mathbb{R}^{K_0}  \times \mathbb{R}
			\end{equation*}
		\end{enumerate}
		
		\item To see that the estimator $T_1(\mathbb{P}_n)$ satisfies \cite{fang2019inference} assumption 2, note that 
		\begin{enumerate}[label=(\roman*)]
			\item $T_1(P) \in \prod_{m=1}^M \ell^\infty(\mathcal{F}_{1,x}) \times \ell^\infty(\mathcal{F}_{0,x})\times \mathbb{R}^{K_1} \times \mathbb{R}^{K_0} \times \mathbb{R}$ and lemma \ref{Lemma: weak convergence, support of T_1P(G)} shows
			\begin{equation*}
				T_1(\mathbb{P}_n) : \{Y_i, D_i, Z_i, X_i\}_{i=1}^n \rightarrow \prod_{m=1}^M \ell^\infty(\mathcal{F}_{1,x}) \times \ell^\infty(\mathcal{F}_{0,x})\times \mathbb{R}^{K_1} \times \mathbb{R}^{K_0} \times \mathbb{R}
			\end{equation*}
			satisfies $\sqrt{n}(T_1(\mathbb{P}_n) - T_1(P)) \overset{L}{\rightarrow} T_{1,P}'(\mathbb{G})$.
			\item $T_{1,P}'(\mathbb{G})$ is tight because $\mathbb{G}$ is tight and $T_{1,P}'$ is continuous. Lemma \ref{Lemma: weak convergence, support of T_1P(G)} also shows the support of $T_{1,P}'(\mathbb{G})$ is included in $\mathbb{D}_{Tan}$.
		\end{enumerate}
		
		\item The bootstrap $T_1(\mathbb{P}_n^*)$ satisfies \cite{fang2019inference} assumption 3:
		\begin{enumerate}[label=(\roman*)]
			\item $T_1(\mathbb{P}_n^*)$ is a function of $\{Y_i, D_i, Z_i, X_i, W_i\}_{i=1}^n$ with $\{W_i\}_{i=1}^n$ independent of $\{Y_i, D_i, Z_i, X_i\}_{i=1}^n$.
			\item $T_1$ is fully Hadamard differentiable at $P$ tangentially to $\ell^\infty(\mathcal{F})$, and hence the functional delta method implies $\sqrt{n}(T_1(\mathbb{P}_n) - T_1(P)) \overset{L}{\rightarrow} T_{1,P}'(\mathbb{G})$. Lemma \ref{Lemma: inference, bootstrap, exchangeable bootstrap satisfies Fang and Santos assumption 3} shows that $\mathbb{P}_n^*$ satisfies \cite{fang2019inference} assumption 3, and thus \cite{fang2019inference} theorem 3.1 implies
			\begin{equation*}
				\sup_{f \in \text{BL}_1} \left\lvert E\left[ f(\sqrt{n}(T_1(\mathbb{P}_n^*) - T_1(\mathbb{P}_n)))\mid \{Y_i, D_i, Z_i, X_i\}_{i=1}^n \right] - E[f(T_{1,P}'(\mathbb{G}))] \right\rvert = o_p(1)
			\end{equation*}
			
			\item Condition (iv) below holds, and hence \cite{fang2019inference} lemma S.3.9 implies $\sqrt{n}(T_1(\mathbb{P}_n^*) - T_1(\mathbb{P}_n))$ is asymptotically measurable.
			\item Note that for any continuous and bounded function $f$, $f(\sqrt{n}(T_1(\mathbb{P}_n^*) - T_1(\mathbb{P}_n)))$ is continuous in $\{W_i\}_{i=1}^n$ and hence is a measurable function of $\{W_i\}_{i=1}^n$. 
		\end{enumerate}
		
		\item \cite{fang2019inference} assumption 4 is about the estimator of the derivative.
		
		Notice that $T_{-1, T_1(P)}' = T_{4,T_3(T_2(T_1(P)))}' \circ T_{3, T_2(T_1(P))}' \circ T_{2,T_1(P)}'$ is given by
		\begin{align*}
			&T_{-1, T_1(P)}': \mathbb{D}_{Tan} \rightarrow \mathbb{R}^2, &&T_{-1, T_1(P)}'(h) = D_4 D_3 T_{2, T_1(P)}'(h)
		\end{align*}
		Estimate this derivative with
		\begin{align*}
			&\widehat{T}_{-1,T_1(P)}' : \mathbb{D}_{Tan} \rightarrow \mathbb{R}^2, &&\hat{D}_4 \hat{D}_3 \widehat{T}_{2,T_1(P)}'(h)
		\end{align*}
		The estimator $\widehat{T}_{-1,T_1(P)}'$ satisfies the conditions of \cite{fang2019inference} lemma S.3.6, and therefore \cite{fang2019inference} assumption 4. These conditions are
		\begin{enumerate}
			\item Modulus of continuity: $\lVert \widehat{T}_{-1, T_1(P)}'(h_1) - \widehat{T}_{-1,T_1(P)}'(h_2) \rVert \leq C_n \lVert h_1 - h_2 \rVert$ for some $C_n = O_p(1)$.
			
			\item Pointwise consistency: for any $h$, $\lVert \widehat{T}_{-1, T_1(P)}(h) - T_{-1,T_1(P)}(h) \rVert = o_p(1)$.
		\end{enumerate}
		
		To see these claims in detail:
		\begin{enumerate}
			\item For any matrix $A$, let $\lVert A \rVert_o = \sup_{x ; \lVert x \rVert_2 = 1} \lVert A x \rVert_2$ be the operator norm. 
			\begin{align*}
				\lVert \widehat{T}_{-1, T_1(P)}'(h_1) - \widehat{T}_{-1,T_1(P)}'(h_2) \rVert &= \lVert \hat{D}_4 \hat{D}_3 \widehat{T}_{2, T_1(P)}'(h_1) - \hat{D}_4 \hat{D}_3 \widehat{T}_{2, T_1(P)}'(h_2)\rVert \\
				&\leq \lVert \hat{D}_4 \hat{D}_3 \rVert_o \lVert \widehat{T}_{2,T_1(P)}'(h_1) - \widehat{T}_{2,T_1(P)}'(h_2) \rVert \\
				&\leq \lVert \hat{D}_4 \hat{D}_3 \rVert \lVert \lVert h_1 - h_2 \rVert
			\end{align*}
			where the last claim follows because $\widehat{T}_{2,T_1(P)}'$ is $1$-Lipschitz (shown below). Next notice $\hat{D}_4 \overset{p}{\rightarrow} D_4$ and $\hat{D}_3 \overset{p}{\rightarrow} D_3$ by the CMT, which implies $\lVert \hat{D}_4 \hat{D}_3\rVert = O_p(1)$ as required. \\
			
			To see that $\widehat{T}_{2, T_1(P)}'$ is $1$-Lipschitz, recall 
			\begin{align*}
				&\widehat{T}_{2,T_1(P)}'\left(\{H_{1,x}, H_{0,x}, h_{\eta_1, x}, h_{\eta_0, x}, h_{s, x}\}_{x \in \mathcal{X}}\right) \\
				&\hspace{1 cm} = \left(\left\{\widehat{OT}_{c_L, x}'(H_{1,x}, H_{0,x}), -\widehat{OT}_{c_H, x}'(H_{1,x}, H_{0,x}), h_{\eta_1, x}, h_{\eta_0, x}, h_{s, x}\right\}_{x \in \mathcal{X}}\right)
			\end{align*}
			The maps $\widehat{OT}_{c_L, x}, -\widehat{OT}_{c_H, x}$ are $1$-Lipschitz. Specifically, note that 
			\begin{align*}
				&\lvert \widehat{OT}_{c_L, x}'(H_{1,x}, H_{0, x}) - \widehat{OT}_{c_L, x}'(G_{1,x}, G_{0, x}) \rvert \\
				&\hspace{1 cm} = \left\lvert \sup_{(\varphi,\psi) \in \widehat{\Psi}_{c,x}}\left\{ H_{1,x}(\varphi) + H_{0,x}(\psi)\right\} - \sup_{(\varphi,\psi) \in \widehat{\Psi}_{c,x}}\left\{G_{1,x}(\varphi) + G_{0,x}(\psi)\right\}\right\rvert \\
				&\hspace{1 cm} \leq \sup_{\varphi \in \mathcal{F}_{1,x}} \lvert H_{1,x}(\varphi) - G_{1,x}(\varphi)\rvert + \sup_{\psi \in \mathcal{F}_{0,x}} \lvert H_{0,x}(\psi) - G_{0,x}(\psi) \rvert \\
				&\hspace{1 cm} = \lVert H_{1,x} - G_{1,x} \rVert_{\mathcal{F}_{1,x}} + \lVert H_{0,x} - G_{0,x} \rVert_{\mathcal{F}_{0,x}}
			\end{align*}
			and similarly, $-\widehat{OT}_{c_H, x}$ is $1$-Lipschitz. The other maps in $\widehat{T}_{2, T_1(P)}$ are the identity map, which is also $1$-Lipschitz. It follows that $\widehat{T}_{2, T_1(P)}$ is $1$-Lipschitz.\footnote{For $k = 1,2$, let $\mathbb{D}_k$, $\mathbb{E}_k$ be metric spaces. If $f_k : \mathbb{D}_k \rightarrow \mathbb{E}_k$ be Lipschitz with constants $L_k$, then $f : \mathbb{D}_1 \times \mathbb{D}_2 \rightarrow \mathbb{E}_1 \times \mathbb{E}_2$ given by $f(x_1,x_2) = (f_1(x_1), f_2(x_2))$ is Lipschitz with constant $\max\{L_1, L_2\}$. To see this, recall $\mathbb{D}_1 \times \mathbb{D}_2$ and $\mathbb{E}_1 \times \mathbb{E}_2$ are metricized with the norms $\lVert (x_1,x_2) \rVert_{\mathbb{D}_1 \times \mathbb{D}_2} = \lVert x_1 \rVert_{\mathbb{D}_1} + \lVert x_2 \rVert_{\mathbb{D}_2}$ and $\lVert (y_1,y_2)\rVert_{\mathbb{E}_1 \times \mathbb{E}_2} = \lVert y_1 \rVert_{\mathbb{E}_1} + \lVert y_2 \rVert_{\mathbb{E}_2}$, and note that 
				\begin{align*}
					\lVert f(x_1, x_2) - f(x_1', x_2') \rVert_{\mathbb{E}_1 \times \mathbb{E}_2} &= \lVert (f_1(x_1), f_2(x_2)) - (f_1(x_1'), f_2(x_2'))\rVert_{\mathbb{E}_1 \times \mathbb{E}_2} = \lVert f_1(x_1) - f_1(x_1') \rVert_{\mathbb{E}_1} + \lVert f_2(x_2) - f_2(x_2') \rVert_{\mathbb{E}_2} \\
					&\leq L_1\lVert x - x_1' \rVert_{\mathbb{D}_1} + L_2 \lVert x_2 - x_2'\rVert_{\mathbb{D}_2} \leq \max\{L_1, L_2\}\lVert x - x_1' \rVert_{\mathbb{D}_1} + \max\{L_1, L_2\} \lVert x_2 - x_2'\rVert_{\mathbb{D}_2} \\
					&= \max\{L_1, L_2\} \times \lVert (x_1, x_2) - (x_1', x_2') \rVert_{\mathbb{D}_1 \times \mathbb{D}_2}
			\end{align*}}

			\item To show pointwise consistency, fix $h = \left(\{H_{1,x}, H_{0,x}, h_{\eta_1, x}, h_{\eta_0, x}, h_{s, x}\}_{x \in \mathcal{X}}\right)$ and note that 
			\begin{align*}
				&\lVert \widehat{T}_{-1, T_1(P)}'(h) - T_{-1, T_1(P)} \rVert = \lVert \hat{D}_4 \hat{D}_3 \widehat{T}_{2,T_1(P)}(h) - D_4 D_3 T_{2,T_1(P)}(h) \rVert \\
				&\hspace{1 cm} \leq \lVert (\hat{D}_4 \hat{D}_3 - D_4 D_3) T_{2,T_1(P)}'(h) \rVert + \lVert D_4 D_3(\widehat{T}_{2,T_1(P)}(h) - T_{2,T_1(P)}(h)) \rVert \\
				&\hspace{1 cm} \leq \lVert \hat{D}_4 \hat{D}_3 - D_4 D_3 \rVert_o \lVert T_{2,T_1(P)}'(h) \rVert + \lVert D_4 D_3 \rVert_o \lVert \widehat{T}_{2,T_1(P)}(h) - T_{2,T_1(P)}(h) \rVert
			\end{align*}
			Since $\hat{D}_4 \hat{D}_3 \overset{p}{\rightarrow} D_4 D_3$ by the CMT, it suffices to show 
			\begin{equation*}
				\lVert \widehat{T}_{2,T_1(P)}(h) - T_{2,T_1(P)}(h) \rVert = o_p(1)
			\end{equation*}
			The only nonzero coordinates correspond to $\widehat{OT}_{c_L, x}^{L\prime}(H_{1,x}, H_{0,x})$ and $-\widehat{OT}_{c_H, x}^{H\prime}(H_{1,x}, H_{0,x})$:
			\begin{align*}
				&\lVert \widehat{T}_{2,T_1(P)}(h) - T_{2,T_1(P)}(h) \rVert^2 \\
				&\hspace{1 cm} = \left(\widehat{OT}_{c_L, x}'(H_{1,x}, H_{0,x}) - OT_{c_L, (P_{1 \mid x}, P_{0 \mid x})}'(H_{1, x}, H_{0, x})\right)^2 \\
				&\hspace{4 cm} + \left(\widehat{OT}_{c_H, x}'(H_{1,x}, H_{0,x}) - OT_{c_H, (P_{1 \mid x}, P_{0 \mid x})}'(H_{1, x}, H_{0, x})\right)^2 \\
				&\hspace{1 cm} = o_p(1) + o_p(1)
			\end{align*}
			where the last $o_p(1)$ claim follows from lemma \ref{Lemma: inference, bootstrap, Fang and Santos alternative works for OT}.

		\end{enumerate}
		We conclude through \cite{fang2019inference} lemma S.3.6 that \cite{fang2019inference} assumption 4 is satisfied.
	\end{enumerate}
	
	Finally, apply \cite{fang2019inference} theorem 3.2 to find that 
	\begin{align*}
		\sup_{f \in \text{BL}_1} \left\lvert E\left[f(\hat{D}_4 \hat{D}_3 \widehat{T}_{2,T_1(P)}(\sqrt{n}(T_1(\mathbb{P}_n^*) - T_1(\mathbb{P}_n)))) \right] - E\left[f(T_P'(\mathbb{G}))\right]\right\rvert = o_p(1)
	\end{align*}
	as was to be shown.
\end{proof}

	\newpage
	
	\section{Appendix: duality in optimal transport}
\label{Appendix: duality in optimal transport}

This appendix contains terminology, notation, and results regarding optimal transport used in this paper. Most of these results can be found in the monographs \cite{villani2003topics}, \cite{villani2009optimal}, or \cite{santambrogio2015optimal}. 

\subsection{Primal and dual problems}
\label{Appendix: duality in optimal transport, primal and dual problems}

Let $\mathcal{Y}_1, \mathcal{Y}_0 $ be Polish subsets of $\mathbb{R}$, equipped with their Borel sigma algebras. Let $\mathcal{P}(\mathcal{Y}_d)$ be the set of probability distributions defined on $\mathcal{Y}_d$, and $P_d \in \mathcal{P}(\mathcal{Y}_d)$. Let $\mathcal{P}(\mathcal{Y}_1 \times \mathcal{Y}_0)$ be the set of probability distributions on the product space $\mathcal{Y}_1 \times \mathcal{Y}_0$.

A probability measure $\pi \in \mathcal{P}(\mathcal{Y}_1 \times \mathcal{Y}_0)$ has marginals $P_1$ and $P_0$ if 
\begin{align}
	&\text{For all } A \subset \mathcal{Y}_1 \text{ measurable, } \pi(A \times \mathcal{Y}_0) = P_1(A) = \int \mathbbm{1}_A(y_1) dP_1(y_1) \label{Defn: pi has marginal P_1} \\
	&\text{For all } B \subset  \mathcal{Y}_0 \text{ measurable, } \pi(\mathcal{Y}_1 \times B) = P_0(B) = \int \mathbbm{1}_B(y_0) dP_0(y_0) \label{Defn: pi has marginal P_0}
\end{align} 
The collection of such joint distributions with marginals $P_1$ and $P_0$ is denoted
\begin{equation}
	\Pi(P_1, P_0) = \left\{\pi \in \mathcal{P}(\mathcal{Y}_1 \times \mathcal{Y}_0) \; ; \; \pi \text{ satisfies } \eqref{Defn: pi has marginal P_1} \text{ and } \eqref{Defn: pi has marginal P_0}\right\} \label{Defn: Pi, distributions with given marginals}
\end{equation}
The \textbf{cost function} is a measurable function $c : \mathcal{Y}_1 \times \mathcal{Y}_0 \rightarrow \mathbb{R}$. The functional $I : \mathcal{P}(\mathcal{Y}_1 \times \mathcal{Y}_0) \rightarrow \mathbb{R} \cup \{+\infty\}$ is defined as
\begin{equation}
	I_c[\pi] = \int c(y_1,y_0) d\pi(y_1,y_0) \label{Defn: primal objective}
\end{equation}
The \textbf{optimal cost} $OT_c(P_1, P_0)$ is the infimum of $I_c[\pi]$ over $\Pi(P_1, P_0)$:
\begin{equation}
	OT_c(P_1, P_0) = \inf_{\pi \in \Pi(P_1, P_0)} I_c[\pi] = \inf_{\pi \in \Pi(P_1, P_0)} \int c(y_1,y_0) d\pi(y_1,y_0) \label{Defn: optimal transport primal problem, appendix}
\end{equation}
This minimization problem in \eqref{Defn: optimal transport primal problem, appendix} is known as \textbf{optimal transport}. When attained, a solution to \eqref{Defn: optimal transport primal problem, appendix} is called an \textbf{optimal transference plan} or \textbf{optimal coupling}. Attainment is common; \cite{villani2009optimal} theorem 4.1 implies:
\begin{restatable}[Optimal transport is attained]{lemma}{lemmaOptimalTransportAttainment}
	\label{Lemma: optimal transport is attained}
	\singlespacing
	
	Let $c : \mathcal{Y}_1 \times \mathcal{Y}_0 \rightarrow \mathbb{R}$ be lower semicontinuous and bounded from below. Then there exists $\pi^* \in \Pi(P_1, P_0)$ such that 
	\begin{equation*}
		E_{\pi^*}[c(Y_1, Y_0)] = \inf_{\pi \in \Pi(P_1, P_0)} \int c(y_1,y_0) d\pi(y_1,y_0)
	\end{equation*}
\end{restatable}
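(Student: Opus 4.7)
The plan is to prove this via the direct method in the calculus of variations: exhibit a minimizing sequence in $\Pi(P_1,P_0)$, extract a weakly convergent subsequence using tightness and Prokhorov's theorem, and then conclude by lower semicontinuity of the functional $I_c$ with respect to weak convergence. This is the standard Villani-style argument, and the only real content is verifying each ingredient rigorously.

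First I would observe that $\Pi(P_1,P_0)$ is nonempty: the product measure $P_1 \otimes P_0$ has the required marginals. Next, tightness: given $\varepsilon > 0$, use tightness of the Polish-space marginals $P_1$, $P_0$ to choose compact $K_1 \subseteq \mathcal{Y}_1$ and $K_0 \subseteq \mathcal{Y}_0$ with $P_d(K_d^c) < \varepsilon/2$, and note that for any $\pi \in \Pi(P_1,P_0)$,
\begin{equation*}
\pi\bigl((K_1 \times K_0)^c\bigr) \leq \pi(K_1^c \times \mathcal{Y}_0) + \pi(\mathcal{Y}_1 \times K_0^c) = P_1(K_1^c) + P_0(K_0^c) < \varepsilon,
\end{equation*}
so $\Pi(P_1,P_0)$ is uniformly tight. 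Prokhorov's theorem then gives sequential relative compactness in the topology of weak convergence.

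Second, I would verify that $\Pi(P_1,P_0)$ is weakly closed. If $\pi_n \rightharpoonup \pi$ in $\mathcal{P}(\mathcal{Y}_1 \times \mathcal{Y}_0)$ with each $\pi_n \in \Pi(P_1,P_0)$, then for any bounded continuous $f:\mathcal{Y}_1 \to \mathbb{R}$, the function $(y_1,y_0) \mapsto f(y_1)$ is bounded and continuous on the product, so $\int f \, dP_1 = \int f(y_1)\, d\pi_n \to \int f(y_1)\, d\pi$. This identifies the first marginal of $\pi$ as $P_1$, and analogously for the second. Combined with tightness, this shows $\Pi(P_1,P_0)$ is sequentially compact.

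Finally, I would establish that $I_c[\pi] = \int c\, d\pi$ is lower semicontinuous on $\mathcal{P}(\mathcal{Y}_1 \times \mathcal{Y}_0)$ equipped with weak convergence. Here the hypotheses on $c$ are essential: since $c$ is bounded below, we may shift to assume $c \geq 0$ without loss of generality (the shift only adds a constant to $I_c$). A nonnegative lower semicontinuous function on a Polish space can be written as the pointwise supremum of an increasing sequence of bounded continuous nonnegative functions $c_k$, so by monotone convergence and the portmanteau characterization,
\begin{equation*}
I_c[\pi] = \sup_k \int c_k \, d\pi \leq \sup_k \liminf_{n \to \infty} \int c_k \, d\pi_n \leq \liminf_{n \to \infty} \int c \, d\pi_n = \liminf_{n \to \infty} I_c[\pi_n].
\end{equation*}
With these three ingredients, take a minimizing sequence $\{\pi_n\} \subseteq \Pi(P_1,P_0)$ with $I_c[\pi_n] \downarrow OT_c(P_1,P_0)$, extract a subsequence $\pi_{n_k} \rightharpoonup \pi^* \in \Pi(P_1,P_0)$, and use lower semicontinuity to conclude $I_c[\pi^*] \leq \liminf_k I_c[\pi_{n_k}] = OT_c(P_1,P_0)$, which forces equality.

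The step most at risk of technical slippage is the lower semicontinuity of $I_c$. The approximation of a nonnegative lower semicontinuous function by an increasing sequence of bounded continuous functions on a metric space is standard (e.g., $c_k(y) = \min\{k, \inf_{y'}\{c(y') + k\,d(y,y')\}\}$), but one must be careful that the reduction to $c \geq 0$ is legitimate — which it is, because a constant shift adds the same finite constant to every $\int c\, d\pi$ and so does not affect either existence or identity of minimizers.
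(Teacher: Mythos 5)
Your proof is correct and is exactly the argument underlying \cite{villani2009optimal} Theorem 4.1, which the paper cites without reproducing; the paper gives no proof of its own, so there is nothing to diverge from. All three ingredients (tightness via Prokhorov, weak closedness of $\Pi(P_1,P_0)$, and lower semicontinuity of $\pi \mapsto \int c\, d\pi$ via the Moreau--Yosida-type monotone approximation after the legitimate reduction to $c \geq 0$) are handled correctly.
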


The dual problem will require some additional notation. For any probability measure $P$ let $L^1(P)$ denote the $P$-integrable functions. Define
\begin{equation}
	\Phi_c = \left\{(\varphi, \psi) \in L^1(P_1) \times L^1(P_0) \; ; \; \varphi(y_1) + \psi(y_0) \leq c(y_1,y_0)\right\}, \label{Defn: Phi_c, appendix} 
\end{equation}
and $J : L^1(P_1) \times L^1(P_0) \rightarrow \mathbb{R}$ by
\begin{equation}
	J(\varphi, \psi) = \int_{\mathcal{Y}_1} \varphi(y_1) dP_1(y_1) + \int_{\mathcal{Y}_0} \psi(y_0) dP_0(y_0) \label{Defn: dual problem objective}
\end{equation}
The \textbf{dual problem} of optimal transport is 
\begin{equation}
	\sup_{(\varphi, \psi) \in \Phi_c} J(\varphi, \psi) = \sup_{(\varphi,\psi) \in \Phi_c} \int \varphi(y_1) dP_1(y_1) + \int \psi(y_0) dP_0(y_0) \label{Defn: dual problem}
\end{equation}

\subsection{Duality}
\label{Appendix: duality in optimal transport, duality}

For any topological space $\mathcal{Z}$, let $\mathcal{C}_b(\mathcal{Z})$ denotes the set of functions $f : \mathcal{Z}\rightarrow \mathbb{R}$ that are continuous and bounded, and 
\begin{equation}
	\Phi_c \cap \mathcal{C}_b = \left\{(\varphi, \psi) \in \mathcal{C}_b(\mathcal{Y}_1) \times \mathcal{C}_b(\mathcal{Y}_0) \; ; \; \varphi(y_1) + \psi(y_0) \leq c(y_1, y_0)\right\} \label{Defn: Phi_c cap C_b}
\end{equation}

The following weak duality statement is \cite{villani2003topics} proposition 1.5. 

\begin{restatable}[Weak duality]{lemma}{lemmaWeakDuality} 
	\label{Lemma: weak duality}
	\singlespacing
	
	\begin{equation*}
		\sup_{(\varphi, \psi) \in \Phi_c \cap \mathcal{C}_b} J(\varphi, \psi) \leq \sup_{(\varphi, \psi) \in \Phi_c} J(\varphi, \psi) \leq \inf_{\pi \in \Pi(P_1, P_0)} I_c[\pi]
	\end{equation*}
\end{restatable}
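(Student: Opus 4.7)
The plan is to prove the two inequalities separately, with the first being immediate and the second being a direct computation using the marginal property of couplings and the constraint defining $\Phi_c$.

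For the first inequality, I would simply note that $\Phi_c \cap \mathcal{C}_b \subseteq \Phi_c$, so taking supremum of $J$ over the smaller set can only yield a smaller (or equal) value than over the larger set. No computation needed.

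For the second inequality, I would fix an arbitrary pair $(\varphi, \psi) \in \Phi_c$ and an arbitrary coupling $\pi \in \Pi(P_1, P_0)$, and show $J(\varphi, \psi) \leq I_c[\pi]$. The key observation is that since $\pi$ has marginals $P_1$ and $P_0$, we have
\begin{equation*}
	\int \varphi(y_1)\, dP_1(y_1) = \int \varphi(y_1)\, d\pi(y_1, y_0), \qquad \int \psi(y_0)\, dP_0(y_0) = \int \psi(y_0)\, d\pi(y_1, y_0),
\end{equation*}
which follows from \eqref{Defn: pi has marginal P_1} and \eqref{Defn: pi has marginal P_0} by the usual measure-theoretic argument (approximating $\varphi, \psi \in L^1(P_d)$ by simple functions and using that both sides agree on indicators). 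Integrability of $\varphi$ and $\psi$ with respect to $\pi$ is ensured by the fact that $\varphi \in L^1(P_1) = L^1(\pi \circ \text{Pr}_1^{-1})$, so these change-of-variable identities are well-defined. Adding the two identities and invoking the pointwise constraint $\varphi(y_1) + \psi(y_0) \leq c(y_1, y_0)$ defining $\Phi_c$ yields
\begin{equation*}
	J(\varphi, \psi) = \int \bigl[\varphi(y_1) + \psi(y_0)\bigr]\, d\pi(y_1, y_0) \leq \int c(y_1, y_0)\, d\pi(y_1, y_0) = I_c[\pi].
\end{equation*}

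Since $(\varphi, \psi) \in \Phi_c$ and $\pi \in \Pi(P_1, P_0)$ were arbitrary, taking the supremum on the left and the infimum on the right yields $\sup_{(\varphi,\psi) \in \Phi_c} J(\varphi, \psi) \leq \inf_{\pi \in \Pi(P_1, P_0)} I_c[\pi]$, completing the proof.

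There is essentially no obstacle here; the only minor care point is handling the case where $I_c[\pi] = +\infty$ or where the integrals against $\pi$ need to be interpreted properly, but both are benign since $\varphi + \psi$ is bounded above by $c$ and one can always invoke the convention that any finite quantity is bounded by $+\infty$. The result really is just a restatement of the marginal property combined with the defining inequality of $\Phi_c$.
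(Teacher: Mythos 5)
Your proof is correct and follows the same two-step argument the paper attributes to Villani: the first inequality is set inclusion, and the second rewrites $J(\varphi,\psi)$ as an integral against $\pi$ via the marginal property and then applies the pointwise constraint $\varphi(y_1)+\psi(y_0)\leq c(y_1,y_0)$. The extra care you take with integrability and the $L^1$ change-of-variables is sound but not strictly needed at the level of detail the paper uses.
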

The following strong duality statement can be directly inferred from \cite{villani2009optimal} theorem 5.10, or \cite{santambrogio2015optimal} theorem 1.42, and so is presented without proof.

\begin{restatable}[Strong duality]{theorem}{theoremStrongDuality}
	\label{Theorem: strong duality}
	\singlespacing
	
	Let $c : \mathcal{Y}_1 \times \mathcal{Y}_0 \rightarrow \mathbb{R}$ be lower semi-continuous and bounded from below. Then 
	\begin{equation}
		\inf_{\pi \in \Pi(P_1, P_0)} I_c[\pi] = \sup_{\varphi, \psi \in \Phi_c} J(\varphi, \psi) = \sup_{(\varphi, \psi)\in \Phi_c \cap \mathcal{C}_b} J(\varphi, \psi) \label{Display: theorem, strong duality result}
	\end{equation}
	Moreover, the infimum of the left-hand side of \eqref{Display: theorem, strong duality result} is attained. 
\end{restatable}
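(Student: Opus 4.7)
The plan is to prove this in the classical Kantorovich style, leaning on the Fenchel--Rockafellar theorem for the core duality and handling attainment separately via weak compactness. Weak duality is already in hand from the preceding lemma, so only three things remain: (i) primal attainment, (ii) the equality $\inf_{\Pi} I_c = \sup_{\Phi_c \cap \mathcal{C}_b} J$, and (iii) upgrading the right-hand supremum from $\Phi_c \cap \mathcal{C}_b$ to all of $\Phi_c$.

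For (i), I would first note that $\Pi(P_1,P_0)$ is tight: each marginal is tight (being a Borel probability on a Polish space), so for any $\varepsilon > 0$ choose compacts $K_d \subseteq \mathcal{Y}_d$ with $P_d(K_d^c) < \varepsilon/2$, and then every $\pi \in \Pi(P_1,P_0)$ satisfies $\pi((K_1 \times K_0)^c) \leq \varepsilon$. By Prokhorov, $\Pi(P_1,P_0)$ is relatively weakly compact, and it is also weakly closed (the marginal constraints are closed under weak convergence). Lower semicontinuity and boundedness-from-below of $c$ give $\pi \mapsto I_c[\pi]$ lower semicontinuous for the weak topology (approximate $c$ from below by an increasing sequence of bounded continuous functions and apply monotone convergence). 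A lower semicontinuous function on a weakly compact set attains its infimum, which gives attainment as in the earlier lemma.

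For (ii), I would first reduce to the case where $c$ is itself continuous and bounded, and then apply Fenchel--Rockafellar on $E = \mathcal{C}_b(\mathcal{Y}_1 \times \mathcal{Y}_0)$ with sup norm. Define
\begin{align*}
\Theta(u) &= \begin{cases} 0 & \text{if } u(y_1,y_0) \geq -c(y_1,y_0) \text{ for all } (y_1,y_0) \\ +\infty & \text{otherwise} \end{cases} \\
\Xi(u) &= \begin{cases} -\int \varphi\, dP_1 - \int \psi\, dP_0 & \text{if } u(y_1,y_0) = \varphi(y_1) + \psi(y_0),\ \varphi \in \mathcal{C}_b(\mathcal{Y}_1),\ \psi \in \mathcal{C}_b(\mathcal{Y}_0) \\ +\infty & \text{otherwise} \end{cases}
\end{align*}
Both are convex and proper; $\Theta$ is continuous at the constant function $u_0 \equiv \|c\|_\infty + 1$, which lies in the (nonempty) effective domain of $\Xi$. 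Fenchel--Rockafellar gives
\begin{equation*}
\inf_{u \in E}\bigl(\Theta(u) + \Xi(u)\bigr) = \max_{\mu \in E^*}\bigl(-\Theta^*(-\mu) - \Xi^*(\mu)\bigr).
\end{equation*}
A direct computation shows the left-hand infimum equals $-\sup_{(\varphi,\psi) \in \Phi_c \cap \mathcal{C}_b} J(\varphi,\psi)$ (substitute $\tilde\varphi = -\varphi$, $\tilde\psi = -\psi$). On the right, $\Xi^*(\mu) < \infty$ forces $\mu$ to be a Radon measure on $\mathcal{Y}_1 \times \mathcal{Y}_0$ with marginals $P_1, P_0$, while $\Theta^*(-\mu) < \infty$ forces $\mu \geq 0$ and gives $\Theta^*(-\mu) = \int c\, d\mu$. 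Thus the right-hand max equals $-\inf_{\pi \in \Pi(P_1,P_0)} I_c[\pi]$, delivering equality. To pass from continuous bounded $c$ to lower semicontinuous $c$ bounded below, write $c = \sup_n c_n$ with $c_n \in \mathcal{C}_b$ and $c_n \uparrow c$ (possible by lower semicontinuity on the Polish domain, using e.g. Moreau--Yosida inf-convolutions with the metric). Monotone convergence gives $I_{c_n}[\pi] \uparrow I_c[\pi]$ for each $\pi$, and a standard argument combining this with the continuous-case duality shows $OT_{c_n}(P_1,P_0) \uparrow OT_c(P_1,P_0)$ while the dual values also pass to the limit.

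For (iii), the inequalities $\sup_{\Phi_c \cap \mathcal{C}_b} J \leq \sup_{\Phi_c} J \leq \inf_\Pi I_c$ from weak duality together with the equality $\sup_{\Phi_c \cap \mathcal{C}_b} J = \inf_\Pi I_c$ established in (ii) force all three quantities to agree. The main obstacle I anticipate is the careful verification of the Fenchel--Rockafellar hypotheses and the identification of $\Xi^*$ and $\Theta^*$: one must argue via a Riesz-type representation that any continuous linear functional on $\mathcal{C}_b(\mathcal{Y}_1 \times \mathcal{Y}_0)$ finite on $\Xi$ must be integration against a signed Radon measure whose marginals are $P_1$ and $P_0$, which uses tightness of the marginals. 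The approximation step in passing from continuous bounded to lower semicontinuous costs is also delicate: one has to check that the inf-convolution approximants $c_n$ remain bounded below uniformly and that the sup over dual feasible pairs for $c_n$ genuinely converges to the sup for $c$, not just that primal values converge.
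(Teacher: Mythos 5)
The paper does not prove this theorem at all: immediately before the statement it notes that the result ``can be directly inferred from \cite{villani2009optimal} theorem 5.10, or \cite{santambrogio2015optimal} theorem 1.42, and so is presented without proof.'' Your proposal, by contrast, reproduces (at sketch level) the standard textbook proof of Kantorovich duality. That is a legitimate difference in route: the paper buys brevity by delegating to the literature; you buy self-containment at the price of having to establish the two technical points you flag.

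A few remarks on the sketch itself. There is a sign slip in your definition of $\Xi$: with the minus signs as written, $\inf(\Theta+\Xi) = \inf_{\Phi_c\cap\mathcal{C}_b} J = -\infty$, not $-\sup_{\Phi_c\cap\mathcal{C}_b} J$. The standard Fenchel--Rockafellar setup (as in \cite{villani2003topics}, Theorem~1.3) takes $\Xi(u)=+\int\varphi\,dP_1+\int\psi\,dP_0$; your claimed identity and the substitution $\tilde\varphi=-\varphi$, $\tilde\psi=-\psi$ are then correct. More substantively, the two obstacles you identify are exactly where the real work lies when $\mathcal{Y}_1,\mathcal{Y}_0$ are not assumed compact. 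The dual of $\mathcal{C}_b$ on a non-compact Polish space contains finitely additive functionals, so one cannot directly read off ``$\mu$ is a Radon measure with the prescribed marginals'' from finiteness of $\Xi^*$; one has to argue that the marginal constraints (against tight Radon probabilities $P_1,P_0$) force tightness, hence countable additivity, of the candidate $\mu$. This is precisely why \cite{villani2003topics} proves the result first for compact spaces (Theorem~1.3) and then extends to Polish spaces by a separate truncation argument, and why \cite{villani2009optimal} Theorem~5.10 uses a different, more robust route. Similarly, the monotone-approximation step needs the epi-convergence argument (extract weakly convergent minimizers $\pi_n$ of $I_{c_n}$, use lower semicontinuity of each $I_{c_m}$ against the limit $\pi^*$, then let $m\to\infty$) to conclude $\inf_\Pi I_{c_n}\uparrow\inf_\Pi I_c$. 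You correctly anticipate both issues but leave them open; as written the proposal is therefore a (standard, essentially correct) outline rather than a complete proof, which is a larger burden than the paper takes on by simply citing.
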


\subsection{$c$-concave functions}
\label{Appendix: duality in optimal transport, c-concave functions}

For any function $\varphi : \mathcal{Y}_1 \rightarrow \mathbb{R}$ and cost function $c(y_1,y_0)$, define the \textbf{c-transform} of $\varphi$ as the function $\varphi^c : \mathcal{Y}_0 \rightarrow \mathbb{R}$ given by
\begin{equation*}
	\varphi^c(y_0) = \inf_{y_1 \in \mathcal{Y}_1} \{c(y_1, y_0) - \varphi(y_1)\}.
\end{equation*}
Similarly, $\psi^c(y_1) = \inf_{y_0 \in \mathcal{Y}_0} \{c(y_1,y_0) - \psi(y_0)\}$ is the $c$-transform of $\psi$. $\varphi$ is called \textbf{$c$-concave} if $\varphi^{cc} = (\varphi^c)^c = \varphi$. If $\varphi$ is $c$-concave, then $(\varphi, \varphi^c)$ is called a \textbf{$c$-concave conjugate pair}. 

The following lemma \ref{Lemma: c-concave functions, generating c-concave functions} is exercise 2.35 found in \cite{villani2003topics} and presented without proof.

\begin{restatable}[\cite{villani2003topics} exercise 2.35]{lemma}{lemmaGeneratingCConcaveFunctions}
	\label{Lemma: c-concave functions, generating c-concave functions}
	\singlespacing
	
	Let $\mathcal{Y}_1$ and $\mathcal{Y}_0$ be nonempty sets and $c : \mathcal{Y}_1 \times \mathcal{Y}_0 \rightarrow \mathbb{R}$ be an arbitrary function. Let $\varphi : \mathcal{Y}_1 \rightarrow \mathbb{R}$. Then
	\begin{enumerate}[label=(\roman*)]
		\item $\varphi(y_1) + \varphi^c(y_0) \leq c(y_1,y_0)$ for all $(y_1, y_0) \in \mathcal{Y}_1 \times \mathcal{Y}_0$ 
		\item $\varphi^{cc}(y_1) \geq \varphi(y_1)$ for all $y_1 \in \mathcal{Y}_1$, and
		\item $\varphi^{ccc}(y_0) = \varphi^c(y_0)$ for all $y_0 \in \mathcal{Y}_0$
	\end{enumerate}
	It follows that $\varphi^{cc} = \varphi$ if and only if $\varphi$ is $c$-concave.
\end{restatable}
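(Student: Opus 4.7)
The plan is to prove the three items directly from the definition of the $c$-transform and then observe that the final equivalence is immediate from the definition of $c$-concavity.

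For (i), I would simply invoke the definition $\varphi^c(y_0) = \inf_{\tilde y_1 \in \mathcal{Y}_1}\{c(\tilde y_1, y_0) - \varphi(\tilde y_1)\}$: since the infimum is bounded above by the value at any particular point $\tilde y_1 = y_1$, we get $\varphi^c(y_0) \leq c(y_1, y_0) - \varphi(y_1)$, which is the claim after rearranging.

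For (ii), the idea is that (i) is exactly a rearrangement of the constraint we need, once we take a second $c$-transform. Concretely, from (i) applied with the roles of the arguments reversed, $\varphi(y_1) \leq c(y_1, y_0) - \varphi^c(y_0)$ for every $y_0 \in \mathcal{Y}_0$, so taking the infimum over $y_0$ on the right-hand side yields $\varphi(y_1) \leq \inf_{y_0}\{c(y_1, y_0) - \varphi^c(y_0)\} = \varphi^{cc}(y_1)$.

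For (iii), I would sandwich $\varphi^{ccc}$ between $\varphi^c$ and itself from two directions. Applying (ii) to the function $\varphi^c$ (viewed as a function on $\mathcal{Y}_0$, with the roles of $\mathcal{Y}_1$ and $\mathcal{Y}_0$ swapped in the definition of the $c$-transform) gives $\varphi^{ccc} \geq \varphi^c$ pointwise. For the reverse inequality, I would use that taking a $c$-transform reverses pointwise inequalities: from $\varphi^{cc} \geq \varphi$ obtained in (ii), for each $y_0$
\begin{equation*}
\varphi^{ccc}(y_0) = \inf_{y_1}\{c(y_1, y_0) - \varphi^{cc}(y_1)\} \leq \inf_{y_1}\{c(y_1, y_0) - \varphi(y_1)\} = \varphi^c(y_0).
\end{equation*}
Combining the two inequalities gives equality.

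The final equivalence is essentially by definition: $\varphi$ being $c$-concave means $\varphi^{cc} = \varphi$, so the ``if and only if'' is immediate once (ii) is in place (the forward direction is the definition, and the reverse direction uses (iii) to note that $\varphi^c = \varphi^{ccc} = (\varphi^{cc})^c$, making $\varphi^{cc}$ the $c$-transform of something and hence a legitimate candidate). I do not anticipate any real obstacle here: the only subtlety worth being careful about is keeping track of which variable lives in which space when applying (i) and (ii) with the roles of $\mathcal{Y}_1$ and $\mathcal{Y}_0$ interchanged in the transform used to pass from $\varphi^c$ to $\varphi^{cc}$.
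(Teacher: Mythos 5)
Your proof of all three items is correct. For (i) you do exactly what the paper does: bound the infimum in the definition of $\varphi^c$ by evaluation at the point $y_1$. For (ii) and (iii), however, you take a slightly different (and arguably cleaner) route than the paper. The paper proves (ii) by expanding $\varphi^{cc}$ into a nested inf/sup and bounding the inner sup from below by choosing $y_1' = y_1$; you instead rearrange (i) to $\varphi(y_1) \leq c(y_1,y_0) - \varphi^c(y_0)$ and take the infimum over $y_0$ directly, which avoids the nested expansion. Similarly for the $\leq$ half of (iii) the paper does a direct three-level inf/sup/inf computation, whereas you invoke the order-reversing property of the $c$-transform: since $\varphi^{cc} \geq \varphi$ by (ii), $\varphi^{ccc} = (\varphi^{cc})^c \leq \varphi^c$. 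Both routes are valid; yours buys a little transparency by leaning on the already-established items rather than recomputing from scratch.

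The one place you should tighten up is the final equivalence. You write that ``$c$-concave means $\varphi^{cc}=\varphi$,'' under which the claim is tautological, but then go on to give a nontrivial argument as if a different definition were in play — which it is: the standard definition (used in the paper's own proof) is that $\varphi$ is $c$-concave if $\varphi = \xi^c$ for some function $\xi$, and the lemma's last sentence is establishing the equivalence of the two characterizations. Your parenthetical remark is also muddled about which direction is which. The clean version is: if $\varphi^{cc} = \varphi$, then $\varphi = (\varphi^c)^c$ is a $c$-transform, hence $c$-concave; conversely, if $\varphi = \xi^c$ for some $\xi$, then by (iii) (applied to $\xi$, with the spaces swapped) $\varphi^{cc} = \xi^{ccc} = \xi^c = \varphi$. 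State the definition you are working with and then the argument is a two-line application of (iii), rather than the slightly circular phrasing you currently have.
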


For $H \subseteq \left\{(f,g) \; ; \; f : \mathcal{Y}_1 \rightarrow \mathbb{R}, \text{ and } g : \mathcal{Y}_0 \rightarrow \mathbb{R}\right\}$, let 

\begin{align}
	\mathcal{F}_c^c(H) &= \left\{\varphi^c : \mathcal{Y}_0 \rightarrow \mathbb{R} \; ; \; \exists (f,g) \in H \text{ s.t. } \varphi^c(y_0) = \inf_{y_1 \in \mathcal{Y}_1}\{c(y_1,y_0) - f(y_1)\}\right\} \label{Defn: F_c^c for arbitrary set of functions} \\
	\mathcal{F}_c(H) &= \left\{\varphi : \mathcal{Y}_1 \rightarrow \mathbb{R} \; ; \; \exists \varphi^c \in F_c^c(H) \text{ s.t. } \varphi(y_1) = \inf_{y_0 \in \mathcal{Y}_0}\{c(y_1,y_0) - \varphi^c(y_0)\}\right\} \notag
\end{align}
$\mathcal{F}_c(H)$ is called the \textbf{$c$-concave functions generated by $H$}, and $\mathcal{F}_c^c(H)$ the \textbf{$c$-conjugates generated by $H$.}\footnote{$H$ is a typically a subset of $L^1(P_1) \times L^1(P_0)$. As defined the sets $\mathcal{F}_c(H)$ and $\mathcal{F}_c^c(H)$ only depend on the functions in $H$ that map $\mathcal{Y}_0$ to $\mathbb{R}$. This notational choice is more natural with the reasoning of lemma \ref{Lemma: c-concave functions, dual restricted to measurable and integrable c-concave functions satisfies strong duality} below.} Notice that not every $(\varphi, \psi) \in \mathcal{F}_c(H) \times \mathcal{F}_c^c(H)$ is a $c$-concave conjugate pair.

\begin{restatable}[Restricting the dual to $c$-concave functions]{lemma}{lemmaRestrictingTheDualToCConcaveFunctions}
	\label{Lemma: c-concave functions, dual restricted to measurable and integrable c-concave functions satisfies strong duality}
	\singlespacing
	
	Let $\Phi_{cs} \subseteq \Phi_c$ be such that
	\begin{enumerate}
		\item strong duality holds: $\inf_{\pi \in \Pi(P_1, P_0)} I_c[\pi] = \sup_{(\varphi, \psi) \in \Phi_{cs}} J(\varphi, \psi)$, and
		\item the $c$-concave functions generated by $\Phi_{cs}$ are integrable: $\mathcal{F}_c(\Phi_{cs}) \times \mathcal{F}_c^c(\Phi_{cs}) \subset L^1(P_1) \times L^1(P_0)$ 
	\end{enumerate}
	then 
	\begin{equation*}
		\inf_{\pi \in \Pi(P_1, P_0)} I_c[\pi] = \sup_{\varphi \in \mathcal{F}_c(\Phi_{cs})} J(\varphi, \varphi^c) = \sup_{(\varphi,\psi) \in \Phi_c \cap \big(\mathcal{F}_c(\Phi_{cs}) \times \mathcal{F}_c^c(\Phi_{cs})\big)} J(\varphi,\psi).
	\end{equation*}
\end{restatable}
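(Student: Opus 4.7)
The plan is to prove the chain of equalities by a sandwich argument, in which the $c$-transform operation provides a canonical way to replace any feasible dual pair by a $c$-concave conjugate pair with a weakly larger objective value.

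First I would establish the easy chain of upper bounds
\[
\sup_{\varphi \in \mathcal{F}_c(\Phi_{cs})} J(\varphi, \varphi^c) \;\leq\; \sup_{(\varphi,\psi) \in \Phi_c \cap (\mathcal{F}_c(\Phi_{cs}) \times \mathcal{F}_c^c(\Phi_{cs}))} J(\varphi, \psi) \;\leq\; \sup_{(\varphi, \psi) \in \Phi_c} J(\varphi, \psi) \;\leq\; \inf_{\pi \in \Pi(P_1, P_0)} I_c[\pi].
\]
The first inequality holds because for any $\varphi \in \mathcal{F}_c(\Phi_{cs})$, Lemma \ref{Lemma: c-concave functions, generating c-concave functions}(i) gives $\varphi(y_1) + \varphi^c(y_0) \leq c(y_1,y_0)$ pointwise, so $(\varphi, \varphi^c)$ lies in $\Phi_c \cap (\mathcal{F}_c(\Phi_{cs}) \times \mathcal{F}_c^c(\Phi_{cs}))$ (integrability comes from assumption 2). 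The second is set inclusion, and the third is weak duality (Lemma \ref{Lemma: weak duality}).

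Next I would prove the reverse inequality $\inf_{\pi \in \Pi(P_1,P_0)} I_c[\pi] \leq \sup_{\varphi \in \mathcal{F}_c(\Phi_{cs})} J(\varphi, \varphi^c)$. Using assumption 1, it suffices to show that for every $(f,g) \in \Phi_{cs}$ there exists some $\tilde{\varphi} \in \mathcal{F}_c(\Phi_{cs})$ with $J(\tilde{\varphi}, \tilde{\varphi}^c) \geq J(f,g)$. The candidate is $\tilde{\varphi} = f^{cc}$, which lies in $\mathcal{F}_c(\Phi_{cs})$ by definition, and whose $c$-transform is $\tilde{\varphi}^c = f^{ccc} = f^c \in \mathcal{F}_c^c(\Phi_{cs})$ by Lemma \ref{Lemma: c-concave functions, generating c-concave functions}(iii). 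The feasibility constraint $f(y_1) + g(y_0) \leq c(y_1,y_0)$ rearranges to $g(y_0) \leq \inf_{y_1} \{c(y_1,y_0) - f(y_1)\} = f^c(y_0) = \tilde{\varphi}^c(y_0)$, and Lemma \ref{Lemma: c-concave functions, generating c-concave functions}(ii) gives $\tilde{\varphi}(y_1) = f^{cc}(y_1) \geq f(y_1)$, both pointwise. Integrating the first against $P_0$ and the second against $P_1$ yields $J(\tilde{\varphi}, \tilde{\varphi}^c) \geq J(f,g)$, and taking the supremum over $(f,g) \in \Phi_{cs}$ closes the sandwich.

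The main obstacle will not be structural but rather the bookkeeping around integrability and measurability: I need assumption 2 precisely to guarantee $\tilde{\varphi} \in L^1(P_1)$ and $\tilde{\varphi}^c \in L^1(P_0)$, so that $J(\tilde{\varphi}, \tilde{\varphi}^c)$ is well-defined and the improved pair genuinely lies in $\Phi_c$ rather than merely satisfying the pointwise inequality in the extended reals. Once integrability is granted, the argument is a two-step sandwich whose nontrivial ingredient is the one-shot $c$-transform improvement $f \rightsquigarrow f^{cc}$.
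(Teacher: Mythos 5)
Your proof is correct and follows essentially the same route as the paper's: improve an arbitrary $(\varphi,\psi)\in\Phi_{cs}$ to the $c$-concave conjugate pair $(\varphi^{cc},\varphi^{c})$ via the $c$-transform, invoke Lemma \ref{Lemma: c-concave functions, generating c-concave functions} for the pointwise dominations $\varphi\leq\varphi^{cc}$ and $\psi\leq\varphi^{c}$, and close the sandwich with a set-inclusion inequality against $\sup_{\Phi_c}J$. The only cosmetic difference is that you appeal to weak duality for the upper bound while the paper reuses the assumed strong duality over $\Phi_{cs}$ to identify $\sup_{\Phi_c}J$ with $\inf_\pi I_c[\pi]$; both are valid.
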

\begin{proof}
	\singlespacing
	
	Let $(\varphi, \psi) \in \Phi_{cs}$. $\psi(y_0) \leq c(y_1,y_0) - \varphi(y_1)$ implies $\psi(y_0) \leq \varphi^c(y_0)$, and lemma \ref{Lemma: c-concave functions, generating c-concave functions} shows both that $\varphi(y_1) \leq \varphi^{cc}(y_1)$ and the pair $(\varphi^{cc}, \varphi^c)$ is a $c$-concave conjugate pair; thus $(\varphi^{cc}, \varphi^c) \in \Phi_c \cap \big(\mathcal{F}_c(\Phi_{cs}) \times \mathcal{F}_c^c(\Phi_{cs})\big)$.
	
	Since $\varphi^{cc}$ and $\varphi^c$ are integrable by assumption, $J(\varphi, \psi) \leq J(\varphi^{cc}, \varphi^c)$ and hence 
	\begin{equation*}
		\inf_{\pi \in \Pi(P_1, P_0)} I_c[\pi] = \sup_{(\varphi, \psi) \in \Phi_{cs}} J(\varphi, \psi) \leq \sup_{\varphi^{cc} \in \mathcal{F}_c(\Phi_{cs})} J(\varphi^{cc}, \varphi^c) \leq \sup_{(\varphi,\psi) \in \Phi_c \cap (\mathcal{F}_c(\Phi_{cs}) \times \mathcal{F}_c^c(\Phi_{cs}))} J(\varphi,\psi)
	\end{equation*}
	Finally, since $\Phi_c \cap (\mathcal{F}_c(\Phi_{cs}) \times \mathcal{F}_c^c(\Phi_{cs})) \subset \Phi_c$, it follows that
	\begin{equation*}
		\sup_{\varphi \in \mathcal{F}_c(\Phi_{cs})} J(\varphi, \varphi^c) \leq \sup_{(\varphi, \psi) \in \Phi_c} J(\varphi, \psi) = \inf_{\pi \in \Pi(P_1, P_0)} I_c[\pi]
	\end{equation*}
	with the final equality following from strong duality. 
\end{proof}

\begin{restatable}[Continuous cost function implies measurability of $c$-concave functions]{lemma}{lemmaContinuousCostFunctionMeasurableCConcaveFunctions}
	\label{Lemma: c-concave functions, if c is c is continuous then c-transforms are measurable} 
	\singlespacing
	
	If $c : \mathcal{Y}_1 \times \mathcal{Y}_0 \rightarrow \mathbb{R}$ is continuous, then for any $\psi : \mathcal{Y}_0 \rightarrow \mathbb{R}$, $\varphi(y_1) = \inf_{y_0 \in \mathcal{Y}_0} \{c(y_1,y_0) - \psi(y_0)\}$ and $\varphi^c(y_0) = \inf_{y_1 \in \mathcal{Y}_1} \{c(y_1,y_0) - \varphi(y_1)\}$ are upper semicontinuous and hence measurable.
\end{restatable}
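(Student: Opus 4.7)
The plan is to exploit the classical fact that a pointwise infimum of a family of continuous (in particular upper semicontinuous) functions is itself upper semicontinuous, and then recall that upper semicontinuous functions on a metric space are Borel measurable.

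First I would handle $\varphi(y_1) = \inf_{y_0 \in \mathcal{Y}_0} \{c(y_1,y_0) - \psi(y_0)\}$. Fix $y_0 \in \mathcal{Y}_0$ and consider the map $f_{y_0} : \mathcal{Y}_1 \rightarrow \mathbb{R}$ defined by $f_{y_0}(y_1) = c(y_1, y_0) - \psi(y_0)$. Because $c$ is jointly continuous, for any fixed $y_0$ the section $y_1 \mapsto c(y_1, y_0)$ is continuous, and $\psi(y_0)$ is simply a constant added. Hence each $f_{y_0}$ is continuous, and in particular upper semicontinuous. Writing $\varphi(y_1) = \inf_{y_0 \in \mathcal{Y}_0} f_{y_0}(y_1)$ expresses $\varphi$ as the pointwise infimum of an arbitrary family of upper semicontinuous functions, which is upper semicontinuous. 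To verify this directly, I would check that for each $\alpha \in \mathbb{R}$, the superlevel set $\{y_1 \; ; \; \varphi(y_1) \geq \alpha\}$ is closed: if $\varphi(y_1^n) \geq \alpha$ and $y_1^n \rightarrow y_1$, then for every $y_0$, $f_{y_0}(y_1^n) \geq \varphi(y_1^n) \geq \alpha$, and continuity of $f_{y_0}$ gives $f_{y_0}(y_1) \geq \alpha$; taking infimum over $y_0$ yields $\varphi(y_1) \geq \alpha$.

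Next, I would handle $\varphi^c(y_0) = \inf_{y_1 \in \mathcal{Y}_1}\{c(y_1,y_0) - \varphi(y_1)\}$. This requires a small adjustment because $\varphi$ itself is only known to be upper semicontinuous, not continuous, so $-\varphi$ is lower semicontinuous. For each fixed $y_1$, the map $g_{y_1}(y_0) = c(y_1, y_0) - \varphi(y_1)$ is continuous in $y_0$ (again a continuous section of $c$ minus a constant), so the same argument as above applies verbatim: $\varphi^c$ is the pointwise infimum over $y_1$ of continuous functions of $y_0$, hence upper semicontinuous. The key observation to stress is that one does \emph{not} need continuity in the variable being optimized over; one only needs continuity in the free variable.

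Finally, I would invoke the standard fact that an upper semicontinuous function on a metric (or more generally, topological) space is Borel measurable, since its superlevel sets $\{\varphi \geq \alpha\}$ are closed and therefore Borel, and these sets generate the Borel $\sigma$-algebra on $\mathbb{R}$ (via the complements $\{\varphi < \alpha\}$, which are open). This gives measurability of both $\varphi$ and $\varphi^c$, completing the proof. I do not anticipate any real obstacle here; the only point to be careful about is separating the role of the ``free'' variable from the variable being infimized over when asserting continuity of each $f_{y_0}$ or $g_{y_1}$.
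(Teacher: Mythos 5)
Your proof is correct and follows essentially the same approach as the paper: express each $c$-transform as a pointwise infimum over the non-free variable of functions that are continuous in the free variable, invoke the fact that a pointwise infimum of upper semicontinuous functions is upper semicontinuous, and conclude measurability from upper semicontinuity. The direct verification of both facts that you sketch, and your explicit observation that continuity in the variable being infimized over is irrelevant, are correct and merely spell out what the paper's citation to a standard result leaves implicit.
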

\begin{proof}
	\singlespacing
	
	The pointwise infimum of a family of upper semicontinuous functions is upper semicontinuous (\cite{aliprantis2006infinite} Lemma 2.41). Since $c(y_1,y_0)$ is continuous, for any fixed $y_0 \in \mathcal{Y}_0$ the function $y_1 \mapsto c(y_1,y_0) - \psi(y_0)$ is continuous and hence
	\begin{equation*}
		\varphi(y_1) = \inf_{y_0 \in \mathcal{Y}_0} \{c(y_1,y_0) - \psi(y_0)\}
	\end{equation*}
	is upper semicontinuous. Similarly, $\varphi^c(y_0) = \inf_{y_1 \in \mathcal{Y}_1}\{c(y_1,y_0) - \varphi(y_1)\}$ is upper semicontinuous. Being upper semicontinuous, $\varphi$ and $\varphi^c$ are measurable.
\end{proof}
\begin{remark}
	\label{Remark: continuous cost function implies measurable c-concave functions} 
	\singlespacing
	
	Compare lemma \ref{Lemma: c-concave functions, if c is c is continuous then c-transforms are measurable} with \cite{villani2009optimal} Remark 5.5 discussing measurability of $c$-concave functions. Note that continuity of $c$ is sufficient but not necessary for measurability of $c$-concave functions; see section \ref{Appendix: duality in optimal transport, c-concave functions, indicator cost functions} for counterexamples.
\end{remark}

\begin{restatable}[Universal bound on the the dual problem feasible set]{lemma}{lemmaUniversalBoundDualProblemsFeasibleSet}
	\label{Lemma: c-concave functions, universal bound for c-concave functions}
	\singlespacing
	
	Suppose $c : \mathcal{Y}_1 \times \mathcal{Y}_0 \rightarrow \mathbb{R}$ is bounded, and let $c_L = \inf_{(y_1,y_0) \in \mathcal{Y}_1 \times \mathcal{Y}_0} c(y_1,y_0)$, $c_H = \sup_{(y_1,y_0) \in \mathcal{Y}_1 \times \mathcal{Y}_0} c(y_1,y_0)$.

	\begin{enumerate}
		\item For any bounded functions $\varphi : \mathcal{Y}_1 \rightarrow \mathbb{R}$ and $\psi : \mathcal{Y}_0 \rightarrow \mathbb{R}$, $\varphi^c$ and $\psi^c$ are bounded. \label{Lemma: c-concave functions, universal bound for c-concave functions, claim 1}
		\item For any bounded, measurable $c$-conjugate pair $(\varphi, \varphi^c)$ there exists $\bar{\varphi}$ such that \label{Lemma: c-concave functions, universal bound for c-concave functions, claim 2}
		\begin{enumerate}[label=(\roman*)]
			\item $\bar{\varphi}$ and $\bar{\varphi}^c$ satisfy the bounds:
			\begin{align*}
				&c_L \leq \bar{\varphi}(y_1) \leq c_H &&c_L - c_H \leq \bar{\varphi}^c(y_0) \leq 0
			\end{align*}
			for all $(y_1,y_0) \in \mathcal{Y}_1 \times \mathcal{Y}_0$.
			
			\item $J(\varphi, \varphi^c) = J(\bar{\varphi}, \bar{\varphi}^c)$.
		\end{enumerate}
	\end{enumerate}
\end{restatable}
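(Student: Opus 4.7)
Part (1) is essentially immediate from the definitions. Let $M = \sup_{y_1} |\varphi(y_1)| < \infty$. For every $y_0$,
\begin{equation*}
	c_L - M \leq c(y_1, y_0) - \varphi(y_1) \leq c_H + M
\end{equation*}
for every $y_1$, so taking the infimum over $y_1$ yields $c_L - M \leq \varphi^c(y_0) \leq c_H + M$, showing $\varphi^c$ is bounded. The same argument applied to $\psi^c$ gives boundedness of $\psi^c$.

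For part (2), the plan is to exploit the ``$s$-shift'' invariance noted in appendix \ref{Appendix: properties of optimal transport, subsection differentiability, subsubsection full differentiability}: adding a constant to $\varphi$ subtracts the same constant from $\varphi^c$, preserving both the $c$-conjugate pair structure and the dual objective. By part (1), $s \coloneqq \sup_{y_0 \in \mathcal{Y}_0} \varphi^c(y_0)$ is finite. Define $\bar{\varphi}(y_1) = \varphi(y_1) + s$. A direct computation from the definition of the $c$-transform gives
\begin{equation*}
	\bar{\varphi}^c(y_0) = \inf_{y_1}\{c(y_1, y_0) - \varphi(y_1) - s\} = \varphi^c(y_0) - s,
\end{equation*}
and similarly $\bar{\varphi}^{cc}(y_1) = \varphi^{cc}(y_1) + s = \varphi(y_1) + s = \bar{\varphi}(y_1)$, so $(\bar{\varphi}, \bar{\varphi}^c)$ is again a $c$-conjugate pair.

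The bounds now fall out in sequence. First, $\bar{\varphi}^c(y_0) = \varphi^c(y_0) - s \leq 0$ by the choice of $s$. Second, $\bar{\varphi}(y_1) = \inf_{y_0}\{c(y_1, y_0) - \bar{\varphi}^c(y_0)\} \geq \inf_{y_0} c(y_1, y_0) \geq c_L$, using $\bar{\varphi}^c \leq 0$. Third, since $\bar{\varphi} \geq c_L$, we get $\bar{\varphi}^c(y_0) = \inf_{y_1}\{c(y_1, y_0) - \bar{\varphi}(y_1)\} \geq c_L - c_H$. Fourth, for any $\varepsilon > 0$ there exists $y_0^\varepsilon \in \mathcal{Y}_0$ with $\varphi^c(y_0^\varepsilon) > s - \varepsilon$, i.e.\ $\bar{\varphi}^c(y_0^\varepsilon) > -\varepsilon$, so
\begin{equation*}
	\bar{\varphi}(y_1) \leq c(y_1, y_0^\varepsilon) - \bar{\varphi}^c(y_0^\varepsilon) < c_H + \varepsilon,
\end{equation*}
and letting $\varepsilon \downarrow 0$ gives $\bar{\varphi}(y_1) \leq c_H$. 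Finally, linearity of the integrals yields
\begin{equation*}
	J(\bar{\varphi}, \bar{\varphi}^c) = P_1(\varphi) + s + P_0(\varphi^c) - s = J(\varphi, \varphi^c),
\end{equation*}
and measurability of $\bar{\varphi}$ and $\bar{\varphi}^c$ follows from measurability of $\varphi$ and $\varphi^c$ since $s$ is constant. The only subtle step is the upper bound $\bar{\varphi} \leq c_H$, which requires approximating the supremum defining $s$ from below; there is no obstacle here because the approximation is exactly what the definition of $\sup$ provides.
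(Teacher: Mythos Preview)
Your approach is correct and close in spirit to the paper's: both normalize by shifting the pair by a constant. The paper picks the shift so that $\sup \bar\varphi = c_H$ (namely $\bar\varphi = \varphi - \sup\varphi + c_H$) and then reads off all four bounds directly from the sharper inequalities $c_L - \sup\varphi \le \varphi^c \le c_H - \sup\varphi$ it establishes in claim~1. You instead pick the shift so that $\sup \bar\varphi^c = 0$, which works just as well but forces the $\varepsilon$-approximation in your step~4 to recover $\bar\varphi \le c_H$.

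There is one logical slip to repair. In your step~3 you write ``since $\bar\varphi \ge c_L$, we get $\bar\varphi^c(y_0) = \inf_{y_1}\{c(y_1,y_0) - \bar\varphi(y_1)\} \ge c_L - c_H$.'' That implication goes the wrong way: $\bar\varphi \ge c_L$ gives $c - \bar\varphi \le c - c_L$, an \emph{upper} bound on $\bar\varphi^c$, not a lower one. The lower bound $\bar\varphi^c \ge c_L - c_H$ follows instead from $\bar\varphi \le c_H$ (so that $c - \bar\varphi \ge c_L - c_H$), which is exactly what your step~4 proves. Swapping the order---prove step~4 first, then cite $\bar\varphi \le c_H$ as the premise for step~3---fixes the argument with no further changes.
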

\begin{proof}
	\singlespacing
	For claim \ref{Lemma: c-concave functions, universal bound for c-concave functions, claim 1}, let $\varphi$ be bounded and note that 
	\begin{equation}
		c_L - \sup \varphi \leq \underbrace{\inf_{y_1 \in \mathcal{Y}_1} \{c(y_1, y_0) - \varphi(y_1)\}}_{=\varphi^c(y_0)} \leq c_H - \sup \varphi \label{Display: lemma proof, c-concave bounds on varphi^c}
	\end{equation}
	are finite bounds on $\varphi^c$. The upper bound on $\varphi^c$ follows from the existence of a sequence $\{y_{1j}\}_{j=1}^\infty$ with $\varphi(y_{1j}) \rightarrow \sup_{y_1 \in \mathcal{Y}_1} \varphi(y_1)$, because $\varphi^c(y_0) = \inf_{y_1 \in \mathcal{Y}_1} \{c(y_1, y_0) - \varphi(y_1)\} \leq c(y_{1j}, y_0) - \varphi(y_{1j}) \leq c_H - \varphi(y_{1j})$ for all $j$. The same argument shows $\psi^c$ is bounded, specifically, 
	\begin{equation}
		c_L - \sup \psi \leq \underbrace{\inf_{y_0 \in \mathcal{Y}_0} \{c(y_1,y_0) - \psi(y_0)\}}_{= \psi^c(y_1)} \leq c_H - \sup \psi \label{Display: lemma proof, c-concave bounds on varphi^cc}
	\end{equation}
	
	For claim \ref{Lemma: c-concave functions, universal bound for c-concave functions, claim 2}, let $(\varphi, \varphi^c)$ be a $c$-conjugate pair, i.e. $\varphi(y_1) = \inf_{y_0 \in \mathcal{Y}_0} \{c(y_1,y_0) - \varphi^c(y_0)\}$. Notice that for any $s \in \mathbb{R}$, 
	\begin{align*}
		(\varphi + s)^c(y_0) = \inf_{y_1 \in \mathcal{Y}_1} \{c(y_1, y_0) - \varphi(y_1) - s\} = \varphi^c(y_0) - s \\
		(\varphi + s)^{cc}(y_0) = \inf_{y_0 \in \mathcal{Y}_0} \{c(y_1, y_0) - \varphi^c(y_1) + s\} = \varphi(y_1) + s
	\end{align*}
	Define $\bar{\varphi}(y_1) = \varphi(y_1) - \sup \varphi + c_H$, and notice that $\sup \bar{\varphi} = c_H$. Thus \eqref{Display: lemma proof, c-concave bounds on varphi^c} implies $c_L - c_H \leq \bar{\varphi}^c(y_0) \leq 0$ for all $y_0 \in \mathcal{Y}_0$, and so \eqref{Display: lemma proof, c-concave bounds on varphi^cc} implies $c_L \leq \bar{\varphi}^{cc}(y_1) = \bar{\varphi}(y_1) \leq c_H$. Finally,
	\begin{align*}
		J(\varphi, \varphi^c) &= \int \varphi(y_1) dP_1(y_1) + \int \varphi^c(y_0) dP_0(y_0) \\
		&= \int \varphi(y_1) - \sup \varphi + c_H dP_1(y_1) + \int \varphi^c(y_0) + \sup \varphi - c_H dP_0(y_0) \\
		&= J(\bar{\varphi}, \bar{\varphi}^c)
	\end{align*}
	which completes the proof.
\end{proof}
\begin{remark}
	\label{Remark: discussion of universal bounds on c-concave functions}
	\singlespacing
	
	Lemma \ref{Lemma: c-concave functions, universal bound for c-concave functions} shows that it is often without loss of generality to restrict the dual to classes of functions sharing universal bounds. For an example, see lemma \ref{Lemma: c-concave functions, smooth costs, strong duality} below.
	
	Note that when $c_L = 0$, the bounds simplify to 
	\begin{align*}
		&0 \leq \bar{\varphi}(y_1) \leq \lVert c \rVert_\infty, &&-\lVert c \rVert_\infty \leq \bar{\varphi}^c(y_0) \leq 0
	\end{align*}
	as in \cite{villani2003topics} Remark 1.13. Also note that, when any universal bound suffices, one can take
	\begin{align*}
		&-\lVert c\rVert_\infty \leq \bar{\varphi}(y_1) \leq \lVert c \rVert_\infty, &&-2\lVert c \rVert_\infty \leq \bar{\varphi}^c(y_0) \leq 0
	\end{align*}
	which depend only on $\lVert c \rVert_\infty = \sup_{(y_1,y_0) \in \mathcal{Y}_1\times \mathcal{Y}_0} \lvert c(y_1, y_0) \rvert$. 
\end{remark}

\subsubsection{$c$-concave functions of smooth cost functions}
\label{Appendix: duality in optimal transport, c-concave functions, smooth cost functions}

For $\alpha \in (0,1]$ and $L > 0$, $c : \mathcal{Y}_1 \times \mathcal{Y}_0 \rightarrow \mathbb{R}$ is called $(\alpha, L)$-\textbf{H\"older continuous} if 
\begin{equation*}
	\lvert c(y_1, y_0) - c(y_1', y_0') \rvert \leq L \lVert (y_1, y_0) - (y_1', y_0') \rVert^\alpha
\end{equation*}
for all $(y_1, y_0), (y_1', y_0') \in \mathcal{Y}_1 \times \mathcal{Y}_0$. 

\begin{restatable}[H\"older cost implies H\"older c-concave functions]{lemma}{lemmaHolderCostImpliesHolderCConcaveFunctions}
	\singlespacing
	\label{Lemma: c-concave functions, Holder cost implies Holder c-concave functions}
	
	Let $c : \mathcal{Y}_1 \times \mathcal{Y}_0 \rightarrow \mathbb{R}$ be $(\alpha,L)$-H\"older continuous. For any $g : \mathcal{Y}_0 \rightarrow \mathbb{R}$, 
	\begin{align*}
		&\varphi(y_1) = \inf_{y_0 \in \mathcal{Y}_0} \{c(y_1,y_0) - g(y_0)\}, &&\varphi^c(y_0) = \inf_{y_1 \in \mathcal{Y}_1} \{c(y_1, y_0) - \varphi(y_1)\}
	\end{align*}
	are $(\alpha, L)$-H\"older continuous.
\end{restatable}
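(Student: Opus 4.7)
The plan is to exploit the elementary observation that H\"older continuity of $c$ passes through an infimum of shifts of $c$. The heart of the argument is the following one-sided inequality: for any $y_0 \in \mathcal{Y}_0$, the definition of $\varphi$ gives $\varphi(y_1) \le c(y_1, y_0) - g(y_0)$, and subtracting $c(y_1', y_0) - g(y_0)$ from both sides yields
\begin{equation*}
	\varphi(y_1) - \bigl(c(y_1', y_0) - g(y_0)\bigr) \;\le\; c(y_1, y_0) - c(y_1', y_0) \;\le\; L \lVert (y_1, y_0) - (y_1', y_0) \rVert^\alpha \;=\; L \lvert y_1 - y_1' \rvert^\alpha,
\end{equation*}
using the $(\alpha, L)$-H\"older continuity of $c$ in the second inequality. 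Rearranging gives $\varphi(y_1) \le (c(y_1', y_0) - g(y_0)) + L\lvert y_1 - y_1'\rvert^\alpha$ for every $y_0$.

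Taking the infimum over $y_0 \in \mathcal{Y}_0$ on the right then produces $\varphi(y_1) \le \varphi(y_1') + L \lvert y_1 - y_1' \rvert^\alpha$. Interchanging the roles of $y_1$ and $y_1'$ gives the reverse inequality, so $\lvert \varphi(y_1) - \varphi(y_1') \rvert \le L \lvert y_1 - y_1' \rvert^\alpha$, which is $(\alpha, L)$-H\"older continuity of $\varphi$. This step avoids the need for the infimum defining $\varphi$ to be attained, which is the only mildly delicate point; the bound is obtained pointwise in $y_0$ before taking the infimum.

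For $\varphi^c$, the definition $\varphi^c(y_0) = \inf_{y_1} \{c(y_1, y_0) - \varphi(y_1)\}$ has exactly the same structure as the definition of $\varphi$, with the roles of the two arguments of $c$ swapped and with $\varphi$ in place of $g$. Since the H\"older inequality on $c$ is symmetric in its two arguments (it uses $\lVert (y_1, y_0) - (y_1', y_0') \rVert^\alpha$), repeating the argument verbatim with the roles of $y_1$ and $y_0$ interchanged yields $\lvert \varphi^c(y_0) - \varphi^c(y_0') \rvert \le L \lvert y_0 - y_0' \rvert^\alpha$. The argument never uses any property of the ``inner'' function ($g$ for $\varphi$, or $\varphi$ for $\varphi^c$), so no measurability, integrability, or regularity of $g$ is required. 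I do not anticipate a real obstacle here; the only thing to be careful about is writing the one-sided inequality before taking the infimum, so that non-attainment of the infimum causes no issue.
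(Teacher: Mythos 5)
Your proof is correct and takes essentially the same approach as the paper's: establish the one-sided bound $\varphi(y_1) \le c(y_1', y_0) - g(y_0) + L\lvert y_1 - y_1'\rvert^\alpha$ pointwise in $y_0$ before passing to the infimum, then symmetrize. The paper's phrasing moves the infimum to the left side (subtracting and taking suprema rather than adding and taking infima), but this is the same calculation, and both versions apply the identical argument to $\varphi^c$ by symmetry of the H\"older bound in the two arguments of $c$.
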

\begin{proof}
	\singlespacing
	
	H\"older continuity implies $c(y_1,y_0) \leq c(y_1', y_0) + L \lvert y_1 - y_1' \rvert^\alpha$ holds for any $y_0 \in \mathcal{Y}_0$ and any $y_1, y_1' \in \mathcal{Y}_1$. It follows that
	\begin{equation*}
		\varphi(y_1) = \inf_{y_0' \in \mathcal{Y}_0} \{c(y_1, y_0') - g(y_0')\} \leq c(y_1,y_0) - g(y_0) \leq c(y_1', y_0) - g(y_0) + L\lvert y_1 - y_1' \rvert^\alpha
	\end{equation*}
	implying $\varphi(y_1) - (c(y_1', y_0) - g(y_0)) \leq L \lvert y_1 - y_1' \rvert^\alpha$. Therefore
	\begin{equation*}
		\varphi(y_1) - \varphi(y_1') = \varphi(y_1) - \inf_{y_0 \in \mathcal{Y}_0} \{c(y_1',y_0) - g(y_0)\} \leq L\lvert y_1 - y_1' \rvert^\alpha
	\end{equation*}
	holds for any $y_1, y_1' \in \mathcal{Y}_1$. This implies $\varphi(y_1') - \varphi(y_1) \leq L \lvert y_1' - y_1 \rvert^\alpha$, hence $\varphi$ is $(\alpha,L)$-H\"older. The same argument implies $\varphi^c$ is $(\alpha,L)$-H\"older.
\end{proof}

Lemmas \ref{Lemma: c-concave functions, smooth costs, strong duality}, \ref{Lemma: weak convergence, Donsker results, c-concave functions for smooth costs}, and \ref{Lemma: weak convergence, completeness, smooth costs c-concave functions}, are relevant for compact $\mathcal{Y}_1, \mathcal{Y}_0 \subset \mathbb{R}$, and $L$-Lipscthiz $c : \mathcal{Y}_1 \times \mathcal{Y}_0 \rightarrow \mathbb{R}$. Under these assumptions, define
\begin{align}
	\mathcal{F}_c &= \left\{\varphi : \mathcal{Y}_1 \rightarrow \mathbb{R} \; ; \; -\lVert c \rVert_\infty \leq \varphi(y_1) \leq \lVert c \rVert_\infty, \; \lvert \varphi(y_1) - \varphi(y_1') \rvert \leq L \lvert y_1 - y_1'\rvert \right\} \label{Defn: F_c for smooth costs, strong duality appendix} \\
	\mathcal{F}_c^c &= \left\{\psi : \mathcal{Y}_0 \rightarrow \mathbb{R} \; ; \; -2\lVert c \rVert_\infty \leq \psi(y_0) \leq 0, \; \lvert \psi(y_0) - \psi(y_0') \rvert \leq L \lvert y_0 - y_0'\rvert \right\} \label{Defn: F_c^c for smooth costs, strong duality appendix}
\end{align}

\begin{restatable}[Strong duality for smooth cost functions]{lemma}{lemmaCConcaveFunctionsLipschitzCostCompactSupportsDonsker}
	\singlespacing
	\label{Lemma: c-concave functions, smooth costs, strong duality}	
	
	Let $\mathcal{Y}_1, \mathcal{Y}_0 \subset \mathbb{R}$ be compact, $c : \mathcal{Y}_1 \times \mathcal{Y}_0 \rightarrow \mathbb{R}$ be $L$-Lipschitz, and $\mathcal{F}_c$, $\mathcal{F}_c^c$ be given by \eqref{Defn: F_c for smooth costs, strong duality appendix} and \eqref{Defn: F_c^c for smooth costs, strong duality appendix} respectively. Then strong duality holds:
	\begin{equation*}
		\inf_{\pi \in \Pi(P_1, P_0)} I_c[\pi] = \sup_{(\varphi, \psi) \in \Phi_c \cap (\mathcal{F}_c \times \mathcal{F}_c^c)} J(\varphi, \psi)
	\end{equation*}
\end{restatable}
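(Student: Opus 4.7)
The plan is to start from the general strong duality result of Theorem \ref{Theorem: strong duality} and then argue that the supremum on the dual side can be restricted to $\Phi_c \cap (\mathcal{F}_c \times \mathcal{F}_c^c)$ without any loss. Since $c$ is $L$-Lipschitz on the compact set $\mathcal{Y}_1 \times \mathcal{Y}_0 \subset \mathbb{R}^2$, it is continuous (hence lower semicontinuous) and bounded, so Theorem \ref{Theorem: strong duality} immediately yields
\begin{equation*}
    \inf_{\pi \in \Pi(P_1, P_0)} I_c[\pi] \;=\; \sup_{(\varphi,\psi) \in \Phi_c} J(\varphi,\psi).
\end{equation*}
Since $\Phi_c \cap (\mathcal{F}_c \times \mathcal{F}_c^c) \subseteq \Phi_c$, the ``$\geq$'' direction of the claim is immediate. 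Everything else amounts to constructing, from an arbitrary $(\varphi,\psi) \in \Phi_c$, a pair in $\Phi_c \cap (\mathcal{F}_c \times \mathcal{F}_c^c)$ whose $J$-value is at least as large.

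The construction proceeds in three steps. First, given $(\varphi,\psi) \in \Phi_c$, I replace it with the $c$-concave conjugate pair $(\varphi^{cc}, \varphi^c)$. Lemma \ref{Lemma: c-concave functions, generating c-concave functions} gives $\varphi \le \varphi^{cc}$ pointwise; the inequality $\psi(y_0) \le c(y_1,y_0) - \varphi(y_1)$ for all $y_1$ gives $\psi \le \varphi^c$; and $\varphi^{cc}(y_1) + \varphi^c(y_0) \le c(y_1,y_0)$. Lemma \ref{Lemma: c-concave functions, universal bound for c-concave functions}(1) ensures $\varphi^{cc}$ and $\varphi^c$ are bounded (hence integrable against any probability measure), so $J(\varphi,\psi) \le J(\varphi^{cc}, \varphi^c)$. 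Second, I apply Lemma \ref{Lemma: c-concave functions, Holder cost implies Holder c-concave functions} with $\alpha=1$: since $c$ is $L$-Lipschitz (i.e.\ $(1,L)$-H\"older), both $\varphi^c$ and $\varphi^{cc}$ are $L$-Lipschitz on their respective domains.

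Third, I normalize the pair via Lemma \ref{Lemma: c-concave functions, universal bound for c-concave functions}(2), which produces $\bar{\varphi} = \varphi^{cc} - \sup \varphi^{cc} + c_H$ with $c_L \le \bar\varphi \le c_H$ and $c_L - c_H \le \bar\varphi^c \le 0$, while keeping $J(\bar\varphi, \bar\varphi^c) = J(\varphi^{cc}, \varphi^c)$. Adding a constant preserves Lipschitz continuity, so $\bar\varphi$ and $\bar\varphi^c$ remain $L$-Lipschitz. Because $-\lVert c\rVert_\infty \le c_L$ and $c_H \le \lVert c\rVert_\infty$, the pair satisfies exactly the bounds defining $\mathcal{F}_c$ and $\mathcal{F}_c^c$ in \eqref{Defn: F_c for smooth costs, strong duality appendix}--\eqref{Defn: F_c^c for smooth costs, strong duality appendix}, and $\bar\varphi(y_1) + \bar\varphi^c(y_0) \le c(y_1,y_0)$ is inherited from the $c$-conjugate relation. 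Hence $(\bar\varphi, \bar\varphi^c) \in \Phi_c \cap (\mathcal{F}_c \times \mathcal{F}_c^c)$ with $J(\bar\varphi, \bar\varphi^c) \ge J(\varphi,\psi)$. Taking supremum over $(\varphi,\psi) \in \Phi_c$ gives the opposite inequality and closes the chain.

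There is no real obstacle here beyond the bookkeeping of composing the three preparatory lemmas: Lemma \ref{Lemma: c-concave functions, generating c-concave functions} for the $c$-concave replacement, Lemma \ref{Lemma: c-concave functions, Holder cost implies Holder c-concave functions} for Lipschitz regularity, and Lemma \ref{Lemma: c-concave functions, universal bound for c-concave functions} for the uniform bounds. The only point that requires attention is verifying that after the constant shift in the third step the pair remains a $c$-conjugate pair (so $\bar\varphi + \bar\varphi^c \le c$ continues to hold) and that Lipschitzness survives the shift; both are immediate from the identity $(\varphi + s)^c = \varphi^c - s$ derived in the proof of Lemma \ref{Lemma: c-concave functions, universal bound for c-concave functions}.
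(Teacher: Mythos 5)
Your proof follows the same overall strategy as the paper's: reduce the dual supremum from $\Phi_c$ down to the restricted class via $c$-concave replacement, Lipschitz regularity of $c$-transforms, and the constant-shift normalization. The structure of the chain of inequalities is right. There is, however, a small but genuine gap in the first step of your construction.

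You start from an arbitrary $(\varphi,\psi) \in \Phi_c$ and invoke Lemma \ref{Lemma: c-concave functions, universal bound for c-concave functions}(1) to conclude that $\varphi^c$ and $\varphi^{cc}$ are bounded. But Lemma \ref{Lemma: c-concave functions, universal bound for c-concave functions}(1) is stated only for \emph{bounded} $\varphi$, whereas membership in $\Phi_c$ guarantees only $\varphi \in L^1(P_1)$, which does not imply boundedness (in particular $\varphi$ can be unbounded below). The feasibility constraint $\varphi(y_1) + \psi(y_0) \le c(y_1,y_0)$ together with $\psi \in L^1(P_0)$ (hence finite somewhere) and $c$ bounded does give $\sup_{y_1}\varphi(y_1) < \infty$, and the proof of Lemma \ref{Lemma: c-concave functions, universal bound for c-concave functions}(1) only needs $\sup\varphi < \infty$ to conclude $\varphi^c$ is bounded, after which $\varphi^{cc} = (\varphi^c)^c$ is the $c$-transform of a genuinely bounded function — but you would need to say this rather than cite the lemma as stated. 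The cleaner fix, and what the paper in fact does, is to start not from $\Phi_c$ but from $\Phi_c \cap \mathcal{C}_b$ (which Theorem \ref{Theorem: strong duality} also equates with $\inf_\pi I_c[\pi]$): there $\varphi$ is continuous on a compact set, hence bounded, and the chain of lemmas applies verbatim. The paper routes the argument through Lemma \ref{Lemma: c-concave functions, dual restricted to measurable and integrable c-concave functions satisfies strong duality} applied to $\Phi_{cs} = \Phi_c \cap \mathcal{C}_b$, then normalizes with Lemma \ref{Lemma: c-concave functions, universal bound for c-concave functions} and Remark \ref{Remark: discussion of universal bounds on c-concave functions}; your unrolled version of the same steps is perfectly fine once the boundedness of the input is secured. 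The remainder of your argument — Lipschitzness via Lemma \ref{Lemma: c-concave functions, Holder cost implies Holder c-concave functions}, the shift preserving both the $J$-value and the constraint $\bar\varphi + \bar\varphi^c \le c$, and landing inside $\mathcal{F}_c \times \mathcal{F}_c^c$ because $c_L \ge -\lVert c\rVert_\infty$, $c_H \le \lVert c\rVert_\infty$, and $c_L - c_H \ge -2\lVert c\rVert_\infty$ — is correct.
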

\begin{proof}
	\singlespacing
	
	First notice lemma \ref{Lemma: c-concave functions, Holder cost implies Holder c-concave functions} implies $\mathcal{F}_c(\Phi_c \cap \mathcal{C}_b)$ and $\mathcal{F}_c^c(\Phi_c \cap \mathcal{C}_b)$ consist of $L$-Lipschitz functions.\footnote{Note that $\mathcal{F}_c(\Phi_c \cap \mathcal{C}_b)$ and $\mathcal{F}_c^c(\Phi_c \cap \mathcal{C}_b)$ are not necessarily $\mathcal{F}_c$ and $\mathcal{F}_c^c$ defined in the statement of the lemma.} Since $c$ is continuous and $\mathcal{Y}_1 \times \mathcal{Y}_0$ is compact, $\lVert c \rVert_\infty = \sup_{y_1, y_0 \in \mathcal{Y}_1 \times \mathcal{Y}_0} \lvert c(y_1,y_0) \rvert < \infty$. Continuity implies these $c$-concave functions are measurable, and lemma \ref{Lemma: c-concave functions, universal bound for c-concave functions} shows they are bounded. Thus $\mathcal{F}_c(\Phi_c \cap \mathcal{C}_b) \times \mathcal{F}_c^c(\Phi_c \cap \mathcal{C}_b) \subseteq L^1(P_1) \times L^1(P_0)$, and so lemma \ref{Lemma: c-concave functions, dual restricted to measurable and integrable c-concave functions satisfies strong duality} implies
	\begin{equation*}
		\inf_{\pi \in \Pi(P_1, P_0)} I_c[\pi] = \sup_{\varphi \in \mathcal{F}_c(\Phi_c \cap \mathcal{C}_b)} J(\varphi,\varphi^c)
	\end{equation*}
	Lemma \ref{Lemma: c-concave functions, universal bound for c-concave functions} and remark \ref{Remark: discussion of universal bounds on c-concave functions} further shows that for every $\varphi \in \mathcal{F}_c(\Phi_c \cap \mathcal{C}_b)$, a shifted function $\bar{\varphi}$ is such that $\sup_{y_1 \in \mathcal{Y}_1} \lvert \bar{\varphi}(y_1) \rvert \leq \lVert c \rVert_\infty$, $-2\lVert c \rVert \leq \bar{\varphi}^c(y_0) \leq 0$, $\bar{\varphi}$ and $\bar{\varphi}^c$ are $L$-lipschitz, and $J(\varphi, \varphi^c) = J(\bar{\varphi}, \bar{\varphi}^c)$. Thus 
	\begin{equation*}
		\sup_{\varphi \in \mathcal{F}_c(\Phi_c \cap \mathcal{C}_b)} J(\varphi,\varphi^c) = \sup_{\varphi \in \mathcal{F}_c} J(\varphi,\varphi^c) 
	\end{equation*}
	Furthermore, 
	\begin{equation*}
		\sup_{\varphi \in \mathcal{F}_c} J(\varphi,\varphi^c) \leq \sup_{(\varphi, \psi) \in \Phi_c \cap (\mathcal{F}_c \times \mathcal{F}_c^c)} J(\varphi,\psi) \leq \sup_{(\varphi,\psi) \in \Phi_c} J(\varphi,\psi) = \inf_{\pi \in \Pi(P_1,P_0)} I_c[\pi]
	\end{equation*}
	completes the proof.
\end{proof}

\begin{remark}
	\label{Remark: differentiability implies Lipschitz}
	\singlespacing
	
	Suppose $\mathcal{Y}_1$ and $\mathcal{Y}_0$ are compact and $c(y_1,y_0)$ is continuously differentiable on an open set containing $\mathcal{Y}_1 \times \mathcal{Y}_0$. Then $c$ restricted to $\mathcal{Y}_1 \times \mathcal{Y}_0$ is bounded and Lipschitz. 
	
	That $c : \mathcal{Y}_1 \times \mathcal{Y}_0 \rightarrow \mathbb{R}$ is bounded follows from $c$ being continuous, $\mathcal{Y}_1 \times \mathcal{Y}_0$ being compact, and the extreme value theorem. To see that $c$ restricted to $\mathcal{Y}_1 \times \mathcal{Y}_0$ is $L$-Lipschitz, let $(y_1, y_0), (y_1',y_0') \in \mathcal{Y}_1 \times \mathcal{Y}_0$ be arbitrary and note that the mean value theorem applied to $g(t) = c(t(y_1, y_0) + (1-t)(y_1',y_0'))$ implies there exists $s \in (0,1)$ such that 
	\begin{align*}
		(c(y_1, y_0) - c(y_1',y_0'))  &= g(1) - g(0) = g'(s) \\
		&= \left\langle \nabla c(s(y_1, y_0) + (1-s)(y_1',y_0')), (y_1, y_0) - (y_1', y_0') \right\rangle 
	\end{align*}
	Notice that Cauchy-Schwarz then implies
	\begin{align*}
		\lvert c(y_1, y_0) - c(y_1',y_0') \rvert &\leq \lVert \nabla c(s(y_1, y_0) + (1-s)(y_1',y_0')) \rVert \lVert (y_1, y_0) - (y_1', y_0') \rVert \\
		&\leq \sup_{(y_1'', y_0'') \in \mathcal{Y}_1 \times \mathcal{Y}_0} \lVert \nabla c(y_1'', y_0'') \rVert \lVert (y_1, y_0) - (y_1', y_0') \rVert
	\end{align*}
	Finally, notice $L = \sup_{(y_1'', y_0'') \in \mathcal{Y}_1 \times \mathcal{Y}_0} \lVert \nabla c(y_1'', y_0'') \rVert$ is finite because $\mathcal{Y}_1 \times \mathcal{Y}_0$ is compact and $(y_1, y_0) \mapsto  \lVert \nabla c(y_1, y_0) \rVert$ is continuous.
\end{remark}

\subsubsection{$c$-concave functions when $c(y_1,y_0) = \mathbbm{1}\{(y_1,y_0) \in C\}$}
\label{Appendix: duality in optimal transport, c-concave functions, indicator cost functions}

\begin{restatable}[Strong duality with indicator costs]{theorem}{theoremStrongDualityForIndicatorCosts}
	\label{Theorem: strong duality for indicator costs} 
	\singlespacing
	
	Let $C$ be a nonempty, open subset of $\mathcal{Y}_1 \times \mathcal{Y}_0$, and $c : \mathcal{Y}_1 \times \mathcal{Y}_0 \rightarrow \mathbb{R}$ given by $c(y_1,y_0) = \mathbbm{1}_C(y_1,y_0) = \mathbbm{1}\{(y_1,y_0) \in C\}$. Then
	\begin{equation*}
		\inf_{\pi \in \Pi(P_1, P_0)} \int \mathbbm{1}_C(y_1,y_0) d\pi(y_1,y_0) = \sup_{(A, B) \in \Phi_c^I} \int \mathbbm{1}_A(y_1) dP_1(y_1) - \int \mathbbm{1}_B(y_0) d\nu(y_0)
	\end{equation*}
	where 
	\begin{equation*}
		\Phi_c^I = \left\{(A, B)\; ; \; A \subset \mathcal{Y}_1 \text{ is closed and nonempty, } B \subset \mathcal{Y}_0 \text{ is measurable, and } \mathbbm{1}_A(y_1) - \mathbbm{1}_B(y_0) \leq \mathbbm{1}_C(y_1,y_0) \right\}
	\end{equation*}
\end{restatable}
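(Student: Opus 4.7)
My plan is to combine strong duality (Theorem \ref{Theorem: strong duality}) with a reduction to a $c$-concave conjugate pair (Lemmas \ref{Lemma: c-concave functions, generating c-concave functions} and \ref{Lemma: c-concave functions, universal bound for c-concave functions}) and a layer-cake/level-set argument. Let $V = \inf_\pi \int \mathbbm{1}_C\, d\pi$. The easy inequality $V \geq \sup_{\Phi_c^I}[P_1(A) - P_0(B)]$ is immediate from weak duality (Lemma \ref{Lemma: weak duality}): for any $(A, B) \in \Phi_c^I$, the pair $(\mathbbm{1}_A, -\mathbbm{1}_B)$ lies in $\Phi_c$, so $P_1(A) - P_0(B) = J(\mathbbm{1}_A, -\mathbbm{1}_B) \leq \sup_{\Phi_c} J \leq V$.

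For the reverse inequality, fix $\epsilon > 0$. Since $C$ is open, $\mathbbm{1}_C$ is lower semicontinuous and bounded below, so Theorem \ref{Theorem: strong duality} gives $V = \sup_{\Phi_c \cap \mathcal{C}_b} J$; pick $(\varphi, \psi) \in \Phi_c \cap \mathcal{C}_b$ with $J(\varphi, \psi) \geq V - \epsilon/3$. Pass to the $c$-concave conjugate pair $(\varphi^{cc}, \varphi^c) \in \Phi_c$: Lemma \ref{Lemma: c-concave functions, generating c-concave functions} yields $\varphi^{cc} \geq \varphi$ and $\varphi^c \geq \psi$, so $J(\varphi^{cc}, \varphi^c) \geq J(\varphi, \psi)$; Lemma \ref{Lemma: c-concave functions, universal bound for c-concave functions} (claim 2, with $c_L = 0$ and $c_H = 1$) then delivers a further shift to $(\bar{\varphi}, \bar{\varphi}^c)$ with $0 \leq \bar{\varphi} \leq 1$ and $-1 \leq \bar{\varphi}^c \leq 0$, preserving $J$.

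For $t \in (0,1)$ define $A_t = \{\bar{\varphi} \geq t\}$ and $B_t = \{\bar{\varphi}^c \leq -t\}$. If $y_1 \in A_t$ and $y_0 \notin B_t$ then $\bar{\varphi}(y_1) + \bar{\varphi}^c(y_0) > 0$, and together with $\bar{\varphi} + \bar{\varphi}^c \leq \mathbbm{1}_C \in \{0, 1\}$ this forces $\mathbbm{1}_C(y_1, y_0) = 1$, giving $\mathbbm{1}_{A_t} - \mathbbm{1}_{B_t} \leq \mathbbm{1}_C$. The layer-cake identity reads
\begin{equation*}
J(\bar{\varphi}, \bar{\varphi}^c) = \int_0^1 P_1(A_t)\, dt - \int_0^1 P_0(B_t)\, dt = \int_0^1 [P_1(A_t) - P_0(B_t)]\, dt,
\end{equation*}
so by the mean value theorem there exists $t^* \in (0,1)$ with $P_1(A_{t^*}) - P_0(B_{t^*}) \geq J(\bar{\varphi}, \bar{\varphi}^c) \geq V - \epsilon/3$. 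To land in $\Phi_c^I$, inner regularity of the Borel measure $P_1$ supplies a closed $A^* \subseteq A_{t^*}$ with $P_1(A^*) \geq P_1(A_{t^*}) - \epsilon/3$; the constraint is preserved when shrinking $A$, so $(A^*, B_{t^*}) \in \Phi_c^I$ and $P_1(A^*) - P_0(B_{t^*}) \geq V - 2\epsilon/3$. Nonemptiness of $A^*$ is automatic unless $P_1(A_{t^*})$ is already tiny, in which case $V$ is of order $\epsilon$ and the trivial pair $(\mathcal{Y}_1, \mathcal{Y}_0)$ (value $0$) suffices.

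The main technical obstacle is measurability of $\varphi^c$ and $\varphi^{cc}$: because $\mathbbm{1}_C$ is only lower semicontinuous, Lemma \ref{Lemma: c-concave functions, if c is c is continuous then c-transforms are measurable} does not apply. I would resolve this using that sections $C_{y_0} = \{y_1 : (y_1, y_0) \in C\}$ are open and $\mathcal{Y}_1$ is Polish: the supremum of the continuous $\varphi$ over the open section $C_{y_0}$ equals its supremum over a fixed countable dense subset of $\mathcal{Y}_1$ and is therefore Borel measurable in $y_0$, while the complementary term $\sup_{y_1 \notin C_{y_0}} \varphi(y_1)$ is universally measurable by the classical projection theorem. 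This is enough both for integration against the Borel measure $P_0$ and for the invocation of Lemma \ref{Lemma: c-concave functions, universal bound for c-concave functions}.
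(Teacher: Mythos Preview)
Your approach is genuinely different from the paper's and largely sound. The paper does not build the indicator dual from Theorem~\ref{Theorem: strong duality} at all: it invokes \cite{villani2003topics} Theorem~1.27 (a Strassen-type result) directly to obtain
\[
\inf_\pi \int \mathbbm{1}_C \, d\pi \;=\; \sup_{A \text{ closed}} \Big[P_1(A) - P_0(A^C)\Big],
\]
with $A^C$ the projection of $(A\times\mathcal{Y}_0)\setminus C$ onto $\mathcal{Y}_0$ (measurable by the projection theorem). The rest is a two-line sandwich: this supremum is trivially $\leq \sup_{\Phi_c^I}$, the empty $A$ is harmless because $(\mathcal{Y}_1,\mathcal{Y}_0)$ achieves value $0$, and weak duality closes the loop. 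Your route---general l.s.c.\ duality, pass to a $c$-concave pair bounded in $[0,1]\times[-1,0]$, then a layer-cake/level-set extraction---is more self-contained and in fact reproves the cited Villani result rather than quoting it.

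The one soft spot is exactly the one you flag, and your sketch only half resolves it. Your measurability argument (countable dense set over open sections, projection theorem over closed sections) handles $\varphi^c$ but not $\varphi^{cc}$, which your invocation of Lemma~\ref{Lemma: c-concave functions, universal bound for c-concave functions} (ii) requires: once $\varphi^c$ is merely universally measurable, the same decomposition for $\varphi^{cc}$ involves projecting sets built from a non-Borel function. A clean fix avoids $\varphi^{cc}$ entirely: from $(\varphi,\varphi^c)$, shift by $s=\sup\varphi-1$ to get $\hat\varphi$ with $\sup\hat\varphi=1$ and $\hat\psi=\varphi^c+s\in[-1,0]$, then set $\bar\varphi=\max(\hat\varphi,0)$. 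Since $\hat\psi\leq 0$ and $c\geq 0$, wherever $\bar\varphi=0$ the constraint $\bar\varphi+\hat\psi\leq c$ holds automatically; elsewhere $\bar\varphi=\hat\varphi$. Now $\bar\varphi\in[0,1]$ is continuous, $\hat\psi\in[-1,0]$ is universally measurable, $J(\bar\varphi,\hat\psi)\geq J(\hat\varphi,\hat\psi)$, and your layer-cake argument goes through unchanged. (For the final landing in $\Phi_c^I$, the universally measurable $B_{t^*}$ can be replaced by a Borel superset of equal $P_0$-measure without disturbing either the constraint or the value.)
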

\begin{proof}
	\singlespacing
	
	\cite{villani2003topics} Theorem 1.27 implies 
	\begin{align*}
		\inf_{\pi \in \Pi(P_1, P_0)} \int \mathbbm{1}_C(y_1,y_0) d\pi(y_1,y_0) = \sup_{A \text{ closed}} \int \mathbbm{1}_A(y_1) dP_1(y_1) - \int \mathbbm{1}_{A^C}(y_0) dP_0(y_0)
	\end{align*}
	where $A^C = \left\{y \in \mathcal{Y}_0 \; ; \; \exists y_1 \in A, \; (y_1, y_0) \not \in C\right\}$ is the \textbf{projection} of $(A \times \mathcal{Y}_0) \setminus C$ onto $\mathcal{Y}_0$. Measurability of $A^C$ is guaranteed by the measurable projection theorem; see \cite{crauel2002random} theorem 2.12. It is clear that 
	\begin{align*}
		\sup_{A \text{ closed}} \int \mathbbm{1}_A(y_1) dP_1(y_1) - \int \mathbbm{1}_{A^C}(y_0) dP_0(y_0) \leq \sup_{A \subseteq \mathcal{Y}_1, B \subseteq \mathcal{Y}_0} \int \mathbbm{1}_A(y_1) dP_1(y_1) - \int \mathbbm{1}_B(y_0) d\nu(y_0)
	\end{align*}
	with $A$, $B$ measurable. Notice it is without loss to exclude $A = \varnothing$, because $J(\mathbbm{1}_\varnothing, -\mathbbm{1}_B) \leq 0 = J(\mathbbm{1}_{\mathcal{Y}_1}, \mathbbm{1}_{\mathcal{Y}_0})$ and $\mathbbm{1}_{\mathcal{Y}_1}(y_1) - \mathbbm{1}_{\mathcal{Y}_0}(y_0) = 0 \leq \mathbbm{1}_C(y_1,y_0)$ for all $(y_1, y_0) \in \mathcal{Y}_1 \times \mathcal{Y}_0$. Thus 	
	\begin{equation*}
		\sup_{A \subseteq \mathcal{Y}_1, B \subseteq \mathcal{Y}_0} \int \mathbbm{1}_A(y_1) dP_1(y_1) - \int \mathbbm{1}_B(y_0) d\nu(y_0) = \sup_{(A, B) \in \Phi_c^I} \int \mathbbm{1}_A(y_1) dP_1(y_1) - \int \mathbbm{1}_B(y_0) d\nu(y_0)
	\end{equation*}
	Weak duality (lemma \ref{Lemma: weak duality}) implies 
	\begin{equation*}
		\sup_{(A, B) \in \Phi_c^I} \int \mathbbm{1}_A(y_1) dP_1(y_1) - \int \mathbbm{1}_B(y_0) dP_0(y_0) \leq \inf_{\pi \in \Pi(P_1, P_0)} \int \mathbbm{1}_C(y_1,y_0) d\pi(y_1,y_0) 
	\end{equation*}
	and the result follows.
\end{proof}

The strong duality result of theorem \ref{Theorem: strong duality for indicator costs} is especially useful when combined with a careful characterization of the corresponding $c$-concave functions. To describe these, let $A \subseteq \mathcal{Y}_1$ be nonempty, and define
\begin{align}
	&A^C = \left\{y_0 \in \mathcal{Y}_0 \; ; \; \exists y_1 \in A, \; (y_1, y_0) \not \in C \right\}, &&A^{CC} = \left\{y_1 \in \mathcal{Y}_1 \; ; \; \forall y_0 \in \mathcal{Y}_0 \setminus A^C, \; (y_1, y_0) \in C \right\}, \label{Defn: A^C, A^CC} \\
	&C_{0m} = \left\{y_0 \in \mathcal{Y}_0 \; ; \; \forall y_1 \in \mathcal{Y}_1, \; (y_1,y_0) \in C\right\}, &&C_{1m} = \left\{y_1 \in \mathcal{Y}_1 \; ; \; \forall y_0 \in \mathcal{Y}_0, \; (y_1, y_0) \in C\right\} \label{Defn: C_0m, C_1m} \\
	&C_{0m}^C = \begin{cases} C_{1m} &\text{ if } C_{0m} = \varnothing \\ \varnothing &\text{ if } C_{0m} \neq \varnothing \end{cases}, &&C_{1m}^C = \begin{cases} C_{0m} &\text{ if } C_{1m} = \varnothing \\ \varnothing &\text{ if } C_{1m} \neq \varnothing \end{cases} \label{Defn: C_0m^C, C_1m^C}
\end{align}
Note that $A^C$ is well defined whenever $A \neq \varnothing$, and to ensure $A^{CC}$ is well defined we require $A^C \neq \mathcal{Y}_0$. $C_{0m}$ is denoted as such because $\mathbbm{1}_{C_{0m}}(y_0) = \inf_{y_1 \in \mathcal{Y}_1} \mathbbm{1}_C(y_1,y_0)$ is the subset of $\mathcal{Y}_{\underline{0}}$ found by \underline{m}inimizing $\mathbbm{1}_C(y_1,y_0)$ over $y_1 \in \mathcal{Y}_1$.

\begin{restatable}[$c$-concave functions for indicator costs]{lemma}{lemmaCConcaveFunctionsIndicators}
	\label{Lemma: c-concave functions, indicator costs, c-conjugate pairs generated from indicators}
	\singlespacing
	
	Let $C$ be a nonempty, open subset of $\mathcal{Y}_1 \times \mathcal{Y}_0$, $c : \mathcal{Y}_1 \times \mathcal{Y}_0 \rightarrow \mathbb{R}$ given by $c(y_1,y_0) = \mathbbm{1}_C(y_1,y_0)$, $A \subseteq \mathcal{Y}_1$ be closed and nonempty, and $\varphi(y_1) = \mathbbm{1}_A(y_1) = \mathbbm{1}\{y_1 \in A\}$. Then
	\begin{enumerate}
		\item $\varphi^c(y_0) = -\mathbbm{1}_{A^C}(y_0)$,
		
		\item if $A^C \neq \mathcal{Y}_0$, then $\varphi^{cc}(y_1) = \mathbbm{1}_{A^{CC}}(y_1)$, and 
		
		\item If $A^C = \mathcal{Y}_0$, then $J(\varphi^{cc}, \varphi^c) = J(\mathbbm{1}_{C_{1m}}, 0)$
		
	\end{enumerate}
\end{restatable}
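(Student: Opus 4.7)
The plan is to prove all three claims by direct case analysis on the definition of the $c$-transform, exploiting the fact that the relevant summands take only finitely many values. Throughout I will use that $A$ is nonempty (so the infimum defining $\varphi^c$ is taken over a nonempty set containing $A$) and that the definitions \eqref{Defn: A^C, A^CC}--\eqref{Defn: C_1m^C, C_0m^C} are engineered precisely to match the case distinctions.

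For claim (1), I will unpack $\varphi^c(y_0) = \inf_{y_1 \in \mathcal{Y}_1} \{\mathbbm{1}_C(y_1, y_0) - \mathbbm{1}_A(y_1)\}$ and observe that the summand lies in $\{-1, 0, 1\}$, with the value $-1$ attained precisely when $y_1 \in A$ and $(y_1, y_0) \notin C$. By the definition of $A^C$, such a $y_1$ exists if and only if $y_0 \in A^C$, giving $\varphi^c(y_0) = -1$ on $A^C$. Otherwise, every $y_1 \in A$ gives value $0$ (and $y_1 \notin A$ gives value $\mathbbm{1}_C(y_1, y_0) \geq 0$), so taking any $y_1 \in A$ (which exists by nonemptiness) shows the infimum is $0$. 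This yields $\varphi^c = -\mathbbm{1}_{A^C}$.

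For claim (2), I will substitute the formula from (1) to obtain $\varphi^{cc}(y_1) = \inf_{y_0 \in \mathcal{Y}_0} \{\mathbbm{1}_C(y_1, y_0) + \mathbbm{1}_{A^C}(y_0)\}$, whose summand lies in $\{0, 1, 2\}$. The hypothesis $A^C \neq \mathcal{Y}_0$ is used here to guarantee that some $y_0 \in \mathcal{Y}_0 \setminus A^C$ exists to test against. If $y_1 \in A^{CC}$, then every such $y_0$ satisfies $(y_1, y_0) \in C$, producing value $1$, while values on $A^C$ are $\geq 1$; so $\varphi^{cc}(y_1) = 1$. If $y_1 \notin A^{CC}$, some $y_0 \in \mathcal{Y}_0 \setminus A^C$ has $(y_1, y_0) \notin C$, giving value $0$, and the infimum equals $0$. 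Hence $\varphi^{cc} = \mathbbm{1}_{A^{CC}}$.

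For claim (3), the argument of (2) breaks because no $y_0$ lies outside $A^C$; I will instead work directly with $\varphi^c \equiv -1$. Then $\varphi^{cc}(y_1) = \inf_{y_0} \{\mathbbm{1}_C(y_1, y_0) + 1\}$ equals $2$ on $C_{1m}$ and $1$ on its complement, i.e.\ $\varphi^{cc} = 1 + \mathbbm{1}_{C_{1m}}$. Computing,
\begin{equation*}
J(\varphi^{cc}, \varphi^c) = \int (1 + \mathbbm{1}_{C_{1m}}(y_1)) \, dP_1(y_1) - \int 1 \, dP_0(y_0) = P_1(C_{1m}) = J(\mathbbm{1}_{C_{1m}}, 0),
\end{equation*}
which is the claim. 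The only mild subtlety I anticipate is bookkeeping around nonemptiness and the fact that an infimum over a finite-valued summand is attained, so I must cite a specific witness $y_1 \in A$ (for claim 1) or $y_0 \in \mathcal{Y}_0 \setminus A^C$ (for claim 2) rather than arguing by limits; the hypotheses supply these witnesses in each case.
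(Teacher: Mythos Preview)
Your proposal is correct and follows essentially the same approach as the paper: direct case analysis on the finite range of the summand in the $c$-transform, using nonemptiness of $A$ and of $\mathcal{Y}_0 \setminus A^C$ to rule out extreme values and identify the witness for the infimum. For claim (3) your argument is in fact slightly more streamlined than the paper's, which takes a short detour through $\varphi^{ccc}$ before arriving at the same identity $J(\mathbbm{1}_{C_{1m}} + 1, -1) = J(\mathbbm{1}_{C_{1m}}, 0)$.
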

\begin{proof}
	\singlespacing
	
	\begin{enumerate}
		\item Notice $\mathbbm{1}_C(y_1, y_0) - \mathbbm{1}_A(y_1) \in \{-1, 0, 1\}$, and
		\begin{equation*}
			\varphi^c(y_0) = \inf_{y_1 \in \mathcal{Y}_1} \{\mathbbm{1}_C(y_1, y_0) - \mathbbm{1}_A(y_1)\}
		\end{equation*}
		will never take value $1$ because any $y_1 \in A$ implies the objective is at most $0$. Furthermore, if there exists $y_1 \in A$ such that $(y_1, y_0) \not \in C$, then the infimum attains $-1$. If there does not exist such $y_1$, then $\varphi^c(y_0) = 0$. Thus $\varphi^c(y_0) = -\mathbbm{1}_{A^C}(y_0)$. 
		
		\item Suppose $A^C \neq \mathcal{Y}_0$. Notice that $\mathbbm{1}_C(y_1, y_0) + \mathbbm{1}_{A^C}(y_0)$ takes values in $\{0, 1, 2\}$, and
		\begin{equation*}
			\varphi^{cc}(y_1) = \inf_{y_0 \in \mathcal{Y}_0} \{\mathbbm{1}_C(y_1, y_0) + \mathbbm{1}_{A^C}(y_0)\}
		\end{equation*}
		will never equal $2$ because $\mathcal{Y}_0 \setminus A^C \neq \varnothing$. Moreover, the infimum will equal $1$ if and only if $(y_1, y_0) \in C$ for all $y_0 \in \mathcal{Y}_0 \setminus A^C$; thus $\varphi^{cc}(y_1) = \mathbbm{1}_{A^{CC}}(y_1)$. 
		
		\item
		If $A^C = \mathcal{Y}_0$, then $\varphi^{cc}(y_1) = \inf_{y_0 \in \mathcal{Y}_0}\{\mathbbm{1}_C(y_1, y_0) + 1\} = \mathbbm{1}_{C_{1m}}(y_1) + 1$ and
		\begin{align*}
			\varphi^{ccc}(y_0) = \inf_{y_1 \in \mathcal{Y}_1} \{\mathbbm{1}_C(y_1, y_0) - \mathbbm{1}_{C_{1m}}(y_1) - 1\} = \mathbbm{1}_{C_{1m}^C}(y_0) -1
		\end{align*}
		To see that $(\mathbbm{1}_{C_{1m}})^c = 0$ if $C_{1m} \neq \varnothing$, notice the objective $\mathbbm{1}_C(y_1, y_0) - \mathbbm{1}_{C_{1m}}(y_0)$ takes values in $\{-1, 0, 1\}$, and because $C_{1m} \neq \varnothing$ will never take value $1$. For the objective to take value $-1$ at a given $y_1$, it must be the case that $\mathbbm{1}_{C_{1m}}(y_1) = 1$ and there exists $y_0$ such that $\mathbbm{1}_C(y_1, y_0) = 0$, but this contradicts the definition $C_{1m} = \left\{y_1 \in \mathcal{Y}_1 \; ; \; \forall y_0 \in \mathcal{Y}_0, \; (y_1, y_0) \in C\right\}$.
		
		However, recall that $\varphi^{ccc}(y_0) = \varphi^c(y_0)$ as shown in lemma \ref{Lemma: c-concave functions, generating c-concave functions}. Since $\varphi^c(y_0) = -\mathbbm{1}_{A^C}(y_0) = -\mathbbm{1}_{\mathcal{Y}_0}(y_0) = -1$, this implies $(\mathbbm{1}_{C_{0m}^C})(y_0) = 0$. Then notice that
		\begin{equation*}
			J(\varphi^{cc}, \varphi^c) = J(\mathbbm{1}_{C_{1m}} + 1, -1) = J(\mathbbm{1}_{C_{1m}}, 0) 
		\end{equation*}
	\end{enumerate}
\end{proof}
\begin{remark}
	\label{Remark: c-concave functions for indicator costs} 
	\singlespacing
	
	Compare theorem \ref{Theorem: strong duality for indicator costs} and lemma \ref{Lemma: c-concave functions, indicator costs, c-conjugate pairs generated from indicators} with \cite{villani2003topics} theorem 1.27. 
\end{remark}

\begin{restatable}[Convex $C$ implies $c$-concave functions defined with convex sets]{lemma}{lemmaCConcaveFunctionsIndicatorsConvexC}
	\label{Lemma: c-concave functions, indicator costs, c-conjugate pairs generated from indicators for convex sets}
	\singlespacing
	
	Let $C$ be a nonempty, open, convex subset of $\mathcal{Y}_1 \times \mathcal{Y}_0$, and $c : \mathcal{Y}_1 \times \mathcal{Y}_0 \rightarrow \mathbb{R}$ given by $c(y_1,y_0) = \mathbbm{1}_C(y_1,y_0)$. Let $A \subseteq \mathcal{Y}_1$ be nonempty.
	\begin{enumerate}
		\item $A^C$ equals $\mathcal{Y}_0 \setminus B$ for some convex set $B$. \label{Lemma: c-concave functions, indicator costs, c-conjugate pairs generated from indicators for convex sets, claim 1}
		\item If $A^C \neq \mathcal{Y}_0$, then $A^{CC}$ is convex. \label{Lemma: c-concave functions, indicator costs, c-conjugate pairs generated from indicators for convex sets, claim 2}
		\item $C_{1m}$ is convex. \label{Lemma: c-concave functions, indicator costs, c-conjugate pairs generated from indicators for convex sets, claim 3}
	\end{enumerate}
\end{restatable}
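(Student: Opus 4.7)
The plan is to prove all three claims directly from convexity of $C$, exploiting the common structure of the three sets: each is defined (or its complement is defined) by a universal quantifier of the form ``$(y_1,y_0)\in C$ for every $y_1$ (resp.\ $y_0$) in some fixed set.'' Convex combinations in one coordinate with the other coordinate held fixed stay inside $C$, and this makes each set convex essentially by inspection. Throughout I will rely on the fact that in the setting under consideration $\mathcal{Y}_1, \mathcal{Y}_0 \subseteq \mathbb{R}$ are (taken to be) convex, so that convex combinations are well-defined members of the relevant coordinate spaces.

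For claim \ref{Lemma: c-concave functions, indicator costs, c-conjugate pairs generated from indicators for convex sets, claim 1}, first rewrite $B \coloneqq \mathcal{Y}_0 \setminus A^C = \{y_0 \in \mathcal{Y}_0 \; ; \; \forall y_1 \in A, \; (y_1,y_0)\in C\}$, which is the statement that $A^C$ is the complement of ``$y_0$'s that pair with every element of $A$ inside $C$.'' To show $B$ is convex, fix $y_0, y_0' \in B$ and $\lambda \in [0,1]$, and fix an arbitrary $y_1 \in A$. Then $(y_1,y_0), (y_1,y_0') \in C$, so convexity of $C$ gives $(y_1, \lambda y_0 + (1-\lambda) y_0') = \lambda(y_1,y_0) + (1-\lambda)(y_1,y_0') \in C$. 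Since $y_1\in A$ was arbitrary, $\lambda y_0 + (1-\lambda)y_0' \in B$.

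For claim \ref{Lemma: c-concave functions, indicator costs, c-conjugate pairs generated from indicators for convex sets, claim 3}, $C_{1m} = \{y_1 \in \mathcal{Y}_1 \; ; \; \forall y_0 \in \mathcal{Y}_0, \; (y_1,y_0)\in C\}$ has the same structure with roles reversed. Let $y_1,y_1' \in C_{1m}$, $\lambda \in [0,1]$, and fix any $y_0 \in \mathcal{Y}_0$; then $(y_1,y_0),(y_1',y_0)\in C$, and convexity of $C$ gives $(\lambda y_1 + (1-\lambda)y_1', y_0) \in C$. So $\lambda y_1 + (1-\lambda)y_1' \in C_{1m}$.

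Claim \ref{Lemma: c-concave functions, indicator costs, c-conjugate pairs generated from indicators for convex sets, claim 2} then follows from claim \ref{Lemma: c-concave functions, indicator costs, c-conjugate pairs generated from indicators for convex sets, claim 1} essentially for free. By claim \ref{Lemma: c-concave functions, indicator costs, c-conjugate pairs generated from indicators for convex sets, claim 1}, $B \coloneqq \mathcal{Y}_0 \setminus A^C$ is convex, and the hypothesis $A^C \neq \mathcal{Y}_0$ ensures $B \neq \varnothing$. Rewrite $A^{CC} = \{y_1 \in \mathcal{Y}_1 \; ; \; \forall y_0 \in B, \; (y_1,y_0)\in C\}$; then the identical argument used for claim \ref{Lemma: c-concave functions, indicator costs, c-conjugate pairs generated from indicators for convex sets, claim 3} (with $\mathcal{Y}_0$ replaced by $B$) shows $A^{CC}$ is convex. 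There is no real obstacle in any of these arguments; the only care point is making sure convex combinations in one coordinate remain in the relevant ambient set, which is guaranteed by $\mathcal{Y}_1,\mathcal{Y}_0$ being intervals in the setting where this lemma is used.
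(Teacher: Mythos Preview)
Your proof is correct and takes essentially the same approach as the paper: the paper expresses each of $B$, $A^{CC}$, and $C_{1m}$ as an intersection of one-coordinate slices of $C$ and invokes ``slices of convex sets are convex'' and ``intersections of convex sets are convex,'' while you unpack that into a direct convex-combination verification. One minor point: your caveat about needing $\mathcal{Y}_1,\mathcal{Y}_0$ to be convex is unnecessary, since $C\subseteq \mathcal{Y}_1\times\mathcal{Y}_0$ already forces the relevant convex combinations back into $\mathcal{Y}_1$ or $\mathcal{Y}_0$ (e.g., $(y_1,\lambda y_0+(1-\lambda)y_0')\in C$ implies $\lambda y_0+(1-\lambda)y_0'\in\mathcal{Y}_0$).
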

\begin{proof}
	\singlespacing
	
	For claim \ref{Lemma: c-concave functions, indicator costs, c-conjugate pairs generated from indicators for convex sets, claim 1}, notice that 
	\begin{align*}
		A^C &= \left\{y_0 \in \mathcal{Y}_0 \; ; \; \exists y_1 \in A, \; (y_1, y_0) \in \big(\mathcal{Y}_1 \times \mathcal{Y}_0\big) \setminus C \right\} = \bigcup_{y_1 \in A} \left\{y_0 \in \mathcal{Y}_0 \; ; \; (y_1, y_0) \in \big(\mathcal{Y}_1 \times \mathcal{Y}_0\big) \setminus C\right\} \\
		&= \bigcup_{y_1 \in A} \mathcal{Y}_0 \setminus \left\{y_0 \in \mathcal{Y}_0 \; ; \; (y_1, y_0) \in C\right\} = \mathcal{Y}_0 \setminus \bigcap_{y_1 \in A} \left\{y_0 \in \mathcal{Y}_0 \; ; \; (y_1, y_0) \in C\right\}
	\end{align*}
	Since $C$ is convex, $\{y \in \mathcal{Y}_0 \; ; \; (y_1, y_0) \in C\}$ is also convex for any $y_1$. The intersection of an arbitrary collection of convex sets is convex, so $A^C = \mathcal{Y}_0 \setminus B$ for some convex $B$. \\
		
	Consider claim \ref{Lemma: c-concave functions, indicator costs, c-conjugate pairs generated from indicators for convex sets, claim 2} next. Notice that 
	\begin{align*}
		A^{CC} = \left\{y_1 \in \mathcal{Y}_1 \; ; \; \forall y_0 \in \mathcal{Y}_0 \setminus A^C, \; (y_1, y_0) \in C \right\} = \bigcap_{y_0 \in \mathcal{Y}_0 \setminus A^C} \left\{y_1 \in \mathcal{Y}_1 \; ; \; (y_1, y_0) \in C\right\}
	\end{align*}
	Since $C$ is convex, $\left\{y_1 \in \mathcal{Y}_1 \; ; \; (y_1, y_0) \in C\right\}$ is convex as well, and thus $A^{CC}$ is convex. \\
		
	Finally, we show claim \ref{Lemma: c-concave functions, indicator costs, c-conjugate pairs generated from indicators for convex sets, claim 3}. Similar to $A^{CC}$, notice that 
	\begin{equation*}
		C_{1m} = \left\{y_1 \in \mathcal{Y}_1 \; ; \; \forall y_0 \in \mathcal{Y}_0, \; (y_1,y_0) \in C\right\} = \bigcap_{y_0 \in \mathcal{Y}_0} \left\{y_1 \in \mathcal{Y}_1 \; ; \; (y_1, y_0) \in C\right\}
	\end{equation*}
	is the intersection of convex sets and therefore convex.
\end{proof}

Refer to the convex subsets of $\mathbb{R}$ as \textbf{intervals}; specifically, $I \subset \mathbb{R}$ is called an interval if $I$ takes the form
\begin{align*}
	&(\ell, u) &&[\ell, u) && (\ell, u] && [\ell, u]
\end{align*}
where $\ell = -\infty$ is allowed for $(\ell, u)$ and $(\ell, u]$ and $u = \infty$ is allowed for $(\ell, u)$ and $[\ell, u)$. $I^c$ is the complement of the interval $I$.

Lemmas \ref{Lemma: c-concave functions, indicator of convex set costs, strong duality}, \ref{Lemma: weak convergence, Donsker results, c-concave functions for indicator of convex set costs}, and \ref{Lemma: weak convergence, completeness, indicator of convex set costs c-concave functions} are relevant when the cost function is $c(y_1, y_0) = \mathbbm{1}\{(y_1,y_0) \in C\}$ for some nonempty, open, convex $C \subseteq \mathcal{Y}_1 \times \mathcal{Y}_0$. When this is so, define 
\begin{align}
	\mathcal{F}_c &= \left\{\varphi : \mathcal{Y}_1 \rightarrow \mathbb{R} \; ; \; \varphi(y_1) = \mathbbm{1}_I(y_1) \text{ for some interval } I\right\} \label{Defn: F_c for indicator costs of convex C, strong duality appendix} \\
	\mathcal{F}_c^c &= \left\{\psi : \mathcal{Y}_0 \rightarrow \mathbb{R} \; ; \; \psi(y_0) = -\mathbbm{1}_{I^c}(y_0) \text{ for some  interval } I\right\} \label{Defn: F_c^c for indicator costs of convex C, strong duality appendix}
\end{align}

\begin{restatable}[Strong duality for indicator cost functions of a convex set]{lemma}{lemmaCConcaveFunctionsIndicatorsConvexCDonsker}
	\label{Lemma: c-concave functions, indicator of convex set costs, strong duality}
	\singlespacing
	Let $\mathcal{Y}_1, \mathcal{Y}_0 \subseteq \mathbb{R}$, $C \subseteq \mathcal{Y}_1 \times \mathcal{Y}_0$ be nonempty, open, and convex, and let $c : \mathcal{Y}_1 \times \mathcal{Y}_0 \rightarrow \mathbb{R}$ be given by $c(y_1,y_0) = \mathbbm{1}_C(y_1,y_0)$. Let $\mathcal{F}_c$ and $\mathcal{F}_c^c$ be given by \eqref{Defn: F_c for indicator costs of convex C, strong duality appendix} and \eqref{Defn: F_c^c for indicator costs of convex C, strong duality appendix} respectively. Then strong duality holds:
	\begin{equation}
		\inf_{\pi \in \Pi(P_1, P_0)} \int \mathbbm{1}_C(y_1,y_0) d\pi(y_1, y_0) = \sup_{(\varphi, \psi) \in \Phi_c \cap \big(\mathcal{F}_c \times \mathcal{F}_c^c\big)} \int \varphi(y_1) dP_1(y_1) + \int \psi(y_0) dP_0(y_0) \label{Display: lemma, strong duality for indicators of convex sets}
	\end{equation}
\end{restatable}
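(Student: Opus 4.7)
The plan is to leverage the strong duality result of Theorem \ref{Theorem: strong duality for indicator costs}, which already gives strong duality over pairs of indicator functions $(A,B) \in \Phi_c^I$, and then show that for each such $(A,B)$ we can pass to a $c$-concave conjugate pair $(\varphi^{cc}, \varphi^c)$ (or a suitable substitute) that lies in $\Phi_c \cap (\mathcal{F}_c \times \mathcal{F}_c^c)$ and achieves at least the same value of $J$. The convexity of $C$ will force the relevant sets to be intervals, which puts the conjugate pair into $\mathcal{F}_c \times \mathcal{F}_c^c$. Weak duality (Lemma \ref{Lemma: weak duality}) then sandwiches the supremum over $\Phi_c \cap (\mathcal{F}_c \times \mathcal{F}_c^c)$ between the value on $\Phi_c^I$ and the primal value, yielding the desired equality.

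In detail: first, I would apply Theorem \ref{Theorem: strong duality for indicator costs} to write the primal value as $\sup_{(A,B) \in \Phi_c^I} J(\mathbbm{1}_A, -\mathbbm{1}_B)$. Fix an arbitrary $(A,B) \in \Phi_c^I$ and let $\varphi = \mathbbm{1}_A$; Lemma \ref{Lemma: c-concave functions, indicator costs, c-conjugate pairs generated from indicators} gives $\varphi^c = -\mathbbm{1}_{A^C}$. From the constraint $\mathbbm{1}_A(y_1) - \mathbbm{1}_B(y_0) \leq \mathbbm{1}_C(y_1,y_0)$, rearranging and taking the supremum in $y_1$ shows $\mathbbm{1}_B(y_0) \geq \mathbbm{1}_{A^C}(y_0)$ pointwise, hence $\int -\mathbbm{1}_B\, dP_0 \leq \int \varphi^c\, dP_0$.

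The main case is $A^C \neq \mathcal{Y}_0$: Lemma \ref{Lemma: c-concave functions, indicator costs, c-conjugate pairs generated from indicators} then gives $\varphi^{cc} = \mathbbm{1}_{A^{CC}}$, and Lemma \ref{Lemma: c-concave functions, indicator costs, c-conjugate pairs generated from indicators for convex sets} ensures both $A^{CC}$ is convex and $A^C = \mathcal{Y}_0 \setminus B'$ for some convex $B'$. Since convex subsets of $\mathbb{R}$ are intervals, it follows that $\varphi^{cc} \in \mathcal{F}_c$ and $\varphi^c = -\mathbbm{1}_{(B')^c} \in \mathcal{F}_c^c$. Lemma \ref{Lemma: c-concave functions, generating c-concave functions} ensures $(\varphi^{cc}, \varphi^c) \in \Phi_c$ and $\varphi^{cc} \geq \varphi$ pointwise, so $\int \varphi^{cc}\, dP_1 \geq \int \mathbbm{1}_A\, dP_1$. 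Combining the two inequalities yields $J(\varphi^{cc}, \varphi^c) \geq J(\mathbbm{1}_A, -\mathbbm{1}_B)$.

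The one edge case to dispose of is $A^C = \mathcal{Y}_0$, where $\varphi^{cc}$ need not equal $\mathbbm{1}_{A^{CC}}$. Here claim 3 of Lemma \ref{Lemma: c-concave functions, indicator costs, c-conjugate pairs generated from indicators} gives $J(\varphi^{cc}, \varphi^c) = J(\mathbbm{1}_{C_{1m}}, 0)$, with $C_{1m}$ convex by Lemma \ref{Lemma: c-concave functions, indicator costs, c-conjugate pairs generated from indicators for convex sets}, so $(\mathbbm{1}_{C_{1m}}, 0) \in \mathcal{F}_c \times \mathcal{F}_c^c$ (taking $I = \mathbb{R}$ for the zero function); membership in $\Phi_c$ is immediate from the definition of $C_{1m}$. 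Moreover, $A^C = \mathcal{Y}_0$ combined with the constraint forces $\mathbbm{1}_B \equiv 1$, so $J(\mathbbm{1}_A, -\mathbbm{1}_B) \leq 0 \leq J(\mathbbm{1}_{C_{1m}}, 0)$. Taking the supremum over $(A,B) \in \Phi_c^I$ then gives
\[
\inf_{\pi \in \Pi(P_1, P_0)} \int \mathbbm{1}_C\, d\pi \leq \sup_{(\varphi,\psi) \in \Phi_c \cap (\mathcal{F}_c \times \mathcal{F}_c^c)} J(\varphi, \psi),
\]
while weak duality supplies the opposite inequality, completing the proof. The main obstacle I anticipate is a clean treatment of the degenerate case $A^C = \mathcal{Y}_0$, along with carefully matching the interval representations required by $\mathcal{F}_c$ and $\mathcal{F}_c^c$ to the convex sets produced by the $c$-transform machinery.
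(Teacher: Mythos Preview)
Your proposal is correct and follows essentially the same route as the paper's proof: start from Theorem \ref{Theorem: strong duality for indicator costs}, replace each $(\mathbbm{1}_A,-\mathbbm{1}_B)$ by its $c$-concave conjugate pair using Lemma \ref{Lemma: c-concave functions, indicator costs, c-conjugate pairs generated from indicators}, invoke Lemma \ref{Lemma: c-concave functions, indicator costs, c-conjugate pairs generated from indicators for convex sets} so that convexity of $C$ forces the relevant sets to be intervals, handle the degenerate case $A^C=\mathcal{Y}_0$ via $(\mathbbm{1}_{C_{1m}},0)$, and close with weak duality. The only cosmetic difference is that the paper packages the ``improve $(\mathbbm{1}_A,-\mathbbm{1}_B)$ to $(\varphi^{cc},\varphi^c)$'' step into a single appeal to Lemma \ref{Lemma: c-concave functions, dual restricted to measurable and integrable c-concave functions satisfies strong duality}, whereas you verify the inequality $J(\varphi^{cc},\varphi^c)\geq J(\mathbbm{1}_A,-\mathbbm{1}_B)$ directly; your explicit treatment of the edge case is in fact slightly more detailed than the paper's.
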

\begin{proof}
	\singlespacing
	
	Recall that theorem \ref{Theorem: strong duality for indicator costs} shows 
	\begin{equation*}
		\inf_{\pi \in \Pi(P_1, P_0)} \int \mathbbm{1}_C(y_1,y_0) d\pi(y_1,y_0) = \sup_{(A, B) \in \Phi_c^I} \int \mathbbm{1}_A(y_1) dP_1(y_1) - \int \mathbbm{1}_B(y_0) d\nu(y_0)
	\end{equation*}
	where 
	\begin{equation*}
		\Phi_c^I = \left\{(A, B)\; ; \; A \subset \mathcal{Y}_1 \text{ is closed and nonempty, } B \subset \mathcal{Y}_0 \text{ is measurable, and } \mathbbm{1}_A(y_1) - \mathbbm{1}_B(y_0) \leq \mathbbm{1}_C(y_1,y_0) \right\}
	\end{equation*}
	We will apply lemma \ref{Lemma: c-concave functions, dual restricted to measurable and integrable c-concave functions satisfies strong duality}. Let $\varphi(y_1) = \mathbbm{1}_A(y_1)$ for some closed and nonempty $A \subset \mathcal{Y}_1$. There are two possibilities: 
	\begin{enumerate}
		\item $A^C = \mathcal{Y}_0$, in which case $J(\varphi^{cc}, \varphi^c) = J(\mathbbm{1}_{C_{1m}}, 0)$, or
		\item $A^C \neq \mathcal{Y}_0$, in which case $J(\varphi^{cc}, \varphi^c) = J(\mathbbm{1}_{A^{CC}}, -\mathbbm{1}_{A^C})$. 
	\end{enumerate}
	Since $C$ is convex, $C_{1m}$, and $A^{CC}$ are convex subsets of $\mathbb{R}$ (i.e., intervals), as shown in lemma \ref{Lemma: c-concave functions, indicator costs, c-conjugate pairs generated from indicators for convex sets}. $A^C$ is the complement of an interval, and $0 = \mathbbm{1}_\varnothing(y_0)$ is the indicator of the complement of $\mathbb{R}$, which is the interval $(-\infty,\infty)$. Since all functions involved are bounded, they are all integrable, and lemma \ref{Lemma: c-concave functions, dual restricted to measurable and integrable c-concave functions satisfies strong duality} implies 
	\begin{align*}
		\inf_{\pi \in \Pi(P_1, P_0)} \int \mathbbm{1}_C(y_1,y_0) d\pi(y_1, y_0) = \sup_{(\varphi, \psi) \in \Phi_c \cap \big(\mathcal{F}_c(\Phi_c^I) \times \mathcal{F}_c^c(\Phi_c^I)\big)} \int \varphi(y_1) dP_1(y_1) + \int \psi(y_0) dP_0(y_0)
	\end{align*}
	Finally, note that $\mathcal{F}_c(\Phi_c^I) \subseteq \mathcal{F}_c$ and $\mathcal{F}_c^c(\Phi_c^I) \subseteq \mathcal{F}_c^c$, which implies the strong duality claim in display \eqref{Display: lemma, strong duality for indicators of convex sets} holds.
\end{proof}

\subsection{Special cases: $c_L(y_1, y_0, \delta) = \mathbbm{1}\{y_1 - y_0 < \delta\}$ and $c_H(y_1, y_0, \delta) = \mathbbm{1}\{y_1 - y_0 > \delta\}$}
\label{Appendix: duality in optimal transport, CDF special case}

\begin{restatable}[]{lemma}{lemmaSpecialCaseCDFWeakInequality}
	\label{Lemma: duality special case CDF}
	\singlespacing
	
	Let $F_1(y) = P_1(Y_1 \leq y) = \int \mathbbm{1}\{y_1 \leq y\} dP_1(y_1)$ denote the cumulative distribution function (CDF) of $P_1$, and let $F_0$ the CDF of $P_0$. Let $c_L(y_1, y_0, \delta) = \mathbbm{1}\{y_1 - y_0 < \delta\}$. Then
	\begin{align}
		OT_{c_L}(P_1, P_0) &= \inf_{\pi \in \Pi(P_1, P_0)} \int \mathbbm{1}\{y_1 - y_0 < \delta\} d\pi(y_1, y_0) \notag \\
		&= \max\left\{\sup_y \{F_1(y) - F_0(y - \delta)\}, P_1(Y_1 < \min\{\mathcal{Y}_0\} + \delta)\right\} \label{Display: lemma, strong duality for CDF lower bound}
	\end{align}
\end{restatable}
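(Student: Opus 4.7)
The plan is to apply the strong duality result for indicator cost functions of open convex sets (Lemma \ref{Lemma: c-concave functions, indicator of convex set costs, strong duality}) to $c_L(y_1,y_0) = \mathbbm{1}_C(y_1,y_0)$, where $C = \{(y_1,y_0) \in \mathcal{Y}_1 \times \mathcal{Y}_0 : y_1 - y_0 < \delta\}$ is open and convex. This reduces the primal to
\begin{equation*}
OT_{c_L}(P_1, P_0) = \sup_{(\mathbbm{1}_I, -\mathbbm{1}_{I'^c}) \in \Phi_{c_L}} \left[P_1(I) - P_0(I'^c)\right],
\end{equation*}
the supremum taken over pairs of intervals $I, I' \subseteq \mathbb{R}$ satisfying $\mathbbm{1}_I(y_1) - \mathbbm{1}_{I'^c}(y_0) \leq \mathbbm{1}\{y_1 - y_0 < \delta\}$ on $\mathcal{Y}_1 \times \mathcal{Y}_0$.

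For the ``$\geq$'' direction I will exhibit two families of feasible dual pairs that realize the two terms in the max. The pair $(\mathbbm{1}_{(-\infty,y]}, -\mathbbm{1}_{(-\infty, y - \delta]})$, corresponding to $I = (-\infty, y]$ and $I' = (y - \delta, \infty)$, is feasible for every $y \in \mathbb{R}$ because $y_1 \leq y$ and $y_0 > y - \delta$ force $y_1 - y_0 < \delta$; its objective is $F_1(y) - F_0(y - \delta)$, so $\sup_y$ of this family provides the first term. The pair $(\mathbbm{1}_{(-\infty, \min \mathcal{Y}_0 + \delta)}, 0)$, corresponding to $I' = \mathbb{R}$, is feasible because every $y_0 \in \mathcal{Y}_0$ satisfies $y_0 \geq \min \mathcal{Y}_0$, so $y_1 < \min \mathcal{Y}_0 + \delta$ already forces $y_1 - y_0 < \delta$; its objective is $P_1(Y_1 < \min \mathcal{Y}_0 + \delta)$. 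Combining yields the required lower bound.

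For the ``$\leq$'' direction I will first reduce to $I$ of the form $(-\infty, u]$ or $(-\infty, u)$ and $I'$ of the form $[\ell, \infty)$ or $(\ell, \infty)$: replacing $I$ by its down-set enlargement (closed or open according as $\sup I$ is or is not in $I$) preserves feasibility, because for any $y_1 \leq \sup I$ one can dominate by some $y_1^* \in I$, while weakly increasing $P_1(I)$. The same reduction applies to $I'$. I will then split on whether $\ell = \inf I'$ is $-\infty$ or finite. If $\ell = -\infty$, then $I' = \mathbb{R}$, $P_0(I'^c) = 0$, and feasibility forces $I \subseteq (-\infty, \min \mathcal{Y}_0 + \delta)$, so the objective is bounded by the second term. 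If $\ell \in \mathbb{R}$, setting $y = \sup I$ and examining each of the four open/closed combinations yields $\ell \geq y - \delta$ (strict in one combination, weak in the others), which in turn bounds the objective by either $F_1(y) - F_0(y - \delta)$ directly or by a left-limit $F_1(y^-) - F_0((y - \delta)^-) = \lim_{z \uparrow y}[F_1(z) - F_0(z - \delta)] \leq \sup_z \{F_1(z) - F_0(z - \delta)\}$.

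The main obstacle will be the boundary bookkeeping in this last step: the strict inequality in $c_L$ makes the relationship between $u$ and $\ell$ subtly dependent on whether each endpoint is attained, and atoms of $F_1$ or $F_0$ at the critical levels can prevent a naive bound by $F_1(y) - F_0(y - \delta)$. Observing that any left-limit value is approachable by $F_1(z) - F_0(z - \delta)$ as $z \uparrow y$, and hence automatically dominated by $\sup_z \{F_1(z) - F_0(z - \delta)\}$, is the cleanest route through these cases.
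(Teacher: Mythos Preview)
Your proposal is correct in outline, and the two halves of the argument you sketch would go through, but you are taking a longer road than the paper does. You invoke Lemma~\ref{Lemma: c-concave functions, indicator of convex set costs, strong duality}, which hands you the dual over \emph{all} interval pairs $(I,I')$, and then you must argue by reduction and a four-way open/closed case analysis that only down-set/up-set pairs matter and that their values collapse to $F_1(y)-F_0(y-\delta)$ or to the $C_{1m}$ term. The paper instead reaches directly for Theorem~\ref{Theorem: strong duality for indicator costs} combined with Lemma~\ref{Lemma: c-concave functions, indicator costs, c-conjugate pairs generated from indicators}: passing to the $c$-concave envelope of $\mathbbm{1}_A$ immediately forces the dual pair to be $(\mathbbm{1}_{A^{CC}},-\mathbbm{1}_{A^C})$, or the degenerate pair $(\mathbbm{1}_{C_{1m}},0)$ when $A^C=\mathcal{Y}_0$. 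For this particular $C$ one then computes $A^C=\{y_0\leq\max A-\delta\}$ and $A^{CC}=\{y_1\leq\max A\}$ in two lines, so the objective is $F_1(\max A)-F_0(\max A-\delta)$ without any endpoint bookkeeping; the second term of the max is exactly $P_1(C_{1m})$. In short, the $c$-concave structure already performs the down-set reduction and picks the ``closed at $\max A$'' endpoint convention for you, so the four cases and the left-limit argument you identify as the main obstacle simply do not arise.

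What your approach buys is self-containedness: you only need Lemma~\ref{Lemma: c-concave functions, indicator of convex set costs, strong duality} and elementary monotonicity, never the explicit $A^C$, $A^{CC}$, $C_{1m}$ machinery. What the paper's approach buys is brevity and a cleaner handling of the boundary: once you know the optimal $\varphi$ must be $c$-concave, there is exactly one shape it can take, and no case split on whether $\sup I$ or $\inf I'$ is attained is needed. If you pursue your route, be careful with the case $\sup I=+\infty$ (which your ``set $y=\sup I$'' does not cover) and remember that the feasibility constraint lives on $\mathcal{Y}_1\times\mathcal{Y}_0$, not on $\mathbb{R}^2$, so the inequality you deduce between $u$ and $\ell$ really involves $\sup(I\cap\mathcal{Y}_1)$ and $\inf(I'\cap\mathcal{Y}_0)$; this does not affect the objective but does affect the clean statement ``$\ell\geq y-\delta$''.
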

\begin{proof} 
	\singlespacing
	
	Let $C = \{y_1 - y_0 < \delta\}$. Apply theorem \ref{Theorem: strong duality for indicator costs} and lemma \ref{Lemma: c-concave functions, indicator costs, c-conjugate pairs generated from indicators} to find that 
	\begin{align*}
		OT_{c_L}(P_1, P_0) = \max\{\sup_{A \in \mathcal{A}} P_1(Y_1 \in A^{CC}) - P_0(Y_0 \in A^C), P_1(Y_1 \in C_{1m})\}
	\end{align*}
	where 
	\begin{gather*}
		A^C = \left\{y_0 \in \mathcal{Y}_0 \; ; \; \exists y_1 \in A, \; (y_1, y_0) \not \in C \right\}, \quad A^{CC} = \left\{y_1 \in \mathcal{Y}_1 \; ; \; \forall y_0 \in \mathcal{Y}_0 \setminus A^C, \; (y_1, y_0) \in C \right\}, \\
		C_{1m} = \left\{y_1 \in \mathcal{Y}_1 \; ; \; \forall y_0 \in \mathcal{Y}_0, \; (y_1, y_0) \in C\right\}.
	\end{gather*}
	and $\mathcal{A}$ is the collection of closed, nonempty subsets of $\mathcal{Y}_1$ such that $A^C \neq \mathcal{Y}_0$. 
	
	First consider $\sup_{A \in \mathcal{A}} P_1(Y_1 \in A^{CC}) - P_0(Y_0 \in A^C)$. Let $A \in \mathcal{A}$ and $\varphi(y_1) = \mathbbm{1}_A(y_1)$. Thus
	\begin{align*}
		A^C &= \{y \in \mathcal{Y}_0 \; ; \; \exists y_1 \in A, \; (y_1, y_0) \not \in C\} = \{y_0 \in \mathcal{Y}_0 \; ; \; y_0 \leq \max\{A\} - \delta\}, \\
		A^{CC} &= \left\{y_1 \in \mathcal{Y}_1 \; ; \; \forall y_0 \in \mathcal{Y}_0 \setminus A^C, \; y_1 - y_0 < \delta \right\} = \left\{y_1 \in \mathcal{Y}_1 \; ; \; y_1 \leq \max\{A\}\right\}
	\end{align*}
	where we've used the fact that $A^C \neq \mathcal{Y}_0$ implies $\sup\{A\} < \infty$ and so $\sup\{A\} = \max\{A\}$ because $A$ is closed. Therefore
	\begin{align*}
		J(\varphi^{cc}, \varphi^c) &= P_1(Y_1 \in A^{CC}) - P_0(Y_0 \in A^c) \\
		&= P_1(Y_1 \leq \max\{A\}) - P_0(Y_0 \leq \max\{A\} - \delta)
	\end{align*}
	which takes the form $F_1(y) - F_0(y - \delta)$ for $y = \max\{A\}$. 
	
	Now consider $P_1(Y_1 \in C_{1m})$, and notice that 
	\begin{align*}
		C_{1m} &= \left\{y_1 \in \mathcal{Y}_1 \; ; \; \forall y_0 \in \mathcal{Y}_0, \; (y_1, y_0) \in C\right\} = \left\{y_1 \in \mathcal{Y}_1 \; ; \; \forall y_0 \in \mathcal{Y}_0, \; y_1 - y_0 < \delta\right\} \\
		&= \left\{y_1 \in \mathcal{Y}_1 \; ; \; \forall y_0 \in \mathcal{Y}_0, \; y_1 < \min\{\mathcal{Y}_0\} + \delta\right\}
	\end{align*}
	Thus $P_1(Y_1 \in C_{1m}) = P_1(Y_1 < \min\{\mathcal{Y}_0\} + \delta)$. The result follows.
\end{proof}

\begin{remark}
	\singlespacing
	$C_{1m}$ may be closed; e.g., let $\mathcal{Y}_1 = [0,1] \cup [3, 10]$, let $\mathcal{Y}_0 = [2,10]$, and $\delta = 0$. Then $C_{1m} = \{y_1 \in \mathcal{Y}_1 \; ; \; y_1 < 2\} = [0,1]$.
\end{remark}

\begin{restatable}[]{corollary}{corollarySpecialCaseCDFWeakInequality}
	\label{Lemma: duality special case CDF, corollary}
	\singlespacing
	
	Let $c_L(y_1, y_0, \delta) = \mathbbm{1}\{y_1 - y_0 < \delta\}$ and $P_1$, $P_0$ have continuous cumulative distribution functions $F_1(y) = P_1(Y_1 \leq y)$ and $F_0(y) = P_0(Y_0 \leq y)$ respectively. Then
	\begin{align}
		OT_{c_L}(P_1, P_0) = \inf_{\pi \in \Pi(P_1, P_0)} \int \mathbbm{1}\{y_1 - y_0 < \delta\} d\pi(y_1, y_0) = \sup_y \{F_1(y) - F_0(y - \delta)\}\label{Display: lemma, strong duality for CDF lower bound, corollary}
	\end{align}
\end{restatable}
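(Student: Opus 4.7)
The plan is to invoke Lemma \ref{Lemma: duality special case CDF}, which already gives
\[
OT_{c_L}(P_1, P_0) = \max\Big\{\sup_y \{F_1(y) - F_0(y - \delta)\},\; P_1(Y_1 < \min\{\mathcal{Y}_0\} + \delta)\Big\},
\]
and to show that under continuity of $F_1$ and $F_0$ the second argument of the maximum is always dominated by the first, so the maximum collapses to the supremum.

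I would split the argument into two cases depending on whether $\mathcal{Y}_0$ is bounded below. If $\inf \mathcal{Y}_0 = -\infty$, then the set $C_{1m} = \{y_1 : \forall y_0 \in \mathcal{Y}_0,\; y_1 - y_0 < \delta\}$ constructed in the proof of Lemma \ref{Lemma: duality special case CDF} is empty, so $P_1(Y_1 \in C_{1m}) = 0$; since $F_1(y) - F_0(y-\delta) \to 0$ as $y \to -\infty$, the supremum is nonnegative and the inequality is immediate. Otherwise, let $y_0^\ell = \inf \mathcal{Y}_0 > -\infty$. Because $\mathcal{Y}_0$ contains the support of $P_0$, one has $P_0(Y_0 < y_0^\ell) = 0$, and continuity of $F_0$ rules out any mass at $y_0^\ell$, giving $F_0(y_0^\ell) = 0$. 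Continuity of $F_1$ in turn yields $F_1(y_0^\ell + \delta) = P_1(Y_1 \leq y_0^\ell + \delta) = P_1(Y_1 < y_0^\ell + \delta)$. Evaluating the supremum expression at $y = y_0^\ell + \delta$ therefore gives
\[
\sup_y \{F_1(y) - F_0(y - \delta)\} \;\geq\; F_1(y_0^\ell + \delta) - F_0(y_0^\ell) \;=\; P_1(Y_1 < y_0^\ell + \delta),
\]
which is precisely the second term in the maximum.

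The only subtle point is that "$\min \mathcal{Y}_0$" in Lemma \ref{Lemma: duality special case CDF} is not literally a minimum when the infimum is not attained; the clean way around this is to track $P_1(Y_1 \in C_{1m})$ directly, as in the last case of the previous lemma's proof, rather than the notation $\min \mathcal{Y}_0 + \delta$. I do not expect any genuine obstacle: the continuity hypothesis plays exactly two roles, namely killing any atom of $P_0$ at $y_0^\ell$ (so that $F_0(y_0^\ell) = 0$) and matching the strict-inequality event $\{Y_1 < y_0^\ell + \delta\}$ to the closed-inequality CDF value $F_1(y_0^\ell + \delta)$, and once those are in hand the two terms in the maximum coincide at $y = y_0^\ell + \delta$.
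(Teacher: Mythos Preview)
Your proposal is correct and follows essentially the same approach as the paper: invoke Lemma \ref{Lemma: duality special case CDF} and use continuity of $F_1$ and $F_0$ to show that the term $P_1(Y_1 < \min\{\mathcal{Y}_0\} + \delta)$ coincides with $F_1(y) - F_0(y-\delta)$ evaluated at $y = \min\{\mathcal{Y}_0\} + \delta$, so the maximum collapses to the supremum. Your treatment is in fact slightly more careful than the paper's, since you explicitly handle the case $\inf\mathcal{Y}_0 = -\infty$ and flag the $\min$ versus $\inf$ ambiguity, both of which the paper leaves implicit.
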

\begin{proof}
	\singlespacing
	
	Continuity of the cumulative distribution functions implies $P_1(Y_1 = \delta + \min\{\mathcal{Y}_0\}) = P_0(Y_0 = \min\{\mathcal{Y}_0\}) = 0$, and thus
	\begin{align*}
		P_1(Y_1 < \delta + \min\{\mathcal{Y}_0\}) = P_1(Y_1 \leq \delta + \min\{\mathcal{Y}_0\}) - P_0(Y_0 \leq \min\{\mathcal{Y}_0\})
	\end{align*}
	Which takes the form $F_1(y) - F_0(y - \delta)$ for $y = \delta + \min\{\mathcal{Y}_0\}$. It follows that 
	\begin{equation*}
		\max\left\{\sup_y \{F_1(y) - F_0(y - \delta)\}, P_1(Y_1 < \min\{\mathcal{Y}_0\} + \delta)\right\} = \sup_y \{F_1(y) - F_0(y - \delta)\}
	\end{equation*}
	and lemma \ref{Lemma: duality special case CDF} gives the result. 
\end{proof}

\begin{restatable}[]{lemma}{lemmaSpecialCaseCDFComplementWeakInequality}
	\label{Lemma: duality special case CDF complement}
	\singlespacing
	
	Let $c_H(y_1, y_0, \delta) = \mathbbm{1}\{y_1 - y_0 > \delta\}$. Then
	\begin{align}
		OT_{c_H}(P_1, P_0) &= \inf_{\pi \in \Pi(P_1, P_0)} \int \mathbbm{1}\{y_1 - y_0 > \delta\} d\pi(y_1, y_0) \notag \\
		&= \max\left\{\sup_y \{P_1([y, \infty)) - P_0([y - \delta, \infty))\}, P_1((\max\{\mathcal{Y}_0\} + \delta, \infty))\right\} \label{Display: lemma, strong duality for CDF complement lower bound}
	\end{align}
\end{restatable}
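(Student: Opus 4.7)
The plan is to mirror the proof of Lemma \ref{Lemma: duality special case CDF} verbatim, only with the inequality flipped. Set $C = \{(y_1, y_0) \in \mathcal{Y}_1 \times \mathcal{Y}_0 : y_1 - y_0 > \delta\}$, which is open in $\mathcal{Y}_1 \times \mathcal{Y}_0$ as the preimage of $(\delta, \infty)$ under the continuous map $(y_1, y_0) \mapsto y_1 - y_0$, and satisfies $c_H = \mathbbm{1}_C$. Theorem \ref{Theorem: strong duality for indicator costs} together with Lemma \ref{Lemma: c-concave functions, indicator costs, c-conjugate pairs generated from indicators} then reduces the primal to
\begin{equation*}
OT_{c_H}(P_1, P_0) = \max\left\{\sup_{A \in \mathcal{A}} \left[P_1(A^{CC}) - P_0(A^C)\right],\; P_1(C_{1m})\right\},
\end{equation*}
where $\mathcal{A}$ is the family of closed, nonempty $A \subseteq \mathcal{Y}_1$ with $A^C \neq \mathcal{Y}_0$, and the sets $A^C$, $A^{CC}$, $C_{1m}$ are as in \eqref{Defn: A^C, A^CC}--\eqref{Defn: C_0m^C, C_1m^C}.

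The heart of the argument is then to identify these sets for this particular $C$. For closed nonempty $A \subseteq \mathcal{Y}_1$, the condition $(y_1, y_0) \not\in C$ is $y_1 - y_0 \leq \delta$, so
\begin{equation*}
A^C = \{y_0 \in \mathcal{Y}_0 : \exists\, y_1 \in A, \; y_0 \geq y_1 - \delta\} = \{y_0 \in \mathcal{Y}_0 : y_0 \geq \min A - \delta\},
\end{equation*}
using that $A$ closed and bounded below gives $\inf A = \min A$; consequently $A^C \neq \mathcal{Y}_0$ forces $\min A > \min \mathcal{Y}_0 + \delta$. By the same reasoning as in the proof of Lemma \ref{Lemma: duality special case CDF},
\begin{equation*}
A^{CC} = \{y_1 \in \mathcal{Y}_1 : \forall y_0 \in \mathcal{Y}_0 \setminus A^C,\; y_1 > y_0 + \delta\} = [\min A, \infty) \cap \mathcal{Y}_1.
\end{equation*}
Reparametrizing by $y = \min A$ turns the supremum over $A \in \mathcal{A}$ into
\begin{equation*}
\sup_{A \in \mathcal{A}} P_1(A^{CC}) - P_0(A^C) = \sup_y\left\{P_1([y, \infty)) - P_0([y - \delta, \infty))\right\}.
\end{equation*}
Finally, $C_{1m} = \{y_1 \in \mathcal{Y}_1 : y_1 - y_0 > \delta\ \forall y_0 \in \mathcal{Y}_0\} = (\max\mathcal{Y}_0 + \delta, \infty) \cap \mathcal{Y}_1$, giving $P_1(C_{1m}) = P_1((\max\{\mathcal{Y}_0\} + \delta, \infty))$, and substituting back yields the claim.

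The main obstacle will be the bookkeeping around boundary points in the computation of $A^{CC}$. As in Lemma \ref{Lemma: duality special case CDF}, one must be careful about whether the supremum $\sup\{y_0 \in \mathcal{Y}_0 : y_0 < \min A - \delta\}$ is attained, and about possible gaps in $\mathcal{Y}_0$ just below $\min A - \delta$ which could make $A^{CC}$ strictly larger than $[\min A, \infty)\cap \mathcal{Y}_1$. I would dispatch these in the same terse fashion as the earlier proof: the value of $P_1(A^{CC}) - P_0(A^C)$ at any $A \in \mathcal{A}$ coincides with $P_1([y, \infty)) - P_0([y - \delta, \infty))$ for an appropriate choice of $y$ (possibly approached as a limit from one side), and monotonicity of $y \mapsto P_1([y, \infty)) - P_0([y - \delta, \infty))$ together with the fact that $y$ ranges freely over $\mathbb{R}$ ensures the two suprema agree.
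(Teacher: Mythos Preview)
Your proposal is correct and follows essentially the same route as the paper: apply Theorem \ref{Theorem: strong duality for indicator costs} and Lemma \ref{Lemma: c-concave functions, indicator costs, c-conjugate pairs generated from indicators}, then compute $A^C = \{y_0 \geq \min A - \delta\}$, $A^{CC} = \{y_1 \geq \min A\}$, and $C_{1m} = \{y_1 > \max\mathcal{Y}_0 + \delta\}$ exactly as you do. Your final paragraph is actually more careful than the paper about possible gaps in $\mathcal{Y}_0$ affecting $A^{CC}$, though your aside about ``monotonicity of $y \mapsto P_1([y,\infty)) - P_0([y-\delta,\infty))$'' is not right (that map is not monotone) and is not needed---the paper simply asserts $A^{CC} = \{y_1 \geq \min A\}$ and moves on.
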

\begin{proof} 
	\singlespacing
	
	The proof is similar to that of lemma \ref{Lemma: duality special case CDF}. Let $C = \{y_1 - y_0 > \delta\}$. Apply theorem \ref{Theorem: strong duality for indicator costs} and lemma \ref{Lemma: c-concave functions, indicator costs, c-conjugate pairs generated from indicators} to find that 
	\begin{align*}
		OT_{c_L}(P_1, P_0) = \max\{\sup_{A \in \mathcal{A}} P_1(Y_1 \in A^{CC}) - P_0(Y_0 \in A^C), P_1(Y_1 \in C_{1m})\}
	\end{align*}
	where 
	\begin{gather*}
		A^C = \left\{y_0 \in \mathcal{Y}_0 \; ; \; \exists y_1 \in A, \; (y_1, y_0) \not \in C \right\}, \quad A^{CC} = \left\{y_1 \in \mathcal{Y}_1 \; ; \; \forall y_0 \in \mathcal{Y}_0 \setminus A^C, \; (y_1, y_0) \in C \right\}, \\
		C_{1m} = \left\{y_1 \in \mathcal{Y}_1 \; ; \; \forall y_0 \in \mathcal{Y}_0, \; (y_1, y_0) \in C\right\}.
	\end{gather*}
	and $\mathcal{A}$ is the collection of closed, nonempty subsets of $\mathcal{Y}_1$ such that $A^C \neq \mathcal{Y}_0$. 
	
	Consider $\sup_{A \in \mathcal{A}} P_1(Y_1 \in A^{CC}) - P_0(Y_0 \in A^C)$. Let $A \in \mathcal{A}$ and $\varphi(y_1) = \mathbbm{1}_A(y_1)$, and notice that 
	\begin{align*}
		A^C &= \{y \in \mathcal{Y}_0 \; ; \; \exists y_1 \in A, \; (y_1, y_0) \not \in C\} = \{y_0 \in \mathcal{Y}_0 \; ; \; y_0 \geq \min\{A\} - \delta\}, \\
		A^{CC} &= \left\{y_1 \in \mathcal{Y}_1 \; ; \; \forall y_0 \in \mathcal{Y}_0 \setminus A^C, \; y_1 - y_0 < \delta \right\} = \left\{y_1 \in \mathcal{Y}_1 \; ; \; y_1 \geq \min\{A\}\right\}
	\end{align*}
	Where as in the proof of lemma \ref{Lemma: duality special case CDF}, $A^C \neq \mathcal{Y}_0$ implies $\inf\{A\} > -\infty$ and so $\inf\{A\} = \min\{A\}$ because $A$ is closed. Thus
	\begin{align*}
		J(\varphi^{cc}, \varphi^c) &= P_1(Y_1 \in A^{CC}) - P_0(Y_0 \in A^c) \\
		&= P_1(Y_1 \geq \min\{A\}) - P_0(Y_0 \geq \min\{A\} - \delta)
	\end{align*}
	which takes the form $P_1([y, \infty)) - P_0([y - \delta, \infty))$ for $y = \min\{A\}$. 
	
	Now consider $P_1(Y_1 \in C_{1m})$, and notice that 
	\begin{align*}
		C_{1m} &= \left\{y_1 \in \mathcal{Y}_1 \; ; \; \forall y_0 \in \mathcal{Y}_0, \; (y_1, y_0) \in C\right\} = \left\{y_1 \in \mathcal{Y}_1 \; ; \; \forall y_0 \in \mathcal{Y}_0, \; y_1 - y_0 > \delta\right\} \\
		&= \left\{y_1 \in \mathcal{Y}_1 \; ; \; \forall y_0 \in \mathcal{Y}_0, \; y_1 > \max\{\mathcal{Y}_0\} + \delta\right\}
	\end{align*}
	Thus $P_1(Y_1 \in C_{1m}) = P_1(Y_1 > \max\{\mathcal{Y}_0\} + \delta)$. The result follows.
\end{proof}

\begin{restatable}[]{corollary}{corollarySpecialCaseCDFWeakInequality}
	\label{Lemma: duality special case CDF complement, corollary}
	\singlespacing
	
	Let $c_H(y_1, y_0, \delta) = \mathbbm{1}\{y_1 - y_0 > \delta\}$ and $P_1$, $P_0$ have continuous cumulative distribution functions $F_1(y) = P_1(Y_1 \leq y)$ and $F_0(y) = P_0(Y_0 \leq y)$ respectively. Then
	\begin{align}
		OT_{c_L}(P_1, P_0) = \inf_{\pi \in \Pi(P_1, P_0)} \int \mathbbm{1}\{y_1 - y_0 > \delta\} d\pi(y_1, y_0) = \sup_y \{F_0(y - \delta) - F_1(y)\}\label{Display: lemma, strong duality for CDF complement lower bound, corollary}
	\end{align}
\end{restatable}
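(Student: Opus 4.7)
The plan is to mirror the proof of Corollary~\ref{Lemma: duality special case CDF, corollary}, applying Lemma~\ref{Lemma: duality special case CDF complement} and then simplifying the two-term maximum using the continuity of the marginal CDFs. Note that there is a typographical issue in the statement (the subscript should read $c_H$ since the cost in the display is $\mathbbm{1}\{y_1-y_0>\delta\}$); I will prove $OT_{c_H}(P_1,P_0)=\sup_y\{F_0(y-\delta)-F_1(y)\}$.

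First, I would invoke Lemma~\ref{Lemma: duality special case CDF complement} to write
\begin{equation*}
OT_{c_H}(P_1,P_0) = \max\left\{\sup_y\{P_1([y,\infty))-P_0([y-\delta,\infty))\},\; P_1((\max\{\mathcal{Y}_0\}+\delta,\infty))\right\}.
\end{equation*}
Next, I would rewrite the first term using continuity of the CDFs. Since $F_1$ is continuous, $P_1([y,\infty))=1-F_1(y)$, and similarly $P_0([y-\delta,\infty))=1-F_0(y-\delta)$. Hence
\begin{equation*}
P_1([y,\infty))-P_0([y-\delta,\infty)) = F_0(y-\delta)-F_1(y),
\end{equation*}
so that the first term in the maximum equals $\sup_y\{F_0(y-\delta)-F_1(y)\}$.

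The remaining task is to show that the second term $P_1((\max\{\mathcal{Y}_0\}+\delta,\infty))$ is dominated by this supremum (in the case $\max\{\mathcal{Y}_0\}=\infty$ the term is zero and nothing is required). Evaluating the objective at $y^*=\max\{\mathcal{Y}_0\}+\delta$, continuity of $F_1$ gives $P_1((y^*,\infty))=1-F_1(y^*)$, while $F_0(y^*-\delta)=F_0(\max\{\mathcal{Y}_0\})=1$. Therefore
\begin{equation*}
F_0(y^*-\delta)-F_1(y^*) = 1-F_1(y^*) = P_1((\max\{\mathcal{Y}_0\}+\delta,\infty)),
\end{equation*}
which shows the second term is attained by the first at $y=y^*$, and hence is dominated by $\sup_y\{F_0(y-\delta)-F_1(y)\}$. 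Collapsing the maximum yields the claimed equality.

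There is no real obstacle here—the argument is a direct analogue of the lower-bound corollary, with the only subtlety being the careful handling of the boundary term $P_1((\max\{\mathcal{Y}_0\}+\delta,\infty))$ via continuity of $F_1$ at $y^*$.
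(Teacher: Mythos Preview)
Your proposal is correct and follows essentially the same route as the paper: invoke Lemma~\ref{Lemma: duality special case CDF complement}, use continuity to rewrite $P_1([y,\infty))-P_0([y-\delta,\infty))=F_0(y-\delta)-F_1(y)$, and then absorb the boundary term by evaluating at $y=\max\{\mathcal{Y}_0\}+\delta$. Your observation about the $c_L$/$c_H$ typo in the statement is also accurate.
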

\begin{proof}
	\singlespacing
	
	Continuity of the cumulative distribution functions implies that for any $y$,
	\begin{align*}
		P_1([y, \infty)) - P_0([y - \delta, \infty)) &= P_1((y, \infty)) - P_0((y - \delta, \infty)) \\
		&= (1 - F_1(y)) - (1 - F_0(y - \delta)) \\
		&= F_0(y - \delta) - F_1(y)
	\end{align*}
	and furthermore,
	\begin{align*}
		P_1(Y_1 > \delta + \max\{\mathcal{Y}_0\}) &= 1 - F_1(\delta + \min\{\mathcal{Y}_0\}) - (1 - F_0(\max\{\mathcal{Y}_0\}) \\
		&= F_0(\max\{\mathcal{Y}_0\}) - F_1(\delta + \max\{\mathcal{Y}_0\})
	\end{align*}
	equals $F_0(y - \delta) - F_1(y)$ for $y = \max\{\mathcal{Y}_0\} + \delta$. Finally, lemma \ref{Lemma: duality special case CDF complement} gives
	\begin{align*}
		OT_{c_H}(P_1, P_0) &= \max\left\{\sup_y \{P_1([y, \infty)) - P_0([y - \delta, \infty))\}, P_1((\max\{\mathcal{Y}_0\} + \delta, \infty))\right\} \\
		&= \sup_y \{F_0(y - \delta) - F_1(y)\}
	\end{align*}
\end{proof}

	\newpage
	
	\section{Appendix: miscellaneous lemmas}
\label{Appendix: misc lemmas}

\subsection{Continuity}
\label{Appendix: misc lemmas, subsection continuity}

\begin{restatable}[Continuity of maps between bounded function spaces]{lemma}{lemmaContinuityMapsBetweenBoundedFunctionSpaces}
	\label{Lemma: bounded function spaces, continuity with f: R^K -> R^M}
	\singlespacing
	
	Let $f : \mathbbm{D}_f \subseteq \mathbb{R}^K \rightarrow \mathbb{R}^M$ be uniformly continuous. Define the subset of bounded functions on $T$ taking values in $\mathbb{D}_f$:
	\begin{equation*}
		\ell^\infty(T, \mathbb{D}_f) = \left\{g : T \rightarrow \mathbb{R}^K \; ; \; g(t) \in \mathbb{D}_f, \; \sup_{t \in T} \lVert g(t) \rVert < \infty\right\} \subseteq \ell^\infty(T)^K
	\end{equation*}
	Let $F : \ell^\infty(T, \mathbb{D}_f) \rightarrow \ell^\infty(T)^M$ be defined pointwise as $F(g)(t) = f(g(t))$. Then $F$ is uniformly continuous.
\end{restatable}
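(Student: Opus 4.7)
The plan is to use uniform continuity of $f$ directly, translating it pointwise through the evaluation structure of the space $\ell^\infty(T, \mathbb{D}_f)$. I would fix $\varepsilon > 0$ and apply uniform continuity of $f$ to obtain $\delta > 0$ such that $x, y \in \mathbb{D}_f$ with $\lVert x - y \rVert < \delta$ imply $\lVert f(x) - f(y) \rVert < \varepsilon$. For any $g_1, g_2 \in \ell^\infty(T, \mathbb{D}_f)$ with $\lVert g_1 - g_2 \rVert_\infty < \delta$, the inequality $\lVert g_1(t) - g_2(t) \rVert \leq \lVert g_1 - g_2 \rVert_\infty < \delta$ holds for every $t \in T$, so $\lVert f(g_1(t)) - f(g_2(t)) \rVert < \varepsilon$ pointwise. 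Taking the supremum over $t$ yields $\lVert F(g_1) - F(g_2) \rVert_\infty \leq \varepsilon$, which is uniform continuity of $F$.

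Before running this argument, I would briefly check that $F$ actually maps into $\ell^\infty(T)^M$, i.e. that $F(g)$ is bounded whenever $g \in \ell^\infty(T, \mathbb{D}_f)$. Since $g(T) \subseteq \mathbb{D}_f$ is bounded by assumption, a standard consequence of uniform continuity is that $f$ maps bounded subsets of $\mathbb{D}_f$ to bounded sets in $\mathbb{R}^M$: cover $g(T)$ by finitely many balls of radius $\delta/2$ (using the $\delta$ that corresponds to $\varepsilon = 1$, say), and on each ball $f$ varies by at most $1$, so $f(g(T))$ is bounded. Thus $F(g) \in \ell^\infty(T)^M$, and $F$ is well-defined.

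There is no serious obstacle here; the only mild subtlety is keeping the norm conventions straight on the product spaces $\ell^\infty(T)^K$ and $\ell^\infty(T)^M$. Any two of the standard choices (the supremum of the Euclidean norms, or the max of the coordinate-wise sup norms) are equivalent, so the uniform-continuity estimate transfers without change up to a harmless constant. The argument does not require any structure on $T$ itself beyond its being a set, which is consistent with the way this lemma is invoked in the proof of Lemma~\ref{Lemma: optimal transport is continuous} via the addition map $s(h_1, h_2) = h_1 + h_2$.
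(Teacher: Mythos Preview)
Your proposal is correct and matches the paper's proof essentially line for line: the paper first notes that uniform continuity of $f$ implies $f$ is bounded on bounded sets (so $F$ is well defined), then fixes $\varepsilon>0$, picks $\delta$ from uniform continuity of $f$, and passes the pointwise estimate through the supremum exactly as you do. The only cosmetic difference is that the paper takes $\varepsilon/2$ in the choice of $\delta$ to obtain a strict inequality at the end.
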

\begin{proof}
	\singlespacing
	
	To see that $F : \ell^\infty(T, \mathbb{D}_f) \rightarrow \ell^\infty(T)^M$ is well defined, recall that uniform continuity of $f$ implies $f$ is bounded on bounded sets. Since $\{g(t) \; ; \; t \in T\}$ is bounded for any $g \in \ell^\infty(T, \mathbb{D}_f)$, this implies $\sup_t \lVert f(g(t)) \rVert < \infty$ and hence $F(g) \in \ell^\infty(T)^M$. 
		
	To see uniform continuity of $F$, let $\varepsilon > 0$ and use uniform continuity of $f$ to choose $\delta > 0$ such that for all $x, \tilde{x} \in \mathbb{D}_f$, 
	\begin{equation*}
		\lVert x - \tilde{x} \rVert < \delta \implies \lVert f(x) - f(\tilde{x}) \rVert < \varepsilon/2
	\end{equation*}
	Now let $g, \tilde{g} \in \ell^\infty(T, \mathbb{D}_f)$ satisfy $\lVert g - \tilde{g} \rVert_T = \sup_{t \in T} \lVert g(t) - \tilde{g}(t) \rVert < \delta$. Then $\lVert g(t) - \tilde{g}(t) \rVert < \delta$ for all $t \in T$, and hence $\lVert f(g(t)) - f(\tilde{g}(t)) \rVert < \varepsilon/2$ for all $t \in T$, and therefore
	\begin{equation*}
		\lVert F(g) - F(\tilde{g}) \rVert_T = \sup_{t \in T} \lVert f(g(t)) - f(\tilde{g}(t)) \rVert \leq \frac{\varepsilon}{2} < \varepsilon
	\end{equation*}
	which completes the proof.
\end{proof}

\begin{restatable}[]{corollary}{corollaryContinuityMapsBetweenBoundedFunctionSpaces}
	\label{Lemma: bounded function spaces, continuity with f: R^K -> R^M, corollary}
	\singlespacing
	
	Let $f : \mathbbm{D}_f \subseteq \mathbb{R}^K \rightarrow \mathbb{R}^M$ be continuous and bounded on bounded subsets of $\mathbb{D}_f$. Let $g_0 \in \ell^\infty(T, \mathbb{D}_f)$ where $\ell^\infty(T, \mathbb{D}_f)$ is as defined in lemma \ref{Lemma: bounded function spaces, continuity with f: R^K -> R^M}. Suppose that for some $\delta > 0$,
	\begin{equation*}
		g(T)^\delta \equiv \left\{x \in \mathbb{R}^K \; ; \; \inf_{t \in T} \lVert g_0(t) - x \rVert \leq \delta\right\}
	\end{equation*}
	is a subset of $\mathbb{D}_f$. Then $F : \ell^\infty(T, \mathbb{D}_f) \rightarrow \ell^\infty(T)^M$ defined pointwise by $F(g)(t) = f(g(t))$ is continuous at $g_0$.
\end{restatable}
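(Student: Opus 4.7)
The plan is to reduce the corollary to the previous lemma by showing that $f$ is uniformly continuous on the closed $\delta$-enlargement $g_0(T)^\delta$, and then running essentially the same $\varepsilon$-$\delta$ argument, using that perturbations of $g_0$ of size less than $\delta$ remain inside $g_0(T)^\delta$.

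First I would establish that $g_0(T)^\delta$ is compact in $\mathbb{R}^K$. Boundedness follows from $g_0 \in \ell^\infty(T, \mathbb{D}_f)$: every $x \in g_0(T)^\delta$ satisfies $\lVert x \rVert \leq \sup_{t \in T} \lVert g_0(t) \rVert + \delta < \infty$. Closedness follows because the map $x \mapsto \inf_{t \in T} \lVert g_0(t) - x \rVert$ is $1$-Lipschitz (hence continuous), and $g_0(T)^\delta$ is the preimage of $[0,\delta]$ under this map. Since $g_0(T)^\delta$ is closed and bounded in $\mathbb{R}^K$, it is compact.

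Next, by the hypothesis $g_0(T)^\delta \subseteq \mathbb{D}_f$, so $f$ restricted to $g_0(T)^\delta$ is a continuous function on a compact set and therefore uniformly continuous. Fix $\varepsilon > 0$ and choose $\delta' \in (0, \delta)$ such that for all $x, \tilde{x} \in g_0(T)^\delta$ with $\lVert x - \tilde{x} \rVert < \delta'$, one has $\lVert f(x) - f(\tilde{x}) \rVert < \varepsilon$. Now suppose $g \in \ell^\infty(T, \mathbb{D}_f)$ satisfies $\lVert g - g_0 \rVert_T = \sup_{t \in T} \lVert g(t) - g_0(t) \rVert < \delta'$. Then for every $t \in T$, $\lVert g(t) - g_0(t) \rVert < \delta' < \delta$, so in particular $g(t) \in g_0(T)^\delta$, and also $g_0(t) \in g_0(T)^\delta$ trivially. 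Uniform continuity of $f$ on $g_0(T)^\delta$ gives $\lVert f(g(t)) - f(g_0(t)) \rVert < \varepsilon$ for every $t$, and hence $\lVert F(g) - F(g_0) \rVert_T \leq \varepsilon$. This shows $F$ is continuous at $g_0$.

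The main obstacle is really just the bookkeeping: verifying that $g_0(T)^\delta$ is compact (so that continuity upgrades to uniform continuity on this set) and noticing that the closed $\delta$-enlargement is large enough to absorb perturbations $g$ of $g_0$ in the supremum norm, so that the uniform continuity of $f \!\restriction\! g_0(T)^\delta$ can be applied pointwise in $t$. No further structure of $f$ is needed beyond continuity plus the stated local containment.
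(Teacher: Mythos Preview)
Your argument is correct and follows essentially the same route as the paper: both establish that $g_0(T)^\delta$ is compact, upgrade $f$ to uniform continuity on this set via Heine--Cantor, and then use that any $g$ close enough to $g_0$ in sup norm takes values in $g_0(T)^\delta$. The only cosmetic difference is that the paper packages the final step as a formal application of the preceding lemma to the restriction $\tilde{f} = f\!\restriction\! g_0(T)^\delta$, whereas you unwrap that lemma and run the $\varepsilon$--$\delta$ argument directly; you also omit the one-line check that $F$ is well defined (i.e., that $F(g) \in \ell^\infty(T)^M$ because $f$ is bounded on bounded sets), which the paper states explicitly.
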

\begin{proof}
	\singlespacing
	
	For any $g \in \ell^\infty(T, \mathbb{D}_f)$, we have $F(g) \in \ell^\infty(T)^M$ because $\{x \; ; \; x = g(t) \text{ for some } t \in T\}$ is bounded and $f$ is bounded on bounded subsets.
	
	Let $\{g_n\}_{n=1}^\infty \subseteq \ell^\infty(T, \mathbb{D}_f)$ be such that $g_n \rightarrow g_0$ in $\ell^\infty(T)^K$. It suffices to show that $F(g_n) \rightarrow F(g_0)$ in $\ell^\infty(T)^M$. Let $\tilde{f} : g(T)^\delta \rightarrow \mathbb{R}^M$ be the restriction of $f$ to $g_0(T)^\delta$; i.e., $\tilde{f}(x) = f(x)$. Note that because $g_0(T)^\delta$ is a closed and bounded subset of $\mathbb{R}^K$, it is compact, and hence $\tilde{f}$ is uniformly continuous by the Heine-Cantor theorem. Apply lemma \ref{Lemma: bounded function spaces, continuity with f: R^K -> R^M} to find that 
	\begin{align*}
		&\tilde{F} : \ell^\infty(T, g(T)^\delta) \rightarrow \ell^\infty(T)^M, &&\tilde{F}(g)(t) = \tilde{f}(g(t)) = f(g(t))
	\end{align*}
	is continuous. Since $g_n \rightarrow g_0$ in $\ell^\infty(T)^K$, there exists $N$ such that for all $n \geq N$, $\lVert g_n - g_0 \rVert_T= \sup_{t \in T} \lVert g_n(t) - g_0(t) \rVert < \delta$. Let $\tilde{g}_k = g_{k+N}$. Notice that $\tilde{g}_k(T) = \left\{x \in \mathbb{R}^K \; ; \; x = g_k(t) \text{ for some } t \in T\right\} \subseteq g_0(T)^\delta$, and hence $\tilde{g}_k \in \ell^\infty(T, g_0(T)^\delta)$. Continuity of $\tilde{F}$ and $\tilde{g}_k \rightarrow g_0$ implies $\tilde{F}(\tilde{g}_k) \rightarrow \tilde{F}(\tilde{g}_0)$. Thus
	\begin{align*}
		0 &= \lim_{k \rightarrow \infty} \lVert \tilde{F}(\tilde{g}_k) - \tilde{F}(g_0) \rVert_T = \lim_{k \rightarrow \infty} \lVert F(g_{k + N}) - F(g_0) \rVert_T = \lim_{n \rightarrow \infty} \lVert F(g_n) - F(g_0) \rVert_T 
	\end{align*}
	which completes the proof.
\end{proof}

\begin{restatable}[Uniform continuity of restricted sup]{lemma}{lemmaContinuityOfRestrictedSup}
	\label{Lemma: bounded function spaces, uniform continuity of restricted sup}
	\singlespacing
	
	For any set $X$, subset $A \subseteq X$, and bounded real-valued functions $f,g \in \ell^\infty(X)$, 
	\begin{equation}
		\left\lvert \sup_{x \in A} f(x) - \sup_{x \in A} g(x) \right\rvert \leq \sup_{x \in A} \lvert f(x) - g(x) \rvert \label{Display: lemma, bounded function spaces, continuity of restricted sup}
	\end{equation}
	and therefore $\sigma_A : \ell^\infty(X) \rightarrow \mathbb{R}$ given by $\sigma_A(f) = \sup_{x \in A} f(x)$ is uniformly continuous.
\end{restatable}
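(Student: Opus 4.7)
The plan is to first establish the pointwise-sup inequality \eqref{Display: lemma, bounded function spaces, continuity of restricted sup} and then deduce uniform continuity of $\sigma_A$ as an immediate corollary, since $\sup_{x \in A}|f(x) - g(x)| \leq \lVert f - g \rVert_\infty$.

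For the inequality, I would start from the pointwise decomposition $f(x) = g(x) + (f(x) - g(x))$, valid for each $x \in A$. Bounding $f(x) - g(x) \leq \lvert f(x) - g(x) \rvert \leq \sup_{x' \in A} \lvert f(x') - g(x') \rvert$ gives
\begin{equation*}
	f(x) \leq g(x) + \sup_{x' \in A} \lvert f(x') - g(x')\rvert \leq \sup_{x' \in A} g(x') + \sup_{x' \in A} \lvert f(x') - g(x')\rvert.
\end{equation*}
Taking the supremum over $x \in A$ on the left yields $\sup_{x \in A} f(x) - \sup_{x \in A} g(x) \leq \sup_{x \in A} \lvert f(x) - g(x)\rvert$. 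Swapping the roles of $f$ and $g$ gives the reverse inequality, and together they establish \eqref{Display: lemma, bounded function spaces, continuity of restricted sup}. One mild technicality is that the suprema must be finite for the subtraction to make sense; this is ensured because $f, g \in \ell^\infty(X)$ implies both $\sup_A f$ and $\sup_A g$ lie in $[-\lVert f \rVert_\infty, \lVert f \rVert_\infty]$ and similarly for $g$.

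For uniform continuity of $\sigma_A$, given $\varepsilon > 0$, I would take $\delta = \varepsilon$. If $\lVert f - g \rVert_\infty < \delta$, then $\sup_{x \in A}\lvert f(x) - g(x)\rvert \leq \sup_{x \in X} \lvert f(x) - g(x)\rvert = \lVert f - g \rVert_\infty < \varepsilon$, and applying \eqref{Display: lemma, bounded function spaces, continuity of restricted sup} gives $\lvert \sigma_A(f) - \sigma_A(g) \rvert < \varepsilon$. Since $\delta$ does not depend on $f$ or $g$, this is uniform continuity. There is no real obstacle here: the only subtle point is being careful that the inequality is stated and used with absolute values on the right-hand side so that swapping $f$ and $g$ works symmetrically, but this is immediate from the setup.
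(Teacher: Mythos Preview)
Your proof is correct and follows essentially the same approach as the paper: bound $\sup_A f - \sup_A g$ by $\sup_A |f-g|$ via a pointwise estimate, swap the roles of $f$ and $g$ to get the absolute value, and then take $\delta = \varepsilon$ for uniform continuity. The only cosmetic difference is that the paper states the one-line inequality $\sup_A f - \sup_A g \leq \sup_A (f-g)$ directly rather than unpacking it through the decomposition $f(x) = g(x) + (f(x)-g(x))$, but the content is identical.
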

\begin{proof}
	\singlespacing
	Observe that
	\begin{equation*}
		\sup_{x \in A} f(x) - \sup_{x \in A} g(x) \leq \sup_{x \in A} \{f(x) - g(x)\} \leq \sup_{x \in A} \lvert f(x) - g(x) \rvert 
	\end{equation*}
	and 
	\begin{equation*}
		-\left[\sup_{x \in A} f(x) - \sup_{x \in A} g(x) \right] = \sup_{x \in A} g(x) - \sup_{x \in A} f(x) \leq \sup_{x \in A} \{g(x) - f(x)\} \leq \sup_{x \in A} \lvert f(x) - g(x) \rvert 
	\end{equation*}
	Together these inequalities imply
	\begin{equation*}
		-\sup_{x \in A} \lvert f(x) - g(x) \rvert \leq \sup_{x \in A} f(x) - \sup_{x \in A} g(x) \leq \sup_{x \in A} \lvert f(x) - g(x) \rvert
	\end{equation*}
	which is equivalent to \eqref{Display: lemma, bounded function spaces, continuity of restricted sup}. 
	
	To see uniform continuity, let $\varepsilon > 0$ and choose $\delta = \varepsilon$. Whenever $\lVert f - g \rVert_X = \sup_{x \in X} \lvert f(x) - g(x) \rvert < \delta$,
	\begin{equation*}
		\left\lvert \sigma_A(f) - \sigma_A(g) \right\rvert = \left\lvert \sup_{x \in A} f(x) - \sup_{x \in A} g(x) \right\rvert \leq \sup_{x \in A} \lvert f(x) - g(x) \rvert \leq \sup_{x \in X} \lvert f(x) - g(x) \rvert < \delta = \varepsilon
	\end{equation*}
	which completes the proof.
\end{proof}

\subsection{Differentiability}
\label{Appendix: misc lemmas, subsection differentiability}

This appendix reviews definitions and various facts related to Hadamard directional differentiability. The following definitions can be found in \cite{fang2019inference}.

Let $\mathbb{D}$, $\mathbb{E}$ be Banach spaces (complete, normed, vector spaces), and $\phi : \mathbb{D}_\phi \subseteq \mathbb{D} \rightarrow \mathbb{E}$. 
\begin{enumerate}[label=(\roman*)]
	\singlespacing
	
	\item $\phi$ is (fully) \textbf{Hadamard differentiable} at $x_0 \in \mathbb{D}_\phi$ tangentially to $\mathbb{D}_0 \subseteq \mathbb{D}$ if there exists a continuous linear map $\phi_{x_0}' : \mathbb{D}_0 \rightarrow\mathbb{E}$ such that
	\begin{equation*}
		\lim_{n \rightarrow \infty} \left\lVert \frac{\phi(x_0 + t_n h_n) - \phi(x_0)}{t_n} - \phi_{x_0}'(h) \right\rVert_{\mathbb{E}} = 0
	\end{equation*}
	for all sequences $\{h_n\}_{n=1}^\infty \subseteq \mathbb{D}$ and $\{t_n\}_{n=1}^\infty\subseteq \mathbb{R}$ such that $h_n \rightarrow h \in \mathbb{D}_0$ and $t_n \rightarrow 0$ as $n \rightarrow \infty$, and $x_0 + t_n h_n \in \mathbb{D}_\phi$ for all $n$. 
	
	\item $\phi$ is \textbf{Hadamard directionally differentiable} at $x_0 \in \mathbb{D}_\phi$ tangentially to $\mathbb{D}_0 \subseteq \mathbb{D}$ if there exists a continuous map $\phi_{x_0}' : \mathbb{D}_0 \rightarrow\mathbb{E}$ such that
	\begin{equation*}
		\lim_{n \rightarrow \infty} \left\lVert \frac{\phi(x_0 + t_n h_n) - \phi(x_0)}{t_n} - \phi_{x_0}'(h) \right\rVert_{\mathbb{E}} = 0
	\end{equation*}
	for all sequences $\{h_n\}_{n=1}^\infty \subseteq \mathbb{D}$ and $\{t_n\}_{n=1}^\infty\subseteq \mathbb{R}_+$ such that $h_n \rightarrow h \in \mathbb{D}_0$ and $t_n \downarrow 0$ as $n \rightarrow \infty$, and $x_0 + t_n h_n \in \mathbb{D}_\phi$ for all $n$.
\end{enumerate}

\cite{fang2019inference} proposition 2.1 shows that linearity is the key property distinguishing directional and full Hadamard differentiability. Specifically, if $\phi$ is Hadamard directionally differentiable at $x_0$ tangentially to a subspace $\mathbb{D}_0$, and $\phi_{x_0}'$ is linear, then $\phi$ is in fact fully Hadamard differentiable at $x_0$ tangentially to $\mathbb{D}_0$.

Hadamard directional differentiability obeys the chain rule. 
\begin{restatable}[Chain rule]{lemma}{lemmaChainRule}
	\label{Lemma: Hadamard differentiability, chain rule}
	\singlespacing
	
	Let $\mathbb{D}_1$, $\mathbb{D}_2$, and $\mathbb{E}$ be Banach spaces and $\phi_1 :  \mathbb{D}_{\phi_1} \subseteq \mathbb{D}_1 \rightarrow \mathbb{D}_2$, $\phi_2 : \mathbb{D}_{\phi_2} \subseteq \mathbb{D}_2 \rightarrow \mathbb{E}$ be functions. Suppose 
	
	\begin{enumerate}[label=(\roman*)]
		\item $\phi_1(\mathbb{D}_{\phi_1}) = \left\{y \in \mathbb{D}_2 \; ; \; y = \phi_1(x) \text{ for some } x \in \mathbb{D}_{\phi_1}\right\} \subseteq \mathbb{D}_{\phi_2}$, \label{Lemma: Hadamard differentiability, chain rule, assumption: chain rule composition well defined}
		\item $\phi_1$ is Hadamard directionally differentiable at $x_0 \in \mathbb{D}_{\phi_1}$ tangentially to $\mathbb{D}_1^T \subseteq \mathbb{D}_1$, with derivative $\phi_{1,x_0}'(h)$, and \label{Lemma: Hadamard differentiability, chain rule, assumption: chain rule inner function differentiable}
		\item $\phi_2$ is Hadamard directionally differentiable  at $\phi_1(x_0) \in \mathbb{D}_{\phi_2}$ tangentially to $\mathbb{D}_2^T \subseteq \mathbb{D}_2$, with derivative $\phi_{2,\phi_1(x_0)}'(h)$ \label{Lemma: Hadamard differentiability, chain rule, assumption: chain rule outer function differentiable}
	\end{enumerate}
	Let $\mathbb{D}^T = \left\{x \in \mathbb{D}_1^T \; ; \; \phi_{1,x_0}'(x) \in \mathbb{D}_2^T\right\}$. The composition function
	\begin{align*}
		&\phi : \mathbb{D}_{\phi_1} \rightarrow \mathbb{E}, && \phi(x) = \phi_2(\phi_1(\theta))
	\end{align*}
	is Hadamard directionally differentiable at $x_0$ tangentially to $\mathbb{D}^T$, with 
	\begin{align*}
		&\phi_{x_0}' : \mathbb{D}^T \rightarrow \mathbb{E}, &&\phi_{x_0}'(h) = \phi_{2,\phi_1(x_0)}'(\phi_{1,x_0}'(h))
	\end{align*}
\end{restatable}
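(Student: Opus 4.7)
The plan is to prove the chain rule by writing out the definition of Hadamard directional differentiability for the composition and reducing to the assumed differentiability of $\phi_1$ and $\phi_2$ in succession. Fix a sequence $\{h_n\}_{n=1}^\infty \subseteq \mathbb{D}_1$ with $h_n \to h \in \mathbb{D}^T$ and $\{t_n\}_{n=1}^\infty \subseteq \mathbb{R}_+$ with $t_n \downarrow 0$, such that $x_0 + t_n h_n \in \mathbb{D}_{\phi_1}$ for all $n$. Define the auxiliary sequence
\begin{equation*}
g_n = \frac{\phi_1(x_0 + t_n h_n) - \phi_1(x_0)}{t_n}.
\end{equation*}
By assumption \ref{Lemma: Hadamard differentiability, chain rule, assumption: chain rule inner function differentiable} (Hadamard directional differentiability of $\phi_1$ at $x_0$ tangentially to $\mathbb{D}_1^T$, using $h \in \mathbb{D}^T \subseteq \mathbb{D}_1^T$), I would conclude that $g_n \to \phi_{1,x_0}'(h)$ in $\mathbb{D}_2$. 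Moreover, the definition of $\mathbb{D}^T$ ensures $\phi_{1,x_0}'(h) \in \mathbb{D}_2^T$, so $g_n$ converges to an element of the tangent set relevant for applying the derivative of $\phi_2$.

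Next, I would rewrite the composition's difference quotient in terms of $g_n$:
\begin{equation*}
\frac{\phi_2(\phi_1(x_0 + t_n h_n)) - \phi_2(\phi_1(x_0))}{t_n} = \frac{\phi_2(\phi_1(x_0) + t_n g_n) - \phi_2(\phi_1(x_0))}{t_n}.
\end{equation*}
Assumption \ref{Lemma: Hadamard differentiability, chain rule, assumption: chain rule composition well defined} gives $\phi_1(x_0) + t_n g_n = \phi_1(x_0 + t_n h_n) \in \phi_1(\mathbb{D}_{\phi_1}) \subseteq \mathbb{D}_{\phi_2}$, so each term is in the domain of $\phi_2$. Since $g_n \to \phi_{1,x_0}'(h) \in \mathbb{D}_2^T$ and $t_n \downarrow 0$, assumption \ref{Lemma: Hadamard differentiability, chain rule, assumption: chain rule outer function differentiable} (Hadamard directional differentiability of $\phi_2$ at $\phi_1(x_0)$ tangentially to $\mathbb{D}_2^T$) applies and yields
\begin{equation*}
\lim_{n \to \infty} \left\| \frac{\phi_2(\phi_1(x_0) + t_n g_n) - \phi_2(\phi_1(x_0))}{t_n} - \phi_{2,\phi_1(x_0)}'(\phi_{1,x_0}'(h)) \right\|_{\mathbb{E}} = 0,
\end{equation*}
which is exactly the required limit for the composition with candidate derivative $\phi_{x_0}'(h) = \phi_{2,\phi_1(x_0)}'(\phi_{1,x_0}'(h))$.

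Finally, I would verify that $\phi_{x_0}' : \mathbb{D}^T \to \mathbb{E}$ is continuous, which is needed to satisfy the definition of Hadamard directional differentiability. This follows from composing the two continuous maps $\phi_{1,x_0}' : \mathbb{D}_1^T \to \mathbb{D}_2$ (restricted to $\mathbb{D}^T$) and $\phi_{2, \phi_1(x_0)}' : \mathbb{D}_2^T \to \mathbb{E}$, since by construction $\phi_{1, x_0}'$ maps $\mathbb{D}^T$ into $\mathbb{D}_2^T$. There is no real obstacle in this proof; the only point requiring care is keeping track of the tangent sets and checking that the intermediate sequence $g_n$ lands in the domain $\mathbb{D}_{\phi_2}$ where $\phi_2$ is defined, which is precisely what assumption \ref{Lemma: Hadamard differentiability, chain rule, assumption: chain rule composition well defined} and the definition of $\mathbb{D}^T$ are designed to guarantee.
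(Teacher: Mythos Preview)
Your proof is correct and follows essentially the same approach as the paper: define the auxiliary sequence $g_n = t_n^{-1}[\phi_1(x_0 + t_n h_n) - \phi_1(x_0)]$, use differentiability of $\phi_1$ to get $g_n \to \phi_{1,x_0}'(h) \in \mathbb{D}_2^T$, then rewrite the composition's difference quotient as $t_n^{-1}[\phi_2(\phi_1(x_0) + t_n g_n) - \phi_2(\phi_1(x_0))]$ and apply differentiability of $\phi_2$. Your additional verification that $\phi_{x_0}'$ is continuous is a nice touch that the paper's proof leaves implicit.
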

\begin{proof}
	\singlespacing
	That $\phi$ is well defined is clear from assumption \ref{Lemma: Hadamard differentiability, chain rule, assumption: chain rule composition well defined}. To show its Hadamard directional differentiability, let $\{h_n\}_{n=1}^\infty \subseteq \mathbb{D}_{\phi_1}$ and $\{t_n\}_{n=1}^\infty \subseteq \mathbb{R}_+$ be such that $h_n \rightarrow h \in \mathbb{D}^T$, $t_n \downarrow 0$, and $x_0 + t_n h_n \in \mathbb{D}_{\phi_1}$ for all $n$. Assumption \ref{Lemma: Hadamard differentiability, chain rule, assumption: chain rule inner function differentiable} implies that 
	\begin{equation}
		\lim_{n\rightarrow\infty} \left\lVert \frac{\phi_1(x_0 + t_n h_n) - \phi_1(x_0)}{t_n} - \phi_{1,x_0}'(h)\right\rVert_{\mathbb{D}_2} = 0 \label{Display: lemma proof, chain rule phi1 differentiable}
	\end{equation}
	Let $g_n = \frac{1}{t_n}\left[\phi_1(x_0 + t_nh_n) - \phi_1(x_0)\right]$, $g = \phi_{1,x_0}'(h)$, and notice that \eqref{Display: lemma proof, chain rule phi1 differentiable} implies $g_n \rightarrow g$ in $\mathbb{D}_2$.
	
	Assumption \ref{Lemma: Hadamard differentiability, chain rule, assumption: chain rule composition well defined} implies $\phi_1(x_0) + t_n g_n = \phi_1(x_0 + t_n h_n) \in \mathbb{D}_{\phi_2}$ for each $n$, and the definition of $\mathbb{D}^T$ implies $g \in \mathbb{D}_2^T$. Assumption \ref{Lemma: Hadamard differentiability, chain rule, assumption: chain rule outer function differentiable} implies that
	\begin{equation}
		\lim_{n \rightarrow \infty} \left\lVert \frac{\phi_2(\phi_1(x_0) + t_n g_n) - \phi_2(\phi_1(x_0))}{t_n} - \phi_{2,\phi_1(x_0)}'(g)\right\rVert_{\mathbb{E}} = 0 \label{Display: lemma proof, chain rule phi2 differentiable}
	\end{equation}
	Substitute $\phi_2(\phi_1(x_0) + t_n g_n) = \phi_2(\phi_1(x_0 + t_n h_n))$, and $g = \phi_{1,x_0}'(h)$, into \eqref{Display: lemma proof, chain rule phi2 differentiable} to find 
	\begin{equation*}
		\lim_{n \rightarrow \infty} \left\lVert \frac{\phi_2(\phi_1(x_0 + t_n h_n)) - \phi_2(\phi_1(x_0))}{t_n} - \phi_{2,\phi_1(x_0)}'(\phi_{1,x_0}'(h))\right\rVert_{\mathbb{E}} = 0
	\end{equation*}
	which completes the proof.
\end{proof}
\begin{remark}
	\label{Remark: hadamard directional differentiability chain rule constraints}
	\singlespacing
	
	When defining and differentiating composition of functions, the outer function's properties determine restrictions that must be placed on the inner function to ensure the composition function is well defined and differentiable. 
	
	A familiar example of this is that the domain of the ``inner function'' $\phi_1$ may need to be restricted to ensure the composition map is well defined. For a simple example, $x^3$ is well defined and differentiable for any $x \in \mathbb{R}$, but $\log(x^3)$ is only well defined (and differentiable) for $x \in (0,\infty)$.
	
	A less familiar example shows up only when considering Hadamard differentiability tangentially to a set. The tangent spaces of each function jointly determine the tangent space of the derivative of the composition map. 
\end{remark}

The next lemma shows that Hadamard directionally differentiable functions can be ``stacked''.

\begin{restatable}[Stacking Hadamard differentiable functions]{lemma}{lemmaStackHadamardDifferentiability}
	\label{Lemma: Hadamard differentiability, stacking functions}
	\singlespacing
	
	Let $\mathbb{D}$, $\mathbb{E}_1$, and $\mathbb{E}_2$ be Banach spaces, and $\mathbb{D}_\phi \subseteq \mathbb{D}$. Suppose $\phi^{(1)} : \mathbb{D}_\phi \rightarrow \mathbb{E}_1$ and $\phi^{(2)} : \mathbb{D}_\phi \rightarrow \mathbb{E}_2$ are Hadamard directionally differentiable tangentially to $\mathbb{D}_0 \subseteq \mathbb{D}$ at $x_0 \in \mathbb{D}_\phi$ with derivatives $\phi_{x_0}^{(1)\prime}: \mathbb{D}_0 \rightarrow \mathbb{E}_1$ and $\phi_{x_0}^{(2)\prime} : \mathbb{D}_0 \rightarrow \mathbb{E}_2$. Define
	\begin{align*}
		&\phi : \mathbb{D}_\phi \rightarrow \mathbb{E}_1 \times \mathbb{E}_2, &&\phi(x) =
		\begin{pmatrix}
			\phi^{(1)}(x), &\phi^{(2)}(x) 
		\end{pmatrix}
	\end{align*}
	Then $\phi$ is Hadamard directionally differentiable tangentially to $\mathbb{D}_0$ at $x_0$, with derivative 
	\begin{align*}
		&\phi_{x_0}' : \mathbb{D}_0 \rightarrow \mathbb{E}_1 \times \mathbb{E}_2, &&\phi_{x_0}'(h) = 
		\begin{pmatrix}
			\phi_{x_0}^{(1)\prime}(h), &\phi_{x_0}^{(2)\prime}(h)
		\end{pmatrix}
	\end{align*}
\end{restatable}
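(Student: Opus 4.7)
The plan is to verify the defining limit for Hadamard directional differentiability by exploiting the fact that convergence in the product Banach space $\mathbb{E}_1 \times \mathbb{E}_2$ (equipped, say, with the sum norm $\lVert (e_1,e_2) \rVert = \lVert e_1 \rVert_{\mathbb{E}_1} + \lVert e_2 \rVert_{\mathbb{E}_2}$) reduces to coordinate-wise convergence. Nothing beyond the definitions is really at stake here; the only substantive observation is that the stacked derivative is automatically continuous whenever its components are.

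First I would fix arbitrary sequences $\{h_n\}_{n=1}^\infty \subseteq \mathbb{D}$ and $\{t_n\}_{n=1}^\infty \subseteq \mathbb{R}_+$ with $h_n \to h \in \mathbb{D}_0$, $t_n \downarrow 0$, and $x_0 + t_n h_n \in \mathbb{D}_\phi$ for all $n$. Because $\phi(x) = (\phi^{(1)}(x), \phi^{(2)}(x))$, the difference quotient decomposes coordinate-wise as
\begin{equation*}
\frac{\phi(x_0 + t_n h_n) - \phi(x_0)}{t_n} - \phi_{x_0}'(h) = \left(\frac{\phi^{(1)}(x_0 + t_n h_n) - \phi^{(1)}(x_0)}{t_n} - \phi_{x_0}^{(1)\prime}(h), \; \frac{\phi^{(2)}(x_0 + t_n h_n) - \phi^{(2)}(x_0)}{t_n} - \phi_{x_0}^{(2)\prime}(h)\right).
\end{equation*}
Taking norms and applying the sum-norm bound gives
\begin{equation*}
\left\lVert \frac{\phi(x_0 + t_n h_n) - \phi(x_0)}{t_n} - \phi_{x_0}'(h) \right\rVert_{\mathbb{E}_1 \times \mathbb{E}_2} \leq \sum_{k=1}^2 \left\lVert \frac{\phi^{(k)}(x_0 + t_n h_n) - \phi^{(k)}(x_0)}{t_n} - \phi_{x_0}^{(k)\prime}(h) \right\rVert_{\mathbb{E}_k},
\end{equation*}
and each summand on the right tends to zero by Hadamard directional differentiability of $\phi^{(k)}$ at $x_0$ tangentially to $\mathbb{D}_0$.

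It remains to check that $\phi_{x_0}'$ is continuous on $\mathbb{D}_0$. If $g_n \to g$ in $\mathbb{D}_0$, then $\lVert \phi_{x_0}^{(k)\prime}(g_n) - \phi_{x_0}^{(k)\prime}(g) \rVert_{\mathbb{E}_k} \to 0$ for each $k$ by continuity of the component derivatives, so $\lVert \phi_{x_0}'(g_n) - \phi_{x_0}'(g) \rVert_{\mathbb{E}_1 \times \mathbb{E}_2} \to 0$, giving continuity. The main (and essentially only) thing to be careful about is fixing a product norm at the outset; any equivalent choice such as the max or Euclidean norm would support the identical sandwich argument. No genuine obstacle arises, and the stated derivative formula follows immediately.
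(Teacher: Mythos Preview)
Your proof is correct and follows essentially the same approach as the paper: equip $\mathbb{E}_1 \times \mathbb{E}_2$ with the sum norm, decompose the difference quotient coordinate-wise, and bound by the sum of the component errors, each of which vanishes by hypothesis. Your explicit verification that $\phi_{x_0}'$ is continuous is a small addition the paper omits, but otherwise the arguments are identical.
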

\begin{proof}
	Hadamard directional differentiability of $\phi^{(1)}$ and $\phi^{(2)}$ tangentially to $\mathbb{D}_0$ at $x_0$ implies that for any sequences $\{h_n\}_{n=1}^\infty \subseteq \mathbb{D}$ and $\{t_n\} \subseteq \mathbb{R}_+$ such that $h_n \rightarrow h \in \mathbb{D}_0$, $t_n \downarrow 0$, and $x_0 + t_n h_n \in \mathbb{D}_\phi$ for all $n$,
	\begin{align*}
		&\lim_{n \rightarrow \infty} \left\lVert \frac{\phi^{(1)}(x_0 + t_n h_n) - \phi^{(1)}(x_0)}{t_n} - \phi_{x_0}^{(1)\prime}(h) \right\rVert_{\mathbb{E}_1} = 0 , \text{ and } \\
		&\lim_{n \rightarrow \infty} \left\lVert \frac{\phi^{(2)}(x_0 + t_n h_n) - \phi^{(2)}(x_0)}{t_n} - \phi_{x_0}^{(2)\prime}(h) \right\rVert_{\mathbb{E}_2} = 0
	\end{align*}
	Since $\lVert (e_1, e_2) - (\tilde{e}_1, \tilde{e}_2) \rVert_{\mathbb{E}_1 \times \mathbb{E}_2} = \lVert e_1 - \tilde{e}_1 \rVert_{\mathbb{E}_1} + \lVert e_2 - \tilde{e}_2 \rVert_{\mathbb{E}_2}$ metricizes $\mathbb{E}_1 \times \mathbb{E}_2$ (\cite{aliprantis2006infinite} lemma 3.3), we have 
	\begin{align*}
		&\left\lVert \frac{\phi(x_0 + t_n h_n) - \phi(x_0)}{t_n} - \phi_{x_0}'(h) \right\rVert_{\mathbb{E}_1 \times \mathbb{E}_2} \\
		&\hspace{1 cm} = \left\lVert \frac{\begin{pmatrix} \phi^{(1)}(x_0 + t_n h_n), &\phi^{(2)}(x_0 + t_n h_n) \end{pmatrix} - \begin{pmatrix}\phi^{(1)}(x_0), & \phi^{(2)}(x_0)\end{pmatrix}}{t_n} - \begin{pmatrix} \phi_{x_0}^{(1)}(h), &\phi_{x_0}^{(2)}(h) \end{pmatrix}\right\rVert_{\mathbb{E}_1 \times \mathbb{E}_2} \\
		&\hspace{1 cm} = \left\lVert \begin{pmatrix} \frac{\phi^{(1)}(x_0 + t_n h_n) - \phi^{(1)}(x_0)}{t_n} - \phi_{x_0}^{(1)\prime}(h), &\frac{\phi^{(2)}(x_0 + t_n h_n) - \phi^{(2)}(x_0 + t_n h_n)}{t_n} - \phi_{x_0}^{(2)\prime} \end{pmatrix}\right\rVert_{\mathbb{E}_1 \times \mathbb{E}_2} \\
		&\hspace{1 cm} = \left\lVert \frac{\phi^{(1)}(x_0 + t_n h_n) - \phi^{(1)}(x_0)}{t_n} - \phi_{x_0}^{(1)\prime}(h) \right\rVert_{\mathbb{E}_1} + \left\lVert \frac{\phi^{(2)}(x_0 + t_n h_n) - \phi^{(2)}(x_0)}{t_n} - \phi_{x_0}^{(2)\prime}(h) \right\rVert_{\mathbb{E}_2}
	\end{align*}
	Taking the limit as $n\rightarrow \infty$ gives the result.
\end{proof}

\subsubsection{Hadamard differentiability in bounded function spaces}

It is common to ``rearrange'' Donsker sets; i.e. view them not as scalar-valued but vector-valued with each coordinate occuring over a particular subset of functions (see \cite{van2000asymptotic} p. 270). The following lemma shows that one direction of the equivalence can be viewed as an application of the delta method.

\begin{restatable}[Rearranging Donsker sets]{lemma}{lemmaRearrangingDonskerSets}
	\label{Lemma: Hadamard differentiability, rearranging Donsker sets}
	\singlespacing
	
	Suppose $\mathcal{F} = \mathcal{F}_1 \cup \ldots \cup \mathcal{F}_K$ is $P$-Donsker, and $\sqrt{n}(\mathbb{P}_n - P) \overset{L}{\rightarrow} \mathbb{G}$ in $\ell^\infty(\mathcal{F})$. The map $\phi : \ell^\infty(\mathcal{F}) \rightarrow \ell^\infty(\mathcal{F}_1) \times \ldots \times \ell^\infty(\mathcal{F}_K)$ defined pointwise by 
	\begin{equation*}
		\phi(g)(f_1, \ldots, f_K) = (g(f_1), \ldots, g(f_K))
	\end{equation*}
	is fully Hadamard differentiable at any $P \in \ell^\infty(\mathcal{F})$ tangentially to $\ell^\infty(\mathcal{F})$, and is its own derivative:
	\begin{align*}
		&\phi_P' : \ell^\infty(\mathcal{F}) \rightarrow \ell^\infty(\mathcal{F}_1) \times \ldots \times \ell^\infty(\mathcal{F}_K), &&\phi_P'(h) = \phi(h)
	\end{align*}
	and hence
	\begin{align*}
		&\sqrt{n}(\phi(\mathbb{P}_n) - \phi(P)) \overset{L}{\rightarrow} \phi(\mathbb{G}) &&\text{ in } \ell^\infty(\mathcal{F}_1) \times \ldots \times \ell^\infty(\mathcal{F}_K)
	\end{align*}
\end{restatable}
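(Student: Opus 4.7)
The plan is to verify directly from the definition that $\phi$ is a bounded linear operator between Banach spaces, which immediately yields full Hadamard differentiability at every point, with $\phi_P' = \phi$. The weak convergence conclusion then follows by the functional delta method (or equivalently by the continuous mapping theorem, given linearity).

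First I would check that $\phi$ is well-defined as a map into the product space. The product $\ell^\infty(\mathcal{F}_1) \times \ldots \times \ell^\infty(\mathcal{F}_K)$ is equipped with the norm $\lVert (h_1, \ldots, h_K) \rVert = \sum_{k=1}^K \lVert h_k \rVert_{\mathcal{F}_k}$ (or an equivalent one), and since $\mathcal{F}_k \subseteq \mathcal{F}$, for any $g \in \ell^\infty(\mathcal{F})$ one has $\sup_{f \in \mathcal{F}_k} \lvert g(f) \rvert \leq \lVert g \rVert_\mathcal{F} < \infty$, so each coordinate of $\phi(g)$ lies in $\ell^\infty(\mathcal{F}_k)$. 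Linearity is immediate from the pointwise definition, since $\phi(a g_1 + b g_2)(f_1, \ldots, f_K) = (a g_1(f_k) + b g_2(f_k))_{k=1}^K = a \phi(g_1) + b \phi(g_2)$. Boundedness follows from $\lVert \phi(g) \rVert \leq K \lVert g \rVert_\mathcal{F}$, which also gives continuity.

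Second, I would invoke the fact that any continuous linear map $T : \mathbb{D} \rightarrow \mathbb{E}$ between Banach spaces is fully Hadamard differentiable at every point $x_0$ tangentially to all of $\mathbb{D}$, with derivative $T_{x_0}' = T$. This is a direct calculation: for sequences $h_n \rightarrow h$ in $\mathbb{D}$ and $t_n \rightarrow 0$, linearity gives $[T(x_0 + t_n h_n) - T(x_0)]/t_n = T(h_n)$, so continuity of $T$ implies $\lVert T(h_n) - T(h) \rVert_\mathbb{E} \rightarrow 0$. Applying this to $\phi$ at the point $P$ gives the claimed Hadamard differentiability and the identification $\phi_P' = \phi$.

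Finally, the weak convergence $\sqrt{n}(\phi(\mathbb{P}_n) - \phi(P)) \overset{L}{\rightarrow} \phi(\mathbb{G})$ follows from the functional delta method (e.g.\ \cite{van2000asymptotic} theorem 20.8) applied to $\phi$ at $P$, using the hypothesis $\sqrt{n}(\mathbb{P}_n - P) \overset{L}{\rightarrow} \mathbb{G}$ in $\ell^\infty(\mathcal{F})$ together with $\phi_P' = \phi$. Since $\phi$ is linear and continuous, $\phi(\mathbb{G})$ is a tight Borel-measurable limit in the product space. There is essentially no obstacle in this proof; the only mild subtlety is keeping straight the norm on the product space so that boundedness of $\phi$ is apparent, but this is routine.
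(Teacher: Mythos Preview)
Your proposal is correct and follows essentially the same approach as the paper: verify that $\phi$ is linear and continuous (hence fully Hadamard differentiable at every point with $\phi_P' = \phi$), then apply the functional delta method. The only cosmetic difference is that the paper uses the max norm on the product space and obtains the exact equality $\lVert \phi(g) - \phi(h) \rVert_{\mathcal{F}_1 \times \ldots \times \mathcal{F}_K} = \lVert g - h \rVert_{\mathcal{F}}$, whereas you use the sum norm and the bound $\lVert \phi(g) \rVert \leq K \lVert g \rVert_{\mathcal{F}}$; since these norms are equivalent on finite products, the argument is the same.
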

\begin{proof}
	\singlespacing
	
	The map $\phi$ is linear; let $a, b \in \mathbb{R}$ and $g, h \in \ell^\infty(\mathcal{F})$ and notice that for any $(f_1, \ldots, f_K) \in \mathcal{F}_1 \times \ldots \times \mathcal{F}_K$, 
	\begin{align*}
		\phi(a g + b h)(f_1, \ldots, f_K) &= ((a g + b h)(f_1), \ldots, (a g + b h)(f_K)) \\
		&= (ag(f_1) + bh(f_1), \ldots, ag(f_K) + bh(f_K)) \\
		&= a (g(f_1), \ldots, g(f_K)) + b(h(f_1) \ldots, h(f_K)) \\
		&= a \phi(g)(f_1,\ldots, f_K) + b \phi(h)(f_1, \ldots, f_K) \\
		&= (a \phi(g) + b \phi(h))(f_1,\ldots, f_K)
	\end{align*}
	hence $\phi(a g + b h) = (a \phi(g) + b \phi(h))$, as these functions agree on all of $\mathcal{F}_1 \times \ldots \times \mathcal{F}_K$. 
	
	Next observe that $\phi$ is continuous. Recall that the product topology on $\ell^\infty(\mathcal{F}_1) \times \ldots \times \ell^\infty(\mathcal{F}_K)$ is generated by the norm
	\begin{align*}
		\lVert (g_1, \ldots, g_K) - (h_1, \dots, h_K) \rVert_{\mathcal{F}_1 \times \ldots \times \mathcal{F}_K} = \max\{\lVert g_1 - h_1 \rVert_{\mathcal{F}_1}, \ldots, \lVert g_K - h_K \rVert_{\mathcal{F}_K}\}
	\end{align*}
	see \cite{aliprantis2006infinite} lemma 3.3. Thus
	\begin{align*}
		\lVert \phi(g) - \phi(h) \rVert_{\mathcal{F}_1 \times \ldots \times \mathcal{F}_K} &= \max\left\{\sup_{f_1 \in \mathcal{F}_1} \lvert g(f_1) - h(f_1) \rvert, \ldots, \sup_{f_K \in \mathcal{F}_K} \lvert g(f_K) - h(f_K) \rvert\right\} \\
		&= \lVert g - h \rVert_\mathcal{F}
	\end{align*}
	and hence $\phi$ is continuous. 
	
	Since $\phi$ is linear and continuous, it is (fully) Hadamard differentiable at any point tangentially to $\ell^\infty(\mathcal{F})$ and is its own Hadamard derivative; indeed, for an: for all sequences $h_n \rightarrow h \in \ell^\infty(\mathcal{F})$ and $t_n \downarrow 0 \in \mathbb{R}$, one has $g + t_n h_n \in \ell^\infty(\mathcal{F})$ and
	\begin{equation*}
		\lim_{n \rightarrow \infty} \left\lVert \frac{\phi(g + t_n h_n) - \phi(g)}{t_n} - \phi(h) \right\rVert_{\mathcal{F}_1 \times \ldots \times \mathcal{F}_K} = \lim_{n \rightarrow \infty}  \left\lVert \phi(h_n) - \phi(h) \right\rVert_{\mathcal{F}_1 \times \ldots \times \mathcal{F}_K} = 0
	\end{equation*}
	
	Finally, since $\sqrt{n}(\mathbb{P}_n - P) \overset{L}{\rightarrow} \mathbb{G}$ in $\ell^\infty(\mathcal{F})$, the functional delta method (\cite{van2000asymptotic} theorem 20.8) implies $\sqrt{n}(\phi(\mathbb{P}_n) - \phi(P)) \overset{L}{\rightarrow} \phi(\mathbb{G})$ in $\ell^\infty(\mathcal{F}_1) \times \ldots \times \ell^\infty(\mathcal{F}_K)$.
\end{proof}

Although the following lemma and its corollary are stated for functions taking values in $\mathbb{R}$, by combining it with lemma \ref{Lemma: Hadamard differentiability, stacking functions} a similar result can be obtained for functions taking values in $\mathbb{R}^M$, similar to the setting of lemma \ref{Lemma: bounded function spaces, continuity with f: R^K -> R^M}. Compare \cite{vaart1997weak} lemma 3.9.25.

\begin{restatable}[Hadamard differentiability of maps between bounded function spaces]{lemma}{lemmaDifferentiability}
	\label{Lemma: Hadamard differentiability, maps between bounded function spaces}
	\singlespacing

	Let $f : \mathbb{D}_f \subseteq \mathbb{R}^K \rightarrow \mathbb{R}$. Suppose that
	\begin{enumerate}
		\item $f$ is continuously differentiable, and 
		\item the gradient of $f$,
		\begin{align*}
			&\nabla f : \mathbb{D}_f \rightarrow \mathbb{R}^K, &&\nabla f(x) =  \begin{pmatrix} \frac{\partial f}{\partial x_1}(x) & \ldots & \frac{\partial f}{\partial x_K}(x) \end{pmatrix}^\intercal,
		\end{align*}
		is uniformly continuous.
	\end{enumerate}
	Define the subset of $\ell^\infty(T)^K$ taking values in $\mathbb{D}_f$,
	\begin{equation*}
		\ell^\infty(T, \mathbb{D}_f) = \left\{g : T \rightarrow \mathbb{R}^K \; ; \; g(t) \in \mathbb{D}_f, \; \sup_{t \in T} \lVert g(t) \rVert < \infty\right\} \subseteq \ell^\infty(T)^K
	\end{equation*}
	and the subset of $\ell^\infty(T, \mathbb{D}_f)$ such that composition with $f$ defines a bounded function:
	\begin{equation*}
		\ell_f^\infty(T, \mathbb{D}_f) = \left\{g \in \ell^\infty(T, \mathbb{D}_f) \; ; \; \sup_{t \in T} \lvert f(g(t)) \rvert < \infty\right\}
	\end{equation*}

	Then $F : \ell_f^\infty(T, \mathbb{D}_f) \rightarrow \ell^\infty(T)$ defined pointwise with $F(g)(t) = f(g(t))$ is (fully) Hadamard differentiable tangentially to $\ell^\infty(T)^K$ at any $g_0 \in \ell_f^\infty(T, \mathbb{D}_f)$, with derivative $F_{g_0}' : \ell^\infty(T)^K \rightarrow \ell^\infty(T)$ given pointwise by
	\begin{align*}
		F_{g_0}'(h)(t) = \left[\nabla f(g_0(t))\right]^\intercal h(t) = \sum_{k=1}^K \frac{\partial f}{\partial x_k}(g_0(t)) h_k(t)
	\end{align*}
\end{restatable}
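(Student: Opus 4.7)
The plan is to reduce Hadamard differentiability to a calculus argument built from the fundamental theorem of calculus plus uniform continuity of $\nabla f$. Observe first that the candidate derivative $F_{g_0}'(h)(t) = \nabla f(g_0(t))^\intercal h(t)$ is \emph{linear} in $h$, so by \cite{fang2019inference} Proposition 2.1 it suffices to verify Hadamard directional differentiability with this linear derivative (full differentiability follows). Before the main computation I would check that $F_{g_0}'$ is a well-defined, continuous linear map from $\ell^\infty(T)^K$ to $\ell^\infty(T)$. The key fact is that $M \equiv \sup_{t \in T} \lVert \nabla f(g_0(t)) \rVert < \infty$. This holds because $g_0 \in \ell_f^\infty(T, \mathbb{D}_f)$ implies $g_0(T)$ is a bounded, hence totally bounded, subset of $\mathbb{R}^K$, and a uniformly continuous function is bounded on any totally bounded set. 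Once $M < \infty$, both $\lVert F_{g_0}'(h) \rVert_T \le M \lVert h \rVert_T$ and linearity give continuity.

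Next I would take sequences $h_n \to h$ in $\ell^\infty(T)^K$ and $t_n \to 0$ in $\mathbb{R}$ with $g_0 + t_n h_n \in \ell_f^\infty(T, \mathbb{D}_f)$ for every $n$, and apply the fundamental theorem of calculus pointwise in $t$: for each $t$,
\begin{equation*}
	\frac{F(g_0 + t_n h_n)(t) - F(g_0)(t)}{t_n} = \int_0^1 \nabla f\big(g_0(t) + s t_n h_n(t)\big)^\intercal h_n(t) \, ds.
\end{equation*}
Subtracting $F_{g_0}'(h)(t)$ yields the residual
\begin{equation*}
	R_n(t) = \int_0^1 \big[\nabla f\big(g_0(t) + s t_n h_n(t)\big) - \nabla f(g_0(t))\big]^\intercal h_n(t) \, ds + \nabla f(g_0(t))^\intercal \big(h_n(t) - h(t)\big).
\end{equation*}
The second summand is bounded in absolute value by $M \lVert h_n - h \rVert_T$, which vanishes as $n \to \infty$. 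For the first summand, fix $\varepsilon > 0$ and use uniform continuity of $\nabla f$ on $\mathbb{D}_f$ to choose $\delta > 0$ with $\lVert x - y \rVert < \delta$ implying $\lVert \nabla f(x) - \nabla f(y) \rVert < \varepsilon$. Since $\{\lVert h_n \rVert_T\}$ is bounded (call the bound $B$) and $t_n \to 0$, for all sufficiently large $n$ we have $\lvert t_n \rvert B < \delta$, so $\lVert s t_n h_n(t) \rVert < \delta$ uniformly in $(s,t) \in [0,1] \times T$; the first summand is then bounded by $\varepsilon B$. Combining and letting first $n \to \infty$ then $\varepsilon \to 0$ gives $\lVert R_n \rVert_T \to 0$, which is exactly the definition of Hadamard differentiability.

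The main obstacle, and the one requiring care, is the pointwise fundamental-theorem step: it uses that the line segment from $g_0(t)$ to $g_0(t) + t_n h_n(t)$ lies in $\mathbb{D}_f$ for each $t$. Continuous differentiability of $f$ is naturally formulated on an open domain, and combined with $\lVert t_n h_n(t) \rVert \to 0$ uniformly, this places the segment in $\mathbb{D}_f$ for large $n$ as soon as $g_0(T)$ has positive distance to $\mathbb{D}_f^c$ --- precisely the setting under which the corollary is actually applied in the paper. The remaining verifications (linearity of $F_{g_0}'$, and that the derivative has the stated form) are immediate from the definitions and the linearity of the integral. No further machinery is needed.
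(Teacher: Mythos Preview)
Your proof is correct and follows essentially the same approach as the paper: linearize via a first-order Taylor expansion, split the residual into a gradient-difference term controlled by uniform continuity of $\nabla f$ and a term controlled by $\lVert h_n - h\rVert_T$ times the bounded quantity $\sup_t \lVert \nabla f(g_0(t))\rVert$. The only cosmetic difference is that you use the integral (fundamental theorem of calculus) form of the expansion, whereas the paper uses the pointwise mean value theorem to produce an intermediate point $g_n(t)$ on the segment; both versions require the segment from $g_0(t)$ to $g_0(t)+t_n h_n(t)$ to lie in $\mathbb{D}_f$, an issue you correctly flag and which the paper handles identically (implicitly here, explicitly in the corollary).
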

\begin{proof}
	\singlespacing
	
	The domain of $\ell_f^\infty(T, \mathbb{D}_f)$ ensures that $F : \ell_f^\infty(T, \mathbb{D}_f) \rightarrow \ell^\infty(T)$ is well defined.
	
	Let $\{h_n\}_{n=1}^\infty \subseteq \ell^\infty(T)^K$ and $\{r_n\}_{n=1}^\infty \subseteq \mathbb{R}$ such that $h_n \rightarrow h \in \ell^\infty(T)^K$, $r_n \rightarrow 0$, and $g_0 + r_n h_n \in \ell_f^\infty(T, \mathbb{D}_f)$ for each $n$. For each $n$ and each $t \in T$, apply the mean value theorem to find $\lambda_n(t) \in (0,1)$ such that $g_n(t) \coloneqq \lambda_n(t) (g_0(t) + r_n h_n(t)) + (1-\lambda_n(t)) g_0(t)$ satisfying\footnote{
		The mean value theorem being invoked here is the standard result: for any $x, \tilde{x} \in \mathbb{D}_f$, let $g_{x, \tilde{x}} : [0,1] \rightarrow \mathbb{R}$ be given by $g_{x, \tilde{x}}(\lambda) = f(\lambda \tilde{x} + (1-\lambda)x)$. Then $g_{x, \tilde{x}}(0) = f(x)$ and $g_{x, \tilde{x}}(1) = f(\tilde{x})$, and the mean value theorem tells us that there exists $\lambda \in (0,1)$ such that
		\begin{equation*}
			f(\tilde{x}) - f(x) = g_{x, \tilde{x}}(1) - g_{x, \tilde{x}}(0) = g_{x, \tilde{x}}'(\lambda) (1-0) = \left[\nabla f(\lambda \tilde{x} + (1-\lambda)x)\right]^\intercal (\tilde{x} - x) 
		\end{equation*}
	}
	\begin{align*}
		f(x_0(t) + r_n h_n(t)) - f(x_0(t)) &= \left[\nabla f(g_n(t))\right]^\intercal (x_0(t) + r_n h_n(t) - x_0(t)) \\
		&= r_n \left[\nabla f(g_n(t))\right]^\intercal h_n(t) 
	\end{align*}
	Use this to see that for all $n$ and all $t \in T$,
	\begin{align*}
		&\left\lvert \frac{f(g_0(t) + r_n h_n(t)) - f(g_0(t))}{r_n} - \nabla f(g_0(t))^\intercal h(t)\right\rvert = \left\lvert \nabla f(g_n(t))^\intercal h_n(t) - \nabla f(g_0(t))^\intercal h(t) \right\rvert \\
		&\hspace{1 cm} \leq \left\lvert \nabla f(g_n(t))^\intercal h_n(t) - \nabla f(g_0(t))^\intercal h_n(t)\right\rvert  + \left\lvert\nabla f(g_0(t))^\intercal h_n(t)- \nabla f(g_0(t))^\intercal h(t) \right\rvert \\
		&\hspace{1 cm} \leq \lVert \nabla f(g_n(t)) - \nabla f(g_0(t)) \rVert \times \lVert h_n(t) \rVert + \lVert \nabla f(g_0(t)) \rVert \times \lVert h_n(t) - h(t) \rVert
	\end{align*}
	where the first inequality is by the triangle inequality and the second by Cauchy-Schwarz in $\mathbb{R}^K$. It follows that
	\begin{align}
		&\sup_{t \in T} \left\lvert \frac{f(g_0(t) + r_n h_n(t)) - f(g_0(t))}{r_n} - \nabla f(g_0(t))^\intercal h(t)\right\rvert \notag \\
		&\hspace{2 cm} \leq \sup_{t \in T} \lVert \nabla f(g_n(t)) - \nabla f(g_0(t)) \rVert \times \sup_{t \in T} \lVert h_n(t) \rVert \label{Display: lemma proof, maps between bounded function spaces, inequality term 1}\\
		&\hspace{3 cm} + \sup_{t \in T} \lVert \nabla f(g_0(t)) \rVert \times \sup_{t \in T} \lVert h_n(t) - h(t) \rVert \label{Display: lemma proof, maps between bounded function spaces, inequality term 2}
	\end{align}

	Consider the term in \eqref{Display: lemma proof, maps between bounded function spaces, inequality term 1}. Recall that for some $\lambda_n(t) \in (0,1)$,
	\begin{align*}
		g_n(t) &= \lambda_n(t) (g_0(t) + r_n h_n(t)) + (1-\lambda_n(t)) g_0(t) \\
		&= \lambda_n(t)r_n h_n(t) + g_0(t) 
	\end{align*}
	and so
	\begin{equation*}
		\lVert g_n - g_0 \rVert_T = \sup_{t \in T} \lVert \lambda_n(t)r_n h_n(t) \rVert \leq \lvert r_n \rvert \times \sup_{t \in T} \lVert h_n(t) \rVert \rightarrow 0 
	\end{equation*}
	where the limit claim follows from $\sup_{t \in T} \lVert h_n(t) \rVert = \lVert h_n \rVert_T \rightarrow \lVert h \rVert_T < \infty$ (implying $\{\sup_{t \in T} \lVert h_n(t) \rVert\}_{n=1}^\infty$ is bounded) and $r_n \rightarrow 0$. Thus $g_n \rightarrow g_0$ in $\ell^\infty(T)^K$. Using this and uniform continuity of $\nabla f : \mathbb{D}_f \rightarrow \mathbb{R}^K$, lemma \ref{Lemma: bounded function spaces, continuity with f: R^K -> R^M} implies $\nabla f(g_n) \rightarrow \nabla f(g_0)$ in $\ell^\infty(T)^K$, i.e.
	\begin{equation*}
		\lVert \nabla f(g_n) - \nabla f(g_0) \rVert_T = \sup_{t \in T} \lVert \nabla f(g_n(t)) - \nabla f(g_0(t)) \rVert \rightarrow 0 
	\end{equation*}
	Using once again that $\{\sup_{t \in T} \lVert h_n(t) \rVert\}_{n=1}^\infty$ is bounded, this implies
	\begin{equation}
		\lim_{n\rightarrow \infty} \sup_{t \in T} \lVert \nabla f(g_n(t)) - \nabla f(g_0(t)) \rVert \times \sup_{t \in T} \lVert h_n(t) \rVert = 0 \label{Display: lemma proof, maps between bounded function spaces, inequality term 1 limit}
	\end{equation}
	
	Now consider the term in \eqref{Display: lemma proof, maps between bounded function spaces, inequality term 2}. $\sup_{t \in T} \lVert \nabla f(g_0(t)) \rVert < \infty$ because $\lVert \nabla f(\cdot)\rVert$ is uniformly continuous and $\sup_{t \in T} \lVert g_0(t) \rVert < \infty$, just as in the proof of lemma \ref{Lemma: bounded function spaces, continuity with f: R^K -> R^M}. Furthermore, $\lim_{n \rightarrow \infty} \sup_{t \in T} \lVert h_n(t) - h(t) \rVert = 0$, so 
	\begin{equation}
		\lim_{n \rightarrow \infty} \sup_{t \in T} \lVert \nabla f(g_0(t)) \rVert \times \sup_{t \in T} \lVert h_n(t) - h(t) \rVert = 0 \label{Display: lemma proof, maps between bounded function spaces, inequality term 2 limit}
	\end{equation}
	Combining \eqref{Display: lemma proof, maps between bounded function spaces, inequality term 1} through \eqref{Display: lemma proof, maps between bounded function spaces, inequality term 2 limit} we obtain
	\begin{align*}
		\lim_{n \rightarrow \infty} \sup_{t \in T} \left\lvert \frac{f(g_0(t) + r_n h_n(t)) - f(g_0(t))}{r_n} - \nabla f(g_0(t))^\intercal h(t)\right\rvert = 0
	\end{align*}
	which concludes the proof.
\end{proof}

\begin{remark}
	\label{Remark: why do a corollary to Hadamard differentiability of maps between bounded function spaces?}
	\singlespacing
	
	Lemma \ref{Lemma: Hadamard differentiability, maps between bounded function spaces} specifies the domain of $F$ as $\ell_f^\infty(T, \mathbb{D}_f) = \left\{g \in \ell^\infty(T, \mathbb{D}_f) \; ; \; \sup_{t \in T} \lvert f(g(t)) \rvert < \infty\right\}$. It is often straightforward to clarify the space $\ell_f^\infty(T, \mathbb{D}_f)$ in particular cases; for example, $\ell_f^\infty(T, \mathbb{D}_f) = \ell^\infty(T, \mathbb{D}_f)$ if $f$ satisfies any one of the following:
	\begin{enumerate*}[label=(\roman*)]
		\item $f$ is bounded,
		\item $f$ is Lipschitz, or 
		\item $f$ is bounded on bounded subsets (e.g., $f(x) = x$ is bounded on bounded subsets)
	\end{enumerate*}
	See also lemma \ref{Lemma: Hadamard differentiability, conditional distributions}.
	
	Lemma \ref{Lemma: Hadamard differentiability, maps between bounded function spaces} requires $\nabla f(\cdot)$ be uniformly continuous, but this often stronger than necessary. When hoping to argue $F : \ell_f^\infty(T, \mathbb{D}_f) \rightarrow \ell^\infty(T)$ defined pointwise with $F(g)(t) = f(g(t))$ is (fully) Hadamard differentiable at $g_0 \in \ell_f^\infty(T, \mathbb{D}_f)$, it suffices that $f$ is continuously differentiable on a closed set slightly larger than the (bounded) range of $g_0$. Compactness of this expanded range and the fact that continuous functions on compact sets are uniformly continuous allow us to apply the preceding lemma. This logic is formalized in the following corollary.
\end{remark}

\begin{restatable}[Hadamard differentiability of maps between bounded function spaces, corollary]{corollary}{corollaryHadamardDifferentiabilityMapsBetweenBoundedFunctionSpaces}
	\label{Lemma: Hadamard differentiability, maps between bounded function spaces, corollary}
	\singlespacing
	
	Let $f : \mathbb{D}_f \subseteq \mathbb{R}^K \rightarrow \mathbb{R}$ be continuously differentiable. 
	
	Define the subset of $\ell^\infty(T)^K$ taking values in $\mathbb{D}_f$,
	\begin{equation*}
		\ell^\infty(T, \mathbb{D}_f) = \left\{g : T \rightarrow \mathbb{R}^K \; ; \; g(t) \in \mathbb{D}_f, \; \sup_{t \in T} \lVert g(t) \rVert < \infty\right\} \subseteq \ell^\infty(T)^K
	\end{equation*}
	and the subset of $\ell^\infty(T, \mathbb{D}_f)$ such that composition with $f$ defines a bounded function:
	\begin{equation*}
		\ell_f^\infty(T, \mathbb{D}_f) = \left\{g \in \ell^\infty(T, \mathbb{D}_f) \; ; \; \sup_{t \in T} \lvert f(g(t)) \rvert < \infty\right\}
	\end{equation*}
	
	Let $g_0 \in \ell_f^\infty(T, \mathbb{D}_f)$, and suppose that for some $\delta > 0$,
	\begin{equation*}
		g_0(T)^\delta \equiv \left\{x \in \mathbb{R}^K \; ; \; \inf_{t \in T} \lVert x - g_0(t) \rVert \leq \delta\right\} \subseteq \mathbb{D}_f.
	\end{equation*}
	Then $F : \ell_f^\infty(T, \mathbb{D}_f) \rightarrow \ell^\infty(T)$ defined pointwise by $F(g)(t) = f(g(t))$ is (fully) Hadamard differentiable at $g_0$ tangentially to $\ell^\infty(T)^K$, with derivative $F_{g_0}' : \ell^\infty(T)^K \rightarrow \ell^\infty(T)$ given pointwise by
	\begin{align*}
		F_{g_0}'(h)(t) = \left[\nabla f(g_0(t))\right]^\intercal h(t) = \sum_{k=1}^K \frac{\partial f}{\partial x_k}(g_0(t)) h_k(t)
	\end{align*}
\end{restatable}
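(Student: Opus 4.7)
The plan is to reduce this corollary to lemma \ref{Lemma: Hadamard differentiability, maps between bounded function spaces} by restricting $f$ to a suitable compact set where continuous differentiability upgrades to uniform continuity of the gradient, then showing that any approximating sequence $g_0 + r_n h_n$ eventually lands in that compact set so the restricted lemma applies verbatim.

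First I would establish that $K \equiv g_0(T)^\delta$ is compact. Since $g_0 \in \ell_f^\infty(T, \mathbb{D}_f)$, the range $g_0(T)$ is bounded in $\mathbb{R}^K$, so $K$ is bounded (it lies within $\delta$ of a bounded set). The map $x \mapsto \inf_{t \in T}\lVert x - g_0(t) \rVert$ is $1$-Lipschitz continuous on $\mathbb{R}^K$, so $K$ (its sublevel set at height $\delta$) is closed. Thus $K$ is compact, and by assumption $K \subseteq \mathbb{D}_f$. Let $\tilde{f}$ denote the restriction of $f$ to $K$. Continuous differentiability of $f$ on $\mathbb{D}_f$ means $\nabla f$ is continuous on $\mathbb{D}_f$; its restriction to the compact $K$ is uniformly continuous by the Heine–Cantor theorem, and $\tilde{f}$ itself is bounded on $K$ by compactness and continuity.

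Next I would apply lemma \ref{Lemma: Hadamard differentiability, maps between bounded function spaces} to $\tilde{f} : K \to \mathbb{R}$. That lemma yields full Hadamard differentiability of the induced map $\tilde{F} : \ell_{\tilde{f}}^\infty(T, K) \to \ell^\infty(T)$ at $g_0$ tangentially to $\ell^\infty(T)^K$, with derivative $\tilde{F}_{g_0}'(h)(t) = [\nabla \tilde{f}(g_0(t))]^\intercal h(t) = [\nabla f(g_0(t))]^\intercal h(t)$. Note $g_0 \in \ell_{\tilde{f}}^\infty(T, K)$ since $g_0(T) \subseteq K$ and $f$ is bounded on $K$.

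Finally I would transfer this to $F$ using the definition of Hadamard differentiability. Let $\{h_n\} \subseteq \ell^\infty(T)^K$ and $\{r_n\} \subseteq \mathbb{R}$ satisfy $h_n \to h \in \ell^\infty(T)^K$, $r_n \to 0$, and $g_0 + r_n h_n \in \ell_f^\infty(T, \mathbb{D}_f)$ for every $n$. Since $\lVert h_n \rVert_T \to \lVert h \rVert_T < \infty$, the sequence $\{\lVert h_n \rVert_T\}$ is bounded, and $r_n \to 0$ gives $r_n \lVert h_n \rVert_T < \delta$ for all $n \geq N$ some $N$. For such $n$ and every $t \in T$,
\begin{equation*}
\lVert (g_0(t) + r_n h_n(t)) - g_0(t) \rVert = \lvert r_n \rvert \lVert h_n(t) \rVert \leq r_n \lVert h_n \rVert_T < \delta,
\end{equation*}
so $g_0 + r_n h_n$ takes values in $K$ and hence lies in $\ell_{\tilde{f}}^\infty(T, K)$, on which $F$ and $\tilde{F}$ coincide. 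Applying the conclusion of lemma \ref{Lemma: Hadamard differentiability, maps between bounded function spaces} to $\tilde{F}$ then gives
\begin{equation*}
\lim_{n \to \infty} \left\lVert \frac{F(g_0 + r_n h_n) - F(g_0)}{r_n} - [\nabla f(g_0(\cdot))]^\intercal h(\cdot) \right\rVert_T = 0,
\end{equation*}
which is the desired conclusion. There is no substantive obstacle beyond bookkeeping; the only point requiring care is verifying that $K$ is in fact compact (so that Heine–Cantor promotes continuity to uniform continuity) and that the convergence $g_0 + r_n h_n \to g_0$ in $\ell^\infty(T)^K$ eventually forces the trajectory into $K$ regardless of which tangent direction $h$ is considered.
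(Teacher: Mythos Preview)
Your proposal is correct and follows essentially the same approach as the paper: restrict $f$ to the compact set $g_0(T)^\delta$, invoke Heine--Cantor to obtain uniform continuity of $\nabla f$ there, apply lemma \ref{Lemma: Hadamard differentiability, maps between bounded function spaces} to the restriction, and then argue that any admissible sequence $g_0 + r_n h_n$ eventually takes values in $g_0(T)^\delta$ so the restricted result transfers. Your write-up is arguably cleaner in that you explicitly verify compactness of $g_0(T)^\delta$ (closed as a sublevel set of a Lipschitz function, bounded since $g_0$ has bounded range), which the paper simply asserts; the only cosmetic slip is writing $r_n \lVert h_n \rVert_T < \delta$ rather than $\lvert r_n \rvert \lVert h_n \rVert_T < \delta$, but the paper makes the same elision.
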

\begin{proof}
	\singlespacing
	
	Let $\tilde{f} : g_0(T)^\delta \rightarrow \mathbb{R}$ be the restriction of $f$ to $g_0(T)^\delta$. Note that $\tilde{f}$ is continuously differentiable on the compact $g_0(T)^\delta \subseteq \mathbb{R}^K$, hence $\nabla \tilde{f}$ is in fact uniformly continuous by the Heine-Cantor theorem. Apply lemma \ref{Lemma: Hadamard differentiability, maps between bounded function spaces} to find that 
	\begin{align*}
		&\tilde{F} : \ell_f^\infty(T, g_0(T)^\delta) \rightarrow \ell^\infty(T), &&\tilde{F}(g)(t) = \tilde{f}(g(t)) = f(g(t))
	\end{align*}
	is (fully) Hadamard differentiable at $g_0$, with derivative $\tilde{F}_{g_0}' : \ell^\infty(T)^K \rightarrow \ell^\infty(T)$ given pointwise by $\tilde{F}_{g_0}'(h)(t) = \left[\nabla f(g_0(t))\right]^\intercal h(t)$. By definition, this means that for any sequences $\{\tilde{h}_n\}_{n=1}^\infty \subseteq \ell^\infty(T)^K$ and $\{\tilde{r}_n\}_{n=1}^\infty \subseteq \mathbb{R}$ such that $\tilde{h}_n \rightarrow \tilde{h} \in \ell^\infty(T)^K$, $\tilde{r}_n \rightarrow 0$, and $g_0 + \tilde{r}_n \tilde{h}_n \in \ell^\infty(T, g_0(T)^\delta)$ for all $n$, 
	\begin{align}
		&\lim_{n \rightarrow \infty} \left\lVert \frac{\tilde{F}(g_0 + \tilde{r}_n \tilde{h}_n) - \tilde{F}(g_0)}{\tilde{r}_n} - F_{g_0}'(\tilde{h}) \right\rVert_T = 0 \label{Proof display: Cor: Hadamard differentiability of maps between bounded function spaces, restricted map is differentiable}
	\end{align}
	
	Let $\{h_n\}_{n=1}^\infty \subseteq \ell^\infty(T)^K$, $\{r_n\}_{n=1}^\infty \subseteq \mathbb{R}$ be such that $h_n \rightarrow h \in \ell^\infty(T)^K$, $r_n \rightarrow 0$, and $g_0 + r_n h_n \in \ell^\infty(T, \mathbb{D}_f)$ for all $n$. It suffices to show that 
	\begin{align*}
		&\left\lVert \frac{F(g_0 + r_nh_n) - F(g_0)}{r_n} - F_{g_0}'(h) \right\rVert_T \\
		&\hspace{1 cm} = \sup_{t \in T} \left\lvert \frac{f(g_0(t) + r_n h_n(t)) - f(g_0(t))}{r_n} - \left[\nabla f(g_0(t))\right]^\intercal h(t) \right\rvert
	\end{align*}
	has limit zero.
	
	Notice that $g_0 + r_n h_n \rightarrow g_0$ in $\ell^\infty(T)^K$, so for some $N$ we have that for all $n \geq N$, $\lVert g_0 + r_n h_n - g_0 \rVert_T = r_n \sup_{t \in T} \lVert h_n \rVert < \delta$. It follows that for $k \in \mathbb{N}$, $g_0 + r_{k + N} h_{k+N} \in \ell^\infty(T, g_0(T)^\delta)$ and hence $\tilde{r}_k = r_{k + N}$ and $\tilde{h}_k = h_{k + N}$ are sequences for which \eqref{Proof display: Cor: Hadamard differentiability of maps between bounded function spaces, restricted map is differentiable} applies. Therefore,
	\begin{align*}
		\lim_{n \rightarrow \infty} \left\lVert \frac{F(g_0 + r_nh_n) - F(g_0)}{r_n} - F_{g_0}'(h) \right\rVert_T &= \lim_{k \rightarrow \infty} \left\lVert \frac{F(g_0 + r_{k + N}h_{k + N}) - F(g_0)}{r_{k + N}} - F_{g_0}'(h) \right\rVert_T \\
		&= \lim_{k \rightarrow \infty} \left\lVert \frac{\tilde{F}(g_0 + \tilde{r}_k \tilde{h}_k) - \tilde{F}(g_0)}{\tilde{r}_k} - F_{g_0}'(h) \right\rVert_T \\
		&= 0
	\end{align*}
	Where the second equality follows from $\tilde{F}(g_0 + \tilde{r}_k \tilde{h}_k) = F(g_0 + r_{k+N} h_{k+N})$ and $\tilde{F}(g_0) = F(g_0)$. 
\end{proof}

The following lemma is lemma S.4.9 from \cite{fang2019inference}, but the authors state it for a metric space. The same proof works to show that statement holds in semimetric spaces as well.\footnote{
	Some useful facts about semimetrics:
	\begin{enumerate*}[label=(\roman*)]
		\item A semimetric defines a topology that is first countable (\cite{aliprantis2006infinite} pp. 70, 72), but this topology is not second countable or Hausdorff. The limits of sequences are not guaranteed to be unique.
		\item In a semimetric space, sequences still characterize the closures of sets, as well as continuity and semicontinuity of functions (\cite{aliprantis2006infinite}, theorems  2.40 and 2.42 on pp. 42-43).
		\item A subset of a semimetric space is compact if and only if it is complete and totally bounded (\cite{vaart1997weak}, footnote on p. 17).
	\end{enumerate*}
} The statement and proof are included here for completeness.

\begin{restatable}[Hadamard directional differentiability of supremum]{lemma}{lemmaHadamardDirectionalDifferentiabilityOfSupremum}
	\label{Lemma: Hadamard differentiability, supremum of bounded function}
	\singlespacing
	
	(\cite{fang2019inference} lemma S.4.9)
	
	Let $(\textbf{A}, d)$ be a compact semimetric space, $A$ a compact subset of $\textbf{A}$, and 
	\begin{align*}
		&\psi : \ell^\infty(\textbf{A}) \rightarrow \mathbb{R}, &&\psi(p) = \sup_{a \in A} \; p(a)
	\end{align*}
	Then $\psi$ is Hadamard directionally differentiable at any $p_0 \in \mathcal{C}(\textbf{A}, d)$ tangentially to $\mathcal{C}(\textbf{A}, d)$. $\Psi_A(p_0) = \argmax_{a \in A} p_0(a)$ is nonempty, and the directional derivative is given by
	\begin{align*}
		&\psi_{p_0}' : \mathcal{C}(\textbf{A}, d) \rightarrow \mathbb{R}, &&\psi_{p_0}'(p) = \sup_{a \in \Psi_A(p_0)} p(a)
	\end{align*}
\end{restatable}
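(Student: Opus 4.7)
The plan is to verify the definition of Hadamard directional differentiability directly, exploiting compactness of $A$ and continuity of $p_0$ and the candidate direction $h$. First I would dispatch the easy items: $\Psi_A(p_0)$ is the argmax of a continuous function $p_0 \in \mathcal{C}(\textbf{A},d)$ over the compact set $A$, so by the extreme value theorem it is nonempty and, as a closed subset of $A$, compact. The candidate derivative $\psi_{p_0}'(p)=\sup_{a\in\Psi_A(p_0)}p(a)$ is then well defined on $\mathcal{C}(\textbf{A},d)$ and is $1$-Lipschitz in the sup norm, since
\begin{equation*}
\bigl|\psi_{p_0}'(p)-\psi_{p_0}'(q)\bigr| \leq \sup_{a\in\Psi_A(p_0)}|p(a)-q(a)| \leq \|p-q\|_\infty,
\end{equation*}
so it is in particular continuous.

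The core step is to show that for any sequences $\{h_n\}\subseteq\ell^\infty(\textbf{A})$ and $\{t_n\}\subseteq\mathbb{R}_+$ with $h_n\to h\in\mathcal{C}(\textbf{A},d)$ and $t_n\downarrow 0$,
\begin{equation*}
R_n \;\equiv\; \frac{\psi(p_0+t_nh_n)-\psi(p_0)}{t_n} \;\longrightarrow\; \sup_{a\in\Psi_A(p_0)} h(a).
\end{equation*}
Writing $M=\sup_{a\in A}p_0(a)$, algebraic manipulation gives
\begin{equation*}
R_n \;=\; \sup_{a\in A}\Bigl\{h_n(a)+\tfrac{p_0(a)-M}{t_n}\Bigr\},
\end{equation*}
where the bracketed penalty is nonpositive and vanishes precisely on $\Psi_A(p_0)$. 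The lower bound is immediate: for any fixed $a^*\in\Psi_A(p_0)$, restricting the supremum yields $R_n\geq h_n(a^*)\to h(a^*)$, so $\liminf_n R_n\geq\sup_{a\in\Psi_A(p_0)}h(a)$.

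The main obstacle, and where I would spend the most care, is the matching upper bound, which requires showing that near-maximizers of $R_n$ asymptotically concentrate on $\Psi_A(p_0)$. Pick $a_n\in A$ with $h_n(a_n)+\frac{p_0(a_n)-M}{t_n}\geq R_n - 1/n$; since $\{R_n\}$ is bounded (below by $\liminf$ above, above by $\|h_n\|_\infty\to\|h\|_\infty$) and $\{h_n(a_n)\}$ is bounded, the inequality
\begin{equation*}
0 \;\geq\; p_0(a_n)-M \;\geq\; t_n\bigl(R_n - 1/n - h_n(a_n)\bigr)
\end{equation*}
forces $p_0(a_n)\to M$. By compactness of $A$ in $(\textbf{A},d)$, a subsequence $a_{n_k}\to a^*\in A$ converges, and continuity of $p_0$ then gives $p_0(a^*)=M$, i.e.\ $a^*\in\Psi_A(p_0)$. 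Combining uniform convergence $h_n\to h$ with continuity of $h$ yields $h_{n_k}(a_{n_k})\to h(a^*)$, and since $p_0(a_{n_k})-M\leq 0$,
\begin{equation*}
R_{n_k}-1/n_k \;\leq\; h_{n_k}(a_{n_k}) \;\longrightarrow\; h(a^*) \;\leq\; \sup_{a\in\Psi_A(p_0)} h(a).
\end{equation*}
Applying this subsequence argument to every subsequence of $\{R_n\}$ gives $\limsup_n R_n\leq\sup_{a\in\Psi_A(p_0)}h(a)$, matching the lower bound and completing the proof.
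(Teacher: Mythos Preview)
Your proof is correct and takes a genuinely different route from the paper's. The paper splits the difference quotient via a three-term triangle inequality: one term vanishes exactly because $p_0$ is flat on $\Psi_A(p_0)$; a second term compares $p_n$ to $p$ and is controlled by $\lVert p_n - p\rVert_\infty$; the third term is handled by invoking the Berge Maximum Theorem to get upper hemicontinuity of the argmax correspondence $g \mapsto \Psi_A(p_0 + g)$, then using Heine--Cantor uniform continuity of $p$ to show the supremum over a shrinking $\delta_n$-enlargement of $\Psi_A(p_0)$ collapses to the supremum over $\Psi_A(p_0)$ itself. You instead rewrite $R_n$ as a penalized supremum $\sup_{a\in A}\{h_n(a) + t_n^{-1}(p_0(a)-M)\}$ and attack the upper bound directly: you pick near-maximizers, use boundedness of $R_n$ and $h_n(a_n)$ to force $p_0(a_n)\to M$, extract a convergent subsequence by compactness, and pass to the limit using continuity of $p_0$ and $h$ together with $\lVert h_n - h\rVert_\infty \to 0$. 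Your argument is more elementary in that it avoids Berge's theorem entirely and bundles the ``concentration on the argmax'' step into a single sequential-compactness extraction; the paper's decomposition is more modular and makes the role of upper hemicontinuity explicit, which can be convenient if one later wants to estimate the derivative (as the paper does in its alternative bootstrap procedure).
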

\begin{proof}
	
	\singlespacing
	
	Let $p_0 \in \mathcal{C}(\textbf{A})$. Since $A$ is compact, $\Psi_A(p_0) = \argmax_{a \in A} p_0$ is nonempty (\cite{aliprantis2006infinite} theorem 2.43). Let $\{p_n\}_{n=1}^\infty \subseteq \ell^\infty(\textbf{A})$ and $\{t_n\}_{n=1}^\infty \subseteq \mathbb{R}_+$ such that $p_n \rightarrow p \in \mathcal{C}(\textbf{A})$ and $t_n \downarrow 0$. Notice that 
	\begin{align}
		&\left\lvert \frac{\psi(p_0 + t_n p_n) - \psi(p_0)}{t_n} -  \psi_{p_0}'(p)\right\rvert \notag \\
		&\hspace{1 cm} = \left\lvert \frac{\sup_{a \in A} \left\{p_0(a) + t_np_n(a)\right\} - \sup_{a \in A} p_0(a)}{t_n} - \sup_{a \in \Psi_A(p_0)} p(a)\right\rvert \notag \\
		&\hspace{1 cm} = \left\lvert \frac{\sup_{a \in A} \left\{p_0(a) + t_np_n(a)\right\} - \sup_{a \in \Psi_A(p_0)} p_0(a)}{t_n} - \sup_{a \in \Psi_A(p_0)} p(a)\right\rvert \notag\\
		&\hspace{1 cm} \leq \left\lvert \frac{\sup_{a \in \Psi_A(p_0)}\left\{p_0(a) + t_n p(a)\right\} - \sup_{a \in \Psi_A(p_0)}p_0(a)}{t_n} - \sup_{a \in \Psi_A(p_0)} p(a)\right\rvert \label{Display: lemma proof, Hadamard differentiability, supremum of bounded function, inequality term 1} \\
		&\hspace{2 cm} + \left\lvert \frac{\sup_{a \in A} \left\{p_0(a) + t_n p_n(a)\right\} - \sup_{a \in A}\left\{p_0(a) + t_n p(a)\right\}}{t_n} \right\rvert \label{Display: lemma proof, Hadamard differentiability, supremum of bounded function, inequality term 2} \\
		&\hspace{2 cm} + \left\lvert \frac{\sup_{a \in A} \left\{p_0(a) + t_n p (a)\right\} - \sup_{a \in \Psi_A(p_0)}\left\{p_0(a) + t_n p(a)\right\}}{t_n} \right\rvert \label{Display: lemma proof, Hadamard differentiability, supremum of bounded function, inequality term 3}
	\end{align}
	
	First, consider \eqref{Display: lemma proof, Hadamard differentiability, supremum of bounded function, inequality term 1}. Notice that $p_0$ is flat on $\Psi_A(p_0)$, so
	\begin{align}
		&\left\lvert \frac{\sup_{a \in \Psi_A(p_0)}\left\{p_0(a) + t_n p(a)\right\} - \sup_{a \in \Psi_A(p_0)}p_0(a)}{t_n} - \sup_{a\in \Psi_F(p_0)} p(a)\right\rvert \notag \\
		&\hspace{1 cm} = \left\lvert \sup_{a \in \Psi_A(p_0)} p(a) - \sup_{a\in \Psi_A(p_0)} p(a)\right\rvert = 0 \label{Display: lemma proof, Hadamard differentiability, supremum of bounded function, inequality term 1 limit is zero}
	\end{align}
	Next consider \eqref{Display: lemma proof, Hadamard differentiability, supremum of bounded function, inequality term 2}. Since $p_0 + t_n p_n$ and $p_0 + t_n p$ are elements of $\ell^\infty(\textbf{A})$, lemma \ref{Lemma: bounded function spaces, uniform continuity of restricted sup} implies
	\begin{align}
		&\left\lvert \frac{\sup_{a \in A} \left\{p_0(a) + t_n p_n(a)\right\} - \sup_{a \in A}\left\{p_0(a) + t_n p(a)\right\}}{t_n} \right\rvert \notag \\
		&\hspace{1 cm} \leq \sup_{a \in A} \lvert p_n(a) - p(a) \rvert \leq \lVert p_n - p \rVert_{\textbf{A}} \rightarrow 0 \label{Display: lemma proof, Hadamard differentiability, supremum of bounded function, inequality term 2 limit is zero}
	\end{align}
	Now consider \eqref{Display: lemma proof, Hadamard differentiability, supremum of bounded function, inequality term 3}. Notice that 
	\begin{align*}
		&\varphi : \mathcal{C}(\textbf{A}) \rightrightarrows \textbf{A}, &&\varphi(g) = A
	\end{align*}
	is a trivially continuous correspondence with nonempty, compact values. Furthermore,
	\begin{align*}
		&\Gamma_{p_0} : \mathcal{C}(\textbf{A}) \times \textbf{A} \rightarrow \mathbb{R}, &&\Gamma_{p_0}(g, a) = p_0(a) + g(a)
	\end{align*}
	is continuous on all of $\mathcal{C}(\textbf{A}) \times \textbf{A}$.\footnote{
		To see this, recall that the topology of $\mathcal{C}(\textbf{A}) \times \textbf{A}$ is generated by the semimetric $\rho((g, a), (\tilde{g},\tilde{a})) = \max\left\{\lVert g - \tilde{g} \rVert_{\textbf{A}}, d(a, \tilde{a})\right\}$ (\cite{aliprantis2006infinite} lemma 3.3). Let $\varepsilon > 0$ and $g \in \mathcal{C}(\textbf{A})$. Note that each element of $\mathcal{C}(\textbf{A})$ is a continuous function defined on a compact set, and is hence uniformly continuous by the Heine-Cantor theorem (lemma \ref{Lemma: heine cantor in semimetric spaces}). Use uniform continuity of $p_0$ and $g$ to choose $\delta_{p_0}, \delta_g > 0$ such that $d(a, \tilde{a}) < \delta_{p_0}$ implies $\lvert p_0(a) - p_0(\tilde{a}) \rvert < \varepsilon / 3$, and $d(a, \tilde{a}) < \delta_g$ implies $\lvert g(a) - g(\tilde{a}) \rvert < \varepsilon / 3$. Let $\delta = \min\{\delta_{p_0}, \delta_g, \varepsilon/3\}$, and notice that $\rho((g, a), (\tilde{g},\tilde{a})) < \delta$ implies $\lvert p_0(a) - p_0(\tilde{a}) \rvert < \varepsilon/3$, $\lvert g(a) - g(\tilde{a}) \rvert < \varepsilon/3$, and $\lVert g - \tilde{g} \rVert_{\textbf{A}} < \varepsilon/3$, and hence $\lvert \Gamma_{p_0}(g,a) - \Gamma_{p_0}(\tilde{g}, \tilde{a}) \rvert = \left\lvert p_0(a) + g(a) - p_0(\tilde{a}) - \tilde{g}(\tilde{a}) \right\rvert \leq \lvert p_0(a) - p_0(\tilde{a}) \rvert + \lvert g(a) - g(\tilde{a}) \rvert + \lvert g(\tilde{a}) - \tilde{g}(\tilde{a}) \rvert < \varepsilon$.
	}
	Thus $\sup_{a \in A} \left\{p_0(a) + g(a) \right\} = \max_{a \in \varphi(g)} \Gamma_{p_0}(g,a)$ satisfies the conditions of the Berge Maximum Theorem (\cite{aliprantis2006infinite} theorem 17.31), implying the argmax corresondence $\Phi : \mathcal{C}(\textbf{A}) \rightrightarrows \textbf{A}$ given by $\Phi(g) = \Psi_A(p_0 + g)$ is compact valued and upper hemicontinuous. 
	
	Let $\Psi_A(p_0)^{\epsilon} = \left\{a \in A \; ; \; \inf_{\tilde{a} \in \Psi_A(p_0)} d(a, \tilde{a}) \leq \epsilon\right\}$. Upper hemicontinuity and $\lVert t_n p \rVert_{\textbf{A}} \rightarrow 0$ implies that there exists $\delta_n \downarrow 0$ such that $\Psi_A(p_0 + t_n p) \subseteq \Psi_A(p_0)^{\delta_n}$.\footnote{
		To see this, recall the definition of $\Phi$ being upper hemicontinuous (uhc) given in \cite{aliprantis2006infinite}, definition 17.2: $\Phi$ is uhc at $g$ if for every neighborhood $U$ of $\Phi(g)$, the upper inverse image 
		\begin{equation*}
			\Phi^u(U) = \left\{h \in \mathcal{C}(\textbf{A}) \; ; \; \Phi(h) \subseteq U\right\}
		\end{equation*}
		is a neighborhood of $g$, i.e. $g$ is in the interior of $\Phi^u(U)$, so there exists $\eta > 0$ such that $\lVert g - \tilde{g} \rVert_{\textbf{A}} < \eta$ implies $\tilde{g} \in \Phi^u(U)$, and hence $\Phi(\tilde{g}) \subseteq U$. Since $\Psi_A$ is uhc and $\Psi_A(p_0)^\epsilon$ is a neighborhood of $\Psi_A(p_0)$, whenever $\lVert p_0 + t_n p - p_0 \rVert_{\textbf{A}} = t_n \lVert p \rVert_{\textbf{A}} < \epsilon$ we have that $\Psi(p_0 + t_n p) \subseteq \Psi_A(p_0)^\epsilon$. 
		
		Let 
		\begin{align*}
			\delta_n = \max_{a \in \Psi_A(p_0 + t_n p)} \min_{\tilde{a} \in \Psi_A(p_0)} d(a,\tilde{a})
		\end{align*}
		The inner $\min$ is attained because $d$ is continuous and the feasible set is compact. $a \mapsto \max_{\tilde{a} \in \Psi_A(p_0)} \{-d(a, \tilde{a})\}$ is continuous by the Maximum Theorem (\cite{aliprantis2006infinite} theorem 17.31), which implies $a \mapsto \min_{\tilde{a} \in \Psi_A(p_0)} d(a, \tilde{a})$ is continuous. The outer $\max$ is then attained because the feasible set is compact. Notice that $\delta_n$ that $\Psi_A(p_0 + t_n p) \subseteq \Psi_A(p_0)^{\delta_n}$. 
		
		Suppose for contradiction that $\delta_n \not\rightarrow 0$. Then there exists $\epsilon > 0$ and a subsequence $\{\delta_{n'}\}_{n'=1}^\infty$ such that $\delta_{n'} \geq \epsilon$ for all $n'$, which implies $\Psi_A(p_0 + t_{n'} p) \not\subseteq \Psi_A(p_0)^{\epsilon/2}$ for all $n'$. $\Psi_A(p_0)^{\epsilon/2}$ is a neighborhood of $\Psi_A(p_0) = \Phi(0)$, and $\Phi$ is uhc at $0$, hence $\Phi^u(\Psi_A(p_0)^{\epsilon/2})$ is a neighborhood of $0 \in \mathcal{C}(\textbf{A})$. So for some $\eta > 0$, $\lVert t_{n'} p \rVert < \eta$ implies $\Phi(t_{n'} p) = \Psi_A(p_0 + t_{n'} p) \subseteq \Psi_A(p_0)^{\epsilon/2}$. Since $t_{n'} p \rightarrow 0 \in \mathcal{C}(\textbf{A})$, there exist $n'$ with $\lVert t_{n'} p \rVert_{\textbf{A}} < \eta$, and for such $n'$ we have $\Psi_A(p_0 + t_{n'} p) \subseteq \Psi_A(p_0)^{\epsilon/2}$ by upper hemicontinuity. This is the desired contradiction; therefore $\delta_n \rightarrow 0$. 
		
		If $\delta_n$ does not converge monotonically to zero, set $\tilde{\delta}_n = \sup\{\delta_k \; ; \; k \geq n\}$. Note that $\tilde{\delta}_n \downarrow 0$ and $\tilde{\delta}_n \geq \delta_n$, the latter of which implies $\Psi_A(p_0 + t_n p) \subseteq \Psi_A(p_0)^{\tilde{\delta}_n}$.
	}

	It follows that 
	\begin{align*}
		&\left\lvert \frac{\sup_{a \in A} \left\{p_0(a) + t_n p (a)\right\} - \sup_{a \in \Psi_A(p_0)}\left\{p_0(a) + t_n p(a)\right\}}{t_n} \right\rvert \\
		&\hspace{1 cm} = \frac{1}{t_n} \left(\sup_{a \in \Psi_A(p_0)^{\delta_n}} \left\{p_0(a) + t_n p (a)\right\} - \sup_{a \in \Psi_A(p_0)}\left\{p_0(a) + t_n p(a)\right\}\right) 
	\end{align*}
	Let $a_{s,n} \in \argmax_{a \in \Psi_A(p_0)} \left\{p_0(a) + t_n p(a)\right\}$, which is nonempty because $\Psi_A(p_0)$ is compact and $p_0(a) + t_n p(a)$ is continuous. Let $a_{b,n} \in \Psi_A(p_0 + t_n p) \subseteq \Psi_A(p_0)^{\delta_n}$ satisfy $d(a_{b,n}, a_{s,n}) \leq \delta_n$, and notice that $\Psi_A(p_0 + t_n p) \subseteq \Psi_A(p_0)^{\delta_n}$ implies $\sup_{a \in \Psi_A(p_0)^{\delta_n}} \left\{p_0(a) + t_n p (a)\right\} = p_0(a_{b,n}) + t_n p (a_{b,n})$. So,
	\begin{align*}
		&\frac{1}{t_n} \left(\sup_{a \in \Psi_A(p_0)^{\delta_n}} \left\{p_0(a) + t_n p (a)\right\} - \sup_{a \in \Psi_A(p_0)}\left\{p_0(a) + t_n p(a)\right\}\right) \\
		&\hspace{1 cm} = p_0(a_{b,n}) + t_n p(a_{b,n}) - p_0(a_{s,n}) - t_n p(a_{s,n}) \\
		&\hspace{1 cm} \leq p(a_{b,n}) - p(a_{s,n})
	\end{align*}
	where the inequality follows because $a_{s,n}$ maximizes $p_0$ over $A$ while $a_{b,n}$ may not. Furthermore, $d(a_{b,n}, a_{s,n}) \leq \delta_n$ implies 
	\begin{equation*}
		p(a_{b,n}) - p(a_{s,n}) \leq \sup_{a, a' \in A, \; d(a,a') \leq \delta_n} \left\{p(a) - p(a')\right\}
	\end{equation*}
	and hence
	\begin{align}
		&\left\lvert \frac{\sup_{a \in A} \left\{p_0(a) + t_n p (a)\right\} - \sup_{a \in \Psi_A(p_0)}\left\{p_0(a) + t_n p(a)\right\}}{t_n} \right\rvert \notag \\
		&\hspace{1 cm} \leq \sup_{a, a' \in A, \; d(a,a') \leq \delta_n} \left\{p(a) - p(a')\right\} \notag \\
		&\hspace{1 cm} \rightarrow 0 \label{Display: lemma proof, Hadamard differentiability, supremum of bounded function, inequality term 3 limit is zero}
	\end{align}
	Where the limit claim follows from $p$ being a continuous function defined on a compact set, and so is in fact uniformly continuous by the Heine-Cantor theorem (lemma \ref{Lemma: heine cantor in semimetric spaces}). 
	
	To summarize, 
	\begin{align*}
		&\left\lvert \frac{\psi(p_0 + t_n p_n) - \psi(p_0)}{t_n} -  \psi_{p_0}'(p)\right\rvert \\
		&\hspace{1 cm} \leq \left\lvert \frac{\sup_{a \in \Psi_A(p_0)}\left\{p_0(a) + t_n p(a)\right\} - \sup_{a \in \Psi_A(p_0)}p_0(a)}{t_n} - \sup_{a\in \Psi_A(p_0)} p(a)\right\rvert \\
		&\hspace{2 cm} + \left\lvert \frac{\sup_{a \in A} \left\{p_0(a) + t_n p_n(a)\right\} - \sup_{a \in A}\left\{p_0(a) + t_n p(a)\right\}}{t_n} \right\rvert \\
		&\hspace{2 cm} + \left\lvert \frac{\sup_{a \in A} \left\{p_0(a) + t_n p (a)\right\} - \sup_{a \in \Psi_A(p_0)}\left\{p_0(a) + t_n p(a)\right\}}{t_n} \right\rvert
	\end{align*}
	along with \eqref{Display: lemma proof, Hadamard differentiability, supremum of bounded function, inequality term 1 limit is zero}, \eqref{Display: lemma proof, Hadamard differentiability, supremum of bounded function, inequality term 2 limit is zero}, and \eqref{Display: lemma proof, Hadamard differentiability, supremum of bounded function, inequality term 3 limit is zero} implies that $\psi$ is Hadamard directionally differentiable at any $p_0 \in \mathcal{C}(\textbf{A})$ tangentially to any $p \in \mathcal{C}(\textbf{A})$, with $\psi_{p_0}'(p) = \sup_{a \in \Psi_A(p_0)} p(a)$.
\end{proof}

\subsection{Other}
\label{Appendix: misc lemmas, subsection other}

The Heine-Cantor theorem is usually stated for metric spaces. As it is applied in the proof of lemma \ref{Lemma: Hadamard differentiability, supremum of bounded function} to a setting with semimetric spaces, the statement and standard proof are included here to make clear the result applies to semimetric spaces as well.

\begin{restatable}[Heine-Cantor theorem]{lemma}{lemmaHeineCantorSemimetric}
	\label{Lemma: heine cantor in semimetric spaces}
	\singlespacing
	
	Let $(X, d_X)$ and $(Y, d_Y)$ be semimetric spaces, $X$ compact, and $f : X \rightarrow Y$ continuous. Then $f$ is in fact uniformly continuous. 
\end{restatable}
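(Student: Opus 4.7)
My plan is to prove the contrapositive by a sequential compactness argument, which adapts cleanly from the familiar metric-space proof. Suppose $f$ fails to be uniformly continuous. Then there exists $\varepsilon > 0$ such that for every $n \in \mathbb{N}$ we can choose points $x_n, y_n \in X$ with $d_X(x_n, y_n) < 1/n$ but $d_Y(f(x_n), f(y_n)) \geq \varepsilon$. The goal is to contradict continuity of $f$ by producing sequences in $X$ approaching a common point on which $f$ fails to have a common limit.

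First I would invoke sequential compactness of $X$. Since $X$ is compact, it is totally bounded and complete; equivalently, every sequence in $X$ has a Cauchy subsequence, and Cauchy sequences converge by completeness. Hence there exist a subsequence $\{x_{n_k}\}_{k=1}^\infty$ and a point $x \in X$ with $d_X(x_{n_k}, x) \to 0$. By the triangle inequality for the semimetric,
\begin{equation*}
	d_X(y_{n_k}, x) \leq d_X(y_{n_k}, x_{n_k}) + d_X(x_{n_k}, x) \leq \tfrac{1}{n_k} + d_X(x_{n_k}, x) \to 0,
\end{equation*}
so the same subsequence index produces $y_{n_k} \to x$.

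Next I would apply continuity of $f$ at $x$. Given $\varepsilon > 0$ as above, choose $\delta > 0$ with $d_X(x, z) < \delta \Rightarrow d_Y(f(x), f(z)) < \varepsilon/2$. For $k$ large enough we have both $d_X(x_{n_k}, x) < \delta$ and $d_X(y_{n_k}, x) < \delta$, which yields
\begin{equation*}
	d_Y(f(x_{n_k}), f(y_{n_k})) \leq d_Y(f(x_{n_k}), f(x)) + d_Y(f(x), f(y_{n_k})) < \varepsilon,
\end{equation*}
contradicting the choice $d_Y(f(x_{n_k}), f(y_{n_k})) \geq \varepsilon$ for every $n_k$. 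This contradiction establishes uniform continuity.

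The only point that requires care — and the step I see as the main obstacle — is justifying sequential compactness in the semimetric setting, where the topology is not Hausdorff so ``limit points'' are not unique. I would handle this by appealing to the characterization cited in the footnote accompanying lemma \ref{Lemma: Hadamard differentiability, supremum of bounded function}: a subset of a semimetric space is compact if and only if it is totally bounded and complete. Total boundedness gives the standard diagonal extraction of a Cauchy subsequence (the usual metric-space argument uses only the triangle inequality and symmetry of $d_X$, both of which hold for a semimetric), and completeness then furnishes a limit $x$. Non-uniqueness of the limit is harmless here, because only the convergence $d_X(x_{n_k}, x), d_X(y_{n_k}, x) \to 0$ is used to trigger continuity of $f$ at $x$. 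The rest of the argument uses nothing beyond the triangle inequality in $d_Y$, which the semimetric hypothesis guarantees.
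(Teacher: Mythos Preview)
Your proof is correct but takes a genuinely different route from the paper. The paper gives the direct open-cover argument: for each $x$ choose $\delta_x$ so that $d_X(x,x') < 2\delta_x$ forces $d_Y(f(x),f(x')) < \varepsilon/2$, cover $X$ by the balls $B_{\delta_x}(x)$, extract a finite subcover $\{B_{\delta_{x_i}}(x_i)\}_{i=1}^n$, and set $\delta = \min_i \delta_{x_i}$; then any $x,x'$ with $d_X(x,x') < \delta$ lie within $2\delta_{x_k}$ of a common center $x_k$, giving $d_Y(f(x),f(x')) < \varepsilon$ by two applications of the triangle inequality. You instead argue by contradiction via sequential compactness, extracting a convergent subsequence from the offending pairs and contradicting continuity at the limit. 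The paper's approach is slightly more self-contained here, since it uses the open-cover definition of compactness directly and never needs the totally-bounded-plus-complete characterization you invoke; your approach is equally valid and you handle the one genuinely delicate point---that sequential compactness still holds in a non-Hausdorff semimetric space, with non-unique limits being harmless---cleanly and explicitly by citing the footnote in the paper.
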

\begin{proof}
	\singlespacing
	
	Let $\varepsilon > 0$. For each $x \in X$, use continuity of $f$ to choose $\delta_x$ such that 
	\begin{align*}
		d_X(x,x') < 2\delta_x \implies d_Y(f(x), f(x')) < \varepsilon/2
	\end{align*}
	Let $B_d(x) \subseteq X$ be the open ball of radius $d$ centered at $x$. Then $\bigcup_{x \in X} B_{\delta_x}(x)$ is an open cover of $X$. By compactness of $X$, there exists $x_1, \ldots, x_n$ such that $\bigcup_{i=1}^n B_{\delta_{x_i}}(x_i)$ covers $X$. Let $\delta = \min_{i \in \{1, \ldots, n\}} \delta_{x_i}$. As the minimum of a finite number of positive real numbers, we have $\delta > 0$. 
	
	Suppose $d_X(x,x') < \delta$. Since $\bigcup_{i=1}^n B_{\delta_{x_i}}(x_i)$ covers $X$, there exists $k \in \{1,\ldots, n\}$ such that $x \in B_{\delta_{x_k}}(x_k)$. Notice that 
	\begin{align*}
		d_X(x', x_k) &\leq d_X(x', x) + d_X(x, x_k) < \delta + \delta_{x_k} \leq 2\delta_{x_k}
	\end{align*}
	and thus $d_X(x,x') < \delta$ implies $d_X(x', x_k) < 2\delta_{x_k}$ and $d_X(x, x_k) < \delta_{x_k} < 2\delta_{x_k}$ for whichever $k$ is such that $x \in B_{\delta_{x_k}}(x_k)$. Then the definition of $\delta_{x_k}$ implies
	\begin{align*}
		d_Y(f(x), f(x')) &\leq d_Y(f(x), f(x_k)) + d_Y(f(x_k), f(x')) < \varepsilon/2 + \varepsilon/2 = \varepsilon
	\end{align*}
\end{proof}

\end{document}